\documentclass[11pt]{amsart}
\usepackage{amsmath}
\usepackage{graphicx}
\usepackage{enumerate}
\usepackage{natbib}
\usepackage{url} 
\usepackage{amsthm}
\usepackage{float} 
\usepackage{comment}

\usepackage[margin=1in]{geometry} 
\usepackage{setspace}

\usepackage{tikz}
\usepackage{enumitem}
\usepackage[utf8]{inputenc}
\usepackage{amsfonts}
\usepackage{amssymb}
\usepackage{booktabs}
\usepackage{tabularx}
\usepackage{makecell}
\usepackage{array}
\usepackage{longtable}
\usepackage[ruled,lined]{algorithm2e}

\usepackage{amsaddr}
\usepackage{mathtools}
\usepackage{accents}
\usepackage{mathrsfs}
\usepackage{threeparttable}
\usepackage{caption}
\usepackage{subcaption}
\usetikzlibrary{decorations.pathreplacing, calligraphy}
\usetikzlibrary{fadings}

\SetCommentSty{mycommfont}

\RequirePackage[OT1]{fontenc}
\RequirePackage{amsthm,amsmath}
\RequirePackage[colorlinks,citecolor=blue,urlcolor=blue]{hyperref}

\usepackage{bm}

\newcommand{\D}{{\mathop{}\!\mathrm{d}}} 

\newcommand{\R}{\mathbb{R}}

\newcommand{\N}{\mathbb{N}}

\newcommand{\PP}{\mathbb{P}}

\newcommand{\E}{\mathbb{E}}

\newcommand{\T }{\mathcal{T}}

\newcommand{\one}{ 1 \hspace{-3pt} \mathrm{l}} %

\newcommand{\Omloc}{{\Omega_{\operatorname{loc}}}}
\newcommand{\ab}{{\mathbf{a}}}

\numberwithin{equation}{section}  
\newtheorem{defn}{Definition}[section]

\newtheorem{rem}[defn]{Remark}

\newtheorem{prop}[defn]{Proposition}
\newtheorem{cor}[defn]{Corollary}
\newtheorem{lem}[defn]{Lemma}

\usepackage[textwidth=3.2cm]{todonotes}
\usepackage{xcolor}
\hypersetup{
    colorlinks,
    linkcolor={red!50!black},
    citecolor={blue!50!black},
    urlcolor={blue!80!black}
}

\title[Robustness in Sequential Decision Making under Evolving Uncertainty]{Robustness in Sequential Decision Making under Evolving Uncertainty: Evidence from High-Frequency Market Making}

\date{\today}
\author[Y. Chen, J. Sester, H. Tran, Y. Zhang, H. Nguyen]{ Ying Chen$^{1,6,7}$, Hoa Nguyen$^{3}$, Julian Sester$^{1}$, Hoang Hai Tran$^{2}$, Yijiong Zhang$^{4,5}$}

\begin{document}
\maketitle
\begin{center}
\normalsize{\today} \\ \vspace{0.5cm}
\small\textit{
$^{1}$National University of Singapore, Centre for Quantitative Finance, Department of Mathematics,\\ 21 Lower Kent Ridge Road, 119077. \\[2mm]
$^{2}$Grasshopper Asset Management Pte Ltd, \\ 72 Anson Road \#04-01 Anson House, Singapore 079911 \\[2mm]
$^{3}$National University of Singapore, Asian Institute of Digital Finance \\[2mm]
$^{4}$NUS (Chongqing) Research Institute \\[2mm]
$^{5}$Tsinghua University, Department of Statistics and Data Science \\[2mm]
$^{6}$National University of Singapore, SIA-NUS Digital Aviation Corporate Laboratory, Institute of Operations Research \& Analytics,\\[2mm]
$^{7}$National University of Singapore, Risk Management Institute}
\end{center}
\begin{abstract}\noindent 
We study sequential decision making under evolving uncertainty in high-frequency financial markets, where changing market dynamics continually challenge static decision policies. We show that robustness has two economically meaningful dimensions: uncertainty tolerance, which determines how much uncertainty the decision maker allows, and action robustness, which governs how conservatively decisions respond. Robustness is not merely protection against model misspecification, but a state-dependent mechanism that reshapes sequential decision behaviors. Simulation and empirical evidence show that action robustness has a substantially larger impact than uncertainty tolerance. Moreover, excessive robustness may reduce profitability in illiquid markets by limiting execution opportunities.


\noindent
{\bf Keywords:} {Robust Reinforcement Learning, Sequential Decision Making, Model Uncertainty, High-Frequency Market Making, Distributionally Robust Optimization}
\textbf{JEL Classification:} C61, C63, G11, G17.
\end{abstract}

\section{Introduction}

Decision making in finance and economics is inherently conducted under uncertainty. In practice, decision makers rarely observe the true law governing the environment and instead rely on models estimated from limited, noisy, and evolving data. A large body of evidence shows that estimation error and model misspecification can induce substantial economic distortions, suggesting that uncertainty is not merely a statistical nuisance but a central determinant of optimal decision behavior.

This paper studies robust sequential decision making under evolving uncertainty through the lens of high-frequency market making. We develop a robust reinforcement learning framework that explicitly incorporates distributional uncertainty into the dynamic decision process using Sinkhorn-based ambiguity sets. The proposed framework combines a model-based Markov decision process with robust dynamic programming to learn adaptive quoting strategies under uncertain and evolving market conditions. By integrating reinforcement learning with distributionally robust optimization, the framework preserves economic interpretability while allowing flexible adaptation to complex and dynamically changing environments.

A central feature of our framework is that robustness operates through two distinct dimensions. The first dimension, \emph{uncertainty tolerance}, determines how much uncertainty the decision maker allows around future market dynamics. The second dimension, \emph{action robustness}, governs how conservatively decisions respond to adverse scenarios within the admitted uncertainty set. This distinction separates the amount of uncertainty considered plausible from the behavioral response to uncertainty, providing a richer characterization of robust sequential decision making.

Our results reveal that robustness is not merely a safeguard against uncertainty; it fundamentally reshapes sequential decision behavior under evolving market conditions. We obtain three main findings.

\begin{enumerate}
\item \textbf{Robustness changes how decisions are made, not merely how well they perform.}
Robustness systematically changes quoting, inventory-management, and risk-control behavior. Robust policies respond differently to inventory pressure, volatility, order-flow imbalances, liquidity conditions, and fill uncertainty, leading to distinct patterns of inventory adjustment and quote placement over time.

\item \textbf{Not all robustness is created equal.}
We identify two economically distinct dimensions of robustness: uncertainty tolerance and action robustness. Empirically, action robustness exerts a substantially stronger influence on policy behavior than uncertainty tolerance, suggesting that robust decision making is driven primarily by how agents adapt to adverse scenarios rather than by how much uncertainty they admit.

\item \textbf{Robustness is beneficial, but not universally so.}
The effectiveness of robustness depends critically on market liquidity. In high-liquidity environments, robustness improves performance by mitigating uncertainty while preserving execution opportunities. In low-liquidity environments, excessive robustness may reduce profitability by limiting trading opportunities. Robustness should therefore be viewed as a context-dependent design choice rather than a universally desirable feature.
\end{enumerate}
From a managerial perspective, our findings suggest that robustness should be calibrated to market conditions rather than applied uniformly. In particular, the distinction between uncertainty tolerance and action robustness provides practitioners with separate levers for controlling model risk and decision conservatism.

Our work contributes to the literature in three ways. First, we develop a robust reinforcement learning framework for sequential decision making under evolving uncertainty that integrates Sinkhorn-based ambiguity sets with dynamic programming in a realistic high-frequency market-making environment. Second, we identify and quantify two economically distinct dimensions of robustness, providing a new perspective on how robustness influences sequential decisions. Third, we show that robustness affects not only performance outcomes but also the structure of optimal decision policies, providing new insights into when and how robust decision making creates value.

Beyond market making, the proposed framework applies to a broad class of AI-driven sequential decision systems in which uncertainty evolves jointly with the environment and decisions influence future states. More generally, our findings highlight the importance of robustness in high-dimensional and dynamically evolving decision environments.

\subsection{Related literature}

\subsubsection{Sequential Decision Making under Uncertainty}

Uncertainty is a fundamental feature of economic and managerial decision making. In dynamic environments, decision makers rarely observe the true law governing the evolution of the system and instead rely on models estimated from limited, noisy, and evolving data. As decisions influence future states and outcomes, uncertainty can propagate through time, potentially amplifying the consequences of model misspecification and estimation error. A large literature has documented that even small deviations from assumed dynamics can lead to substantial distortions in optimal decisions and significant deterioration in out-of-sample performance \citep{hansen2011robustness}.

These concerns have motivated extensive research on decision making under uncertainty. Early studies demonstrated that portfolio allocation decisions can be highly sensitive to estimation errors in expected returns, often producing unstable allocations and poor out-of-sample performance \citep{best1991sensitivity, chopra1993effect}. Related work showed that model uncertainty can dominate estimation risk in dynamic asset allocation problems, while uncertainty in volatility fundamentally alters pricing and hedging decisions in derivative markets \citep{avramov2002stock, avellaneda1995pricing}. In stochastic control and operations research, robust optimization and robust control frameworks were developed to account for uncertainty in model parameters and system dynamics by optimizing performance under adverse scenarios \citep{Iyengar2005, Nilim2005}. More recently, distributionally robust optimization has provided a flexible framework for modeling ambiguity through uncertainty sets defined around a reference distribution \citep{delage2010distributionally,Wiesemann2013,kuhn2025distributionally}. 

While these approaches have established robustness as an effective mechanism for mitigating the adverse effects of uncertainty, the literature has focused predominantly on performance improvements and worst-case guarantees. Comparatively less attention has been paid to a more fundamental question: \emph{How does robustness change the structure of optimal decisions?} In particular, robustness may alter not only the outcomes of a decision policy but also the manner in which decisions respond to changing environmental conditions. Understanding these behavioral effects is especially important in sequential decision problems, where actions, states, and uncertainty interact dynamically over time.

\subsubsection{Robustness in Reinforcement Learning}

Robustness has become a central concern in reinforcement learning because policies trained under one environment may perform poorly when deployed under different or evolving conditions. This issue is especially important in sequential decision problems, where small errors in transition dynamics can propagate over time and affect both future states and future rewards. To address this challenge, the robust Markov decision process (MDP) framework allows transition dynamics to vary within an uncertainty set and seeks policies that perform well under adverse realizations of the environment. Subsequent developments have extended this framework to incorporate Bayesian and distributional information \citep{Strens2000,Xu2012}, uncertainty in model parameters and dynamics \citep{Goh2018,Goyal2023}, and uncertainty regarding the underlying model itself \citep{neufeld2022markov}.

Building on the robust MDP framework, a growing literature has developed reinforcement learning algorithms capable of learning policies under uncertain environments. Existing approaches include robust policy iteration methods \citep{Xing2014}, kernel-based robust reinforcement learning \citep{Lim2019}, and model-free robust reinforcement learning under misspecified environments \citep{Wang2021}. Collectively, these studies demonstrate that incorporating robustness can improve policy reliability when the deployed environment deviates from the assumptions used during training.

More recently, distributionally robust reinforcement learning has emerged as a powerful framework for data-driven sequential decision making. Rather than protecting against uncertainty in individual model parameters, distributionally robust approaches account for ambiguity in the underlying probability distribution itself. This perspective is especially appealing when the underlying environment is unknown, evolving, and only partially observed through data. Wasserstein and optimal-transport-based ambiguity sets have attracted particular attention because they provide a flexible and theoretically grounded way to characterize uncertainty while preserving tractability in many decision problems \citep{mohajerin2018data,gao2023distributionally}. These approaches establish important links between reinforcement learning and the broader literature on distributionally robust optimization.

While this literature has generated important insights into robustness, performance guarantees, and policy stability, existing studies typically treat robustness as a one-dimensional design choice: the decision maker specifies an uncertainty set, and the robust policy is evaluated against adverse scenarios within that set. However, it does not distinguish between two conceptually different roles of robustness: how much uncertainty the decision maker is willing to tolerate, and how conservatively decisions respond once that uncertainty is admitted. Our framework separates these two roles through \emph{uncertainty tolerance} and \emph{action robustness}. This distinction allows us to examine not only whether robustness improves performance, but also how different forms of robustness reshape sequential decision behavior.

\subsubsection{Market Making under Uncertainty}

Market making is a canonical sequential decision problem in which liquidity providers continuously balance profitability against inventory risk under uncertain market conditions. The seminal framework of \cite{Avellaneda2008} formulates market making as a stochastic control problem in which optimal bid and ask quotes are determined by inventory considerations and market dynamics. Subsequent research has extended this framework to incorporate richer market features, including order-flow dynamics, adverse selection, market impact, latency, and contract design \citep{Cartea2015,Guilbaud2013, gueant2013dealing,Baldacci2021}. Collectively, this literature has established inventory management and liquidity provision as central components of successful market-making strategies.

More recently, reinforcement learning has emerged as a flexible framework for market making because it can learn adaptive policies directly from interactions with the environment without requiring restrictive analytical assumptions. Existing studies have applied deep reinforcement learning to market making under realistic market dynamics and have demonstrated improvements in profitability, inventory control, and execution quality relative to traditional benchmark strategies \citep{Spooner2018,gueant2019deep}. These approaches are particularly attractive in complex and rapidly changing environments where closed-form stochastic control solutions may be difficult to obtain.

Despite these advances, existing market-making studies primarily evaluate policies through performance metrics such as profit-and-loss, inventory risk, execution quality, or Sharpe ratios. Comparatively little attention has been paid to how robustness affects market-making decisions or whether its benefits persist across different market conditions. In particular, the interaction between robustness and market liquidity remains poorly understood. A policy that performs well under uncertainty in highly liquid markets may become overly conservative when trading opportunities are limited. Our framework addresses this gap by examining how robustness reshapes market-making decisions and by identifying the market conditions under which robustness enhances or impairs performance.

\subsubsection{Positioning of the Present Paper}
Taken together, the literature suggests that robustness plays a critical role in sequential decision making, yet three fundamental questions remain unresolved. First, how does robustness alter decision behavior rather than merely performance outcomes? Second, do different dimensions of robustness influence decisions differently? Third, under what market conditions does robustness improve performance, and when can it become overly conservative? This paper addresses these questions through a robust reinforcement learning framework for high-frequency market making.

\subsection{Organization of the paper}
The paper is organized as follows. Section~\ref{sec:setting} presents the robust sequential decision framework, including market-making model, model uncertainty formulation and robust dynamic programming framework. Section~\ref{sec:algorithms} describes the reference transition model and reinforcement learning algorithms. Section~\ref{sec:numericals} reports the numerical and empirical results together with policy interpretation. Section~\ref{sec:Conclusion} concludes with managerial implications, limitations, and future research directions. Additional numerical diagnostics and proofs are collected in the appendix.

\section[Market-Making Model]{Robust Sequential Decision Framework}\label{sec:setting}
\noindent

This section develops a robust sequential decision framework under evolving uncertainty. We first formulate market making as a finite-horizon Markov decision process and then introduce distributional robustness through Sinkhorn ambiguity sets. The resulting framework distinguishes two economically meaningful dimensions of robustness: uncertainty tolerance and action robustness.

\subsection{Market-Making Decision Environment}
\label{subsec:market-making}
We formulate market making as a finite-horizon sequential decision problem. We begin with a baseline market-making model that serves as the reference Markov decision process. The model describes the economic environment, the state and action spaces. 
\subsubsection{Economic Setting}
We consider a single-asset market-making problem over a finite trading horizon $[0,T]$, where the trading period is discretized into decision times
\[
0=t_0<t_1<\cdots<t_T=T.
\]
At each decision time $t$, the market maker (MM)\footnote{We consider a single representative market maker operating over a finite trading horizon. This setting allows us to isolate the impact of robustness on sequential decision behavior without introducing additional strategic interactions among competing liquidity providers. While modern electronic markets typically involve multiple market makers, the representative-agent framework remains the standard benchmark in the market-making literature and provides a tractable environment for studying the behavioral effects of robustness.} observes all relevant market information and continuously provides liquidity by posting bid and ask quotes and earns profits from the bid--ask spread when orders are executed. At the same time, executed trades generate inventory positions that expose the market maker to future price fluctuations. The central trade-off is therefore between profitability and inventory risk.

Let $S_t$ denote the reference market price at time $t$. In practice, the mid-price is commonly used as a proxy for the efficient price because of its observability and high-frequency availability. The MM chooses bid and ask quotes according to
\[
p_t^{\text{bid}} = S_t - \delta_t^{\operatorname{bid}}, 
\qquad 
p_t^{\text{ask}} = S_t + \delta_t^{\operatorname{ask}},
\]
where $\delta_t^{\operatorname{bid}}>0$ and $\delta_t^{\operatorname{ask}}>0$ are the bid- and ask-side half-spreads. These quoting decisions jointly determine the profitability of executed trades and their execution probabilities.
This setup follows the classical inventory-based market-making framework of \cite{Avellaneda2008}, where optimal quoting balances spread revenues against inventory risk.

\subsubsection[Cash, Inventory, and Terminal Objective]{Cash, Inventory, and Terminal Objective}\label{sec:objective}
Let $Q_{t+1}^{\operatorname{bid}}$ and $Q_{t+1}^{\operatorname{ask}}$ denote the executed quantities between $t$ and $t+1$, let $c_{t}$ denote transaction cost per trade at time t and let $W_t$ denote the MM's wealth at time t. The wealth process evolves according to
\[
\begin{aligned}
W_{t+1}
= W_t
&+
\underbrace{
Q_{t+1}^{\operatorname{ask}}
\bigl(S_t+\delta_t^{\operatorname{ask}}\bigr)
}_{\text{cash received from ask-side executions}}
-
\underbrace{
Q_{t+1}^{\operatorname{bid}}
\bigl(S_t-\delta_t^{\operatorname{bid}}\bigr)
}_{\text{cash paid for bid-side executions}}  -
\underbrace{
c_{t+1}\bigl(Q_{t+1}^{\operatorname{bid}}
        +Q_{t+1}^{\operatorname{ask}}\bigr)
}_{\text{transaction costs}},
\end{aligned}
\]
and the inventory dynamics are given by the size of the previous inventory $I_t$ adjusted by the net difference in traded quantities
\[
I_{t+1} 
= I_t 
+ Q_{t+1}^{\operatorname{bid}} 
- Q_{t+1}^{\operatorname{ask}}.
\]
Profits arise from capturing the spread on executed trades, while inventory accumulates according to net order flow. Holding inventory exposes the market maker to fluctuations in the underlying asset price. Under the commonly adopted diffusion model for the reference price process, the variance of inventory-related P\&L is proportional to $\sigma^2\int_0^T I_t^2\,dt$, where $\sigma$ denotes the asset volatility, which therefore serves as a natural measure of cumulative inventory risk. The MM seeks to manage inventory exposure throughout the trading horizon rather than only at terminal time.
The objective is to maximize the expected risk-adjusted terminal mark-to-market value over all admissible policies:
\begin{equation}\label{eq:prob1}
\sup_{\ab \in \mathcal{A}}
\;
\mathbb{E}\!\left[
W_T + S_T I_T
- \frac{\sigma^2}{2} \int_0^T \gamma_t I_t^2 \, dt
\right],
\end{equation}
where $\gamma_t > 0$ is a time-varying risk-aversion and $\sigma$ is daily price volatility. The risk-aversion coefficient may vary over time to reflect the increasing cost of holding inventory as the end of the trading horizon approaches.

\subsubsection[State, Actions and Admissible Policies]{State and Admissible Policies}\label{subsubsec:state-action-detail}
The information relevant for the MM's decisions is summarized by the \emph{state} of the environment. At time $t\in\mathbb{N}$, the state vector $X_t$ is constructed from rolling windows of the most recent $m$ observations of inventory, price dynamics, order flow, liquidity conditions, market microstructure variables, and the remaining trading horizon:
\begin{align}\label{eq:def_X}
X_t
=
\big(
I^t,R^t,
N^{t,\operatorname{bid}},N^{t,\operatorname{ask}},
V^{t,\operatorname{bid}},V^{t,\operatorname{ask}},
\delta^{t,\operatorname{tick}},\delta^{t,\operatorname{rel}},
D^{t,\operatorname{micro}},\nu^t, \rm{OFI}^t,
Z^{t,\operatorname{bid}},Z^{t,\operatorname{ask}},
\tau^t
\big)
\in \Omloc,
\end{align}
where
\[
{
\Omloc := \left(\mathbb{R}^{m}\right)^{13} \times \mathbb{N}_0^m
\equiv \mathbb{R}^{13m}\times\mathbb{N}_0^m
}
\]
{describes the \emph{state space}.} In the empirical implementation, $X_t$ is represented by a rolling-window state vector and the detailed construction is deferred to Section~\ref{sec:algorithms}.

At each decision time, the market maker influences the evolution of inventory and wealth by posting bid and ask limit orders characterized by bid and ask spreads, $\delta_t^{\operatorname{bid}}$ and $\delta_t^{\operatorname{ask}}$, together with associated order quantities, $q_t^{\operatorname{bid}}$ and $q_t^{\operatorname{ask}}$. The spreads determine the profitability and execution likelihood of submitted quotes, while the order quantities control the market maker's exposure to inventory risk. This leads to an \emph{action space} of the form
\begin{equation}\label{eq:setA_0}\begin{aligned}
A :& = \big\{ (\delta^{\operatorname{bid}},\delta^{\operatorname{ask}},q^{\operatorname{bid}},q^{\operatorname{ask}}
) ~\big|~ \delta^{\operatorname{bid}}\in [\underline{C_S},\overline{C_S}], \delta^{\operatorname{ask}} \in [\underline{C_S},\overline{C_S}], q^{\operatorname{bid}}\in [\underline{C_q},\overline{C_q}], q^{\operatorname{ask}} \in [\underline{C_q},\overline{C_q}] \big\} \\
& = [\underline{C_S},\overline{C_S}]\times [\underline{C_S},\overline{C_S}]\times [\underline{C_q},\overline{C_q}] \times [\underline{C_q},\overline{C_q}] \subset \R_{\geq 0}^4,
\end{aligned}
\end{equation}
where $0<\underline{C_S}<\overline{C_S}<\infty$ and $0<\underline{C_q}<\overline{C_q}<1$ restrict the possible spreads and order quantities, respectively. Unlike classical market-making models that optimize only bid and ask spreads, the proposed framework allows the market maker to jointly determine both quote placement and order size.\footnote{In the empirical implementation, order quantities are parameterized as fractions of contemporaneous market trading volume rather than as absolute numbers of shares. This normalization makes the framework comparable across equities with different levels of liquidity and trading activity, while preserving the economic interpretation of order size as the market maker's participation rate in the market.}
We refer to an MM's \emph{policy} as a sequence of state-based executed actions.
For each \(t=0,\ldots,T-1\), the set of admissible continuation policies from time \(t\)
is defined by
\begin{equation}\label{eq:defAt}
\begin{aligned}
\mathcal{A}_t
:=\bigg\{(a_s)_{t \le s \leq T-1}~\bigg|~
&(a_s)_{t \le s \leq T-1}: \Omega \rightarrow A; a_s \text{ is } \sigma(X_s)\text{-measurable}
\text{ for all } t \le s \leq T-1
\bigg\}.
\end{aligned}
\end{equation}
Equivalently,
\begin{equation}\label{eq:defAt_state}
\mathcal{A}_t
=
\left\{
\left(a_s(X_s)\right)_{t\le s\le T-1}
~\middle|~
a_s:\Omloc \rightarrow A
\text{ is Borel measurable for all } t \le s \le T-1
\right\}.
\end{equation}
In particular, the set of admissible policies on the full time horizon is defined by
\begin{equation}\label{eq:defA}
\mathcal A:=\mathcal A_0.
\end{equation}

\subsection{Distributional Uncertainty}
\label{subsec:robustness}

We now introduce robustness into the sequential decision framework. Our approach combines distributionally robust optimization with Sinkhorn ambiguity sets to account for uncertainty in future market dynamics. 
\subsubsection{Distributional Uncertainty and Robust Objective}
\label{subsec:robust-objective}

The objective \eqref{eq:prob1} is formulated under a reference transition model. To account for model misspecification and evolving market conditions, we adopt a distributionally robust optimization framework; see, for example, \cite{kuhn2025distributionally}. Rather than optimizing under a single transition law, we evaluate decisions against a family of plausible transition distributions contained in an ambiguity set $\mathfrak P$. The market maker evaluates decisions against the most adverse probability measure within this set.

Accordingly, the robust counterpart of \eqref{eq:prob1} is given by
\begin{equation}
\label{eq:prob1_robust}
\sup_{\ab \in \mathcal A}
\inf_{\PP \in \mathfrak P}
\mathbb E_{\PP}
\left[
W_T + S_T I_T
-\frac{\sigma^2}{2}
\int_0^T \gamma_t I_t^2\,dt
\right],
\end{equation}
where the inner minimization represents the worst-case transition dynamics within the ambiguity set and provides protection against adverse deviations from the reference model. Economically, the ambiguity set captures uncertainty regarding future order arrivals, execution probabilities, liquidity conditions, and price dynamics. The robust formulation therefore protects against adverse deviations from the reference model while preserving the sequential nature of the decision problem.

\subsubsection{Additive Reward Representation}
\label{subsec:additive-reward}

The objective in \eqref{eq:prob1_robust} is stated in terms of terminal wealth. For dynamic programming, however, it is more convenient to express the problem recursively as the sum of one-period rewards. We therefore reformulate the objective in an equivalent additive form that preserves its economic interpretation while enabling Bellman recursion. At time $t\in [0,T]$ the market maker seeks to maximize the worst-case conditional expectation of terminal marked-to-market wealth net of cumulative inventory risk. By using the approximation
$\frac{\sigma^2}{2} \int_t^T \gamma_s I_s^2
ds\approx
\sum_{u=t}^{T-1}
\frac{\gamma_u}{2}I_{u+1}^2\sigma^2\Delta_u,
$
the time-$t$ objective becomes
\begin{equation}\label{eq:exact_value_function}
\sup_{\ab \in \mathcal A_t}\inf_{\PP\in\mathfrak{P}^{\varepsilon,\delta}_t(x)}
\mathbb E_{\PP}\!\left[
W_T + S_T I_T
-\sum_{u=t}^{T-1}
\frac{\gamma_u}{2}I_{u+1}^2\sigma^2\Delta_u 
\,\middle|\, X_t=x
\right],
\end{equation}
where \(\mathfrak{P}^{\varepsilon,\delta}_t(x)\) denotes the ambiguity set of transition kernels from time \(t\) onward, conditioned on \(X_t = x\),  (formally defined in \eqref{eq_definition_Pepsilon}).

To obtain a recursive sequential decision formulation, we decompose the terminal objective into a sequence of one-period rewards that capture the trade-off between spread revenues, inventory risk, and transaction costs. This transformation preserves the economic structure of the original problem while enabling a dynamic programming characterization of optimal decisions.

Accordingly, we define the one-period reward function

\begin{equation}
\label{eq_reward}
\begin{aligned}
r:\Omloc \times A \times \Omloc
&\rightarrow \mathbb{R},
\\
(X_t,a_t,X_{t+1})
\mapsto\;&
\underbrace{
Q_{t+1}^{\operatorname{ask}}
\delta_t^{\operatorname{ask}}
+
Q_{t+1}^{\operatorname{bid}}
\delta_t^{\operatorname{bid}}
}_{\text{spread capture}}
-
\underbrace{
\frac{\gamma_t}{2}
I_{t+1}^{2}
\sigma^2
\Delta_t
}_{\text{inventory risk penalty}}
-
\underbrace{
c_{t+1}
\big(
Q_{t+1}^{\operatorname{bid}}
+
Q_{t+1}^{\operatorname{ask}}
\big)
}_{\text{transaction costs}},
\end{aligned}
\end{equation}

The following observations motivates a sequential reformulation of the original terminal-value problem \eqref{eq:exact_value_function}.

\begin{lem}\label{lem_reward_approximation}
Fix \(t\in\{0,\ldots,T-1\}\) and an initial state \(x\in\Omega_{\operatorname{loc}}\).  
Consider a policy \(\mathbf a=(a_u)_{u=t}^{T-1}\in\mathcal A_t\) and a probability measure
\(\mathbb P\in\mathfrak P_t^{\varepsilon,\delta}(x)\), such that the following holds

\begin{enumerate}
    \item The mid-price fulfils at all times $u\in \{0, \dots, T-1\}$
\[
\mathbb E_{\mathbb P}
\left[
S_{u+1}-S_u
\,\middle|\,
X_u,a_u
\right]
=
0
\]
    \item Conditional on \((X_u,a_u)\), the fills
    $
    \bigl(Q_{u+1}^{\operatorname{bid}},Q_{u+1}^{\operatorname{ask}}\bigr)
    $
    are independent of the price innovation \(S_{u+1}-S_u\)  at all times $u\in \{0, \dots, T-1\}$
\end{enumerate}
Then,  we have
\[
\mathbb E_{\PP}\!\left[
W_T + S_T I_T
-\sum_{u=t}^{T-1}
\frac{\gamma_u}{2}I_{u+1}^2\sigma^2\Delta_u 
\,\middle|\, X_t=x
\right]
=
W_t+S_tI_t
+
\mathbb E_{\mathbb P}
\left[
\sum_{u=t}^{T-1}
r(X_u,a_u,X_{u+1})\,\middle|\,
X_t=x
\right].
\]
\end{lem}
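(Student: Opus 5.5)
The plan is a telescoping argument for the marked-to-market wealth $M_u:=W_u+S_uI_u$. Write
\[
W_T+S_TI_T=W_t+S_tI_t+\sum_{u=t}^{T-1}\bigl(M_{u+1}-M_u\bigr),
\]
and then compute each increment explicitly. The inventory-penalty terms on the two sides of the claimed identity are already identical term by term. So it suffices to show that the conditional expectation of each increment $M_{u+1}-M_u$ equals that of the spread-capture and transaction-cost part of $r(X_u,a_u,X_{u+1})$.

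\textbf{Step 1: decompose the increment.} From the wealth and inventory dynamics,
\[
\begin{aligned}
M_{u+1}-M_u ={}& Q^{\operatorname{ask}}_{u+1}(S_u+\delta_u^{\operatorname{ask}}) - Q^{\operatorname{bid}}_{u+1}(S_u-\delta_u^{\operatorname{bid}}) - c_{u+1}\bigl(Q^{\operatorname{bid}}_{u+1}+Q^{\operatorname{ask}}_{u+1}\bigr)\\
&+ S_{u+1}\bigl(I_u+Q^{\operatorname{bid}}_{u+1}-Q^{\operatorname{ask}}_{u+1}\bigr) - S_uI_u .
\end{aligned}
\]
The $S_u Q$ terms cancel against part of the $S_{u+1}$ terms, which gives
\[
M_{u+1}-M_u = Q^{\operatorname{ask}}_{u+1}\delta^{\operatorname{ask}}_u + Q^{\operatorname{bid}}_{u+1}\delta^{\operatorname{bid}}_u - c_{u+1}\bigl(Q^{\operatorname{bid}}_{u+1}+Q^{\operatorname{ask}}_{u+1}\bigr) + I_{u+1}\bigl(S_{u+1}-S_u\bigr).
\]
Hence
\[
M_{u+1}-M_u-\tfrac{\gamma_u}{2}I_{u+1}^2\sigma^2\Delta_u = r(X_u,a_u,X_{u+1}) + I_{u+1}(S_{u+1}-S_u).
\]

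\textbf{Step 2: kill the price-innovation term.} I need $\mathbb E_{\PP}[I_{u+1}(S_{u+1}-S_u)\mid X_t=x]=0$ for each $u$. I split $I_{u+1}=I_u+(Q^{\operatorname{bid}}_{u+1}-Q^{\operatorname{ask}}_{u+1})$ and condition on $(X_u,a_u)$ via the tower property; this is legitimate since $\sigma(X_t)\subseteq\sigma(X_u)$ in the Markov setup.

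\begin{itemize}
\item For the first piece, $I_u$ is a component of $X_u$ (the inventory window), so $\mathbb E[I_u(S_{u+1}-S_u)\mid X_u,a_u]=I_u\,\mathbb E[S_{u+1}-S_u\mid X_u,a_u]=0$ by assumption (1).
\item For the second piece, conditional independence (assumption (2)) factorises $\mathbb E[(Q^{\operatorname{bid}}_{u+1}-Q^{\operatorname{ask}}_{u+1})(S_{u+1}-S_u)\mid X_u,a_u]$ into a product. One factor is zero by (1).
\end{itemize}

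Summing over $u$, and using that $W_t+S_tI_t$ is determined by $X_t=x$ (or treated as given at time $t$), yields the claim.

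\textbf{Main obstacle.} The algebra is routine; the delicate point is the measurability and integrability bookkeeping in Step 2. I must ensure that $I_u$ is $\sigma(X_u)$-measurable, which holds because $X_u$ contains the inventory history, and that $a_u$ is a function of $X_u$, which holds by \eqref{eq:defAt_state}. I also need enough integrability of $I_{u+1}(S_{u+1}-S_u)$ for the conditional expectations and the tower property to be justified. I would assume square-integrability of fills and price increments under $\PP$. Since fills are bounded by the order-size constraints in $A$, this reduces to finite first moments of the price increments, together with the stated conditional-independence factorisation.
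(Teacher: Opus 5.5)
Your proposal is correct and follows the paper's proof essentially step for step: you telescope $M_u=W_u+S_uI_u$, isolate $I_{u+1}(S_{u+1}-S_u)$, split $I_{u+1}=I_u+(Q^{\operatorname{bid}}_{u+1}-Q^{\operatorname{ask}}_{u+1})$, and kill both pieces with assumptions (1)--(2) and the tower property. Two minor side remarks on your bookkeeping, both points the paper leaves implicit: first, the tower step is justified because $X_t=x$ holds $\PP$-a.s. for path measures started from $\delta_x$, not by a nesting $\sigma(X_t)\subseteq\sigma(X_u)$ of rolling-window states (which fails in general); second, the fills $Q^{\operatorname{bid}}_{u+1}=Z^{(3)}_{u+1}Z^{(5)}_{u+1}\varphi^{\operatorname{bid}}(\delta^{\operatorname{bid}},q^{\operatorname{bid}})$ are not bounded by the constraints in $A$, so integrability rests on your square-integrability assumption rather than on a reduction to first moments of price increments.
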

Lemma~\ref{lem_reward_approximation}  motivates to represent the objective from \eqref{eq:exact_value_function} by a cumulative reward formulation. Since the term
$W_t+S_tI_t$
does not depend on future decisions, it does not affect the optimizer. From now on we therefore consider a reduced robust value function of the form

\begin{equation}
\label{eq:final_additive_representation}
\sup_{\mathbf a\in\mathcal A_t}
\inf_{\PP\in\mathfrak P_t^{\varepsilon,\delta}(x)}
\mathbb E_{\PP}
\left[
\sum_{u=t}^{T-1}
r(X_u,a_u,X_{u+1})
\;\middle|\;
X_t=x
\right].
\end{equation}

The additive representation in \eqref{eq:final_additive_representation} transforms the robust terminal-value problem into a standard sequential decision problem and provides the foundation for the robust dynamic programming formulation developed in Section~\ref{sec:method}.

\subsubsection{Sinkhorn Ambiguity Sets}
\label{subsec:sinkhorn-ambiguity}

To model uncertainty in future market dynamics, we construct ambiguity sets using optimal transport distances. A desirable ambiguity set should allow transition dynamics to deviate from the reference model while preserving the geometric structure of the underlying state space. Optimal transport is particularly attractive because it measures uncertainty through the cost of transforming one probability distribution into another. Consequently, nearby distributions correspond to small perturbations of market dynamics, whereas distant distributions correspond to substantial structural changes.

The one-step evolution of the state $X_t$ from \eqref{eq:def_X} is only partially stochastic: part of the state evolves deterministically, while the remaining components depend on random market innovations. We therefore separate the deterministic state evolution from the stochastic innovation. Let $\Omega_{\operatorname{det}}\subset\Omloc$ be the product space of shifted history coordinates, namely the first $m-1$ entries of each rolling window, together with deterministic coordinates such as the updated time-to-close window. Let $\Omega_{\operatorname{rnd}}\subset\Omloc$ be the product space of the new stochastic market innovations appended at the next step, and write
\[
\pi:\Omloc\to\Omega_{\operatorname{det}}
\]
for the deterministic projection (or shift) map. Given $x\in\Omloc$, $a\in A$, and $z\in\Omega_{\operatorname{rnd}}$, the reconstruction map
\[
\Phi_\pi(x,a,z)\in\Omloc
\]
combines $\pi(x)$ with $z$ and the deterministic inventory, clipping, and time-to-close updates. Thus
\[
X_{t+1}=\Phi_\pi(X_t,a_t,\zeta_{t+1}),
\]
where $\zeta_{t+1}\in\Omega_{\operatorname{rnd}}$ has law $\widetilde{\PP}$. Accordingly, the product measure $\delta_{\pi(x)}\otimes\widetilde{\PP}$ is always interpreted through this reconstruction map, namely as its push-forward under
\[
(\pi(x),z)\mapsto \Phi_\pi(x,a,z).
\]
This convention ensures that ambiguity affects only the stochastic innovation law, while the history component of the next state remains deterministic.

Let $\widehat{\PP}(x,a)$ denote a reference transition law associated with state $x\in\Omloc$ and action $a\in A$, such as the market dynamics implied by the estimated transition model. Due to uncertainty, the true transition distribution may lie within a neighborhood of the reference distribution rather than coincide exactly with it. A natural ambiguity set can be constructed using the Wasserstein distance, which measures the minimum transportation cost required to transform one probability distribution into another. Specifically, for probability measures $\PP_1,\PP_2\in\mathcal M_1(\Omega_{\rm rnd})$, the Wasserstein distance is defined by

\begin{equation}
W(\PP_1,\PP_2)
:=
\inf_{\gamma\in\Pi(\PP_1,\PP_2)}
\int_{\Omega_{\rm rnd}\times\Omega_{\rm rnd}}
c(\tilde x,\tilde y)
\,d\gamma(\tilde x,\tilde y),
\label{eq:wasserstein}
\end{equation}
where $c(\cdot,\cdot)$ is a transportation cost function and $\Pi(\PP_1,\PP_2)$ denotes the collection of couplings having marginals $\PP_1$ and $\PP_2$.

Although Wasserstein ambiguity sets provide a flexible description of uncertainty, repeatedly solving Wasserstein transport problems within dynamic programming is computationally demanding. We therefore employ the entropically regularized optimal transport distance, commonly known as the Sinkhorn cost, which yields smoother transport plans, improves numerical efficiency, and remains a faithful approximation to the Wasserstein distance.

Let $\delta>0$ denote an entropic regularization parameter. The Sinkhorn cost between $\PP_1$ and $\PP_2$ is given by
\[
W_{\delta}(\mathbb{P}_1,\mathbb{P}_2)
:=
\inf_{\gamma \in \Pi(\mathbb{P}_1,\mathbb{P}_2)}
\left\{
\int_{\Omega_{\operatorname{rnd}}\times\Omega_{\operatorname{rnd}}}
c(\tilde x,\tilde y)\, d\gamma(\tilde x,\tilde y)
\;+\;
\delta\, H\!\left(\gamma \mid \mathbb{P}_1 \otimes \nu \right)
\right\},
\]
where $c$ is again the transport cost, $\nu$ is a reference sampling measure, and
$H(\gamma \mid \mathbb{P}_1 \otimes \nu)$ denotes the relative entropy of
$\gamma \in \Pi(\mathbb{P}_1,\mathbb{P}_2)$ with respect to the product measure
$\mathbb{P}_1 \otimes \nu$.

The first term measures how much the transition distribution must be altered relative to the reference model, while the second term penalizes highly concentrated transport plans through entropy regularization. Intuitively, the Wasserstein distance searches for the most adverse perturbation of the reference distribution, whereas the Sinkhorn formulation encourages uncertainty to be distributed more smoothly across alternative future states. The parameter $\delta$ governs this trade-off. Smaller values place greater emphasis on localized worst-case perturbations, whereas larger values produce smoother and more diffuse representations of uncertainty. As a result, $\delta$ influences not only the ambiguity set itself but also the manner in which decisions respond to adverse scenarios. 
The relative entropy is defined by
\[
H\!\left(\gamma \mid \mathbb{P}_1 \otimes \nu \right)
\;:=\;
\mathbb{E}_{(\tilde x,\tilde y)\sim\gamma}
\left[
\log\!\left(
\frac{d\gamma(\tilde x,\tilde y)}
     {d\mathbb{P}_1(\tilde x)\, d\nu(\tilde y)}
\right)
\right],
\]
where \(\frac{d\gamma}{d(\mathbb P_1\otimes\nu)}\) denotes the Radon--Nikodym derivative of $\gamma$ with respect to
$\mathbb P_1\otimes\nu$.
The entropy regularization yields smoother transport plans while preserving computational tractability. As $\delta\rightarrow 0$, the Sinkhorn cost converges to the Wasserstein distance, whereas larger values of $\delta$ place greater emphasis on entropy regularization.

The Sinkhorn ball characterizes uncertainty in one-step transition dynamics. To formulate a robust sequential decision problem, these local uncertainty sets must be extended to transition kernels and path measures over the entire trading horizon.

Given a reference innovation law $\widehat{\PP}(X_t,a_t) \in \mathcal{M}_1(\Omega_{\operatorname{rnd}})$, a radius $\varepsilon > 0$, and a regularization parameter $\delta > 0$, the Sinkhorn ball on the stochastic component is defined as
\begin{equation}\label{eq_definition_sinkhorn_ball_primal}
\mathcal{B}_{\varepsilon,\delta}\!\left(\widehat{\PP}(X_t,a_t)\right)
:=
\left\{
\mathbb{P} \in \mathcal{M}_1(\Omega_{\operatorname{rnd}}) ~\middle|~ W_{\delta}\!\left(\widehat{\mathbb{P}}(X_t,a_t),\mathbb{P}\right) \le \varepsilon
\right\}.
\end{equation}
As $W_\delta$ is the (non-symmetric) entropic OT cost rather than the debiased Sinkhorn divergence, the set $\mathcal{B}_{\varepsilon,\delta}\left(\widehat{\PP}(X_t,a_t)\right)$ need not be nonempty for arbitrary $\varepsilon$. The parameter $\varepsilon$ directly controls the size of the ambiguity set. Economically, it therefore determines the range of transition dynamics how much uncertainty the agent is willing to admit as plausible in the random part of future market dynamics. Larger values of $\varepsilon$ admit greater deviations from the reference model, whereas smaller values imply greater confidence in the estimated transition dynamics.

For $x\in\Omloc$ and $a\in A$, define the lifted local ambiguity set
\begin{equation}\label{eq_definition_projected_local_ambiguity}
\mathcal{P}^{\pi}_{\varepsilon,\delta}(x,a)
:=
\left\{
(\Phi_\pi^{x,a})_{\#}\!\left(\delta_{\pi(x)}\otimes\widetilde{\PP}\right)
~\middle|~
\widetilde{\PP}\in\mathcal{B}_{\varepsilon,\delta}\!\left(\widehat{\PP}(x,a)\right)
\right\}
\subset\mathcal{M}_1(\Omloc),
\end{equation}
where $\Phi_\pi^{x,a}$ denotes the reconstruction map $(\pi(x),z)\mapsto\Phi_\pi(x,a,z)$. Equivalently, under the product identification of the next-state space, one may write the elements of $\mathcal{P}^{\pi}_{\varepsilon,\delta}(x,a)$ as $\delta_{\pi(x)}\otimes\widetilde{\PP}$ with $\widetilde{\PP}$ ranging over the Sinkhorn ball around the reference innovation law.

\medskip
The construction extends one-step ambiguity sets to an entire trajectory by allowing each transition kernel to vary within its corresponding Sinkhorn neighborhood.
We lift the projected local correspondence to a global ambiguity set of path measures on $\mathcal{M}_1(\Omega)$: for every initial state $x \in \Omloc$ and every policy $\ab \in \mathcal{A}$,

\begin{equation}\label{eq_definition_Pepsilon}
\begin{aligned}
\mathfrak{P}_{x,\ab}^{\varepsilon, \delta}:=\bigg\{\delta_x \otimes \PP_0\otimes \cdots \otimes \PP_{T-1}~\bigg|~&\text{ for all } 0 \leq t \leq T-1:~\PP_t:\Omloc \rightarrow \mathcal{M}_1(\Omloc) \\
&\text{ Borel-measurable, and} \\
&\PP_t(\omega_t) \in
\left\{(\Phi_\pi^{\omega_t,a_t(\omega_t)})_{\#}\!\left(\delta_{\pi(\omega_t)}\otimes\widetilde{\PP}\right)~\middle|~
\widetilde{\PP}\in \mathcal{B}_{\varepsilon,\delta}\!\left(\widehat{\PP}(\omega_t,a_t(\omega_t))\right)\right\} \\
&\hspace{8.5cm}\text{for all } \omega_t\in \Omloc \bigg\} \subset \mathcal{M}_1(\Omega),
\end{aligned}
\end{equation}
where the notation {$\PP=\delta_x \otimes\PP_0\otimes \cdots \otimes \PP_{T-1}\in \mathfrak{P}_{x,\ab}^{\varepsilon,\delta}$} abbreviates, for all measurable sets $B \subset \Omega$,
\[
\PP(B):=\int_{\Omloc}\cdots \int_{\Omloc}\one_{B}\!\left((\omega_t)_{0 \leq t \leq T}\right) \PP_{T-1}(\omega_{T-1};\D\omega_T)\cdots \PP_0(\omega_0;\D\omega_1)\,\delta_x(\D \omega_0),
\]
and $\delta_x$ denotes the Dirac measure at $x$. Thus $\mathfrak{P}_{x,\ab}^{\varepsilon, \delta}$ contains path measures whose stochastic innovation laws remain within a Sinkhorn ball around the reference innovation model, while the deterministic history-shift component of each next state is fixed by $\pi$. Throughout the paper, robustness is interpreted relative to the baseline transition model. Accordingly, the ambiguity set is intended to capture uncertainty around the baseline market dynamics rather than introducing additional structural features such as adverse selection.

\subsection[Robust Dynamic Programming]{Robust Dynamic Programming}\label{sec:method}

The ambiguity sets introduced in Section~\ref{subsec:sinkhorn-ambiguity} define a family of plausible transition dynamics around the reference model. Since the robust objective is additive over time, the sequential decision problem can be solved recursively through dynamic programming. At each decision epoch, the market maker chooses an action that maximizes the current reward while accounting for the worst-case continuation value over all admissible transition distributions. This recursive formulation extends the classical Bellman principle to a distributionally robust setting and provides the foundation for the reinforcement learning algorithms developed later.

\subsubsection{Dynamic Programming Principle}
\label{subsec:global-robust-control}

To establish the dynamic programming principle rigorously, we first introduce a stopped and clipped reward function. Clipping ensures bounded rewards and facilitates the contraction arguments underlying the existence and uniqueness of the robust value function, while leaving the economic reward unchanged on the relevant state region.
\begin{equation}\label{eq}
r^C(x,a,y)
:= r\left(\Pi_{C_{\operatorname{global}}}(x),a,\Pi_{C_{\operatorname{global}}}(y)\right)\mathbf 1_{\{\tau(x)>0\}},
\end{equation}
where $\Pi_{C_{\operatorname{global}}}$ is the componentwise two-sided projection on the real-valued state coordinates defined in Section~\ref{transition_prob}. On trajectories for which the projection is inactive and before maturity, $r^C$ coincides with the economic reward $r$ in \eqref{eq_reward}.



Proposition~\ref{prop_2} establishes that the robust sequential decision problem admits a Bellman recursion despite the presence of distributional ambiguity. As in the classical MDP framework, the optimal value can be computed recursively by balancing the current reward against the continuation value. The key difference is that the continuation value is evaluated under the worst-case transition distribution within the Sinkhorn ambiguity set. Consequently, the optimal policy simultaneously optimizes current decisions and hedges against future model uncertainty.

\begin{prop}[Dynamic programming principle]\label{prop_2}~
The following assertions hold.

\begin{itemize}
    \item[(i)]
    The value function \(V\in C_b(\Omega_{\le T})\), defined in \eqref{eq_robust_problem_1}, is the unique fixed point of
    the robust Bellman operator
    \begin{equation}\label{eq_TV}       
    \T v(x)
    :=
    \sup_{a\in A}
    \inf_{\PP\in\mathcal{P}^{\pi}_{\varepsilon,\delta}(x,a)}
    \E_{\PP}
    \left[
    r^C(x,a,X_1)+\alpha v(X_1)
    \right],
    \qquad x\in\Omega_{\le T}.
    \end{equation}
    Moreover, there exist Borel-measurable selectors
    \[
    a_{\operatorname{loc}}^*:\Omega_{\le T}\to A,
    \qquad
    \PP_{\operatorname{loc}}^*:\Omega_{\le T}\to\mathcal{M}_1(\Omega_{\le T}),
    \]
    such that
    \[
    \PP_{\operatorname{loc}}^*(x)
    \in
    \mathcal{P}^{\pi}_{\varepsilon,\delta}
    \bigl(x,a_{\operatorname{loc}}^*(x)\bigr),
    \qquad x\in\Omega_{\le T},
    \]
    and, for every \(x\in\Omega_{\le T}\),
    \begin{equation}\label{eq_thm_assertion_1}
    \begin{aligned}
    V(x)=\T V(x)
    &=
    \sup_{a\in A}
    \inf_{\PP\in\mathcal{P}^{\pi}_{\varepsilon,\delta}(x,a)}
    \E_{\PP}
    \left[
    r^C(x,a,X_1)+\alpha V(X_1)
    \right]=
    \E_{\PP_{\operatorname{loc}}^*(x)}
    \left[
    r^C\bigl(x,a_{\operatorname{loc}}^*(x),X_1\bigr)
    +\alpha V(X_1)
    \right].
    \end{aligned}
    \end{equation}

    \item[(ii)]
    Define the stationary Markov policy
$
    \ab^*
    :=
    \bigl(a_{\operatorname{loc}}^*(X_t)\bigr)_{t\ge 0}
$
    and the corresponding infinite-horizon product measure
    \[
    \PP_x^{*,\infty}(d\omega_0,d\omega_1,\ldots)
    :=
    \delta_x(d\omega_0)
    \prod_{t=0}^{\infty}
    \PP_{\operatorname{loc}}^*(\omega_t;d\omega_{t+1}).
    \]
    Then
    $
    \PP_x^{*,\infty}
    \in
    \mathfrak{P}_{x,\ab^*}^{\varepsilon,\delta,\infty},
    $
    and, for every \(x\in\Omega_{\le T}\),
    \begin{equation}\label{eq_thm_assertion_3}
    \begin{aligned}
    V(x)
    &=
    \E_{\PP_x^{*,\infty}}
    \left[
    \sum_{t=0}^{\infty}
    \alpha^t
    r^C\bigl(X_t,a_{\operatorname{loc}}^*(X_t),X_{t+1}\bigr)
    \right] =
    \inf_{\PP\in\mathfrak{P}_{x,\ab^*}^{\varepsilon,\delta,\infty}}
    \E_{\PP}
    \left[
    \sum_{t=0}^{\infty}
    \alpha^t
    r^C\bigl(X_t,a_{\operatorname{loc}}^*(X_t),X_{t+1}\bigr)
    \right].
    \end{aligned}
    \end{equation}
If \(k:=\tau_{\operatorname{cur}}(x)\), then
    \begin{equation}\label{eq_thm_assertion_stopped}
    \begin{aligned}
    V(x)
    &=
    \E_{\PP_x^{*,\infty}}
    \left[
    \sum_{t=0}^{k-1}
    \alpha^t
    r^C\bigl(X_t,a_{\operatorname{loc}}^*(X_t),X_{t+1}\bigr)
    \right]=
    \inf_{\PP\in\mathfrak{P}_{x,\ab^*}^{\varepsilon,\delta}}
    \E_{\PP}
    \left[
    \sum_{t=0}^{k-1}
    \alpha^t
    r^C\bigl(X_t,a_{\operatorname{loc}}^*(X_t),X_{t+1}\bigr)
    \right].
    \end{aligned}
    \end{equation}
    In particular, for every initial full-horizon state
    \(x\in\Omega_T^{\operatorname{init}}\), we have \(k=T\), and therefore
    \begin{equation}\label{eq_thm_assertion_full_horizon}
    \begin{aligned}
    V(x)
    &=
    \E_{\PP_x^{*,\infty}}
    \left[
    \sum_{t=0}^{T-1}
    \alpha^t
    r^C\bigl(X_t,a_{\operatorname{loc}}^*(X_t),X_{t+1}\bigr)
    \right] =
    \inf_{\PP\in\mathfrak{P}_{x,\ab^*}^{\varepsilon,\delta}}
    \E_{\PP}
    \left[
    \sum_{t=0}^{T-1}
    \alpha^t
    r^C\bigl(X_t,a_{\operatorname{loc}}^*(X_t),X_{t+1}\bigr)
    \right].
    \end{aligned}
    \end{equation}
\end{itemize}
\end{prop}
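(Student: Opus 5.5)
The plan is to follow the standard route for robust MDPs with compact action sets and weakly continuous ambiguity correspondences, in the spirit of \cite{neufeld2022markov}. The first step is to show that $\T$ maps $C_b(\Omega_{\le T})$ into itself and is an $\alpha$-contraction in the sup norm. Boundedness is immediate because $r^C$ is bounded by clipping, with $\|\T v\|_\infty\le \|r^C\|_\infty+\alpha\|v\|_\infty$. For the contraction, I fix $x$ and $a$ and use
\[
\Big|\inf_{\PP}\E_{\PP}[r^C+\alpha v]-\inf_{\PP}\E_{\PP}[r^C+\alpha w]\Big|\le \alpha\|v-w\|_\infty ,
\]
and the same bound survives the supremum over $a$. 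Banach's fixed point theorem then gives a unique fixed point in $C_b$. It remains to identify this fixed point with $V$ from \eqref{eq_robust_problem_1}, which is done in part (ii).

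\textbf{Continuity and selectors.} To keep $\T v$ continuous I would apply Berge's maximum theorem twice. The inputs are:
\begin{itemize}
\item The map $(x,a,\widetilde{\PP})\mapsto \E_{(\Phi_\pi^{x,a})_\#(\delta_{\pi(x)}\otimes\widetilde{\PP})}[r^C(x,a,\cdot)+\alpha v]$ is continuous. This follows from continuity of $\Phi_\pi$, $\pi$ and the projection in $r^C$, together with weak convergence of the innovation laws.
\item The correspondence $(x,a)\mapsto \mathcal B_{\varepsilon,\delta}(\widehat{\PP}(x,a))$ is nonempty, compact-valued and continuous in the Hausdorff (weak) sense.
\end{itemize}
Compactness comes from lower semicontinuity of $W_\delta$ together with tightness, since the entropy bound relative to $\nu$ gives uniform integrability. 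Upper hemicontinuity follows from joint lower semicontinuity of $W_\delta$ and continuity of $\widehat{\PP}$.

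\textbf{Main obstacle.} Lower hemicontinuity is the step I expect to be hardest, because a general Wasserstein ball can fail to be lower hemicontinuous at its boundary. First I would use strict convexity of $W_\delta(\widehat{\PP},\cdot)$, which comes from the entropic term, to show that the interior $\{W_\delta<\varepsilon\}$ is nonempty and dense in the ball. This is a Slater-type condition, and I will assume it in the form of $\varepsilon>\inf_{\PP} W_\delta(\widehat{\PP}(x,a),\PP)$. Then I would perturb any $\widetilde{\PP}$ with $W_\delta<\varepsilon$ along the change of reference $\widehat{\PP}(x_n,a_n)$ by reusing the gluing of couplings, and apply continuity of $W_\delta$ in its first argument.

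With these inputs, Berge gives that the inner infimum is continuous in $(x,a)$ and attained, with a measurable minimizer $\PP^*(x,a)$ by a measurable selection theorem (Kuratowski–Ryll-Nardzewski). The outer supremum over the compact set $A$ is then continuous in $x$ and attained at a measurable $a^*_{\operatorname{loc}}(x)$. Setting $\PP^*_{\operatorname{loc}}(x):=\PP^*(x,a^*_{\operatorname{loc}}(x))$ yields \eqref{eq_thm_assertion_1}.

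\textbf{Part (ii).} Membership $\PP_x^{*,\infty}\in\mathfrak P^{\varepsilon,\delta,\infty}_{x,\ab^*}$ holds by construction, since each kernel is a measurable selection from the local set. I iterate the equality $V=\E_{\PP^*_{\operatorname{loc}}}[r^C+\alpha V]$ $n$ times using the tower property under the product measure. The remainder $\alpha^n\E[V(X_n)]$ is bounded by $\alpha^n\|V\|_\infty\to0$, which gives the first equality in \eqref{eq_thm_assertion_3}. For the infimum, I take any $\PP\in\mathfrak P^{\varepsilon,\delta,\infty}_{x,\ab^*}$ with kernels $\PP_t$. Since $V=\T V\le \E_{\PP_t(\omega_t)}[r^C+\alpha V]$ for each $t$, backward induction with the same remainder bound gives $V(x)\le\E_{\PP}[\sum_t\alpha^t r^C]$, and equality is attained at $\PP_x^{*,\infty}$.

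The same argument run against arbitrary policies identifies the fixed point with the robust value $V$ of \eqref{eq_robust_problem_1}. For any policy, the worst case is at most the value obtained by pairing it with adversarial selectors, which is bounded by $V$ via $\T V\ge$ the $a$-specific infimum. Measurable near-minimizing selectors are again needed here.

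\textbf{Stopped identities.} Finally, \eqref{eq_thm_assertion_stopped} follows because the time-to-close coordinate decreases deterministically by one per step through $\Phi_\pi$. Hence $\tau(X_t)=0$ for $t\ge k$, and the indicator in $r^C$ kills all later terms. Since the kernels before $k$ coincide with elements of $\mathfrak P^{\varepsilon,\delta}_{x,\ab^*}$, the infinite-horizon infimum reduces to the finite one. Taking $k=T$ for $x\in\Omega_T^{\operatorname{init}}$ gives \eqref{eq_thm_assertion_full_horizon}.
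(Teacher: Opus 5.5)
Your proposal is correct and follows the same skeleton as the paper's proof. The difference is that you prove the results the paper cites. The paper does not argue contraction, continuity of $\mathcal{T}v$, measurable selection or verification itself. Instead it checks the hypotheses of \cite[Theorem~2.7]{neufeld2022markov}: the correspondence $(x,a)\twoheadrightarrow\mathcal{P}^{\pi}_{\varepsilon,\delta}(x,a)$ is nonempty, compact-valued and weakly continuous by Lemma~\ref{lem_assumptions_fulfilled_wasserstein}, which imports the Sinkhorn-ball regularity from \cite{lu2025distributionally}; and $r^C$ is bounded and continuous by Lemma~\ref{lem_assumptions_reward_fulfilled}. It then reads off (i), the selectors and \eqref{eq_thm_assertion_3} from that theorem. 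Finally it adds exactly your stopping argument via $\tau_{\operatorname{cur}}(X_{t+1})=(\tau_{\operatorname{cur}}(X_t)-1)^+$.

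Your route is self-contained at the price of length. It consists of Banach's fixed point theorem, Berge's theorem applied twice, Kuratowski--Ryll-Nardzewski selection, and iteration of $V=\mathcal{T}V$ with remainder $\alpha^n\|V\|_\infty$. It also identifies the fixed point with \eqref{eq_robust_problem_1} by pairing an arbitrary Markov policy with adversarial selectors. These selectors can be taken exact rather than near-minimizing: compose a Borel argmin selector $\PP^*(x,a)$ with $a_t$, since the inner infimum is attained on the compact balls.

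The one place where your argument is less general than the cited result is lower hemicontinuity of the Sinkhorn ball. Your density-plus-gluing step needs three inputs:
\begin{itemize}
\item the Slater condition $\bar\varepsilon_{x,a}>0$;
\item a cost that is Lipschitz in its first argument, so that $W_\delta(\mu_n,\PP)\le W_\delta(\mu,\PP)+L\,W_1(\mu_n,\mu)$;
\item the $W_1$-continuity of $\widehat\PP$ established in Lemma~\ref{lem_assumptions_fulfilled_wasserstein}.
\end{itemize}
The paper, by contrast, assumes only feasibility, $\bar\varepsilon\ge 0$. The difference is harmless. When $\bar\varepsilon_{x,a}=0$ the ball is the singleton $\{\int Q_{\tilde x,\delta}\,\widehat\PP(x,a)(d\tilde x)\}$. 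This singleton varies continuously with $(x,a)$ as soon as $\tilde x\mapsto Q_{\tilde x,\delta}$ is weakly continuous, so you can treat that case separately.

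Two small corrections. First, density of $\{W_\delta<\varepsilon\}$ in the ball follows from convexity plus Slater, by mixing with a strictly feasible point; strict convexity plays no role. Second, if $\nu$ is not a probability measure, an entropy bound relative to $\nu$ does not give tightness. Phrase it relative to the Sinkhorn kernel instead: $H\bigl(\gamma\mid \widehat\PP(d\tilde x)\,Q_{\tilde x,\delta}(d\tilde z)\bigr)\le\bar\varepsilon/\delta$. This follows from the identity $\int c\,d\gamma+\delta H(\gamma\mid\widehat\PP\otimes\nu)=\delta H\bigl(\gamma\mid\widehat\PP(d\tilde x)\,Q_{\tilde x,\delta}(d\tilde z)\bigr)-\delta\,\mathbb{E}[\log Z(\tilde x,\delta)]$.
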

The proposition shows that distributional robustness modifies the Bellman recursion only through the transition operator. Consequently, classical reinforcement learning algorithms can be adapted by replacing the standard expectation with a worst-case expectation over the ambiguity set. This observation forms the basis of the robust reinforcement learning algorithm developed in Section \ref{sec:algorithms}.

\subsubsection{Robust Bellman Operator via Sinkhorn Duality} \label{subsec:sinkhorn-dual}

The robust Bellman operator in \eqref{eq_TV} requires evaluating the worst-case continuation value over all transition distributions in the ambiguity set. Solving this inner optimization directly is computationally challenging because it is an infinite-dimensional optimization over probability measures. To obtain a tractable Bellman operator, we exploit the dual formulation of the Sinkhorn optimal transport problem, which transforms the inner minimization into an equivalent finite-dimensional optimization. Since the lifting map $(\Phi_\pi^{x,a})_\#$ is deterministic, the worst-case optimization can be carried out directly over the stochastic innovation law on $\Omega_{\mathrm{rnd}}$. Accordingly, fix $x\in\Omega_{\le T}$, $a\in A$, and a bounded measurable function $g:\Omega_{\mathrm{rnd}}\to\mathbb{R}$. The corresponding local Sinkhorn primal problem is
\begin{equation}\label{eq:sinkhorn_primal}
\begin{aligned}
V^P_{\varepsilon,\delta}(x,a;g)=
\inf_{\widetilde{\PP}\in\mathcal B_{\varepsilon,\delta}(\widehat\PP(x,a))}
\mathbb E_{\widetilde Z\sim\widetilde{\PP}}
\big[g(\widetilde Z)\big].
\end{aligned}
\end{equation}
This is the primal problem whose dual value is denoted by $V^D_{\varepsilon,\delta}(x,a;g)$ below. In particular, when
\begin{equation}\label{eq:bellman_innovation_payoff}
G_v^{x,a}(z)
:=
r^C\!\left(x,a,\Phi_\pi(x,a,z)\right)
+
\alpha v\!\left(\Phi_\pi(x,a,z)\right),
\qquad z\in\Omega_{\operatorname{rnd}},
\end{equation}
the inner minimization in \eqref{eq_TV} satisfies
\begin{equation}\label{eq:bellman_inner_primal_equivalence}
\inf_{\PP\in\mathcal P^{\pi}_{\varepsilon,\delta}(x,a)}
\mathbb E_{\PP}\!\left[r^C(x,a,X_1)+\alpha v(X_1)\right]
=
V^P_{\varepsilon,\delta}\!\left(x,a;G_v^{x,a}\right).
\end{equation}

We write $\tilde x\in\Omega_{\operatorname{rnd}}$ for an innovation sampled from the reference law $\widehat\PP(x,a)$, and $\tilde z\in\Omega_{\operatorname{rnd}}$ for a candidate perturbed innovation sampled from the reference measure $\nu$ or from the Sinkhorn kernel. Define
\begin{equation}\label{eq:sinkhorn_normalizer}
Z(\tilde x,\delta)
:=
\mathbb E_{\tilde z\sim\nu}
\left[
\exp\!\left(-\frac{c(\tilde x,\tilde z)}{\delta}\right)
\right],
\end{equation}
the shifted radius
\begin{equation*}\label{eq:shifted_radius_local}
\bar\varepsilon_{x,a}
:=
\varepsilon
+
\delta\,
\mathbb E_{\tilde x\sim\widehat\PP(x,a)}
\left[\log Z(\tilde x,\delta)\right],
\end{equation*}
and the Sinkhorn kernel
\begin{equation}\label{eq:sinkhorn_kernel_general}
dQ_{\tilde x,\delta}(\tilde z)
:=
\frac{
\exp\!\left(-c(\tilde x,\tilde z)/\delta\right)
}{
Z(\tilde x,\delta)
}
\,d\nu(\tilde z).
\end{equation}
Applying the strong duality result of \citet{wang2025sinkhorn} gives the reward-side dual
\begin{equation}\label{eq:dual_Sinkhorn1}
V^D_{\varepsilon,\delta}(x,a;g)
:=
\sup_{\lambda>0}
\left\{
-\lambda\bar\varepsilon_{x,a}
-
\lambda\delta\,
\mathbb E_{\tilde x\sim\widehat\PP(x,a)}
\left[
\log
\mathbb E_{\tilde z\sim Q_{\tilde x,\delta}}
\left[
\exp\!\left(-\frac{g(\tilde z)}{\lambda\delta}\right)
\right]
\right]
\right\}.
\end{equation}
Thus $V^D_{\varepsilon,\delta}$ is the Lagrange-dual value of the explicitly specified primal problem \eqref{eq:sinkhorn_primal}. If the cost $c(\tilde x,\tilde z)$ is $\widehat\PP(x,a)\otimes\nu$-measurable and satisfies
\begin{equation*}
\nu\!\left(\{\tilde z:0\le c(\tilde x,\tilde z)<\infty\}\right)=1
\quad\text{for }\widehat\PP(x,a)\text{-almost every }\tilde x,
\end{equation*}
then the primal problem is feasible if and only if $\bar\varepsilon_{x,a}\ge0$, and, whenever it is feasible, strong duality yields
\begin{equation}\label{eq:sinkhorn_primal_dual_equality}
V^P_{\varepsilon,\delta}(x,a;g)
=
V^D_{\varepsilon,\delta}(x,a;g).
\end{equation}

When the normalizing term in \eqref{eq:sinkhorn_normalizer} is independent of the anchor innovation, the shifted radius is a scalar tuning parameter. In that case we write $\bar\varepsilon$ instead of $\bar\varepsilon_{x,a}$. Substituting the payoff $G_v^{x,a}$ from \eqref{eq:bellman_innovation_payoff} into the dual representation \eqref{eq:dual_Sinkhorn1} gives a finite-dimensional representation of the robust Bellman update. For $\lambda>0$, define the projected Sinkhorn dual target
\begin{equation}\label{eq:sinkhorn_H}
\begin{aligned}
H_{\delta,v}^{\mathrm{sink}}(x,a;\lambda)
:=
&-\lambda\bar\varepsilon_{x,a}
-
\lambda\delta\,
\mathbb E_{\tilde x\sim\widehat\PP(x,a)}
\Bigg[
\log\,
\mathbb E_{\tilde z\sim Q_{\tilde x,\delta}}
\Bigg[
\exp\!\left(
-\frac{
G_v^{x,a}(\tilde z)
}{\lambda\delta}
\right)
\Bigg]
\Bigg]  \\
=
&-\lambda\bar\varepsilon_{x,a}
-
\lambda\delta\,
\mathbb E_{\tilde x\sim\widehat\PP(x,a)}
\Bigg[
\log\,
\mathbb E_{\tilde z\sim Q_{\tilde x,\delta}}
\Bigg[
\exp\!\left(
\frac{
-r^C\!\left(x,a,\Phi_\pi(x,a,\tilde z)\right)
-\alpha v\!\left(\Phi_\pi(x,a,\tilde z)\right)
}{\lambda\delta}
\right)
\Bigg]
\Bigg].
\end{aligned}
\end{equation}
The variable $\tilde x$ is the reference innovation around which the Sinkhorn kernel is centered, while $\tilde z$ is the perturbed innovation used to evaluate the next-state reward and continuation value. The Sinkhorn robust Bellman operator can therefore be written as
\begin{equation}\label{eq:sinkhorn_bellman_op}
\begin{aligned}
\mathcal T v(x)
&:=
\sup_{a\in A}
V^D_{\varepsilon,\delta}\!\left(x,a;G_v^{x,a}\right) =
\sup_{a\in A}\sup_{\lambda>0}
H_{\delta,v}^{\mathrm{sink}}(x,a;\lambda),
\qquad x\in\Omega_{\le T}.
\end{aligned}
\end{equation}
By \eqref{eq:bellman_inner_primal_equivalence} and strong duality, $\mathcal T$ is equivalent to the abstract Bellman operator $\T$ in \eqref{eq_TV}. Consequently, Proposition~\ref{prop_2} implies that the robust value function is the unique fixed point of \eqref{eq:sinkhorn_bellman_op}.

The dual representation also provides a clear interpretation of the two dimensions of robustness used throughout the paper. The shifted radius $\bar\varepsilon_{x,a}$, or simply $\bar\varepsilon$ under the weighted $\ell_1$ specialization, determines the size of the ambiguity set and therefore controls how much the transition distribution is allowed to deviate from the reference model. In contrast, the regularization parameter $\delta$ governs how the worst-case transition distribution is constructed within this admissible set. Specifically, $\delta$ enters both the Sinkhorn kernel $Q_{\tilde{x},\delta}$ and the exponential tilting in \eqref{eq:sinkhorn_H}, thereby controlling how probability mass is redistributed across perturbed innovations. Consequently, $\bar\varepsilon$ determines the amount of uncertainty admitted into the decision problem, whereas $\delta$ determines how conservatively the decision maker responds to that uncertainty. This distinction underlies the comparative statics and empirical analyses presented in the remainder of the paper.

\begin{cor}[Optimal policy]\label{sol_1}
A deterministic Markov policy $a^*:\Omega_{\le T}\to A$ is optimal for \eqref{eq_robust_problem_1} if and only if
\begin{equation*}
a^*(x)\in
\arg\max_{a\in A}\sup_{\lambda>0}
H_{\delta,V}^{\mathrm{sink}}(x,a;\lambda)
\qquad\text{for all }x\in\Omega_{\le T}.
\end{equation*}
\end{cor}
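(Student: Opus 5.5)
The plan is to reduce the corollary to Proposition~\ref{prop_2} together with the strong-duality identity \eqref{eq:sinkhorn_primal_dual_equality}. We write $J(x,a;v):=\inf_{\PP\in\mathcal{P}^{\pi}_{\varepsilon,\delta}(x,a)}\E_{\PP}[r^C(x,a,X_1)+\alpha v(X_1)]$. By \eqref{eq:bellman_inner_primal_equivalence}, $J(x,a;V)=V^P_{\varepsilon,\delta}(x,a;G_V^{x,a})$. Since $V\in C_b$ and $r^C$ is bounded by clipping, $G_V^{x,a}$ is bounded and measurable. Strong duality then gives
\[
J(x,a;V)=V^D_{\varepsilon,\delta}(x,a;G_V^{x,a})=\sup_{\lambda>0}H^{\mathrm{sink}}_{\delta,V}(x,a;\lambda),
\]
whenever the ball is feasible, which we take as standing ($\bar\varepsilon_{x,a}\ge 0$). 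The argmax condition in the corollary is therefore equivalent to $a^*(x)\in\arg\max_{a\in A}J(x,a;V)$, that is, $J(x,a^*(x);V)=\T V(x)=V(x)$ for all $x$.

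\textbf{Sufficiency.} Suppose $J(x,a^*(x);V)=V(x)$ for all $x$. Define the one-policy robust operator $\T^{a^*}v(x):=J(x,a^*(x);v)$. It is monotone and an $\alpha$-contraction in sup norm, because $\inf$ is $1$-Lipschitz. By hypothesis $V$ is its fixed point. Its fixed point is also the worst-case value of the policy $a^*$ over $\mathfrak P^{\varepsilon,\delta,\infty}_{x,a^*}$. To see this, iterate over the horizon as in part (ii) of Proposition~\ref{prop_2}:
\begin{itemize}
\item a measurable $\eta$-optimal selection of the inner infimum at each step gives the upper bound;
\item the rectangularity of $\mathfrak P$ (product of local sets) gives the lower bound via induction on the conditional expectations.
\end{itemize}
Hence the worst-case value of $a^*$ equals $V$, the optimal value, so $a^*$ is optimal. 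In the finite-horizon version stopped at $\tau$, the same argument runs with the sums truncated at $k$.

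\textbf{Necessity.} Suppose $a^*$ is optimal, so its robust value $V^{a^*}$, the fixed point of $\T^{a^*}$, equals $V$ pointwise (optimality being understood for every initial state). Then $V=\T^{a^*}V$, so $J(x,a^*(x);V)=V(x)=\sup_a J(x,a;V)$, and $a^*(x)$ attains the maximum.

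\textbf{Main obstacle.} The delicate step is identifying the fixed point of $\T^{a^*}$ with the path-space worst case. This needs measurable $\eta$-optimal selectors of the inner infimum for an arbitrary Borel policy; Proposition~\ref{prop_2} gives these only for $a^*_{\operatorname{loc}}$. I would reuse the measurable-selection argument from that proof. This requires the correspondence $(x,a)\mapsto\mathcal{P}^{\pi}_{\varepsilon,\delta}(x,a)$ to be weakly compact-valued and continuous, which I assume was established there. Composing it with a Borel $a^*$ preserves analytic measurability, which suffices for $\eta$-selectors via Jankov--von Neumann. A secondary point is that "optimal" must be read pointwise in $x$; otherwise necessity holds only almost everywhere.
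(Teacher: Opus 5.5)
Your argument is correct. Its first step uses \eqref{eq:bellman_inner_primal_equivalence} and strong duality to reduce the $\arg\max$ condition on $\sup_{\lambda>0}H^{\mathrm{sink}}_{\delta,V}$ to the conserving condition $J(x,a^*(x);V)=\T V(x)=V(x)$. That step is all the paper itself offers. The corollary is stated right after the remark that \eqref{eq:sinkhorn_bellman_op} coincides with $\T$, with an implicit appeal to Proposition~\ref{prop_2}, and no separate proof is given. But Proposition~\ref{prop_2} only certifies the particular selector $a^*_{\operatorname{loc}}$. It neither shows that an arbitrary Borel maximizer is optimal nor gives necessity, so your policy-evaluation operator $\T^{a^*}$ is what actually proves the equivalence.

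Much of the machinery you flag as the main obstacle can be dropped, though.

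\emph{Sufficiency.} Only the lower bound matters. For every $\PP\in\mathfrak P^{\varepsilon,\delta,\infty}_{x,a^*}$ the conditional law of $X_{t+1}$ lies in $\mathcal P^{\pi}_{\varepsilon,\delta}(X_t,a^*(X_t))$. Hence $\E_\PP[r^C(X_t,a^*(X_t),X_{t+1})+\alpha V(X_{t+1})\mid X_0,\dots,X_t]\ge\T^{a^*}V(X_t)=V(X_t)$, and telescoping gives a worst-case value $\ge V(x)$. The bound $\le V(x)$ is automatic, since $V$ is a supremum over policies. No selection is needed.

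\emph{Necessity.} Use $V\in C_b(\Omega_{\le T})$ and compactness of $\mathcal P^{\pi}_{\varepsilon,\delta}(y,a^*(y))$. The map $(y,Q)\mapsto\E_Q[r^C(y,a^*(y),X_1)+\alpha V(X_1)]$ is Carath\'eodory. The correspondence $y\mapsto\mathcal P^{\pi}_{\varepsilon,\delta}(y,a^*(y))$ is weakly measurable, being lower hemicontinuous composed with a Borel map. The measurable maximum theorem therefore yields a Borel exact minimizer $Q^*(y)$. Under the Ionescu--Tulcea measure $\PP^*\in\mathfrak P^{\varepsilon,\delta,\infty}_{x,a^*}$ built from $Q^*$, the same telescoping gives
\[
\E_{\PP^*}\Bigl[\sum_{t\ge0}\alpha^t r^C\bigl(X_t,a^*(X_t),X_{t+1}\bigr)\Bigr]
=V(x)-\sum_{t\ge0}\alpha^t\,\E_{\PP^*}\bigl[(V-\T^{a^*}V)(X_t)\bigr].
\]
The summands are nonnegative because $\T^{a^*}V\le\T V=V$. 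Optimality at $x$ then forces the $t=0$ term to vanish, i.e.\ $\T^{a^*}V(x)=V(x)$.

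This route avoids lower semianalytic fixed points, Jankov--von Neumann, and universally measurable kernels. It also proves necessity separately at each initial state, which settles your caveat about reading optimality pointwise.
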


\begin{rem}
The inner $\sup_{\lambda > 0}$ in \eqref{eq:sinkhorn_bellman_op} is the one-dimensional Lagrange-dual problem associated with the local Sinkhorn primal \eqref{eq:sinkhorn_primal}. For computational efficiency, $\lambda$ is updated at each iteration via a gradient step, as in Algorithm~\ref{robust-rl}. The log-sum-exp structure shows that the worst-case innovation distribution is tilted proportionally to
\[
Q_{\tilde x,\delta}(d\tilde z)
\exp\!\left(-\frac{G_v^{x,a}(\tilde z)}{\lambda\delta}\right),
\]
so, for a reward-maximization problem with an inner infimum, the adversary emphasizes perturbed innovations that lead to low one-step reward plus continuation value after deterministic reconstruction, subject to the Sinkhorn transport penalty.
\end{rem}

The following result establishes convergence of value iteration which is our main method to solve the problem numerically.


\begin{prop}[Value iteration]\label{prop_VI}
For every $v_0\in C_b(\Omloc)$, we have
\[
\lim_{n\to\infty}\|\T^n v_0-V\|_\infty=0.
\]
Moreover, the convergence is geometric with contraction factor at most $\alpha$.
\end{prop}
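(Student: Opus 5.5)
The plan is to apply the Banach fixed-point theorem to $\T$ on $(C_b(\Omloc),\|\cdot\|_\infty)$ and to use Proposition~\ref{prop_2}(i), which already identifies $V$ as the unique fixed point of $\T$. The argument then reduces to two facts. First, $\T$ is an $\alpha$-contraction in the sup-norm. Second, $\T$ maps $C_b$ into $C_b$, so that the iterates $\T^n v_0$ stay in the space where the fixed point lives. Proposition~\ref{prop_2} already asserts $V\in C_b$ and that $V$ is the fixed point in that class, so the invariance of $C_b$ under $\T$ is implicitly established there. I will take it as given from the proof of Proposition~\ref{prop_2}, recording only that $r^C$ is bounded thanks to the clipping $\Pi_{C_{\operatorname{global}}}$.

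For the contraction, fix $v,w\in C_b(\Omloc)$, $x$, and $a\in A$. For every $\PP\in\mathcal P^{\pi}_{\varepsilon,\delta}(x,a)$,
\[
\E_{\PP}\bigl[r^C(x,a,X_1)+\alpha v(X_1)\bigr]\le \E_{\PP}\bigl[r^C(x,a,X_1)+\alpha w(X_1)\bigr]+\alpha\|v-w\|_\infty .
\]
Taking the infimum over $\PP$ and then the supremum over $a$ preserves this inequality, because both operations are monotone and commute with adding a constant. This gives $\T v(x)\le \T w(x)+\alpha\|v-w\|_\infty$. Exchanging $v$ and $w$ and taking the supremum over $x$ yields $\|\T v-\T w\|_\infty\le\alpha\|v-w\|_\infty$.

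One needs the ambiguity set to be nonempty, so that the infimum is not $+\infty$. This holds under the feasibility condition $\bar\varepsilon_{x,a}\ge 0$ discussed after \eqref{eq:sinkhorn_primal_dual_equality}, which I will assume, as Proposition~\ref{prop_2} implicitly does.

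Iterating, and using $V=\T V$, we get $\|\T^n v_0-V\|_\infty=\|\T^n v_0-\T^n V\|_\infty\le\alpha^n\|v_0-V\|_\infty\to0$, since $\alpha<1$. This yields convergence at geometric rate $\alpha$.

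The only delicate point is the invariance $\T(C_b)\subset C_b$. Boundedness is immediate: $\|\T v\|_\infty\le\|r^C\|_\infty+\alpha\|v\|_\infty$. Continuity is harder, because it requires continuity of the correspondence $x\mapsto\mathcal P^{\pi}_{\varepsilon,\delta}(x,a)$ and a Berge-type maximum theorem. I would not reprove this. I would either cite the corresponding step in the proof of Proposition~\ref{prop_2}, or sidestep it by working in the space of bounded Borel functions. The contraction estimate above holds verbatim there, although measurability of $\T v$ would again rely on the measurable-selection arguments of Proposition~\ref{prop_2}. In either setting, the limit is identified with $V$ by uniqueness of the fixed point.
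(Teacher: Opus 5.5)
Your proposal is correct and follows the paper's route. You show that $\mathcal{T}$ is an $\alpha$-contraction in the sup norm and apply Banach's fixed-point theorem, taking $V=\mathcal{T}V$ and the invariance $\mathcal{T}(C_b)\subset C_b$ from Proposition~\ref{prop_2}. The only difference is that you verify the contraction inline, via monotonicity and constant-shift invariance of $\inf$ and $\sup$, whereas the paper cites \cite[Theorem~2.7~(ii)]{neufeld2022markov} for it.
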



\section{Reinforcement Learning: Algorithm and Implementation}
\label{sec:algorithms}

This section describes the implementation of the robust sequential decision framework developed in Section~\ref{sec:setting}. We first present the state representation and action space used by the market-making agent, then describe the reference transition model and the construction of the Sinkhorn perturbation kernel. Finally, we introduce the neural-network architecture and the fitted actor--critic algorithms used to solve the robust Bellman equation.

\subsection{State Representation}
The market-making agent observes a rolling-window state vector comprising inventory, recent returns, order flow, liquidity, volatility, and the remaining trading horizon. The rolling-window representation allows the agent to exploit short-run temporal dependence, volatility clustering, and order-flow persistence that are well documented in high-frequency financial markets.

\subsubsection{State Variables}
At time $t$, the state is represented by a vector $X_t$ containing the most recent $m$ observations of the following variables:
\begin{itemize}
    \item[(i)] Inventory
    $I^{t} = (I_{t-m+1}, \dots, I_t) \in \mathbb{R}^{m}$,

    \item[(ii)] Returns of the underlying asset
    $R^{t} = (R_{t-m+1}, \dots, R_t) \in \mathbb{R}^{m}$,

    \item[(iii)] Number of market orders
    $N^{t,\operatorname{bid}} = (N_{t-m+1}^{\operatorname{bid}}, \dots, N_t^{\operatorname{bid}}) \in \mathbb{R}^{m}$,
    $N^{t,\operatorname{ask}} = (N_{t-m+1}^{\operatorname{ask}}, \dots, N_t^{\operatorname{ask}}) \in \mathbb{R}^{m}$,

    \item[(iv)] Average size of market orders
    $V^{t,\operatorname{bid}} = (V_{t-m+1}^{\operatorname{bid}}, \dots, V_t^{\operatorname{bid}}) \in \mathbb{R}^{m}$,
    $V^{t,\operatorname{ask}} = (V_{t-m+1}^{\operatorname{ask}}, \dots, V_t^{\operatorname{ask}}) \in \mathbb{R}^{m}$,

    \item[(v)] Spread in ticks
    $\delta^{t,\operatorname{tick}} =
    (\delta_{t-m+1}^{\operatorname{tick}}, \dots, \delta_t^{\operatorname{tick}})
    \in \mathbb{R}^{m}$

    \item[(vi)] Relative spread
    $\delta^{t,\operatorname{rel}} =
    (\delta_{t-m+1}^{\operatorname{rel}}, \dots, \delta_t^{\operatorname{rel}})
    \in \mathbb{R}^{m}$

    \item[(vii)] Microprice deviation
    $D^{t,\operatorname{micro}} =
    (D_{t-m+1}^{\operatorname{micro}}, \dots, D_t^{\operatorname{micro}})
    \in \mathbb{R}^{m}$

    \item[(viii)] Running volatility
    $\nu^{t} = (\nu_{t-m+1}, \dots, \nu_t) \in \mathbb{R}^{m}$

    \item[(ix)] Order flow imbalance
    $\rm{OFI}^{t} = (\rm{OFI}_{t-m+1}, \dots, \rm{OFI}_t) \in \mathbb{R}^{m}$

    \item[(x)] Market-order self-excitation states
    $Z^{t,\operatorname{bid}} = (Z_{t-m+1}^{\operatorname{bid}}, \dots, Z_t^{\operatorname{bid}}) \in \mathbb{R}^{m}$,
    
    $Z^{t,\operatorname{ask}} = (Z_{t-m+1}^{\operatorname{ask}}, \dots, Z_t^{\operatorname{ask}}) \in \mathbb{R}^{m}$,

    \item[(xi)] Time-to-close
    {$\tau^t = (\tau_{t-m+1}, \dots, \tau_t) \in \mathbb{N}_0^m$.}
\end{itemize}

The resulting state vector is
\[
\Big(
I^t,R^t,
N^{t,\operatorname{bid}},N^{t,\operatorname{ask}},
V^{t,\operatorname{bid}},V^{t,\operatorname{ask}},
\delta^{t,\operatorname{tick}},\delta^{t,\operatorname{rel}},
D^{t,\operatorname{micro}},
\nu^t,
\operatorname{OFI}^{t},
Z^{t,\operatorname{bid}},
Z^{t,\operatorname{ask}},
\tau^t
\Big)
\in \Omega_{\mathrm{loc}},\]
where $\Omega_{\mathrm{loc}}$ denotes the state space introduced in Section~\ref{sec:setting}. 

\subsubsection{Feature Construction}
The variables entering the state are defined as follows.

{The microprice deviation measures how far the current mid-price is from the order-book-implied microprice, and is defined by}
\begin{align*}
{D_t^{\operatorname{micro}}}
&{:=}
{\frac{S_t^{\operatorname{ask,best}} N_t^{\operatorname{bid}}
    + S_t^{\operatorname{bid,best}} N_t^{\operatorname{ask}}}
     {N_t^{\operatorname{bid}} + N_t^{\operatorname{ask}}}
     - S_t^{\operatorname{mid}}.}
\end{align*}

The running volatility follows an exponentially weighted moving-average (EWMA) recursion
\begin{align*}
\nu_{t+1}
=
(1-\lambda)\nu_t + \lambda R_{t+1}^2,
\qquad 0 < \lambda \le 1,
\end{align*}
with initial value estimated from a warm-up sample.

Order flow imbalance is defined by:
\begin{align*}
{\operatorname{OFI}_{t+1}}
{= N_t^{\operatorname{ask}} - N_t^{\operatorname{bid}}.}
\end{align*}

The self-excitation state $Z_t$ captures the clustering behavior of market order arrivals and is constructed as a discrete-time approximation of a linear Hawkes process. 
Specifically, given observed market order counts $N_t^{\operatorname{ask}}$ and $N_t^{\operatorname{bid}}$, the excitation state is computed recursively as
\begin{equation}\label{eq:Z_update}
Z_{t+1}^{\operatorname{ask}} = e^{-\beta} Z_t^{\operatorname{ask}} + \alpha N_t^{\operatorname{ask}},\qquad Z_{t+1}^{\operatorname{bid}} = e^{-\beta} Z_t^{\operatorname{bid}} + \alpha N_t^{\operatorname{bid}},
\end{equation}
where $\beta > 0$ controls the decay rate of past activity and $\alpha > 0$ governs the magnitude of excitation induced by new arrivals. In practice, this recursion can be applied directly to observed data: starting from an initial value $Z_0$, the sequences $(Z_t^{\operatorname{bid}})_{t=1,\dots,T}$, $(Z_t^{\operatorname{ask}})_{t=1,\dots,T}$ are obtained by iterating the update rule from \eqref{eq:Z_update} using realized market order counts. 
This construction allows $Z_t$ to serve as a low-dimensional, Markovian summary of past order flow, and is therefore suitable for both simulated environments and real high-frequency data.

Moreover, we choose the time-varying risk-aversion coefficient in the reward function as 
\begin{equation}\label{eq:gamma_t}
\gamma_t := \gamma\!\left(1 + \frac{1}{(T-t)\Delta_t + \xi}\right),\qquad \text{ with } \Delta_t =t_{i+1}-t_i,
\end{equation}
which eventually increases the inventory penalty as the remaining time to maturity shrinks. Here $\gamma > 0$ is the baseline risk-aversion parameter from the objective \eqref{eq_reward} and $\xi > 0$ controls the rate of increase; one has $\gamma_t \approx \gamma$ for large remaining time $T - t$ and $\gamma_t$ reaches its maximum $\gamma(1 + 1/\xi)$ at $t = T$. Consequently, inventory positions become progressively more costly near the end of the trading horizon, reflecting the reduced opportunity to unwind inventory before maturity.

The selected features summarize inventory conditions, liquidity, volatility, and order-flow dynamics, all of which are known to be informative for short-horizon market-making decisions \citep{ContKukanovStoikov2014,CarteaJaimungalPenalva2015}. Using rolling windows rather than contemporaneous observations allows the agent to condition its decisions on recent market evolution. This representation enables the transition model to capture short-run serial dependence, volatility clustering, and persistent order-flow effects commonly observed in limit-order-book data. The state variables exhibit substantially different scales and units. Prior to training, each continuous state variable is standardized using statistics computed from the training sample.

\subsection{Action Space}

At each decision epoch, the market-making agent determines both the placement of its quotes and the fraction of incoming market flow that it is willing to transact. The spread in ticks and relative spread are given by
\begin{align*}
{\delta_t^{\operatorname{tick}}} &= {\frac{S_t^{\operatorname{ask, best}} - S_t^{\operatorname{bid, best}}}{\operatorname{ticksize}}}, \\
\delta_t^{\operatorname{rel}} &= \frac{S_t^{\operatorname{ask, best}} - S_t^{\operatorname{bid, best}}}{S_t^{\operatorname{mid}}},
\qquad \text{ with }
S_t = S_t^{\operatorname{mid}} = \frac{S_t^{\operatorname{ask, best}} + S_t^{\operatorname{bid, best}}}{2}.
\end{align*}
where $S_t^{\operatorname{ask, best}}$ and $S_t^{\operatorname{bid, best}}$ are the best ask and bid price at time $t$ and {$\operatorname{ticksize}$ is the tick size.}

The spread variables determine the profitability and execution probability of posted quotes. Narrow spreads increase execution likelihood but reduce per-trade profit, whereas wider spreads increase profit per trade at the expense of lower execution probabilities. The participation variables control the market maker's exposure to incoming order flow and therefore directly influence inventory accumulation and inventory risk.

Unlike classical market-making models that only optimize quote placement, the proposed framework jointly determines quote placement and participation rates. This additional flexibility allows the agent to actively manage both execution opportunities and inventory exposure under uncertain market conditions.

In the implementation, actions are constrained to lie within compact intervals,

\begin{equation}
\delta_t^{\operatorname{bid}},
\delta_t^{\operatorname{ask}}
\in
[\delta_{\min},\delta_{\max}],
\qquad
q_t^{\operatorname{bid}},
q_t^{\operatorname{ask}}
\in
[0,1],
\end{equation}
which guarantees feasibility of the learned policy and stabilizes the reinforcement-learning procedure.

\subsection[Reference Transition Model]{Reference Transition Model} \label{transition_prob}

The robust sequential decision framework developed in Section~\ref{sec:setting} requires a probabilistic model for the one-step evolution of the market-making state. Such a model describes the conditional distribution of the next state given the current state and action, which provides the nominal transition dynamics used by the reinforcement learning agent and defines the reference transition law around which the Sinkhorn ambiguity sets are constructed. The one-step state transition consists of a deterministic update and a stochastic market innovation. The deterministic component captures the predictable evolution of the state, whereas the stochastic component models uncertainty in future market conditions. This decomposition allows robustness to be introduced solely through stochastic innovation. To accommodate the complex and nonlinear dynamics of high-frequency financial markets, we model stochastic innovation using a data-driven probabilistic forecasting model. Let $\Omega_{\mathrm{det}}\subset\Omega_{\mathrm{loc}}$ denote the deterministic component of the state and let $\Omega_{\mathrm{rnd}}\subset\Omega_{\mathrm{loc}}$ denote the stochastic innovation space. We have $\pi:\Omega_{\mathrm{loc}}\rightarrow\Omega_{\mathrm{det}}$ for the deterministic map. Given the current state $x$, action $a$, and innovation $z\in\Omega_{\mathrm{rnd}}$, the reconstruction map \[ \Phi_\pi(x,a,z) \] combines the deterministic update with the innovation to produce the next state. This conditional distribution defines the reference transition law that is used throughout the remainder of the paper. 

\subsubsection{Deterministic State Update} \label{subsec:deterministic-update} The deterministic component describes the predictable evolution of the state once the current state, action, and market innovation are given. In the current design, it consists of the rolling-window shift of historical variables together with the deterministic updates of the remaining trading horizon. Specifically, we have
\begin{align*}\label{eq:piwin}   
\mathbb{R}^m \ni (x_1,\dots,x_m) &\mapsto \pi_{\operatorname{win}}(x_1,\dots,x_m) := (x_2,\dots,x_m) \in \R^{m-1}, \\
 \pi(y_1,\dots,y_m) &=
\begin{cases} 
(\pi_{\operatorname{win}}((\tau_{t-m+1},\dots,\tau_t)),\max\{\tau_t-1,0\} ) & \text{if } (y_1,\dots,y_m) = (\tau_{t-m+1},\dots,\tau_t), \\
(\pi_{\operatorname{win}}(y_1,\dots,y_m), \emptyset ) & \text{otherwise.}
\end{cases} 
\end{align*}
for the deterministic map. 

\subsubsection{Stochastic Innovation Model} \label{subsec:stochastic-innovation} The remaining uncertainty is represented by the innovation vector $Z_{t+1}$, which contains the new market information that arrives between two consecutive decision time points. Specifically, 
\[ Z_{t+1} = \left( R_{t+1}, N^{\mathrm{bid}}_{t+1}, N^{\mathrm{ask}}_{t+1}, V^{\mathrm{bid}}_{t+1}, V^{\mathrm{ask}}_{t+1}, \delta^{\mathrm{tick}}_{t+1}, \delta^{\mathrm{rel}}_{t+1}, D^{\mathrm{micro}}_{t+1}, \nu_{t+1}, \mathrm{OFI}_{t+1}, Z^{\mathrm{bid}}_{t+1}, Z^{\mathrm{ask}}_{t+1} \right), \] where the components describe future returns, order flow, liquidity, volatility, and other market microstructure variables.

The conditional distribution of $Z_{t+1}$ depends on the current state and action. Because these relationships are highly nonlinear and evolve over time, we estimate the innovation distribution using the deep ensemble model \citep{lakshminarayanan2017simple, ChuaCML18}, a data-driven probabilistic forecasting model. We train $K$ neural networks $\{\theta_k\}_{k=1}^K$ independently, each outputting component-wise mean and standard deviation functions {$\mu^{(c)}_{\theta_k}:\Omega_{\operatorname{loc}}\times A \to \mathbb R$ and $\sigma^{(c)}_{\theta_k}:\Omega_{\operatorname{loc}}\times A \to \mathbb R_{>0}$}. At each transition step, we draw
\begin{equation*}\label{eq:ref_measure_ensemble}
\begin{aligned}
&k \sim \operatorname{Uniform}\{1,\dots,K\}, \\[4pt]
&Z^{(c)} \,\big|\, X_t, k \;\sim\;
\begin{cases}
{\operatorname{LogNormal}\!\bigl(\mu^{(c)}_{\theta_k}(X_t,a_t),\; \sigma^{(c)}_{\theta_k}(X_t,a_t)^{2}\bigr)}, & c \in \mathcal C_{\geq 0}, \\[4pt]
{\mathcal N\!\bigl(\mu^{(c)}_{\theta_k}(X_t,a_t),\; \sigma^{(c)}_{\theta_k}(X_t,a_t)^{2}\bigr)}, & c \in \mathcal C_{\mathbb R}.
\end{cases}
\end{aligned}
\end{equation*}
with components drawn independently given {$(X_t,a_t,k)$}. 

\subsubsection{Reference Transition Law} \label{subsec:reference-transition} Combining the deterministic state update with the innovation model yields the one-step state transition 
\[ X_{t+1} = \Phi_\pi(X_t,a_t,Z_{t+1}). \] 
Specifically, we have
\begin{equation*}
\begin{aligned}\label{eq_X_t_1}
X_{t+1} & = \bigg( (\pi_{\operatorname{win}}(I^t), I_t+Q_{t+1}^{\operatorname{bid}}-Q_{t+1}^{\operatorname{ask}}),~
(\pi_{\operatorname{win}}(R^t), Z_{t+1}^{(1)}),~
(\pi_{\operatorname{win}}(N^{t,\operatorname{bid}}),Z_{t+1}^{(2)}),~\\
&(\pi_{\operatorname{win}}(N^{t,\operatorname{ask}}),Z_{t+1}^{(3)}),~
(\pi_{\operatorname{win}}(V^{t,\operatorname{bid}}),Z_{t+1}^{(4)}),~
(\pi_{\operatorname{win}}(V^{t,\operatorname{ask}}),Z_{t+1}^{(5)}),~
(\pi_{\operatorname{win}}(\delta^{t,\operatorname{tick}}),Z_{t+1}^{(6)}),~\\
&(\pi_{\operatorname{win}}(\delta^{t,\operatorname{rel}}),Z_{t+1}^{(7)}),~
(\pi_{\operatorname{win}}(D^{t,\operatorname{micro}}),Z_{t+1}^{(8)}),~
(\pi_{\operatorname{win}}(\nu^{t}),Z_{t+1}^{(9)}),~
(\pi_{\operatorname{win}}(\operatorname{OFI}^{t}),Z_{t+1}^{(10)}),~\\
&(\pi_{\operatorname{win}}(Z^{t,\operatorname{bid}}),Z_{t+1}^{(11)}),~
(\pi_{\operatorname{win}}(Z^{t,\operatorname{ask}}),Z_{t+1}^{(12)}),~
(\pi_{\operatorname{win}}(\tau^t),\max\{\tau_t-1,0\})
\bigg), \\
Q_{t+1}^{\operatorname{bid}}
&:= Z_{t+1}^{(3)} Z_{t+1}^{(5)}\,\varphi^{\operatorname{bid}}(\delta^{\operatorname{bid}}, q^{\operatorname{bid}}), \quad Q_{t+1}^{\operatorname{ask}} := Z_{t+1}^{(2)} Z_{t+1}^{(4)}\,\varphi^{\operatorname{ask}}(\delta^{\operatorname{ask}}, q^{\operatorname{ask}}),
\end{aligned}
\end{equation*} where $Z_{t+1}^{(1)},\dots,Z_{t+1}^{(12)}$ corresponds to the individual components of the innovation vector $Z_{t+1}$, and \(\varphi^{\operatorname{bid}},\varphi^{\operatorname{ask}}\in[0,1]\) are continuous fill-ratio functions. Since the deterministic update is fixed, the conditional distribution of the next state is entirely determined by the conditional distribution of the innovation. Accordingly, the reference transition law is defined by \[ \widehat{\PP}(x,a) = \mathcal{L}\!\left( \Phi_\pi(x,a,Z) \right), \] where $Z$ follows the estimated conditional innovation distribution given the current state $x$ and action $a$, and $\mathcal{L}(\cdot)$ denotes the induced probability law. The resulting reference transition law provides the nominal dynamics for the reinforcement learning algorithm and serves as the center of the Sinkhorn ambiguity set introduced in Section~\ref{subsec:sinkhorn-ambiguity}.

\subsection[Training Algorithms]{Training Algorithms}\label{subsec:training-algorithms}
We train our agent according to the following numerical routine which is based on Proposition~\ref{prop_VI}. Algorithm~\ref{greedy-rl} describes the standard value iteration reinforcement learning approach, whereas Algorithm~\ref{robust-rl} incorporates distributional robustness via Sinkhorn divergence.

\subsubsection[Non-Robust Fitted Actor--Critic]{Non-Robust Fitted Actor--Critic}\label{subsubsec:nonrobust-actor-critic}~

\begin{algorithm}[H]
\scriptsize
\caption{Fitted Actor--Critic (Value Iteration)}
\KwIn{{\color{blue}Batch size $B$; number of outer iterations $N_{\mathrm{iter}}$; number of critic steps $I_V$; number of actor steps $I_A$; discount factor $\alpha\in(0,1)$; empirical innovation model $\widehat{P}$ and deterministic lift $\Phi_\pi$; reward function $r(x,a,x')$; number of innovation samples $N$ from $\widehat{P}$; learning rates for value net and policy net.}}
\KwOut{Critic network $V$; actor network $A$.}

Initialize critic network $V^{0}$ and actor network $A^{0}$\;

\For{$n=1,\ldots,N_{\mathrm{iter}}$}{
    \textbf{Target update:} set $V^{\mathrm{old}} \leftarrow V^{n-1}$ and freeze the weights of $V^{\mathrm{old}}$\;

    \textbf{Warm start:} initialize $V^{n} \leftarrow V^{n-1}$ and $A^{n} \leftarrow A^{n-1}$\;

    \textbf{Data collection:} collect a replay buffer of state--action pairs
    $(x_t,a_t)$ using the current policy $A^{n-1}$\;

    \tcc*[h]{\scriptsize Critic step: regress $V$ to the Bellman target.} \\
    \For{$s=1,\ldots,I_V$}{
        Sample a minibatch of state--action pairs $(x_i,a_i)_{i=1}^B$ from the replay buffer\;

        {\color{blue}For each $i$, draw $N$ innovation samples $\widetilde X'_{i,j} \sim \widehat{P}(\cdot\mid x_i,a_i)$ and set $X'_{i,j}:=\Phi_\pi(x_i,a_i,\widetilde X'_{i,j})$, $j=1,\ldots,N$\;}

        Compute the Monte Carlo Bellman target
        \[
        V_{\text{target}}(x_i, a_i)
        :=
        \frac{1}{N}\sum_{j=1}^{N}
        \Bigl[r\bigl(x_i,a_i,X'_{i,j}\bigr)
            +\alpha\, V^{\mathrm{old}}\!\bigl(X'_{i,j}\bigr)\Bigr]
        \]

        Minimize
        \[
        \sum_{i=1}^{B}
        \left(
        V^{n}(x_i)
        -
        V_{\text{target}}(x_i, a_i)
        \right)^2
        \]
        with respect to parameters of $V^{n}$\;
    }

    \tcc*[h]{\scriptsize Actor step: improve the policy by maximizing the Bellman target.} \\
    \For{$s=1,\ldots,I_A$}{
        {\color{blue}Sample a minibatch of states $(x_i)_{i=1}^B$ from the replay buffer; let $a_i := A^{n}(x_i)$, draw $\widetilde X'_{i,j} \sim \widehat{P}(\cdot\mid x_i,a_i)$, and set $X'_{i,j}:=\Phi_\pi(x_i,a_i,\widetilde X'_{i,j})$, $j=1,\ldots,N$\;}

        Maximize
        \[
        \sum_{i=1}^{B} V_{\text{target}}(x_i, a_i)
        \]
        with respect to parameters of $A^{n}$ (with $V^{n}$ fixed)\;
    }
}
\label{greedy-rl}
\end{algorithm}

\subsubsection[Sinkhorn-Robust Fitted Actor--Critic]{Sinkhorn-Robust Fitted Actor--Critic}\label{subsubsec:robust-actor-critic}~

\begin{algorithm}[H]
\scriptsize
\caption{Sinkhorn-Robust Fitted Actor--Critic (Value Iteration)}
\KwIn{{\color{blue}Batch size $B$; number of outer iterations $N_{\mathrm{iter}}$; number of critic steps $I_V$; number of actor steps $I_A$; number of dual steps $I_\lambda$; discount factor $\alpha\in(0,1)$; Sinkhorn parameters $(\bar{\varepsilon},\delta)$; innovation kernel $Q_{\tilde x,\delta}$; empirical innovation model $\widehat{P}$ and deterministic lift $\Phi_\pi$; cost $c(\cdot,\cdot)$ on $\Omega_{\operatorname{rnd}}$; reward function $r(x,a,x')$; number of samples $N$ from $\widehat{P}$ and $M$ from $Q_{\tilde x,\delta}$; learning rates for value net, policy net, and lambda net.}}
\KwOut{Critic network $V$; actor network $A$; lambda network $\Lambda$.}

Initialize critic network $V^{0}$, actor network $A^{0}$, and lambda network $\Lambda^{0}$ with parameters $\theta_\Lambda^{0}$ (constrained so that $\Lambda^{0}(x)\geq 0$ for all $x$, e.g., via a softplus output)\;

\For{$n=1,\ldots,N_{\mathrm{iter}}$}{
    \textbf{Target update:} set $V^{\mathrm{old}} \leftarrow V^{n-1}$ and freeze the weights of $V^{\mathrm{old}}$\;

    \textbf{Warm start:} initialize $V^{n} \leftarrow V^{n-1}$, $A^{n} \leftarrow A^{n-1}$, and $\theta_\Lambda^{n} \leftarrow \theta_\Lambda^{n-1}$\;

    \textbf{Data collection:} collect a replay buffer of state--action pairs
    $(x_t,a_t)$ using the current policy $A^{n-1}$\;

    \tcc*[h]{\scriptsize Dual step: update parameters $\theta_\Lambda$ of $\Lambda^{n}$ by maximizing the Sinkhorn dual on minibatches.} \\
    \For{$s=1,\ldots,I_\lambda$}{
        Sample a minibatch of state--action pairs $(x_i,a_i)_{i=1}^B$ from the replay buffer\;

        \For{$i=1,\ldots,B$}{
            Set $\lambda_i := \Lambda^{n}(x_i;\theta_\Lambda^{n})$\;

            {\color{blue}Draw $N$ reference innovation samples $\widetilde X'_{i,j} \sim \widehat{P}(\cdot\mid x_i,a_i)$, $j=1,\ldots,N$\;}

            {\color{blue}For each $j$, draw $M$ perturbed innovation samples $\widetilde Z_{i,j,l} \sim Q_{\widetilde X'_{i,j},\delta}$ and set $Z^{\pi}_{i,j,l}:=\Phi_\pi(x_i,a_i,\widetilde Z_{i,j,l})$, $l=1,\ldots,M$\;}

            Evaluate the Sinkhorn dual target
            \[
            H_{\delta}^{\mathrm{sink}}(x_i,a_i;\lambda_i)
            :=
            -\lambda_i\,\bar{\varepsilon}
            -\lambda_i\,\delta\;
            \frac{1}{N}\sum_{j=1}^{N}
            \log\!\left(
            \frac{1}{M}\sum_{l=1}^{M}
            \exp\!\left(
            \frac{
            -r(x_i,a_i,{\color{blue}Z^{\pi}_{i,j,l}})
            -\alpha V^{\mathrm{old}}({\color{blue}Z^{\pi}_{i,j,l}})
            }{\lambda_i\,\delta}
            \right)
            \right)\!.
            \]
        }

        Update $\theta_\Lambda^{n}$ by \textbf{gradient ascent} on
        $\sum_{i=1}^{B} H_{\delta}^{\mathrm{sink}}(x_i,a_i;\lambda_i)$
        (back-propagating through $\lambda_i = \Lambda^{n}(x_i;\theta_\Lambda^{n})$ only; treat $V^{\mathrm{old}}$ and the actor as fixed)\;
    }

    {\color{blue}\emph{In the critic and actor steps below, every evaluation of $H_\delta^{\mathrm{sink}}$ uses the Monte Carlo estimate from the dual step with fresh samples $\widetilde X'_{i,j} \sim \widehat{P}(\cdot\mid x_i,a_i)$, perturbed innovations $\widetilde Z_{i,j,l} \sim Q_{\widetilde X'_{i,j},\delta}$, and reconstructed states $Z^{\pi}_{i,j,l}=\Phi_\pi(x_i,a_i,\widetilde Z_{i,j,l})$.}}

    \tcc*[h]{\scriptsize Critic step: regress $V$ to the Sinkhorn robust Bellman target.} \\
    \For{$s=1,\ldots,I_V$}{
        Sample a minibatch of state--action pairs $(x_i,a_i)_{i=1}^B$; set $\lambda_i := \Lambda^{n}(x_i;\theta_\Lambda^{n})$\;

        Minimize
        \[
        \sum_{i=1}^{B}
        \left(
        V^{n}(x_i) - H_{\delta}^{\mathrm{sink}}(x_i,a_i;\lambda_i)
        \right)^2
        \]
        with respect to parameters of $V^{n}$ (with $\theta_\Lambda^{n}$ fixed)\;
    }

    \tcc*[h]{\scriptsize Actor step: improve the policy by maximizing the robust objective.} \\
    \For{$s=1,\ldots,I_A$}{
        Sample a minibatch of states $(x_i)_{i=1}^B$; let $a_i := A^{n}(x_i)$ and $\lambda_i := \Lambda^{n}(x_i;\theta_\Lambda^{n})$\;

        Maximize
        $\sum_{i=1}^{B} H_{\delta}^{\mathrm{sink}}(x_i,a_i;\lambda_i)$
        with respect to parameters of $A^{n}$ (with $\theta_\Lambda^{n}, V^{n}$ fixed)\;
    }
}
\label{robust-rl}
\end{algorithm}

\section{Numerical and Empirical Analysis} \label{sec:numericals}
This section evaluates the proposed robust reinforcement learning framework through simulation and empirical studies. The numerical analysis is organized around three questions. First, how does robustness reshape sequential decision behavior? Second, how do uncertainty tolerance and action robustness differ in their effects on policy adaptation? Third, under what market conditions does robustness improve performance, and when can it become overly conservative?

The simulation study considers six market environments: a stable baseline, price stress, liquidity dry-out, buy-arrival imbalance, sell-arrival imbalance, and fill stress. These scenarios isolate different forms of evolving uncertainty, including volatility shocks, deteriorating liquidity conditions, directional order-flow pressure, and execution uncertainty. To examine how robustness influences sequential decision making, we compare the greedy benchmark with robust policies corresponding to different combinations of the robustness parameters $(\bar{\varepsilon},\delta)$. As discussed in Section~\ref{sec:method}, $\bar{\varepsilon}$ determines the amount of uncertainty admitted into the decision problem, whereas $\delta$ governs how conservatively the policy responds to that uncertainty. Their interaction generates economically distinct market-making behaviors, ranging from aggressive liquidity provision to more defensive inventory management.

The empirical analysis further evaluates the framework on real high-frequency market data across multiple years and market regimes. Together, the simulation and empirical results show that robustness does not merely improve performance statistics; it fundamentally changes how the market maker adapts spreads, executions, and inventory exposure under evolving uncertainty.

\subsection{Simulation}\label{sec:simulation}

The simulation environment is calibrated using real high-frequency market data. The baseline market dynamics, including order arrivals, order sizes, volatility, liquidity conditions, and execution behavior, are estimated from historical observations and serve as the reference environment for policy learning. Building upon this baseline, we construct a collection of controlled stress scenarios that introduce deviations from the reference dynamics while preserving realistic market microstructure characteristics. These perturbations are designed to mimic economically relevant forms of evolving uncertainty commonly observed in practice, including volatility shocks, liquidity deterioration, directional order-flow imbalances, and execution uncertainty. Details of the simulation setup, scenario construction, and parameter choices are provided in Appendix~\ref{appendix:simulation_details}. Additional results such as full robustness hyperparameter tables, Pareto frontier performance plots, and full agent behavior plots can be found in Appendix~\ref{appendix:add_sim_results}.

\subsubsection{Robustness under Evolving Uncertainty}
\label{subsec:robustness_results}

Table~\ref{tab:rlmm_sim_results} reports the simulation performance across the six market environments. The robust policies are selected either by validation P\&L or by validation Sharpe ratio. Economically, these two selection rules correspond to different forms of robustness. The validation-P\&L policies remain relatively close to the greedy benchmark and primarily preserve profitability and execution opportunities, whereas the validation-Sharpe policies induce substantially stronger risk stabilization.

\begin{table}[htb]
\centering
\caption{Greedy versus robust performance across simulation scenarios. Percentage changes relative to the greedy benchmark are reported for the robust policy selected by validation Sharpe.}
\label{tab:rlmm_sim_results}

\scriptsize
\setlength{\tabcolsep}{3pt}
\renewcommand{\arraystretch}{1.08}

\begin{tabular}{@{}lrrr@{\hspace{1.5em}}lrrr@{}}
\toprule \toprule

\multicolumn{4}{c}{\textbf{Stable}} 
& \multicolumn{4}{c}{\textbf{Price Stress}} \\
\cmidrule(lr){1-4} \cmidrule(lr){5-8}
 & Greedy & \shortstack{Robust\\(Val PnL)} & \shortstack{Robust\\(Val Sharpe)}
 &  & Greedy & \shortstack{Robust\\(Val PnL)} & \shortstack{Robust\\(Val Sharpe)} \\
\midrule

Mean PnL 
& 1885.31 & \textbf{1886.04} & 1845.06 {\scriptsize $\downarrow 2.1\%$}
& Mean PnL 
& 1898.50 & \textbf{1900.88} & 1846.74 {\scriptsize $\downarrow 2.7\%$} \\

Std PnL 
& 49.37 & 49.17 & \textbf{39.23} {\scriptsize $\downarrow 20.5\%$}
& Std PnL 
& 155.96 & 165.11 & \textbf{124.65} {\scriptsize $\downarrow 20.1\%$} \\

\shortstack[l]{Mean Max\\Drawdown} 
& -6.26 & -5.60 & \textbf{-4.61} {\scriptsize $\downarrow 26.3\%$}
& \shortstack[l]{Mean Max\\Drawdown} 
& -5.20 & -4.09 & \textbf{-4.38} {\scriptsize $\downarrow 15.8\%$} \\

\shortstack[l]{Mean Final\\Inv.} 
& 7.91 & 3.04 & \textbf{-5.33} {\scriptsize $\downarrow 32.6\%$}
& \shortstack[l]{Mean Final\\Inv.} 
& 5.53 & 0.01 & \textbf{-1.99} {\scriptsize $\downarrow 64.0\%$} \\

Sharpe 
& 38.19 & 38.36 & \textbf{47.03} {\scriptsize $\uparrow 23.1\%$}
& Sharpe 
& 12.17 & 11.51 & \textbf{14.82} {\scriptsize $\uparrow 21.8\%$} \\

\addlinespace[0.8em]

\multicolumn{4}{c}{\textbf{Liquidity Dry-Out}} 
& \multicolumn{4}{c}{\textbf{Buy Arrival Imbalance}} \\
\cmidrule(lr){1-4} \cmidrule(lr){5-8}
 & Greedy & \shortstack{Robust\\(Val PnL)} & \shortstack{Robust\\(Val Sharpe)}
 &  & Greedy & \shortstack{Robust\\(Val PnL)} & \shortstack{Robust\\(Val Sharpe)} \\
\midrule

Mean PnL 
& 989.43 & \textbf{994.10} & 983.61 {\scriptsize $\downarrow 0.6\%$}
& Mean PnL 
& 2644.38 & 2574.74 & \textbf{2649.14} {\scriptsize $\uparrow 0.2\%$} \\

Std PnL 
& 31.44 & 33.66 & \textbf{23.64} {\scriptsize $\downarrow 24.8\%$}
& Std PnL 
& 156.26 & 190.57 & \textbf{93.38} {\scriptsize $\downarrow 40.2\%$} \\

\shortstack[l]{Mean Max\\Drawdown} 
& -22.25 & -7.80 & \textbf{-3.96} {\scriptsize $\downarrow 82.2\%$}
& \shortstack[l]{Mean Max\\Drawdown} 
& -1.00 & \textbf{-0.97} & -1.22 {\scriptsize $\downarrow 22.0\%$} \\

\shortstack[l]{Mean Final\\Inv.} 
& 6.17 & 2.17 & \textbf{-0.92} {\scriptsize $\downarrow 85.1\%$}
& \shortstack[l]{Mean Final\\Inv.} 
& -88.15 & -94.42 & \textbf{-34.05} {\scriptsize $\downarrow 61.4\%$} \\

Sharpe 
& 31.47 & 29.54 & \textbf{41.62} {\scriptsize $\uparrow 32.3\%$}
& Sharpe 
& 16.92 & 13.51 & \textbf{28.37} {\scriptsize $\uparrow 67.7\%$} \\

\addlinespace[0.8em]

\multicolumn{4}{c}{\textbf{Sell Arrival Imbalance}} 
& \multicolumn{4}{c}{\textbf{Fill Stress}} \\
\cmidrule(lr){1-4} \cmidrule(lr){5-8}
 & Greedy & \shortstack{Robust\\(Val PnL)} & \shortstack{Robust\\(Val Sharpe)}
 &  & Greedy & \shortstack{Robust\\(Val PnL)} & \shortstack{Robust\\(Val Sharpe)} \\
\midrule

Mean PnL 
& 2586.82 & \textbf{2608.21} & 2574.89 {\scriptsize $\downarrow 0.5\%$}
& Mean PnL 
& 428.72 & \textbf{433.26} & 384.29 {\scriptsize $\downarrow 10.4\%$} \\

Std PnL 
& 200.90 & 188.23 & \textbf{109.89} {\scriptsize $\downarrow 45.3\%$}
& Std PnL 
& 15.64 & 15.21 & \textbf{11.07} {\scriptsize $\downarrow 29.2\%$} \\

\shortstack[l]{Mean Max\\Drawdown} 
& -107.81 & -41.29 & \textbf{-33.98} {\scriptsize $\downarrow 68.4\%$}
& \shortstack[l]{Mean Max\\Drawdown} 
& -13.90 & -9.30 & \textbf{-7.72} {\scriptsize $\downarrow 44.4\%$} \\

\shortstack[l]{Mean Final\\Inv.} 
& 125.04 & 106.72 & \textbf{53.38} {\scriptsize $\downarrow 57.3\%$}
& \shortstack[l]{Mean Final\\Inv.} 
& 4.76 & -0.33 & \textbf{-2.42} {\scriptsize $\downarrow 49.2\%$} \\

Sharpe 
& 12.88 & 13.86 & \textbf{23.43} {\scriptsize $\uparrow 81.9\%$}
& Sharpe 
& 27.41 & 28.49 & \textbf{34.72} {\scriptsize $\uparrow 26.7\%$} \\

\bottomrule \bottomrule
\end{tabular}

\vspace{0.4em}

\begin{minipage}{0.96\linewidth}
\footnotesize
Notes: Robust (Val PnL) reports the best-performing robust configuration selected by validation P\&L; Robust (Val Sharpe) reports the best-performing robust configuration selected by validation Sharpe. Percentage changes are computed relative to the greedy benchmark. For volatility and inventory, downward arrows indicate reductions and therefore correspond to improved stabilization performance. For all six scenarios, the selected $(\bar{\epsilon}, \delta)$ is $(0.0001,0.1)$ under validation P\&L and $(4,1)$ under validation Sharpe.
\end{minipage}

\end{table}

The first important result is that robustness primarily improves the stability of sequential outcomes rather than uniformly increasing raw profitability. In the stable baseline, the validation-Sharpe robust policy increases the Sharpe ratio from 38.19 to 47.03, corresponding to a 23.1\% improvement, while reducing volatility from 49.37 to 39.23 and improving drawdowns from -6.26 to -4.61. Although average P\&L decreases slightly, the reduction in sequential risk dominates economically. This suggests that robustness changes the structure of the sequential policy rather than simply generating more aggressive spread capture.

The stabilizing role of robustness becomes substantially more pronounced under stressed environments. Under price stress, the robust policy improves Sharpe by 21.8\%, reduces volatility by 20.1\%, and improves drawdowns by 15.8\%. Under liquidity dry-out, Sharpe increases from 31.47 to 41.62, corresponding to a 32.3\% improvement, while inventory exposure declines by 85.1\%. The strongest effects appear under directional order-flow imbalance. Under buy-arrival imbalance, robustness improves Sharpe from 16.92 to 28.37, corresponding to a 67.7\% increase, while volatility declines by 40.2\%. Under sell-arrival imbalance, Sharpe improves from 12.88 to 23.43, representing an 81.9\% increase, while volatility falls by 45.3\%.

Inventory stabilization emerges as a particularly important mechanism. Under buy-arrival imbalance, the greedy benchmark accumulates a large negative inventory of -88.15, whereas the robust policy reduces this exposure to -34.05. Under sell-arrival imbalance, robustness sharply reduces positive inventory accumulation from 125.04 to 53.38. These results show that robustness becomes especially valuable when persistent order-flow pressure creates directional inventory risk.

The fill-stress scenario provides a complementary interpretation. When execution quality deteriorates, the validation-P\&L robust policy preserves profitability while the validation-Sharpe robust policy substantially stabilizes sequential outcomes through lower volatility and drawdowns. This confirms that the two robust selection rules correspond to different operating points on the profitability--stability frontier.

Overall, Table~\ref{tab:rlmm_sim_results} supports four conclusions. First, robustness becomes increasingly valuable as the testing environment departs from the stable training regime. Second, robustness primarily improves risk-adjusted performance through volatility and inventory stabilization rather than through uniformly higher profits. Third, robustness reacts differently across uncertainty sources and is therefore state dependent. Fourth, different robustness levels generate economically distinct market-making behaviors ranging from aggressive liquidity provision to defensive inventory stabilization.

\subsubsection{Sequential Policy Adaptation Under Uncertainty}
\label{sec:simulation_behavior}

We next examine how robustness changes the learned policy itself. The goal is not only to compare performance, but to identify how robust reinforcement learning reshapes the market maker's state-dependent quoting behavior. We therefore use three complementary pieces of evidence.  Table~\ref{tab:rlmm_spread_summary} summarizes the resulting spread behavior numerically. Figure~\ref{fig:price_intraday} illustrates how the robustness parameters affect the full action vector in one representative scenario. Figures~\ref{fig:cross_sharpe} and~\ref{fig:cross_state} then examine whether these behavioral changes persist across scenarios and across market states. 

\newcommand{\up}[1]{\hfill{\scriptsize($\uparrow$~#1)}}
\newcommand{\down}[1]{\hfill{\scriptsize($\downarrow$~#1)}}

\begin{table}[htb]
\centering
\caption{Quoting behavior across simulation scenarios. Percentage changes are reported relative to the greedy benchmark for the robust policy selected by validation Sharpe.}
\label{tab:rlmm_spread_summary}

\scriptsize
\setlength{\tabcolsep}{3pt}
\renewcommand{\arraystretch}{1.08}

\begin{tabular}{@{}lrrr@{\hspace{1.5em}}lrrr@{}}
\toprule\toprule

\multicolumn{4}{c}{\textbf{Stable}}
&
\multicolumn{4}{c}{\textbf{Price Stress}}
\\
\cmidrule(lr){1-4}
\cmidrule(lr){5-8}

&
Greedy
&
\shortstack{Robust\\(Val PnL)}
&
\shortstack{Robust\\(Val Sharpe)}
&
&
Greedy
&
\shortstack{Robust\\(Val PnL)}
&
\shortstack{Robust\\(Val Sharpe)}
\\

\midrule

Mean Bid Spread
& 0.083 & 0.081 & 0.090 \up{8\%}
&
Mean Bid Spread
& 0.083 & 0.082 & 0.090 \up{8\%}
\\

Std. Bid Spread
& 0.009 & 0.009 & 0.012 \up{33\%}
&
Std. Bid Spread
& 0.009 & 0.009 & 0.012 \up{33\%}
\\

Mean Ask Spread
& 0.076 & 0.075 & 0.084 \up{11\%}
&
Mean Ask Spread
& 0.077 & 0.075 & 0.083 \up{8\%}
\\

Std. Ask Spread
& 0.009 & 0.010 & 0.012 \up{33\%}
&
Std. Ask Spread
& 0.009 & 0.010 & 0.012 \up{33\%}
\\

\addlinespace[0.8em]

\multicolumn{4}{c}{\textbf{Liquidity Dry-Out}}
&
\multicolumn{4}{c}{\textbf{Buy Arrival Imbalance}}
\\
\cmidrule(lr){1-4}
\cmidrule(lr){5-8}

&
Greedy
&
\shortstack{Robust\\(Val PnL)}
&
\shortstack{Robust\\(Val Sharpe)}
&
&
Greedy
&
\shortstack{Robust\\(Val PnL)}
&
\shortstack{Robust\\(Val Sharpe)}
\\

\midrule

Mean Bid Spread
& 0.086 & 0.084 & 0.087 \up{1\%}
&
Mean Bid Spread
& 0.070 & 0.078 & 0.065 \down{7\%}
\\

Std. Bid Spread
& 0.012 & 0.010 & 0.011 \down{8\%}
&
Std. Bid Spread
& 0.016 & 0.021 & 0.012 \down{25\%}
\\

Mean Ask Spread
& 0.082 & 0.079 & 0.082 (--\%)
&
Mean Ask Spread
& 0.125 & 0.131 & 0.120 \down{4\%}
\\

Std. Ask Spread
& 0.012 & 0.012 & 0.011 \down{8\%}
&
Std. Ask Spread
& 0.023 & 0.025 & 0.011 \down{52\%}
\\

\addlinespace[0.8em]

\multicolumn{4}{c}{\textbf{Sell Arrival Imbalance}}
&
\multicolumn{4}{c}{\textbf{Fill Stress}}
\\
\cmidrule(lr){1-4}
\cmidrule(lr){5-8}

&
Greedy
&
\shortstack{Robust\\(Val PnL)}
&
\shortstack{Robust\\(Val Sharpe)}
&
&
Greedy
&
\shortstack{Robust\\(Val PnL)}
&
\shortstack{Robust\\(Val Sharpe)}
\\

\midrule

Mean Bid Spread
& 0.128 & 0.126 & 0.126 \down{2\%}
&
Mean Bid Spread
& 0.080 & 0.079 & 0.088 \up{10\%}
\\

Std. Bid Spread
& 0.018 & 0.015 & 0.012 \down{33\%}
&
Std. Bid Spread
& 0.005 & 0.005 & 0.006 \up{20\%}
\\

Mean Ask Spread
& 0.068 & 0.064 & 0.064 \down{6\%}
&
Mean Ask Spread
& 0.077 & 0.076 & 0.085 \up{10\%}
\\

Std. Ask Spread
& 0.016 & 0.010 & 0.011 \down{31\%}
&
Std. Ask Spread
& 0.006 & 0.007 & 0.007 \up{17\%}
\\

\bottomrule\bottomrule
\end{tabular}

\vspace{0.4em}

\begin{minipage}{0.96\linewidth}
\footnotesize
Notes: Robust (Val PnL) reports the best-performing robust configuration selected by validation P\&L. Robust (Val Sharpe) reports the best-performing robust configuration selected by validation Sharpe. Percentage changes are computed relative to the greedy benchmark. Upward arrows indicate wider quotes, while downward arrows indicate tighter quotes. For all six scenarios, the selected $(\bar{\epsilon},\delta)$ is $(0.0001,0.1)$ under validation P\&L and $(4,1)$ under validation Sharpe.
\end{minipage}

\end{table}

Table~\ref{tab:rlmm_spread_summary} provides direct evidence that robustness changes the quoting behavior of the market-making agent rather than merely affecting terminal performance. The results reveal three economically important effects of robustness.

First, \textbf{robustness generally leads to wider quotes when uncertainty increases.} In the stable, price-stress, and fill-stress scenarios, the robust policy selected by validation Sharpe consistently posts wider bid and ask spreads than the greedy benchmark. Under the stable scenario, for example, the mean bid spread widens by $8.4\%$ (from $0.083$ to $0.090$), while the mean ask spread widens by $10.5\%$ (from $0.076$ to $0.084$). Similar widening is observed under price stress, and the strongest effect occurs under fill stress, where the mean bid and ask spreads widen by $10.0\%$ and $10.4\%$, respectively. Economically, these wider quotes represent higher compensation for bearing adverse-selection and execution risks under increased market uncertainty, leading the market maker to provide liquidity more conservatively.

Second, \textbf{robustness is state dependent rather than uniformly conservative.} Under both buy- and sell-arrival imbalance, the robust policy tightens rather than widens quotes. Under buy-arrival imbalance, the mean bid and ask spreads narrow by $7.1\%$ (from $0.070$ to $0.065$) and $4.0\%$ (from $0.125$ to $0.120$), respectively. Under sell-arrival imbalance, the corresponding reductions are $1.6\%$ (from $0.128$ to $0.126$) and $5.9\%$ (from $0.068$ to $0.064$). Rather than simply withdrawing liquidity, the robust policy responds more aggressively when persistent order-flow imbalance provides informative market signals. This demonstrates that robustness enhances sequential adaptation to structural market conditions instead of merely producing uniformly defensive behavior.

Third, \textbf{Robustness also changes the stability of quoting behavior.} In the stable and price-stress regimes, the standard deviations of bid and ask spreads increase from $0.009$ to $0.012$, indicating that the robust policy explores a broader range of quote placements when uncertainty is primarily driven by prices. By contrast, robustness substantially reduces quote variability under order-flow imbalance. Under buy-arrival imbalance, the standard deviations of bid and ask spreads decrease by $25.0\%$ and $52.2\%$, respectively, while the corresponding reductions under sell-arrival imbalance are $33.3\%$ and $31.3\%$. These reductions indicate smoother and more stable quoting behavior when the market exhibits persistent directional order-flow signals.

Taken together, these results reveal the mechanism through which robustness improves performance. Rather than directly increasing terminal P\&L or Sharpe ratios, robustness first changes quote placement, which subsequently affects execution intensity, inventory dynamics, and ultimately trading performance. Robustness therefore acts as a state-dependent decision principle rather than a uniformly conservative adjustment, adapting liquidity provision to the prevailing source of market uncertainty. This mechanism supports the central message of the paper: robustness reshapes sequential decision behavior under evolving uncertainty.

We use the price-stress scenario to illustrate how the robustness parameters reshape the complete quoting action, providing a clean setting for studying uncertainty-aware behavior under elevated volatility. Figure~\ref{fig:price_intraday} compares the greedy benchmark with four robust policies,
\[
(\bar{\varepsilon},\delta)
\in
\{(0.0001,0.1),(0.0001,1),(4,0.1),(4,1)\},
\]
representing combinations of low and high uncertainty tolerance ($\bar{\varepsilon}$) and low and high action robustness ($\delta$). The greedy benchmark is shown as the grey dotted line. Robust policies with low action robustness ($\delta=0.1$) are shown as dashed lines, whereas those with high action robustness ($\delta=1$) are shown as solid lines. The four panels report the bid spread, ask spread, bid quantity, and ask quantity, thereby illustrating the complete quoting action. Action robustness has a substantially larger effect on quoting behavior than uncertainty tolerance. Holding $\bar{\varepsilon}$ fixed, increasing $\delta$ produces substantially larger adjustments in both quoted spreads and quantities than comparable changes in $\bar{\varepsilon}$. In particular, larger values of $\delta$ lead to wider spreads and more selective quantity provision, especially near the end of the trading horizon. These results indicate that while uncertainty tolerance determines how much uncertainty is admitted into the decision problem, action robustness primarily governs how conservatively the market maker responds to that uncertainty.

The Sharpe-selected policy $(\bar{\varepsilon},\delta)=(4,1)$ also illustrates that robustness is dynamic rather than uniformly conservative. The policy maintains relatively competitive spreads during much of the trading day, preserving execution opportunities, but becomes more defensive near the close, where terminal inventory risk is more important. Additional scenario-specific parameter-sensitivity plots are reported in Appendix~\ref{appendix:simulation_details}; they show qualitatively similar patterns, although the magnitude and direction of adjustment depend on the source of uncertainty.

\begin{figure}[htb]
    \centering
    \includegraphics[width=0.72\linewidth]
    {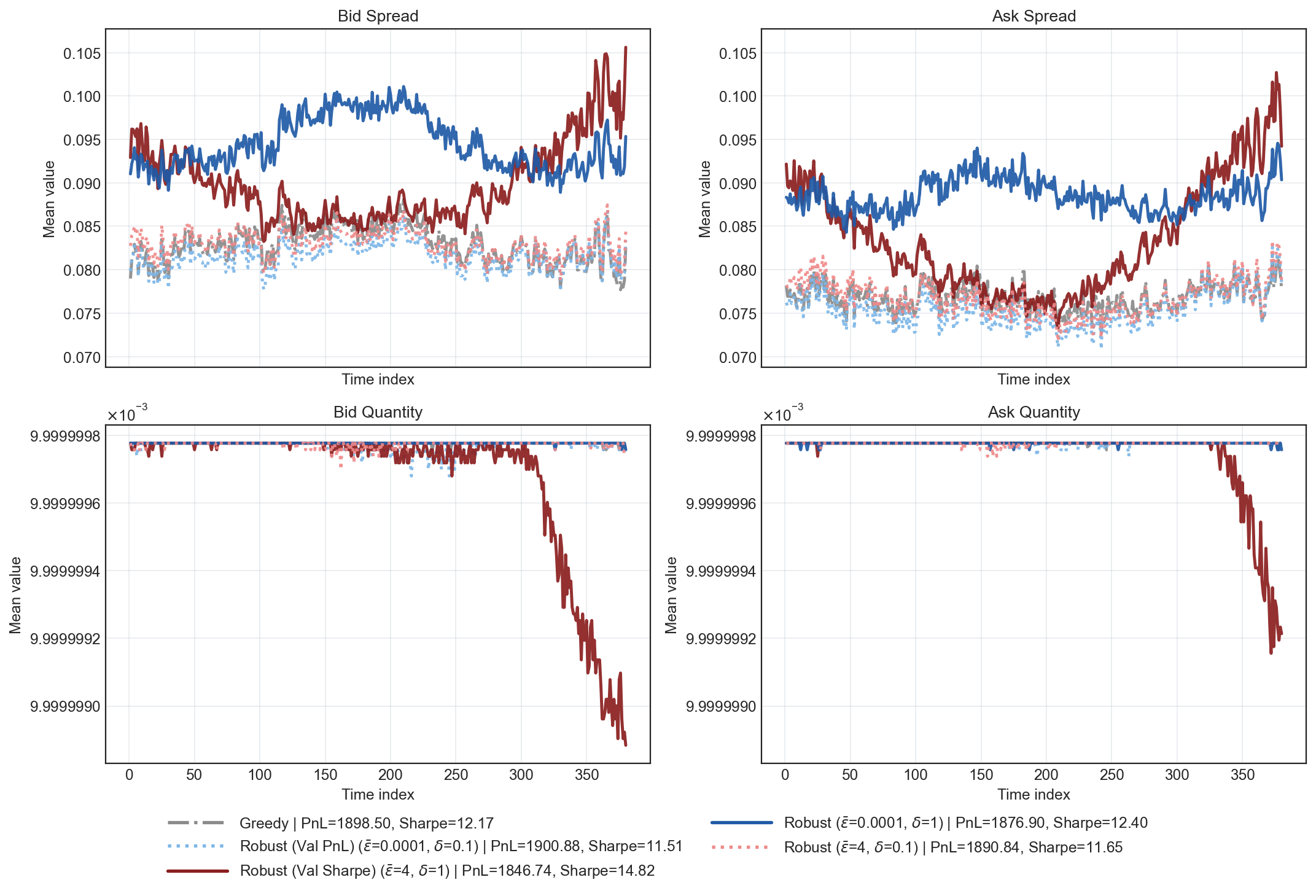}
    \caption{Intraday average quoting behavior under the Price Stress scenario for different robustness parameter combinations. The four panels report bid spread, ask spread, bid quantity, and ask quantity.}
    \label{fig:price_intraday}
\end{figure}

Figure~\ref{fig:cross_sharpe} broadens the analysis by comparing intraday behavior across all six simulation environments using the robust policy selected by validation Sharpe ratio. This figure provides information not contained in the summary statistics: it shows when during the trading day robust policies adjust their actions. The patterns are not scenario-agnostic. Under stable and price-stress conditions, robust policies preserve broadly similar intraday shapes but differ in the level of spread compensation. Under buy- and sell-arrival imbalances, the bid and ask sides respond asymmetrically, reflecting directional order-flow pressure. Under liquidity dry-out and fill stress, robust policies adjust compensation and exposure to execution uncertainty. Thus, robustness changes not only average quoting levels but also the timing and path of decisions.

\begin{figure}[htb]
    \centering
    \includegraphics[width=0.72\linewidth]
    {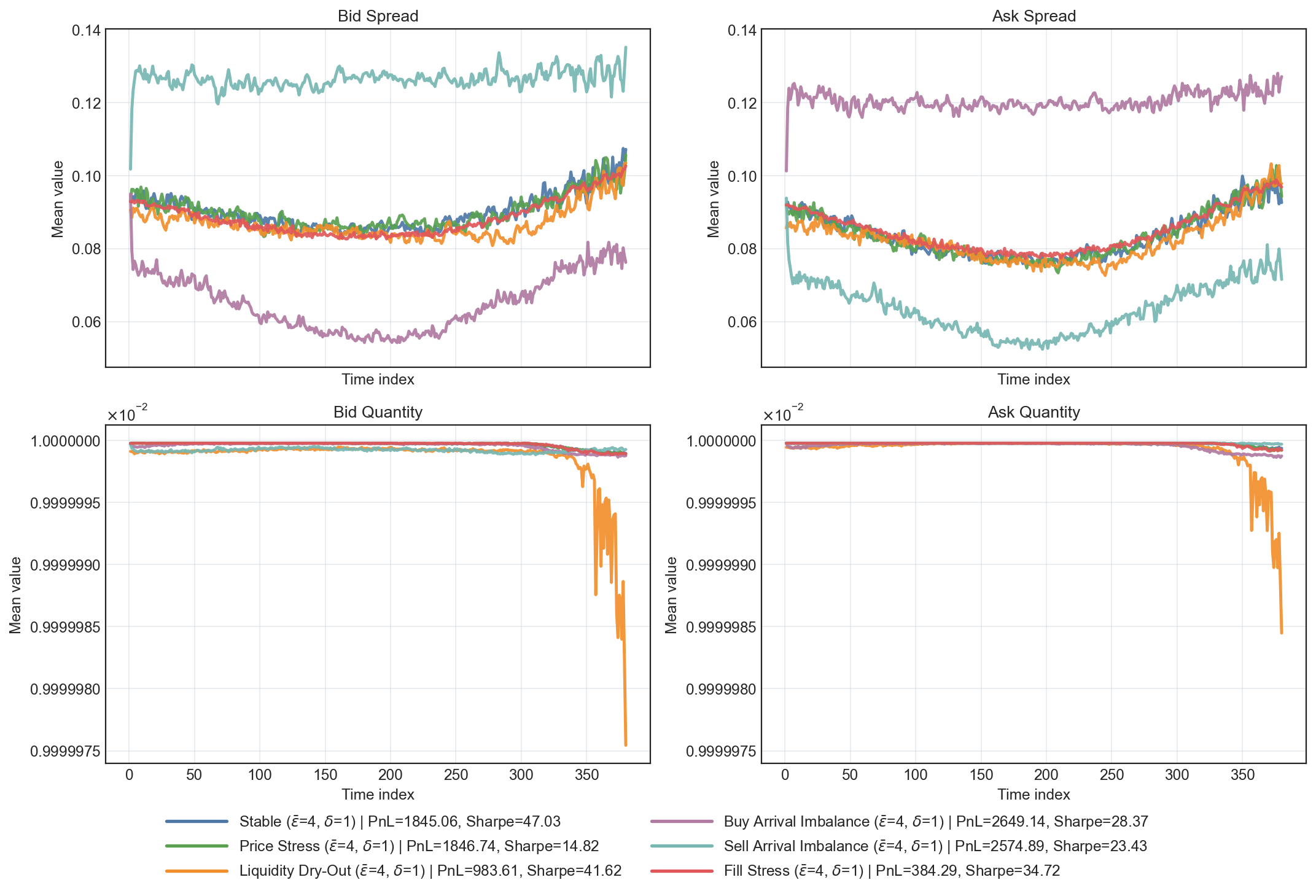}
    \caption{Cross-scenario intraday quoting behavior for the best robust policy selected using validation Sharpe ratio.}
    \label{fig:cross_sharpe}
\end{figure}

Figure~\ref{fig:cross_state} examines the state-conditional structure of the robust policy. Unlike Figure~\ref{fig:cross_sharpe}, which reports realized intraday actions, this figure shows how actions vary with market states. The robust policy responds systematically to economically meaningful variables, including order-flow measures, trading activity, spread conditions, time-to-close, and inventory. The responses are not flat: spreads vary with persistent state signals, while quantities remain stable in low-risk regions but adjust sharply in states associated with elevated inventory or execution risk. This indicates that robustness modifies the state-to-action mapping itself, rather than merely shifting average spreads.

\begin{figure}[htb]
    \centering
    \includegraphics[width=0.90\linewidth]
    {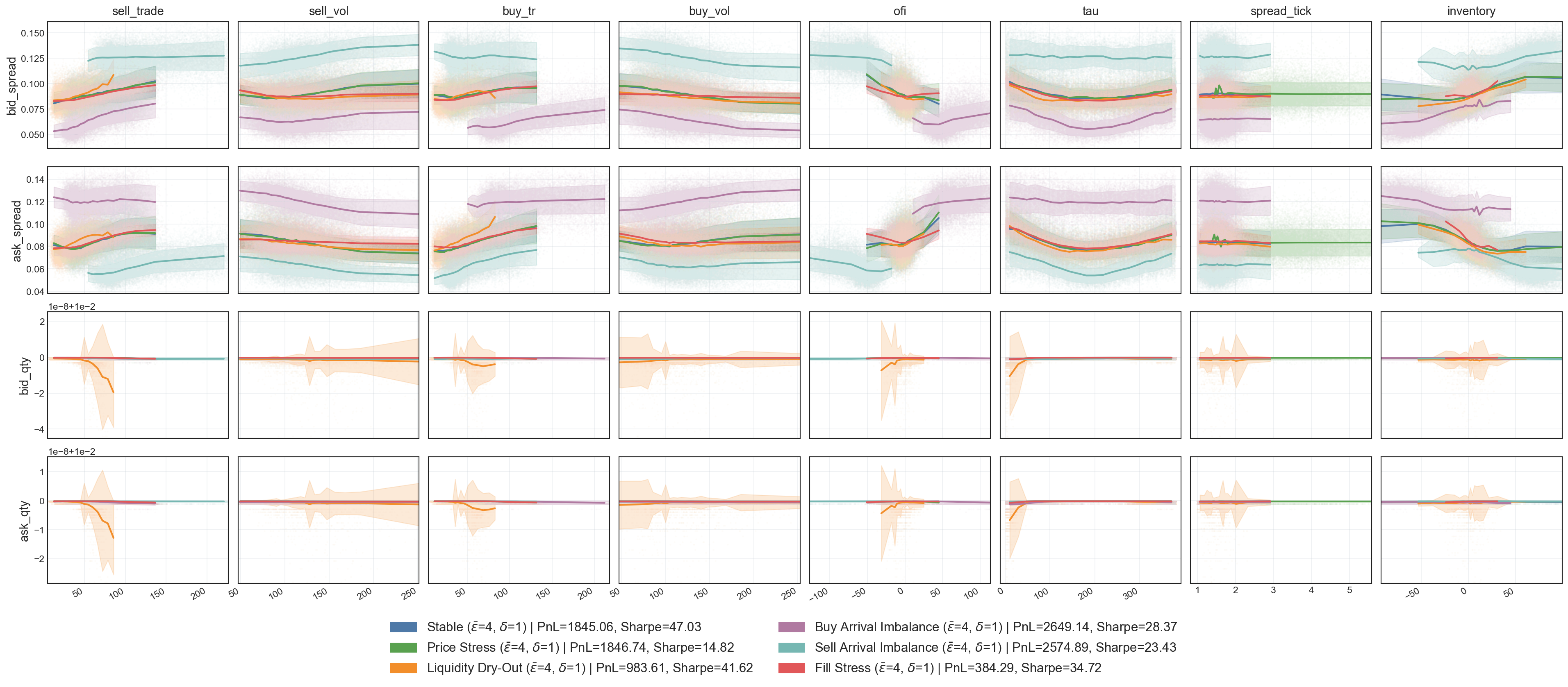}
    \caption{Cross-scenario state-conditional policy responses for the best robust policy selected using validation Sharpe ratio.}
    \label{fig:cross_state}
\end{figure}

These results show that robustness changes the policy function rather than only improving terminal performance. The figures reveal the mechanism: robustness changes the full action vector, the timing of actions, and the state-dependence of decisions. The table quantifies the resulting average spread behavior. Overall, the robust market maker becomes more selective under price and execution uncertainty, more responsive under persistent order-flow signals, and smoother in imbalance-driven regimes. This is precisely the sequential propagation channel emphasized in the paper: robustness first changes quote prices and quote quantities, which then affects execution intensity, inventory accumulation, and risk-adjusted performance.

\subsection{Empirical Evidence on When Robustness Creates Value}
\label{subsec:empirical}

Section \ref{sec:simulation} established that robustness alters the market maker's decision rule under uncertainty. We now examine whether these behavioral adjustments translate into economically meaningful improvements in real markets and, more importantly, under what market conditions robustness creates value.

The empirical evaluation uses the same execution model as the simulation. Executed quantities on real data are computed using the exponential fill probability described in Appendix~\ref{subsec:simulation_design}, so that fill rates depend only on the quoted spread. For both simulation and empirical evaluation, any inventory $I_T$ remaining at the terminal step is liquidated at a cost given by the square-root price-impact model described in Appendix~\ref{subsec:simulation_design} as well.

Our empirical analysis is organized around two key market characteristics: liquidity and volatility. We adopt a 2×2 factorial design comprising AAPL (\emph{high liquidity, low volatility}), TSLA (\emph{high liquidity, high volatility}), MKC (\emph{low liquidity, low volatility}), and TWLO (\emph{low liquidity, high volatility}). This design enables us to assess how robustness performs across distinct market environments and, in particular, whether its benefits depend on liquidity, volatility, or their interaction.

We also examine two distinct market regimes. The 2019 sample represents a relatively stable pre-shock environment, whereas the 2020 sample includes the onset of the COVID-19 crisis and therefore serves as a natural distribution-shift experiment. Consequently, the empirical analysis evaluates not only cross-sectional heterogeneity but also the ability of robust policies to maintain performance when future market conditions differ from those observed during training.

The central empirical question is straightforward: \emph{When does robustness create economic value in real-world market making?}

\subsubsection{Out-of-Sample Performance Across Market Environments}

Table~\ref{tab:empirical_summary} reports the out-of-sample performance of the benchmark strategies, the non-robust RL policy, and the robust RL policies selected using either validation P\&L or validation Sharpe ratio. Several findings emerge.

\begin{table}[H]
\centering
\caption{Economic Value of Robust RL Across Market Conditions (HL=high liquidity, LL=low liquidity, HV=high volatility, LV=low volatility; $\uparrow$/$\downarrow$ indicate improvement/deterioration relative to Greedy; absolute changes reported when Greedy benchmark is negative or near zero).}
\label{tab:empirical_summary}
\tiny
\setlength{\tabcolsep}{2pt}
\renewcommand{\arraystretch}{1}
\begin{tabular}{llcccccc}
\toprule
Env & Metric & AS & Rnd & Fix & Greedy & Robust (val. P\&L) & Robust (val. Sharpe) \\
\midrule
\multicolumn{8}{c}{\textbf{Panel A. Stable (2019)}}\\
\midrule
HL-LV & Sharpe & 0.50 & -0.14 & -0.14 & 0.58 & \textbf{1.52 ($\uparrow$162\%)} & 1.40 ($\uparrow$141\%) \\
& P\&L & 2.06 & -11,498 & -12,187 & 196.97 & \textbf{221.28 ($\uparrow$12\%)} & 175.13 ($\downarrow$11\%) \\
& MDD & -0.34 & -28.35 & -26.91 & -4.70 & \textbf{-2.25 ($\downarrow$52\%)} & -3.03 ($\downarrow$36\%) \\
\midrule
HL-HV & Sharpe & 2.01 & -0.14 & -0.17 & 1.56 & 1.39 ($\downarrow$11\%) & \textbf{4.74 ($\uparrow$204\%)} \\
& P\&L & 2.30 & -7,832 & -7,215 & \textbf{3,298} & 3,239 ($\downarrow$2\%) & 693 ($\downarrow$79\%) \\
& MDD & -0.51 & -35.60 & -37.04 & -1.18 & -1.50 ($\uparrow$27\%) & \textbf{-0.97 ($\downarrow$18\%)} \\
\midrule
LL-LV & Sharpe & \textbf{0.13} & -1.02 & -1.07 & -0.66 & -0.64 ($\uparrow$3\%) & -0.64 ($\uparrow$3\%) \\
& P\&L & \textbf{0.05} & -17,288 & -16,478 & -93.78 & -28.86 ($\uparrow$69\%) & -28.86 ($\uparrow$69\%) \\
& MDD & \textbf{-5.15} & -29.51 & -26.61 & -53.59 & -20.07 ($\downarrow$63\%) & -20.07 ($\downarrow$63\%) \\
\midrule
LL-HV & Sharpe & -0.28 & -0.55 & -0.51 & -0.03 & 0.32 (+0.35) & \textbf{0.50 (+0.53)} \\
& P\&L & -0.96 & -472,212 & -462,278 & -9.78 & \textbf{73.24 (+83.02)} & 54.22 (+64.00) \\
& MDD & -10.12 & -25.61 & -25.92 & -3.17 & -4.14 ($\uparrow$31\%) & \textbf{-2.59 ($\downarrow$18\%)} \\
\midrule
\multicolumn{8}{c}{\textbf{Panel B. Stress (2020)}}\\
\midrule
HL-LV & Sharpe & 0.33 & 0.04 & 0.06 & 1.87 & 4.55 ($\uparrow$143\%) & \textbf{5.52 ($\uparrow$195\%)} \\
& P\&L & 0.46 & 734 & 1,277 & 419.17 & \textbf{468.30 ($\uparrow$12\%)} & 206.98 ($\downarrow$51\%) \\
& MDD & \textbf{-0.30} & -1.11 & -0.93 & -1.43 & -0.63 ($\downarrow$56\%) & -0.52 ($\downarrow$64\%) \\
\midrule
HL-HV & Sharpe & 0.43 & -0.12 & -0.06 & 1.57 & 2.41 ($\uparrow$54\%) & \textbf{3.95 ($\uparrow$152\%)} \\
& P\&L & 12.90 & -1,605 & -860 & 1,394 & \textbf{1,547 ($\uparrow$11\%)} & 1,067 ($\downarrow$23\%) \\
& MDD & \textbf{-0.27} & -3.15 & -1.72 & -1.17 & -0.95 ($\downarrow$19\%) & -0.62 ($\downarrow$47\%) \\
\midrule
LL-LV & Sharpe & \textbf{-0.72} & -0.71 & -0.74 & -0.73 & -0.74 ($\downarrow$1\%) & -0.79 ($\downarrow$8\%) \\
& P\&L & \textbf{-1.12} & -296,553 & -278,583 & -669.12 & -609.78 ($\uparrow$9\%) & -720.59 ($\downarrow$8\%) \\
& MDD & \textbf{-16.06} & -31.12 & -27.97 & -107.62 & -123.61 ($\uparrow$15\%) & -115.55 ($\uparrow$7\%) \\
\midrule
LL-HV & Sharpe & -0.48 & -0.82 & -0.79 & -0.08 & \textbf{0.43 (+0.51)} & -0.27 (-0.19) \\
& P\&L & -1.70 & -439,931 & -403,622 & -73.61 & \textbf{91.91 (+165.52)} & -152.43 (-78.82) \\
& MDD & -131.52 & -55.87 & -27.43 & -5.00 & \textbf{-3.11 ($\downarrow$38\%)} & -7.33 ($\uparrow$47\%) \\
\bottomrule
\end{tabular}
\vspace{0.2em}
\begin{minipage}{0.96\linewidth}
\scriptsize
Notes: Environments: HL-LV (AAPL), HL-HV (TSLA), LL-LV (MKC), LL-HV (TWLO). Significance in Appendix.
\end{minipage}
\end{table}

First, \textbf{reinforcement-learning-based market makers substantially outperform the classical Avellaneda--Stoikov benchmark as well as the random and fixed quoting rules across most stock-year combinations.} For example, in AAPL during 2020, the Sharpe ratio increases from 0.33 under the best-performing AS specification to 1.87 under the learned non-robust policy. Similar improvements are observed in TSLA, where the Sharpe ratio increases from 0.43 under AS to 1.57 under the learned policy. These results confirm that adaptive state-dependent quoting generates substantial economic value beyond classical inventory-based market-making rules.

Second, \textbf{the effectiveness of robustness depends systematically on market liquidity, with volatility playing a secondary role.} The strongest gains occur in the two highly liquid assets, AAPL (HL--LV) and TSLA (HL--HV), whereas the improvements are considerably smaller for the two low-liquidity assets, MKC (LL--LV) and TWLO (LL--HV).

For the highly liquid assets, robustness consistently delivers large improvements in risk-adjusted performance. In AAPL (HL--LV), the stable 2019 market already exhibits substantial gains, with the Sharpe ratio increasing from 0.58 under the greedy policy to 1.52 under validation-P\&L selection, corresponding to a 162\% improvement, while maximum drawdown declines from $-4.70$ to $-2.25$. During the stressed 2020 market, the gains become even larger. Validation-P\&L robustness increases the Sharpe ratio from 1.87 to 4.55, while validation-Sharpe robustness further increases it to 5.52. At the same time, maximum drawdown decreases from $-1.43$ to $-0.63$ and $-0.52$, respectively. TSLA (HL--HV) exhibits a similar pattern. In 2020, validation-P\&L robustness increases the Sharpe ratio from 1.57 to 2.41 while simultaneously increasing the mean P\&L from 1,394 to 1,547. Validation-Sharpe robustness further increases the Sharpe ratio to 3.95 and reduces maximum drawdown from $-1.17$ to $-0.62$. Although some average profitability is sacrificed under validation-Sharpe selection, the improvement in risk-adjusted performance is substantial. Together, these results demonstrate that when execution opportunities are abundant, robustness can effectively translate improved sequential decisions into superior trading performance.

The gains are considerably weaker for the two low-liquidity assets. In MKC (LL--LV), robustness substantially reduces losses during the stable 2019 market, with the average P\&L improving from $-93.8$ to $-28.9$ and maximum drawdown declining from $-53.6$ to $-20.1$. Nevertheless, overall performance remains weak, and during the stressed 2020 market all learned policies continue to exhibit Sharpe ratios close to $-0.75$, regardless of the robustness specification. TWLO (LL--HV) occupies an intermediate position. Validation-P\&L robustness converts a negative Sharpe ratio of $-0.03$ into 0.32 during 2019, while validation-Sharpe robustness further improves it to 0.50. Similar improvements occur during 2020, where validation-P\&L robustness increases the Sharpe ratio from $-0.08$ to 0.43 while simultaneously converting negative average profitability into positive profitability. Although robustness remains beneficial under high volatility, the overall improvements remain smaller than those observed for the highly liquid assets. These results suggest that robustness cannot fully compensate for limited execution opportunities.

Overall, liquidity emerges as the dominant determinant of the value of robustness. High-liquidity assets provide sufficient execution opportunities for the agent to actively manage inventory risk, allowing improved sequential decisions to translate into substantially better trading performance. By contrast, when trading opportunities are inherently limited, as in MKC and, to a lesser extent, TWLO, the incremental value of robustness is considerably smaller.

Third, \textbf{robustness becomes increasingly valuable during periods of market stress.} Relative improvements are generally larger during the stressed 2020 market than during the stable 2019 environment. For example, the best robust Sharpe ratio for AAPL increases from 1.52 in 2019 to 5.52 in 2020. Similarly, robust policies continue to generate substantial improvements for TSLA across both periods, increasing the Sharpe ratio from 1.56 under the greedy policy to between 1.39--4.74 in 2019 and between 2.41--3.95 in 2020, depending on the validation criterion. More broadly, the robust policies maintain strong risk-adjusted performance despite the substantial distributional shift associated with the COVID-19 market disruption. These findings are consistent with the central rationale of robust reinforcement learning: policies that explicitly account for model uncertainty are better able to preserve decision quality when future market dynamics deviate from the historical environment used for training.

Figure~\ref{fig:pareto_2020} further illustrates these findings by reporting the out-of-sample risk--return frontiers during the 2020 stress regime. Robustness substantially expands the attainable frontier for the highly liquid assets AAPL (HL--LV) and TSLA (HL--HV), whereas the gains are more modest for MKC (LL--LV) and TWLO (LL--HV). The validation-selected policies lie close to the efficient frontier, suggesting that validation performance provides a practical criterion for selecting robustness parameters. Further Pareto-frontier diagnostics, validation-selection results, and additional empirical analyses are provided in Appendix~\ref{appendix:empirical_results}.

\begin{figure}[H]
    \centering
    \makebox[\textwidth][c]{\begin{minipage}{1.14\textwidth}\centering
        \begin{subfigure}{0.39\linewidth}
            \includegraphics[width=\linewidth]{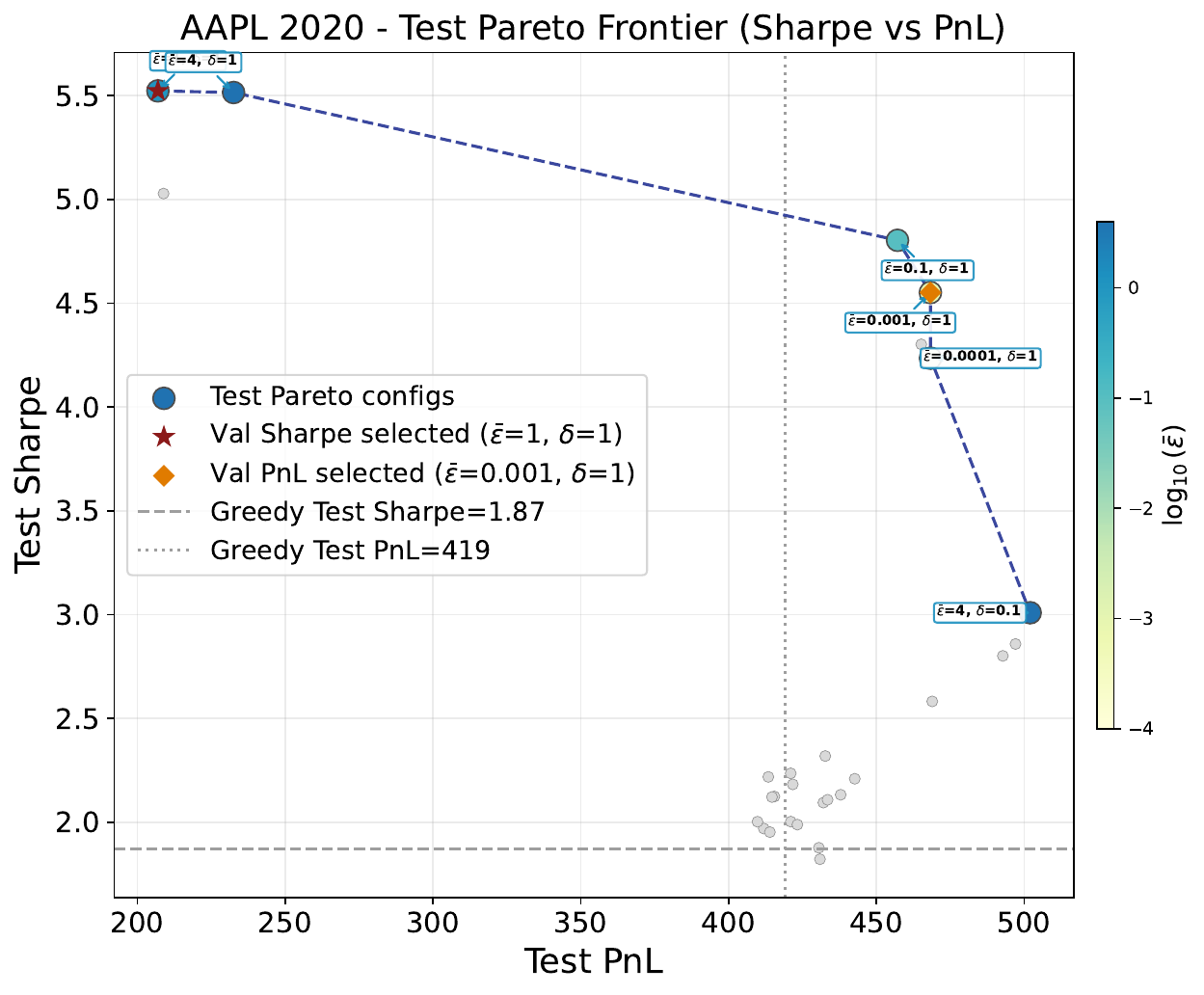}
            \caption{AAPL.}
        \end{subfigure}
        \begin{subfigure}{0.39\linewidth}
            \includegraphics[width=\linewidth]{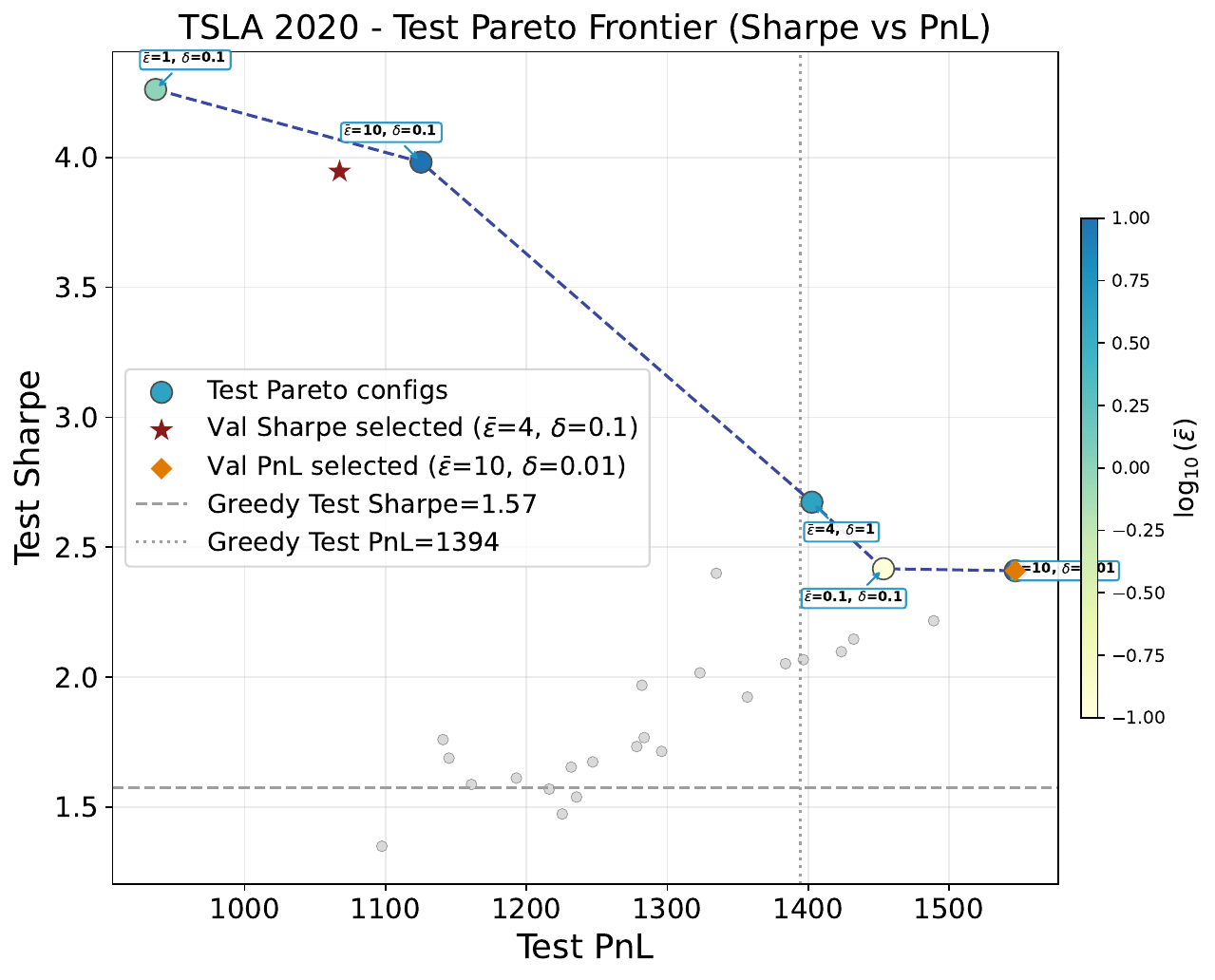}
            \caption{TSLA.}
        \end{subfigure}

        \vspace{0.6em}

        \begin{subfigure}{0.39\linewidth}
            \includegraphics[width=\linewidth]{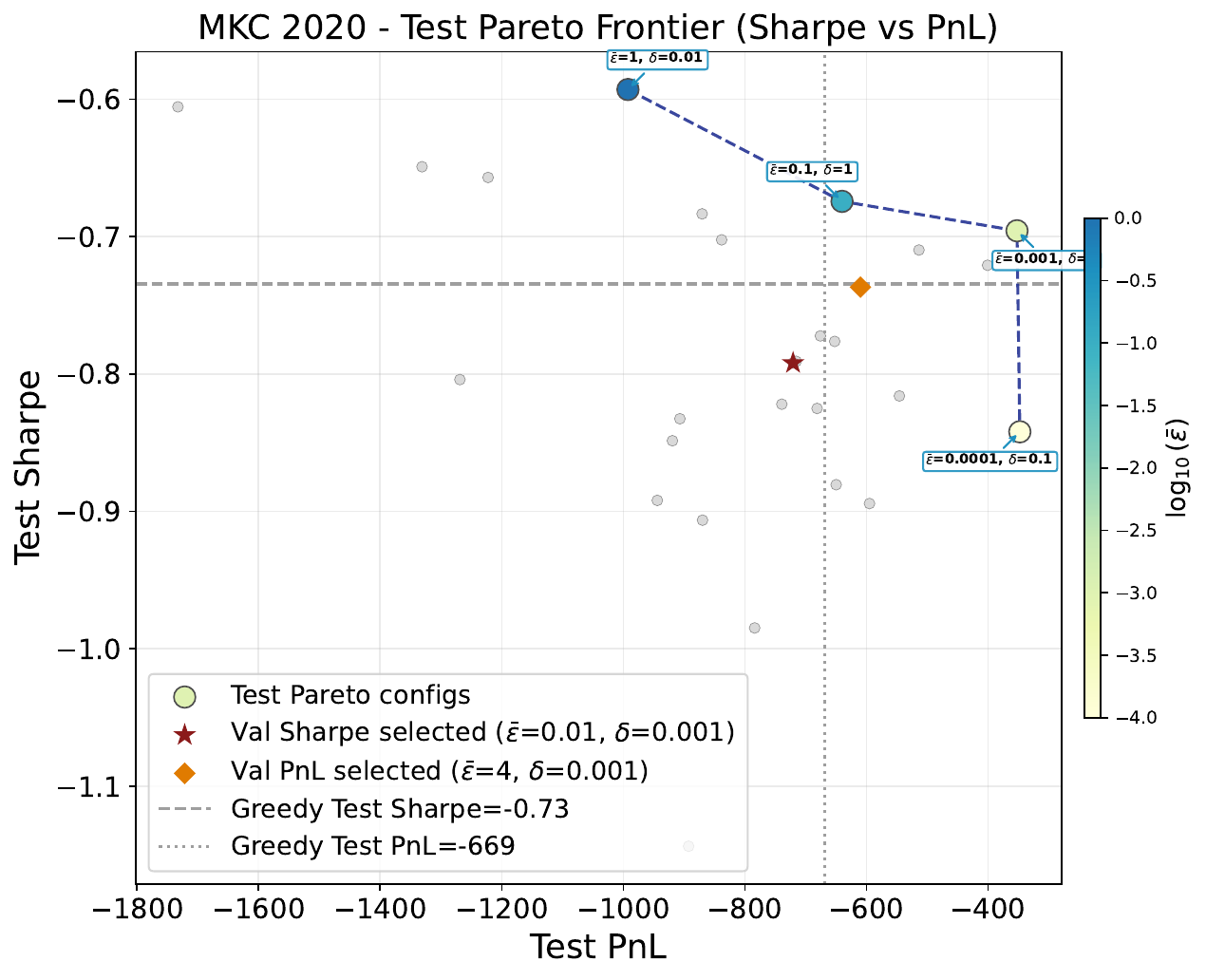}
            \caption{MKC.}
        \end{subfigure}
        \begin{subfigure}{0.39\linewidth}
            \includegraphics[width=\linewidth]{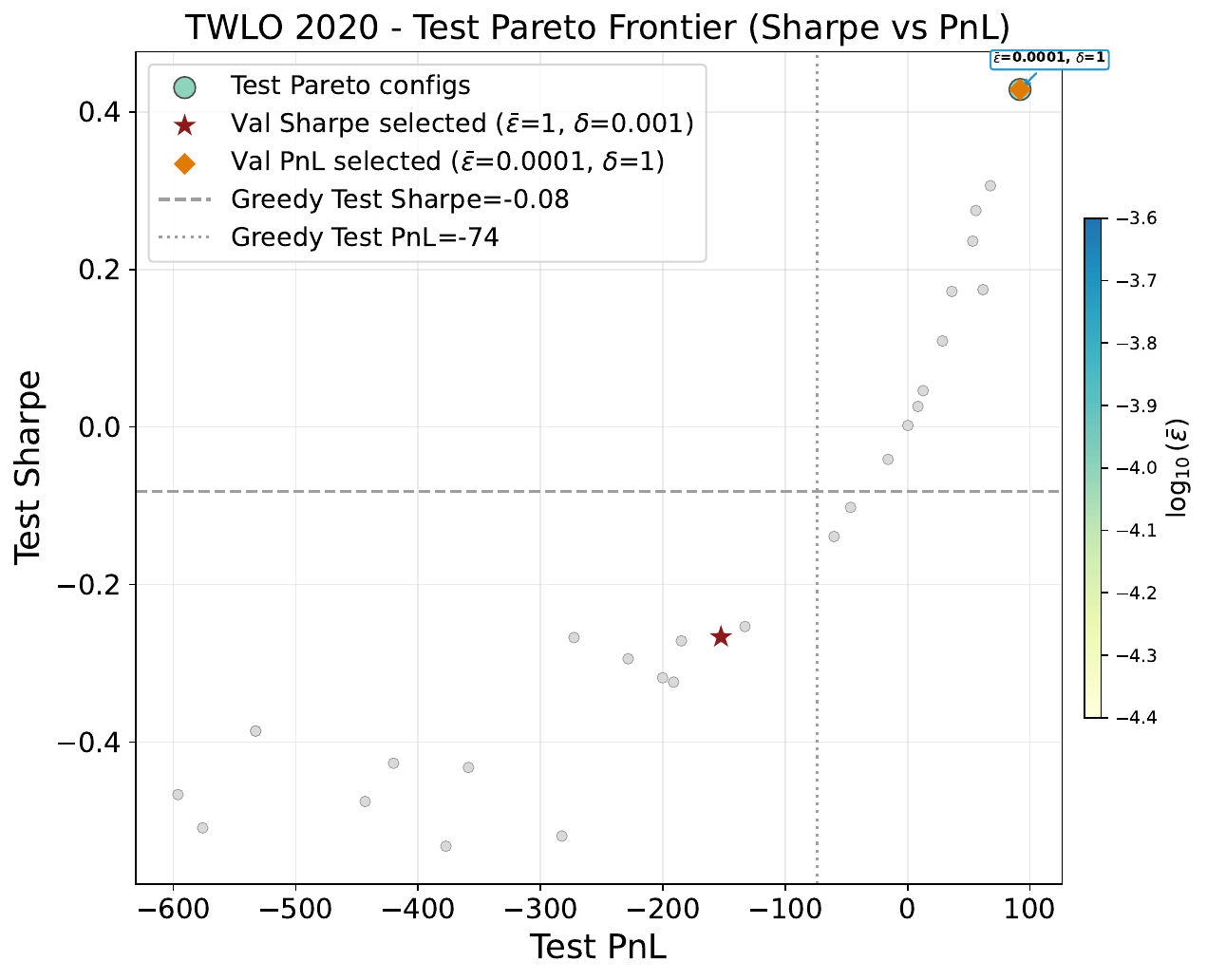}
            \caption{TWLO.}
        \end{subfigure}
    \end{minipage}}

    \caption{
 Out-of-sample risk--return frontiers during 2020 stress regime under robustness. Robustness substantially expands the attainable frontier for the highly liquid assets AAPL (HL--LV) and TSLA (HL--HV), while the gains are more modest for MKC (LL--LV) and TWLO (LL--HV). Highlighted points denote the validation-selected policies.
    }
    \label{fig:pareto_2020}
\end{figure}

\section[Conclusion]{Discussion and Conclusion}\label{sec:Conclusion}

This paper develops a robust reinforcement learning framework for high-frequency market making by incorporating distributional uncertainty directly into the transition dynamics. Rather than treating robustness solely as protection against model misspecification, the proposed framework views robustness as a mechanism that reshapes sequential decision making under evolving uncertainty. Both simulation and empirical evidence show that robust policies adapt their quoting behavior, inventory management, and execution decisions in economically meaningful ways.

An important implication is that robustness should not be interpreted as a uniformly conservative adjustment. Instead, robustness acts as a state-dependent decision principle. In liquid markets, it enables the agent to respond more effectively to distributional shifts while preserving trading opportunities. Under persistent order-flow signals, robustness may even lead to more aggressive liquidity provision rather than wider quotes. These findings suggest that the economic value of robustness depends jointly on market uncertainty and the availability of execution opportunities.

Beyond market making, the proposed framework provides a general approach for robust sequential decision making in environments with complex transition dynamics. The decomposition into deterministic state evolution and stochastic innovations allows ambiguity to be introduced only where uncertainty genuinely arises, preserving both computational tractability and economic interpretability. The two-dimensional robustness formulation further separates uncertainty tolerance from action robustness, providing a flexible framework for analyzing different forms of decision conservativeness.

The present study has several limitations. The present framework adopts the standard inventory-based market-making model and therefore does not explicitly model adverse selection and competing market makers. Second, the empirical analysis focuses on single-asset market making and does not consider cross-asset inventory constraints or multi-venue execution. Third, although the forecasting model captures complex nonlinear transition dynamics, its performance remains limited by the quality and representativeness of historical data.

These limitations suggest several promising directions for future research. Extending the framework to multi-agent market making, endogenous order-flow dynamics, and multi-venue execution would allow robustness to capture strategic interactions among market participants. Another important direction is to integrate online learning so that ambiguity sets evolve continuously as new market information arrives. Finally, the proposed framework may provide a useful foundation for robust sequential decision making in broader financial applications, including optimal execution, portfolio management, and market design.

Overall, the results demonstrate that robustness is more than a tool for guarding against model misspecification. By explicitly accounting for evolving uncertainty in sequential decision making, it provides a principled and economically interpretable framework for improving the resilience of learning-based market-making systems.

\section*{Acknowledgements}

J.~Sester gratefully acknowledges support from the NUS Start-Up Grant \emph{Tackling Model Uncertainty in Finance with Machine Learning} and the MOE Academic Research Fund (AcRF) Tier~1 Grant 25-0428-P0001. J.~Sester and Y.~Chen gratefully acknowledge support from the MOE Academic Research Fund (AcRF) Tier~2 Grant T2EP20225-0030. Y.~Chen also acknowledges support from the SIA--NUS Digital Aviation Corporate Laboratory Phase~2 under the Industry Alignment Fund--Industry Collaboration Projects (IAF--ICP).
%

\bibliographystyle{plainnat}
\IfFileExists{references.bib}{%
    \bibliography{references}%
}{%
    \bibliography{literature}%
}

\appendix

\section{Simulation Details}\label{appendix:simulation_details}

\subsection[Experimental Design]{Experimental Design}\label{subsec:simulation_design}

We evaluate the proposed market-making methods in a synthetic limit-order-book
environment that is empirically anchored to intraday equity data while remaining
transparent enough to isolate specific forms of model uncertainty. The simulation
design has two objectives. First, it provides a stable baseline environment that is
consistent with the state representation and transition structure used by the
reinforcement-learning agent. Second, it generates a small set of interpretable stress
environments that perturb one component of the market mechanism at a time, so that
performance differences can be attributed to specific sources of misspecification.

A trading day is discretized into \(T=390\) one-minute intervals, corresponding to
the regular U.S. equity session. The training and validation samples are always
generated from the stable baseline regime. Test samples are then generated under the
stable regime and under five stress scenarios: price stress, liquidity dry-out,
buy-side imbalance, sell-side imbalance, and execution stress. We assess the effectiveness of the proposed approach by examining terminal-day Profit and Loss (PnL). Market making agents are trained within a simulated market environment using the robust reinforcement learning procedure outlined in Algorithm~\ref{robust-rl}. In each experiment, we train both a robust and a non-robust agent to isolate the contribution of distributional robustness. The following subsections detail the data-generating mechanism of the artificial market used in our experiments.

\subsubsection{Returns}
In our simulation study, we assume that the asset \emph{log-returns}
\[
r_{t+1} := \log\!\left(\frac{S_{t+1}}{S_t}\right)
\]
follow an autoregressive AR(1) model of the form:
\begin{equation*}
    r_t = \mu + \phi r_{t-1} + \varepsilon_t, \quad \varepsilon_t \sim t_\nu(0,\sigma^2)
\end{equation*}\footnote{We use $\bar r$ to ensure that returns do not become excessively large.}
where $\mu$ is the long-run mean return, $\phi$ is the autoregressive coefficient, and $\varepsilon_t$ follows a Student's $t$-distribution with $\nu$ degrees of freedom and scale $\sigma^2$.

\subsubsection{Buy and Sell trades}
Buy and sell market order (MO) arrivals, denoted by 
$N_t^{\operatorname{bid}}$ and $N_t^{\operatorname{ask}}$, are modeled in discrete time as Poisson processes with time-varying intensities. The intensity process is given by $\lambda_t = \mu_t + Z_t$\footnote{We set a lower bound $\underline{\lambda}$ so that the intensity never becomes zero.}, where $Z_t$ is a self-exciting component. The deterministic component $\mu_t$ follows a U-shaped curve with a peak parameter $\mu_{\mathrm{peak}}$ for market open and market close, and a mid-day trough parameter $\mu_{\mathrm{mid}}$. Under a unit time discretization ($\mathrm{\Delta_t} = 1$), the excitation evolves according to
\[
Z_{t+1} = e^{-\beta} Z_t + \alpha N_t.
\]
Conditional on the current intensity, arrivals follow a Poisson distribution
\[
N_t \sim \mathrm{Poisson}(\lambda_t).
\]

This discrete-time specification can be viewed as a time-discretization of a continuous-time linear Markovian Hawkes process \cite{GAO20183807}. In continuous time, the intensity process is given by
\[
\lambda_t = \mu_t + Z_t,
\]
where $\mu_t$ denotes the baseline intensity and $Z_t$ is a self-exciting component evolving according to
\[
\mathrm{d}Z_t = -\beta Z_t\,\mathrm{d}t + \alpha\,\mathrm{d}N_t.
\]
The resulting system admits a Markov representation with state vector $(Z_t, N_t)$ \cite{GAO20183807}.

We assume that buy and sell MOs are independent and follow the same time-varying intensity dynamics. All MOs arriving within the time interval $[t, t+1)$ are aggregated, and may interact with the MM’s limit orders posted at time $t$. Accordingly, we model arrivals at the level of aggregated counts rather than individual events. Finally, the volumes of market orders on both sides, $V_t^{\operatorname{bid}}$ and $V_t^{\operatorname{ask}}$, are assumed to follow log-normal distributions, consistent with empirical evidence in the literature (\cite{MASLOV2001234}).

\subsubsection{Market Spread and Microstructure Features}

To simulate realistic order book conditions, we model the \emph{market bid--ask spread} as a mean-reverting process and derive four microstructure state features directly from the simulated best bid and ask prices.

The market spread $\delta^m_t := S_t^{\operatorname{best, ask}} - S_t^{\operatorname{best,bid}}$ is simulated via an AR(1) process on the log-spread:
\[
\log \delta^m_{t+1} = \phi_s \log \delta^m_t + (1 - \phi_s)\,\mu_s + \sigma_s\,\varepsilon_t, \quad \varepsilon_t \overset{\text{i.i.d.}}{\sim} \mathcal{N}(0,1),
\]
where $\mu_s$ is the long-run log-mean, $\phi_s \in (0,1)$ controls mean-reversion speed, and $\sigma_s > 0$ governs spread volatility. The log-normal specification ensures strict positivity and captures the empirically observed right-skewness of spreads. Given the mid-price $S_t$, the best bid and ask prices are set symmetrically as
\[
S_t^{\operatorname{best, bid}} = S_t - \tfrac{\delta^m_t}{2}, \qquad S_t^{\operatorname{best, ask}} = S_t + \tfrac{\delta^m_t}{2}.
\]

The four microstructure state variables $(\delta^{t,\operatorname{tick}},\, \delta^{t,\operatorname{rel}},\, D^{t,\operatorname{micro}},\, \mathit{\rm{OFI}}^t)$ are then computed directly from the simulated best prices and order counts:
\begin{align*}
\delta^{t,\operatorname{tick}} &= \frac{\delta^m_t}{size_{tick}}, \\[4pt]
\delta^{t,\operatorname{rel}}  &= \frac{\delta^m_t}{S_t}, \\[4pt]
D^{t,\operatorname{micro}}     &= \frac{N_t^{\operatorname{ask}}\,S_t^{\operatorname{best, bid}} + N_t^{\operatorname{bid}}\,S_t^{\operatorname{best, ask}}}{N_t^{\operatorname{ask}} + N_t^{\operatorname{bid}}} - S_t, \\[4pt]
\mathit{\rm{OFI}}^t                 &= N_t^{\operatorname{ask}} - N_t^{\operatorname{bid}},
\end{align*}
where $size_{tick}$ denotes the tick size and $S_t$ denotes the mid-price. Here $\delta^{t,\operatorname{tick}}$ measures the spread in tick units, $\delta^{t,\operatorname{rel}}$ expresses it as a fraction of the mid-price, $D^{t,\operatorname{micro}}$ is the deviation of the volume-weighted microprice from the mid-price, and $\mathit{OFI}^t$ captures the net directional order flow pressure.

\subsubsection{Executed Quantities} \label{linear-dd}

The \emph{limit order book} (LOB) records the demand for equity at different price levels from the bid/ask side, and its shape determines how deep a typical market order penetrates the book, i.e., how many LOs will be filled and lifted.

Following the classical market-making literature \citep{Avellaneda2008}, 
the execution of limit orders is modeled via exponentially decaying arrival intensities. 
For a quoted distance $\delta$ from the mid-price, the bid- and ask-side market order arrival rates are given by
\[
\lambda_{\text{limit}}^{\mathrm{bid}}(\delta) = A e^{-\kappa \delta^{\mathrm{bid}}}, 
\qquad
\lambda_{\text{limit}}^{\mathrm{ask}}(\delta) = A e^{-\kappa \delta^{\mathrm{ask}}}.
\]
Over a discrete time interval $\mathrm{d}t$, the probability of execution follows from the associated Poisson processes,
\[
p^{\mathrm{bid}}(\delta;\mathrm{d}t) 
= 1 - \exp\!\big(-\lambda_{\text{limit}}^{\mathrm{bid}}(\delta)\,\mathrm{d}t\big), 
\qquad
p^{\mathrm{ask}}(\delta;\mathrm{d}t) 
= 1 - \exp\!\big(-\lambda_{\text{limit}}^{\mathrm{ask}}(\delta)\,\mathrm{d}t\big).
\]
The expected executed quantities are given by
\[
\mathbb{E}[Q^{\mathrm{bid}} \mid \delta] 
= p^{\mathrm{bid}}(\delta;\mathrm{d}t)\, M^{\mathrm{ask}}, 
\qquad
\mathbb{E}[Q^{\mathrm{ask}} \mid \delta] 
= p^{\mathrm{ask}}(\delta;\mathrm{d}t)\, M^{\mathrm{bid}}.
\]
The functional form of the execution probability is assumed to be unknown to market participants.
This same function is applied for both simulation and empirical analysis for simplicity.
\subsubsection{Terminal Liquidation Cost}\label{app:liquidation}

Any inventory $I_T$ remaining at the end of the trading session must be unwound in the open market. Following the square-root price-impact model \citep{AlmgrenThum2005,Toth2011}, the liquidation cost for a meta-order of size $Q = |I_T|$ shares against an average daily volume $V_T$ is
\begin{equation}\label{eq:liquidation_cost}
c_{\mathrm{liq}}(I_T, V_T)
\;=\;
\eta\,\sigma\,|I_T|\,\sqrt{\frac{|I_T|}{V_T}},
\end{equation}
where $\sigma$ is the per-step price volatility, $\eta > 0$ is a price-impact scaling factor, and
\[
V_T \;=\; N_T^{\operatorname{bid}}\,V_T^{\operatorname{bid}} \;+\; N_T^{\operatorname{ask}}\,V_T^{\operatorname{ask}}
\]
is the prevailing average daily volume at the terminal period. The cost \eqref{eq:liquidation_cost} is subtracted from the terminal reward for both simulation and empirical evaluation.

\subsection[Data and Empirical Calibration]{Data and Empirical Calibration}\label{subsec:data-calibration}
The simulation parameters are calibrated from the intraday LOB data. The calibration procedure retains only full regular-session trading days, corresponding to 390 one-minute observations per day, and discards half-days and incomplete sessions. In total, 494 full AAPL days are used for calibration.

Each stock-day is calibrated separately. Returns are fitted with the AR(1) Student-\(t\)
model. Buy- and sell-side baseline intensities are fitted from the one-minute market
order counts. The Hawkes parameters \((\alpha_Z,\beta_Z)\), the lognormal size
parameters, the spread parameters, the deep-book proxies, and the fill parameters
\((A,k)\) are then estimated from the same day. This produces a cross section of
daily parameter vectors rather than a single pooled fit.

The final simulation package is constructed from the empirical distribution of daily
fits for AAPL. The train, validation, and test periods consist of 160, 48, and 48 days, respectively. Let \(\theta_d\) denote the daily
parameter vector for day \(d\) of the anchor stock. The stable baseline uses the
coordinate-wise median across days. The remaining scenarios alter only a targeted
subset of parameters:
\begin{enumerate}
\item \emph{Price stress.} Replace
\(\{\mu_r,\phi_r,\sigma_r,\nu_r,\bar r,\mu_s,\phi_s,\sigma_s\}\)
by their 90th-percentile anchor-stock values and keep all other parameters at the
baseline.
\item \emph{Liquidity dry-out.} Replace
\(\{\mu_{\mathrm{mid}}^b,\mu_{\mathrm{peak}}^b,\mu_{\mathrm{mid}}^a,
\mu_{\mathrm{peak}}^a,\underline{\lambda}\}\) by their 10th-percentile values and
keep all other parameters at the baseline.
\item \emph{Buy-side imbalance.} Replace
\(\{\mu_{\mathrm{mid}}^b,\mu_{\mathrm{peak}}^b\}\) by their 90th-percentile values and
leave the sell side at the baseline.
\item \emph{Sell-side imbalance.} Replace
\(\{\mu_{\mathrm{mid}}^a,\mu_{\mathrm{peak}}^a\}\) by their 90th-percentile values and
leave the buy side at the baseline.
\item \emph{Execution stress.} Keep the stable baseline state dynamics, but replace the
training and validation fill parameters by their stable median values and the test-time
fill parameters by a low-\(A\), high-\(k\) configuration, namely the 10th percentile
of \(A\) and the 90th percentile of \(k\).
\end{enumerate}

This scenario design is intentionally modular. Price stress changes only return and spread behavior; liquidity dry-out changes only baseline order-flow activity; imbalance stresses change only one side of the market-order intensity surface; and execution stress changes only the fill mechanism. As a result, the test environments separate distinct forms of model ambiguity instead of conflating them in a single shock.

\begin{table}[H]
\centering
\caption{Calibrated simulation parameters under the stable baseline and stress
scenarios}
\label{tab:simulation_scenarios}
\scriptsize
\setlength{\tabcolsep}{4pt}
\begin{tabular}{lrrrrrr}
\hline \hline
Parameter & Stable & Price & Dry-out & Buy imb. & Sell imb. & Exec. \\
\hline
\multicolumn{7}{l}{\emph{Return dynamics}} \\
Return mean \((\times 10^{5})\) & 0.39 & 4.49 & 0.39 & 0.39 & 0.39 & 0.39 \\
Return AR coefficient & 0.01 & 0.10 & 0.01 & 0.01 & 0.01 & 0.01 \\
Return scale \((\times 10^{4})\) & 5.98 & 11.87 & 5.98 & 5.98 & 5.98 & 5.98 \\
Degrees of freedom & 5.97 & 9.79 & 5.97 & 5.97 & 5.97 & 5.97 \\
Return clip \((\times 10^{3})\) & 3.05 & 5.95 & 3.05 & 3.05 & 3.05 & 3.05 \\
\multicolumn{7}{l}{\emph{Arrival intensities}} \\
Buy mid intensity & 25.84 & 25.84 & 12.73 & 66.83 & 25.84 & 25.84 \\
Buy peak intensity & 106.39 & 106.39 & 59.78 & 208.66 & 106.39 & 106.39 \\
Sell mid intensity & 29.24 & 29.24 & 14.13 & 29.24 & 67.35 & 29.24 \\
Sell peak intensity & 111.07 & 111.07 & 60.66 & 111.07 & 218.39 & 111.07 \\
Minimum intensity & 6.16 & 6.16 & 2.96 & 6.16 & 6.16 & 6.16 \\
Hawkes excitation \(\alpha_Z\) & 0.11 & 0.11 & 0.11 & 0.11 & 0.11 & 0.11 \\
Hawkes decay \(\beta_Z\) & 11.76 & 11.76 & 11.76 & 11.76 & 11.76 & 11.76 \\
\multicolumn{7}{l}{\emph{Sizes and spreads}} \\
Lognormal size mean & 121.44 & 121.44 & 121.44 & 121.44 & 121.44 & 121.44 \\
Lognormal size sigma & 0.32 & 0.32 & 0.32 & 0.32 & 0.32 & 0.32 \\
Spread mean & 0.02 & 0.04 & 0.02 & 0.02 & 0.02 & 0.02 \\
Spread AR coefficient & 0.55 & 0.69 & 0.55 & 0.55 & 0.55 & 0.55 \\
Spread sigma & 0.16 & 0.19 & 0.16 & 0.16 & 0.16 & 0.16 \\
EWMA volatility \(\lambda_v\) & 0.05 & 0.05 & 0.05 & 0.05 & 0.05 & 0.05 \\
\multicolumn{7}{l}{\emph{Execution}} \\
Fill intensity \(A_{\mathrm{test}}\) & 5.23 & 5.23 & 5.23 & 5.23 & 5.23 & 2.33 \\
Fill elasticity \(k_{\mathrm{test}}\) & 14.05 & 14.05 & 14.05 & 14.05 & 14.05 & 22.88 \\
\hline \hline
\end{tabular}

\end{table}

\subsection[Implementation Details and Hyperparameter Choices]{Implementation Details and Hyperparameters}\label{subsec:implementation-hyperparameters}

The value network consists of an initial LayerNorm layer, two linear--SiLU blocks, and two residual MLP blocks, each residual block combining LayerNorm, two linear layers, a SiLU activation, and a skip connection. The policy network uses the same stem, but with separate residual branches for spreads and quantities; each branch ends with a sigmoid output layer to enforce the admissible action bounds. The lambda network is simpler, using a one-hidden-layer MLP with a softplus output to ensure positivity of the dual variable. We use SiLU because its smooth nonlinearity tends to yield more stable optimization than piecewise-linear alternatives in this high-dimensional, noisy setting ~\cite{Ramachandran2018}. All networks are trained with AdamW ~\cite{LoshchilovHutter2019}. Table~\ref{tab:key_implementation_defaults} summarizes the main hyperparameter choices.

\begin{table}[H]
\centering
\caption{Key hyperparameters in the current implementation}
\label{tab:key_implementation_defaults}
\footnotesize
\setlength{\tabcolsep}{6pt}
\begin{tabular}{lr}
\hline \hline
Hyperparameter & Value \\
\hline
Trading intervals per day \(T\) & 390.00 \\
State lookback \(m\) & 10.00 \\
Discount parameter \(\alpha\) & 0.90 \\
Risk-aversion parameter \(\gamma\) & 0.10 \\
Liquidation price impact factor \(\eta\) & 1 \\
Transaction cost \((\times 10^{4})\) & 1.00 \\
Maximum spread & 0.30 \\
Maximum quantity share & 0.01 \\
Training epochs & 100.00 \\
Minimum epochs & 25.00 \\
Patience & 5.00 \\
Actor learning rate \((\times 10^{4})\) & 3.00 \\
Critic learning rate \((\times 10^{4})\) & 3.00 \\
Dual learning rate \((\times 10^{4})\) & 3.00 \\
Critic batch size & 4.00 \\
Actor batch size & 4.00 \\
Sampler epochs & 10.00 \\
Sampler batch size & 16.00 \\
\hline \hline
\end{tabular}

\vspace{0.5em}
\parbox{0.92\textwidth}{\footnotesize Notes. This table reports the important training and robustness settings hyperparameters for the current implementation.}
\end{table}

\section{Additional Simulation Results}\label{appendix:add_sim_results}

\subsection{Hyperparameter Performance Heatmaps and Pareto frontiers}

Figures \ref{fig:app_sim_heatmap_1} and \ref{fig:app_sim_heatmap_2} present the test-period performance heatmaps of all robust hyperparameter combinations under all six simulation scenarios.  Figure \ref{fig:app_sim_pareto} then presents the PnL-to-Sharpe Pareto frontier plots in all six simulation scenarios. The labeled points are those on the frontier. In general, we observe a downward sloping trend, as a higher Sharpe value would accompany a lower PnL value. This suggests that the robust agent trades in some of the profit for higher stability. The gray dot lines pinpoint the positions of the greedy agent. In some scenarios, it offers a competitive baseline as it can outperform some of the robust hyperparameter combinations, while in the sell arrival imbalance scenario, it is almost dominated by the robust frontier.

\begin{figure}[htbp]
    \centering
    \makebox[\textwidth][c]{\begin{minipage}{1.18\textwidth}\centering
        \begin{subfigure}{0.32\linewidth}
            \includegraphics[width=\linewidth]{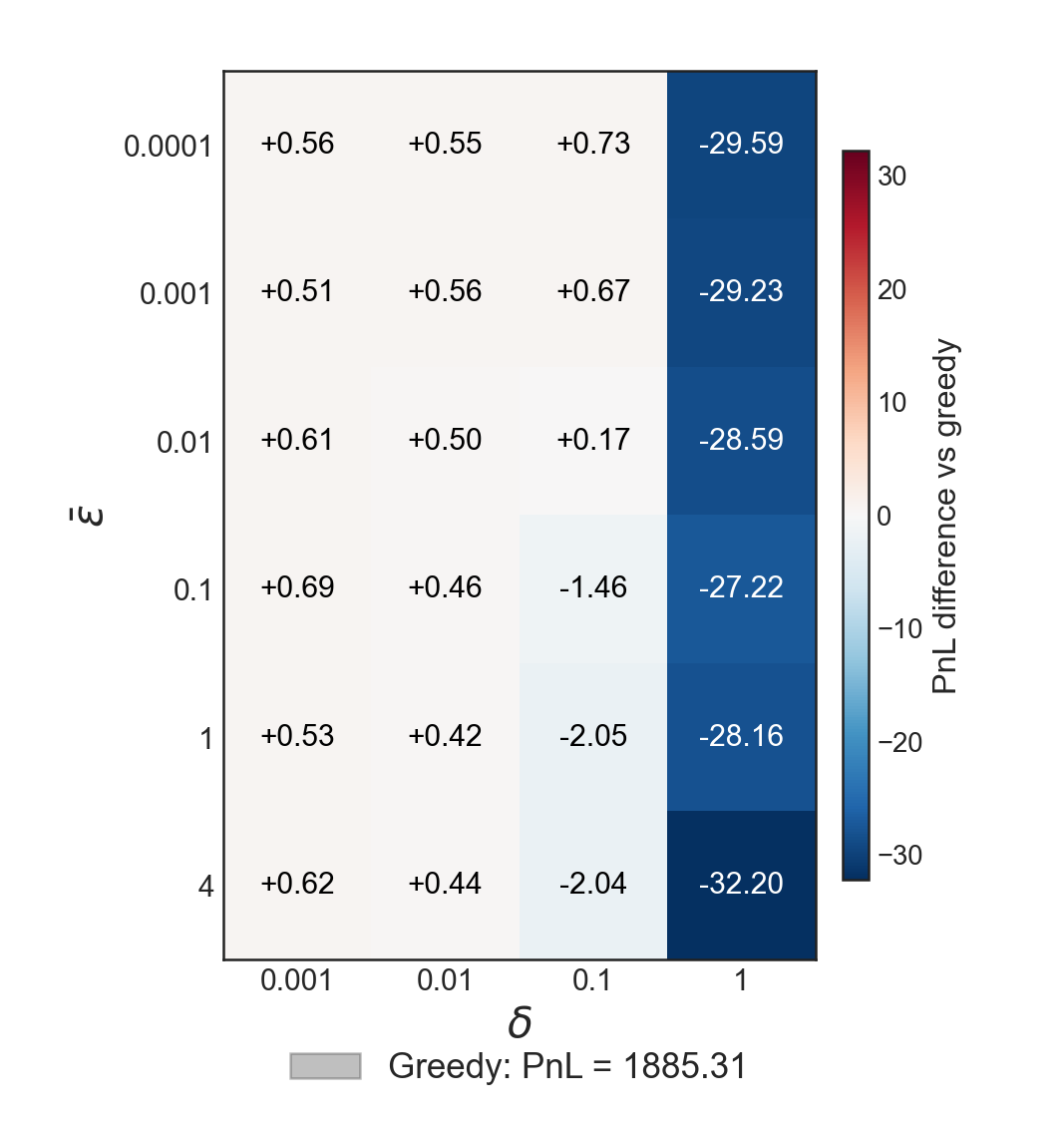}
            \caption{Stable, PnL Heatmap}
        \end{subfigure}
        \begin{subfigure}{0.32\linewidth}
            \includegraphics[width=\linewidth]{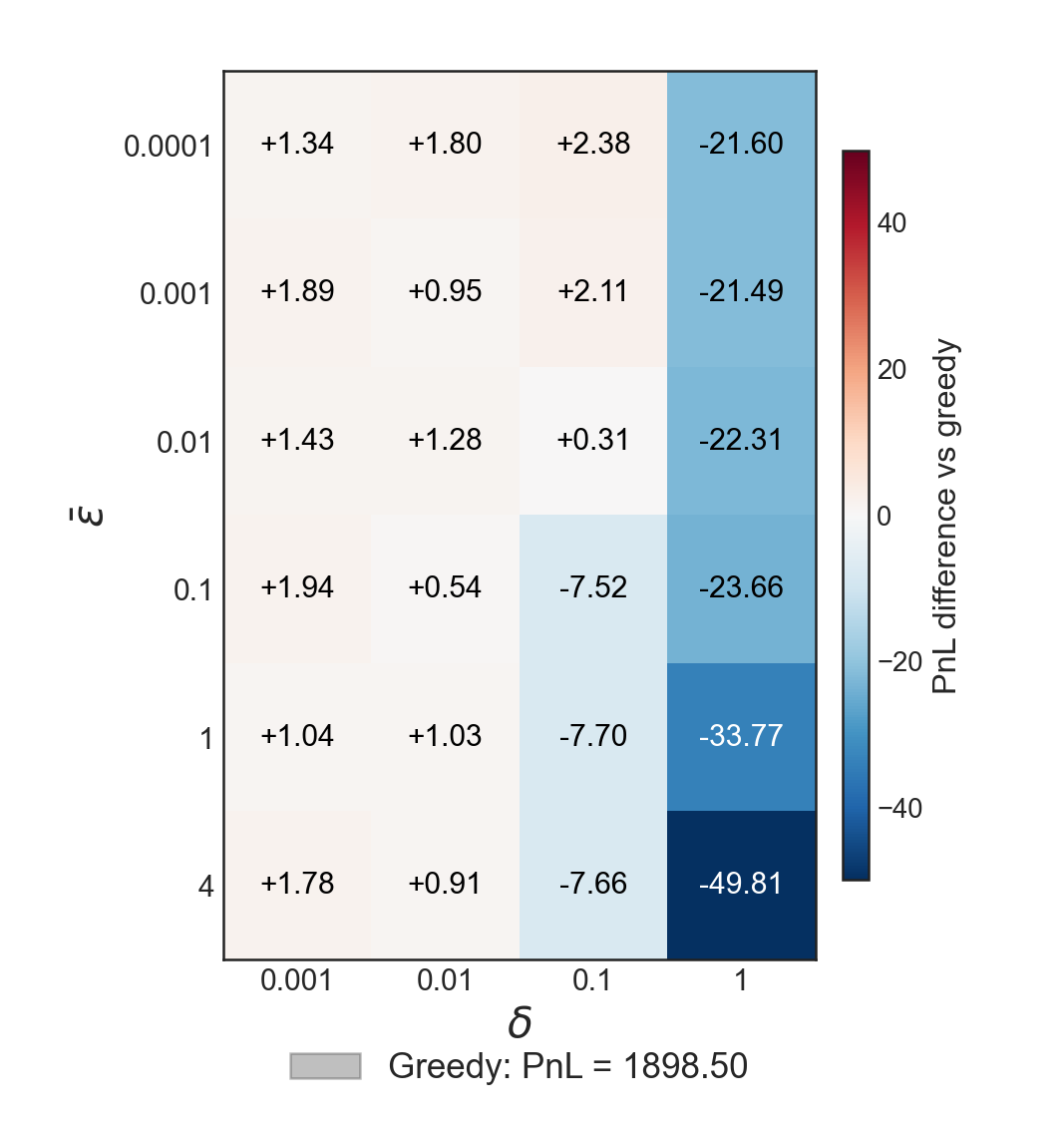}
            \caption{Price stress, PnL Heatmap}
        \end{subfigure}
        \begin{subfigure}{0.32\linewidth}
            \includegraphics[width=\linewidth]{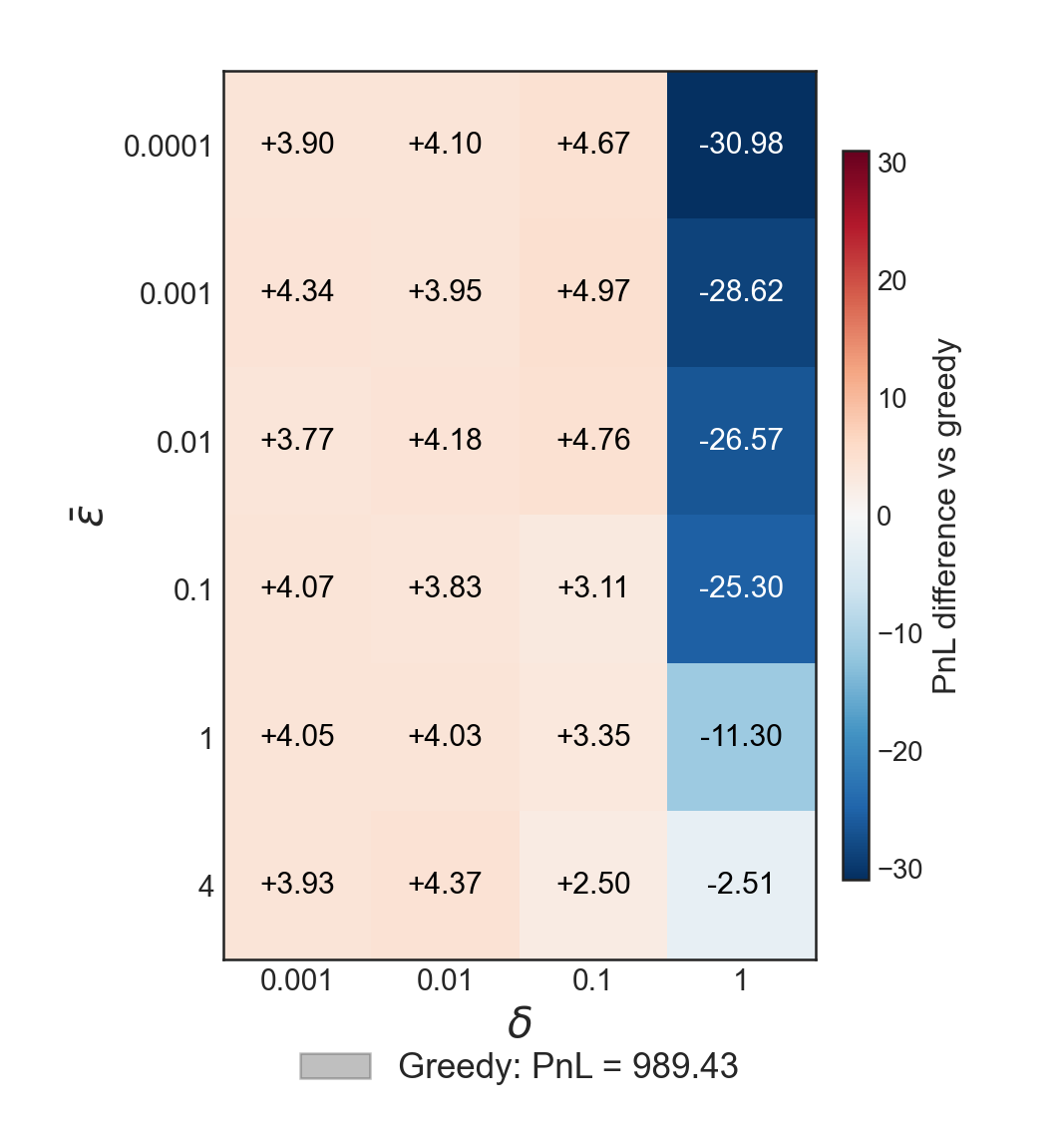}
            \caption{Liquidity dry-out, PnL Heatmap}
        \end{subfigure}

        \par\vspace{0.1em}

        \begin{subfigure}{0.32\linewidth}
            \includegraphics[width=\linewidth]{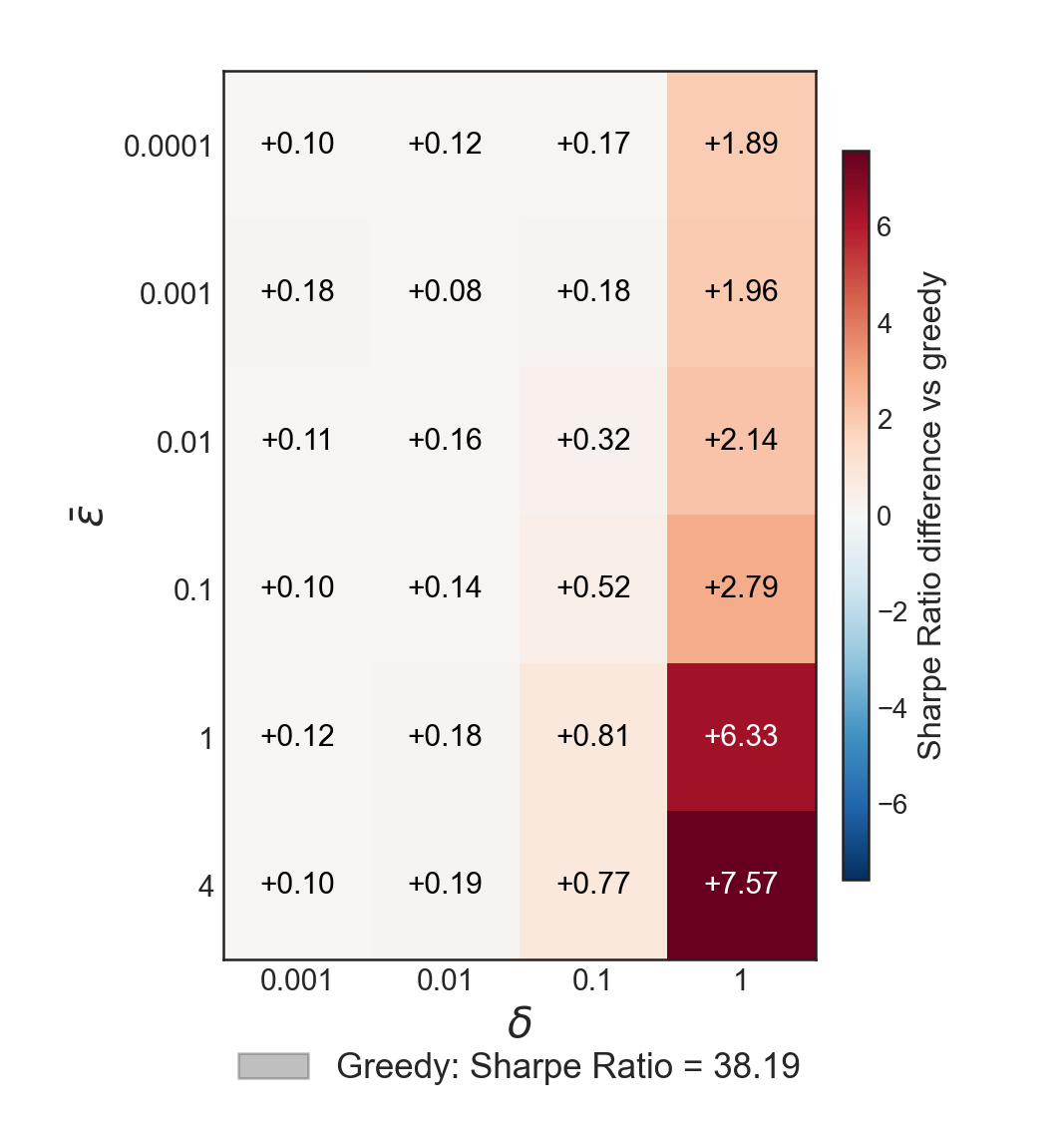}
            \caption{Stable, Sharpe Heatmap}
        \end{subfigure}
        \begin{subfigure}{0.32\linewidth}
            \includegraphics[width=\linewidth]{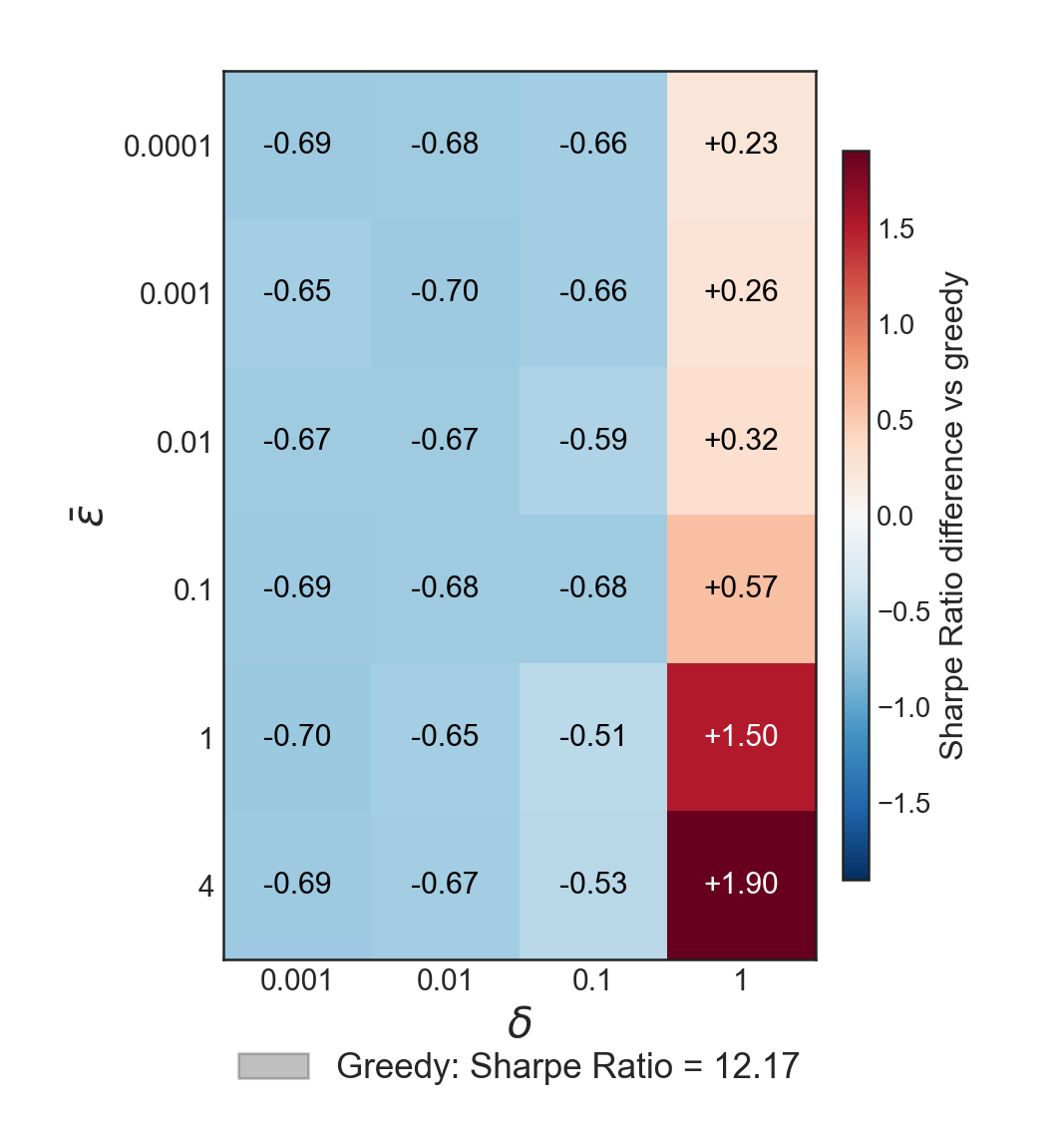}
            \caption{Price stress, Sharpe Heatmap}
        \end{subfigure}
        \begin{subfigure}{0.32\linewidth}
            \includegraphics[width=\linewidth]{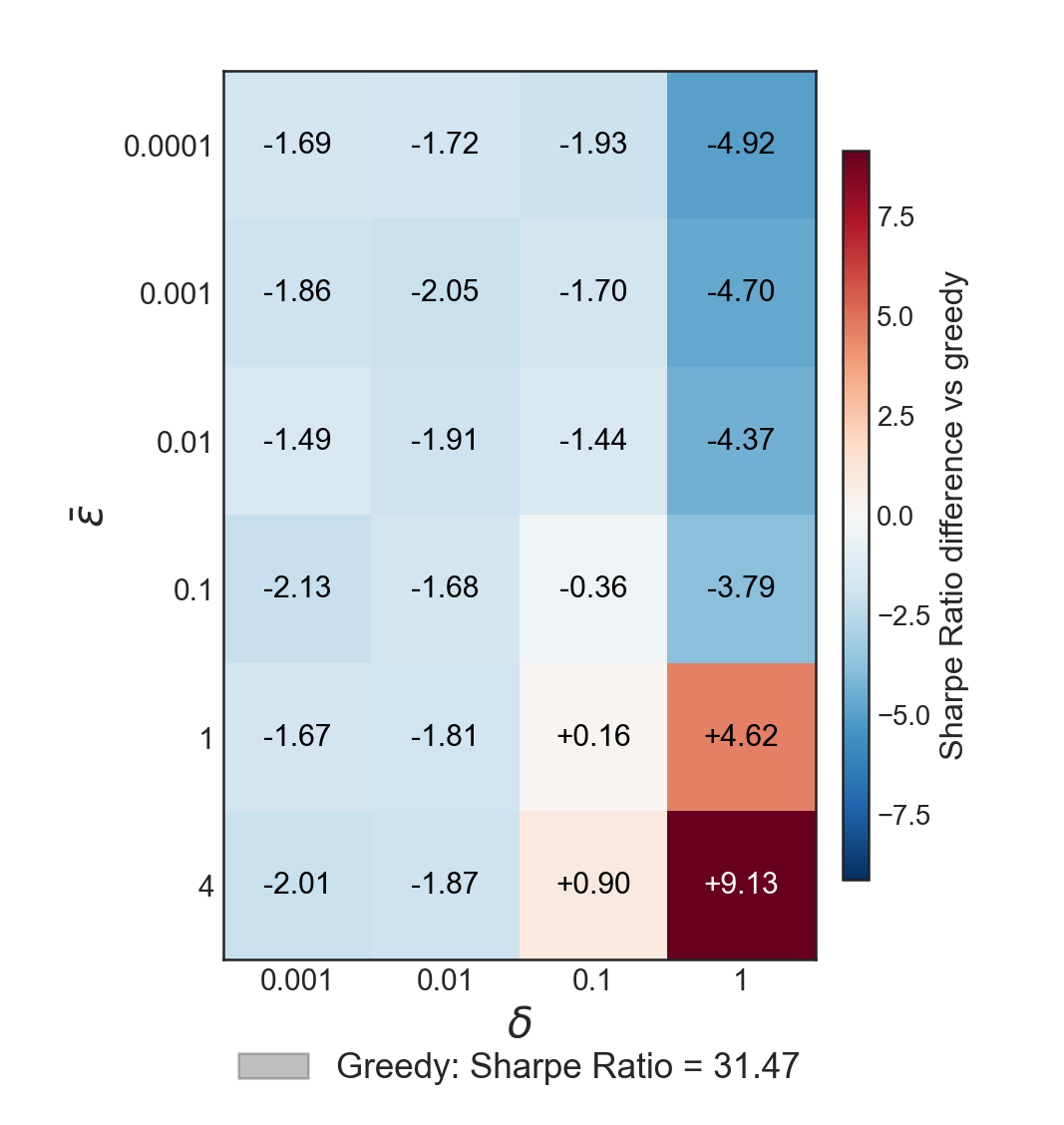}
            \caption{Liquidity dry-out, Sharpe Heatmap}
        \end{subfigure}
    \end{minipage}}
    \caption{Test-period mean PnL and Sharpe ratio over the $(\bar\varepsilon,\delta)$ grid for the stable, price-stress, and liquidity-dry-out scenarios. The first row reports mean PnL and the second row reports Sharpe ratio, with columns corresponding to the three scenarios; the greedy benchmark value is shown below each panel. The results show that out-of-sample performance depends materially on both the robustness parameters and the stress scenario.}
    \label{fig:app_sim_heatmap_1}
\end{figure}

\begin{figure}[htbp]
    \centering
    \makebox[\textwidth][c]{\begin{minipage}{1.18\textwidth}\centering
        \begin{subfigure}{0.32\linewidth}
            \includegraphics[width=\linewidth]{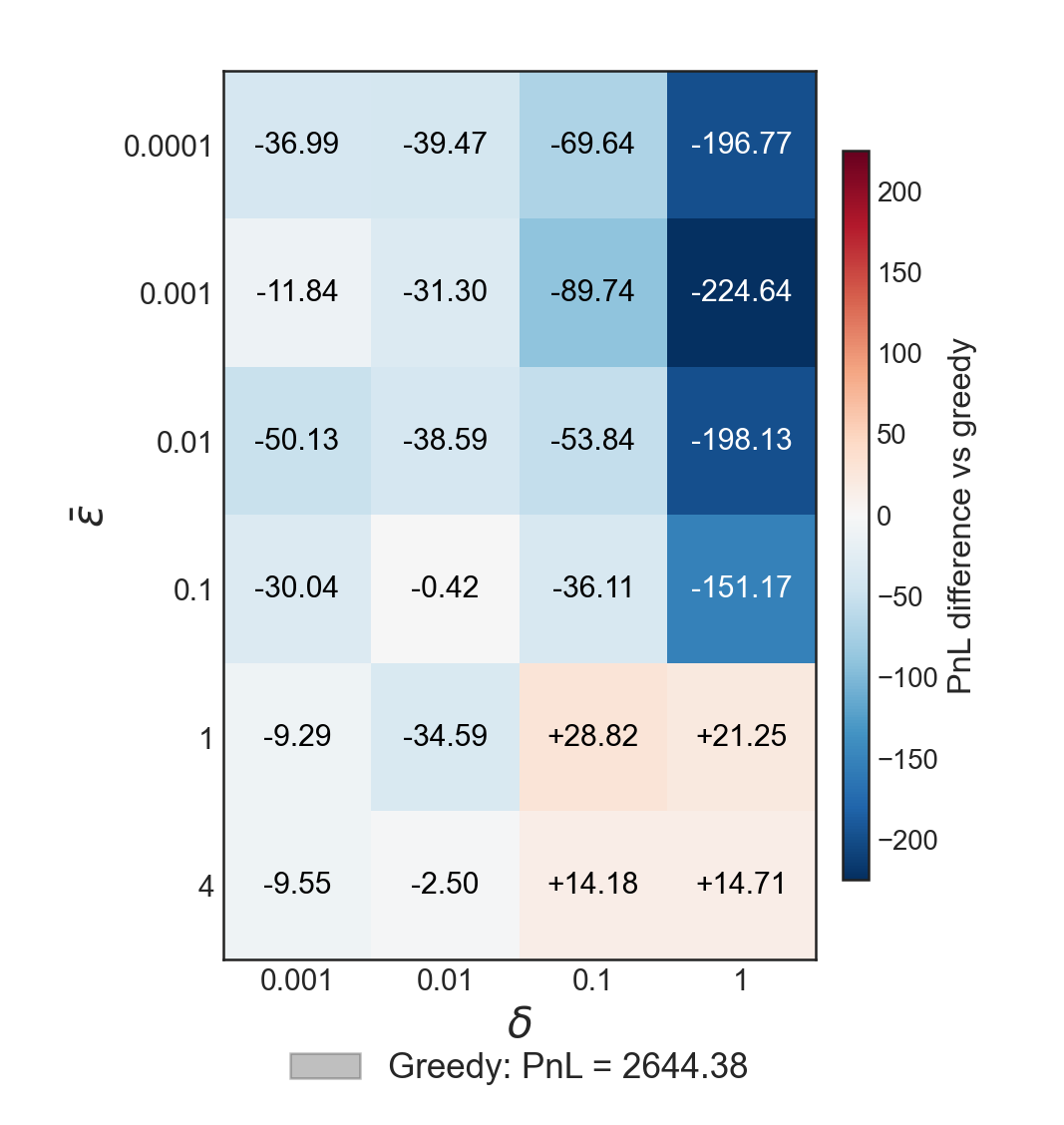}
            \caption{Buy arrival imbalance, PnL Heatmap}
        \end{subfigure}
        \begin{subfigure}{0.32\linewidth}
            \includegraphics[width=\linewidth]{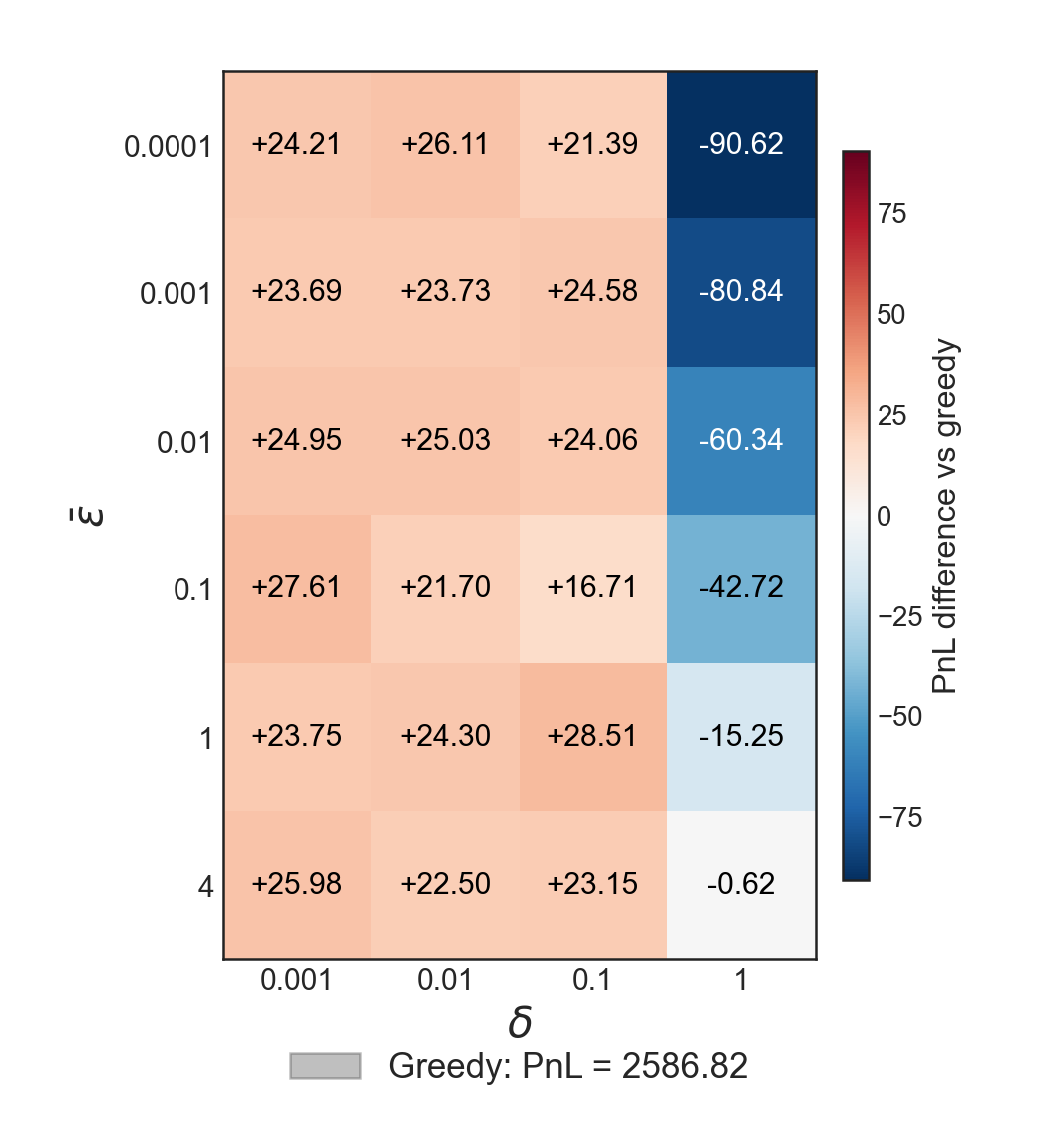}
            \caption{Sell arrival imbalance, PnL Heatmap}
        \end{subfigure}
        \begin{subfigure}{0.32\linewidth}
            \includegraphics[width=\linewidth]{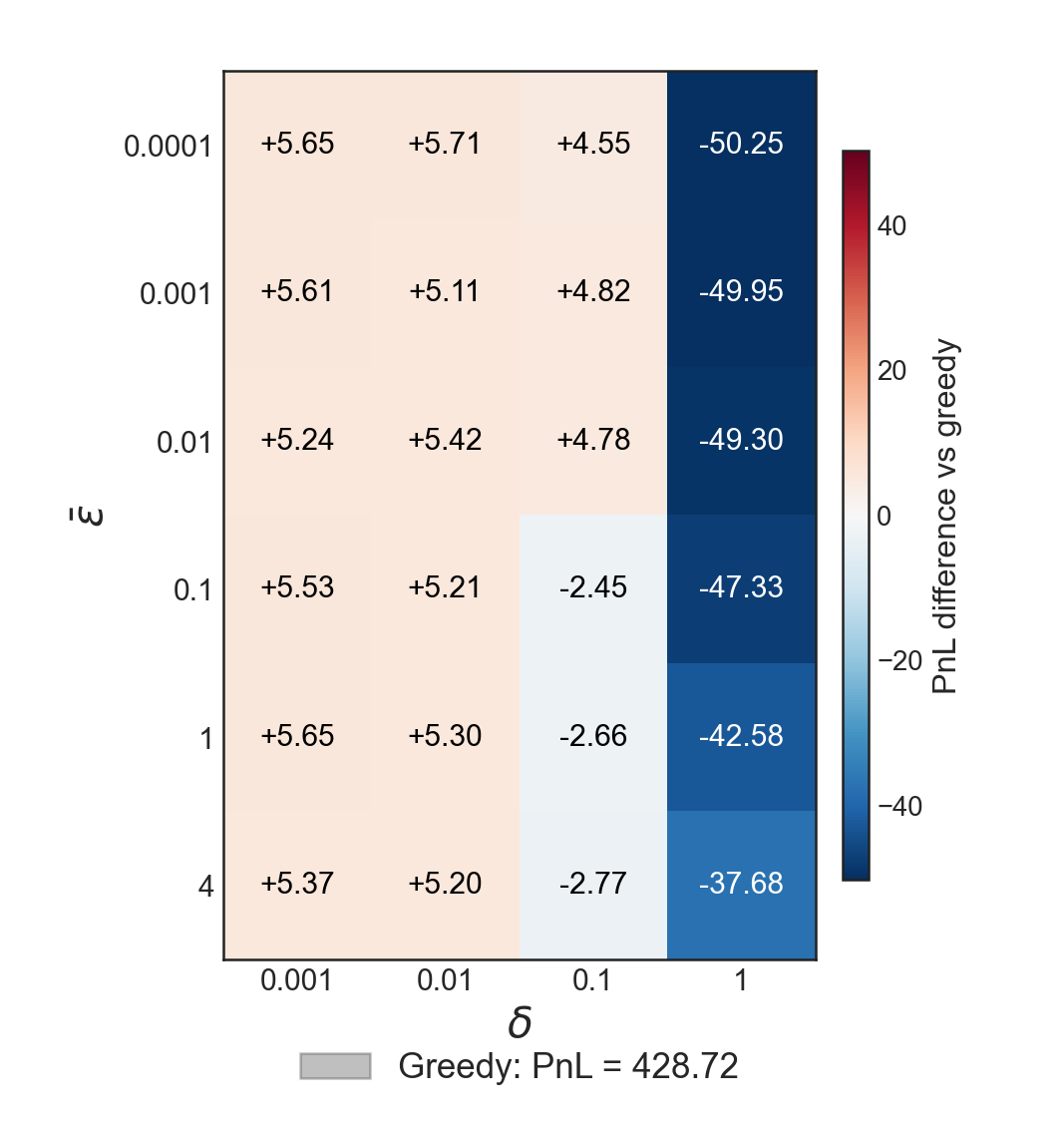}
            \caption{Fill stress, PnL Heatmap}
        \end{subfigure}

        \par\vspace{0.1em}

        \begin{subfigure}{0.32\linewidth}
            \includegraphics[width=\linewidth]{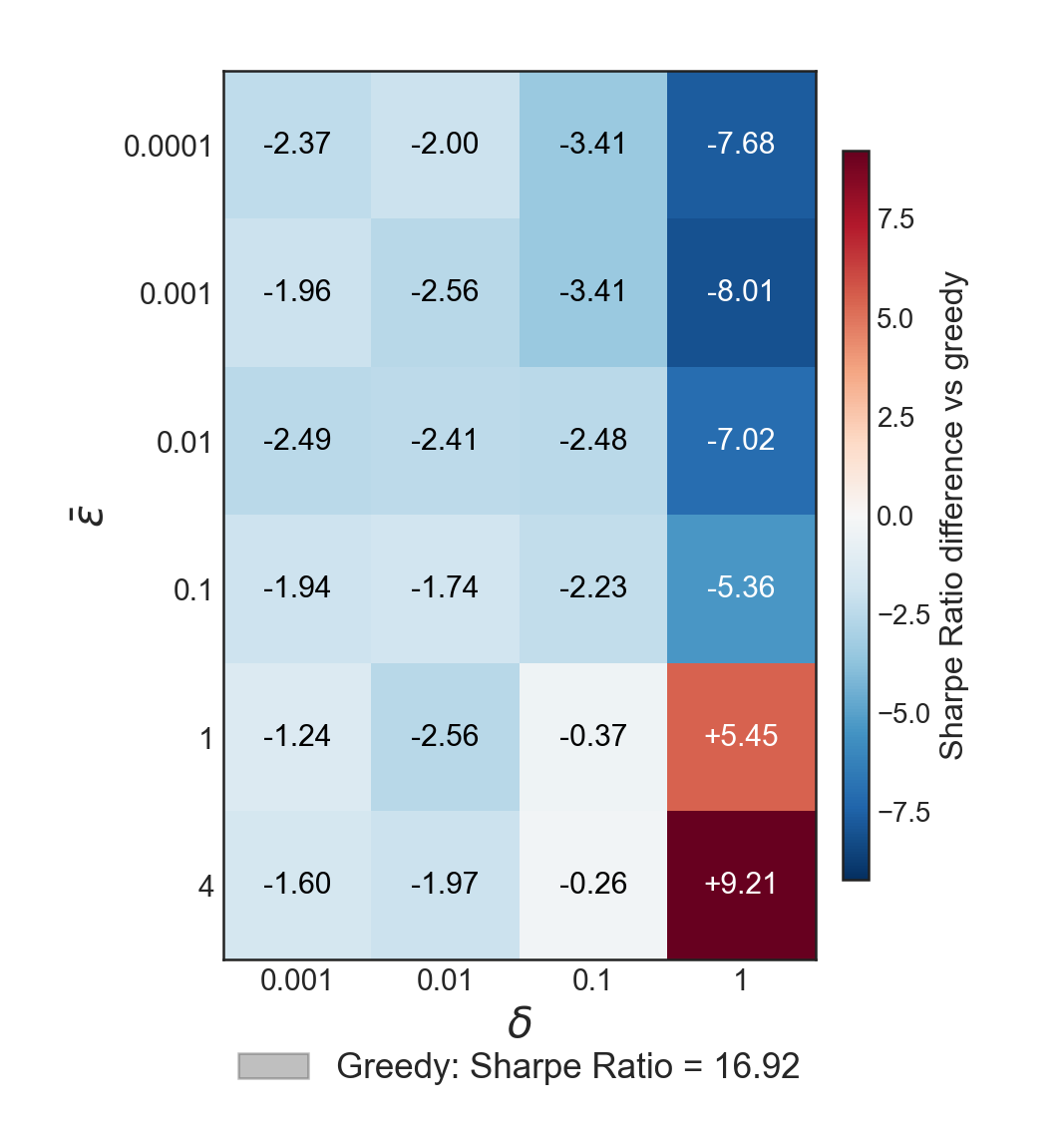}
            \caption{Buy arrival imbalance, Sharpe Heatmap}
        \end{subfigure}
        \begin{subfigure}{0.32\linewidth}
            \includegraphics[width=\linewidth]{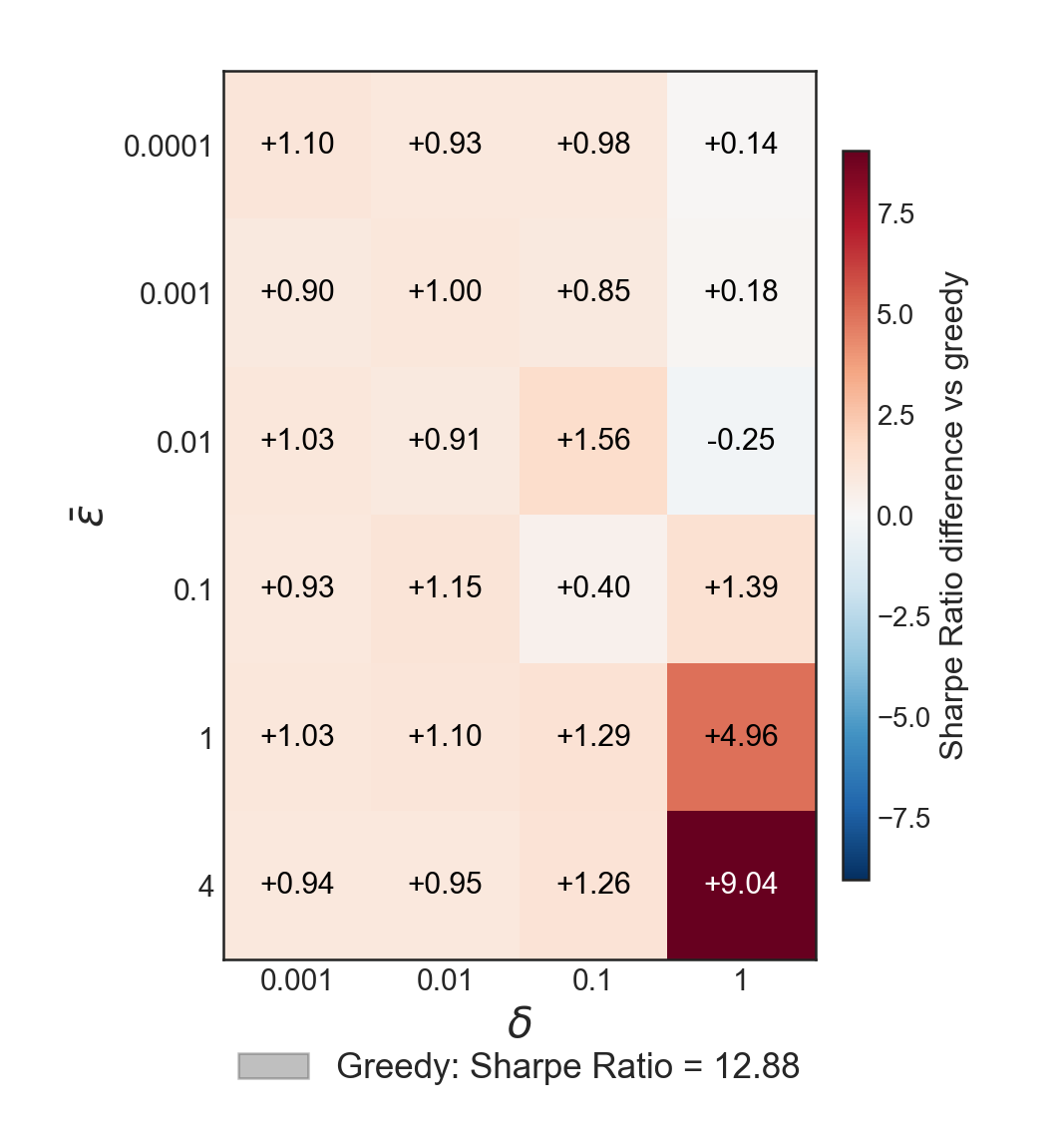}
            \caption{Sell arrival imbalance, Sharpe Heatmap}
        \end{subfigure}
        \begin{subfigure}{0.32\linewidth}
            \includegraphics[width=\linewidth]{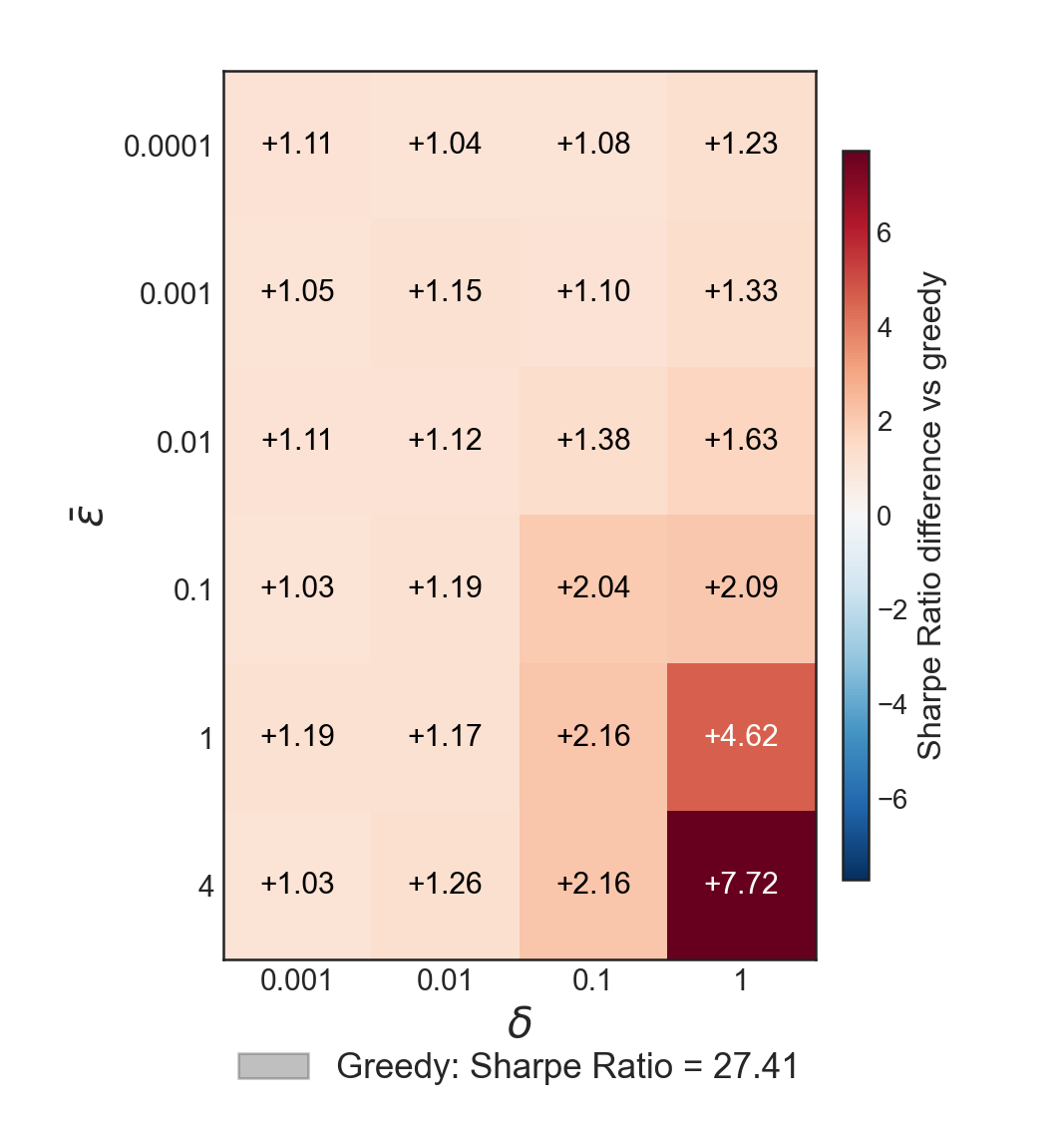}
            \caption{Fill stress, Sharpe Heatmap}
        \end{subfigure}
    \end{minipage}}
    \caption{Test-period mean PnL and Sharpe ratio over the $(\bar\varepsilon,\delta)$ grid for the buy-arrival-imbalance, sell-arrival-imbalance, and fill-stress scenarios. The first row reports mean PnL and the second row reports Sharpe ratio, with columns corresponding to the three scenarios; the greedy benchmark value is shown below each panel. The results show that out-of-sample performance depends materially on both the robustness parameters and the stress scenario.}
    \label{fig:app_sim_heatmap_2}
\end{figure}

\begin{figure}[htbp]
    \centering
    \makebox[\textwidth][c]{\begin{minipage}{1.16\textwidth}\centering
        \begin{subfigure}{0.31\linewidth}
            \includegraphics[width=\linewidth]{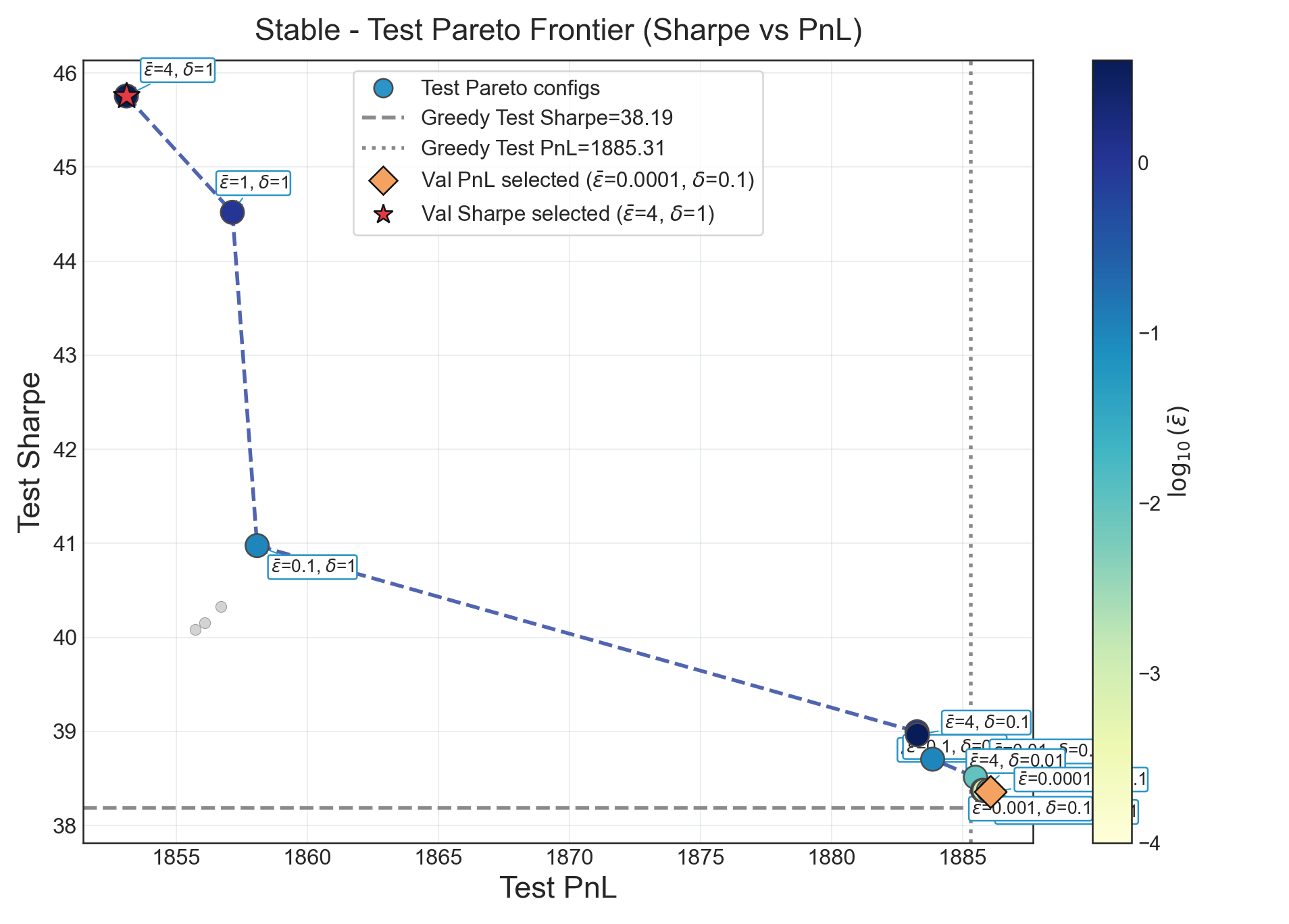}
            \caption{Stable, PnL vs Sharpe}
        \end{subfigure}
        \begin{subfigure}{0.31\linewidth}
            \includegraphics[width=\linewidth]{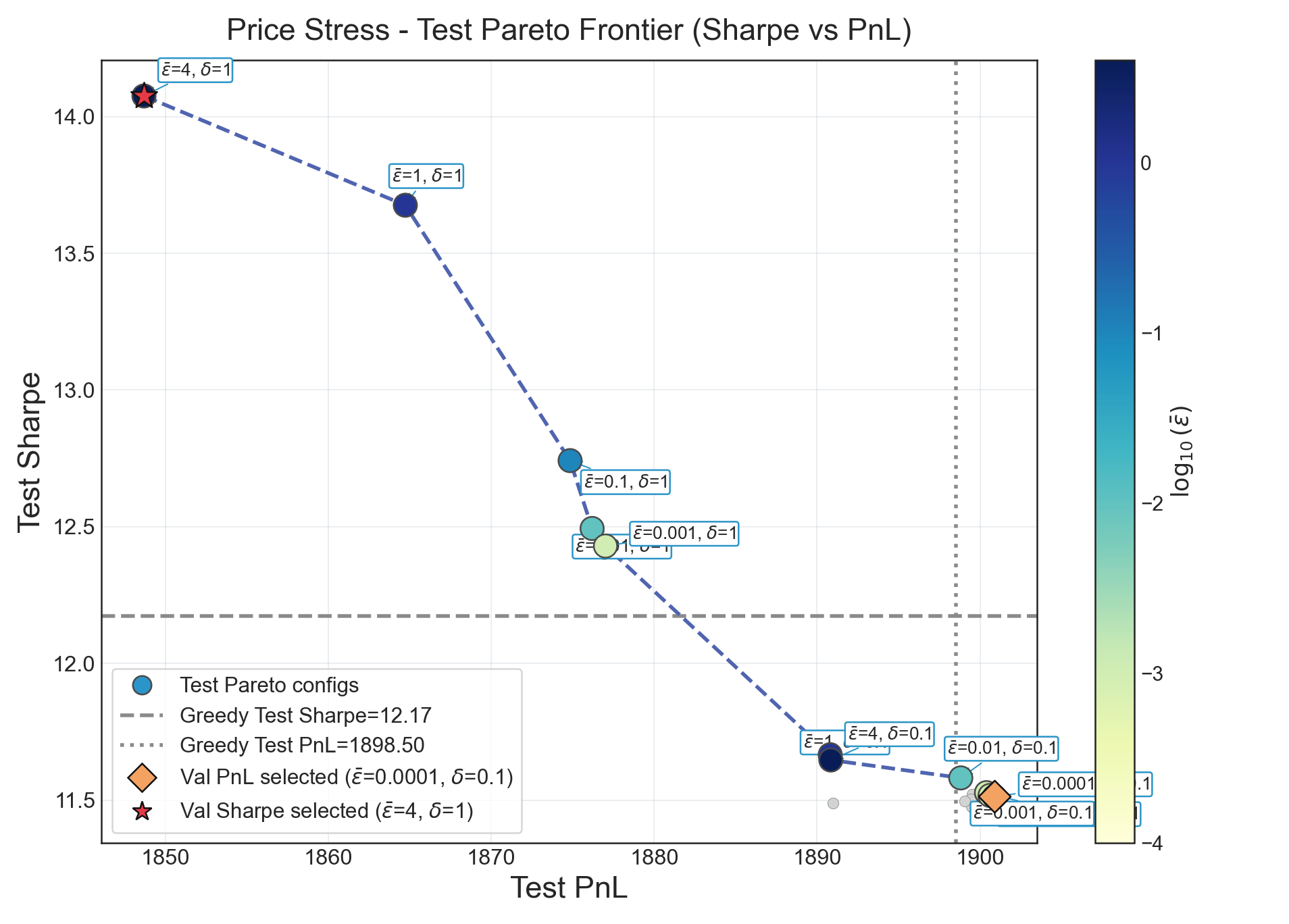}
            \caption{Price stress, PnL vs Sharpe}
        \end{subfigure}
        \begin{subfigure}{0.31\linewidth}
            \includegraphics[width=\linewidth]{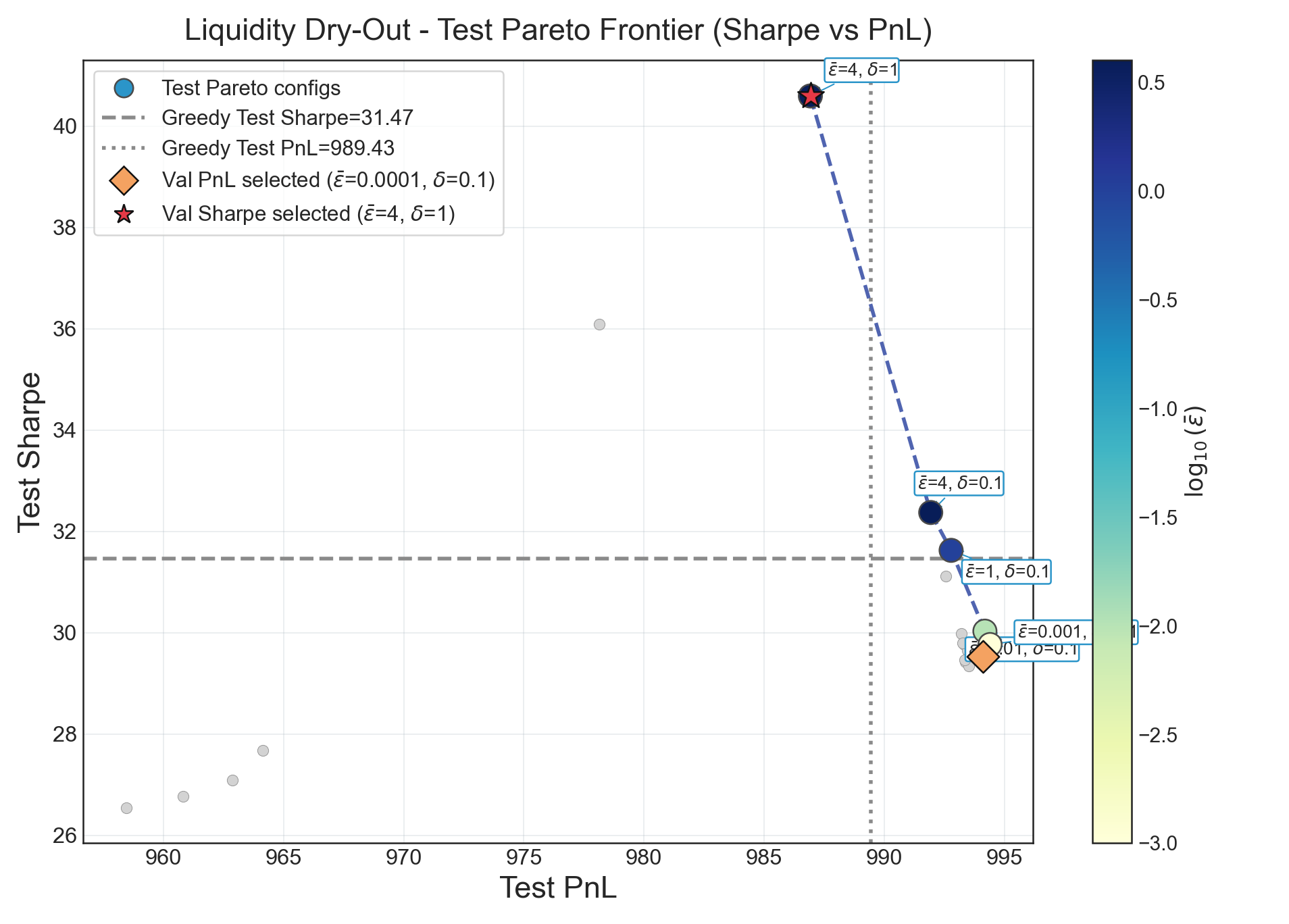}
            \caption{Liquidity dry-out, PnL vs Sharpe}
        \end{subfigure}

        \par\vspace{0.1em}

        \begin{subfigure}{0.31\linewidth}
            \includegraphics[width=\linewidth]{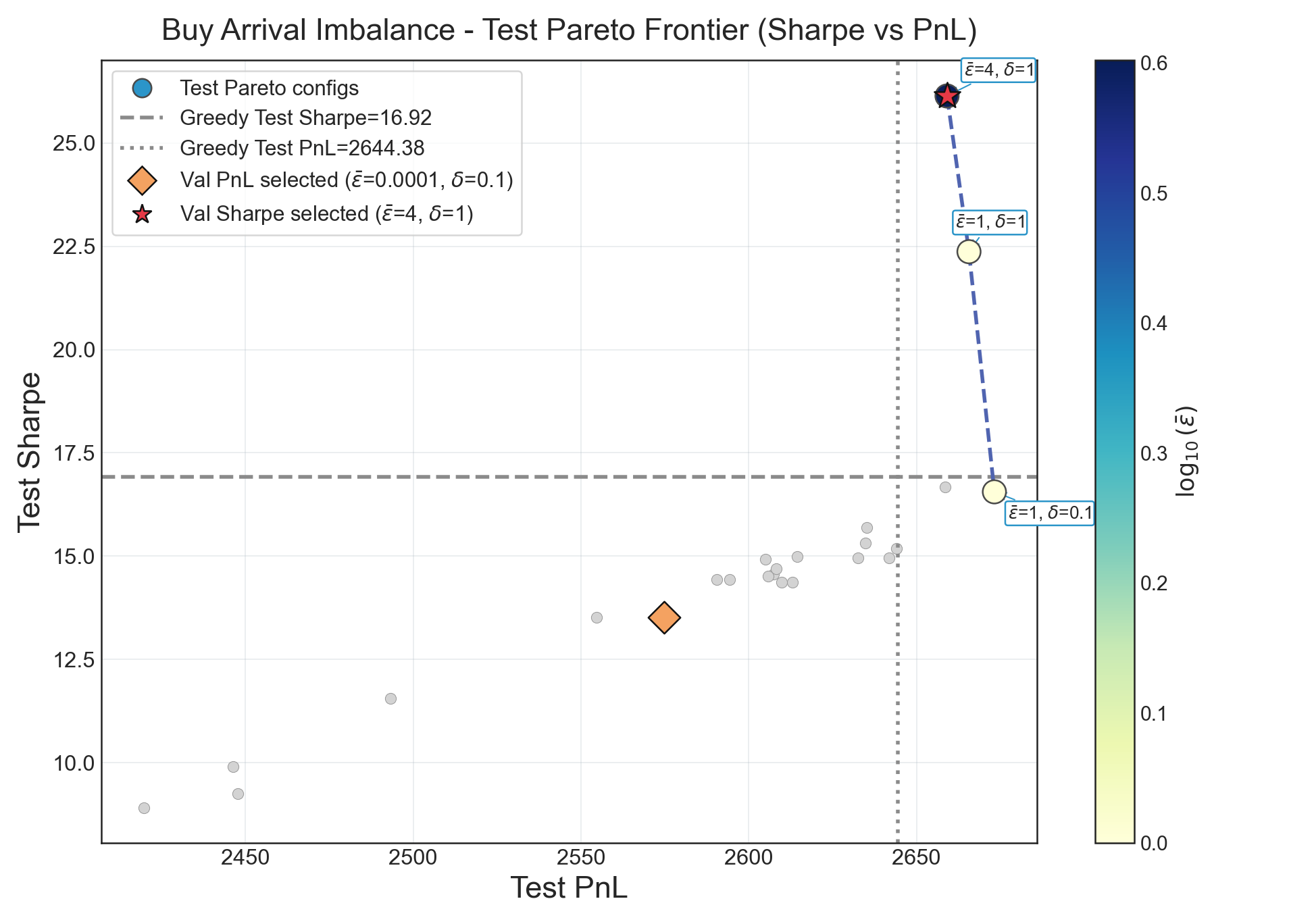}
            \caption{Buy arrival imbalance, PnL vs Sharpe}
        \end{subfigure}
        \begin{subfigure}{0.31\linewidth}
            \includegraphics[width=\linewidth]{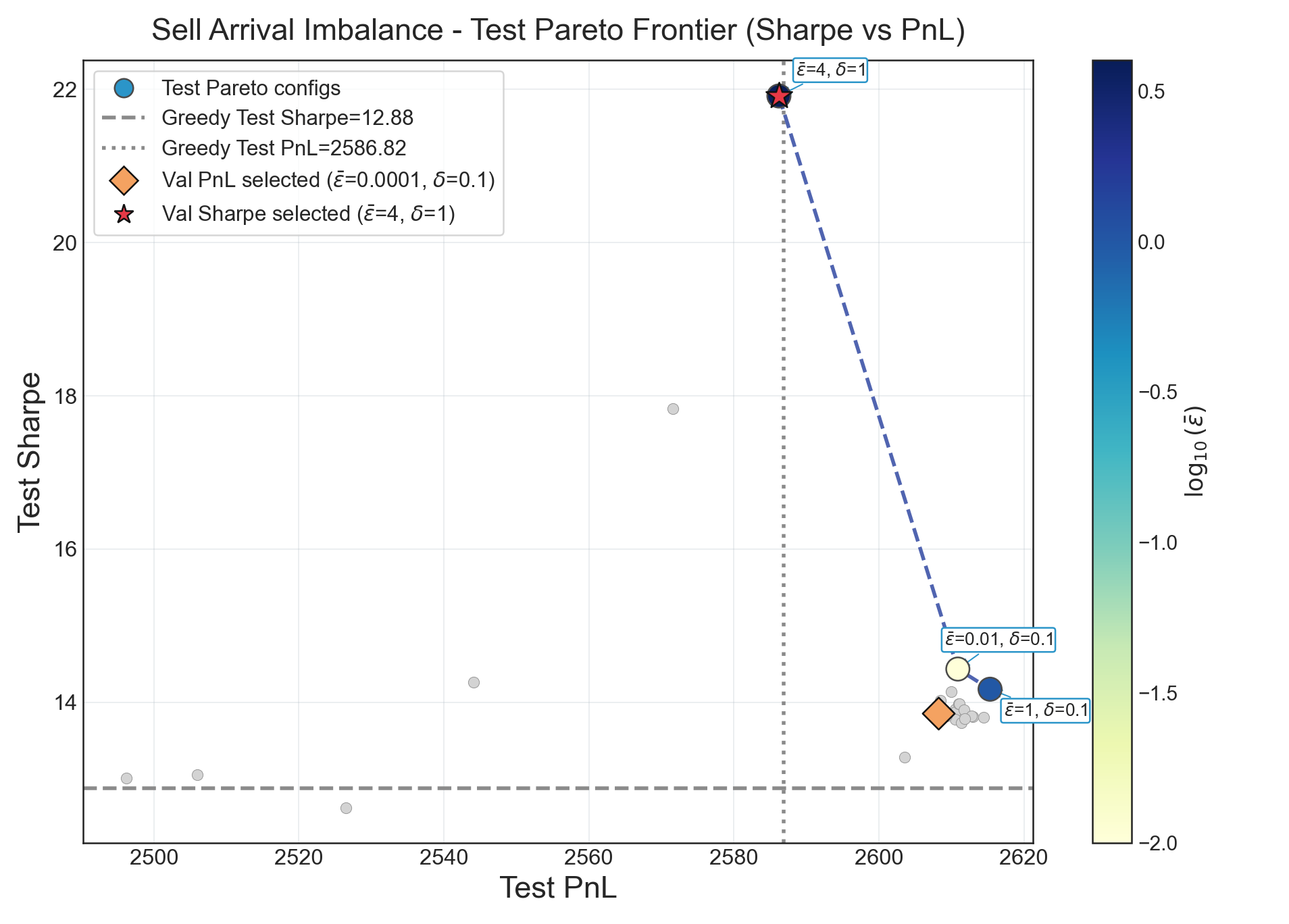}
            \caption{Sell arrival imbalance, PnL vs Sharpe}
        \end{subfigure}
        \begin{subfigure}{0.31\linewidth}
            \includegraphics[width=\linewidth]{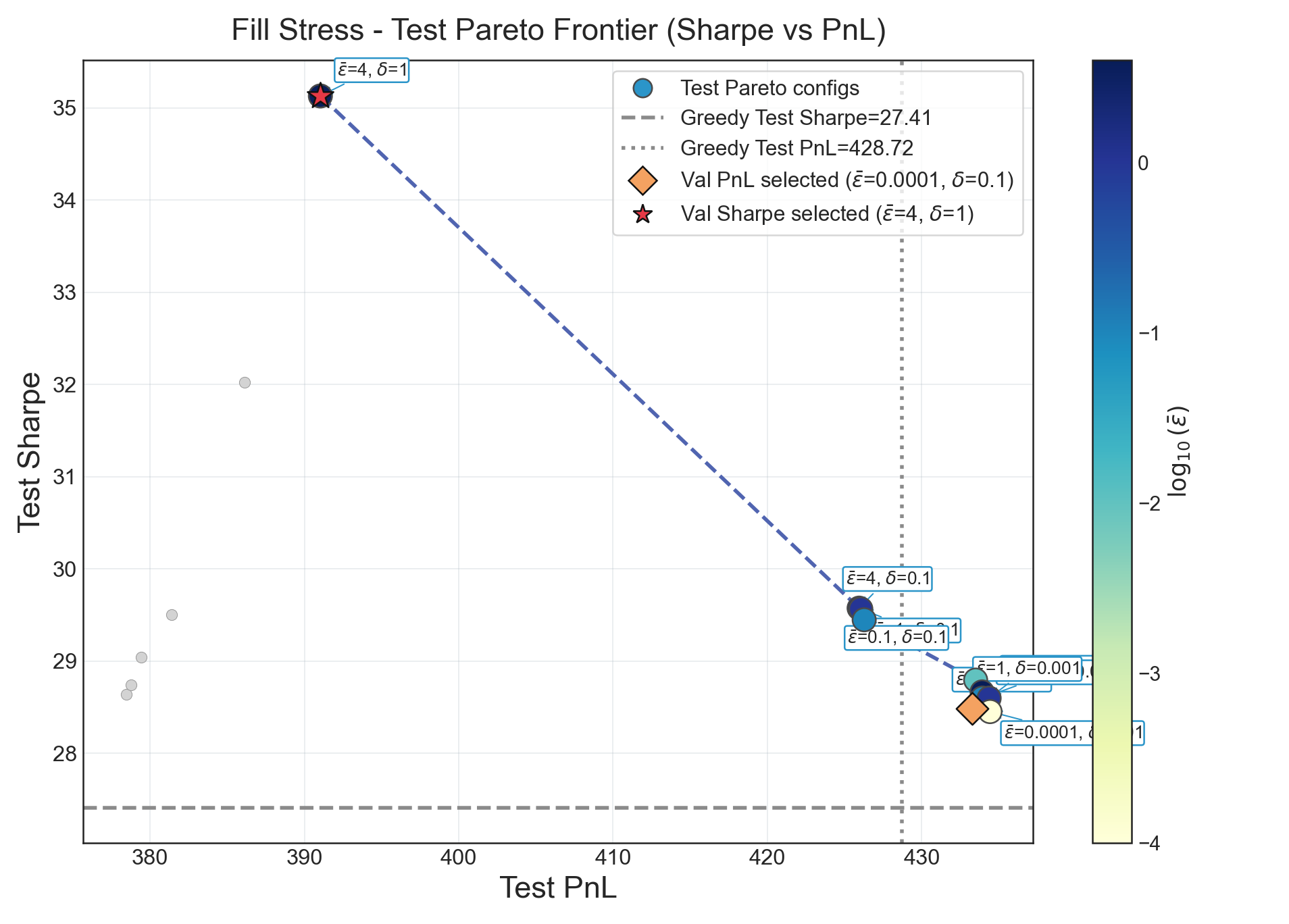}
            \caption{Fill stress, PnL vs Sharpe}
        \end{subfigure}

    \end{minipage}}
    \caption{Test-period Pareto frontiers in mean PnL-versus-Sharpe-ratio space for the six simulation stress scenarios. The panels show that the profitability--risk trade-off varies across scenarios, with some robustness choices delivering better risk-adjusted performance than others.}
    \label{fig:app_sim_pareto}
\end{figure}

\subsection{Intraday Agent Quoting Behavior}

In this section, we present the agent behavior plots in the six scenarios. Overall, the parameters $(\bar{\varepsilon},\delta)$ jointly shape sequential market-making behavior, ranging from more aggressive liquidity provision to more defensive inventory control. Intuitively, $\varepsilon$ is the original uncertainty budget, while $\bar{\varepsilon}$ is the adjusted uncertainty budget that enters the Sinkhorn-robust formulation. The parameter $\delta$ governs how the robustness adjustment is distributed within that budget. A smaller $\delta$ localizes the perturbation kernel around the reference next state and makes the entropic reweighting sharper within that neighborhood, whereas a larger $\delta$ spreads both the kernel and the reweighting more diffusely across perturbed future states. The four plotted policies therefore correspond to economically distinct sequential behaviors: the greedy policy is the least sensitive to uncertainty; $(\bar{\varepsilon},\delta)=(0.0001,0.1)$ combines a small uncertainty budget with a localized robustness adjustment; $(0.0001,1)$ keeps the same budget but spreads the robustness adjustment more broadly across perturbed future states, inducing more diffuse hedging; and $(4,1)$ combines a larger uncertainty budget with a more broadly distributed robustness adjustment, generating the strongest robust response.

The figures suggest that $\delta$ plays the dominant role in shaping sequential policy behavior. While increasing $\bar{\varepsilon}$ enlarges the adjusted uncertainty budget, the resulting policy adjustments remain relatively modest within the considered parameter range. By contrast, increasing $\delta$ substantially alters spread placement, execution intensity, and inventory stabilization. Economically, this indicates that, for the market-making dynamics considered here, how robustness is distributed within the uncertainty budget matters more than the budget size itself.


Figure \ref{fig:app_sim_agent_behav_stable} showcases the quoting behavior in the stable scenario. Figure~\ref{fig:app_sim_agent_behav_stable_action} examines the state-conditional structure of the robust Sharpe policy in the stable scenario. The stable baseline already reveals the fundamental mechanism of robustness. The policy $(0.0001,0.1)$ remains relatively close to the greedy benchmark, preserving tight spreads and strong execution intensity throughout the trading day. By contrast, increasing $\delta$ from $0.1$ to $1$ immediately generates visibly smoother spread adjustments and lower inventory fluctuations even when $\bar{\varepsilon}$ remains fixed at $0.0001$. The policy $(4,0.1)$ behaves similarly to $(0.0001,0.1)$, indicating that, within this localized regime, enlarging the adjusted uncertainty budget generates comparatively smaller behavioral changes than broadening the robustness adjustment through $\delta$. Importantly, all robust policies continue responding to economically meaningful market signals, including order arrivals, order-flow imbalance, volume pressure, and time variation. Robustness therefore preserves the economic structure of the sequential policy while changing the intensity and stability of the response. Examining the best Sharpe-validated robust policy $(\bar{\varepsilon},\delta)=(4,1)$ in greater detail reveals a characteristic intraday pattern: the agent progressively widens spreads and reduces quoted quantities toward the end of the trading session as terminal liquidation risk intensifies, while maintaining more competitive spreads during the trading day to sustain sufficient fill rates. This time-varying adjustment demonstrates that the robust agent responds meaningfully to the evolving risk profile across the session rather than applying a static conservative shift.

\begin{figure}[htbp]
    \centering
    \includegraphics[width=0.72\linewidth]
    {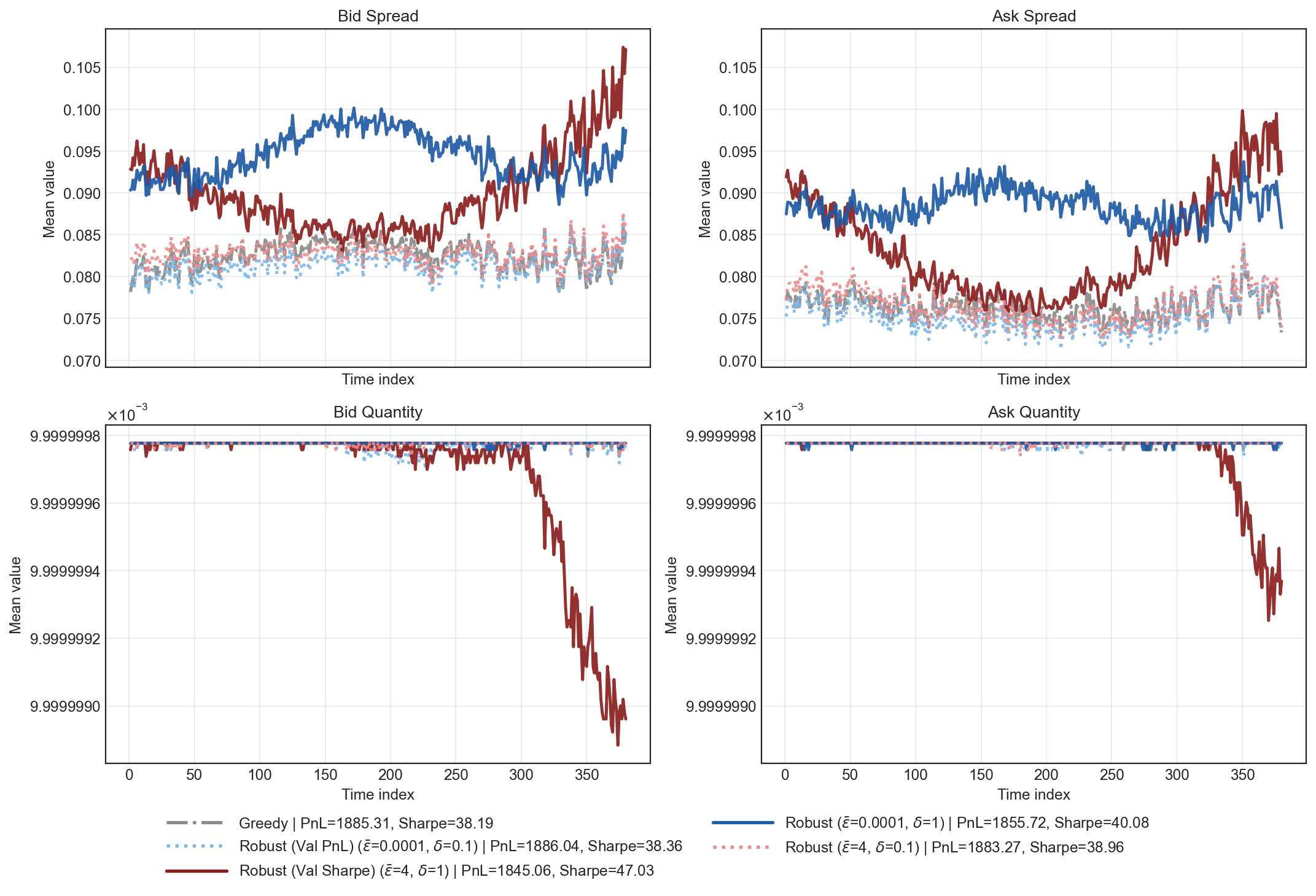}
    \caption{Intraday average bid spread, ask spread, bid quantity, and ask quantity in the stable baseline scenario for the greedy benchmark and the four robust policies. The figure shows that increasing $\delta$ has a stronger effect than increasing $\bar{\varepsilon}$, mainly through competitive spread adjustments and more selective end-of-day quoting.}
    \label{fig:app_sim_agent_behav_stable}
\end{figure}

\begin{figure}[htbp]
    \centering
    \includegraphics[width=1\linewidth]{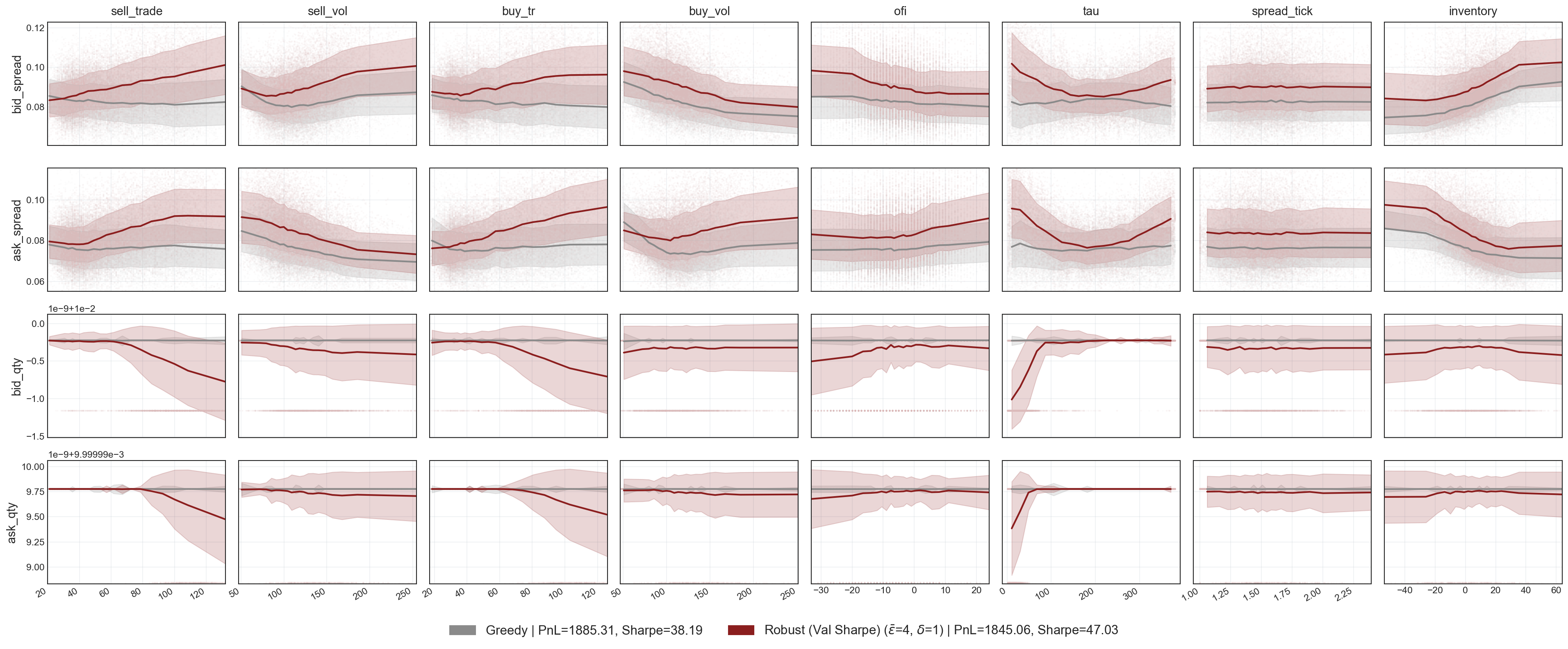}
    \caption{State-conditional mean bid quantity, ask quantity, bid spread, and ask spread of the robust and greedy agents in the stable baseline scenario, with $\pm1$ s.d.\ bands. The robust policy preserves the same broad state dependence as the greedy benchmark but becomes more selective in higher-risk states.}
    \label{fig:app_sim_agent_behav_stable_action}
\end{figure}

Figure \ref{fig:app_sim_agent_behav_dryout} showcases the quoting behavior in the liquidity dry-out scenario. Figure~\ref{fig:app_sim_agent_behav_dryout_action} examines the state-conditional structure of the robust Sharpe policy in the liquidity dry-out scenario. The liquidity dry-out scenario reveals that robustness is not uniformly conservative. The policy $(0.0001,0.1)$ keeps the spreads relatively close to the greedy benchmark in order to preserve execution opportunities under thinner market conditions. Increasing $\delta$ to $1$ produces substantially stronger stabilization, wider spreads, and lower inventory fluctuations. The incremental effect of increasing $\bar{\varepsilon}$ from $0.0001$ to $4$ remains comparatively modest relative to the behavioral transition induced by $\delta$. Robustness therefore balances two competing objectives: preserving fills versus reducing exposure to uncertain execution conditions. The best Sharpe-validated policy $(\bar{\varepsilon},\delta)=(4,1)$ exhibits a notably different intraday pattern from the stable and price-stress scenarios. Rather than uniformly widening spreads, the agent actively narrows spreads below the greedy benchmark during the trading day, responding to the heightened risk of insufficient executions in thin market conditions. By quoting more competitively, the agent prioritizes fill rate preservation when liquidity is scarce. Nevertheless, the agent still widens spreads and reduces quoted quantities toward the end of the trading session, maintaining awareness of terminal liquidation risk. This nuanced balance between competitive intraday quoting and end-of-session caution reflects a meaningful and economically coherent response to the specific structure of dry-out uncertainty.

\begin{figure}[htbp]
    \centering
    \includegraphics[width=0.72\linewidth]
    {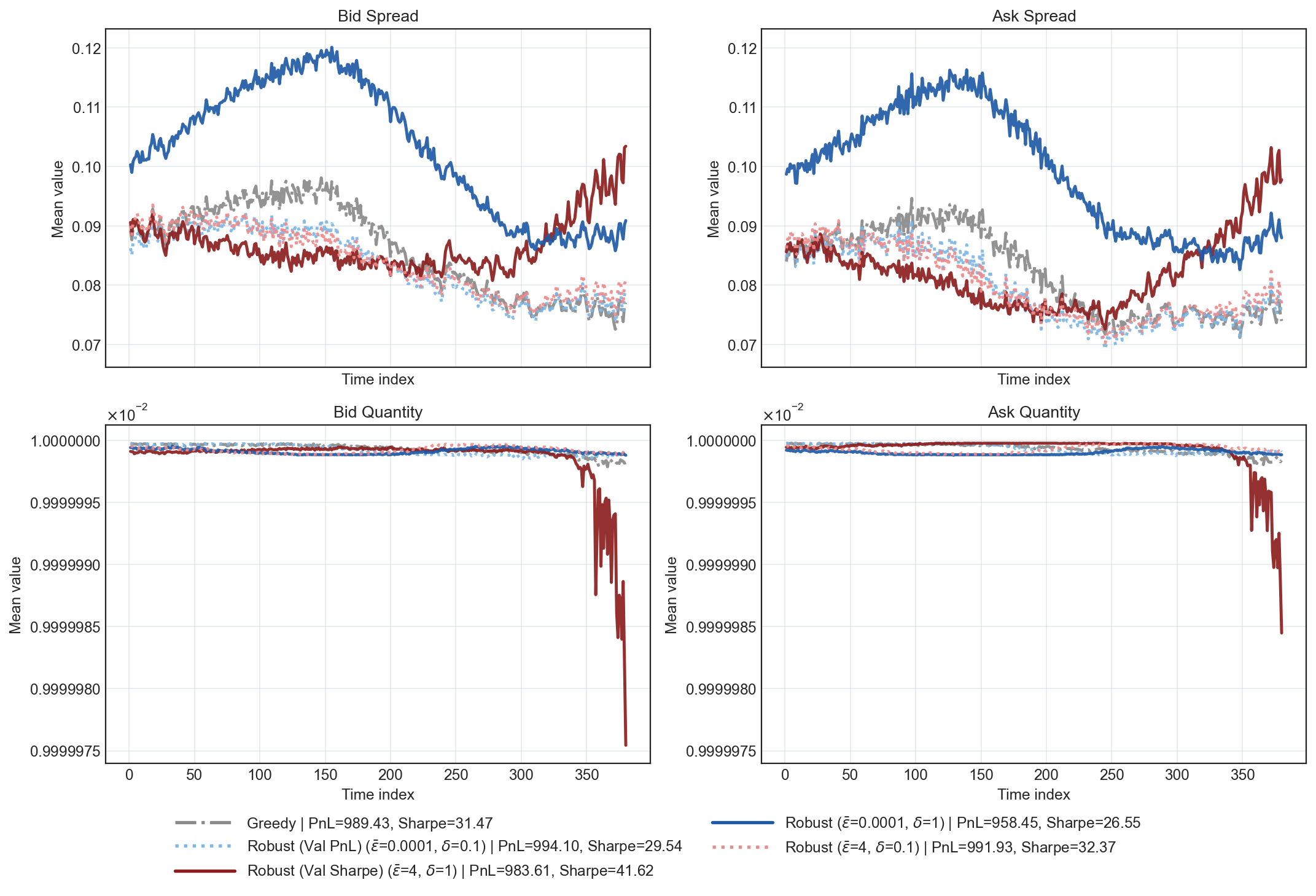}
    \caption{Intraday average bid spread, ask spread, bid quantity, and ask quantity in the liquidity dry-out scenario for the greedy benchmark and the four robust policies. The figure shows that stronger robustness mainly affects the policy through $\delta$, with the best robust policy quoting more competitively during the trading day to preserve fills under thin liquidity and more conservatively at the end of the day.}
    \label{fig:app_sim_agent_behav_dryout}
\end{figure}

\begin{figure}[htbp]
    \centering
    \includegraphics[width=1\linewidth]{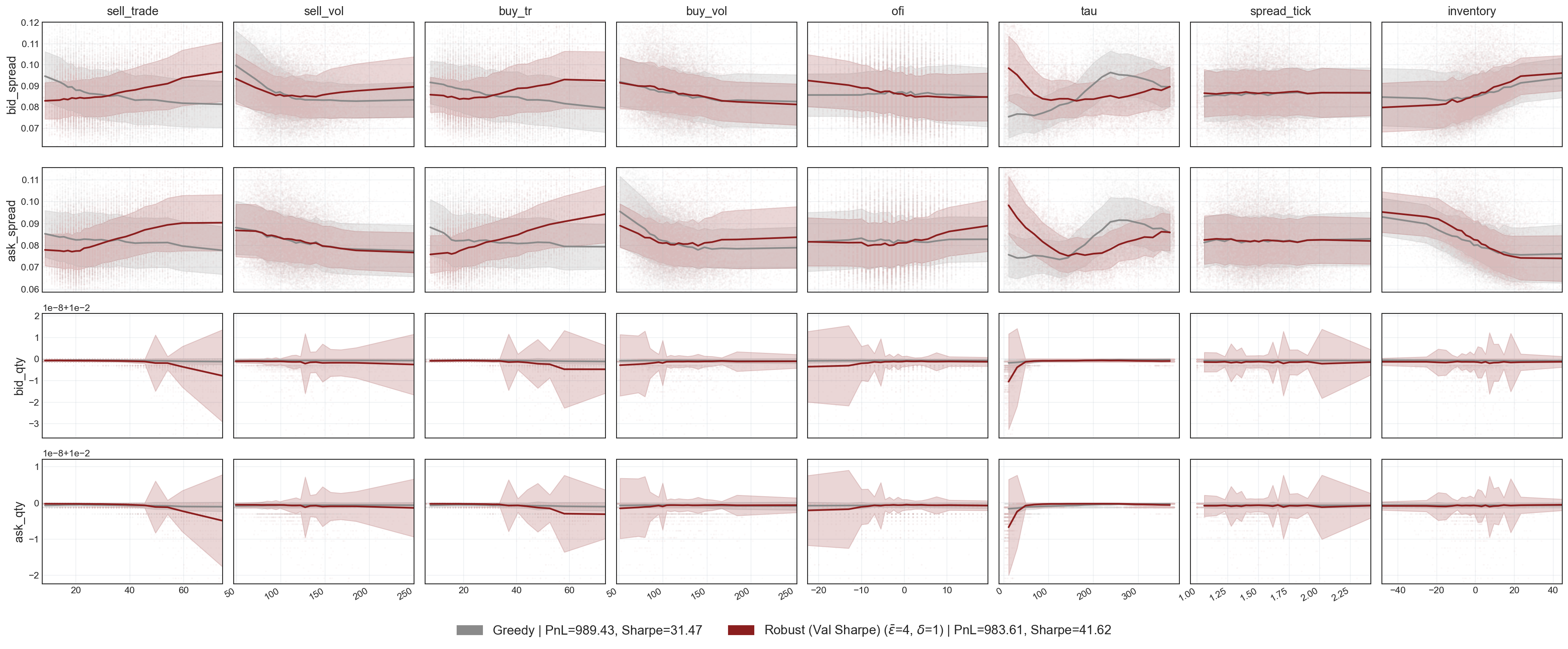}
    \caption{State-conditional mean bid quantity, ask quantity, bid spread, and ask spread of the robust and greedy agents in the liquidity dry-out scenario, with $\pm1$ s.d.\ bands. The robust policy remains sensitive to the same broad state signals as the greedy benchmark but quotes more competitively in states where preserving execution opportunities is especially important.}
    \label{fig:app_sim_agent_behav_dryout_action}
\end{figure}

Figure \ref{fig:app_sim_agent_buy_arr} showcases the quoting behavior in the buy arrival imbalance scenario. Figure~\ref{fig:app_sim_agent_behav_buy_arr_action} examines the state-conditional structure of the robust Sharpe policy in the buy arrival imbalance scenario. It provide the clearest illustration of how robustness reshapes sequential adaptation. Under buy-arrival imbalance, persistent buy market orders generate strong ask-side execution pressure and large negative inventory accumulation. As $\delta$ increases, the policy reacts by widening ask spreads more aggressively, thereby slowing inventory depletion and reducing directional exposure. The transition from $(0.0001,0.1)$ to $(0.0001,1)$ produces visibly larger changes than the transition from $(0.0001,1)$ to $(4,1)$, again suggesting that the distribution of the robustness adjustment matters more than the size of the adjusted uncertainty budget within the considered parameter range. The best Sharpe-validated policy $(\bar{\varepsilon},\delta)=(4,1)$ responds to persistent buy pressure with a directionally targeted adjustment: rather than broadly widening spreads, the agent selectively manages ask-side exposure by exploiting the predictable directionality of the order flow while keeping spreads sufficiently tight to maintain execution opportunities. Terminal liquidation risk shapes end-of-session behavior, as the agent tightens inventory positions when the horizon shortens. This directionally specific response confirms that robust agents adapt to the sign and structure of the imbalance signal rather than applying generic conservatism.

\begin{figure}[htbp]
    \centering
    \includegraphics[width=0.72\linewidth]
    {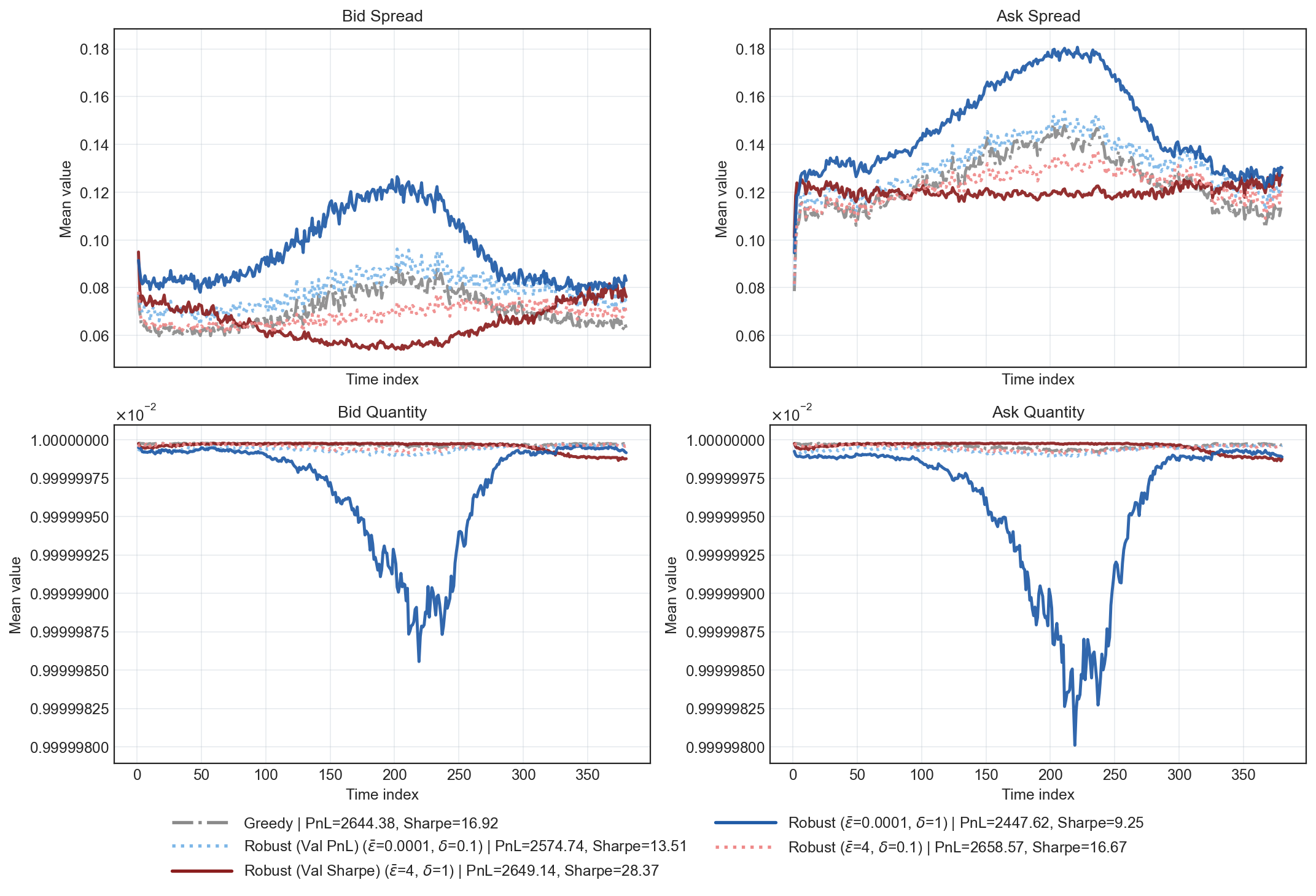}
    \caption{Intraday average bid spread, ask spread, bid quantity, and ask quantity in the buy-arrival-imbalance scenario for the greedy benchmark and the four robust policies. The figure shows that stronger robustness mainly affects the policy through $\delta$, with the best robust policy responding to persistent buy pressure by managing ask-side exposure more selectively.}
    \label{fig:app_sim_agent_buy_arr}
\end{figure}

\begin{figure}[htbp]
    \centering
    \includegraphics[width=1\linewidth]{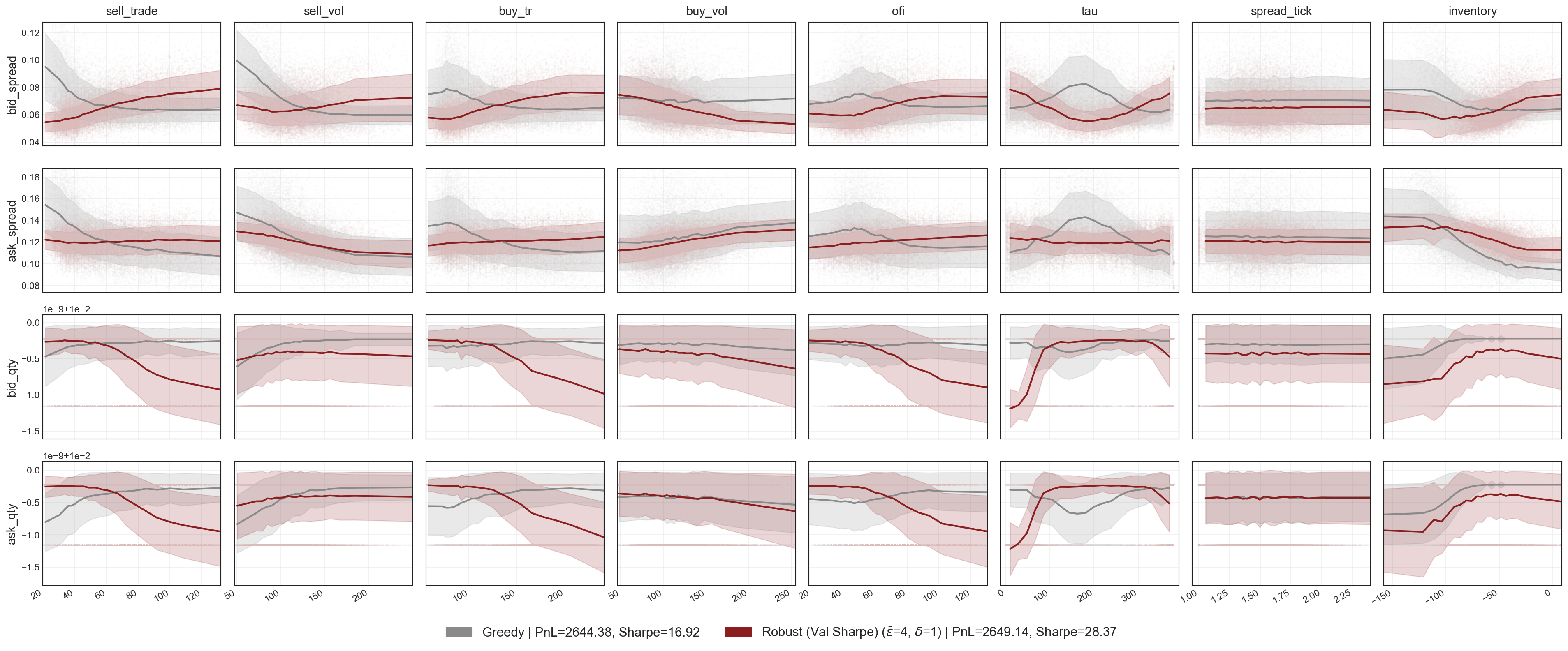}
    \caption{State-conditional mean bid quantity, ask quantity, bid spread, and ask spread of the robust and greedy agents in the buy-arrival-imbalance scenario, with $\pm1$ s.d.\ bands. The robust policy preserves the directional dependence on the state variables but adjusts the ask side more selectively in response to persistent buy pressure.}
    \label{fig:app_sim_agent_behav_buy_arr_action}
\end{figure}

Figure \ref{fig:app_sim_agent_sell_arr} showcases the quoting behavior in the sell arrival imbalance scenario. Figure~\ref{fig:app_sim_agent_sell_arr_action} examines the state-conditional structure of the robust Sharpe policy in the sell arrival imbalance scenario. The sell arrival imbalance scenario exhibits the mirror image. Persistent sell market orders create positive inventory accumulation through repeated bid-side executions. Increasing $\delta$ substantially widens bid spreads and stabilizes inventory trajectories, whereas increasing $\bar{\varepsilon}$ produces comparatively smaller additional adjustments. The best Sharpe-validated policy $(\bar{\varepsilon},\delta)=(4,1)$ mirrors the buy-arrival response but on the bid side: the agent selectively manages bid-side exposure generated by the persistent sell pressure while maintaining competitive ask spreads to preserve execution opportunities on the other side. The directional symmetry between the buy- and sell-arrival responses confirms that the robust policy adapts specifically to the sign of the imbalance signal, and terminal liquidation risk again governs the end-of-session spread and quantity dynamics. The figures therefore show that robustness modifies policy behavior in a directionally meaningful way rather than through simple uniform conservatism.

\begin{figure}[htbp]
    \centering
    \includegraphics[width=0.72\linewidth]
    {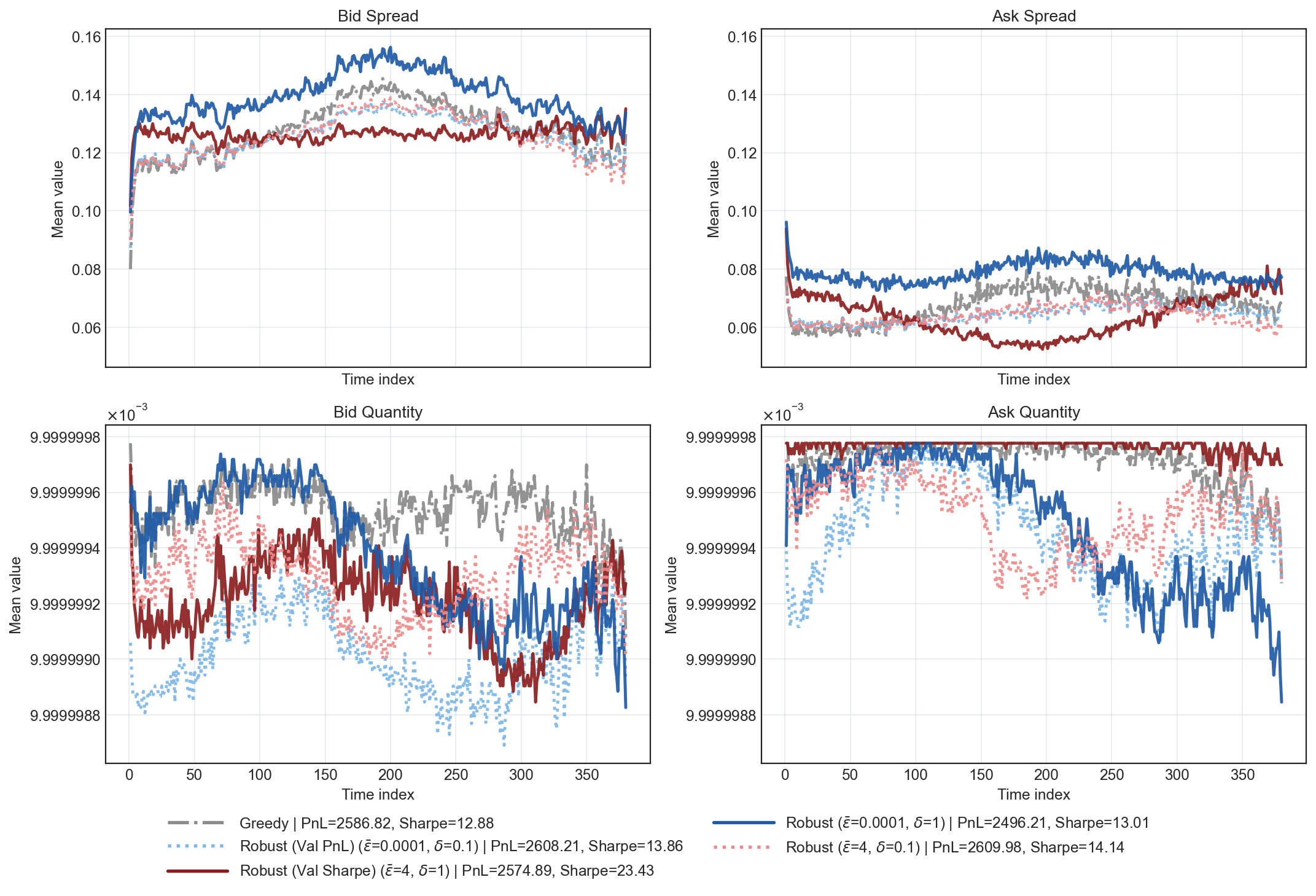}
    \caption{Intraday average bid spread, ask spread, bid quantity, and ask quantity in the sell-arrival-imbalance scenario for the greedy benchmark and the four robust policies. The figure shows that stronger robustness mainly affects the policy through $\delta$, with the best robust policy responding to persistent sell pressure by managing bid-side exposure more selectively.}
    \label{fig:app_sim_agent_sell_arr}
\end{figure}

\begin{figure}[htbp]
    \centering
    \includegraphics[width=1\linewidth]{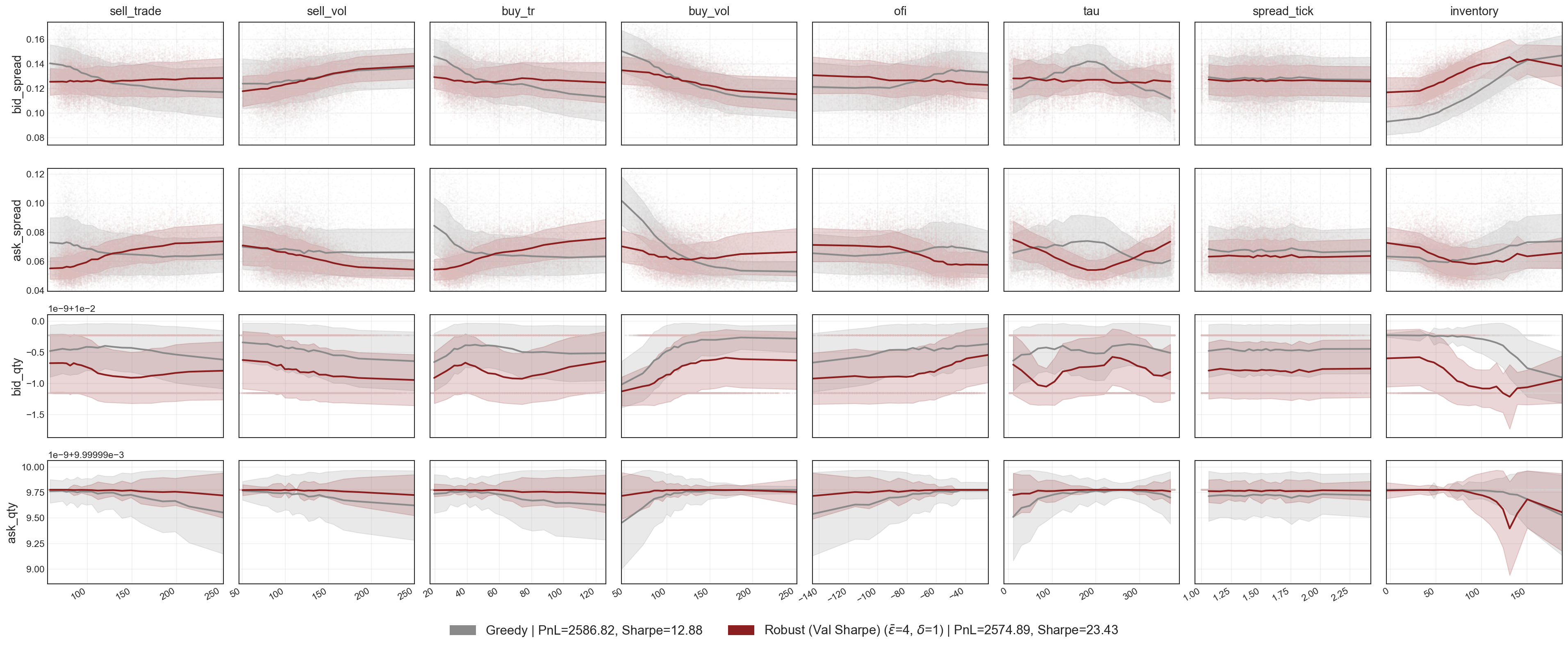}
    \caption{State-conditional mean bid quantity, ask quantity, bid spread, and ask spread of the robust and greedy agents in the sell-arrival-imbalance scenario, with $\pm1$ s.d.\ bands. The robust policy preserves the directional dependence on the state variables but adjusts the bid side more selectively in response to persistent sell pressure.}
    \label{fig:app_sim_agent_sell_arr_action}
\end{figure}

Figure \ref{fig:app_sim_agent_fill_stress} showcases the quoting behavior in the fill stress scenario. Figure~\ref{fig:app_sim_agent_fill_stress_action} examines the state-conditional structure of the robust Sharpe policy in the fill stress scenario. The fill stress scenario provides a different form of uncertainty. Here the state dynamics remain relatively stable, but execution quality deteriorates. The policy $(0.0001,0.1)$ largely preserves execution opportunities, while larger values of $\delta$ progressively shift the policy toward more defensive quoting and smoother risk exposure. Once again, the additional behavioral effect induced by increasing $\bar{\varepsilon}$ remains comparatively limited relative to the changes generated by $\delta$. The best Sharpe-validated policy $(\bar{\varepsilon},\delta)=(4,1)$ responds by widening spreads substantially on both bid and ask sides, demanding greater compensation for fill uncertainty. This stands in direct contrast to the liquidity dry-out scenario, where the agent narrowed spreads to compete for scarce executions: under fill stress, the source of execution risk is unreliable fill quality rather than thin liquidity, so the agent widens spreads to extract higher compensation per fill rather than competing more aggressively for volume. Terminal liquidation risk further reinforces spread widening and quantity reduction toward session end. The contrasting responses to two structurally different forms of execution uncertainty illustrate that robust agents adapt to the nature of the risk rather than applying a uniform defensive posture.

\begin{figure}[htbp]
    \centering
    \includegraphics[width=0.72\linewidth]
    {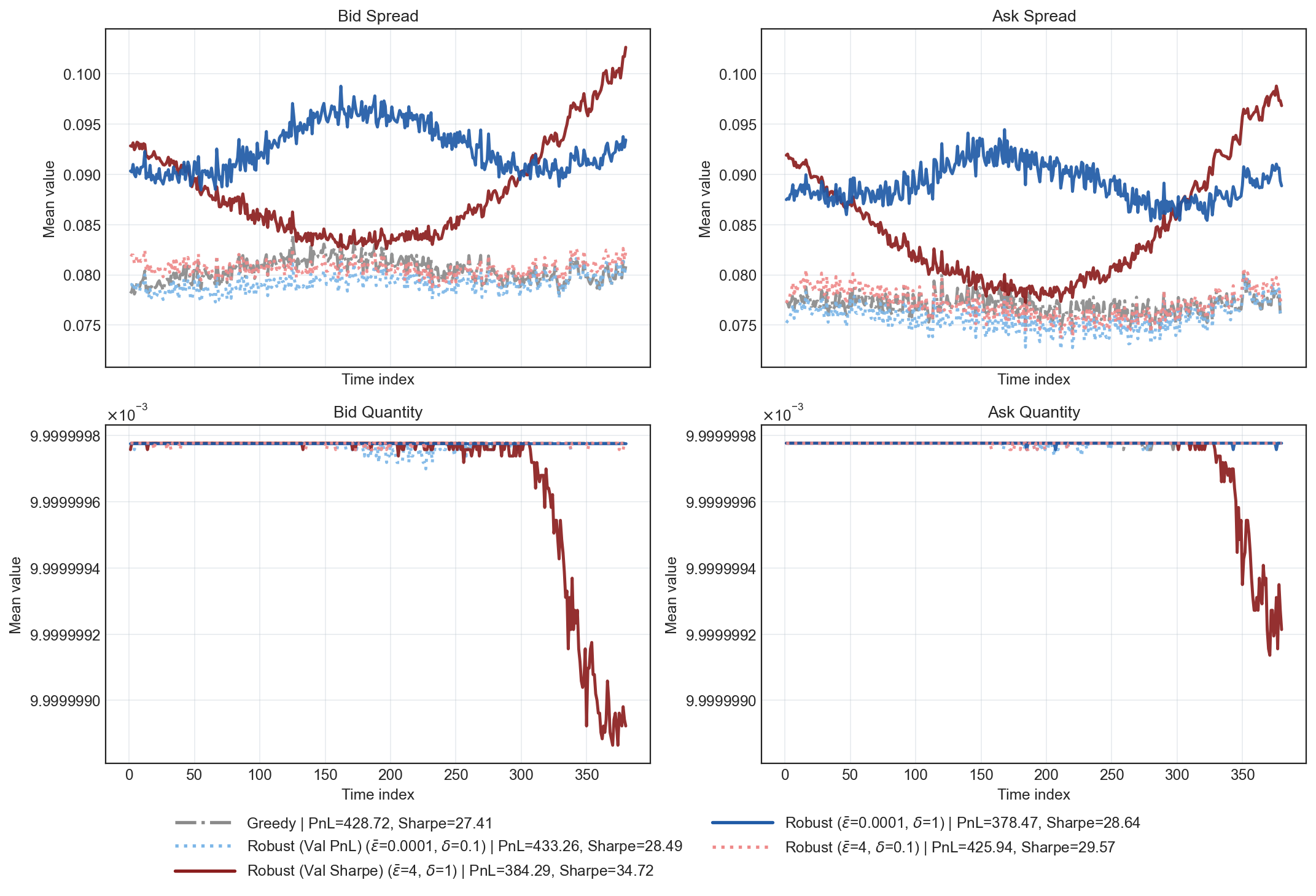}
    \caption{Intraday average bid spread, ask spread, bid quantity, and ask quantity in the fill-stress scenario for the greedy benchmark and the four robust policies. The figure shows that stronger robustness mainly affects the policy through $\delta$, with the best robust policy widening spreads on both sides to demand greater compensation for execution uncertainty.}
    \label{fig:app_sim_agent_fill_stress}
\end{figure}

\begin{figure}[H]
    \centering
    \includegraphics[width=1\linewidth]{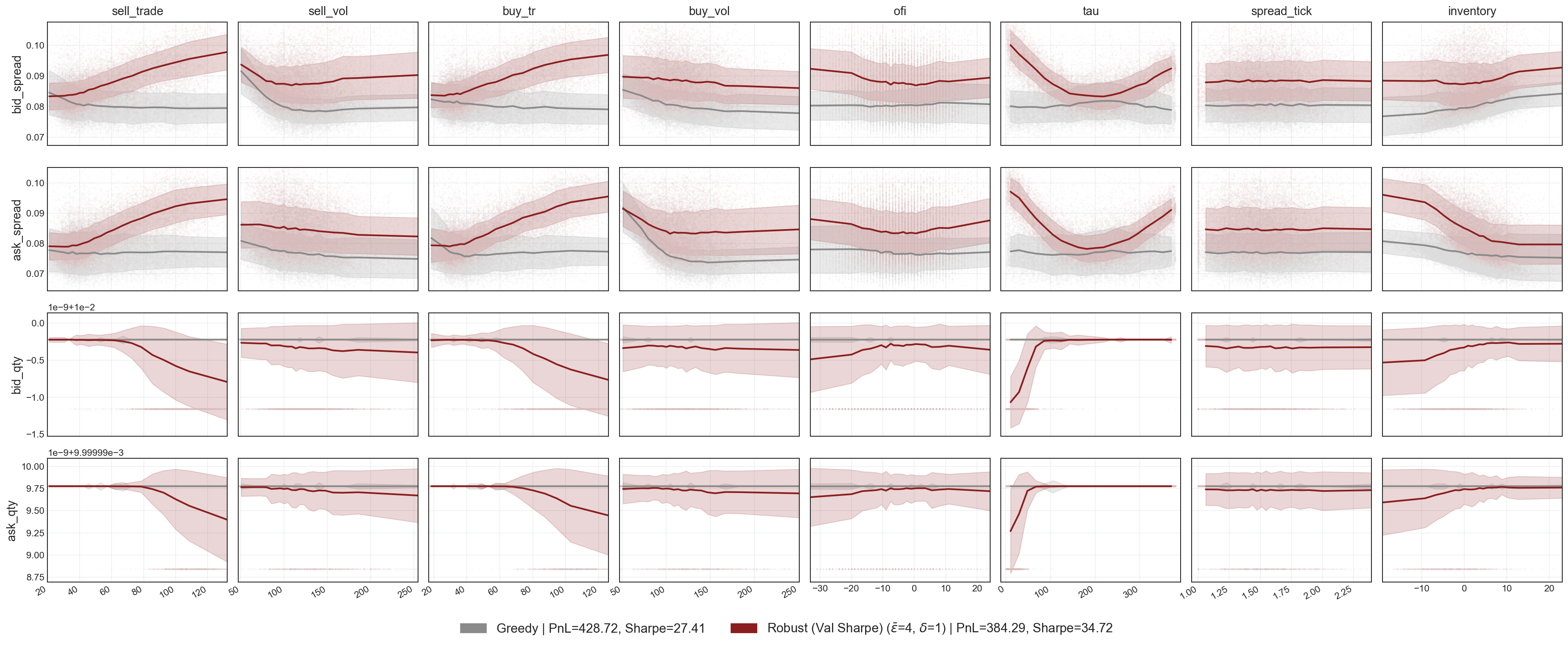}
    \caption{State-conditional mean bid quantity, ask quantity, bid spread, and ask spread of the robust and greedy agents in the fill-stress scenario, with $\pm1$ s.d.\ bands. The robust policy preserves the same broad state dependence as the greedy benchmark but adjusts both spreads and quantities more defensively when execution uncertainty is elevated.}
    \label{fig:app_sim_agent_fill_stress_action}
\end{figure}
Figure~\ref{fig:app_sim_inventory} complements the quoting-behavior plots by showing how these policy adjustments propagate into inventory dynamics. Across scenarios, the robust policy typically produces more stable inventory paths and responds more selectively to persistent directional or execution-related risks, with the largest differences appearing in the stressed environments.

Taken together, the simulation figures support the central mechanism of the paper. Robustness does not simply make the market maker quote wider in all states. Instead, it reshapes sequential policy adaptation by dynamically reallocating the trade-off between liquidity provision, execution intensity, and inventory stabilization under evolving uncertainty.

\begin{figure}[htbp]
    \centering
    \makebox[\textwidth][c]{\begin{minipage}{0.95\textwidth}\centering
        \begin{subfigure}{0.5\linewidth}
            \includegraphics[width=\linewidth]{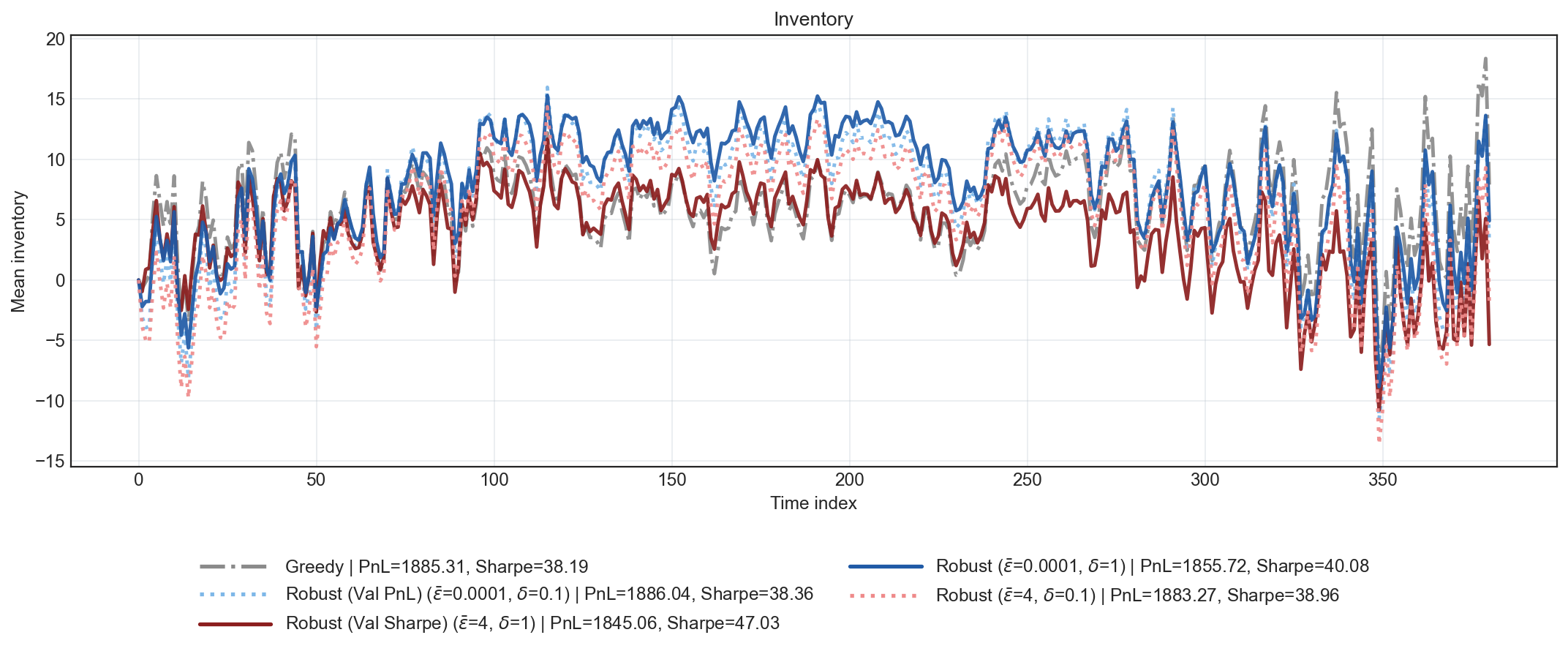}
            \caption{Stable, Inventory path}
        \end{subfigure}\hfill
        \begin{subfigure}{0.5\linewidth}
            \includegraphics[width=\linewidth]{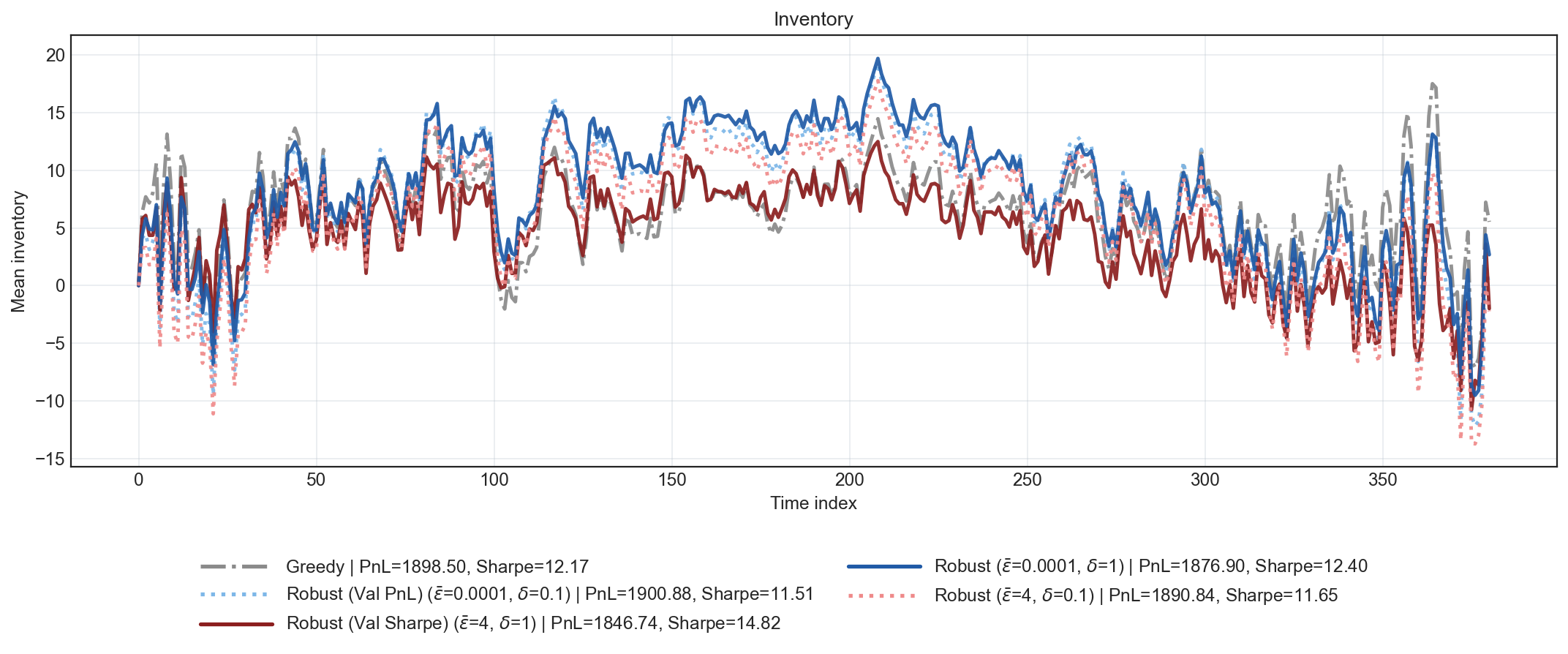}
            \caption{Price stress, Inventory path}
        \end{subfigure}

        \vspace{0.0em}

        \begin{subfigure}{0.5\linewidth}
            \includegraphics[width=\linewidth]{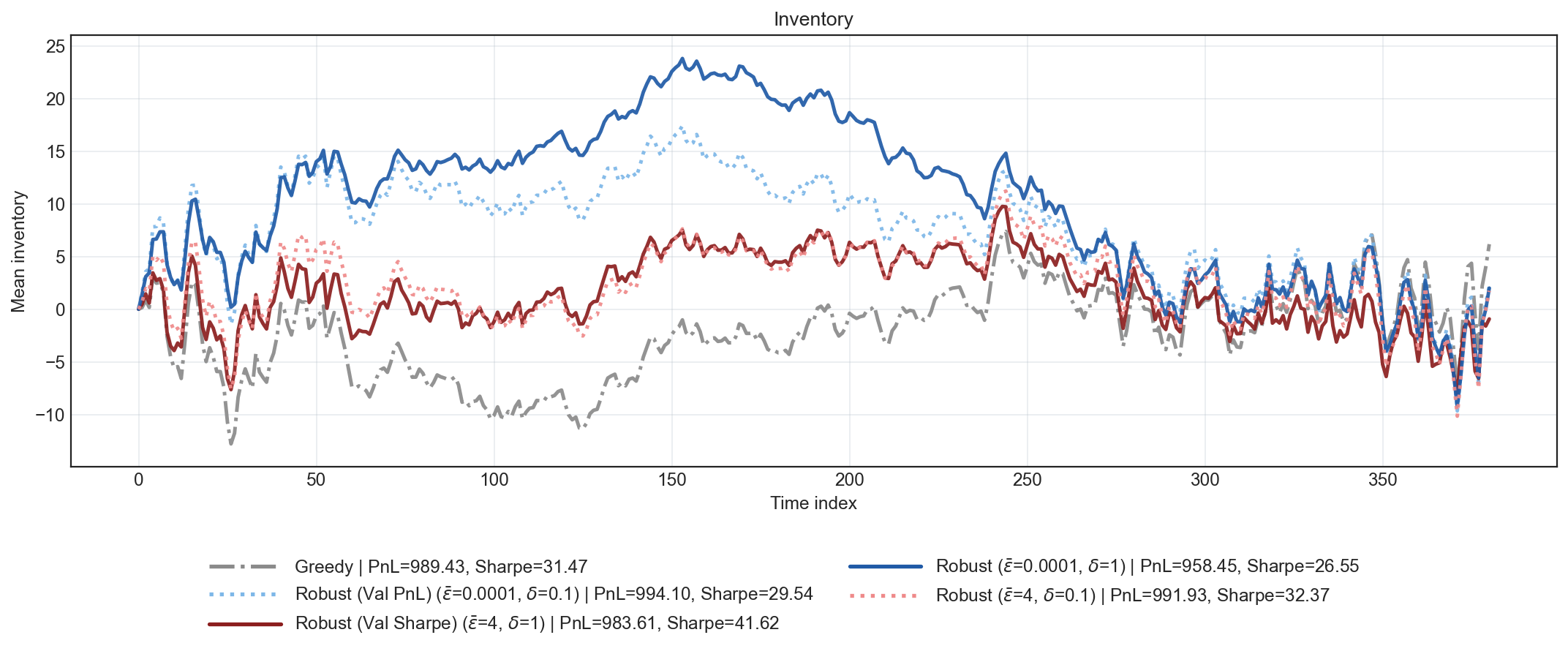}
            \caption{Liquidity dry-out, Inventory path}
        \end{subfigure}\hfill
        \begin{subfigure}{0.5\linewidth}
            \includegraphics[width=\linewidth]{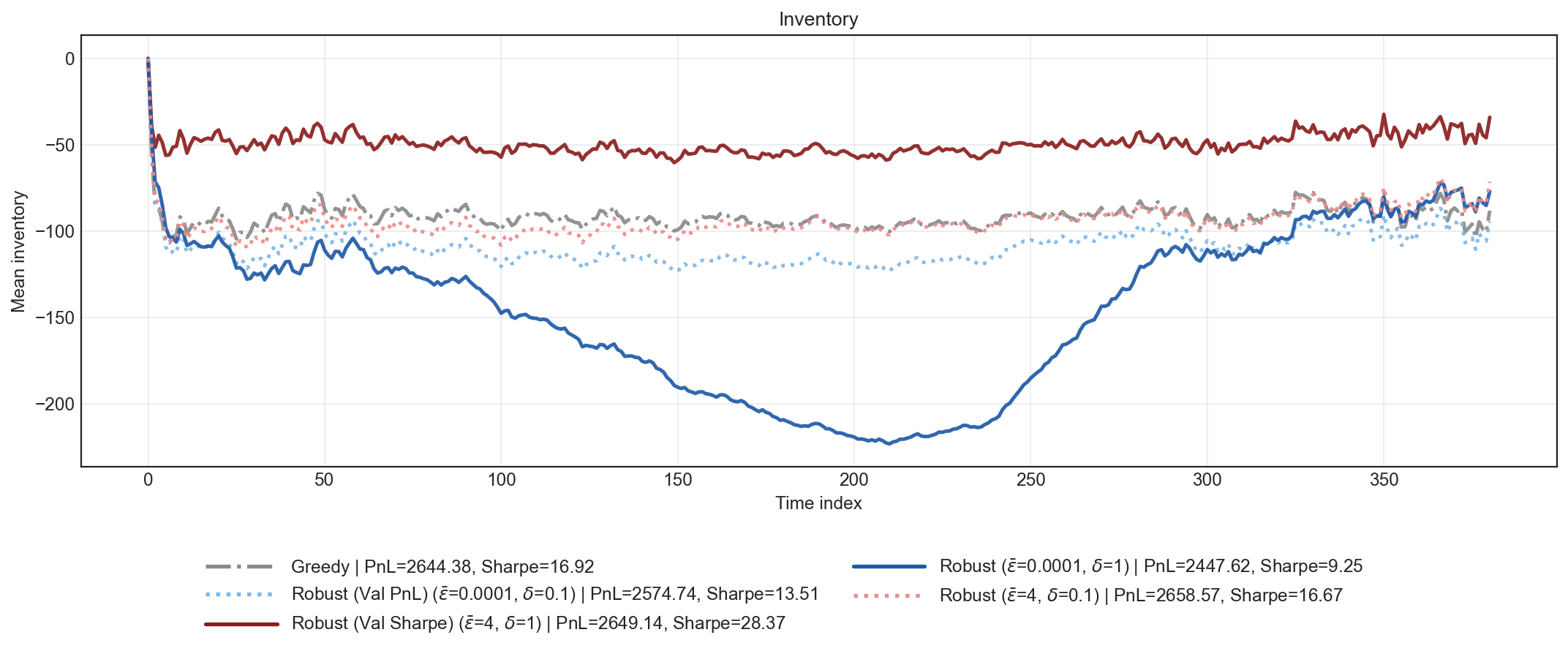}
            \caption{Buy arrival imbalance, Inventory path}
        \end{subfigure}

        \vspace{0.0em}
        
        \begin{subfigure}{0.5\linewidth}
            \includegraphics[width=\linewidth]{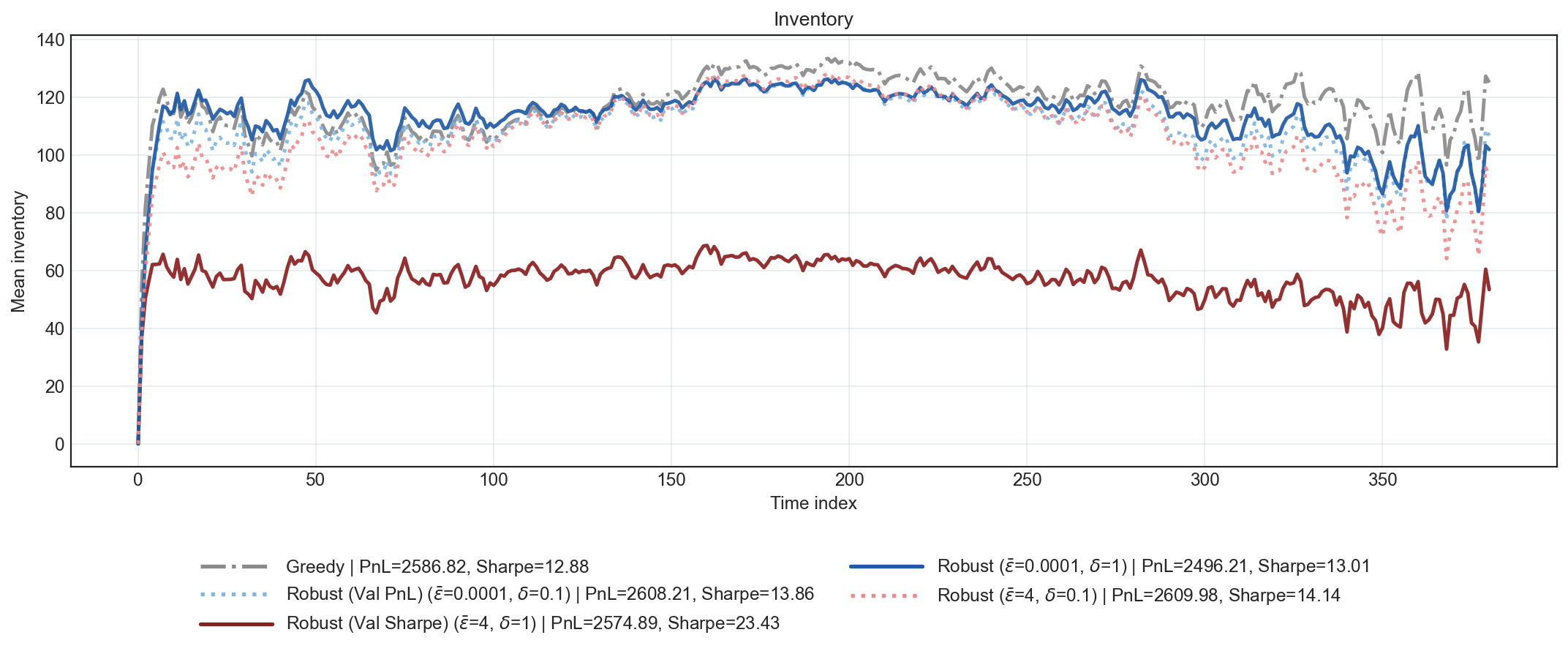}
            \caption{Sell arrival imbalance, Inventory path}
        \end{subfigure}\hfill
        \begin{subfigure}{0.5\linewidth}
            \includegraphics[width=\linewidth]{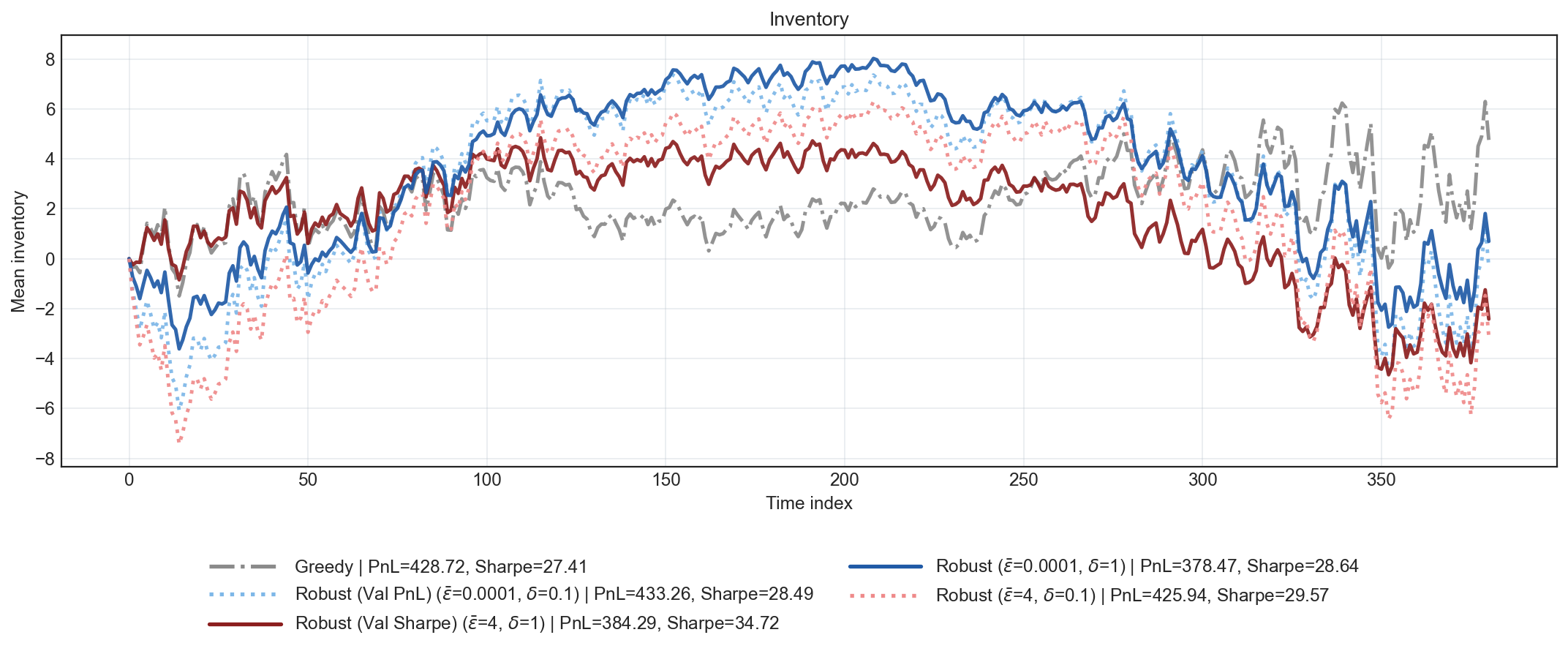}
            \caption{Fill stress, Inventory path}
        \end{subfigure}
    \end{minipage}}
    \caption{Intraday mean inventory paths of the robust and greedy agents in the six simulation scenarios. The panels show that the robust policy generally stabilizes inventory more effectively, with the largest reductions in inventory exposure appearing in the stressed scenarios.}
    \label{fig:app_sim_inventory}
\end{figure}

\section[Additional Numerical Results]{Additional Empirical Results on Real Data}\label{appendix:empirical_results}
\subsection{Empirical Analysis Design and Full Summary Statistics}\label{sec:full-summary}
For empirical analysis, we test along two dimensions \textendash{} liquidity (high vs.\ low) and return volatility (high vs.\ low) \textendash{} giving a $2\times 2$ design (Table~\ref{tab:universe}). In the empirical analysis below we report the high-liquidity cell, represented by AAPL (low volatility) and TSLA (high volatility), and the low-liquidity cell, presented by MKC (low volatility) and TWLO (high volatility).
The summary statistics in Table~\ref{tab:state-summary} confirm the $2\times 2$ classification.
Liquidity is captured by trade-arrival rates: AAPL and TSLA average 43--60 and 23--35 buy-side arrivals per interval, respectively, whereas MKC and TWLO average only 2.6 and 5.8--6.0, placing the latter two firmly in the low-liquidity tier.
Return volatility is measured by the standard deviation of mid-price returns: AAPL and MKC record standard deviations of $5.6$--$9.9\times10^{-4}$, roughly half those of TSLA and TWLO at $10.7$--$17.2\times10^{-4}$, confirming the low- and high-volatility groupings respectively.
For each stock, we adopt a train\textendash validation\textendash test protocol: hyperparameters are selected on validation data, and all reported numbers are out-of-sample. Strategies are compared under identical simulation settings and seeds, using mean P\&L, Sharpe ratio, maximum drawdown, and inventory volatility.

\begin{table}[H]
    \centering
    \caption{Stock universe: $2\times 2$ design over liquidity and return volatility.}
    \label{tab:universe}
    \begin{tabular}{l|cc}
        \hline
        \hline
        & \textbf{Low volatility} & \textbf{High volatility} \\
        \hline
        \textbf{High liquidity} & AAPL & TSLA \\
        \textbf{Low liquidity}  & MKC  & TWLO \\
        \hline
        \hline
    \end{tabular}
\end{table}

\scriptsize
\setlength{\tabcolsep}{3pt}
\begin{longtable}{lllrrrrrrrr}
\caption{Summary statistics of key state variables across stocks and sample periods.}
\label{tab:state-summary}\\
\toprule
Feature & Stock & Year & Count & Mean & Std & Min & 25\% & 50\% & 75\% & Max \\
\midrule
\endfirsthead

\multicolumn{11}{c}{\tablename\ \thetable{} -- continued from previous page} \\
\toprule
Feature & Stock & Year & Count & Mean & Std & Min & 25\% & 50\% & 75\% & Max \\
\midrule
\endhead

\midrule
\multicolumn{11}{r}{continued on next page}
\endfoot

\bottomrule
\endlastfoot

\multicolumn{11}{l}{\textbf{AAPL 2019}} \\
buy\_count & AAPL & 2019 & 93600 & 42.9945 & 45.6852 & 0.0000 & 18.0000 & 30.0000 & 51.0000 & 1019.0000 \\
sell\_count & AAPL & 2019 & 93600 & 45.8349 & 46.8980 & 0.0000 & 20.0000 & 33.0000 & 56.0000 & 1418.0000 \\
returns & AAPL & 2019 & 93600 & 0.0000 & 0.0006 & -0.0196 & -0.0002 & 0.0000 & 0.0003 & 0.0153 \\
\addlinespace

\multicolumn{11}{l}{\textbf{AAPL 2020}} \\
buy\_count & AAPL & 2020 & 93600 & 56.3747 & 59.9237 & 0.0000 & 21.0000 & 39.0000 & 70.0000 & 1377.0000 \\
sell\_count & AAPL & 2020 & 93600 & 59.6547 & 62.2363 & 0.0000 & 24.0000 & 41.0000 & 73.0000 & 2118.0000 \\
returns & AAPL & 2020 & 93600 & 0.0000 & 0.0008 & -0.0196 & -0.0003 & 0.0000 & 0.0003 & 0.0277 \\
\addlinespace

\multicolumn{11}{l}{\textbf{MKC 2019}} \\
buy\_count & MKC & 2019 & 93600 & 2.6485 & 3.8147 & 0.0000 & 0.0000 & 1.0000 & 4.0000 & 95.0000 \\
sell\_count & MKC & 2019 & 93600 & 2.5973 & 3.5593 & 0.0000 & 0.0000 & 1.0000 & 4.0000 & 77.0000 \\
returns & MKC & 2019 & 93600 & 0.0000 & 0.0006 & -0.0118 & -0.0002 & 0.0000 & 0.0002 & 0.0132 \\
\addlinespace

\multicolumn{11}{l}{\textbf{MKC 2020}} \\
buy\_count & MKC & 2020 & 93600 & 2.5874 & 3.8922 & 0.0000 & 0.0000 & 1.0000 & 4.0000 & 79.0000 \\
sell\_count & MKC & 2020 & 93600 & 2.5266 & 3.7156 & 0.0000 & 0.0000 & 1.0000 & 4.0000 & 154.0000 \\
returns & MKC & 2020 & 93600 & 0.0000 & 0.0010 & -0.0180 & -0.0002 & 0.0000 & 0.0002 & 0.0215 \\
\addlinespace

\multicolumn{11}{l}{\textbf{TESLA 2019}} \\
buy\_count & TESLA & 2019 & 93600 & 23.3313 & 25.0334 & 0.0000 & 8.0000 & 16.0000 & 30.0000 & 695.0000 \\
sell\_count & TESLA & 2019 & 93600 & 23.5483 & 24.3968 & 0.0000 & 9.0000 & 17.0000 & 30.0000 & 472.0000 \\
returns & TESLA & 2019 & 93600 & 0.0000 & 0.0011 & -0.0153 & -0.0005 & 0.0000 & 0.0005 & 0.0159 \\
\addlinespace

\multicolumn{11}{l}{\textbf{TESLA 2020}} \\
buy\_count & TESLA & 2020 & 93600 & 33.7404 & 39.8158 & 0.0000 & 11.0000 & 22.0000 & 42.0000 & 746.0000 \\
sell\_count & TESLA & 2020 & 93600 & 34.5712 & 40.1440 & 0.0000 & 11.0000 & 23.0000 & 43.0000 & 1865.0000 \\
returns & TESLA & 2020 & 93600 & 0.0000 & 0.0017 & -0.0323 & -0.0006 & 0.0000 & 0.0006 & 0.0296 \\
\addlinespace

\multicolumn{11}{l}{\textbf{TWLO 2019}} \\
buy\_count & TWLO & 2019 & 93600 & 5.8326 & 7.1609 & 0.0000 & 1.0000 & 4.0000 & 8.0000 & 232.0000 \\
sell\_count & TWLO & 2019 & 93600 & 6.0630 & 7.3630 & 0.0000 & 1.0000 & 4.0000 & 8.0000 & 185.0000 \\
returns & TWLO & 2019 & 93600 & 0.0000 & 0.0013 & -0.0199 & -0.0005 & 0.0000 & 0.0005 & 0.0174 \\
\addlinespace

\multicolumn{11}{l}{\textbf{TWLO 2020}} \\
buy\_count & TWLO & 2020 & 93600 & 6.0091 & 7.7962 & 0.0000 & 1.0000 & 4.0000 & 8.0000 & 173.0000 \\
sell\_count & TWLO & 2020 & 93600 & 6.1801 & 7.8214 & 0.0000 & 1.0000 & 4.0000 & 8.0000 & 185.0000 \\
returns & TWLO & 2020 & 93600 & 0.0000 & 0.0015 & -0.0265 & -0.0006 & 0.0000 & 0.0006 & 0.0423 \\
\end{longtable}
\normalsize

\begin{minipage}{0.96\linewidth}
\end{minipage}

\bigskip

We evaluate the strategies over two sample designs. In the first design, the data from 2019 are partitioned into 144 trading days for training, 48 trading days for validation, and 48 trading days for testing. In the second design, the training sample again consists of 144 trading days from 2019, while the validation sample comprises the 48 trading days up to 28 February 2020; the corresponding test sample is the subsequent 48 trading days beginning on 1 March 2020. This second test window spans the onset of the COVID-19 market disruption and therefore provides a natural stress scenario for assessing the value of distributional robustness. We benchmark the robust reinforcement learning strategies against six strategies: the Avellaneda--Stoikov benchmark (AS); a random baseline, in which bid--ask spreads and quote sizes are drawn randomly around their admissible upper thresholds, with performance averaged over ten resampled action paths; a fixed quoting rule, under which the agent always posts the midpoint between the minimum and maximum admissible spreads and quantities; the learned non-robust agent; and two robust agents, selected on the validation set according to either mean P\&L or Sharpe ratio.

The figures ~\ref{fig:full_state_dist_aapl}-\ref{fig:full_state_dist_twlo} report, for each stock\textendash period, the train, validation, and test distributions of the state variables used in the empirical analysis. They show that the 2020 sample exhibits a markedly larger cross-split distributional shift than the corresponding 2019 sample, reflecting the more pronounced market instability in 2020.

%
%
\clearpage
\begin{figure}[H]
    \centering
    \begin{subfigure}{\linewidth}
    \centering
    \makebox[\textwidth][c]{\begin{minipage}{1.16\textwidth}\centering
        \setlength{\tabcolsep}{0.5pt}
        \begin{tabular}{cccccc}
            \includegraphics[width=0.16\textwidth]{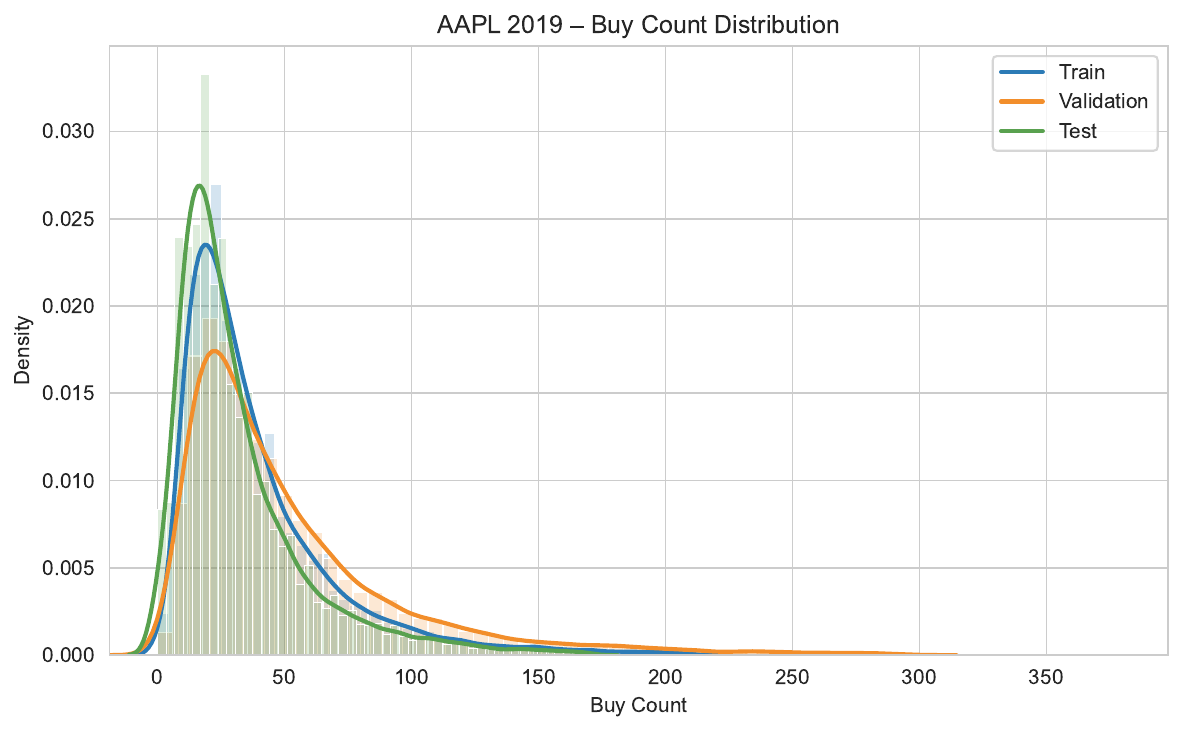} &
            \includegraphics[width=0.16\textwidth]{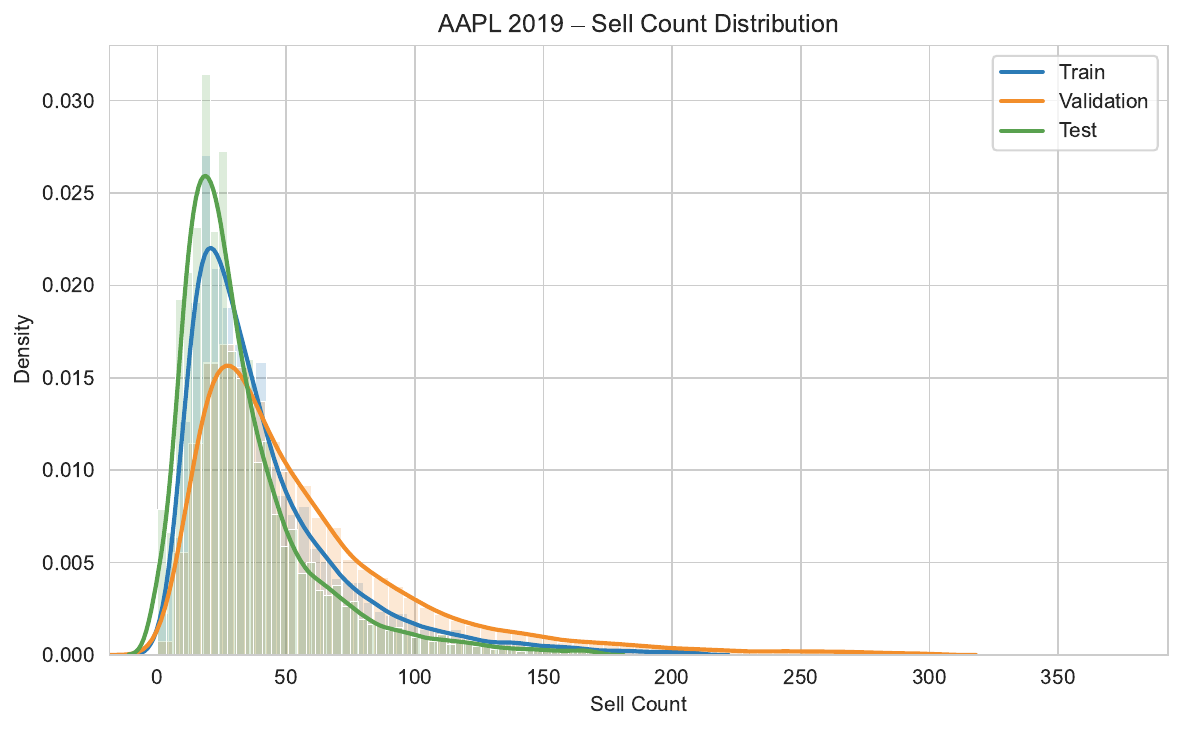} &
            \includegraphics[width=0.16\textwidth]{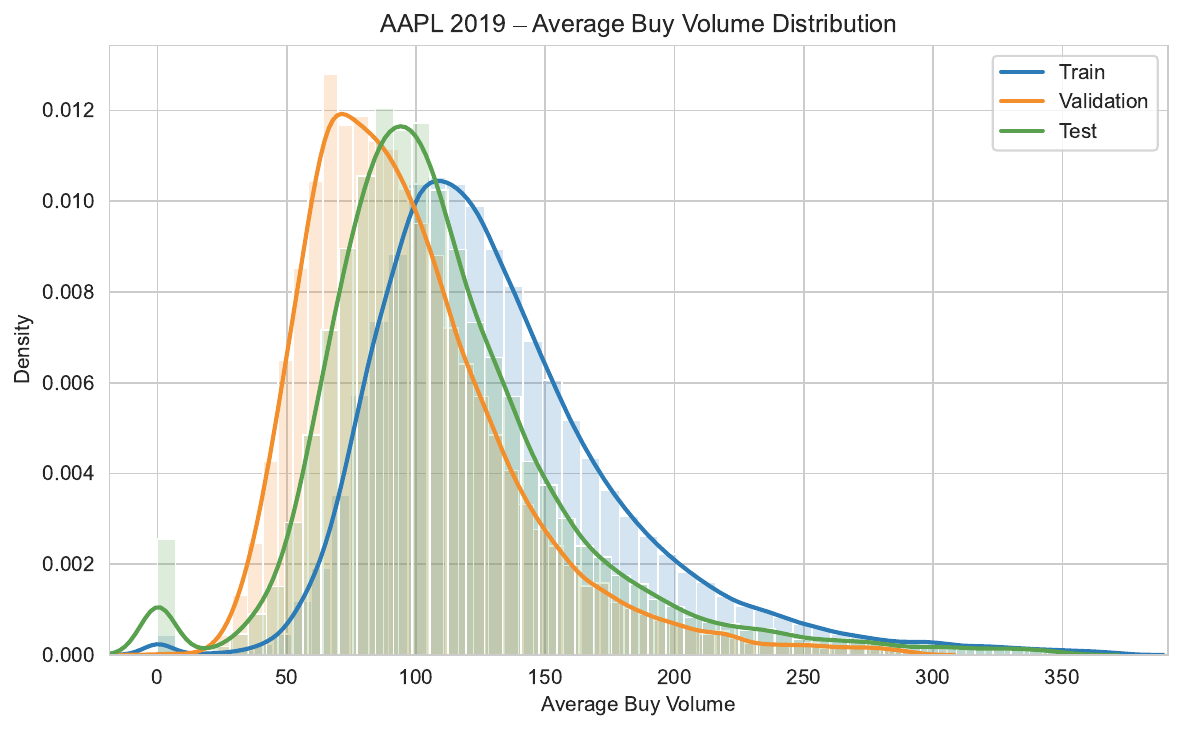} &
            \includegraphics[width=0.16\textwidth]{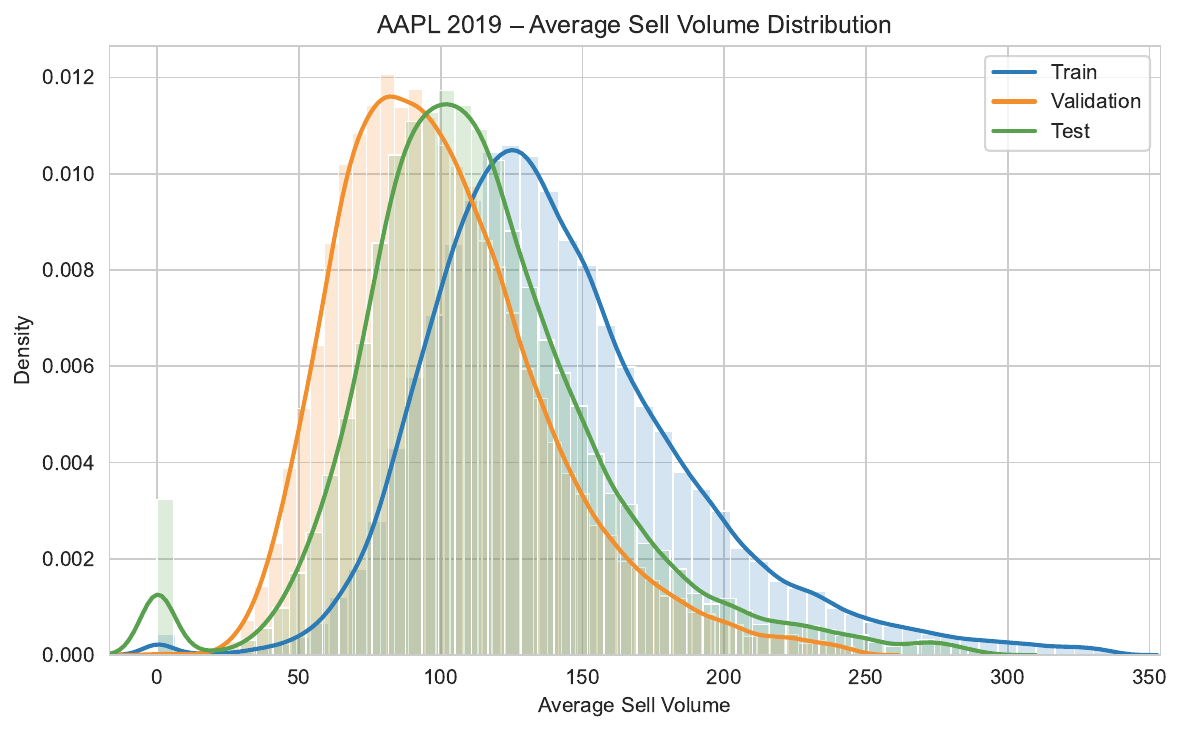} &
            \includegraphics[width=0.16\textwidth]{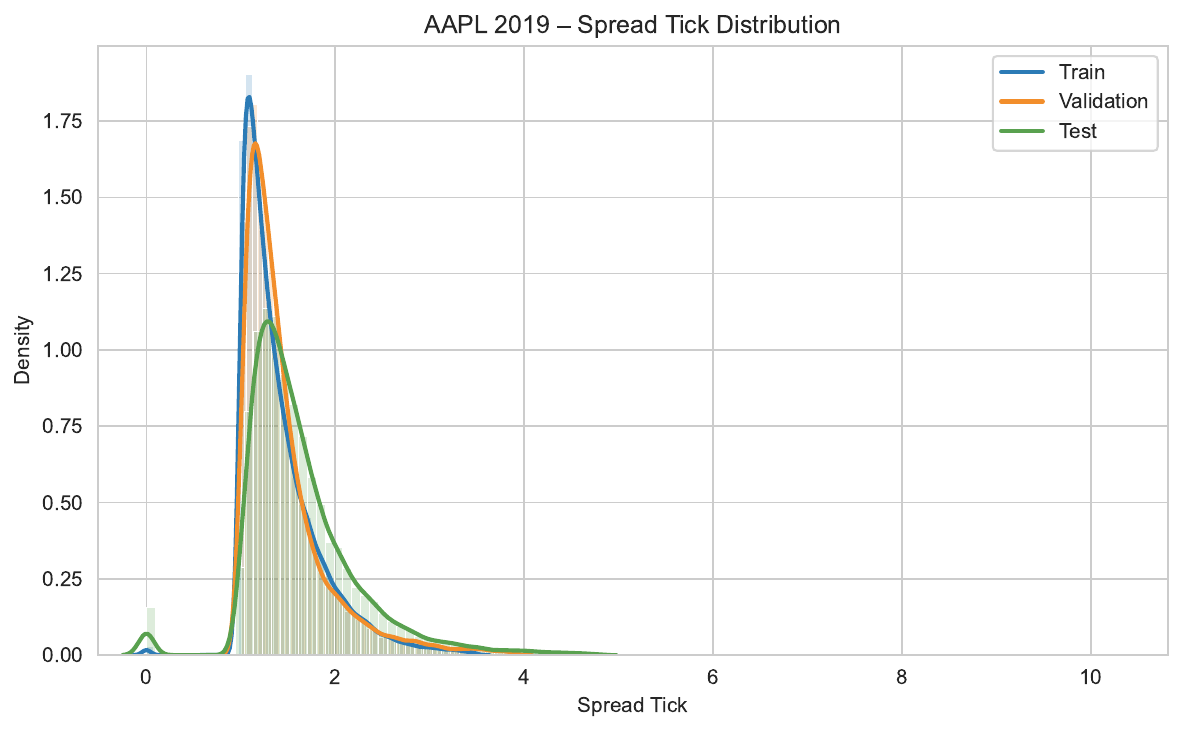} &
            \includegraphics[width=0.16\textwidth]{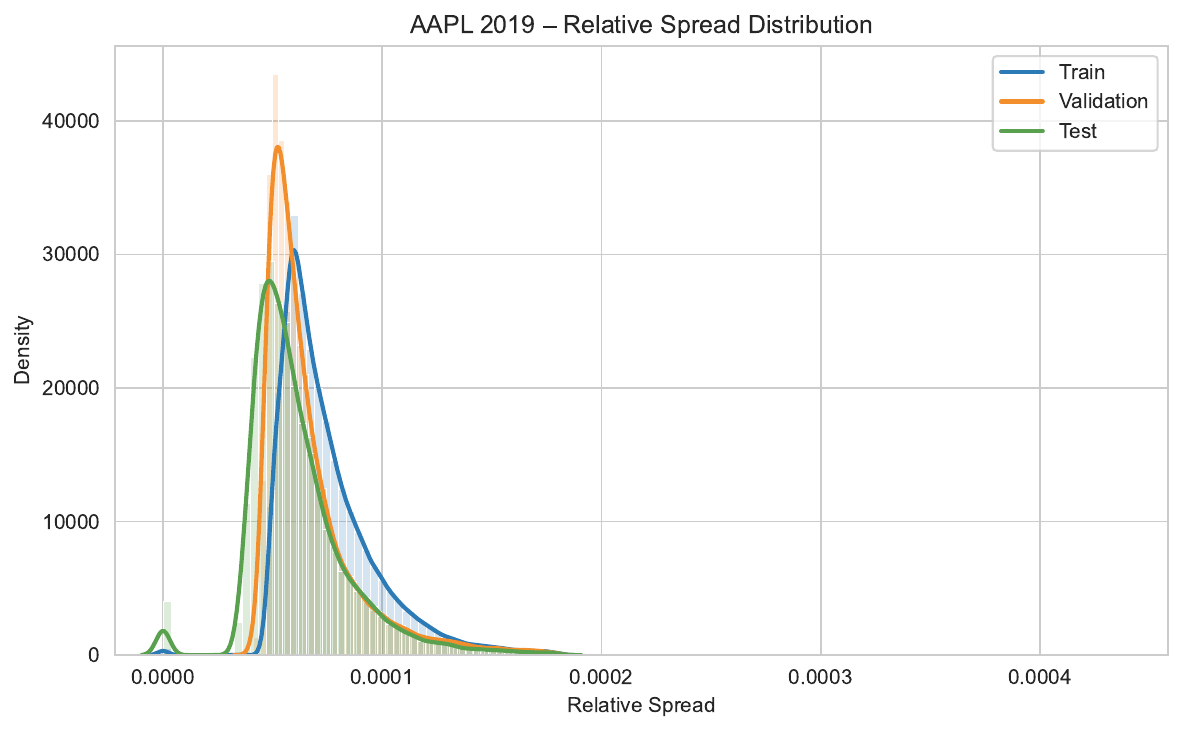} \\
            \scriptsize Buy count &
            \scriptsize Sell count &
            \scriptsize Average buy size &
            \scriptsize Average sell size &
            \scriptsize Spread (ticks) &
            \scriptsize Relative spread \\
            \includegraphics[width=0.16\textwidth]{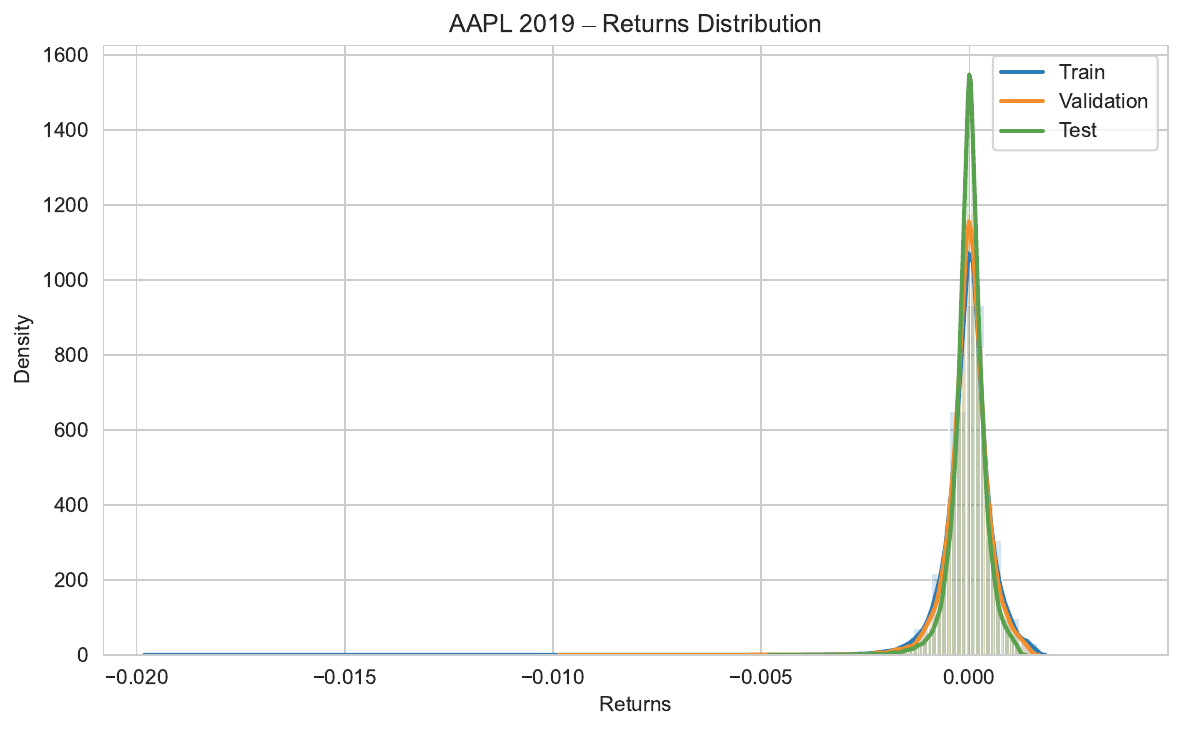} &
            \includegraphics[width=0.16\textwidth]{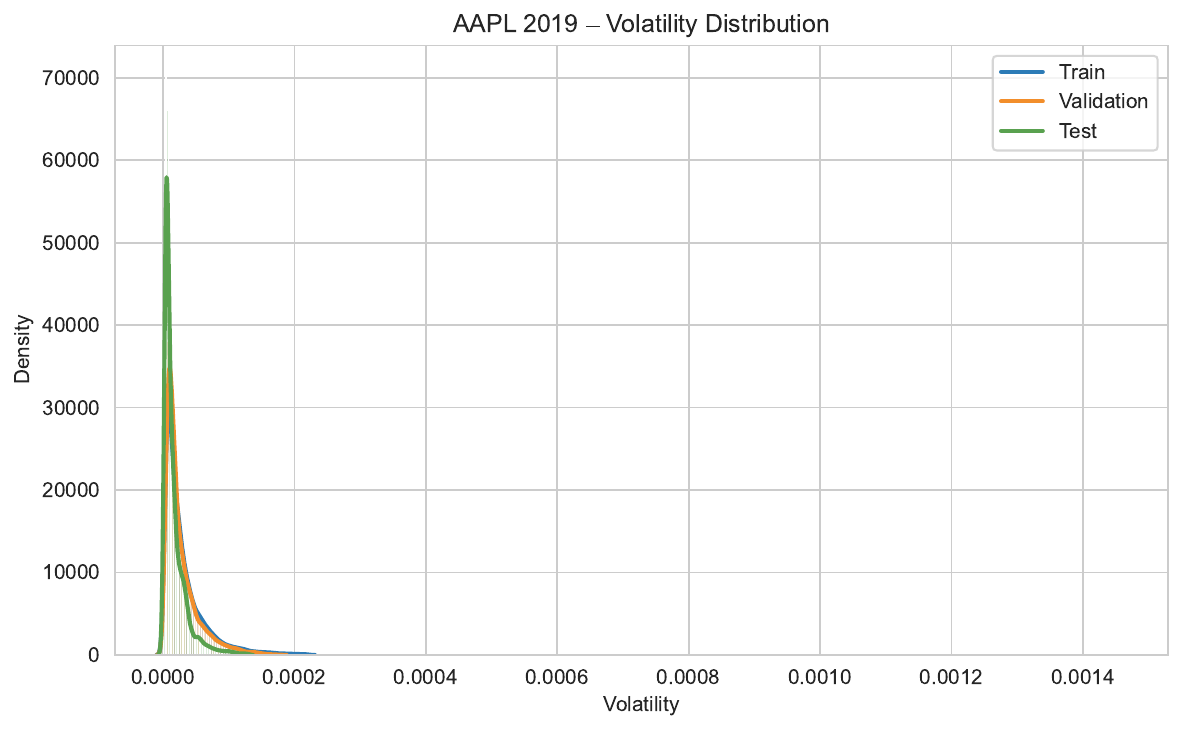} &
            \includegraphics[width=0.16\textwidth]{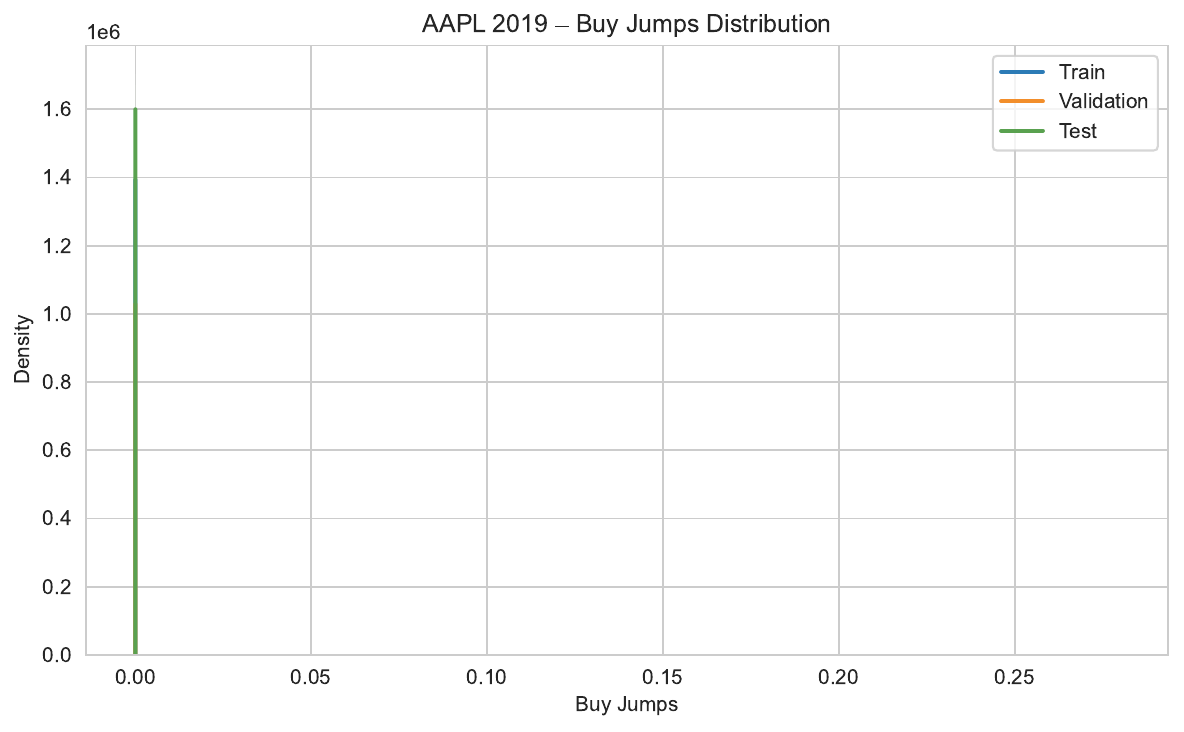} &
            \includegraphics[width=0.16\textwidth]{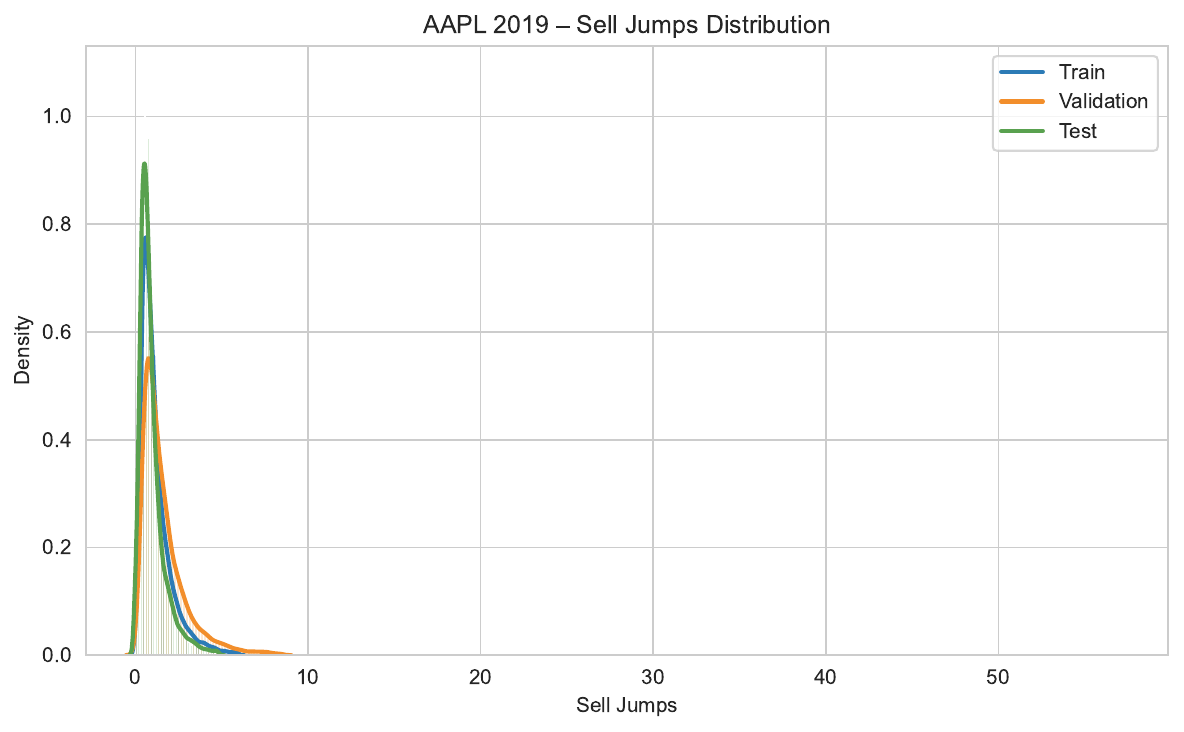} &
            \includegraphics[width=0.16\textwidth]{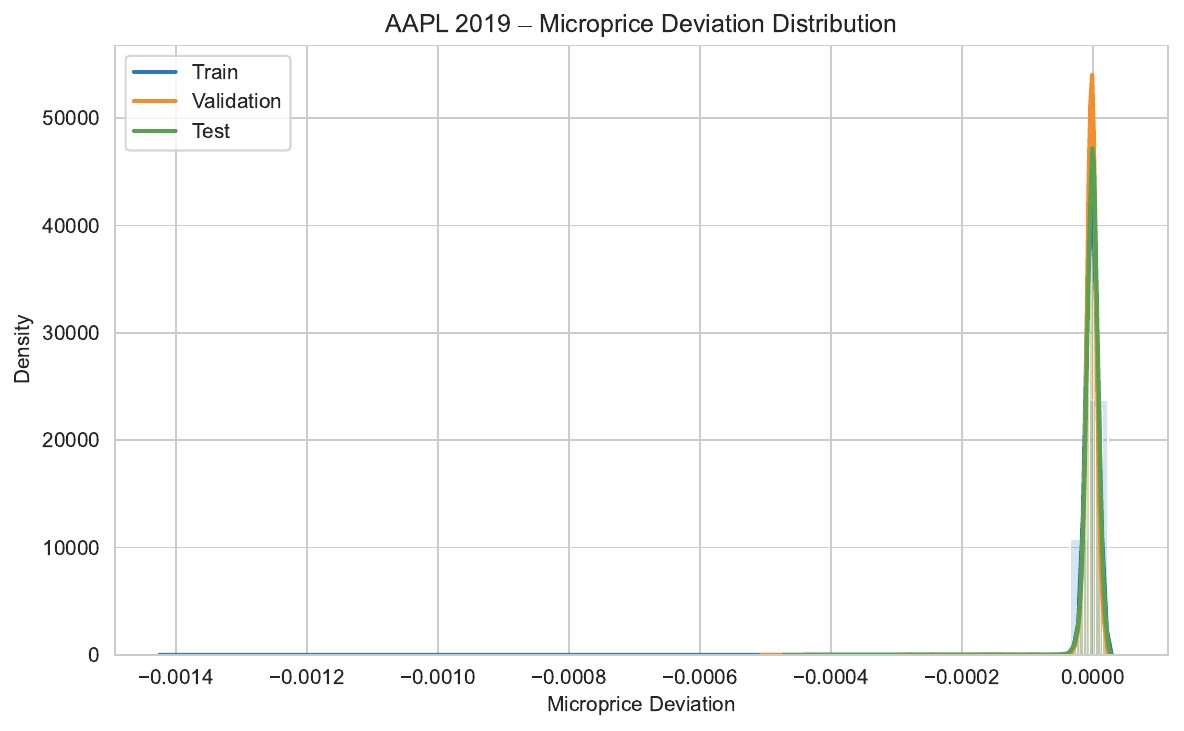} &
            \includegraphics[width=0.16\textwidth]{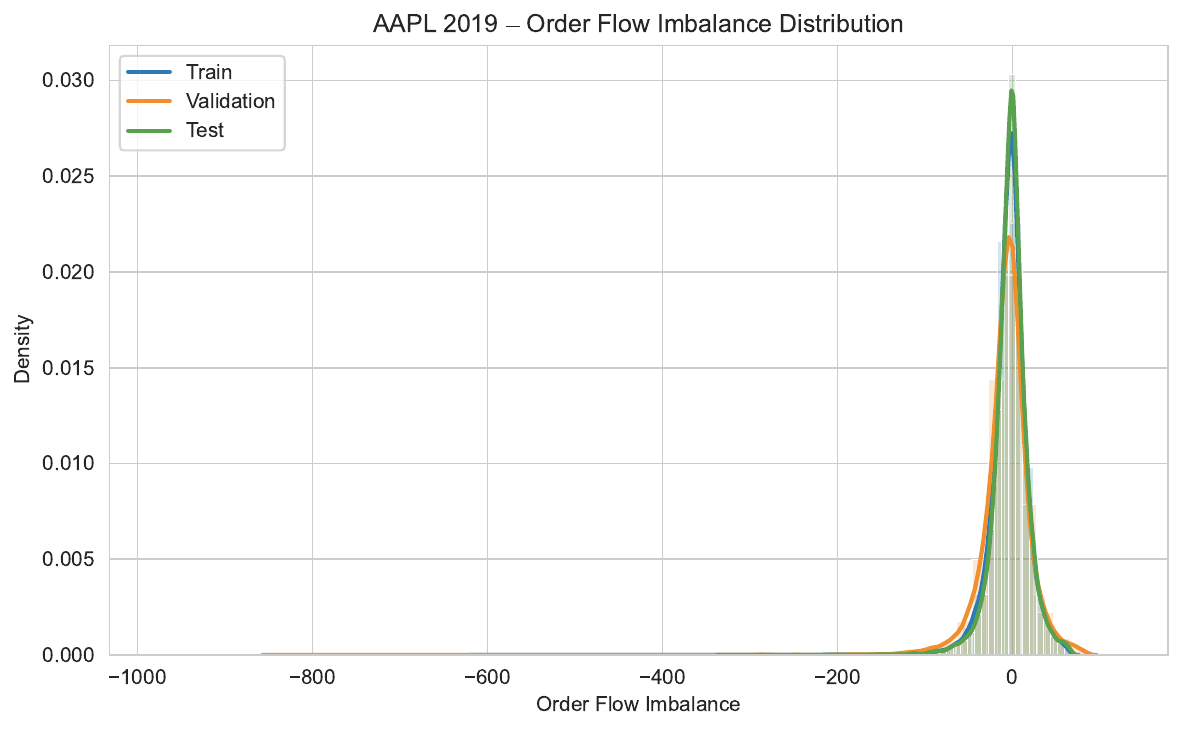} \\
            \scriptsize Returns &
            \scriptsize Volatility &
            \scriptsize Buy-order excitation &
            \scriptsize Sell-order excitation &
            \scriptsize Microprice deviation &
            \scriptsize OFI \\
        \end{tabular}
    \end{minipage}}
    \caption{AAPL, 2019.}
    \end{subfigure}

    \vspace{0.2em}

    \begin{subfigure}{\linewidth}
    \centering
    \makebox[\textwidth][c]{\begin{minipage}{1.16\textwidth}\centering
        \setlength{\tabcolsep}{0.5pt}
        \begin{tabular}{cccccc}
            \includegraphics[width=0.16\textwidth]{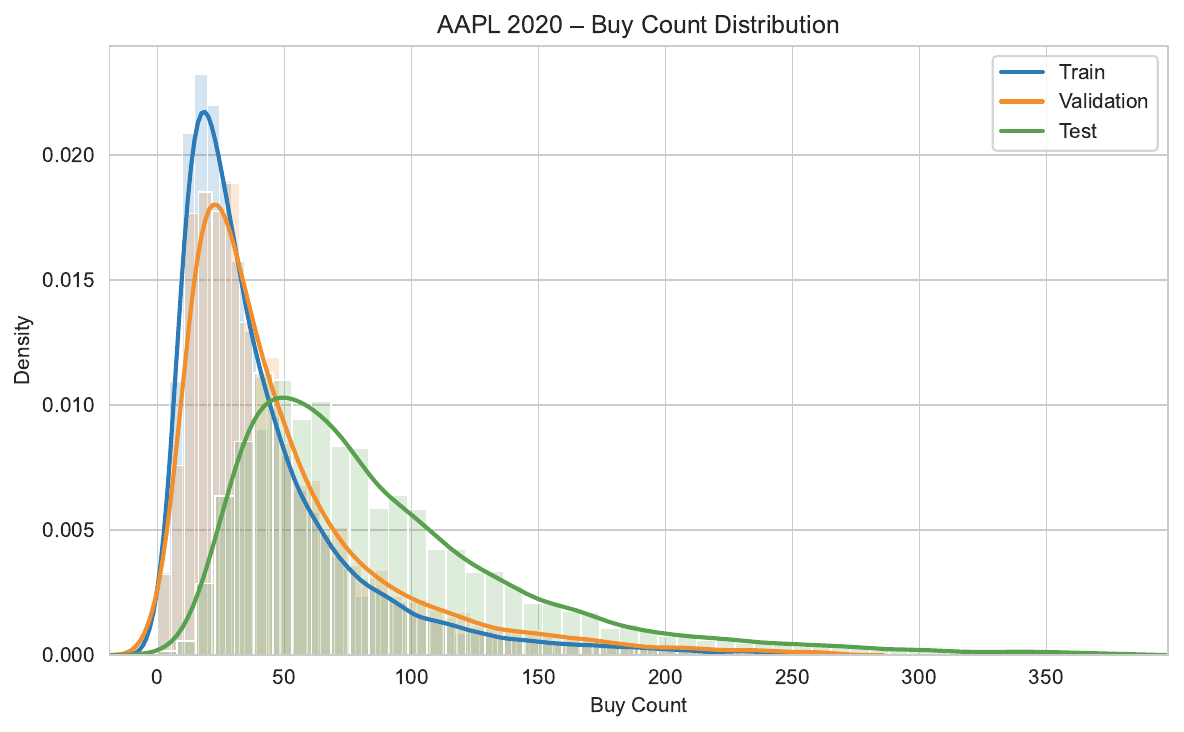} &
            \includegraphics[width=0.16\textwidth]{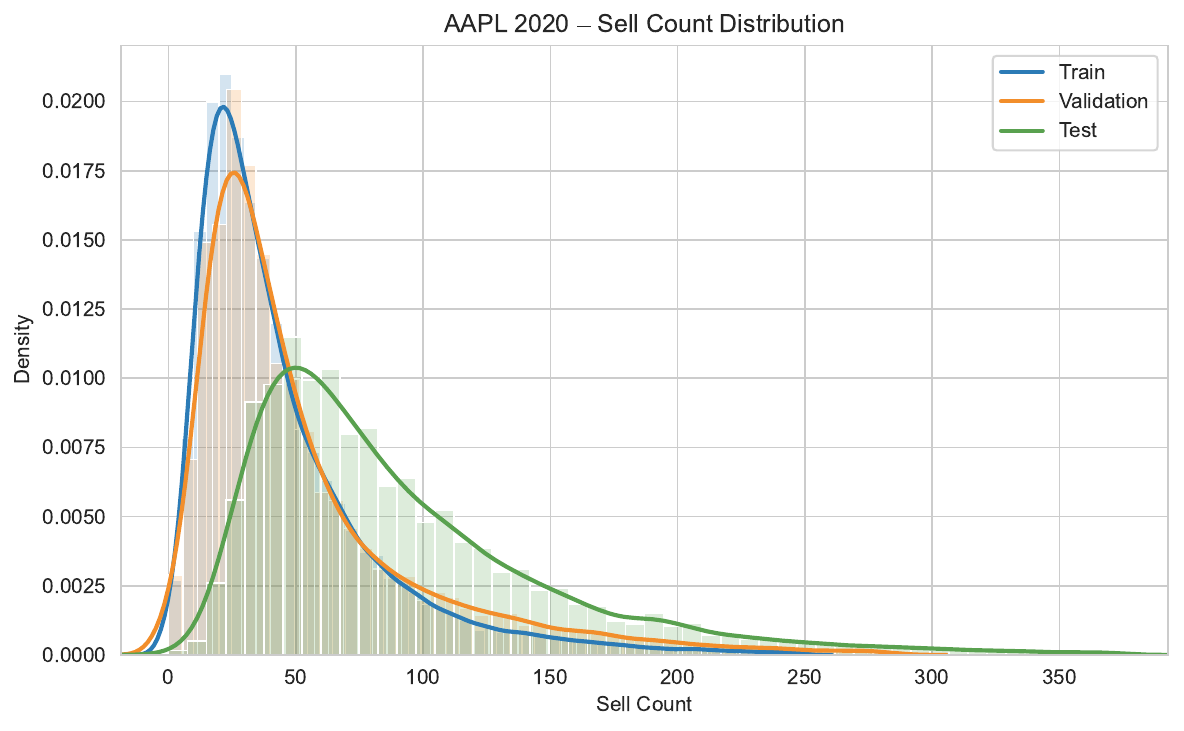} &
            \includegraphics[width=0.16\textwidth]{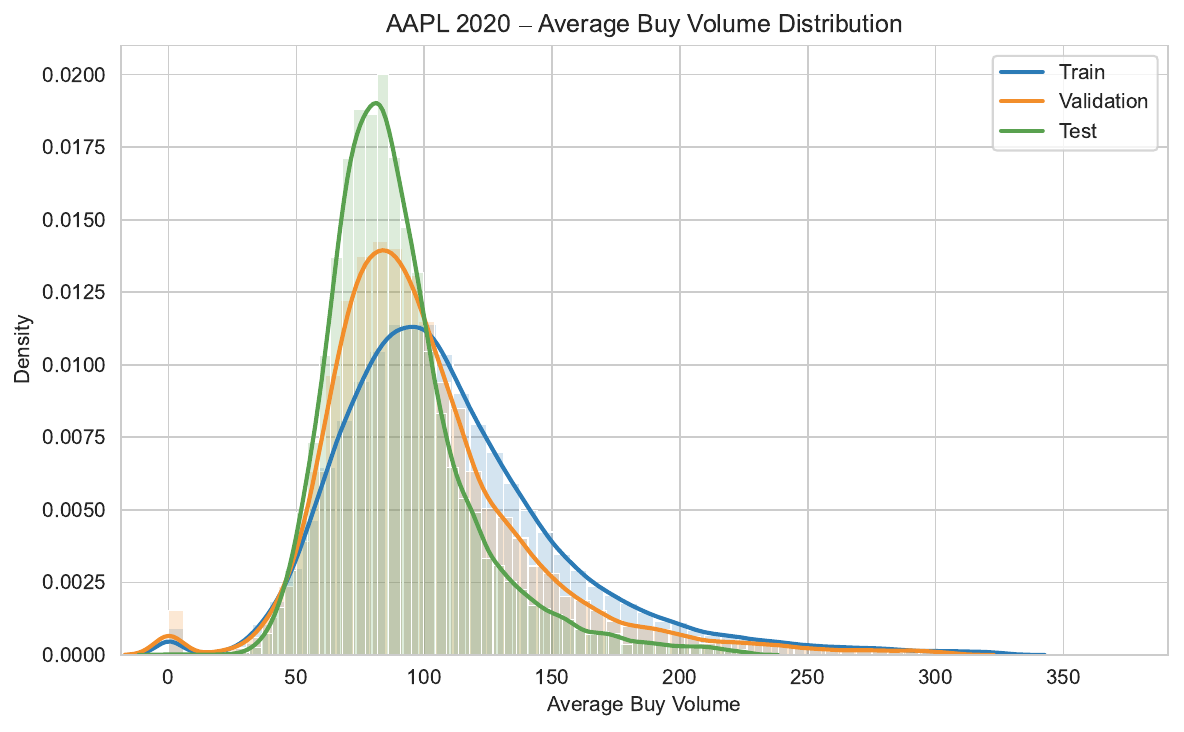} &
            \includegraphics[width=0.16\textwidth]{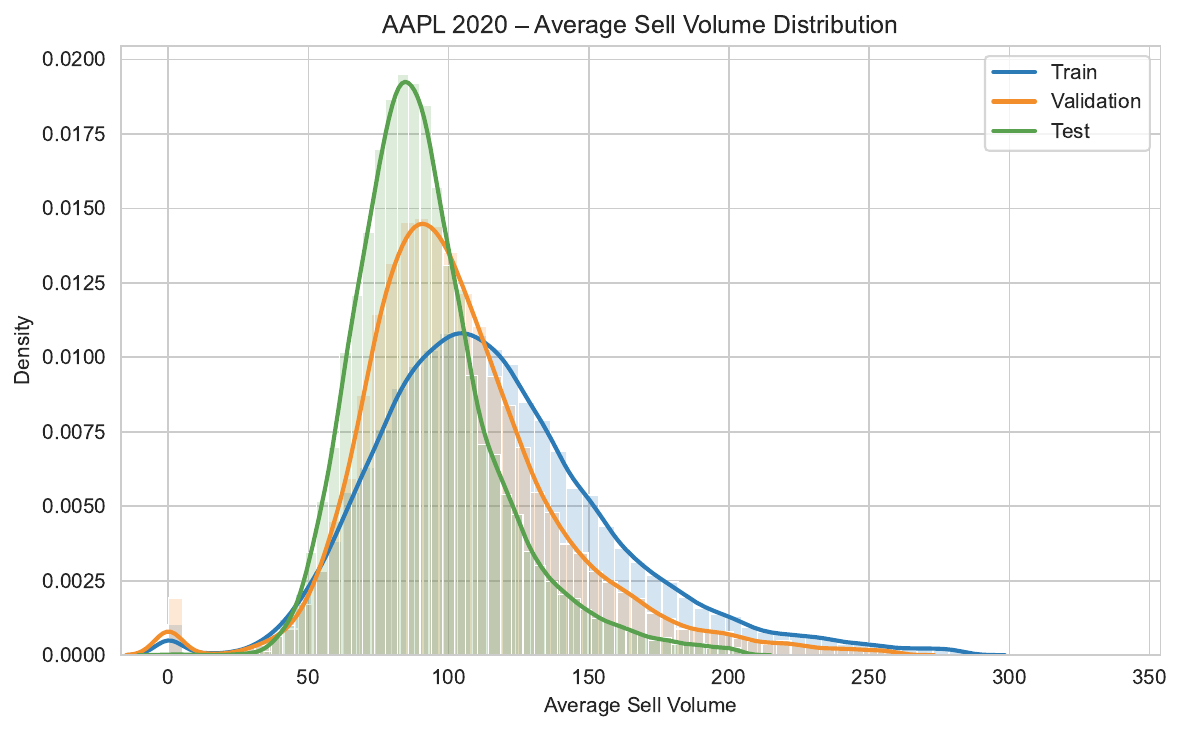} &
            \includegraphics[width=0.16\textwidth]{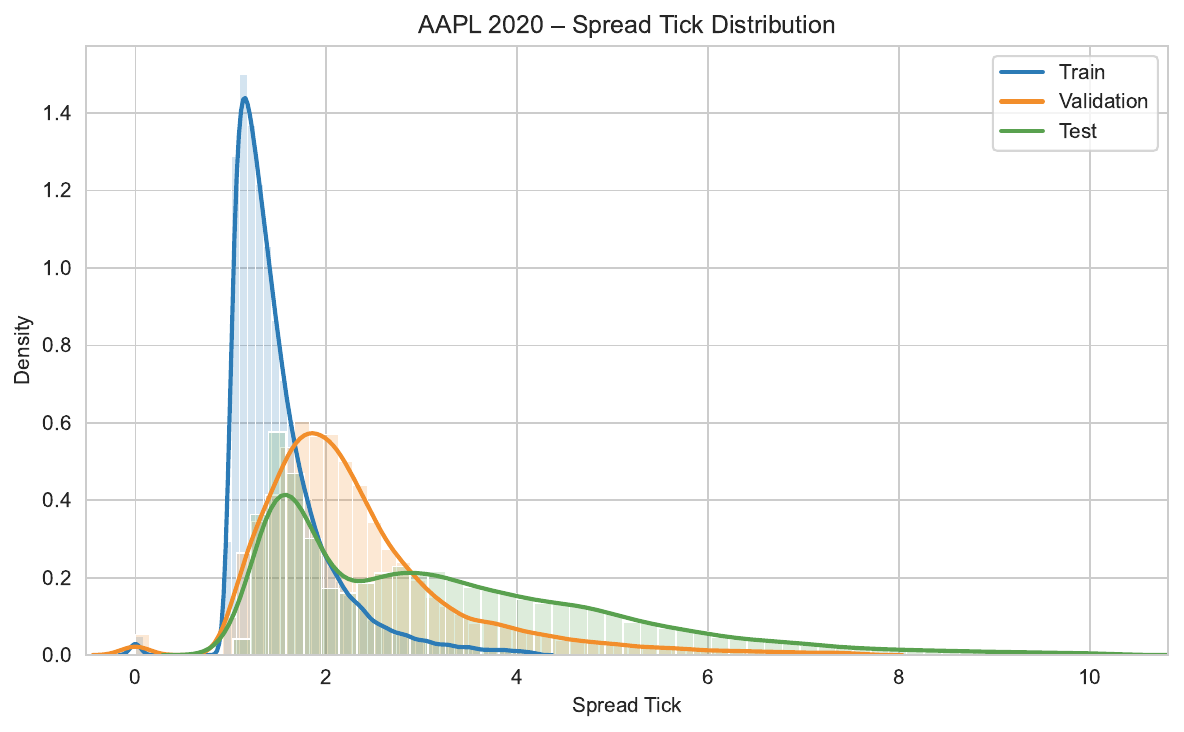} &
            \includegraphics[width=0.16\textwidth]{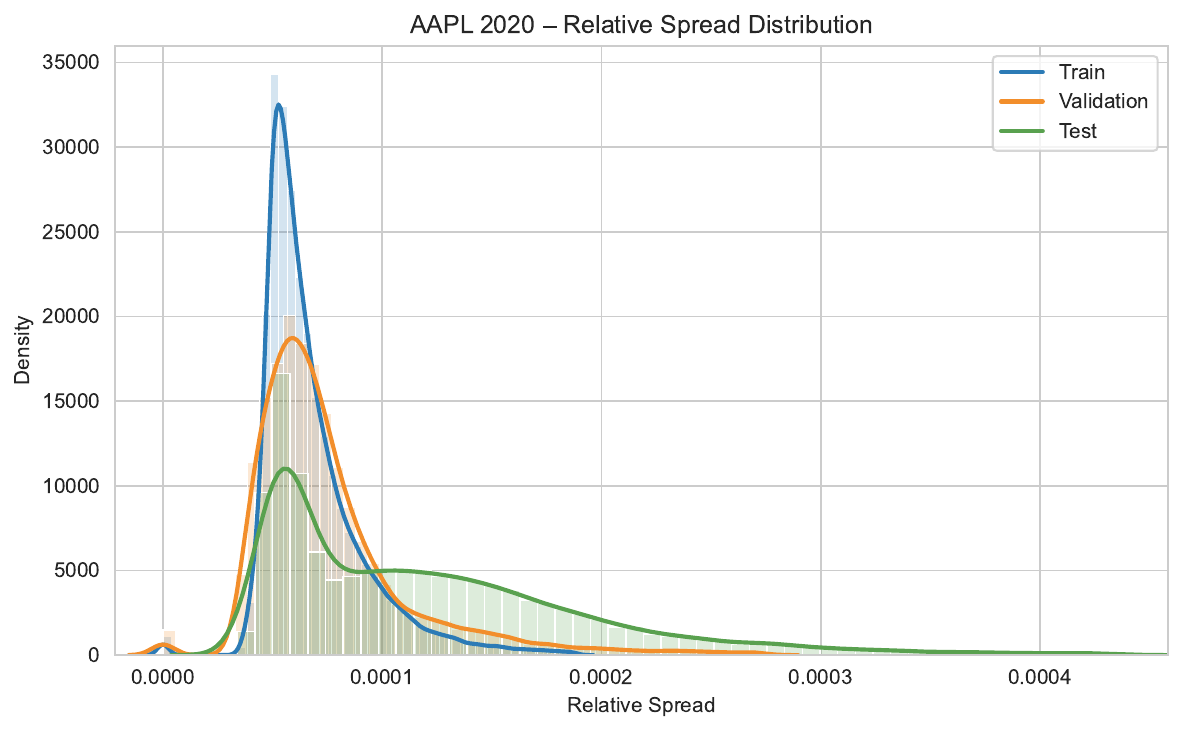} \\
            \scriptsize Buy count &
            \scriptsize Sell count &
            \scriptsize Average buy size &
            \scriptsize Average sell size &
            \scriptsize Spread (ticks) &
            \scriptsize Relative spread \\
            \includegraphics[width=0.16\textwidth]{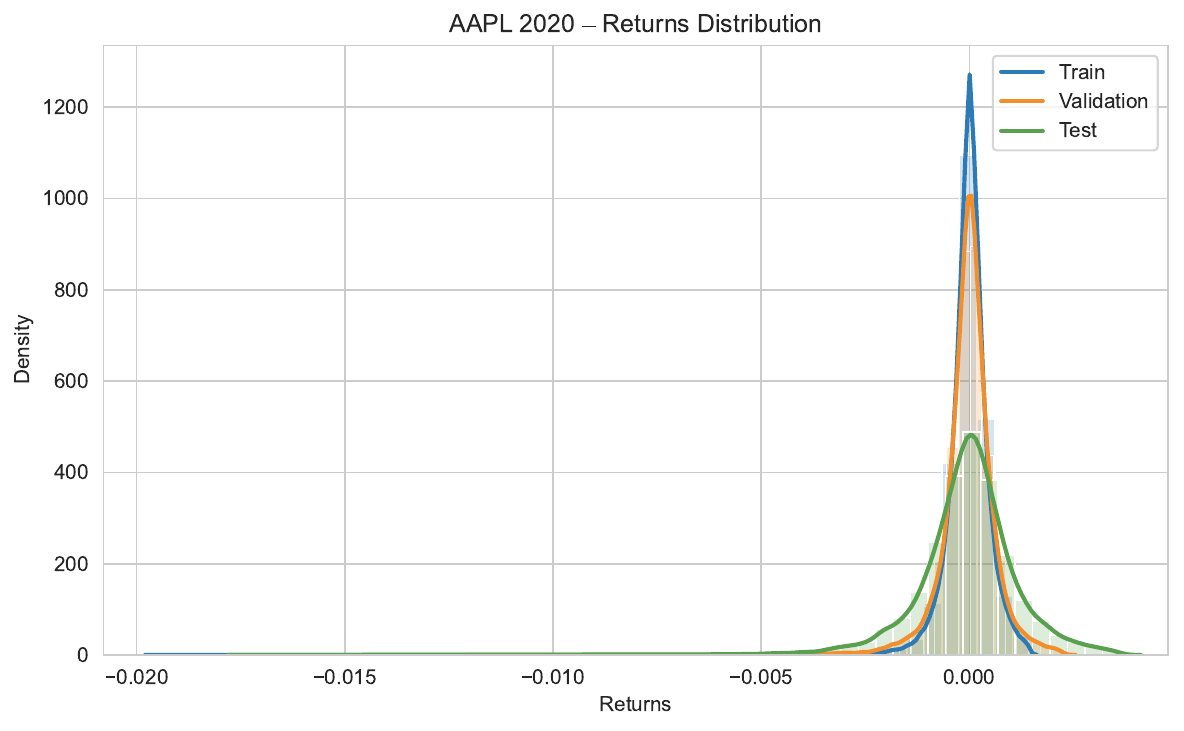} &
            \includegraphics[width=0.16\textwidth]{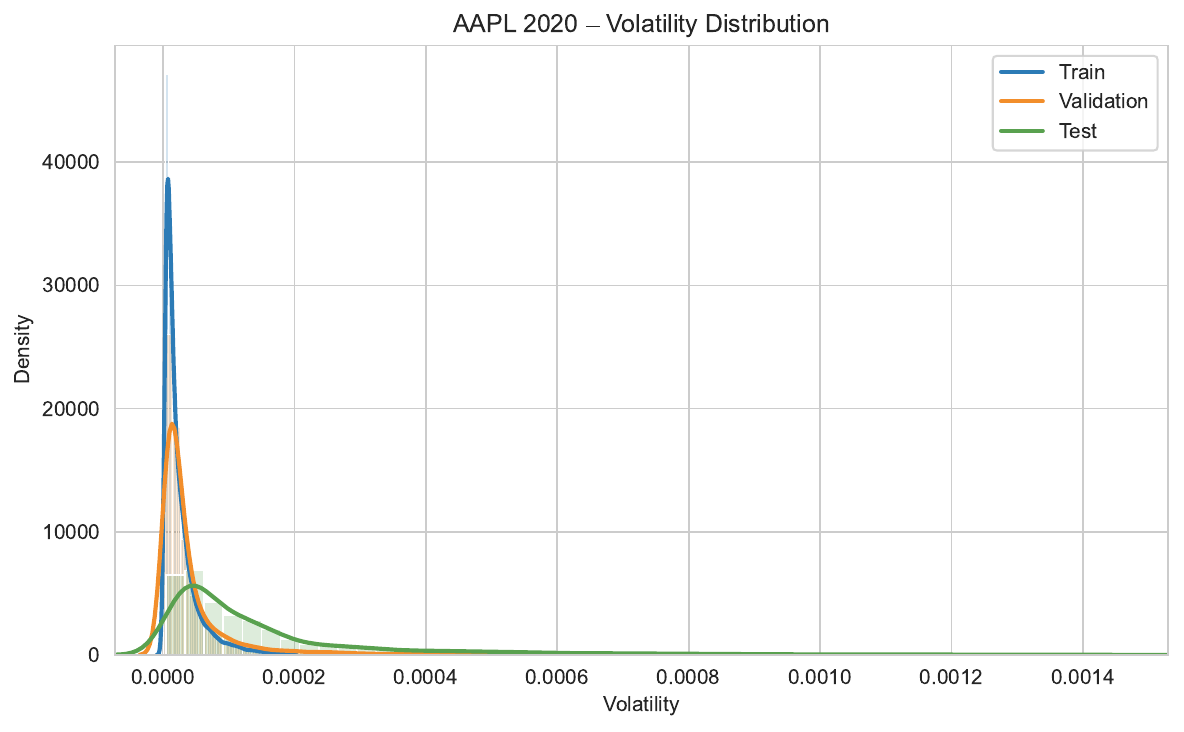} &
            \includegraphics[width=0.16\textwidth]{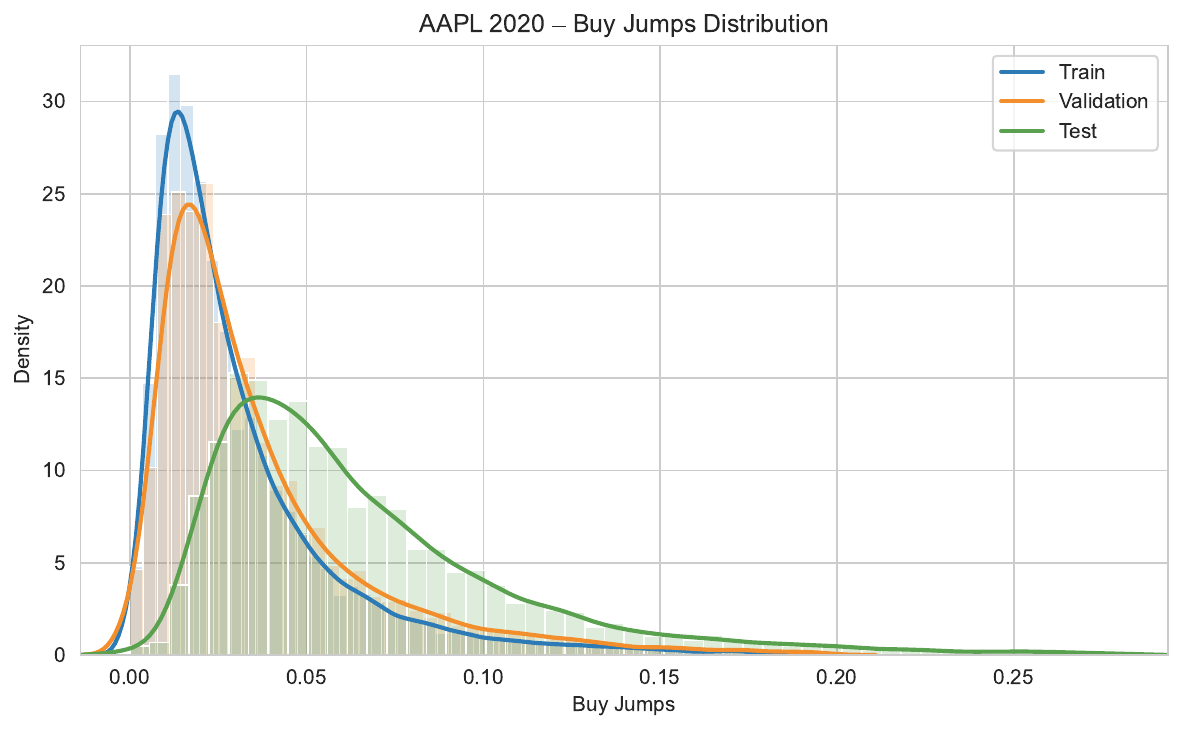} &
            \includegraphics[width=0.16\textwidth]{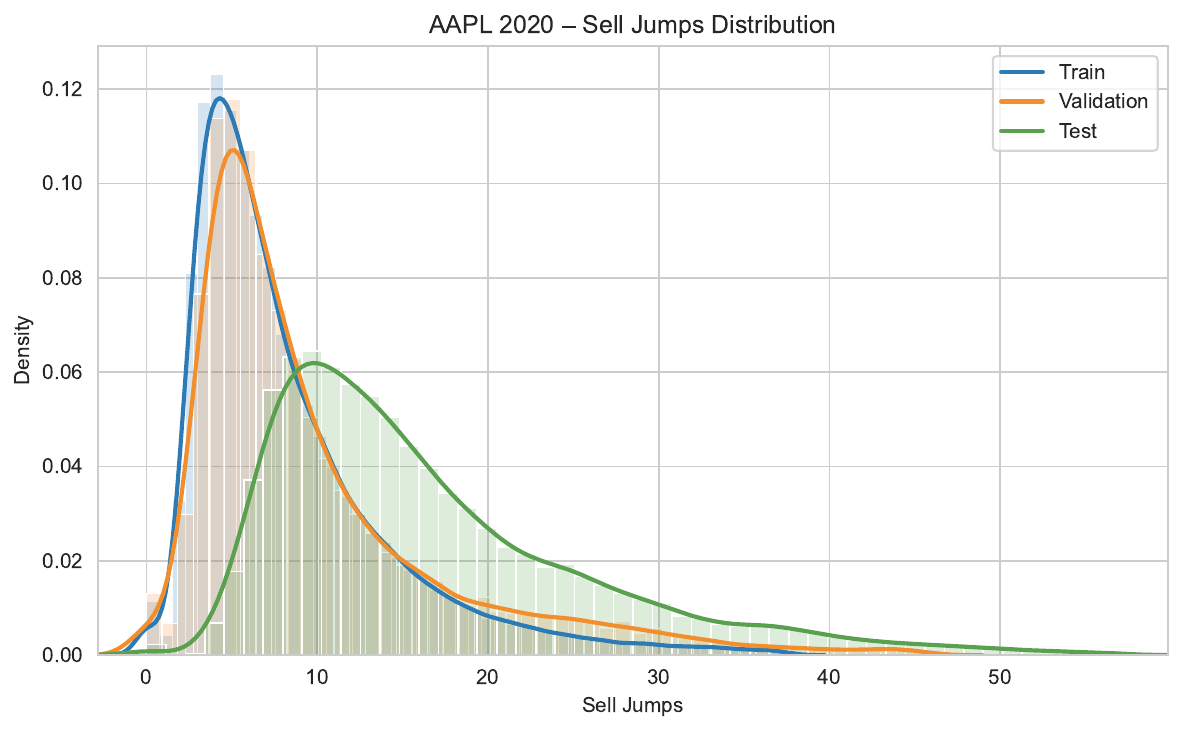} &
            \includegraphics[width=0.16\textwidth]{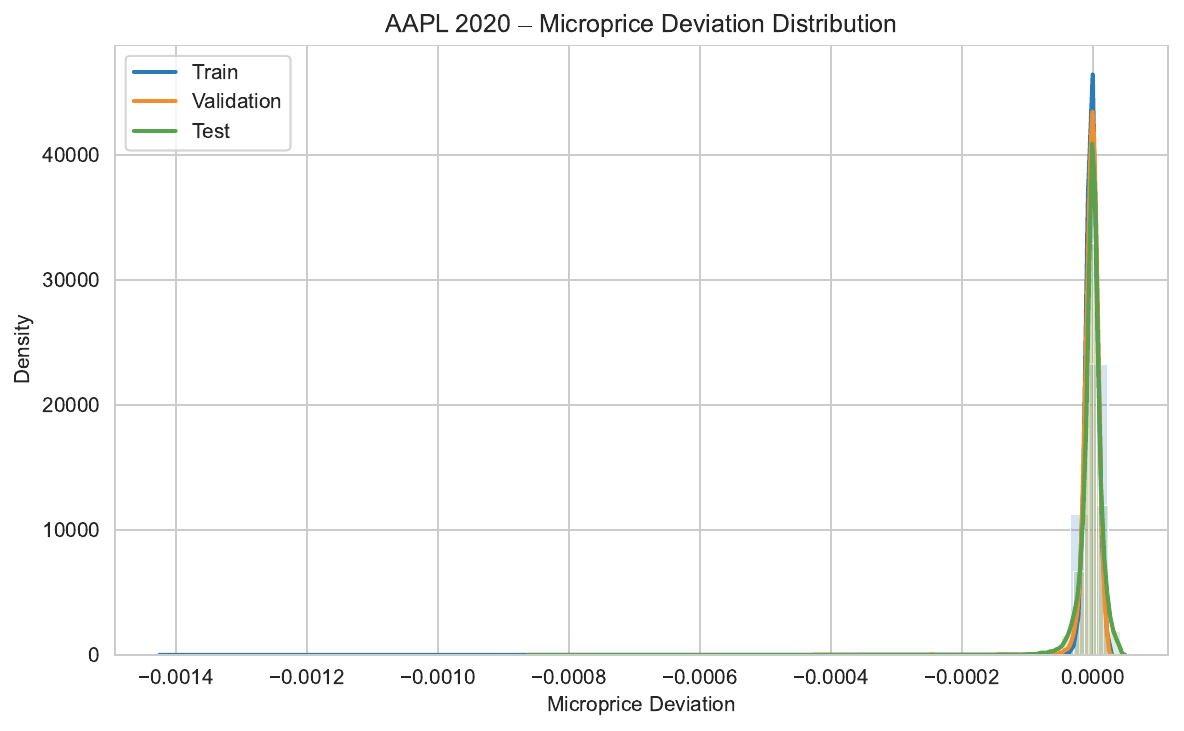} &
            \includegraphics[width=0.16\textwidth]{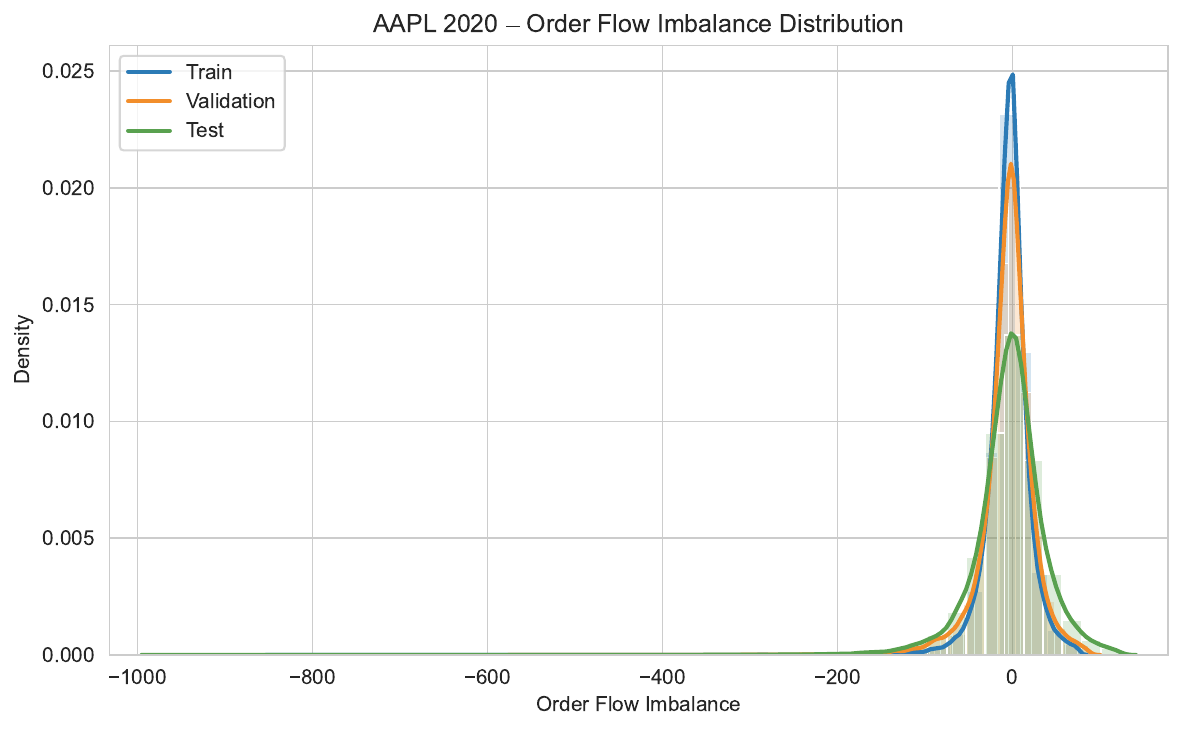} \\
            \scriptsize Returns &
            \scriptsize Volatility &
            \scriptsize Buy-order excitation &
            \scriptsize Sell-order excitation &
            \scriptsize Microprice deviation &
            \scriptsize OFI \\
        \end{tabular}
    \end{minipage}}
    \caption{AAPL, 2020.}
    \end{subfigure}

    \caption{Train, validation, and test distributions of all state variables for AAPL in 2019 and 2020. The comparison makes the stronger cross-split distributional shift in 2020 visually apparent.}
    \label{fig:full_state_dist_aapl}
\end{figure}

\begin{figure}[H]
    \centering
    \begin{subfigure}{\linewidth}
    \centering
    \makebox[\textwidth][c]{\begin{minipage}{1.12\textwidth}\centering
        \setlength{\tabcolsep}{0.5pt}
        \begin{tabular}{cccccc}
            \includegraphics[width=0.16\textwidth]{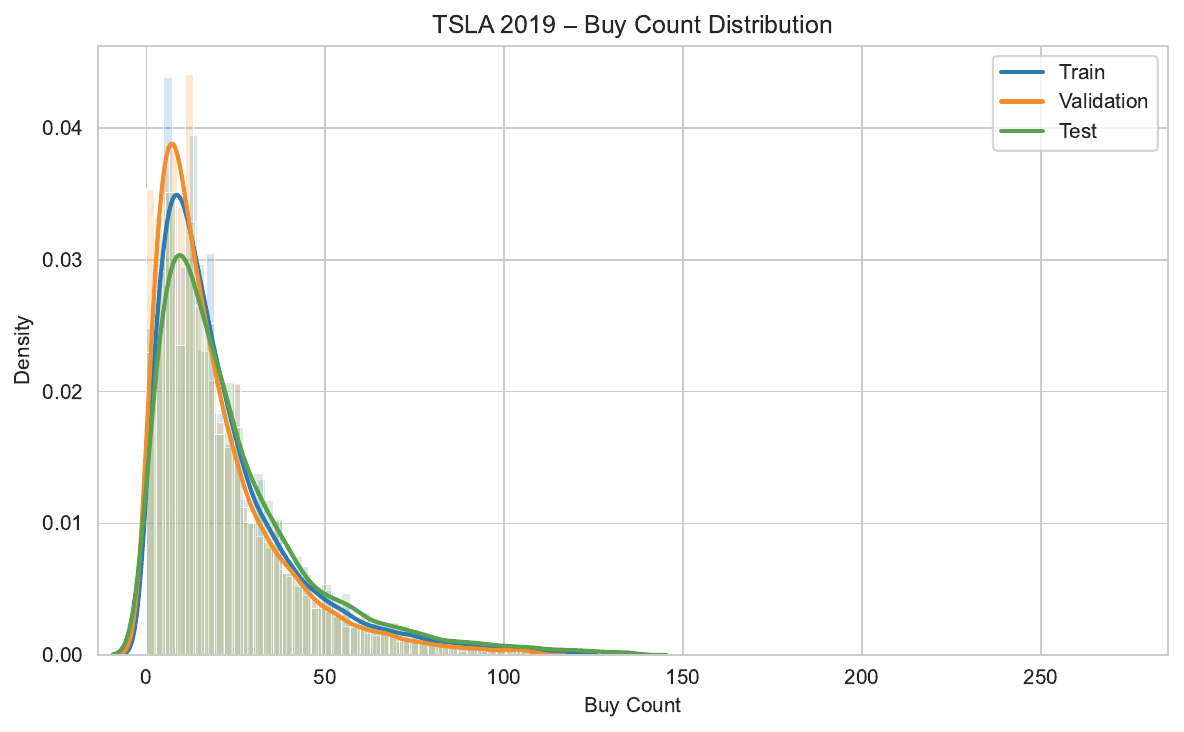} &
            \includegraphics[width=0.16\textwidth]{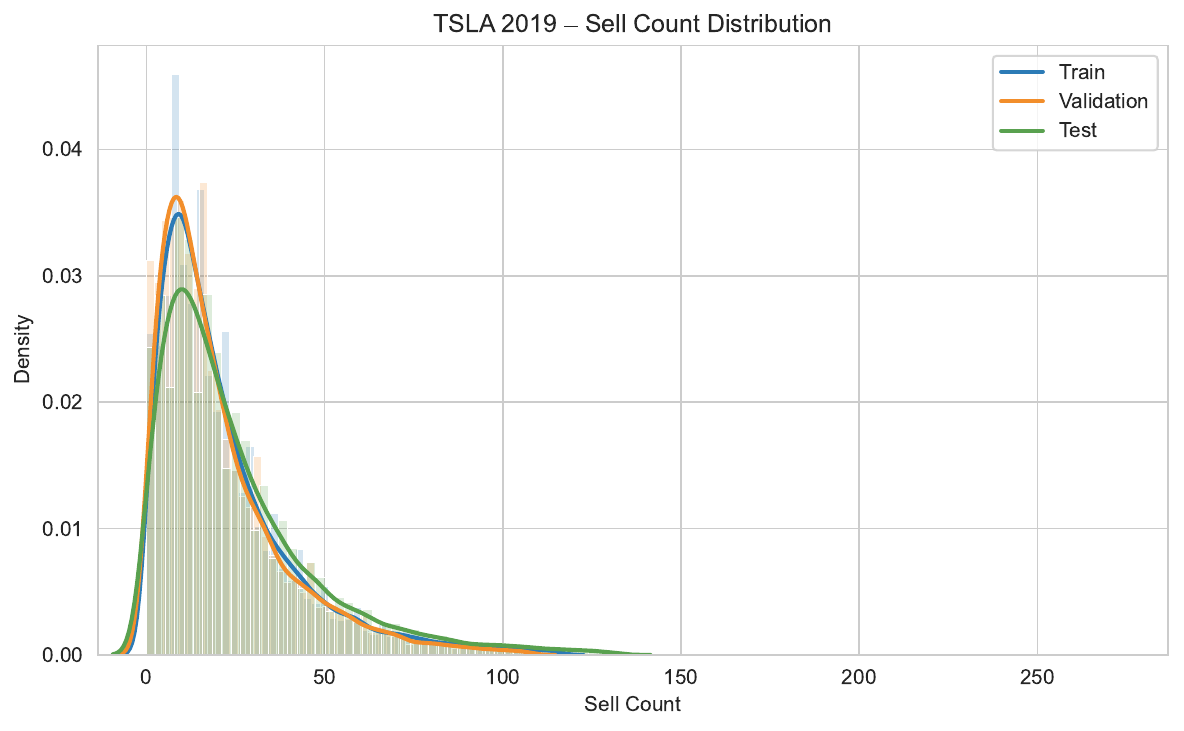} &
            \includegraphics[width=0.16\textwidth]{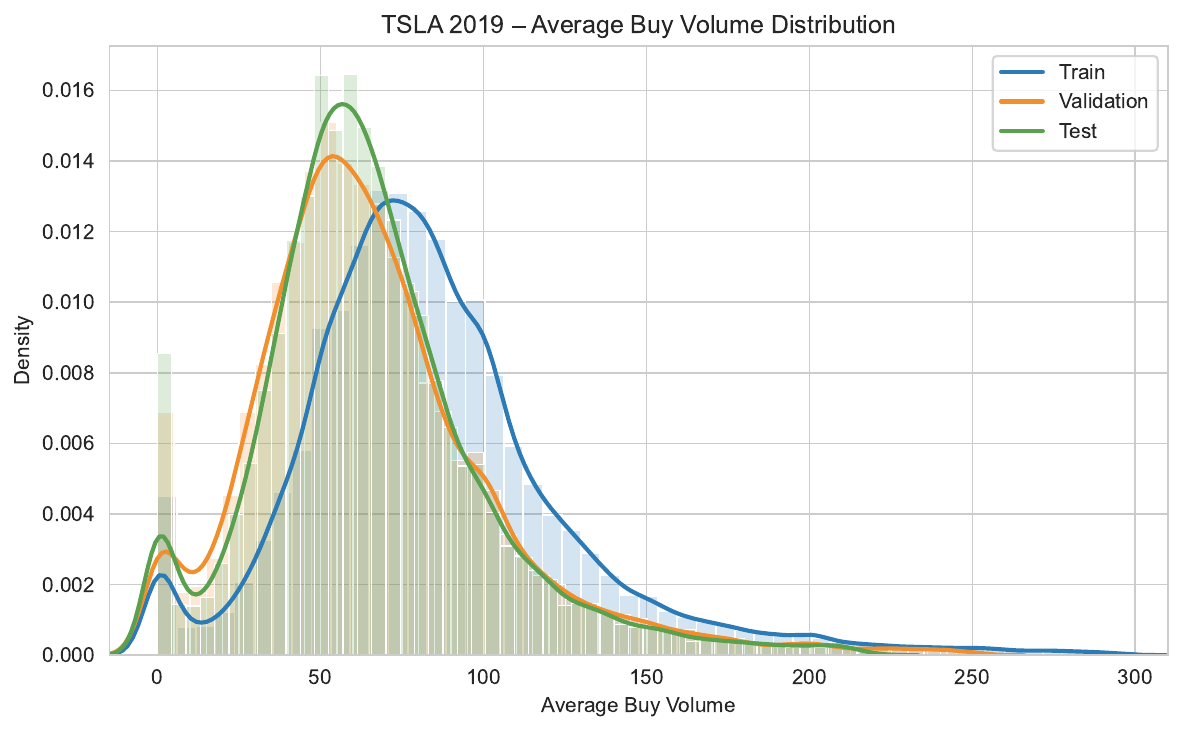} &
            \includegraphics[width=0.16\textwidth]{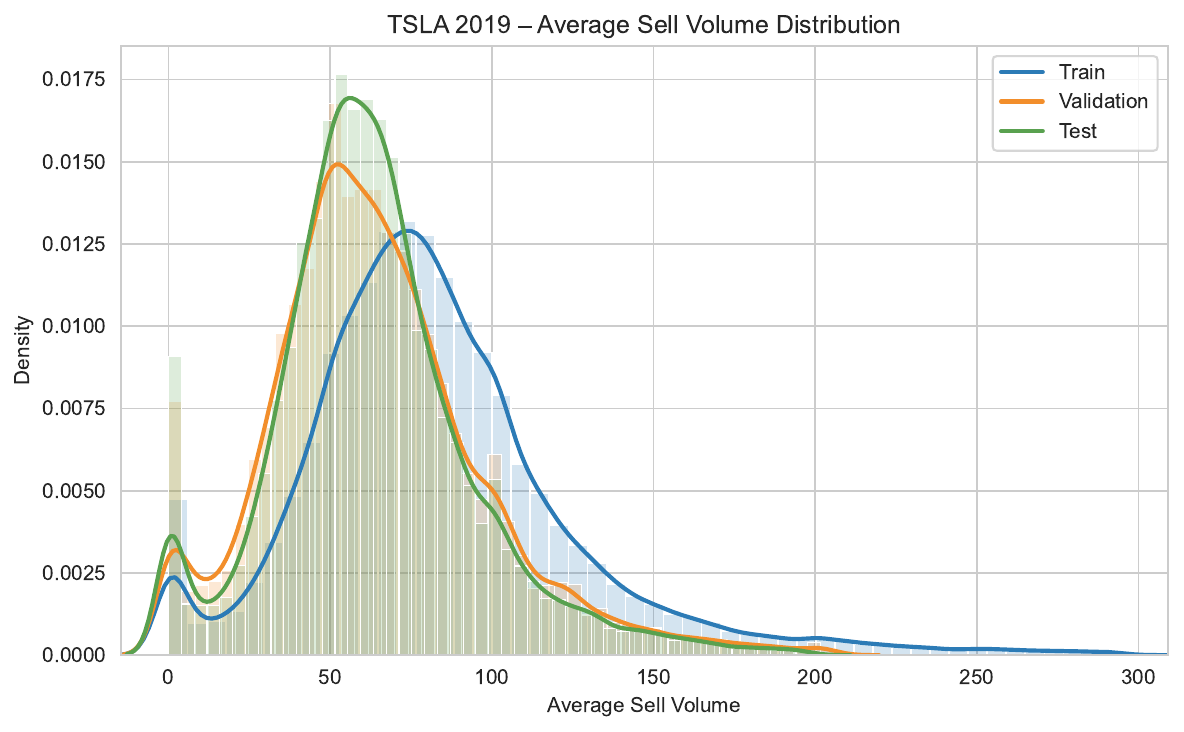} &
            \includegraphics[width=0.16\textwidth]{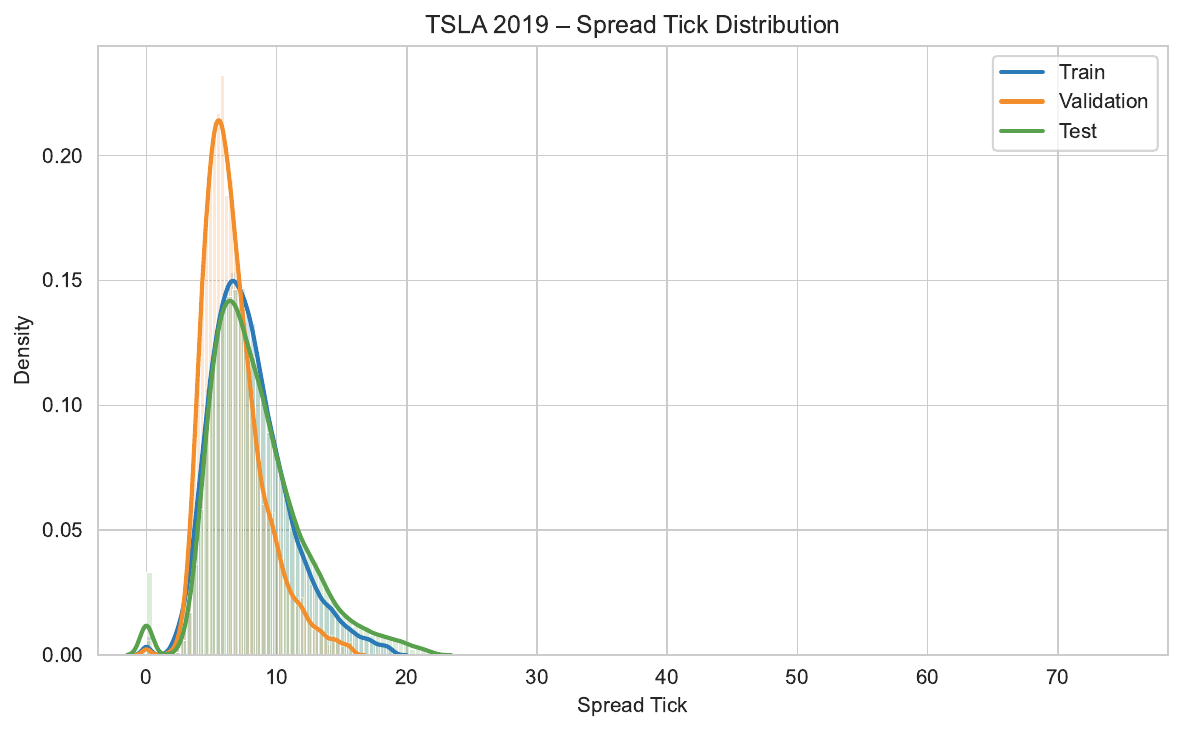} &
            \includegraphics[width=0.16\textwidth]{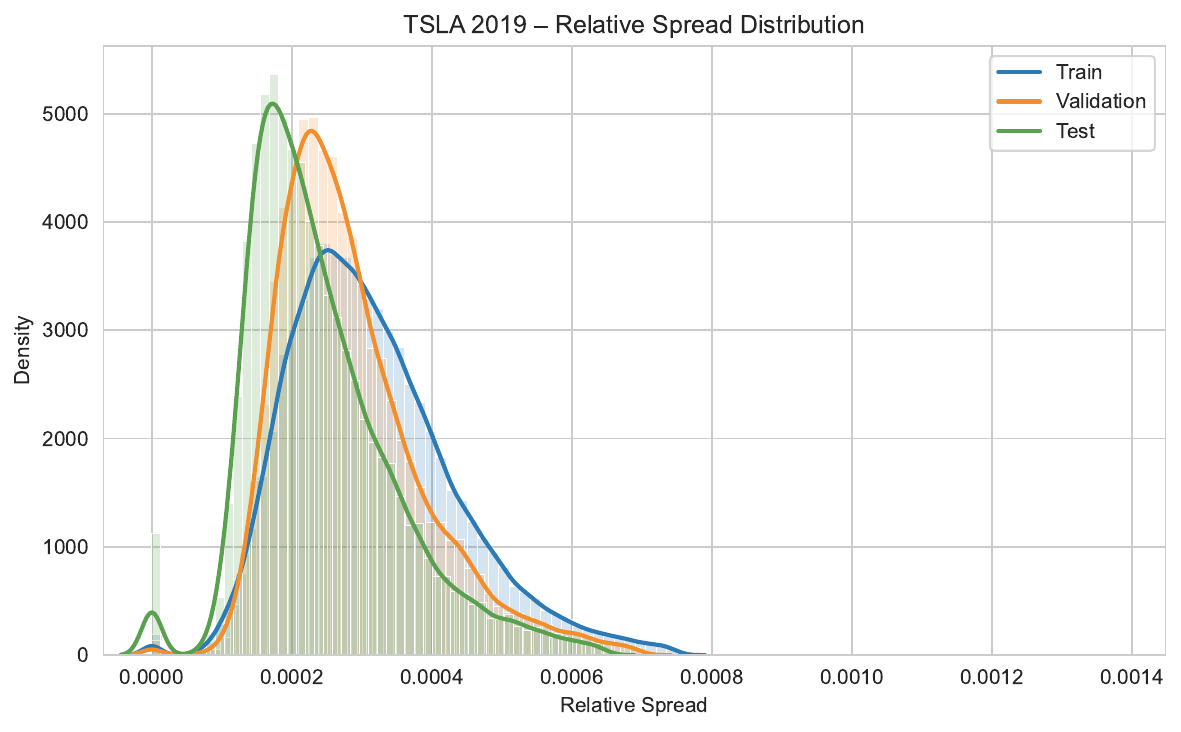} \\
            \scriptsize Buy count & \scriptsize Sell count & \scriptsize Average buy size & \scriptsize Average sell size & \scriptsize Spread (ticks) & \scriptsize Relative spread \\
            \includegraphics[width=0.16\textwidth]{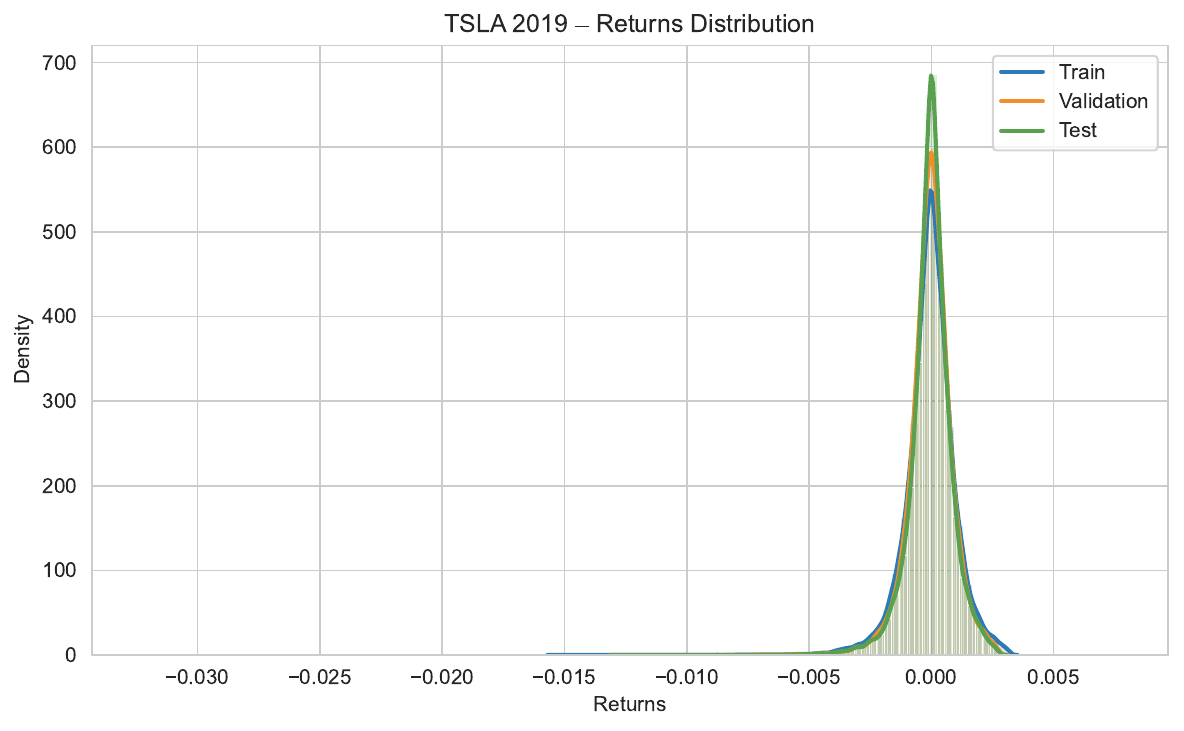} &
            \includegraphics[width=0.16\textwidth]{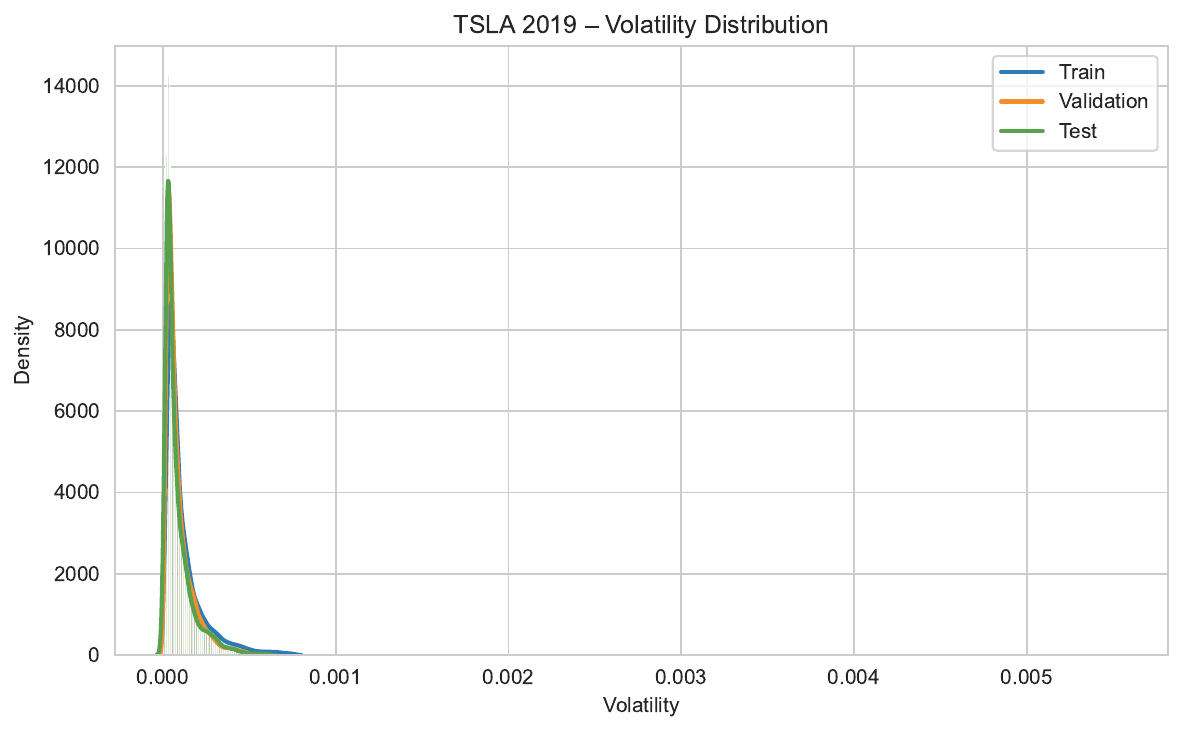} &
            \includegraphics[width=0.16\textwidth]{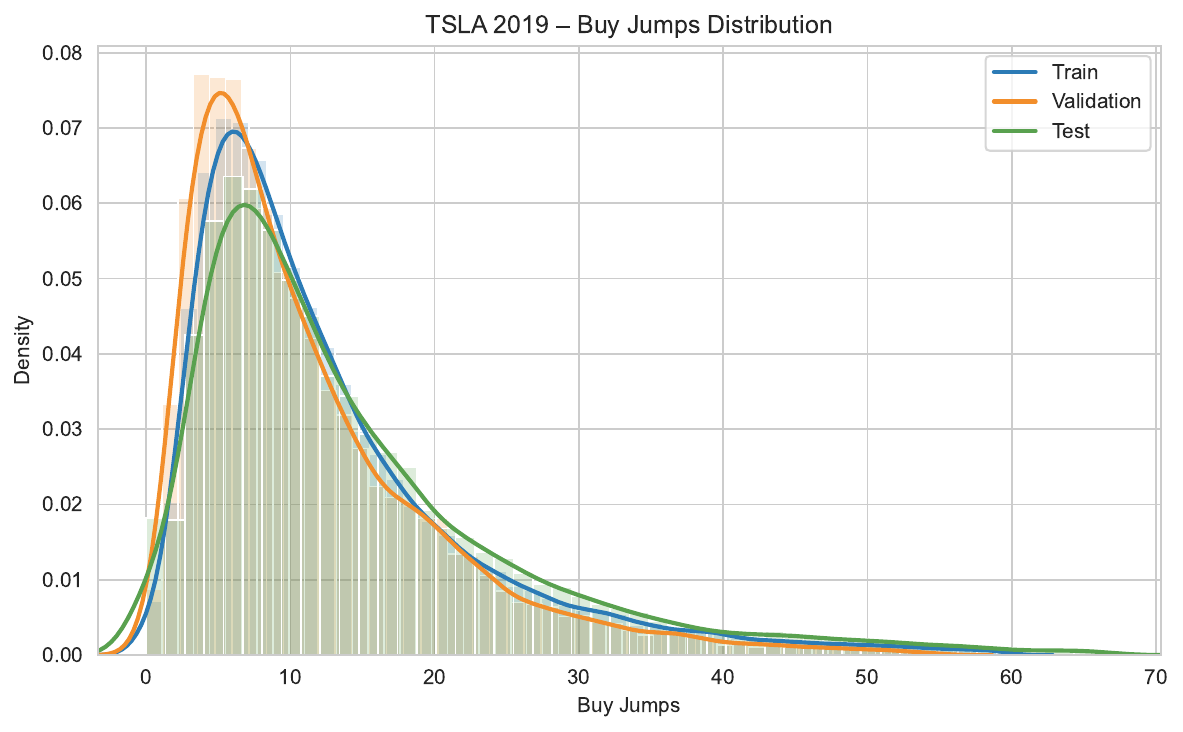} &
            \includegraphics[width=0.16\textwidth]{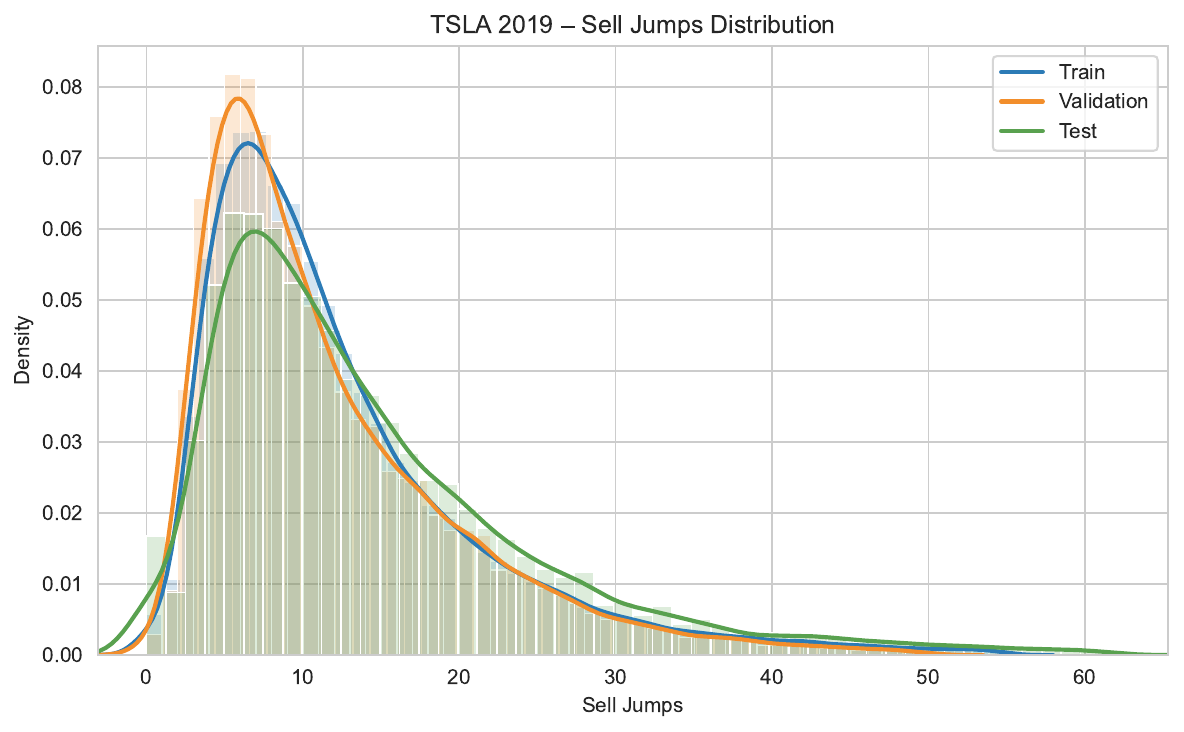} &
            \includegraphics[width=0.16\textwidth]{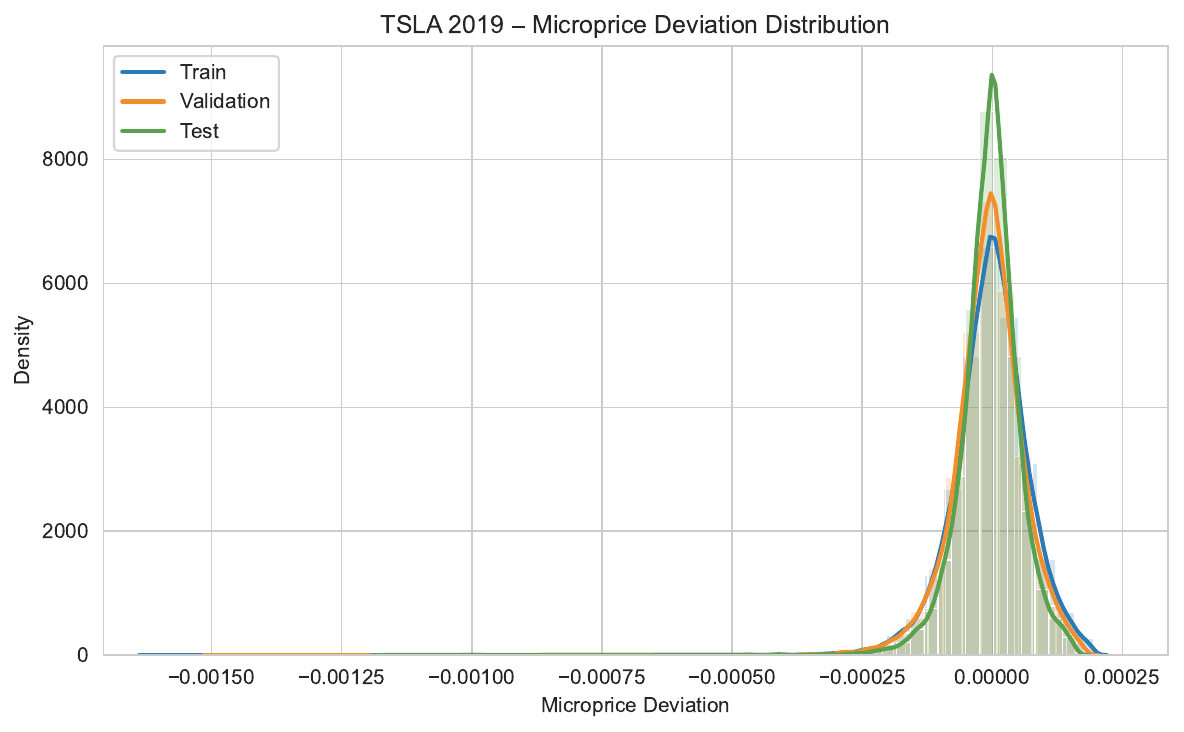} &
            \includegraphics[width=0.16\textwidth]{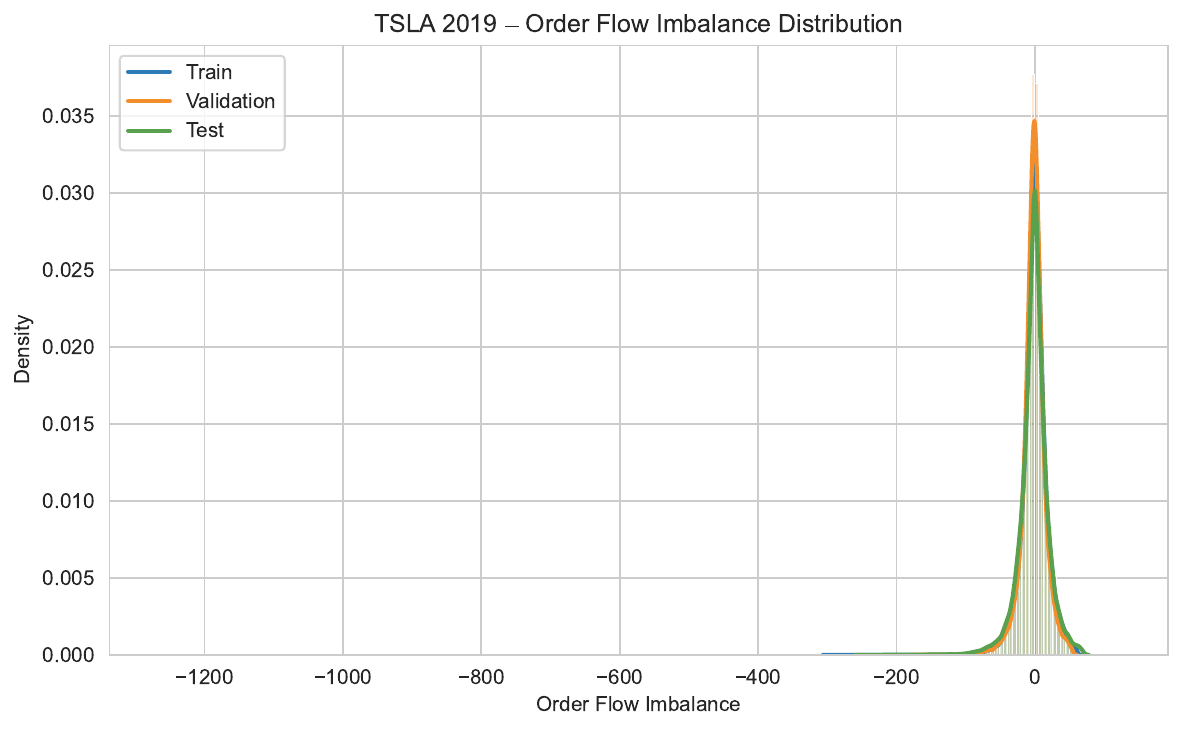} \\
            \scriptsize Returns & \scriptsize Volatility & \scriptsize Buy-order excitation & \scriptsize Sell-order excitation & \scriptsize Microprice deviation & \scriptsize OFI \\
        \end{tabular}
    \end{minipage}}
    \caption{TSLA, 2019.}
    \end{subfigure}

    \vspace{0.2em}

    \begin{subfigure}{\linewidth}
    \centering
    \makebox[\textwidth][c]{\begin{minipage}{1.12\textwidth}\centering
        \setlength{\tabcolsep}{0.5pt}
        \begin{tabular}{cccccc}
            \includegraphics[width=0.16\textwidth]{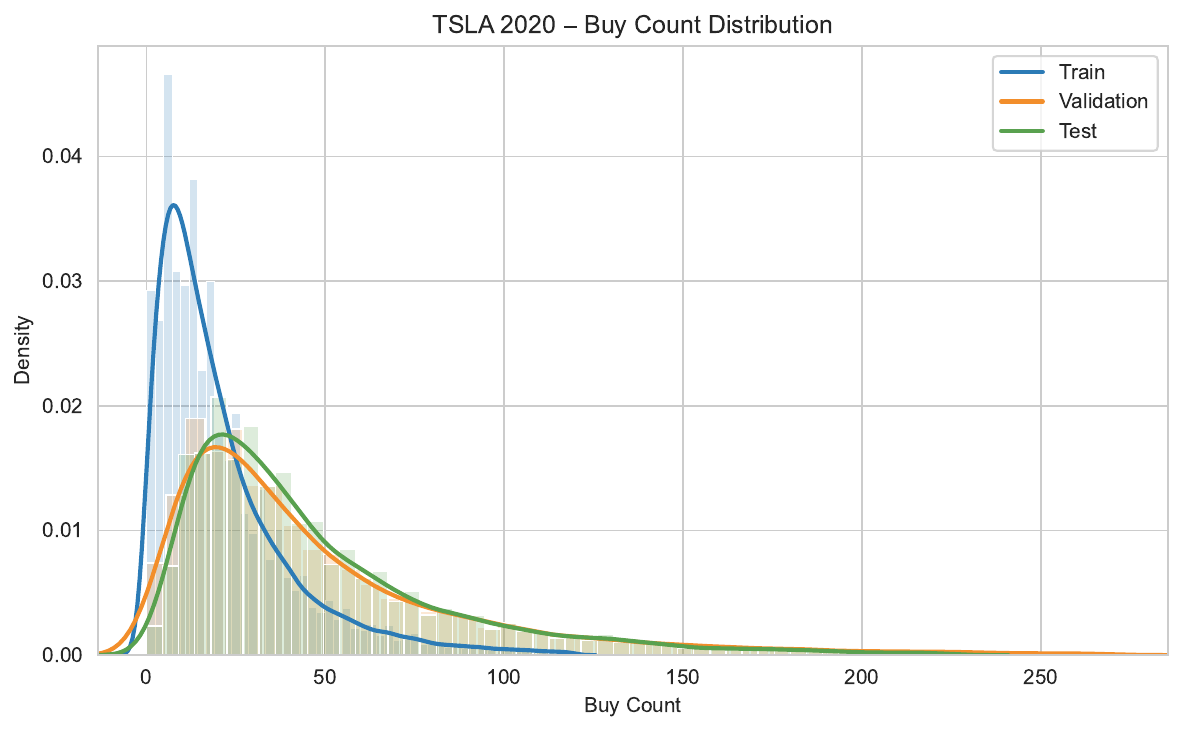} &
            \includegraphics[width=0.16\textwidth]{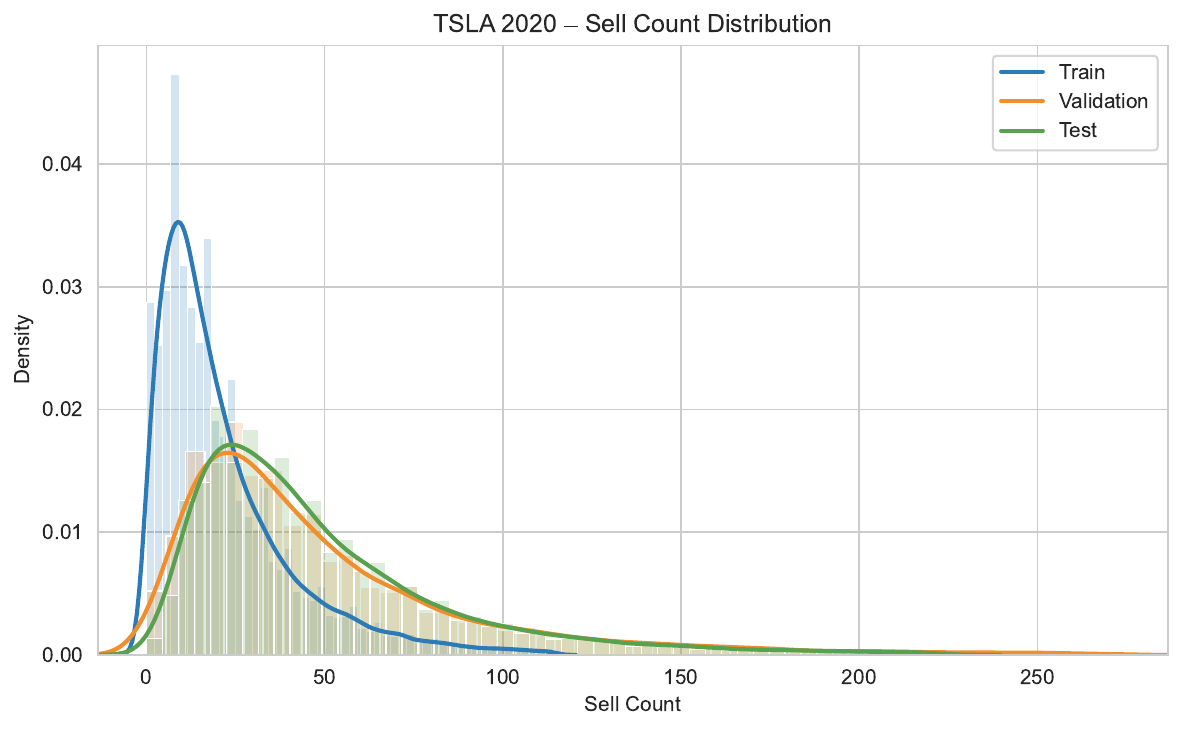} &
            \includegraphics[width=0.16\textwidth]{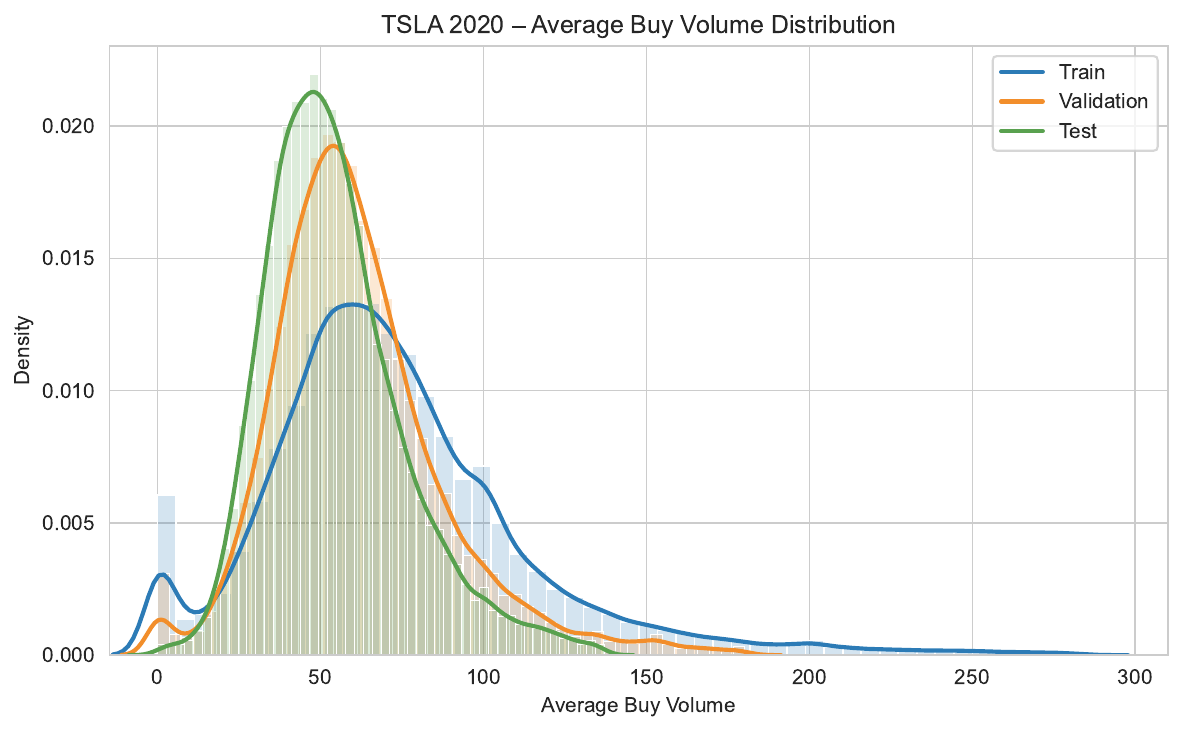} &
            \includegraphics[width=0.16\textwidth]{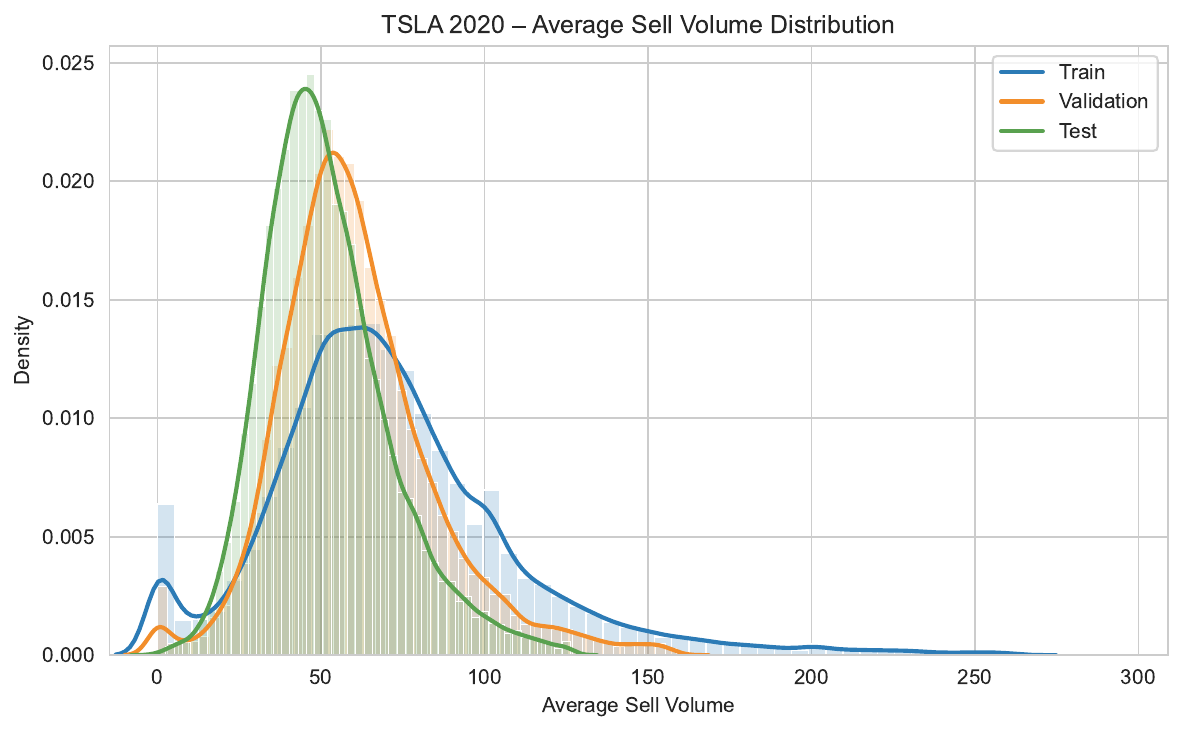} &
            \includegraphics[width=0.16\textwidth]{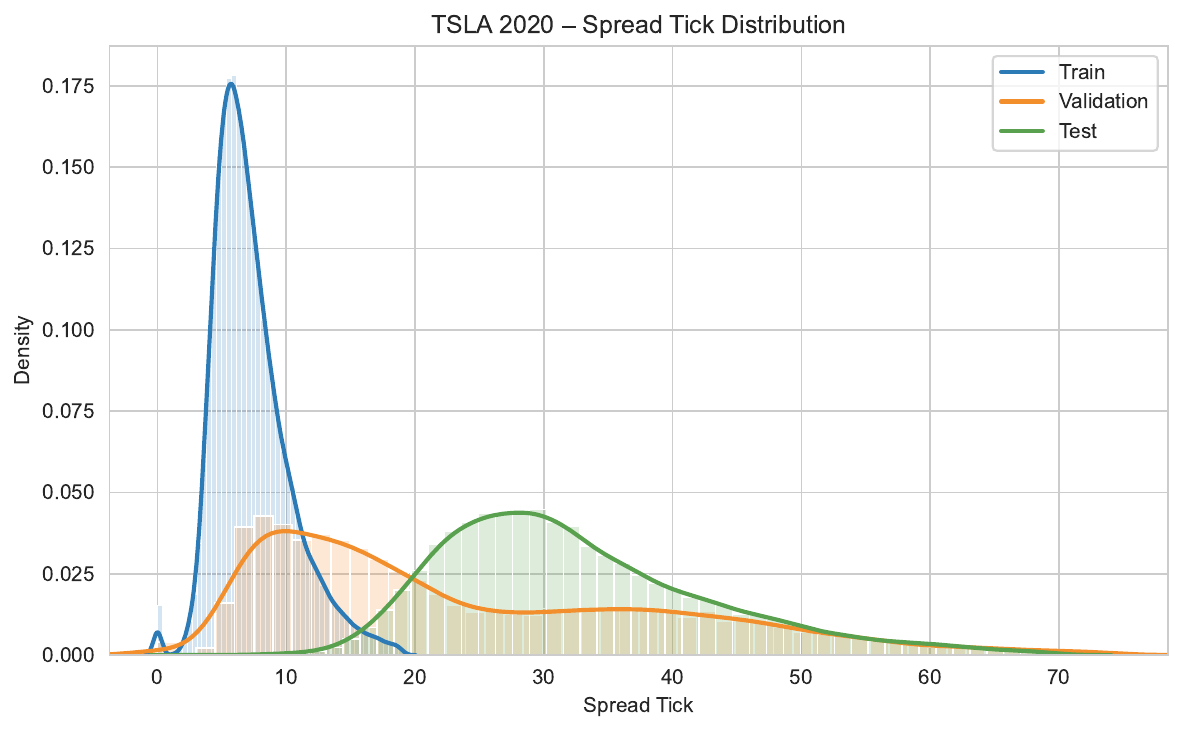} &
            \includegraphics[width=0.16\textwidth]{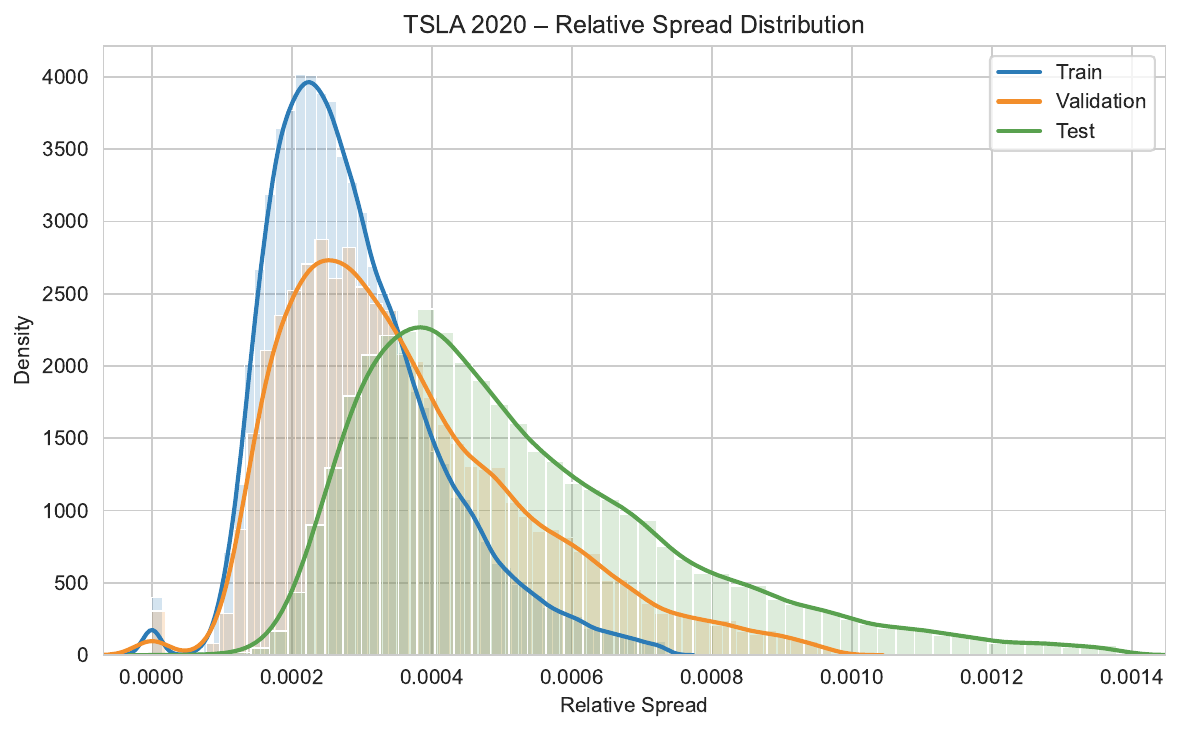} \\
            \scriptsize Buy count & \scriptsize Sell count & \scriptsize Average buy size & \scriptsize Average sell size & \scriptsize Spread (ticks) & \scriptsize Relative spread \\
            \includegraphics[width=0.16\textwidth]{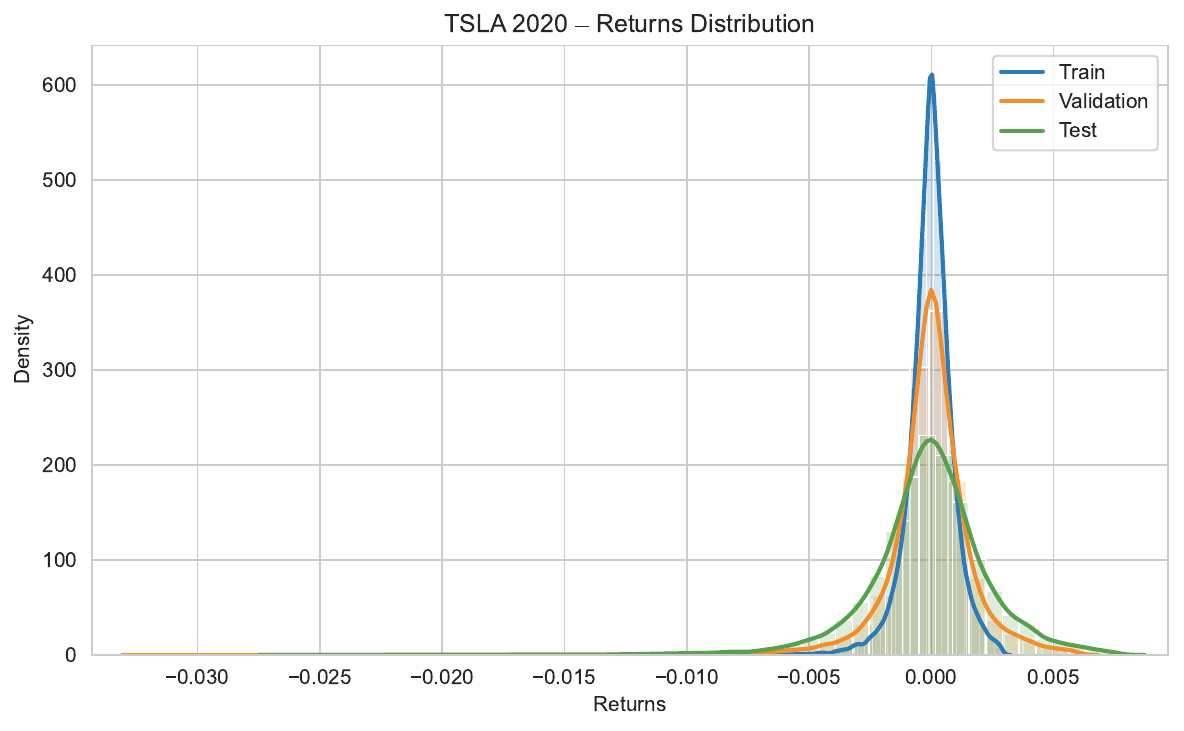} &
            \includegraphics[width=0.16\textwidth]{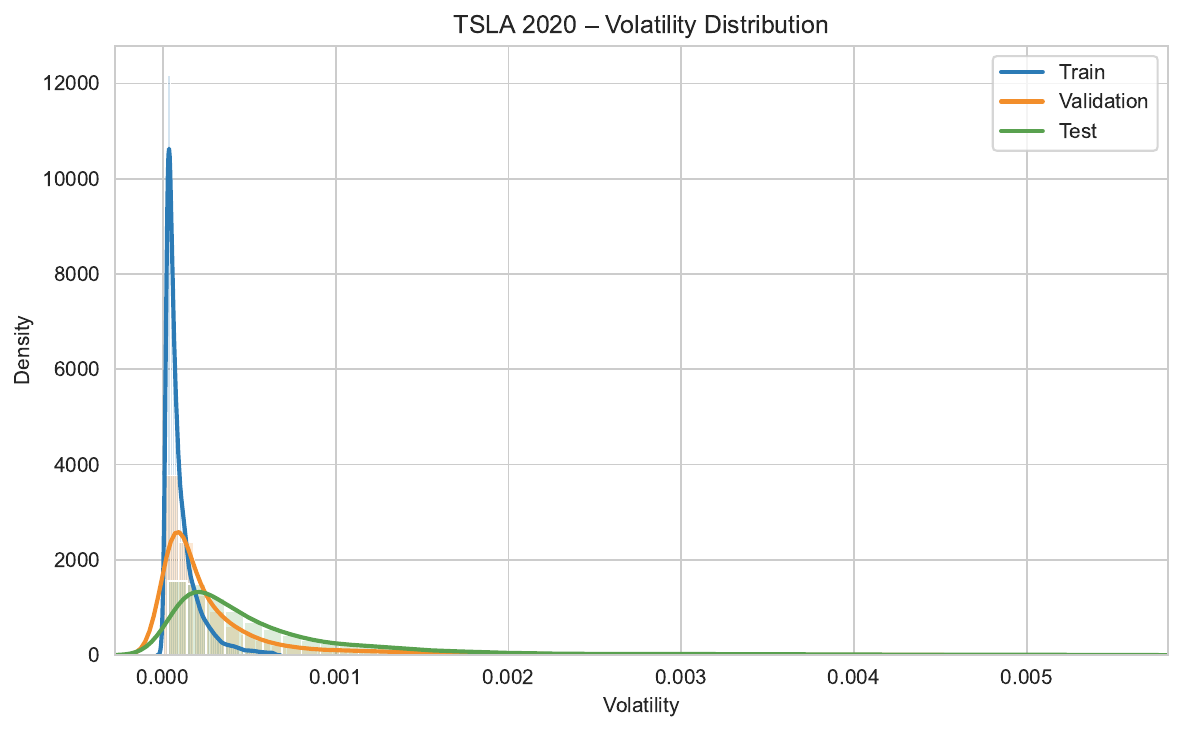} &
            \includegraphics[width=0.16\textwidth]{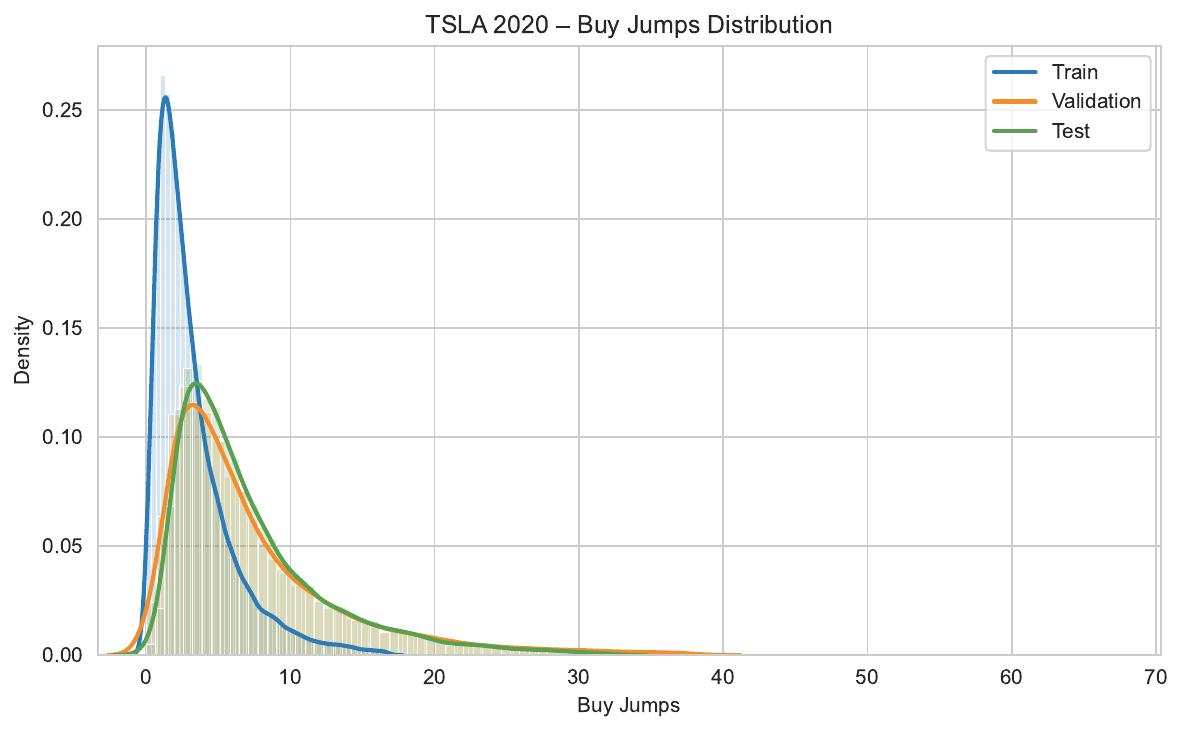} &
            \includegraphics[width=0.16\textwidth]{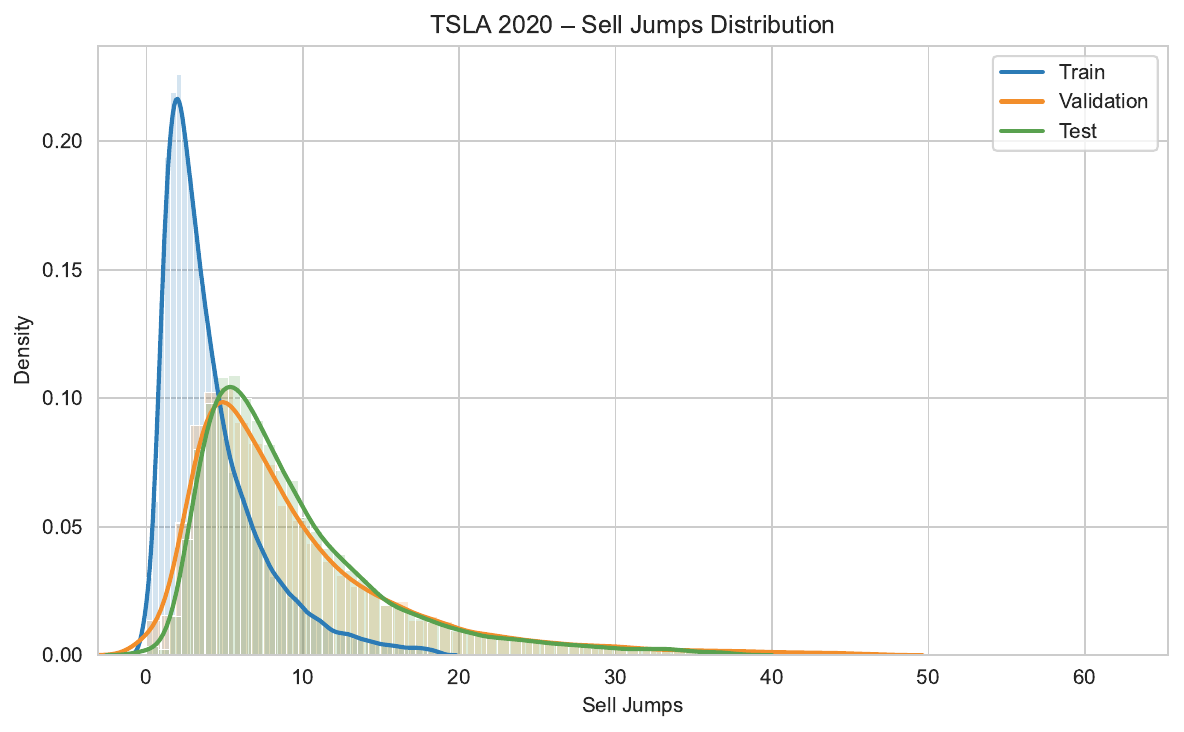} &
            \includegraphics[width=0.16\textwidth]{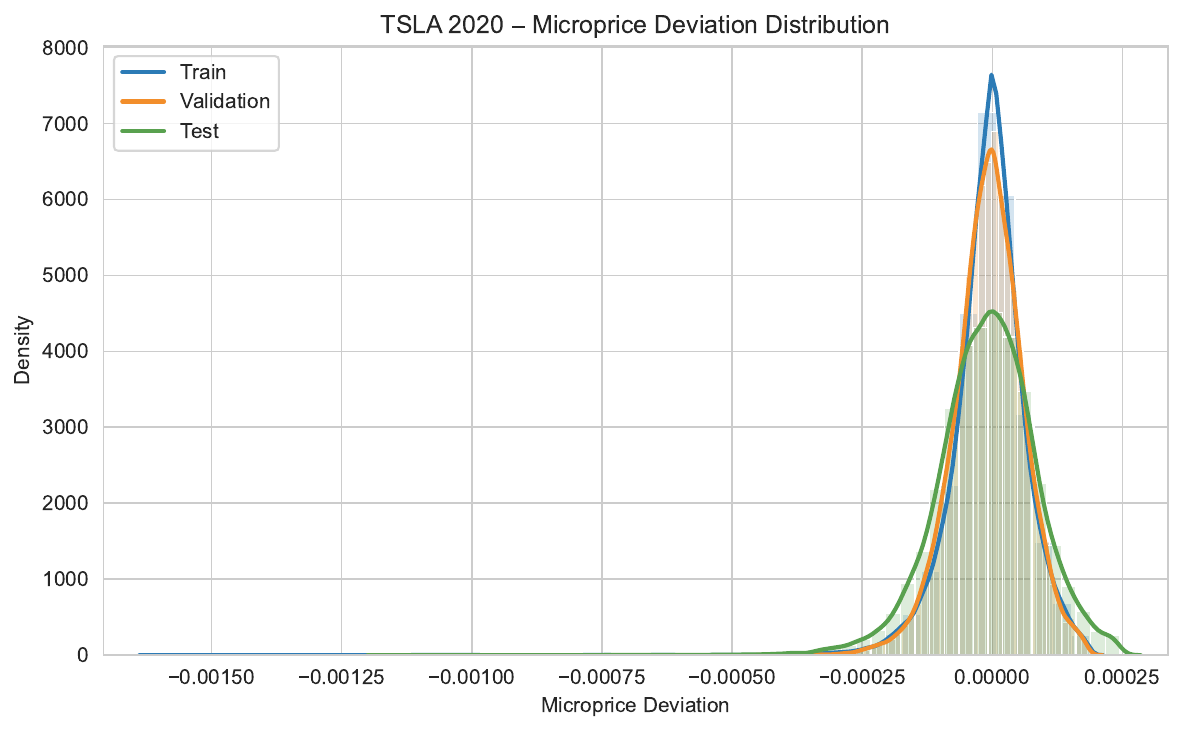} &
            \includegraphics[width=0.16\textwidth]{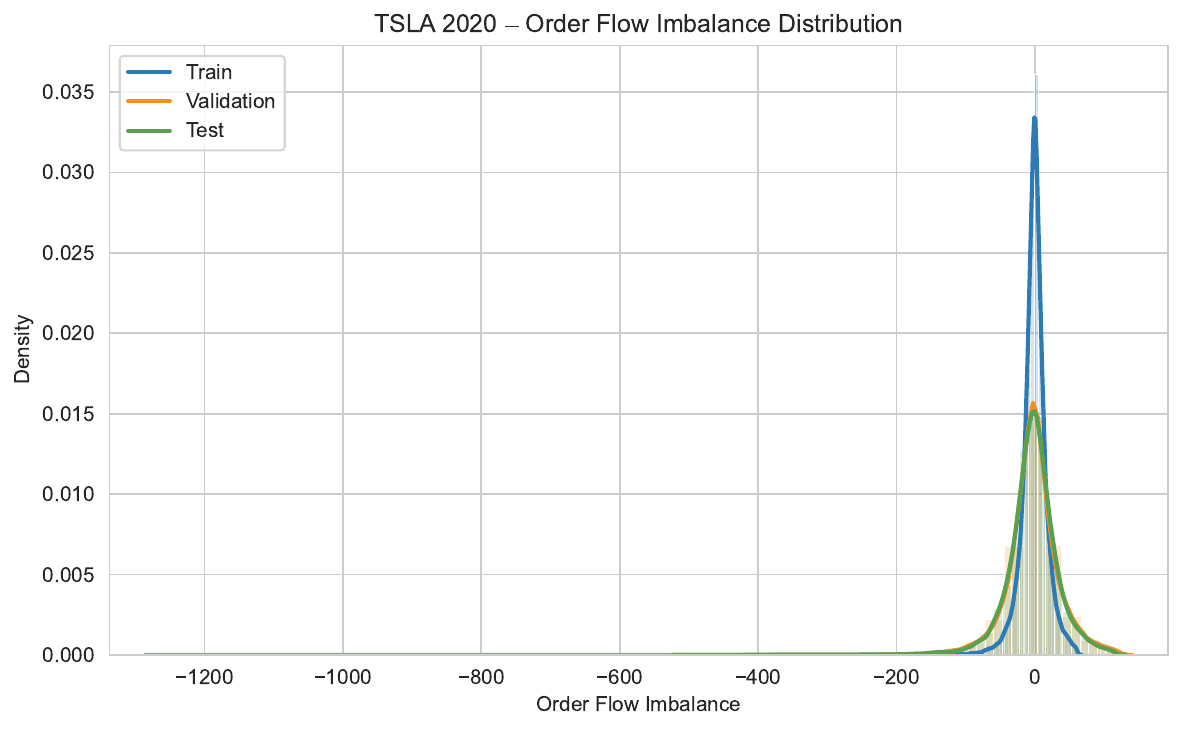} \\
            \scriptsize Returns & \scriptsize Volatility & \scriptsize Buy-order excitation & \scriptsize Sell-order excitation & \scriptsize Microprice deviation & \scriptsize OFI \\
        \end{tabular}
    \end{minipage}}
    \caption{TSLA, 2020.}
    \end{subfigure}

    \caption{Train, validation, and test distributions of all state variables for TSLA in 2019 and 2020. The comparison makes the stronger cross-split distributional shift in 2020 visually apparent.}
    \label{fig:full_state_dist_tsla}
\end{figure}

\begin{figure}[H]
    \centering
    \begin{subfigure}{\linewidth}
    \centering
    \makebox[\textwidth][c]{\begin{minipage}{1.12\textwidth}\centering
        \setlength{\tabcolsep}{0.5pt}
        \begin{tabular}{cccccc}
            \includegraphics[width=0.16\textwidth]{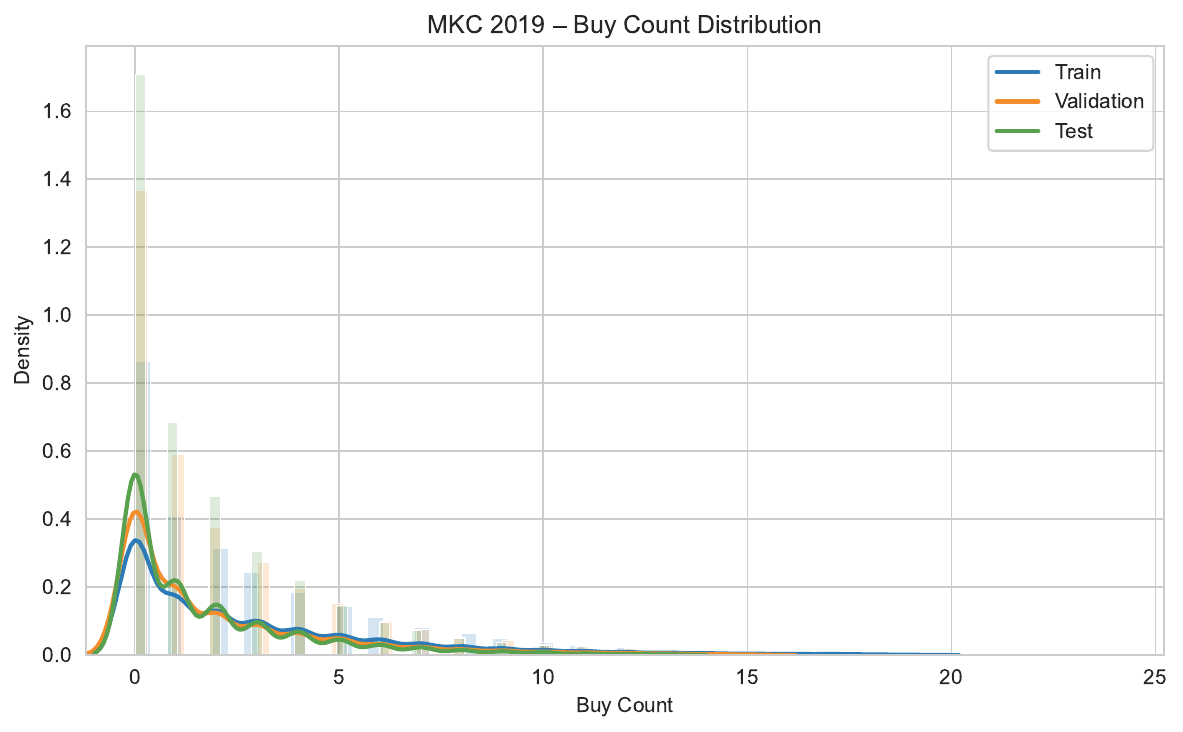} &
            \includegraphics[width=0.16\textwidth]{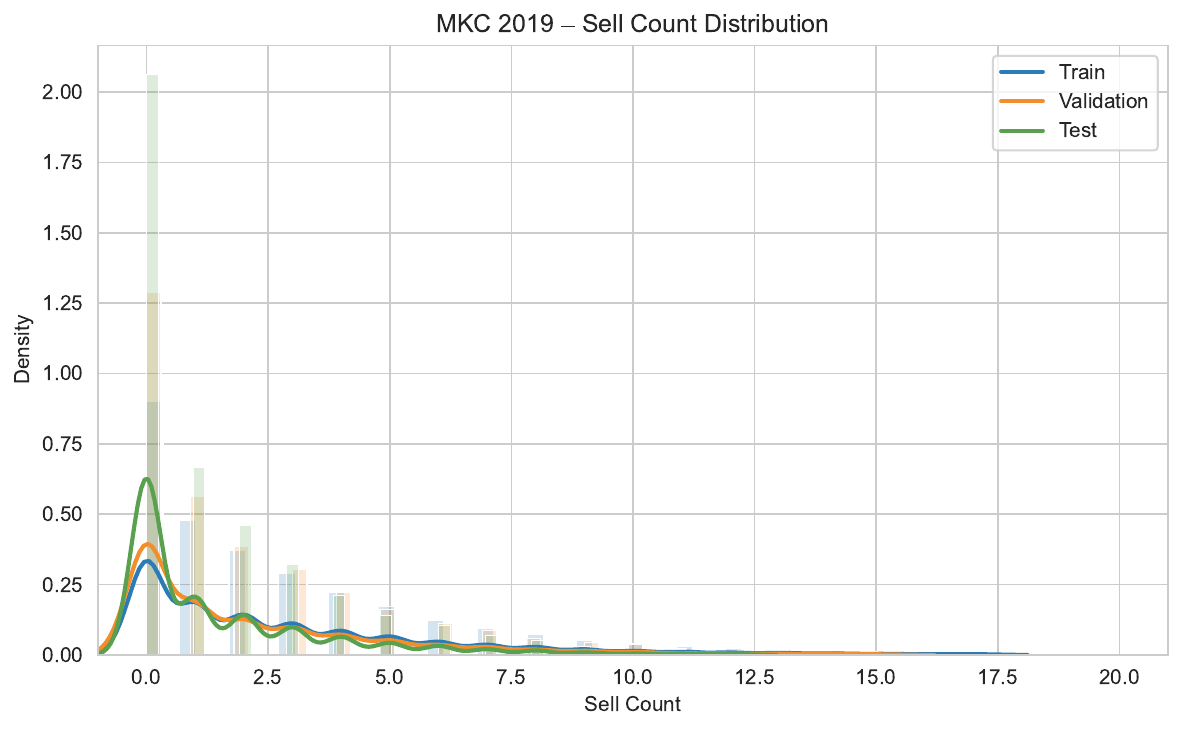} &
            \includegraphics[width=0.16\textwidth]{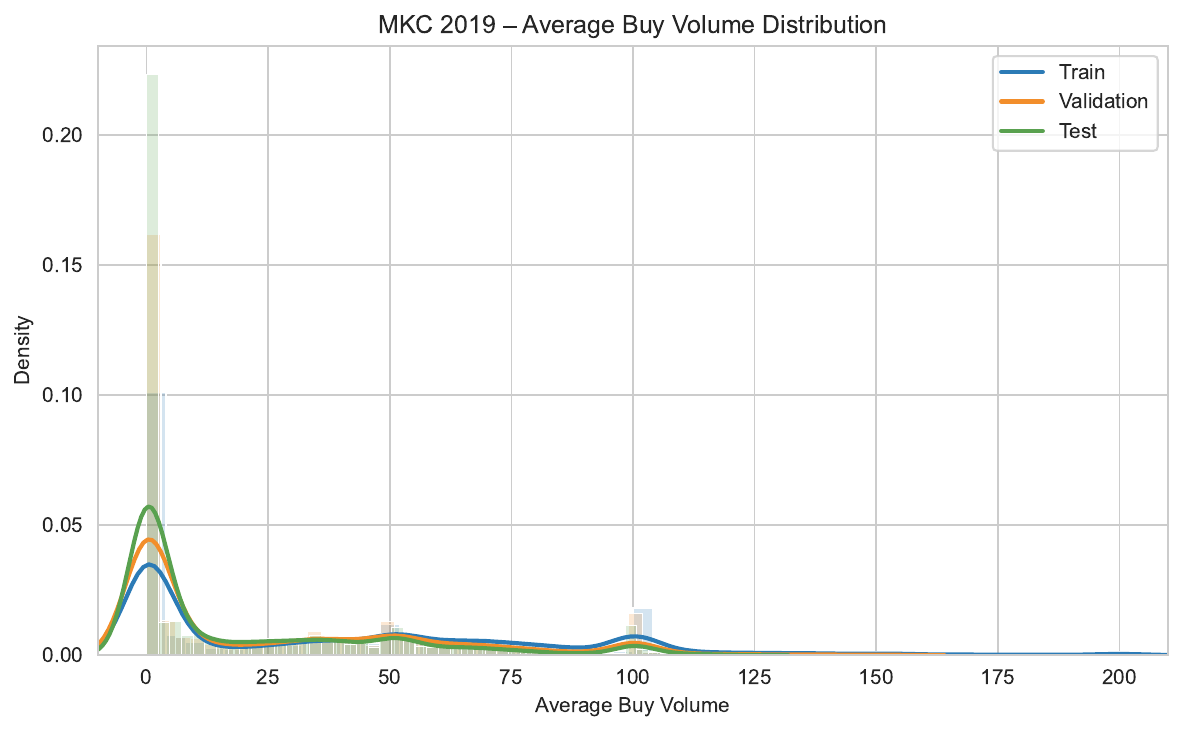} &
            \includegraphics[width=0.16\textwidth]{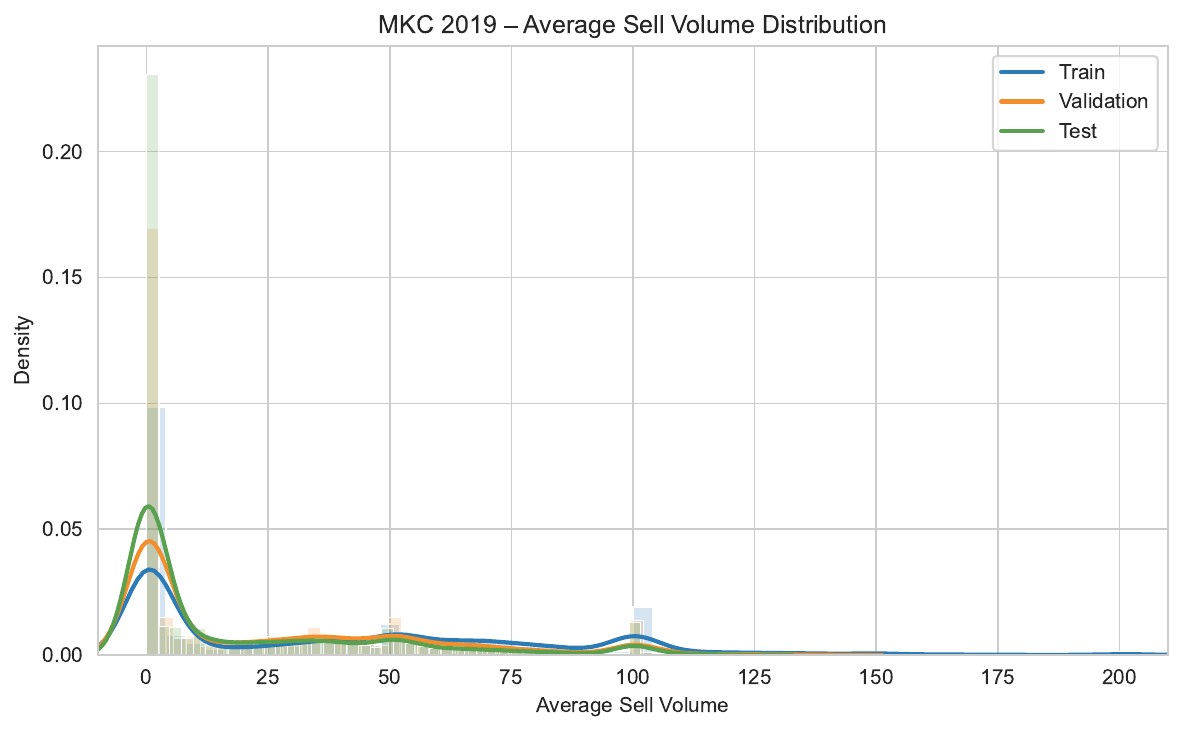} &
            \includegraphics[width=0.16\textwidth]{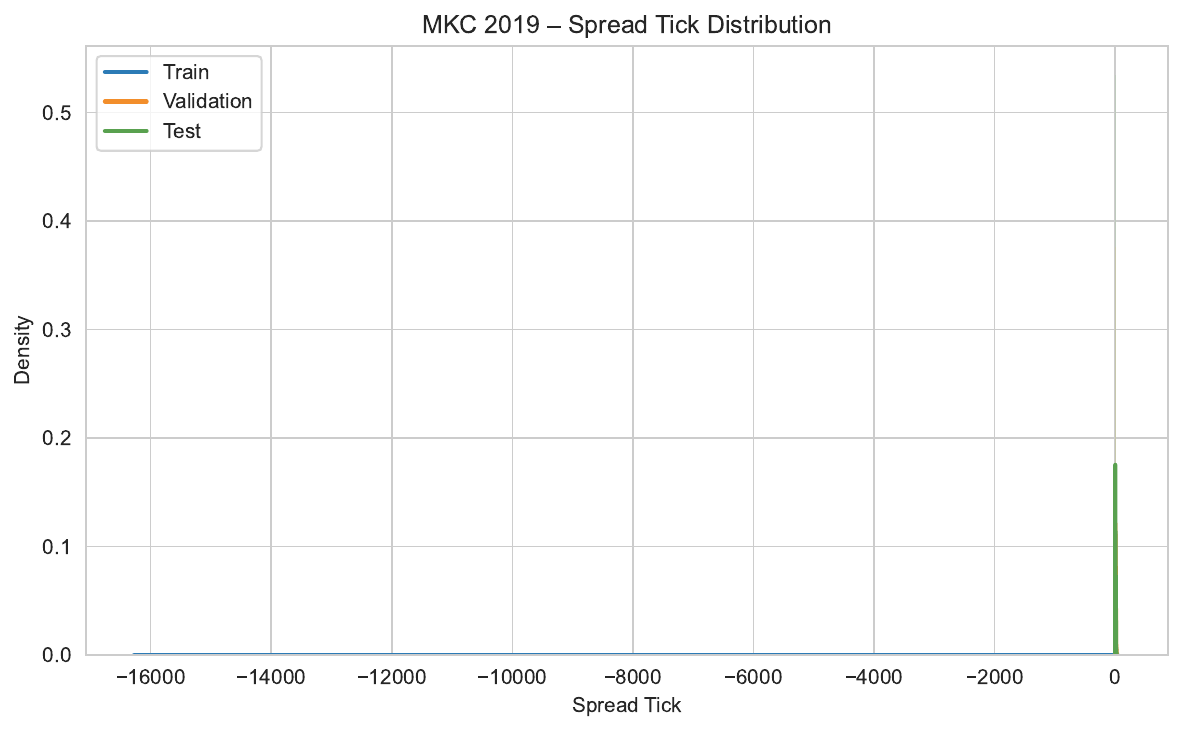} &
            \includegraphics[width=0.16\textwidth]{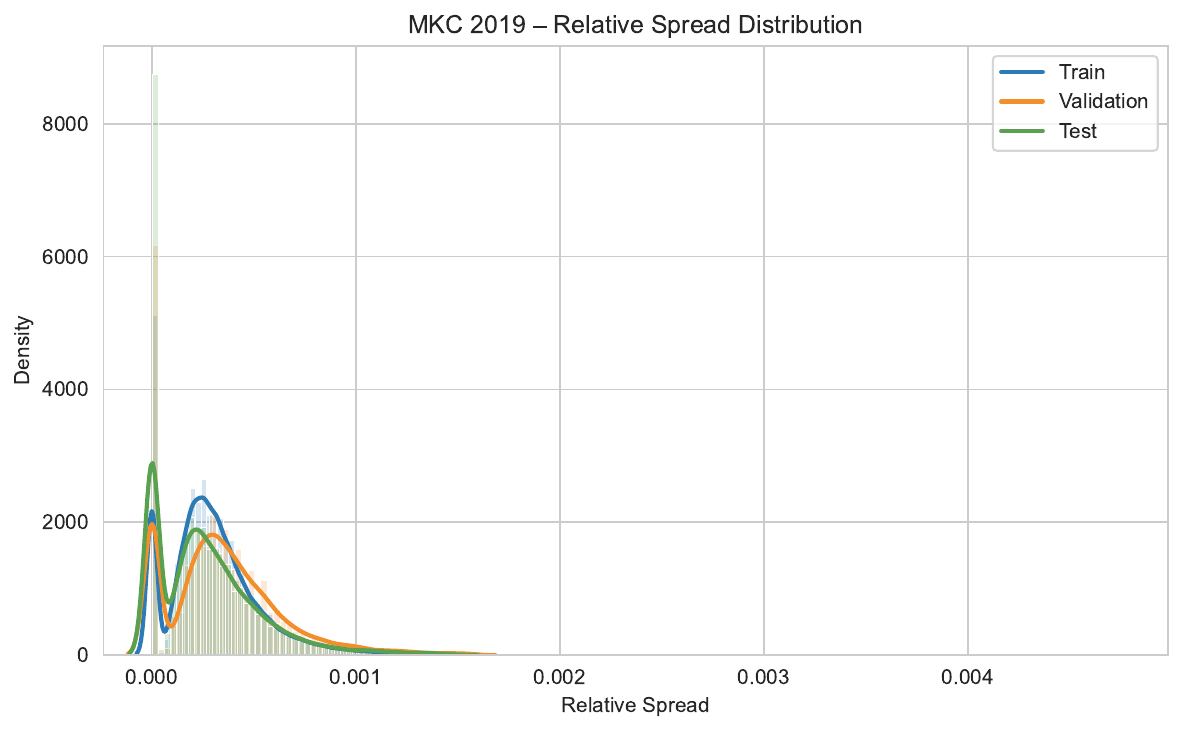} \\
            \scriptsize Buy count & \scriptsize Sell count & \scriptsize Average buy size & \scriptsize Average sell size & \scriptsize Spread (ticks) & \scriptsize Relative spread \\
            \includegraphics[width=0.16\textwidth]{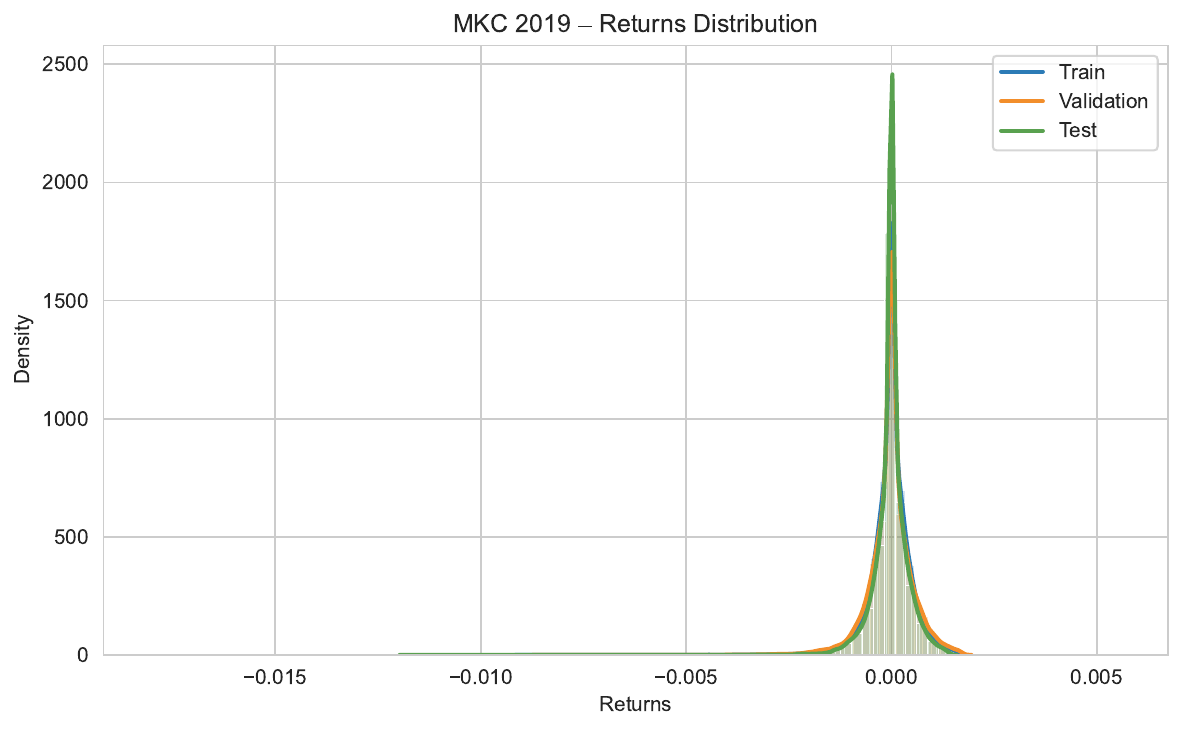} &
            \includegraphics[width=0.16\textwidth]{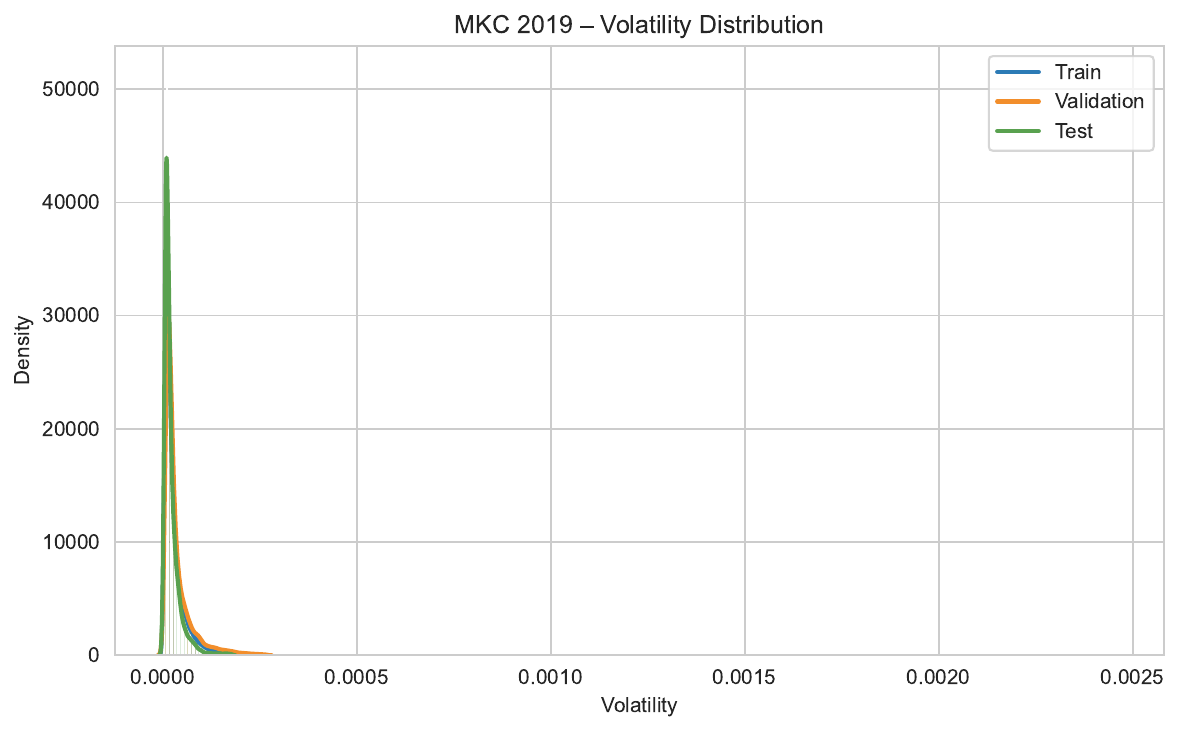} &
            \includegraphics[width=0.16\textwidth]{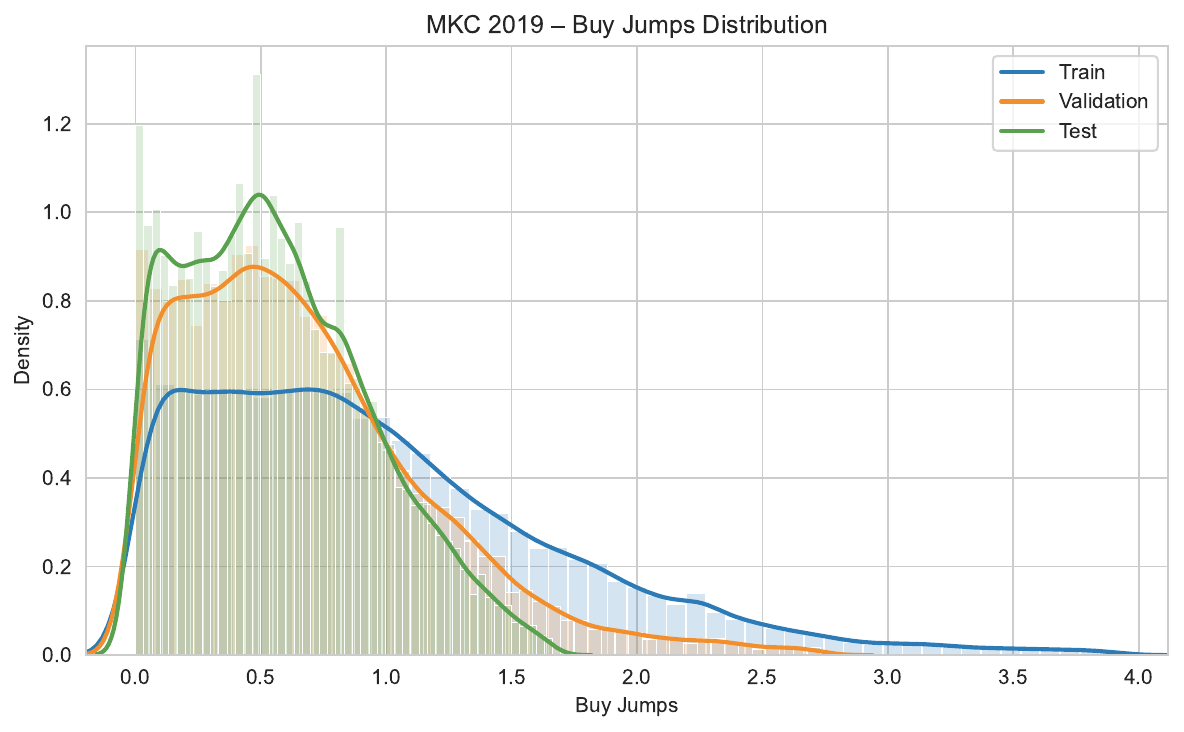} &
            \includegraphics[width=0.16\textwidth]{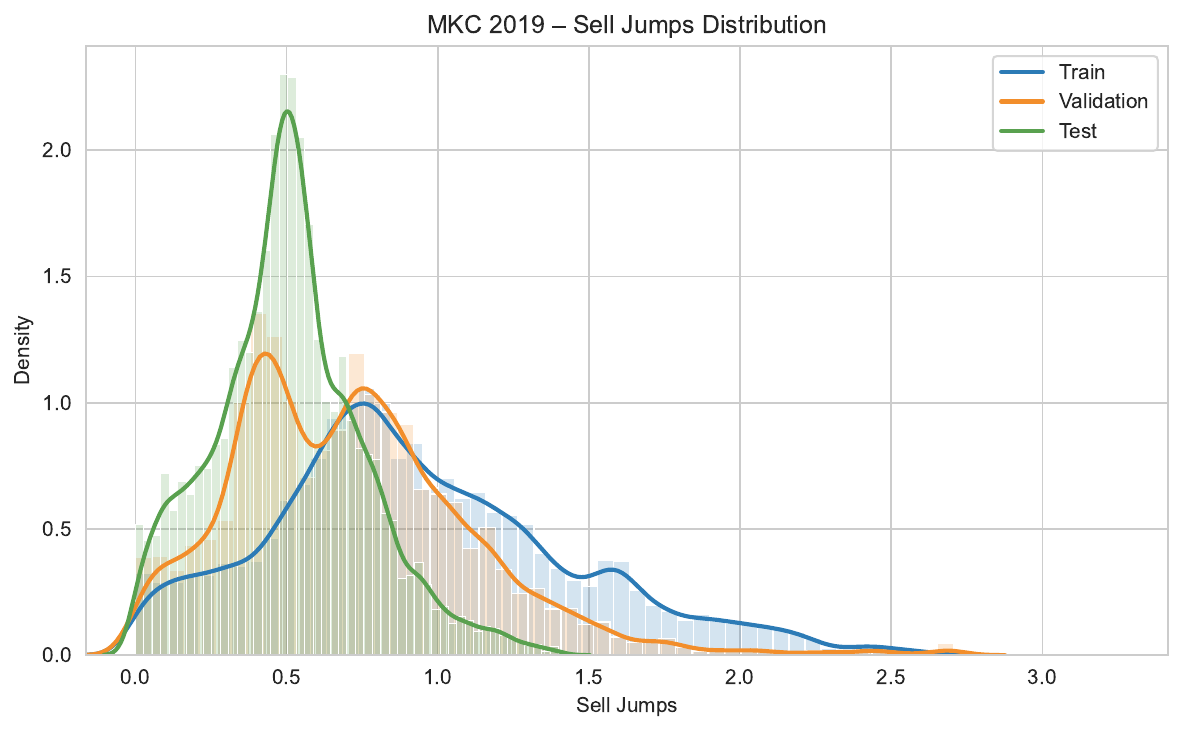} &
            \includegraphics[width=0.16\textwidth]{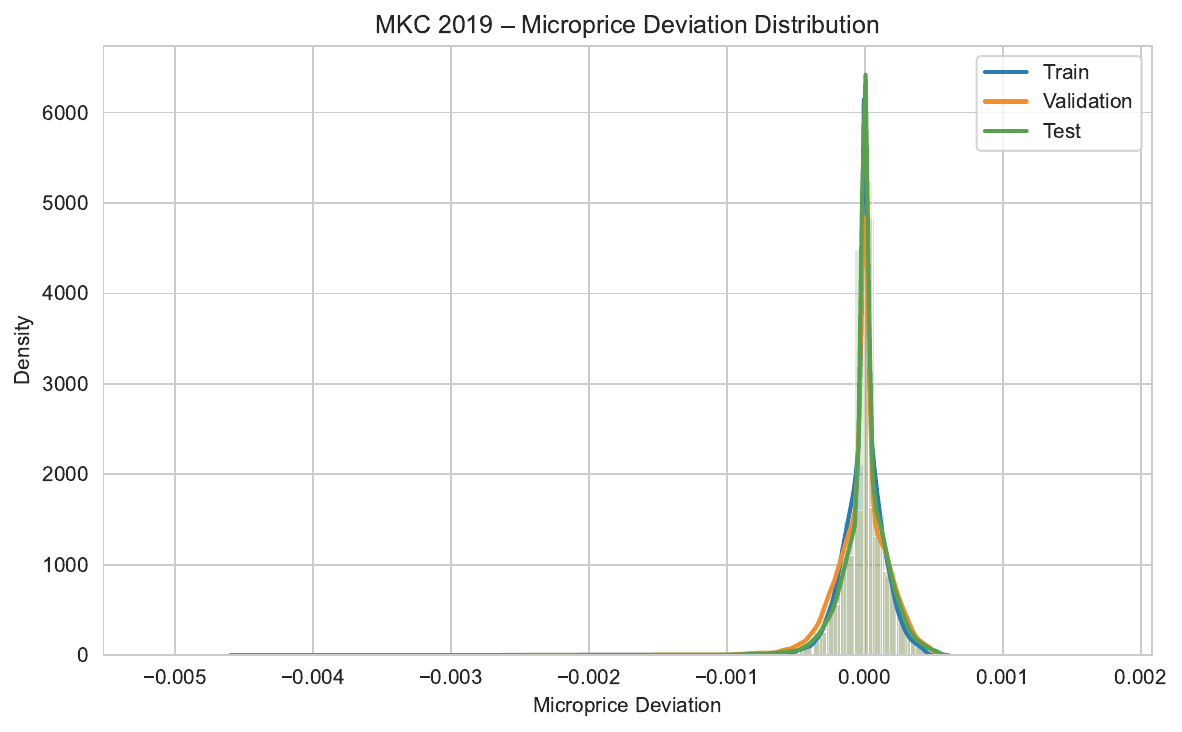} &
            \includegraphics[width=0.16\textwidth]{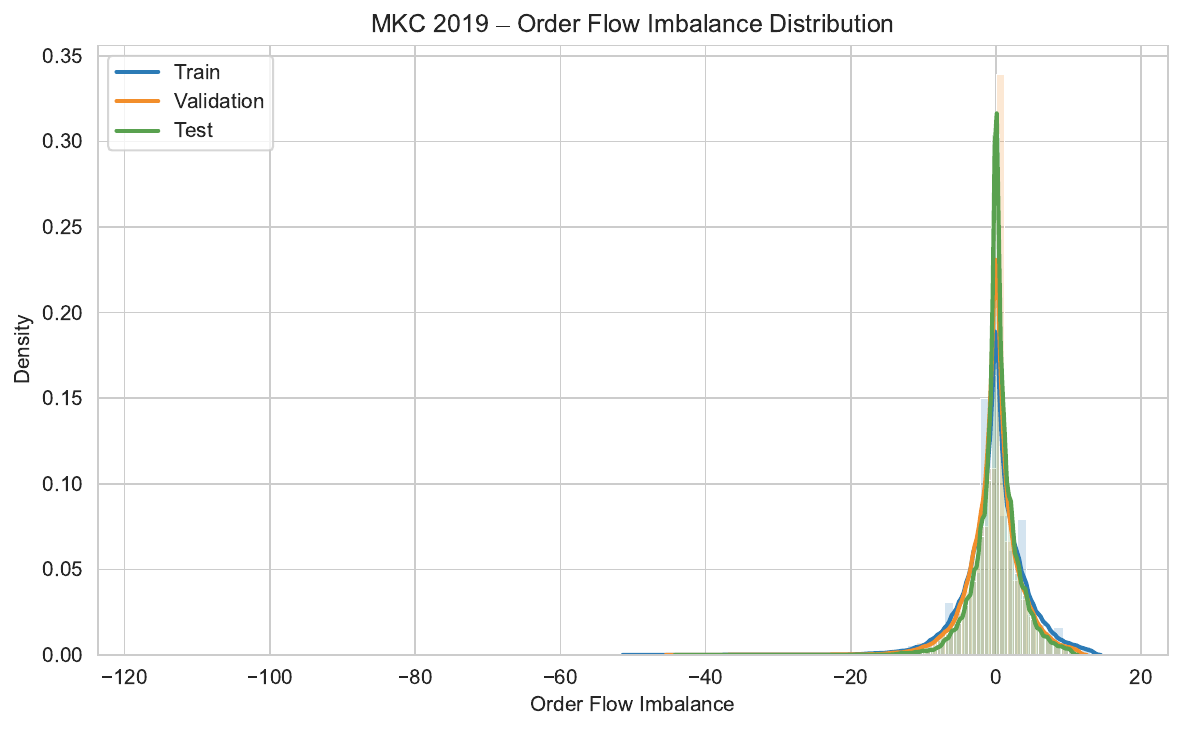} \\
            \scriptsize Returns & \scriptsize Volatility & \scriptsize Buy-order excitation & \scriptsize Sell-order excitation & \scriptsize Microprice deviation & \scriptsize OFI \\
        \end{tabular}
    \end{minipage}}
    \caption{MKC, 2019.}
    \end{subfigure}

    \vspace{0.2em}

    \begin{subfigure}{\linewidth}
    \centering
    \makebox[\textwidth][c]{\begin{minipage}{1.12\textwidth}\centering
        \setlength{\tabcolsep}{0.5pt}
        \begin{tabular}{cccccc}
            \includegraphics[width=0.16\textwidth]{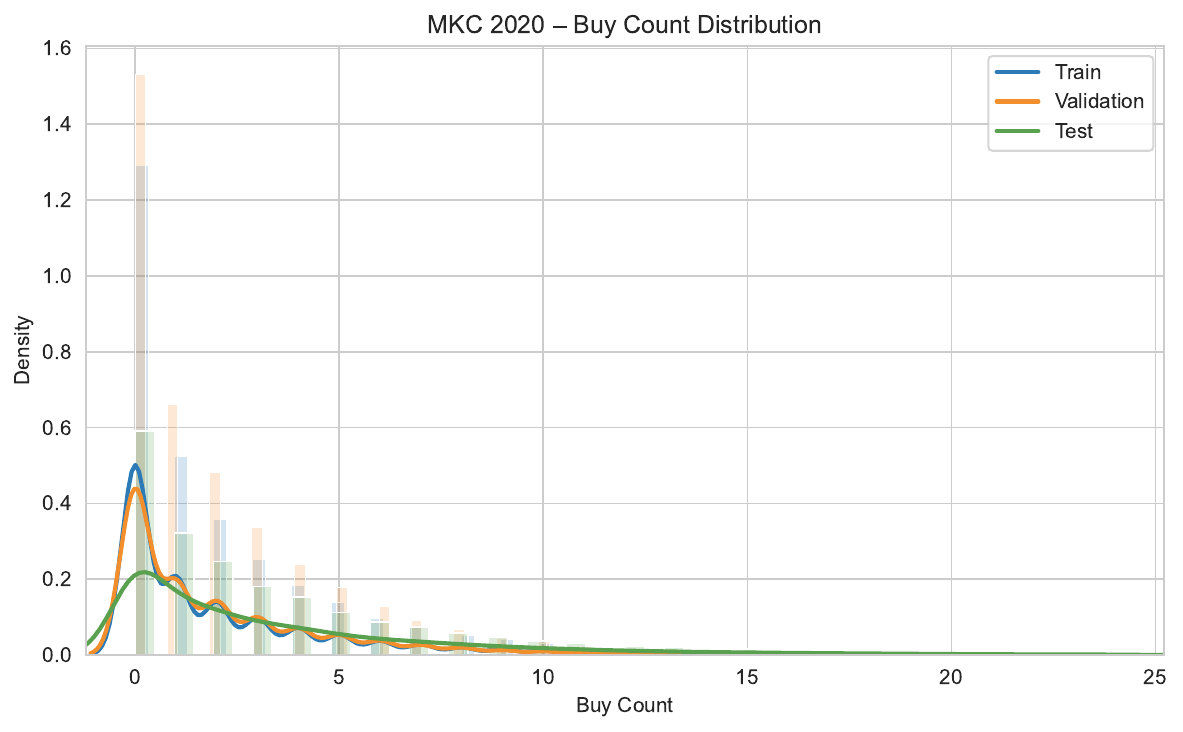} &
            \includegraphics[width=0.16\textwidth]{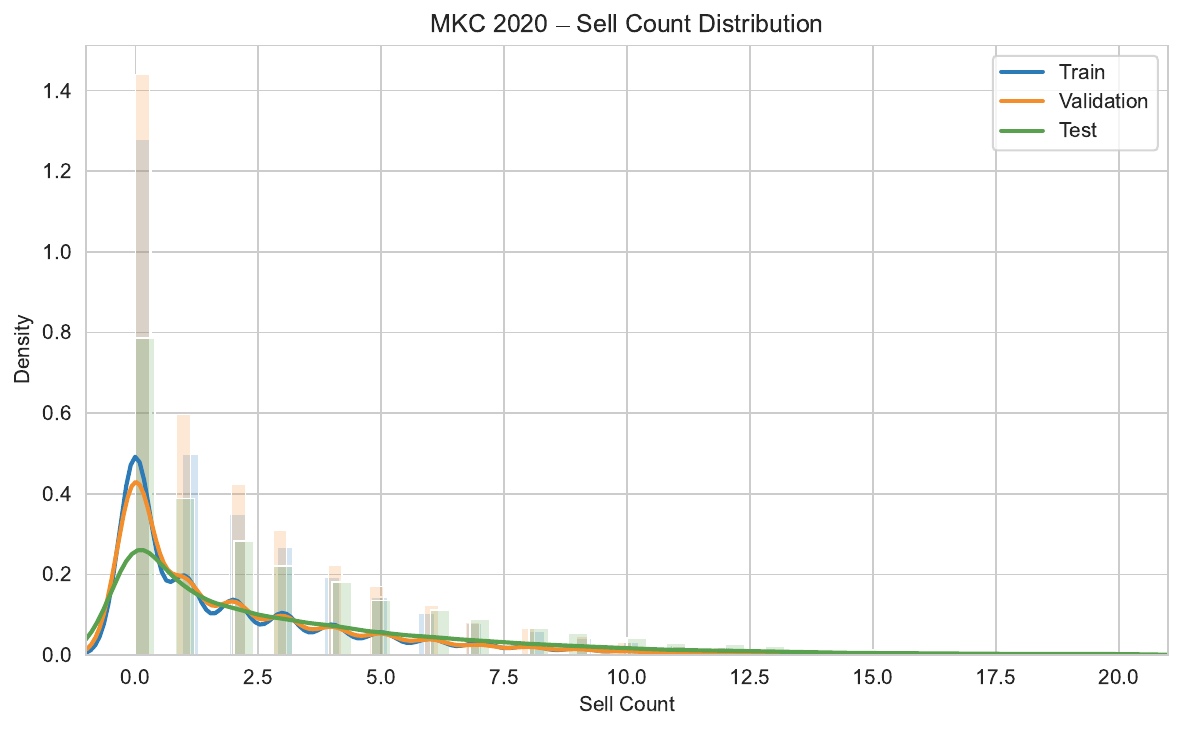} &
            \includegraphics[width=0.16\textwidth]{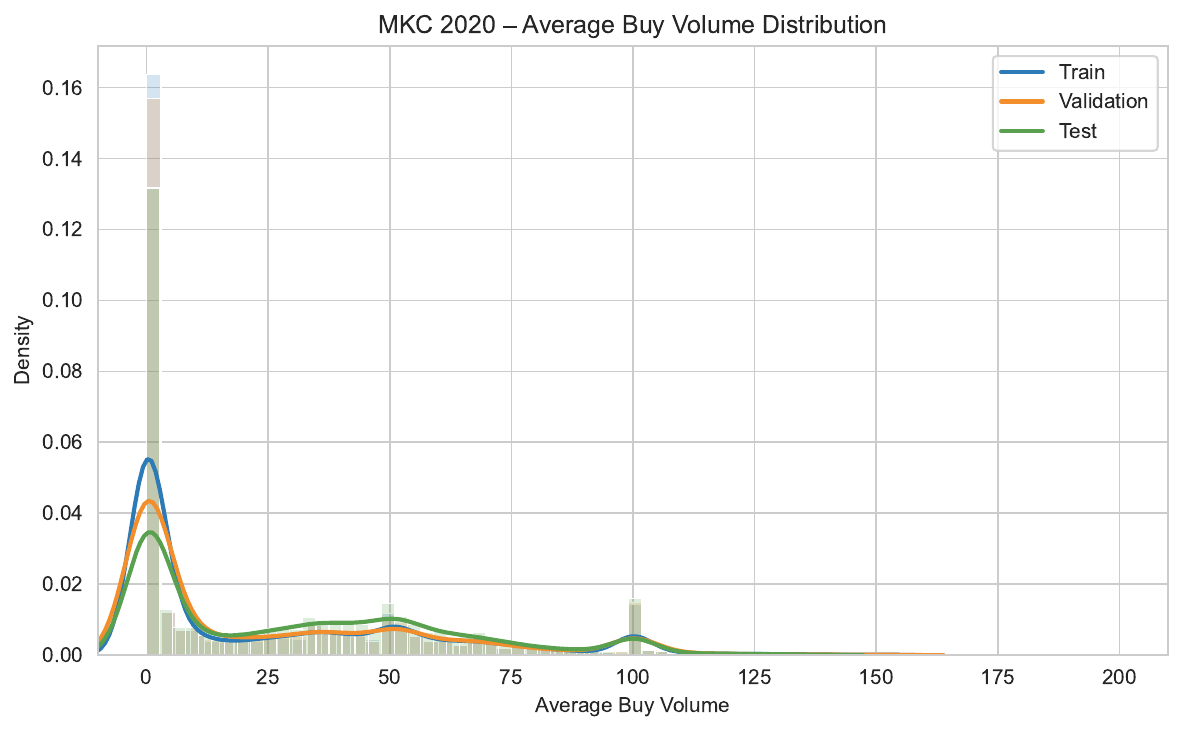} &
            \includegraphics[width=0.16\textwidth]{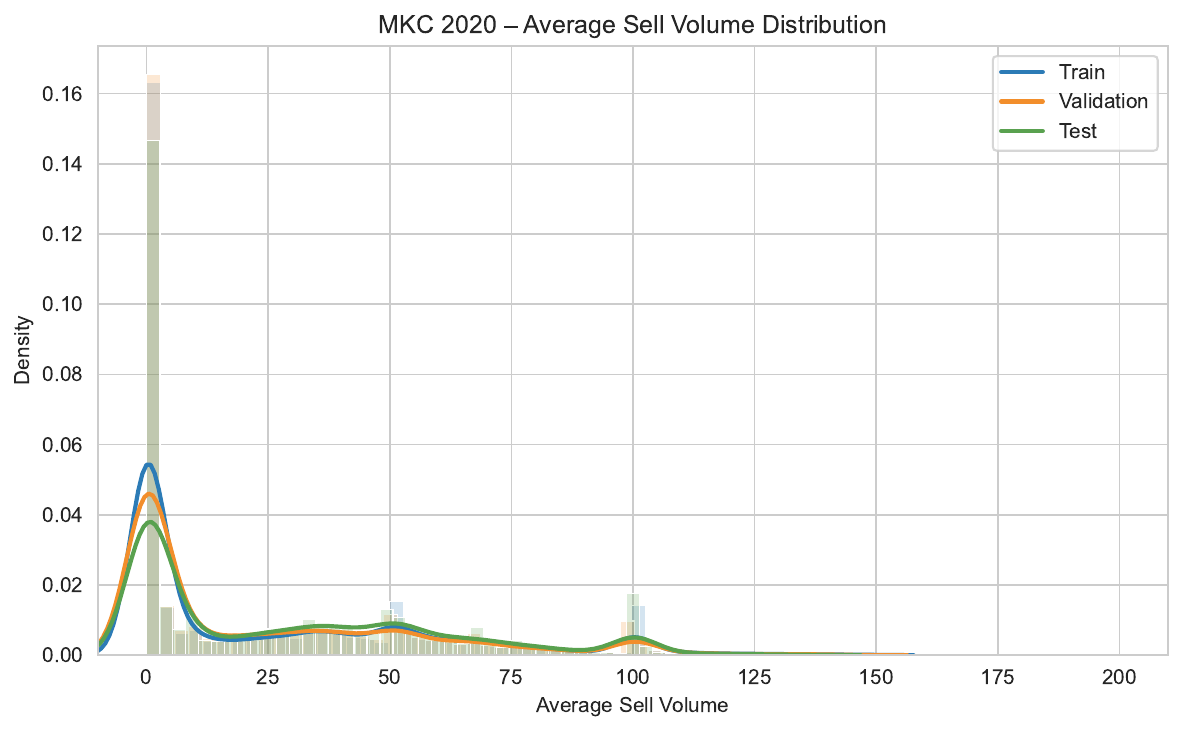} &
            \includegraphics[width=0.16\textwidth]{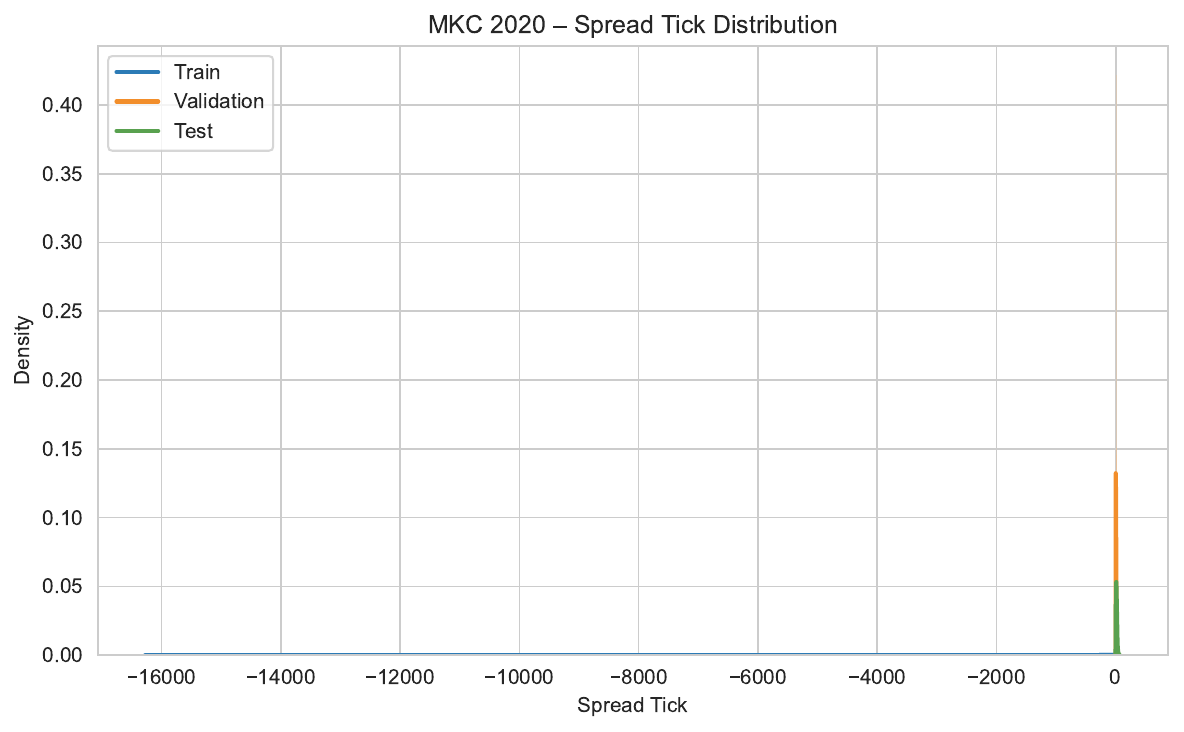} &
            \includegraphics[width=0.16\textwidth]{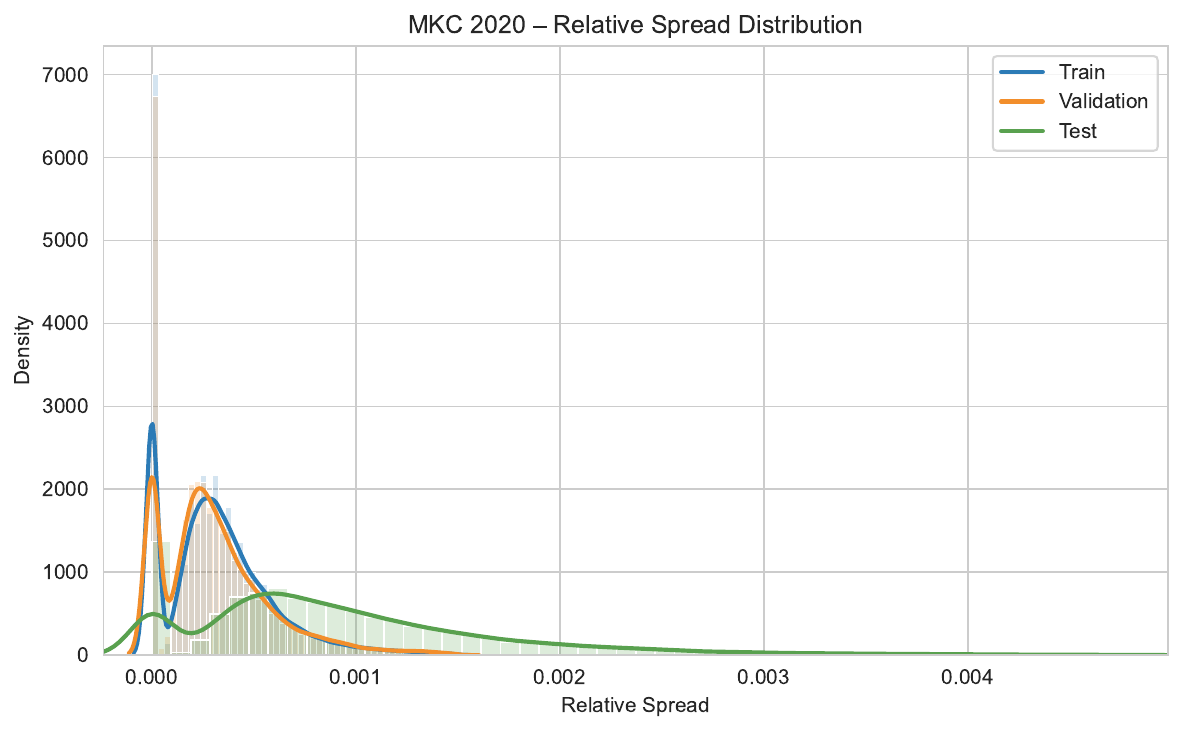} \\
            \scriptsize Buy count & \scriptsize Sell count & \scriptsize Average buy size & \scriptsize Average sell size & \scriptsize Spread (ticks) & \scriptsize Relative spread \\
            \includegraphics[width=0.16\textwidth]{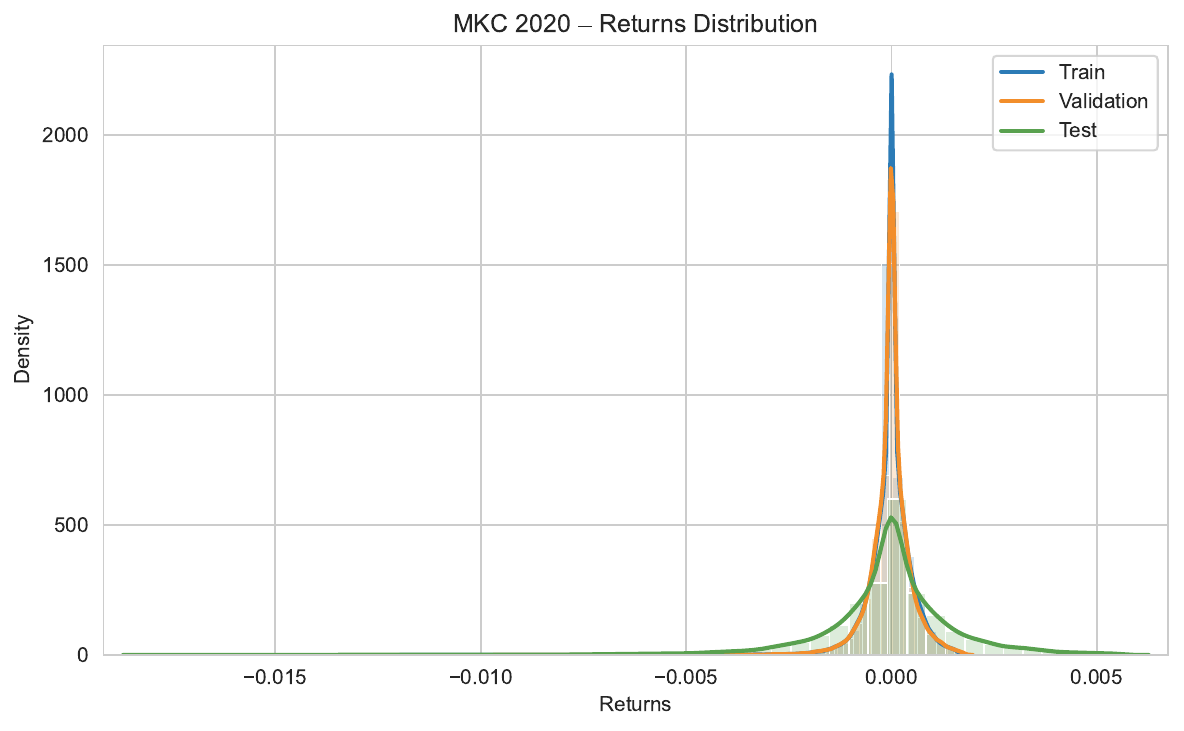} &
            \includegraphics[width=0.16\textwidth]{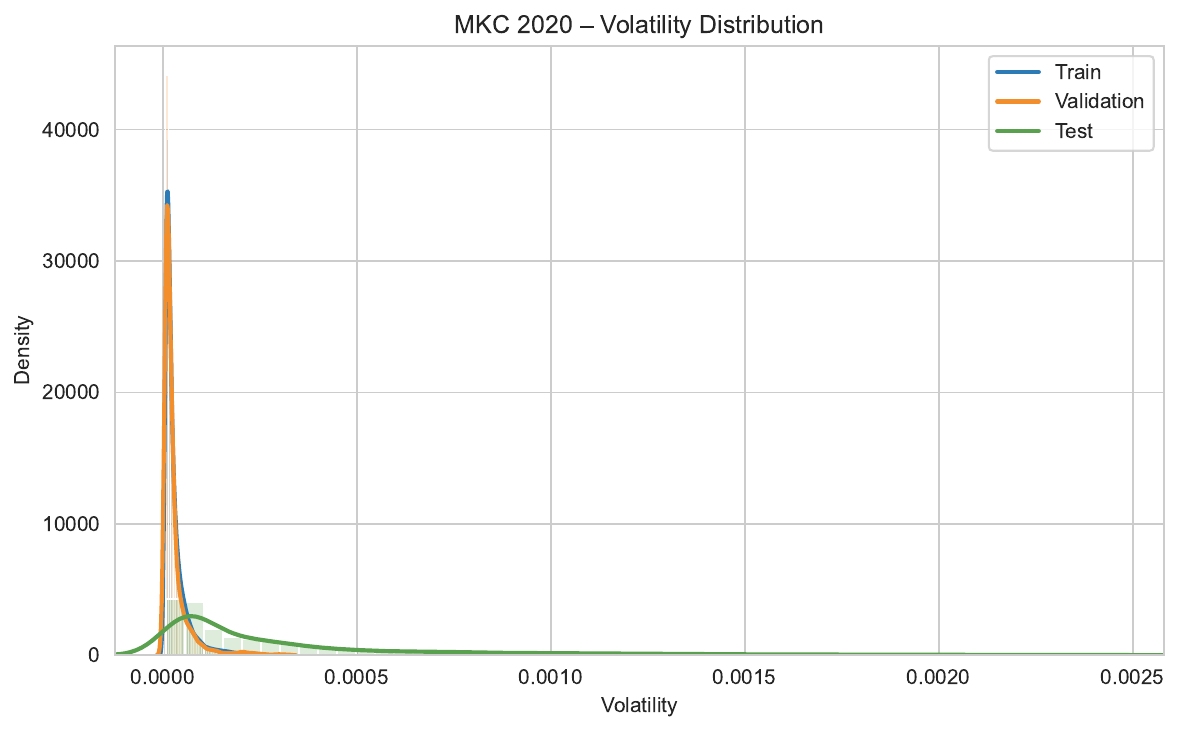} &
            \includegraphics[width=0.16\textwidth]{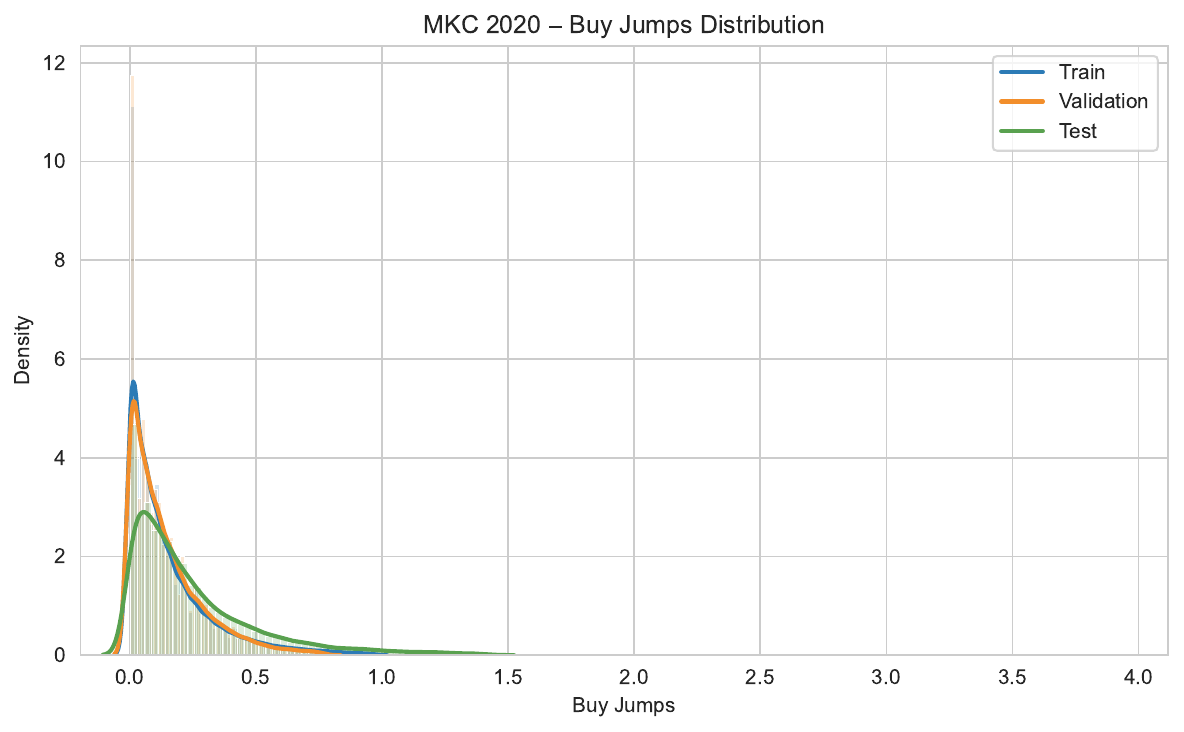} &
            \includegraphics[width=0.16\textwidth]{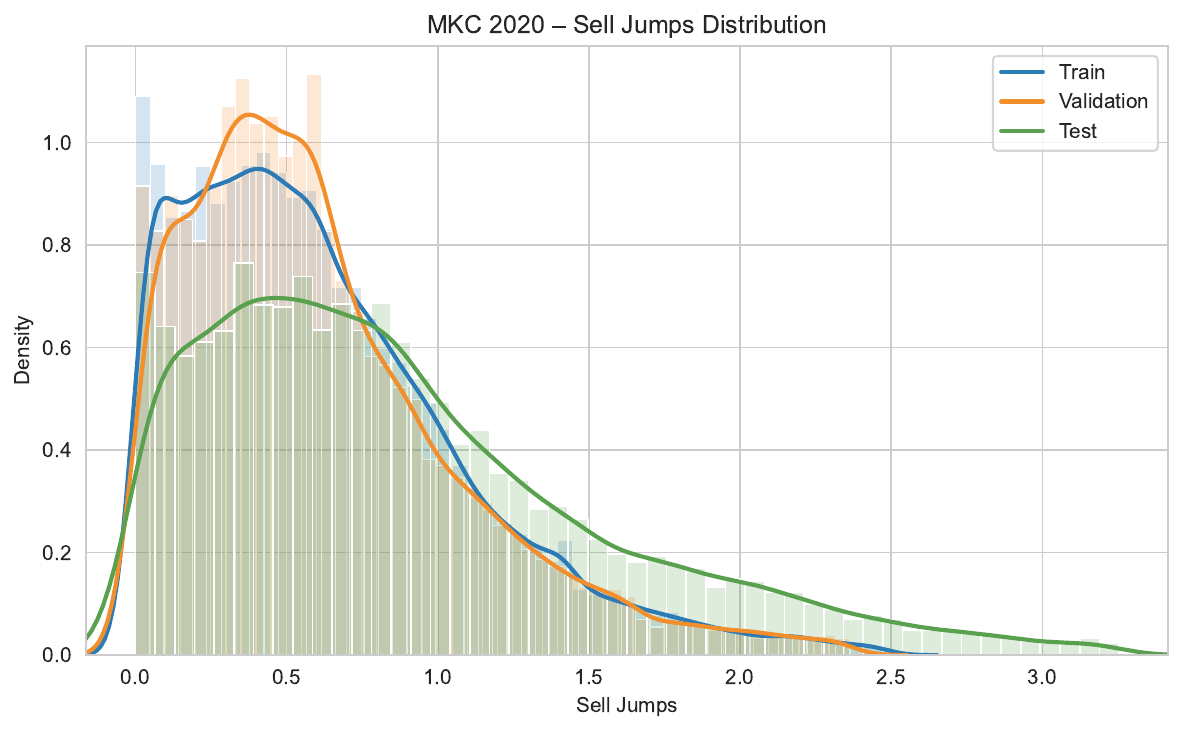} &
            \includegraphics[width=0.16\textwidth]{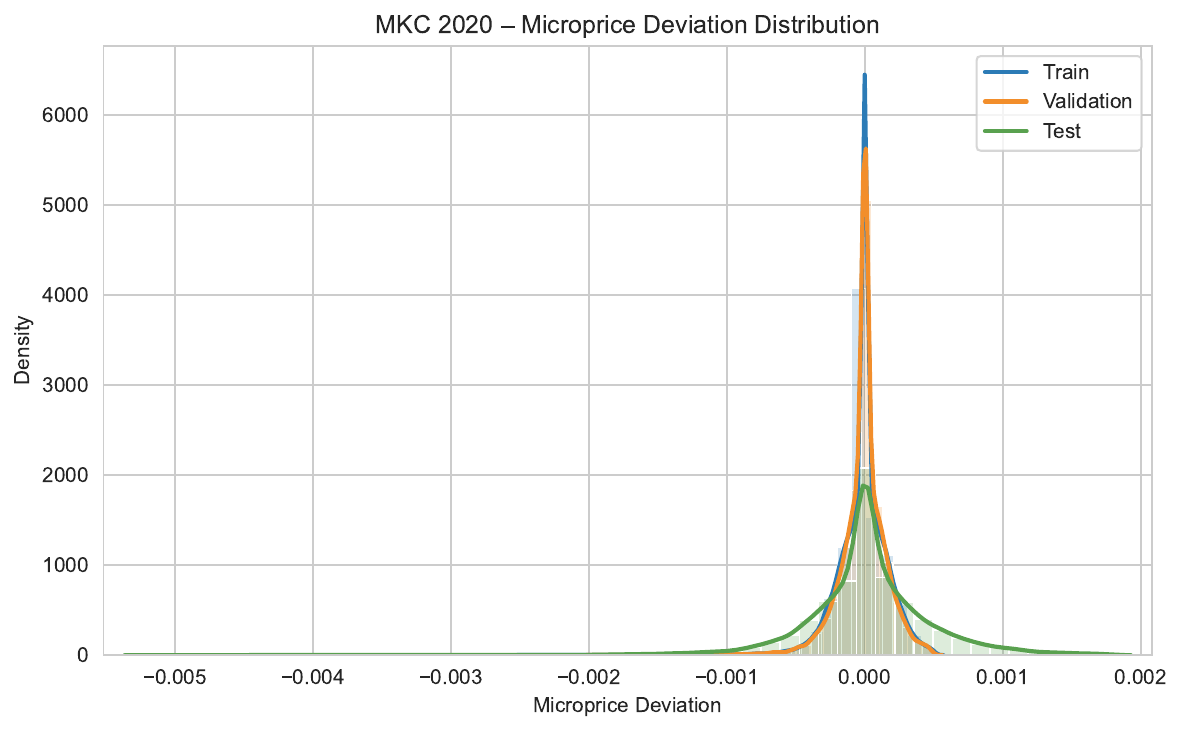} &
            \includegraphics[width=0.16\textwidth]{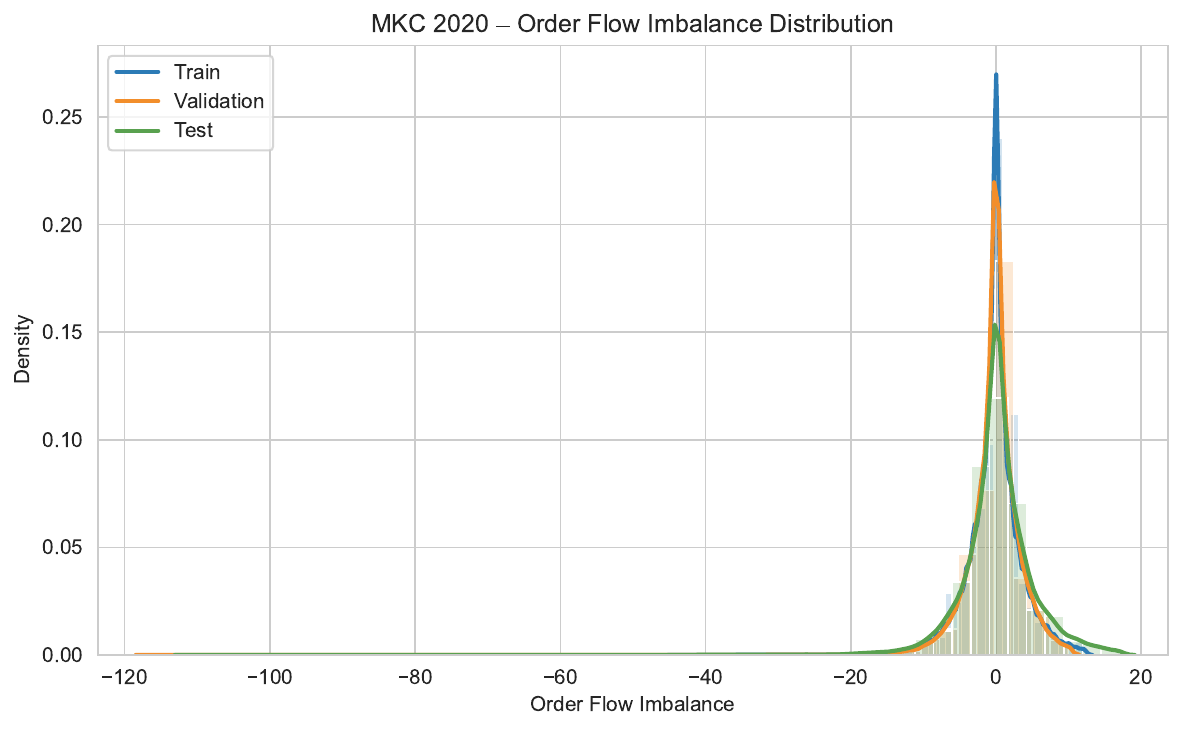} \\
            \scriptsize Returns & \scriptsize Volatility & \scriptsize Buy-order excitation & \scriptsize Sell-order excitation & \scriptsize Microprice deviation & \scriptsize OFI \\
        \end{tabular}
    \end{minipage}}
    \caption{MKC, 2020.}
    \end{subfigure}

    \caption{Train, validation, and test distributions of all state variables for MKC in 2019 and 2020. The comparison makes the stronger cross-split distributional shift in 2020 visually apparent.}
    \label{fig:full_state_dist_mkc}
\end{figure}

\begin{figure}[H]
    \centering
    \begin{subfigure}{\linewidth}
    \centering
    \makebox[\textwidth][c]{\begin{minipage}{1.12\textwidth}\centering
        \setlength{\tabcolsep}{0.5pt}
        \begin{tabular}{cccccc}
            \includegraphics[width=0.16\textwidth]{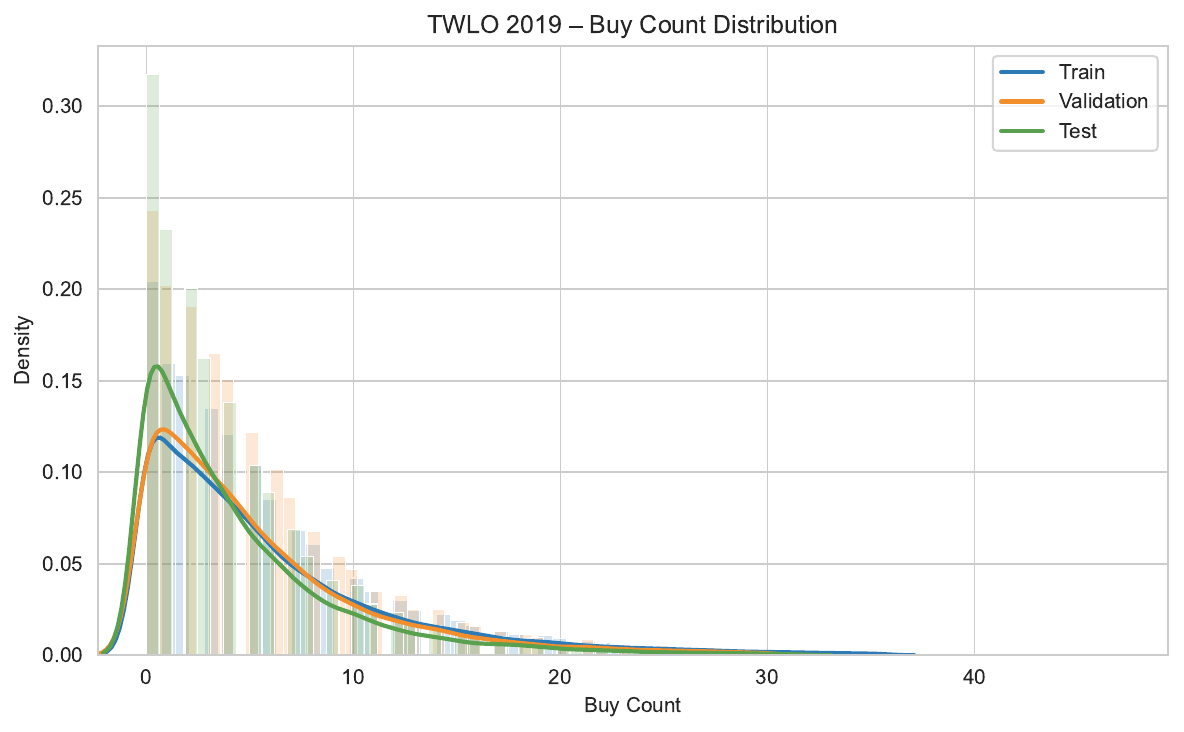} &
            \includegraphics[width=0.16\textwidth]{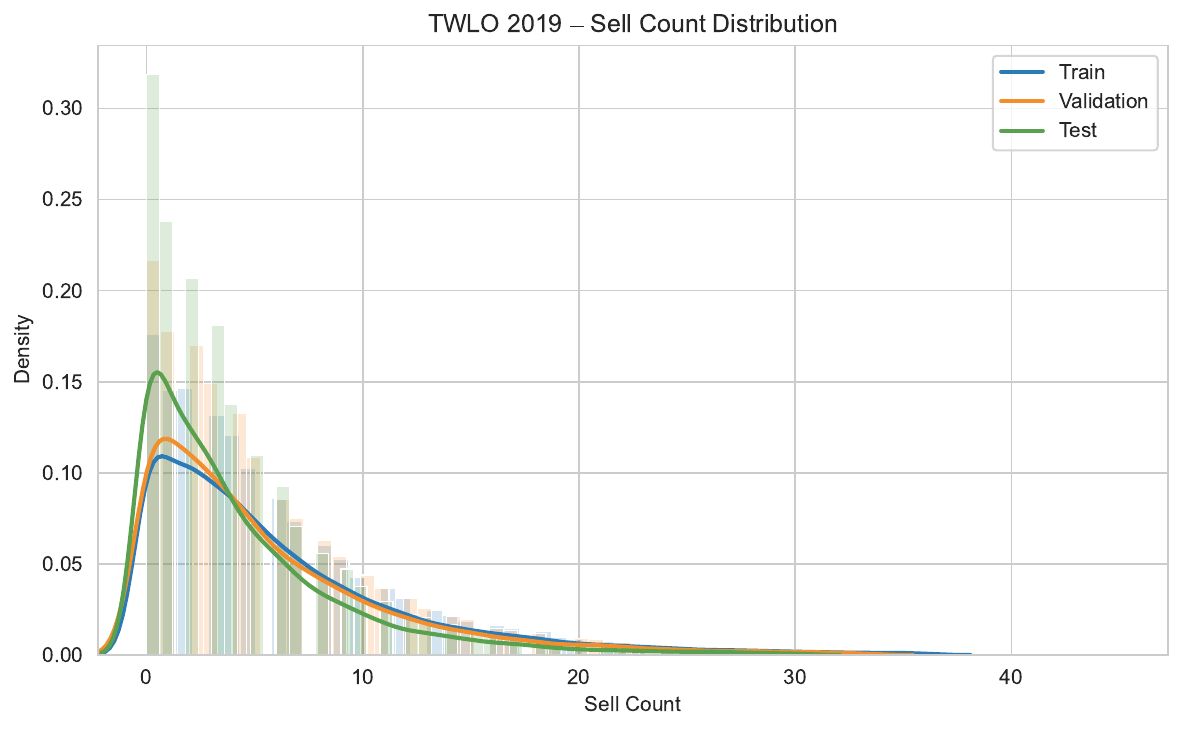} &
            \includegraphics[width=0.16\textwidth]{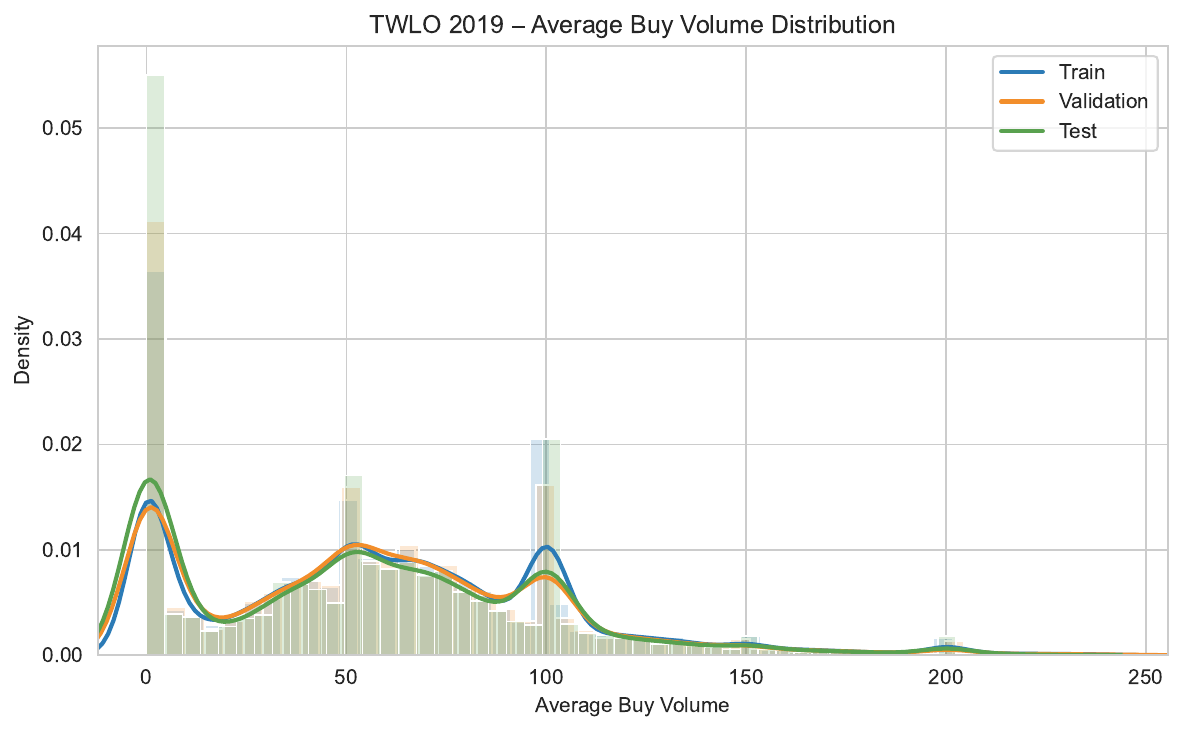} &
            \includegraphics[width=0.16\textwidth]{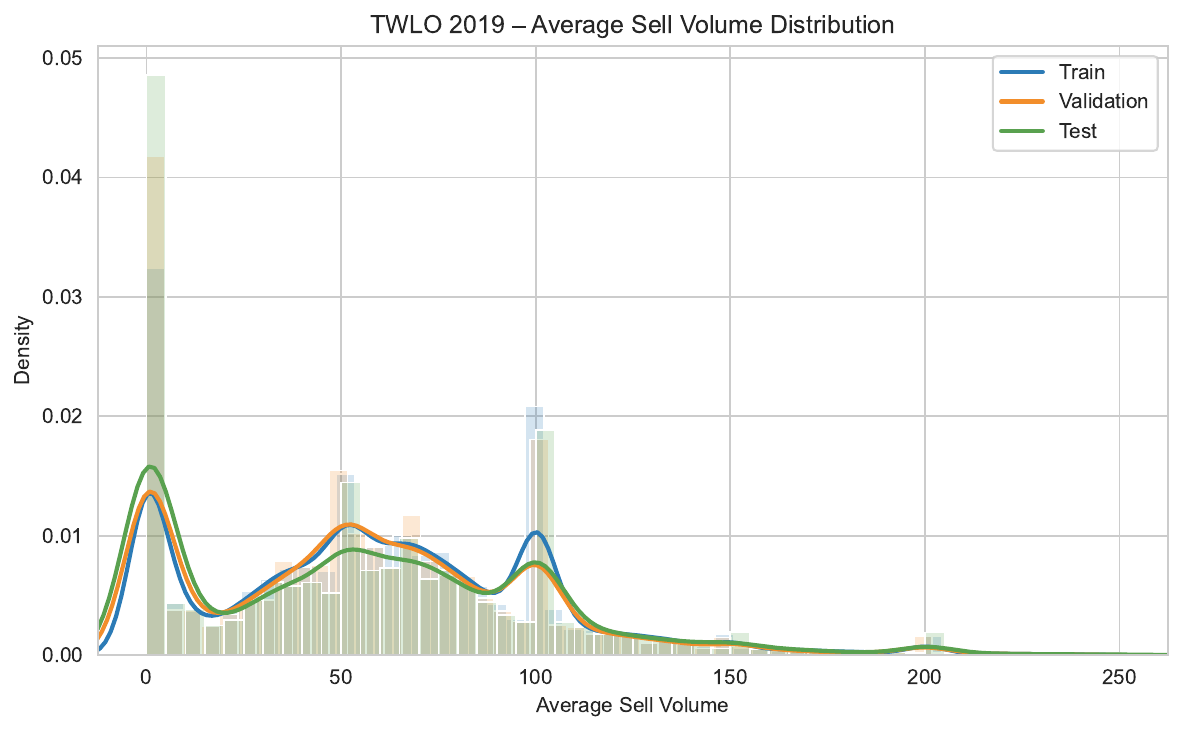} &
            \includegraphics[width=0.16\textwidth]{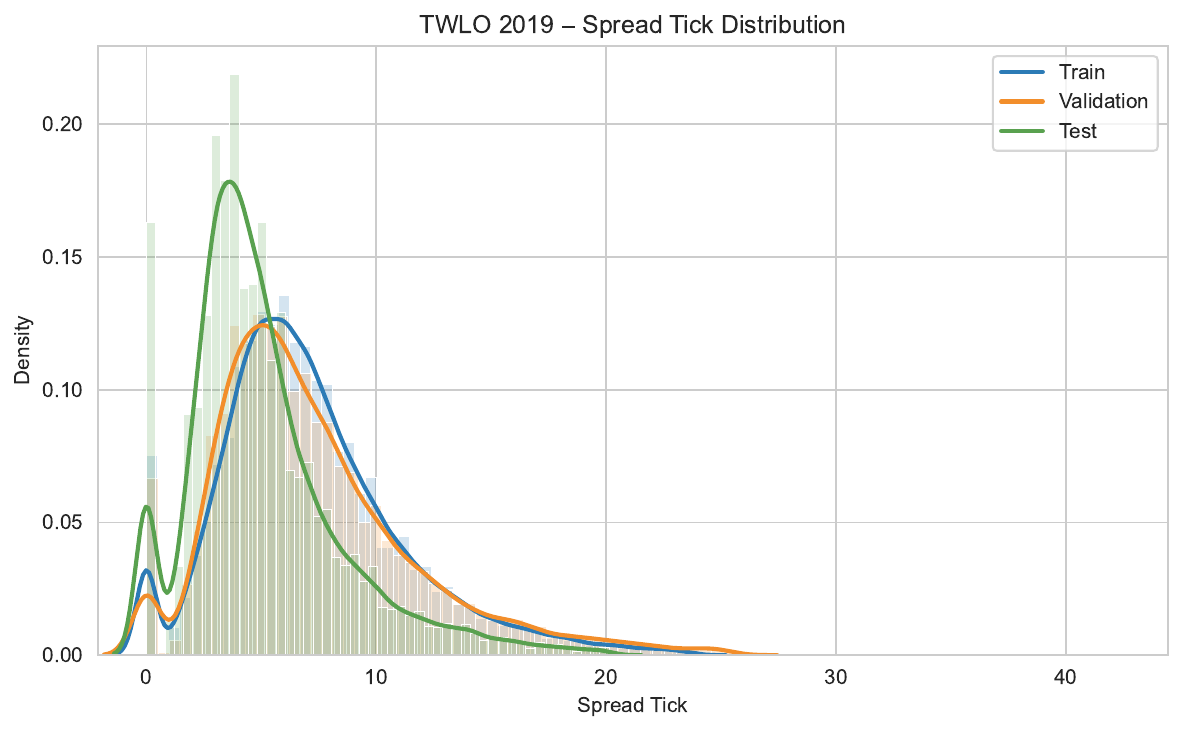} &
            \includegraphics[width=0.16\textwidth]{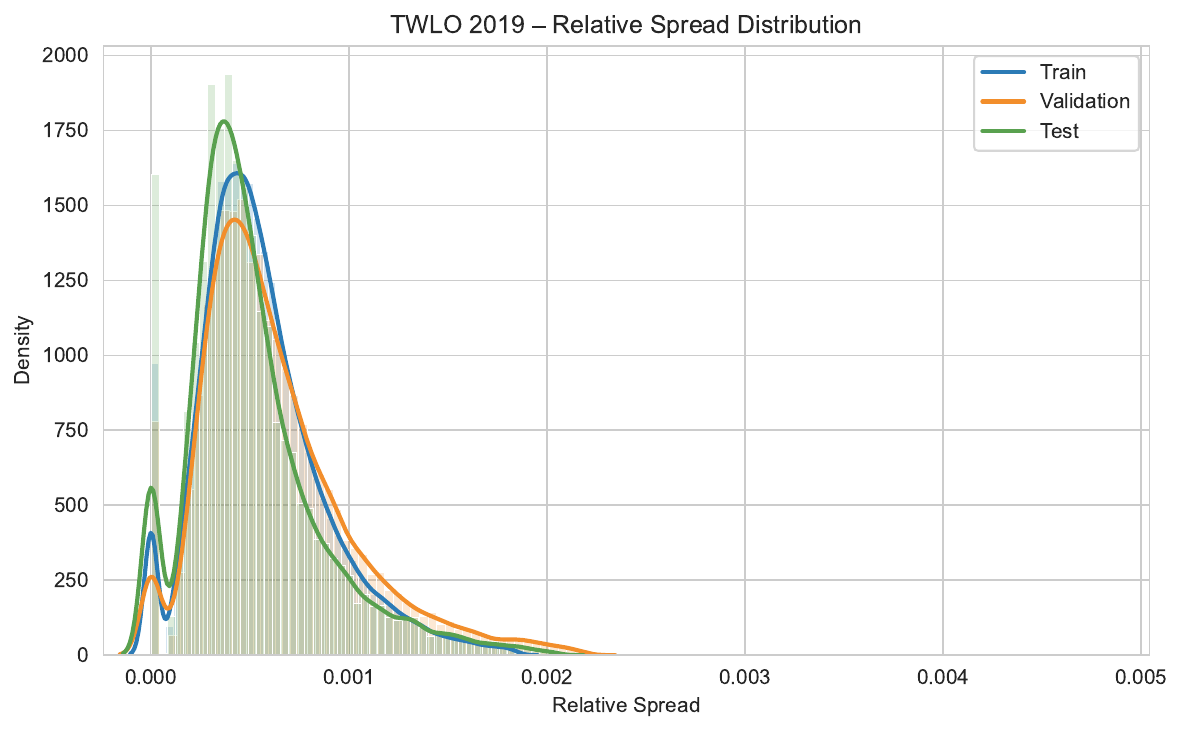} \\
            \scriptsize Buy count & \scriptsize Sell count & \scriptsize Average buy size & \scriptsize Average sell size & \scriptsize Spread (ticks) & \scriptsize Relative spread \\
            \includegraphics[width=0.16\textwidth]{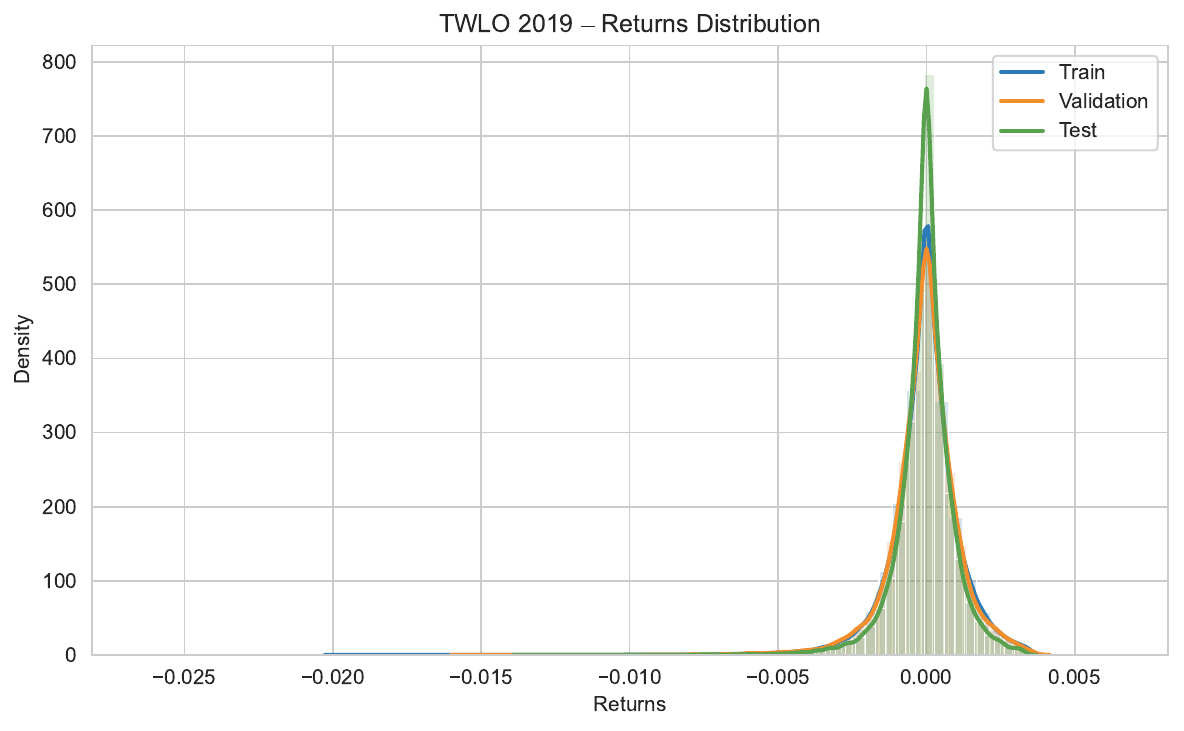} &
            \includegraphics[width=0.16\textwidth]{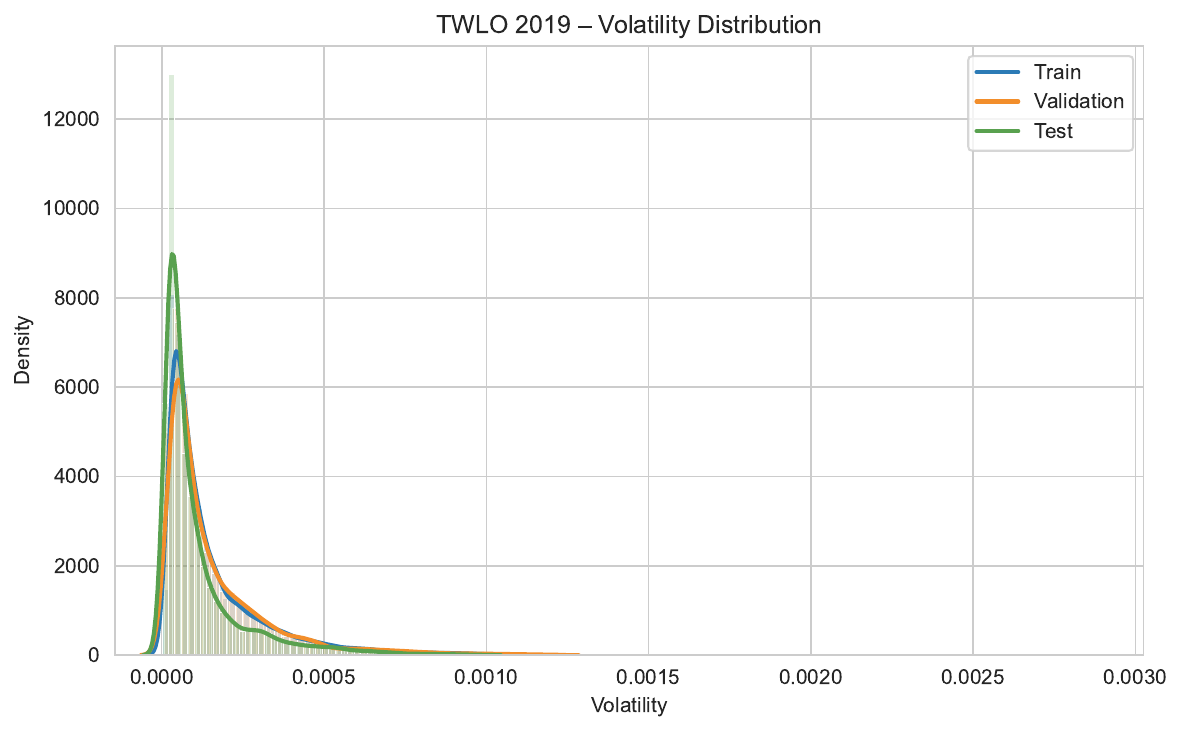} &
            \includegraphics[width=0.16\textwidth]{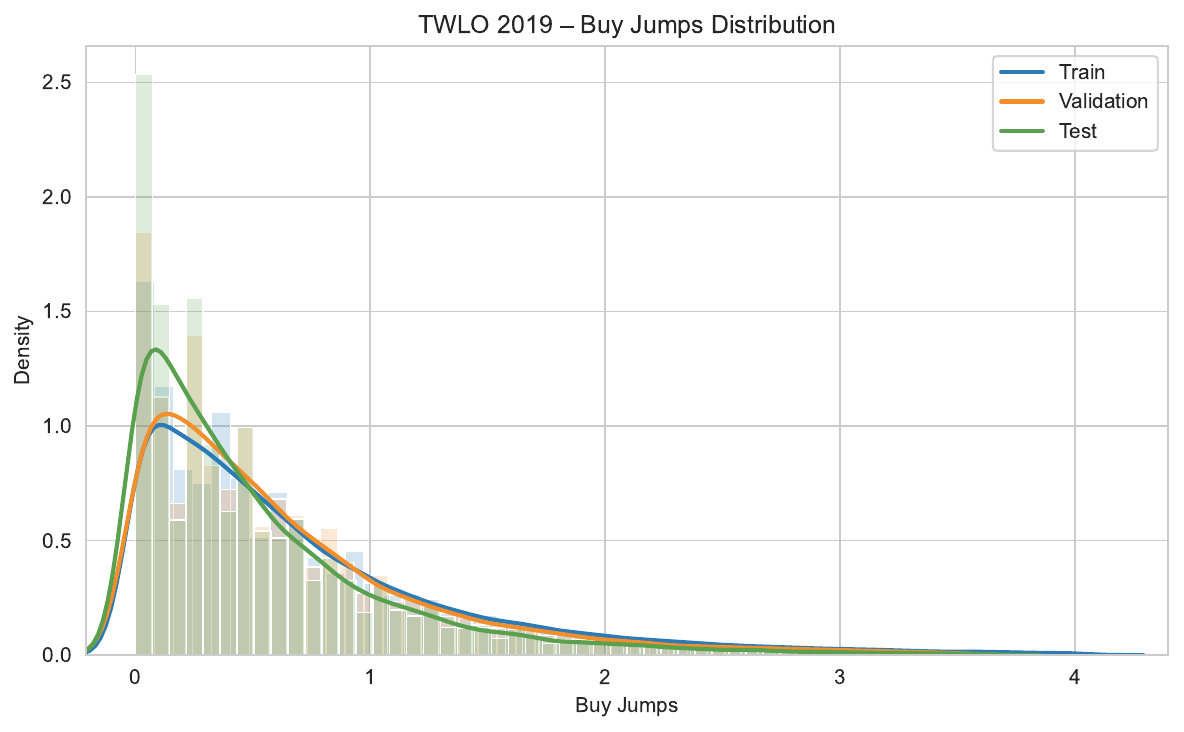} &
            \includegraphics[width=0.16\textwidth]{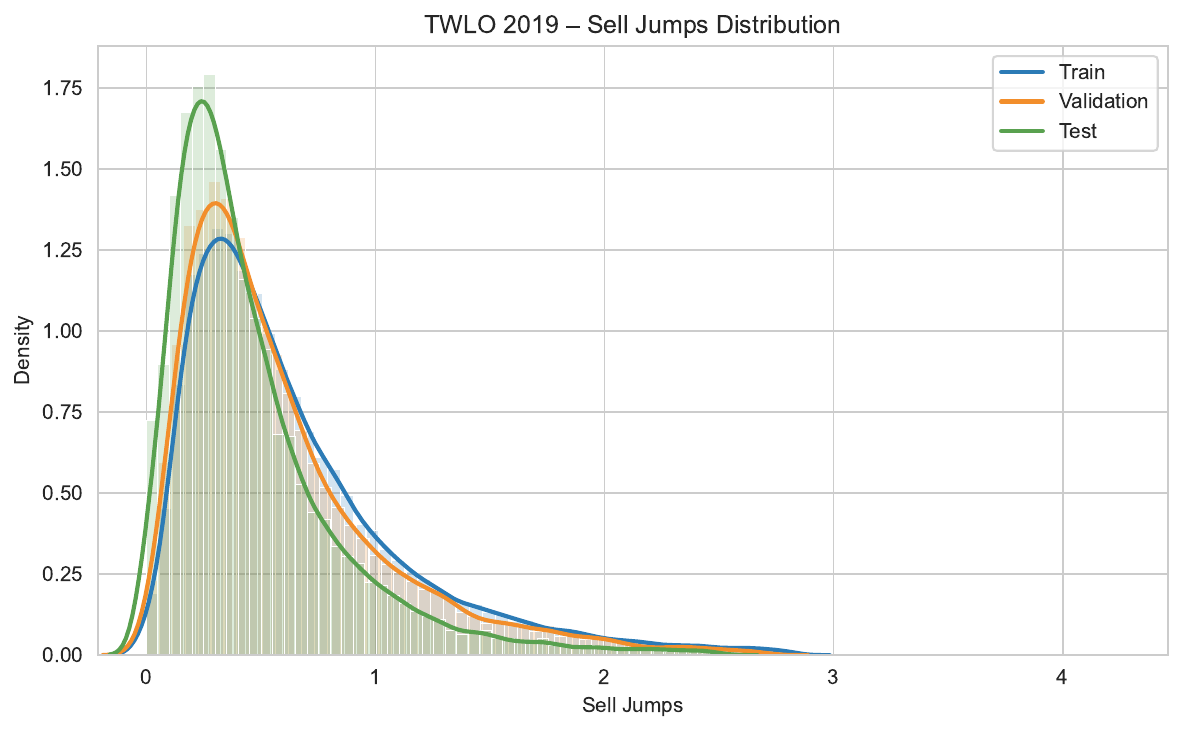} &
            \includegraphics[width=0.16\textwidth]{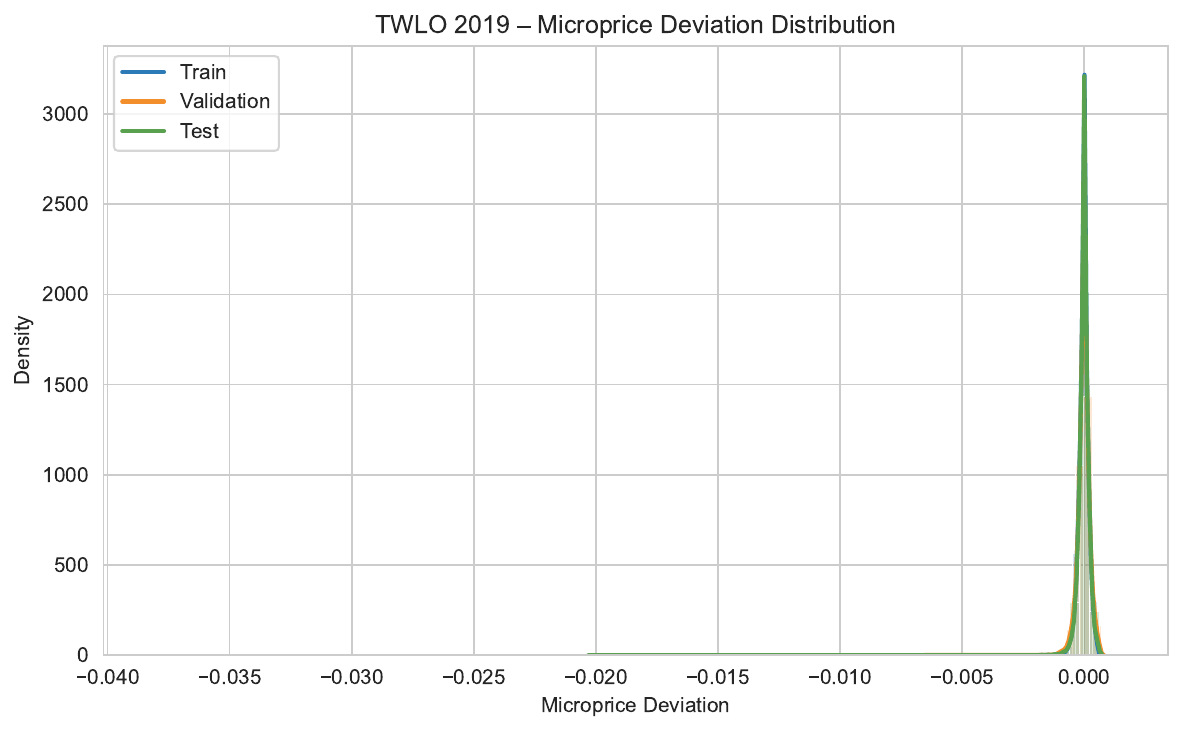} &
            \includegraphics[width=0.16\textwidth]{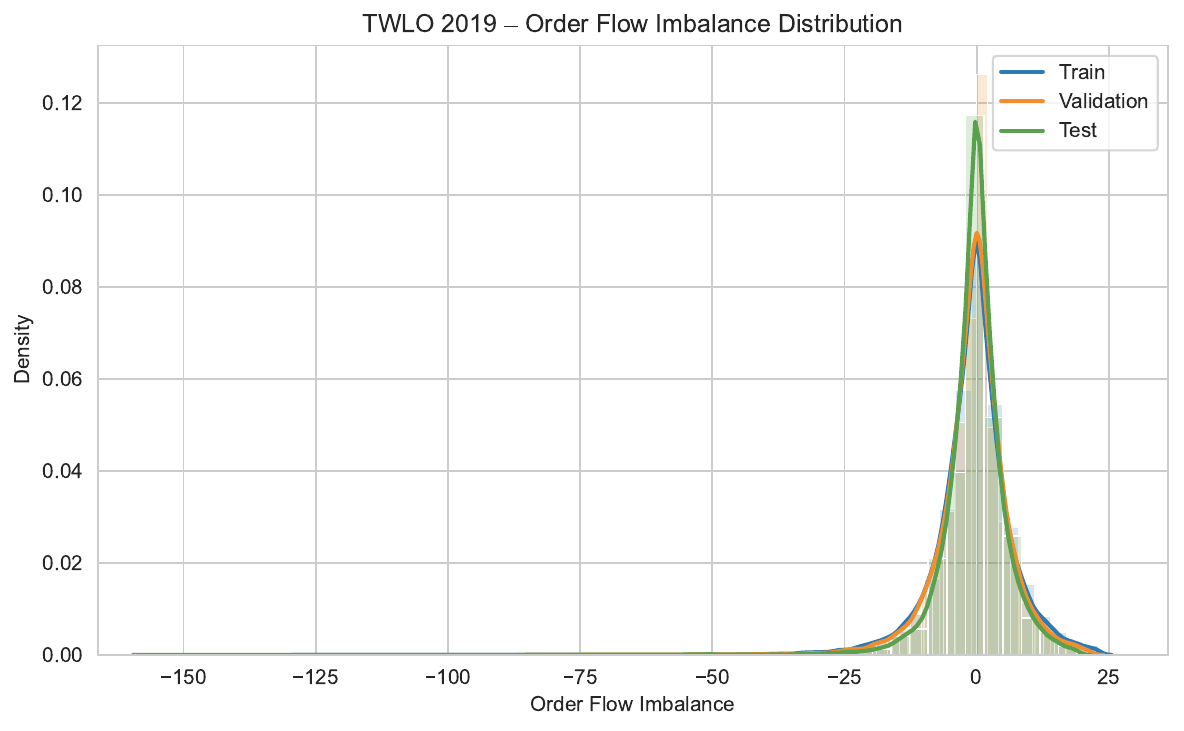} \\
            \scriptsize Returns & \scriptsize Volatility & \scriptsize Buy-order excitation & \scriptsize Sell-order excitation & \scriptsize Microprice deviation & \scriptsize OFI \\
        \end{tabular}
    \end{minipage}}
    \caption{TWLO, 2019.}
    \end{subfigure}

    \vspace{0.2em}

    \begin{subfigure}{\linewidth}
    \centering
    \makebox[\textwidth][c]{\begin{minipage}{1.12\textwidth}\centering
        \setlength{\tabcolsep}{0.5pt}
        \begin{tabular}{cccccc}
            \includegraphics[width=0.16\textwidth]{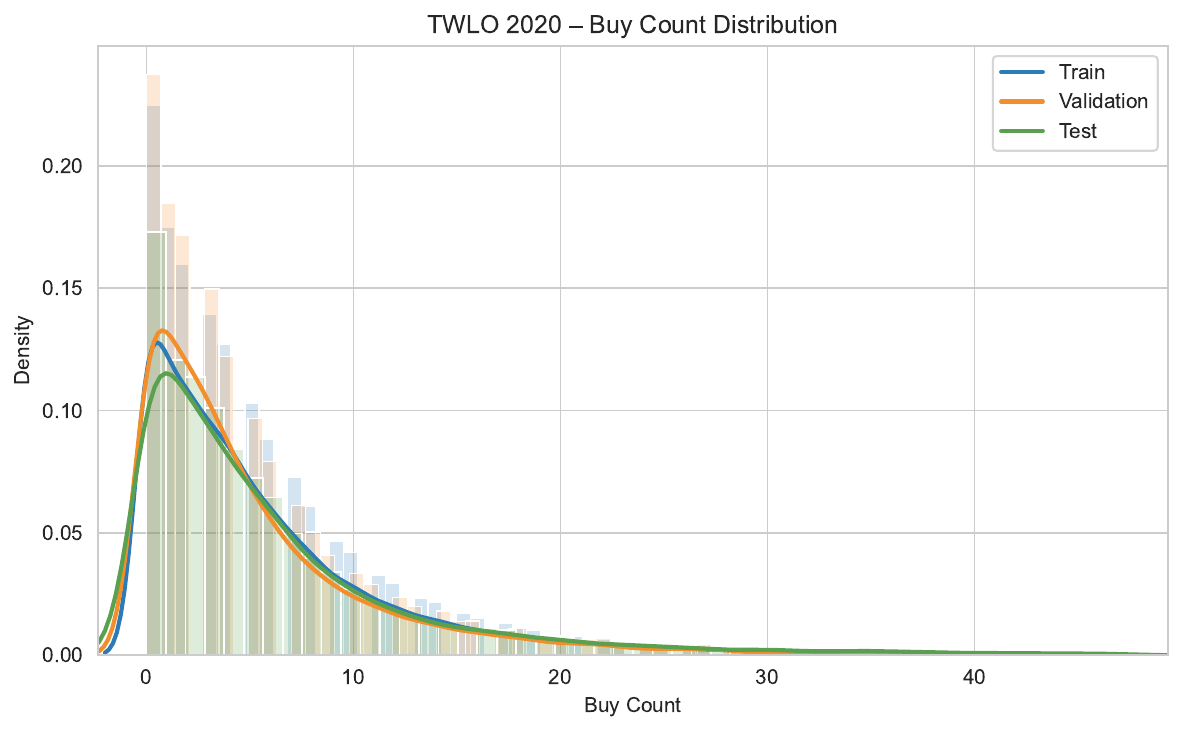} &
            \includegraphics[width=0.16\textwidth]{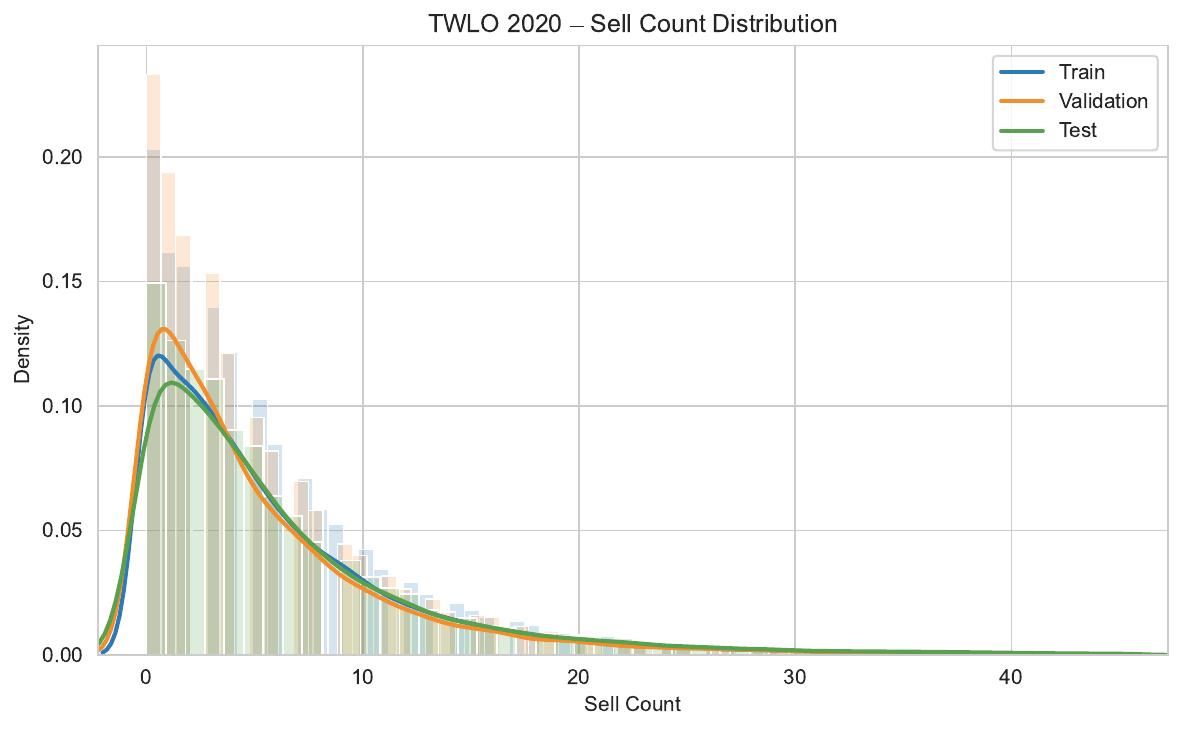} &
            \includegraphics[width=0.16\textwidth]{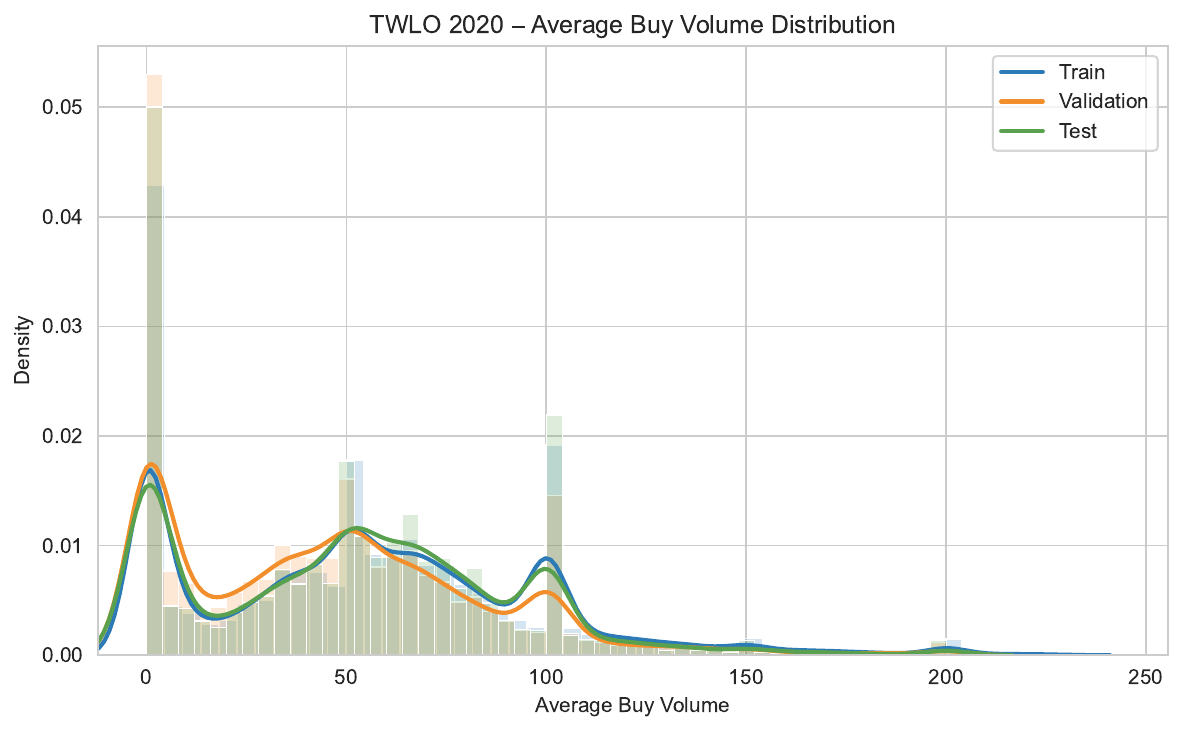} &
            \includegraphics[width=0.16\textwidth]{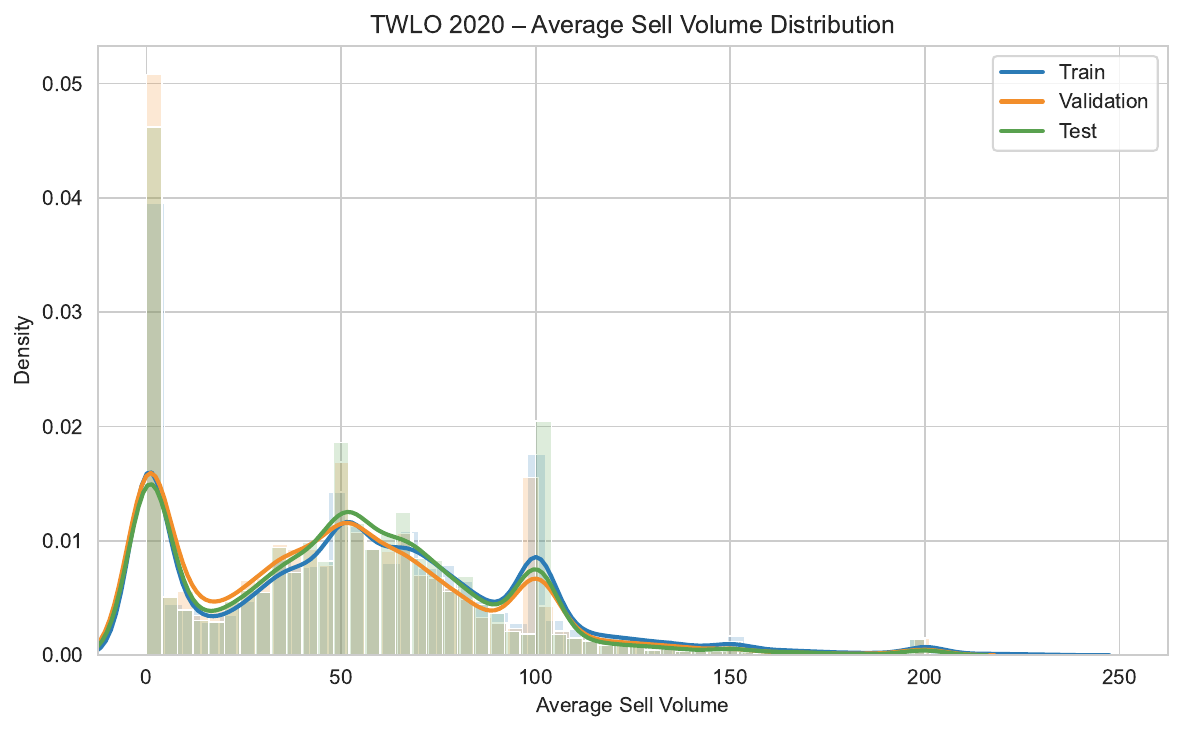} &
            \includegraphics[width=0.16\textwidth]{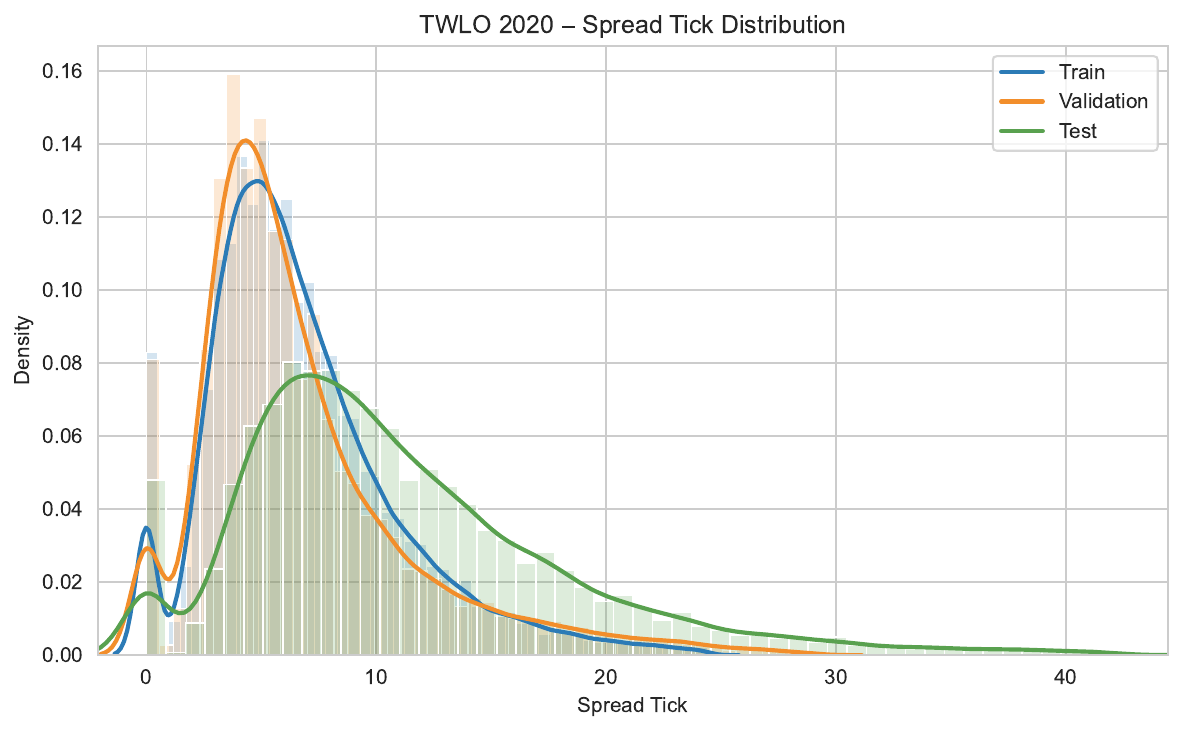} &
            \includegraphics[width=0.16\textwidth]{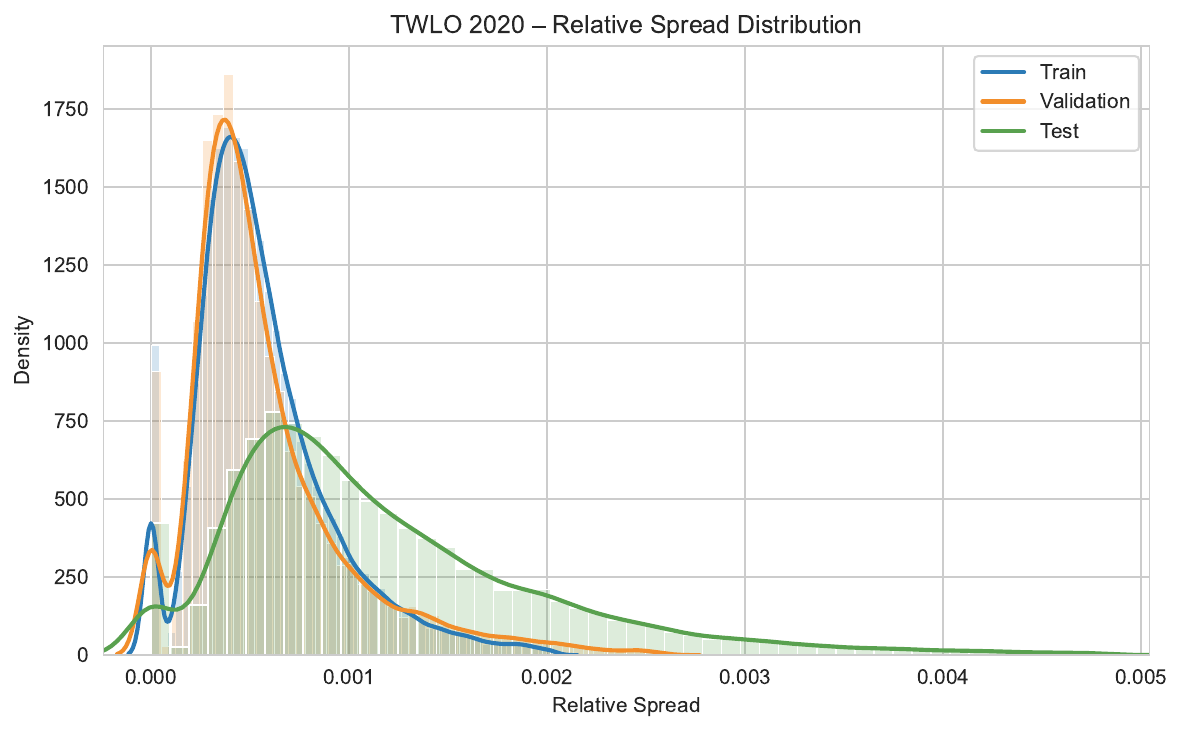} \\
            \scriptsize Buy count & \scriptsize Sell count & \scriptsize Average buy size & \scriptsize Average sell size & \scriptsize Spread (ticks) & \scriptsize Relative spread \\
            \includegraphics[width=0.16\textwidth]{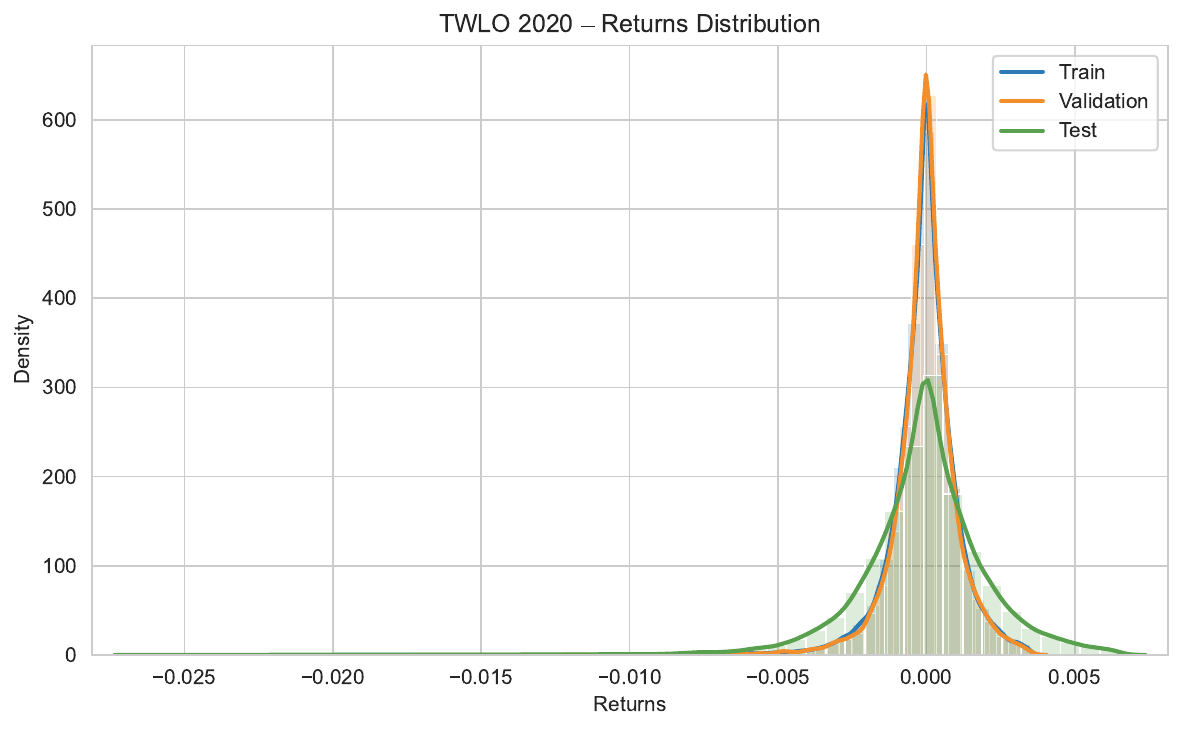} &
            \includegraphics[width=0.16\textwidth]{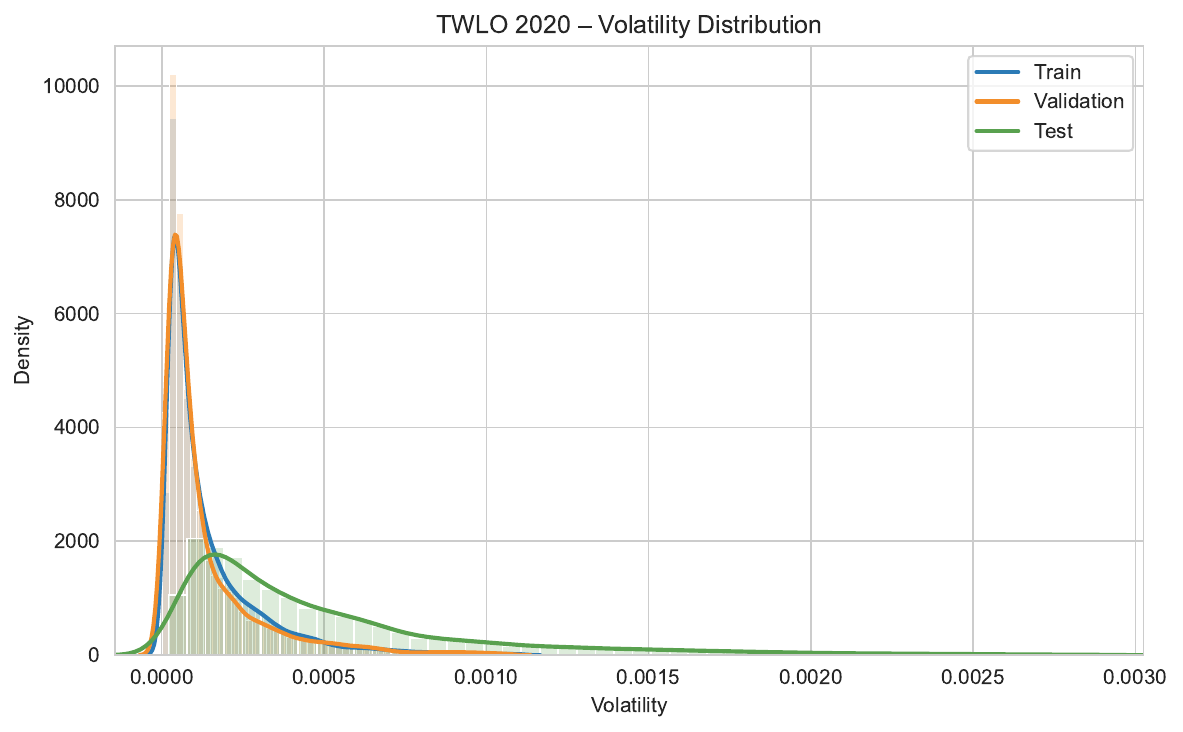} &
            \includegraphics[width=0.16\textwidth]{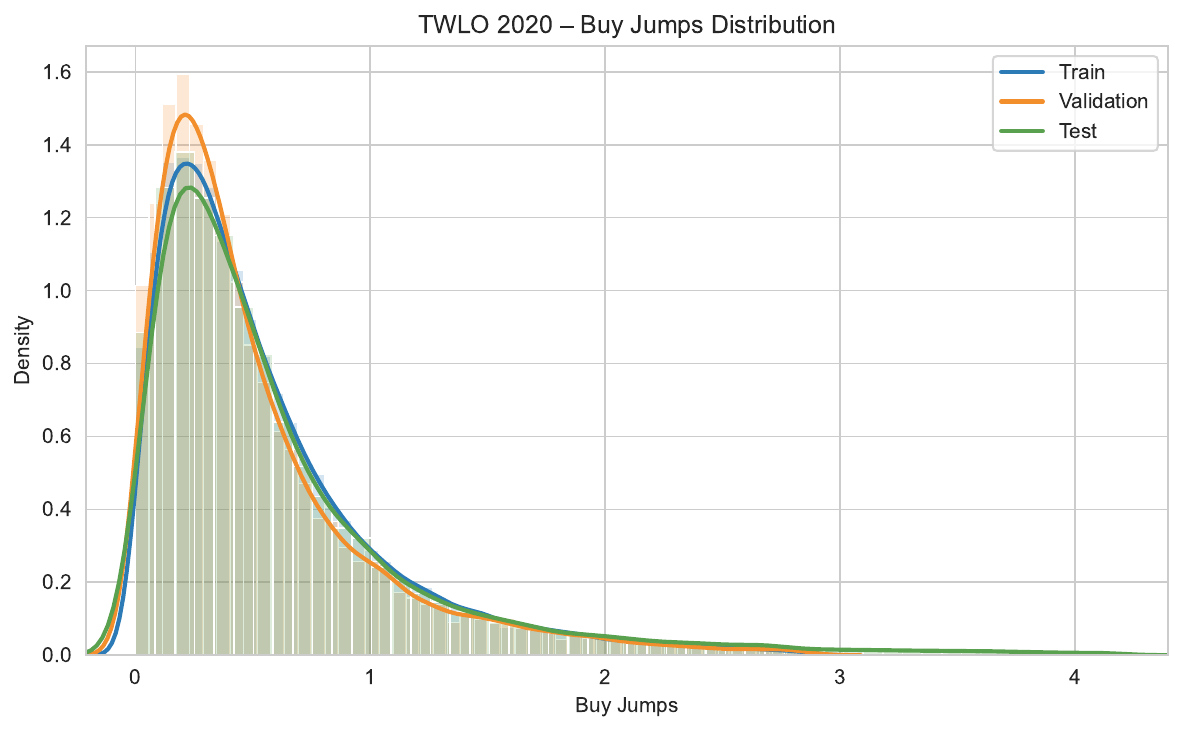} &
            \includegraphics[width=0.16\textwidth]{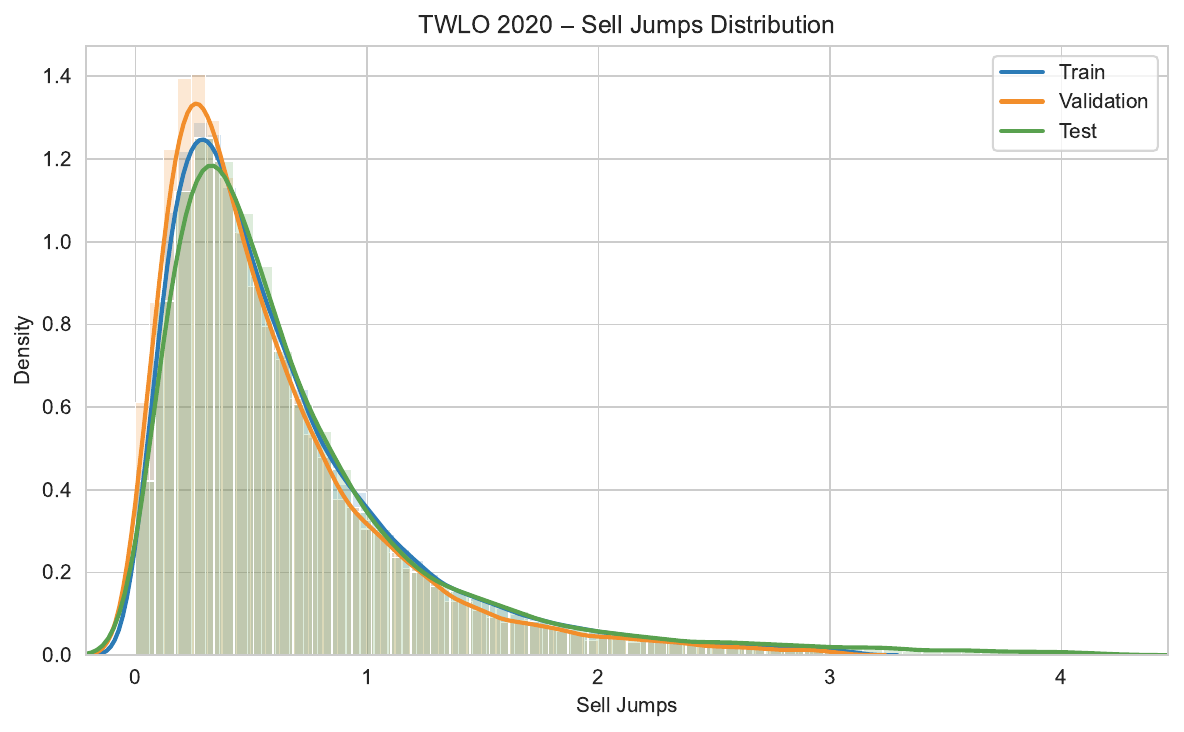} &
            \includegraphics[width=0.16\textwidth]{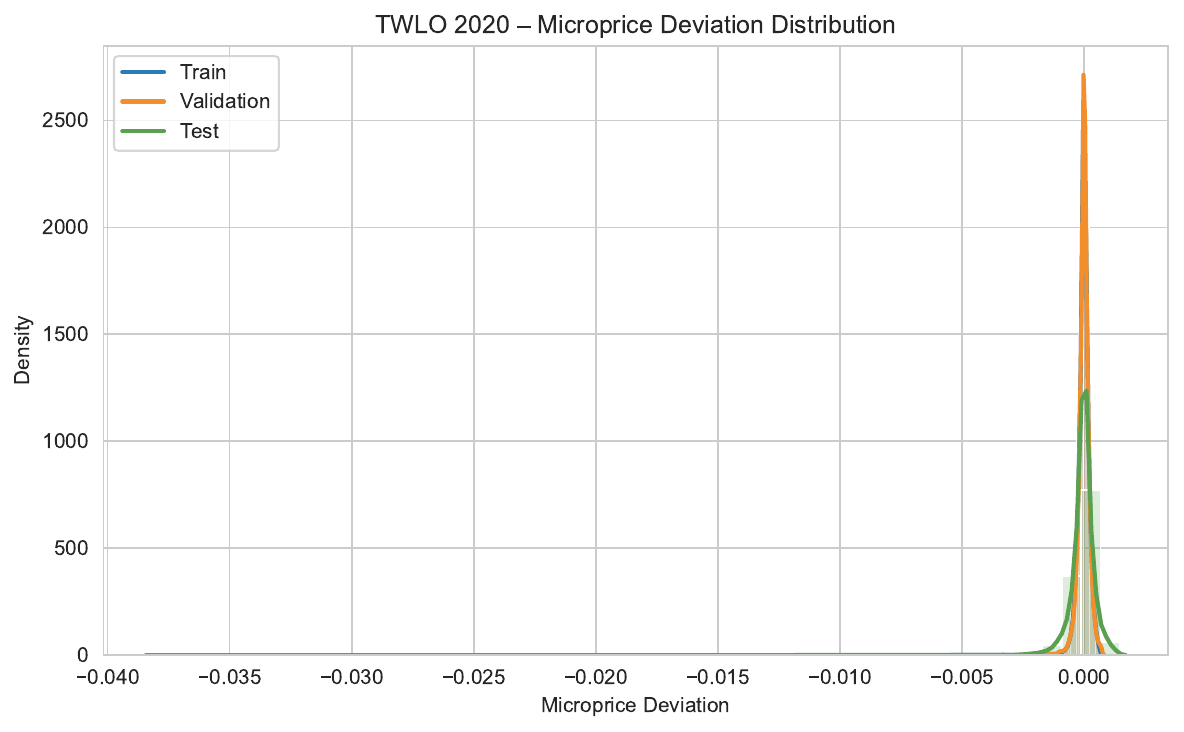} &
            \includegraphics[width=0.16\textwidth]{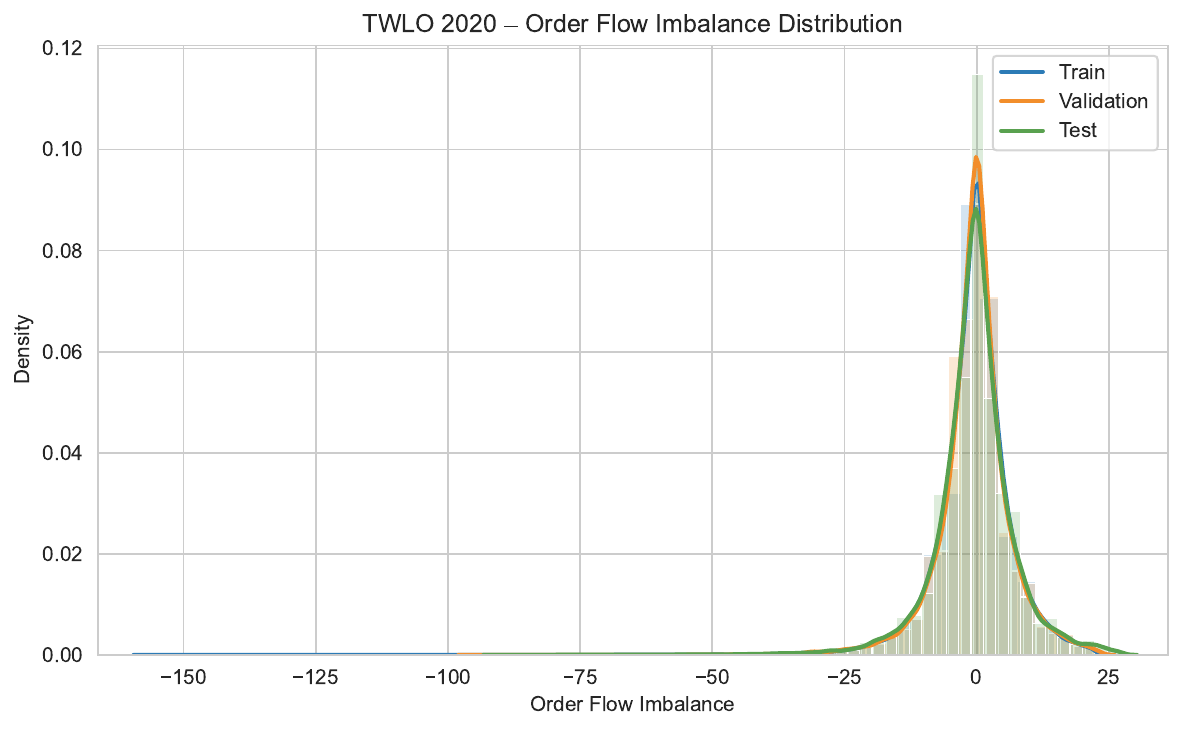} \\
            \scriptsize Returns & \scriptsize Volatility & \scriptsize Buy-order excitation & \scriptsize Sell-order excitation & \scriptsize Microprice deviation & \scriptsize OFI \\
        \end{tabular}
    \end{minipage}}
    \caption{TWLO, 2020.}
    \end{subfigure}

    \caption{Train, validation, and test distributions of all state variables for TWLO in 2019 and 2020. The comparison makes the stronger cross-split distributional shift in 2020 visually apparent.}
    \label{fig:full_state_dist_twlo}
\end{figure}

\subsection[Sensitivity to Uncertainty Tolerance and Action Robustness]{Sensitivity to Uncertainty Tolerance (\texorpdfstring{$\bar\varepsilon$}{epsilon-bar}) and Action Robustness (\texorpdfstring{$\delta$}{delta})}

Figures~\ref{fig:sharpe_heatmap_real2} and \ref{fig:pnl_heatmap_real} report the test-period Sharpe ratio and mean P\&L over the $(\bar\varepsilon,\delta)$ grid for all four stocks in the empirical universe, where $\bar\varepsilon$ is the uncertainty-tolerance parameter and $\delta$ is the action-robustness parameter. The amount of outperformance relative to the greedy benchmark varies materially across names. For AAPL, there is a broad region of the grid in both 2019 and 2020 where the robust agent outperforms the greedy benchmark, so the gain is not tied to a single finely tuned hyperparameter choice. TSLA is more selective in 2019, but in 2020 the set of outperforming robust configurations is noticeably larger, consistent with the more volatile COVID-period environment where robustness is more valuable. MKC shows a narrower improvement region: only a few $(\bar\varepsilon,\delta)$ combinations dominate the greedy benchmark, so the benefits of robustness are present but less pervasive. TWLO again exhibits a comparatively broad outperforming region in both years, indicating that the robust agent transfers well across a wide range of ambiguity specifications.

Across all four stocks, uncertainty tolerance and action robustness play structurally different roles. Increasing $\bar\varepsilon$ raises uncertainty tolerance by enlarging the ambiguity set directly, so the agent becomes more risk-averse in the usual sense: it hedges against a wider class of adverse transition laws, which typically leads to less aggressive quoting, lower P\&L volatility, and a higher Sharpe ratio up to a point. By contrast, $\delta$ does not primarily control the size of the ambiguity set. Rather, as the action-robustness parameter, it controls the degree of entropic regularization in the Sinkhorn transport problem and therefore the concentration of the worst-case transport plan. For smaller $\delta$, the worst-case measure can concentrate more tightly on particularly adverse next-state realizations; for larger $\delta$, the transport plan is forced to be more diffuse and to distribute mass across a broader set of candidate states. In this sense, a larger $\delta$ produces a more spatially dispersed stress scenario rather than a uniformly more adverse one.

This interpretation helps explain the heatmaps. A larger $\bar\varepsilon$ can be viewed as higher uncertainty tolerance: the agent admits a broader range of adverse transition distortions as plausible and therefore guards against a wider set of model deviations. Its effect on Sharpe is therefore comparatively stable. By contrast, $\delta$ is better interpreted as action robustness, namely how strongly and how diffusely the policy responds to adverse scenarios within that uncertainty set. Moving from a very small to a moderate $\delta$ can improve Sharpe because the policy becomes more conservative in a useful way, typically giving up some mean P\&L in exchange for lower dispersion of outcomes. But once $\delta$ becomes sufficiently large, that action-robust response may become too diffuse, so the policy is no longer calibrated sharply to the most decision-relevant adverse configurations. The relation between $\delta$ and performance therefore need not be monotone. In this sense, $\bar\varepsilon$ governs how much uncertainty the agent is willing to entertain, whereas $\delta$ governs how that uncertainty is translated into the policy response.
Whether $\delta$ should be viewed as small or large depends on the asset’s volatility over the relevant period. This also helps explain why, for AAPL, the Sharpe ratio falls with $\bar{\varepsilon}$ when $\delta=1$ in 2019, but rises with $\bar{\varepsilon}$ in 2020. In practice, uncertainty tolerance and action robustness can be selected jointly using the validation Pareto frontier reported in Appendix~\ref{sec:appendix_val_test_frontier}.

\begin{figure}[H]
    \centering
    \makebox[\textwidth][c]{\begin{minipage}{1.16\textwidth}\centering
        \begin{subfigure}{0.245\linewidth}
            \includegraphics[width=\linewidth]{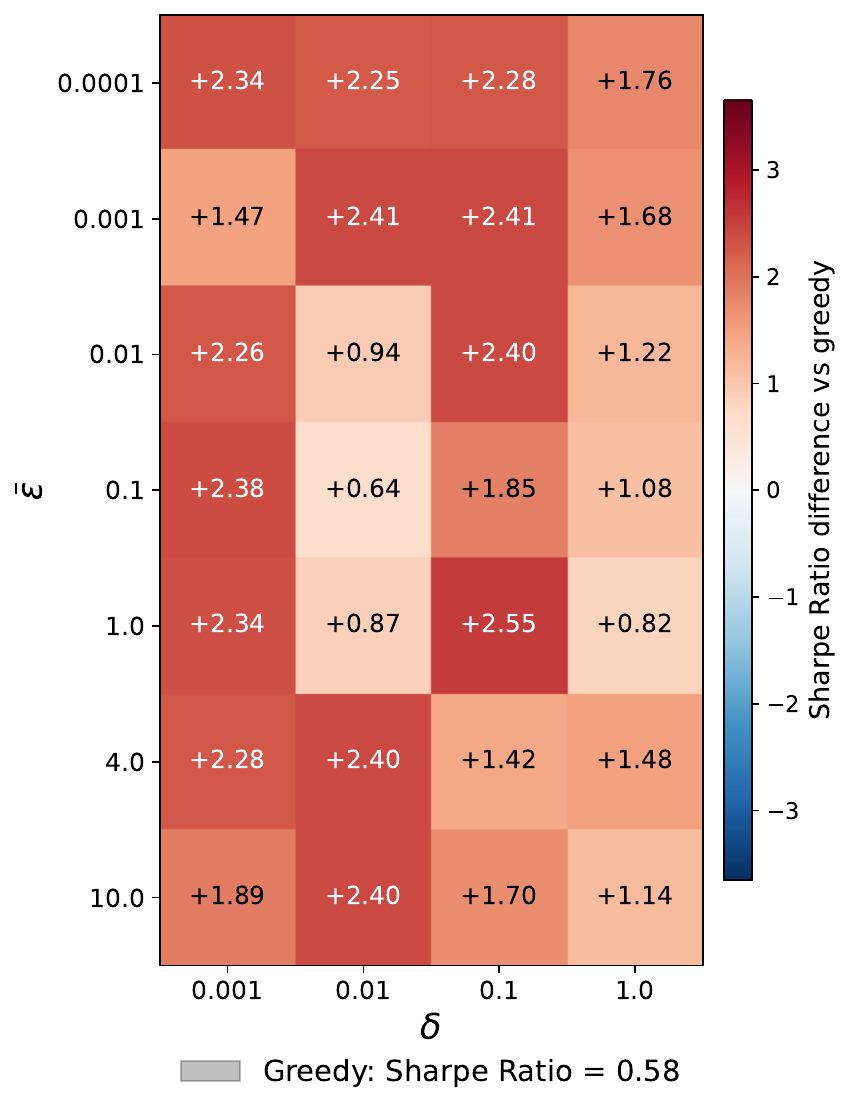}
            \caption{AAPL, 2019.}
        \end{subfigure}%
        \begin{subfigure}{0.245\linewidth}
            \includegraphics[width=\linewidth]{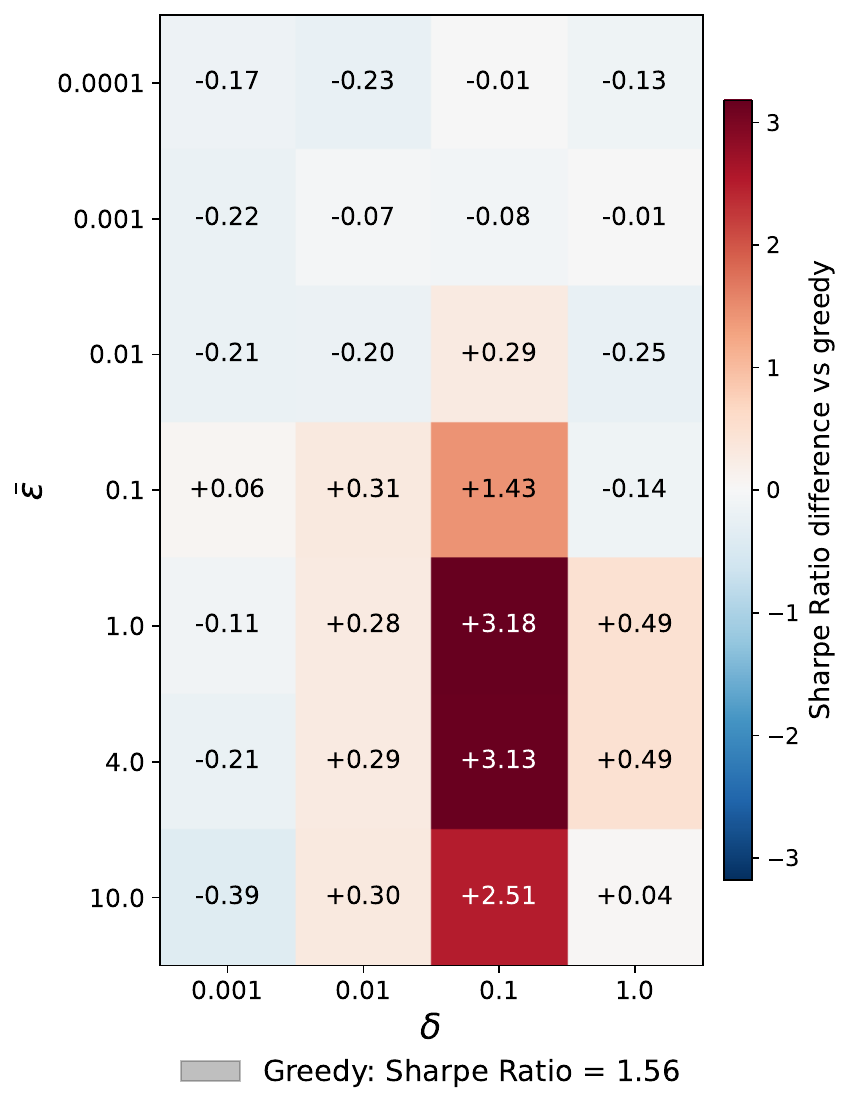}
            \caption{TSLA, 2019.}
        \end{subfigure}%
        \begin{subfigure}{0.245\linewidth}
            \includegraphics[width=\linewidth]{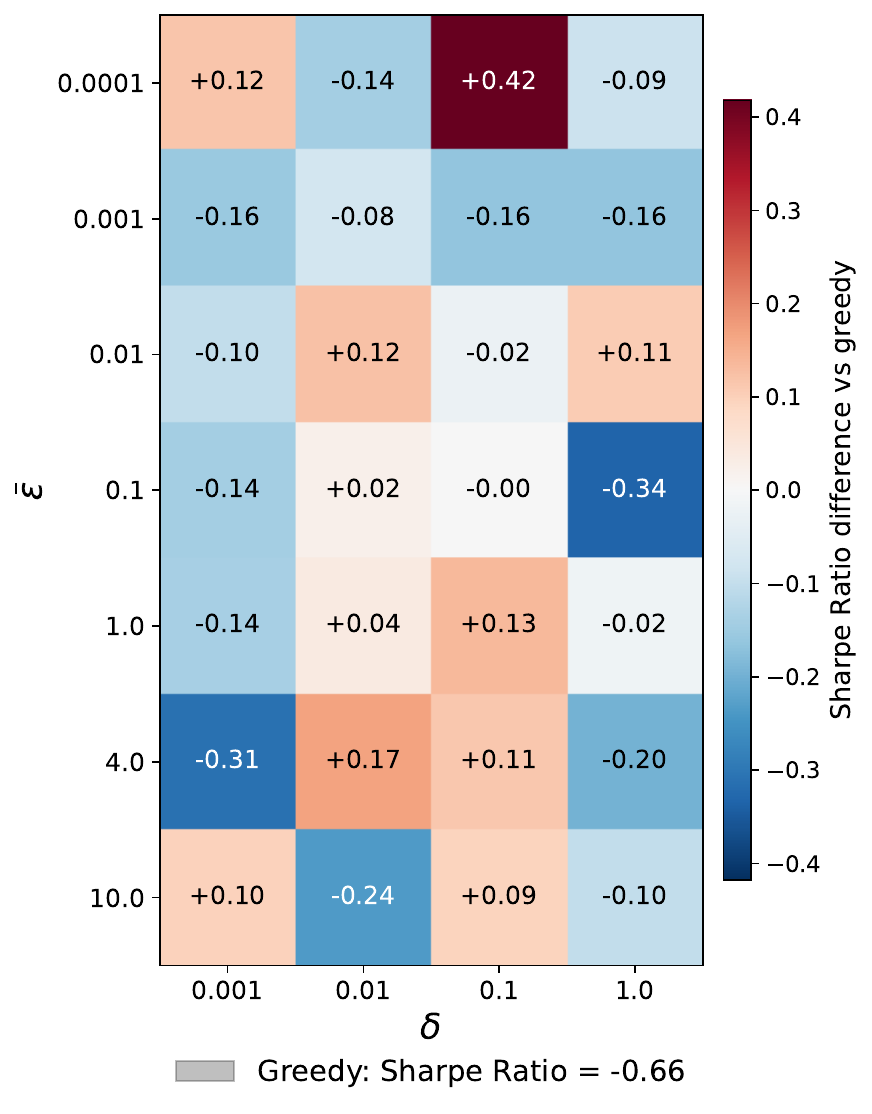}
            \caption{MKC, 2019.}
        \end{subfigure}%
        \begin{subfigure}{0.245\linewidth}
            \includegraphics[width=\linewidth]{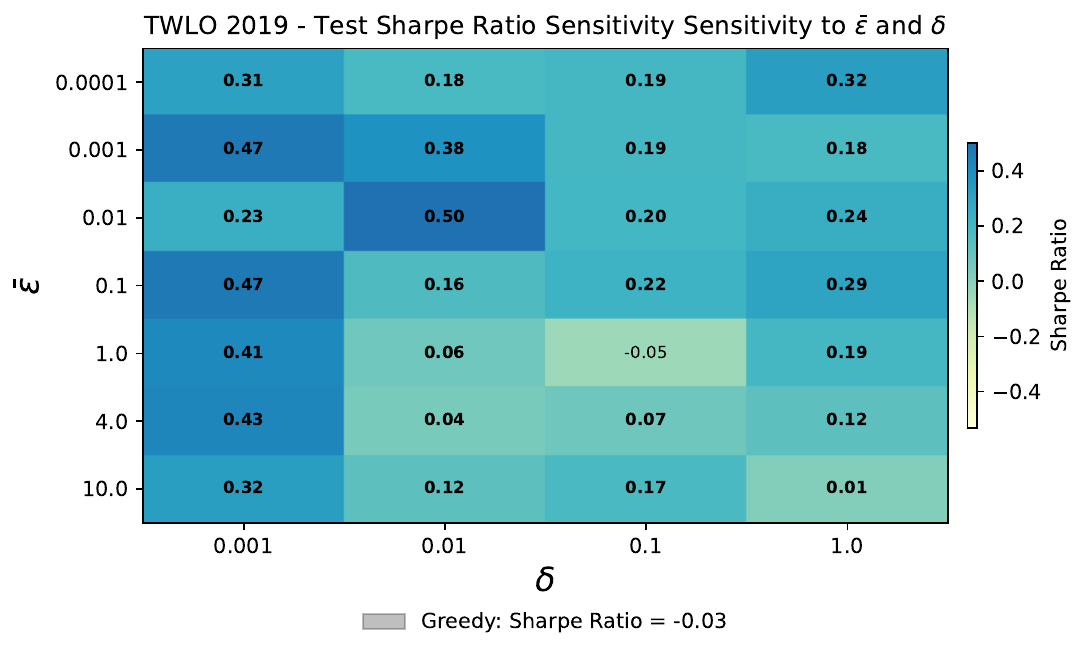}
            \caption{TWLO, 2019.}
        \end{subfigure}

        \par\vspace{0.1em}

        \begin{subfigure}{0.245\linewidth}
            \includegraphics[width=\linewidth]{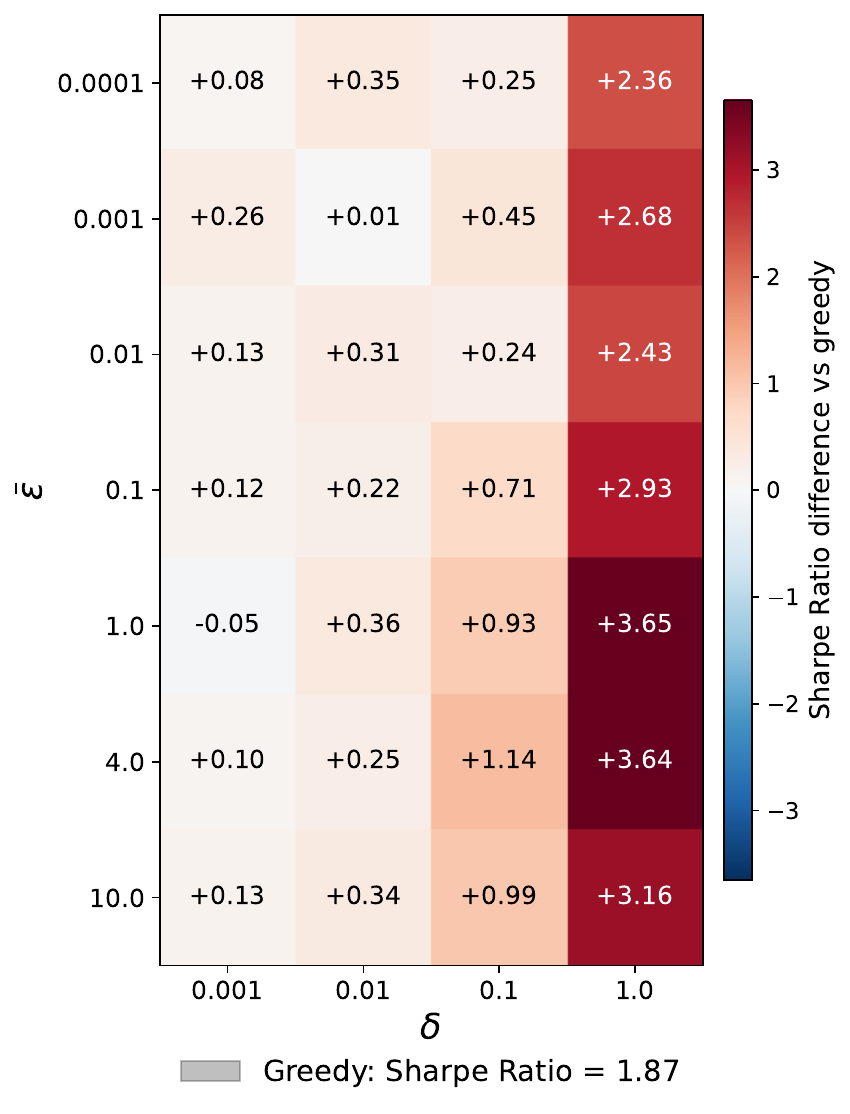}
            \caption{AAPL, 2020.}
        \end{subfigure}%
        \begin{subfigure}{0.245\linewidth}
            \includegraphics[width=\linewidth]{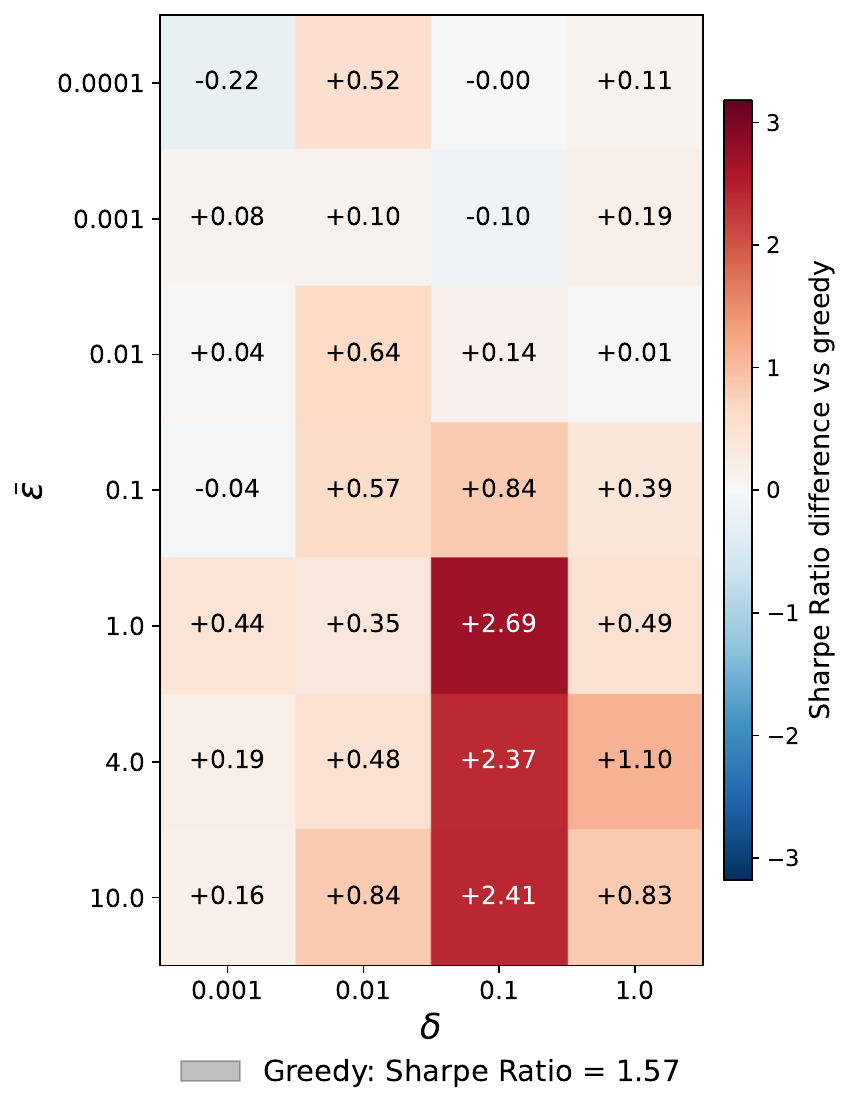}
            \caption{TSLA, 2020.}
        \end{subfigure}%
        \begin{subfigure}{0.245\linewidth}
            \includegraphics[width=\linewidth]{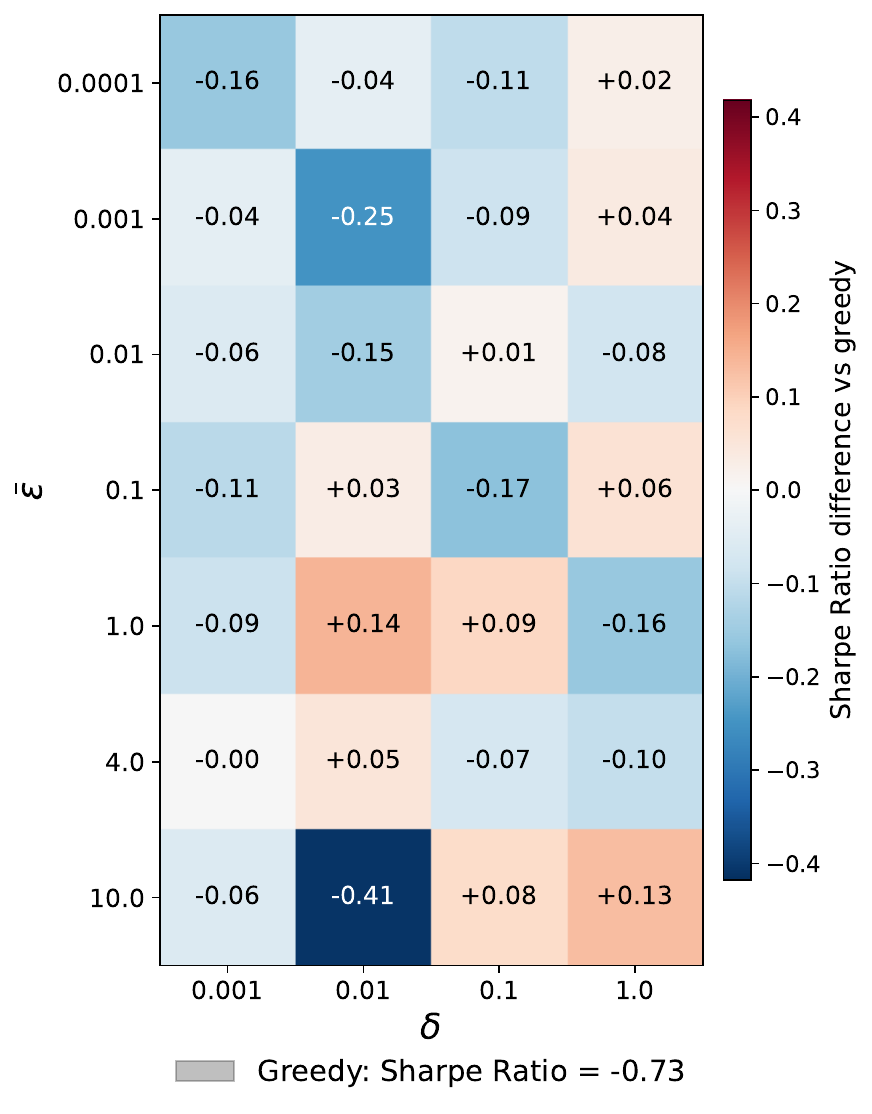}
            \caption{MKC, 2020.}
        \end{subfigure}%
        \begin{subfigure}{0.245\linewidth}
            \includegraphics[width=\linewidth]{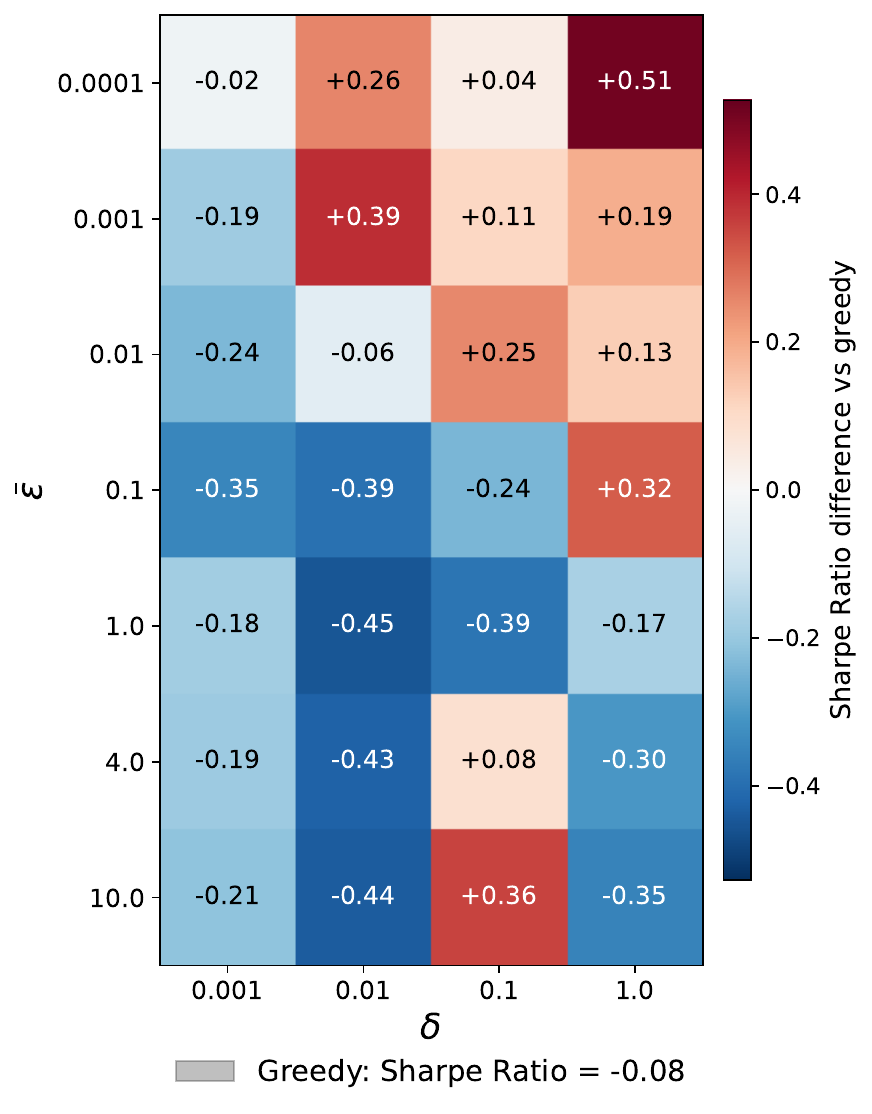}
            \caption{TWLO, 2020.}
        \end{subfigure}
    \end{minipage}}
    \caption[Test-period Sharpe ratio Difference between Greedy Policy and Robust Policy across the hyperparameter grid.]{Test-period Sharpe ratio Difference between Greedy Policy and Robust Policy across the $(\bar\varepsilon,\delta)$ grid, with the 2019 stock panels in the first row and the 2020 panels in the second row. The greedy benchmark is reported below each panel. The panels show that the region of robust configurations outperforming the greedy benchmark is typically broader in 2020, consistent with the larger distributional shift in that period.}
    \label{fig:sharpe_heatmap_real2}
\end{figure}

\begin{figure}[H]
    \centering
    \makebox[\textwidth][c]{\begin{minipage}{1.16\textwidth}\centering
        \begin{subfigure}{0.245\linewidth}
            \includegraphics[width=\linewidth]{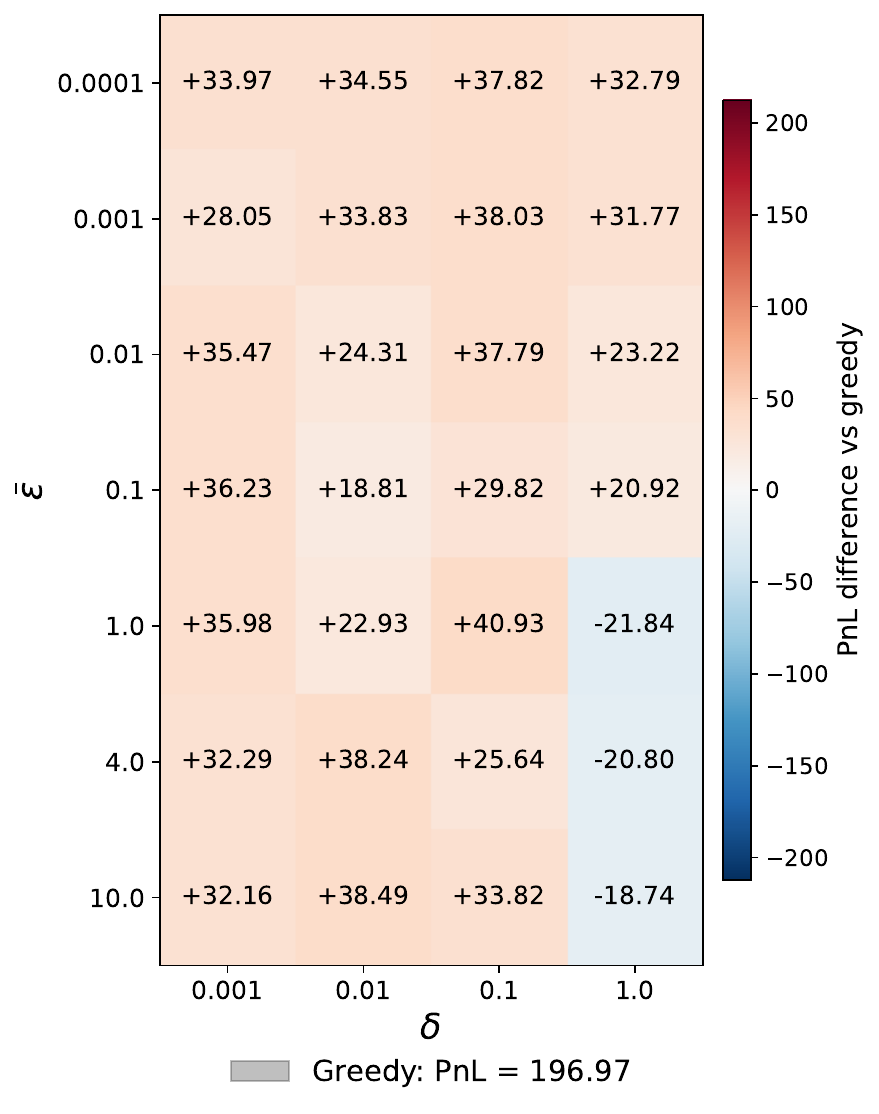}
            \caption{AAPL, 2019.}
        \end{subfigure}%
        \begin{subfigure}{0.245\linewidth}
            \includegraphics[width=\linewidth]{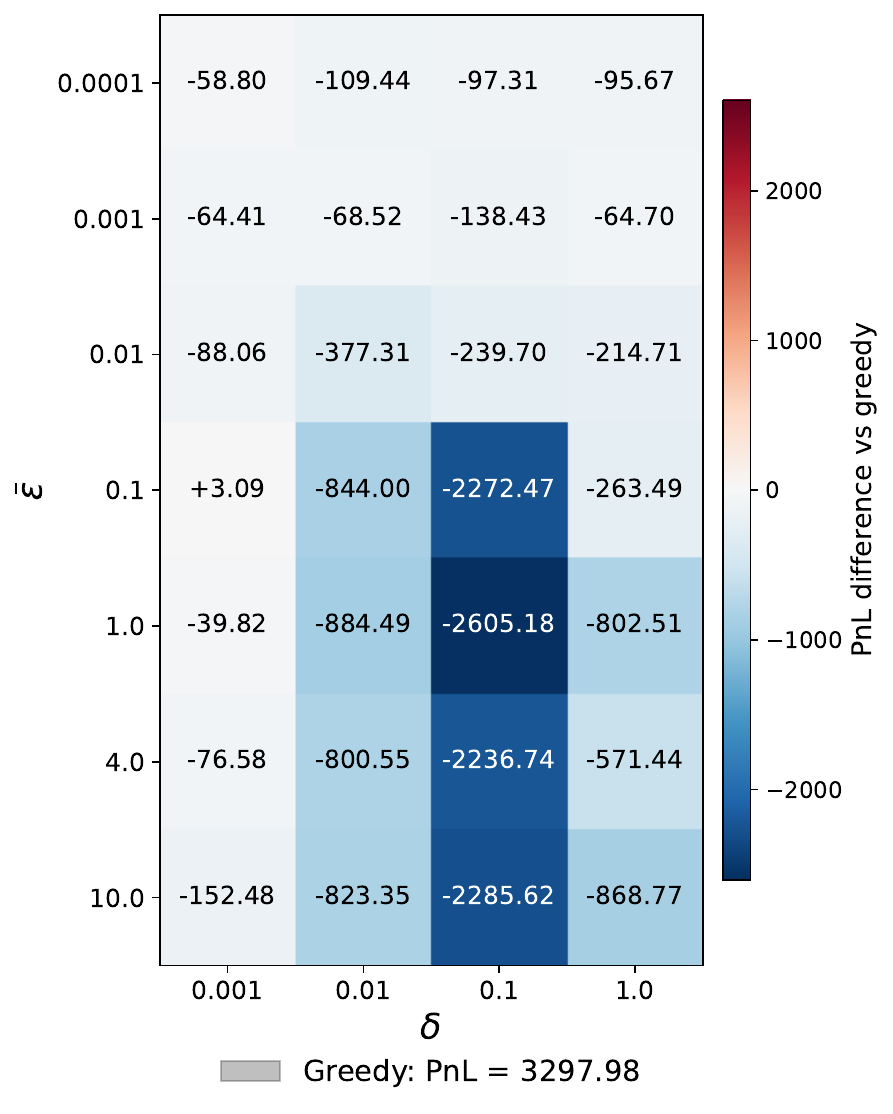}
            \caption{TSLA, 2019.}
        \end{subfigure}%
        \begin{subfigure}{0.245\linewidth}
            \includegraphics[width=\linewidth]{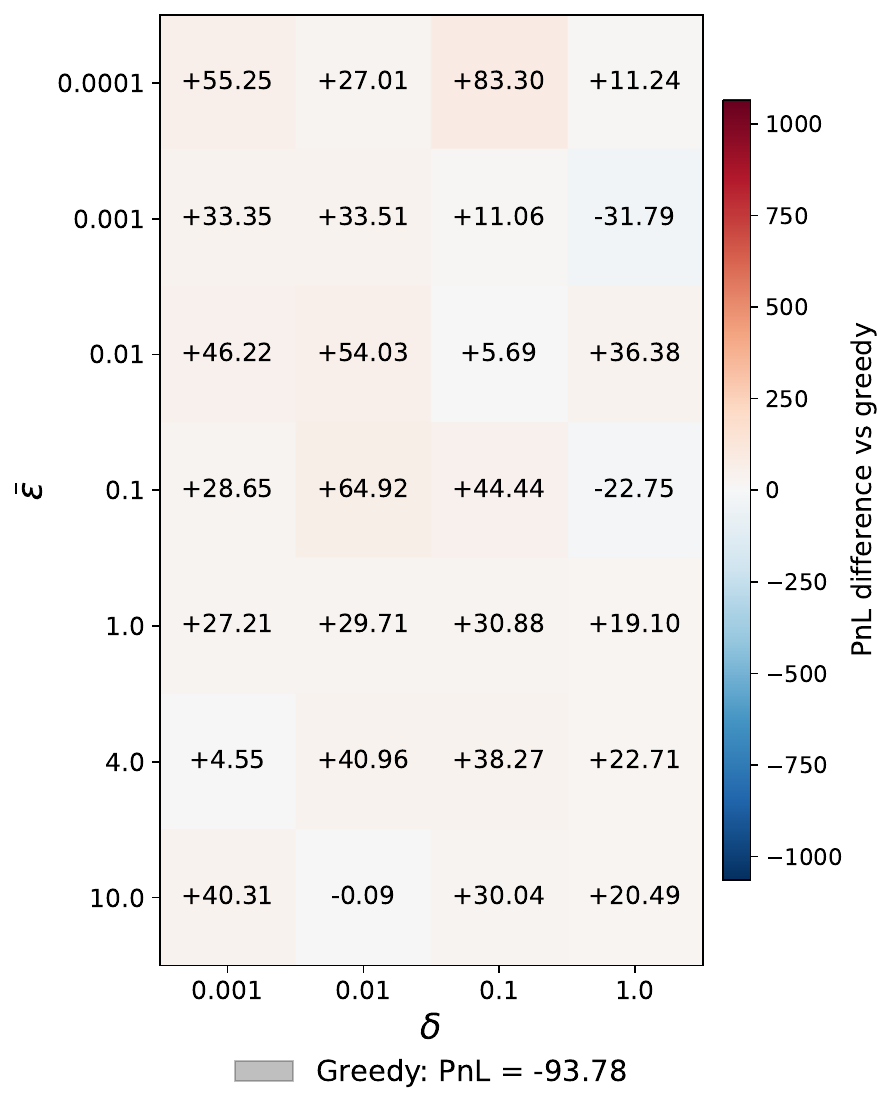}
            \caption{MKC, 2019.}
        \end{subfigure}%
        \begin{subfigure}{0.245\linewidth}
            \includegraphics[width=\linewidth]{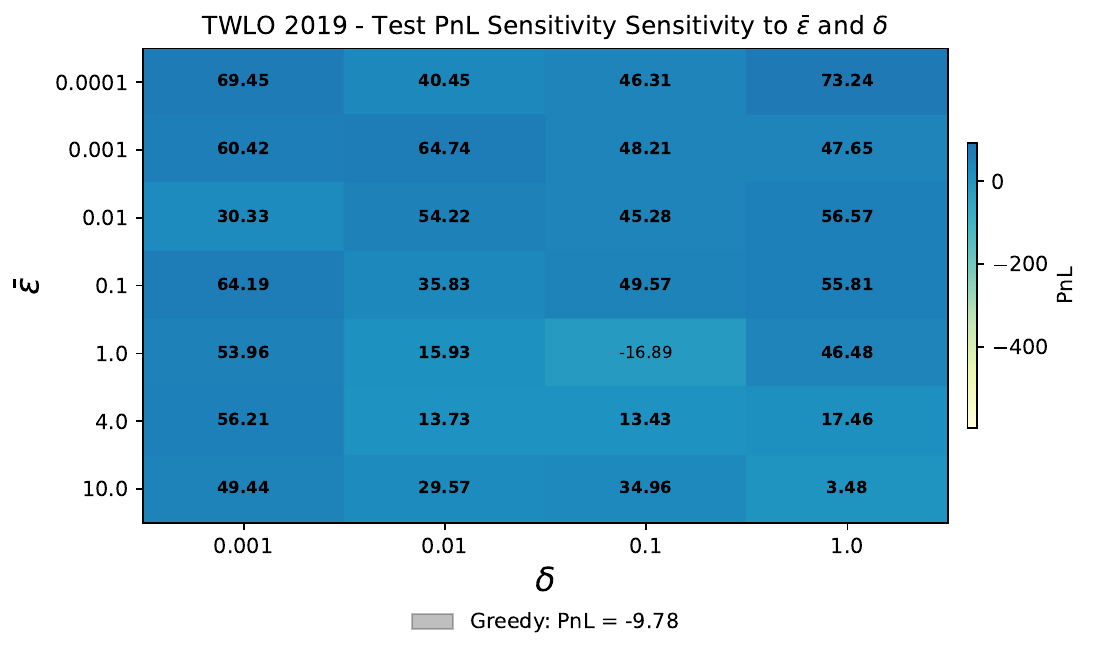}
            \caption{TWLO, 2019.}
        \end{subfigure}

        \par\vspace{0.1em}

        \begin{subfigure}{0.245\linewidth}
            \includegraphics[width=\linewidth]{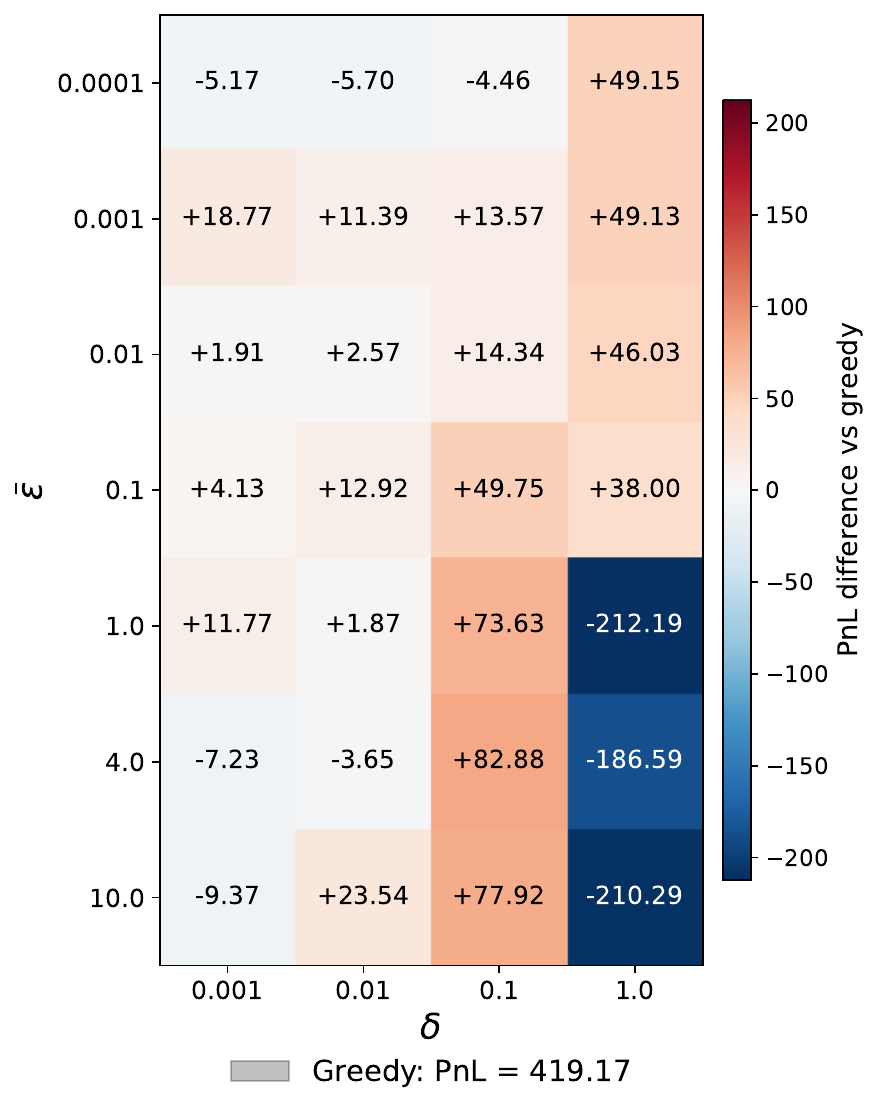}
            \caption{AAPL, 2020.}
        \end{subfigure}%
        \begin{subfigure}{0.245\linewidth}
            \includegraphics[width=\linewidth]{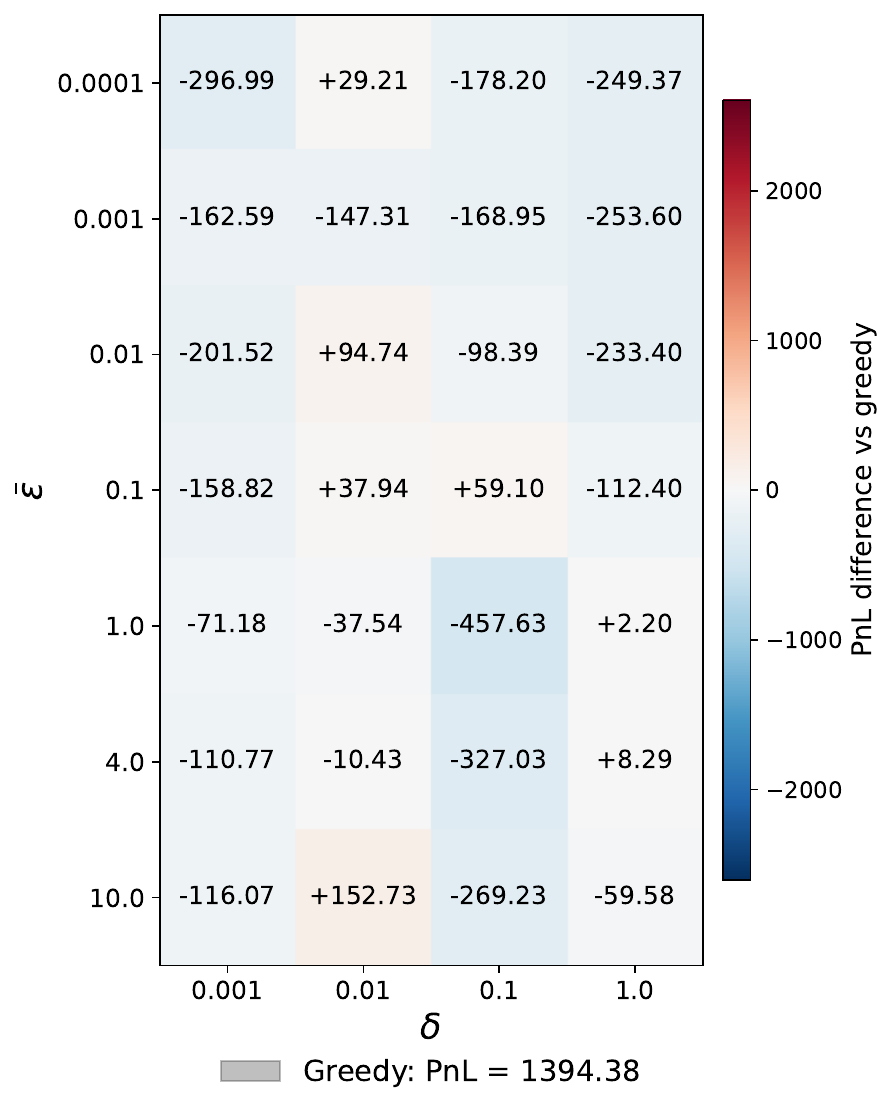}
            \caption{TSLA, 2020.}
        \end{subfigure}%
        \begin{subfigure}{0.245\linewidth}
            \includegraphics[width=\linewidth]{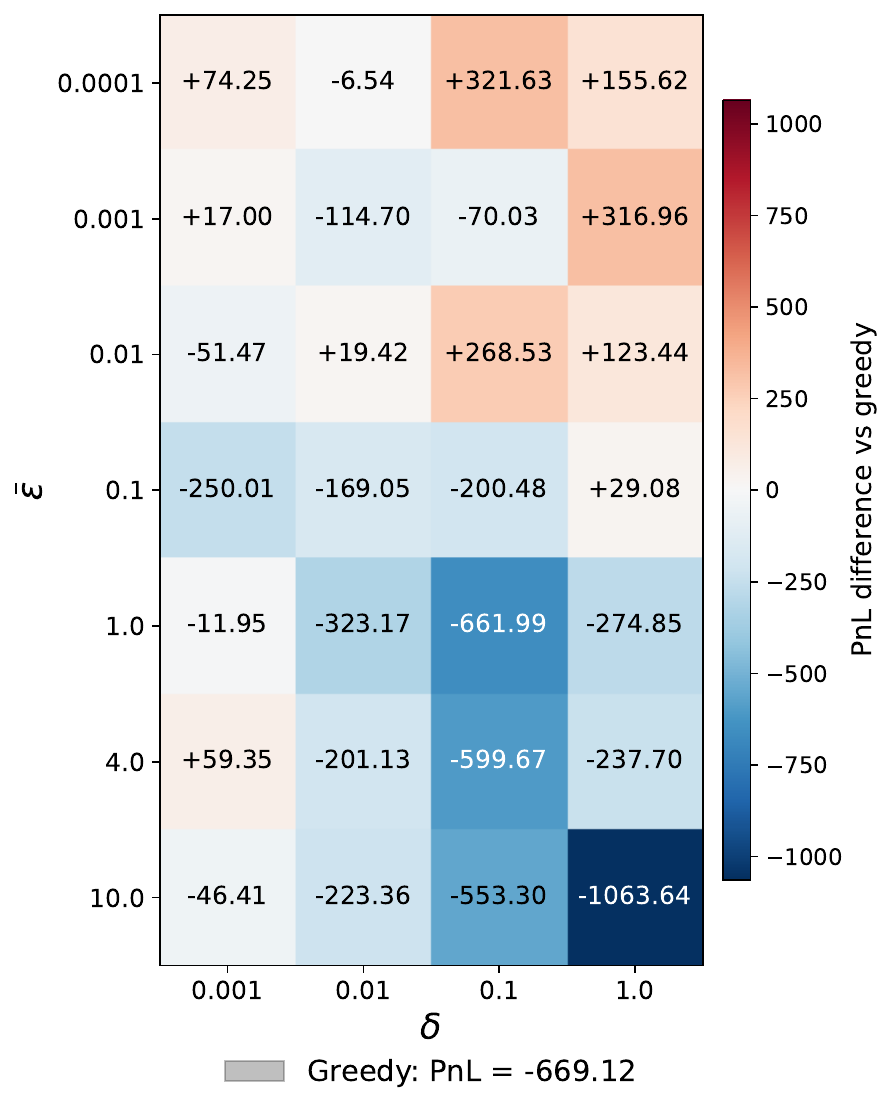}
            \caption{MKC, 2020.}
        \end{subfigure}%
        \begin{subfigure}{0.245\linewidth}
            \includegraphics[width=\linewidth]{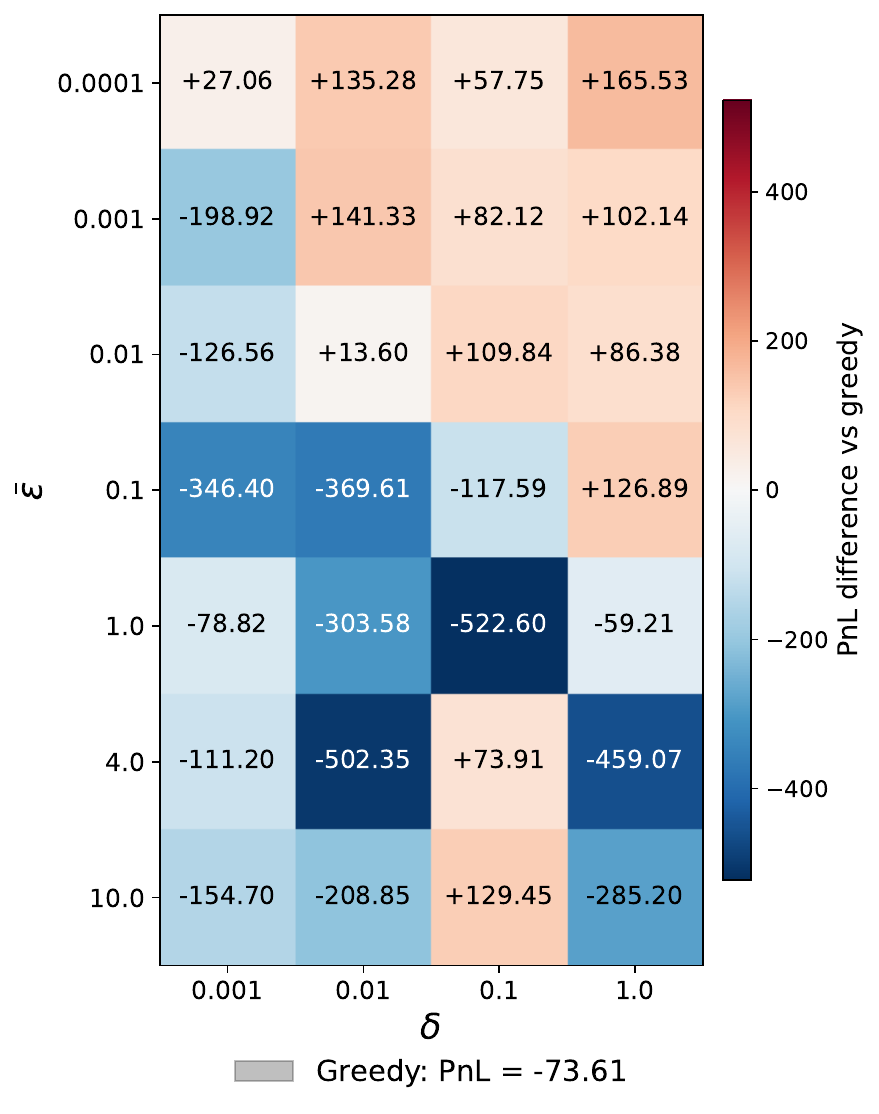}
            \caption{TWLO, 2020.}
        \end{subfigure}
    \end{minipage}}
    \caption[Test-period mean P\&L across the hyperparameter grid.]{Test-period mean P\&L Difference between Greedy Policy and Robust Policy across the $(\bar\varepsilon,\delta)$ grid, with the 2019 stock panels in the first row and the 2020 panels in the second row. The panels show that robustness often trades off some average profitability against improved stability, although the size and direction of the effect differ across stocks and are more pronounced in 2020.}
    \label{fig:pnl_heatmap_real}
\end{figure}

\subsection[Risk-Return Test Pareto Frontier in 2019]{2019 Test Pareto Frontiers}

Figure~\ref{fig:pareto_2019_appendix} reports the out-of-sample test Pareto frontiers for the four representative assets during the 2019 regime. The corresponding 2020 test Pareto frontiers are reported in Figure~\ref{fig:pareto_2020} in the main text.

\begin{figure}[H]
    \centering
    \makebox[\textwidth][c]{\begin{minipage}{1.16\textwidth}\centering
        \begin{subfigure}{0.245\linewidth}
            \includegraphics[width=\linewidth]{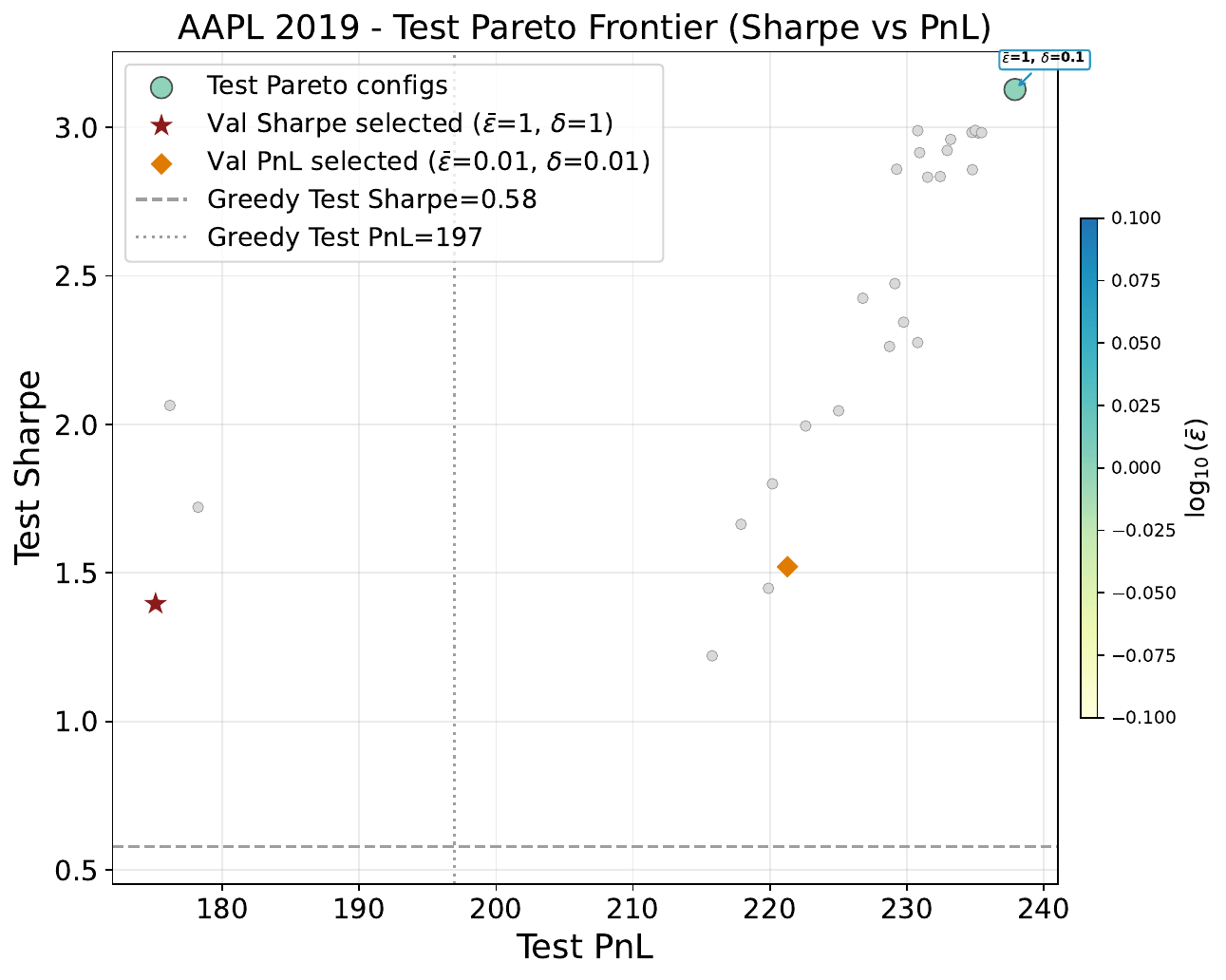}
            \caption{AAPL.}
        \end{subfigure}\hfill
        \begin{subfigure}{0.245\linewidth}
            \includegraphics[width=\linewidth]{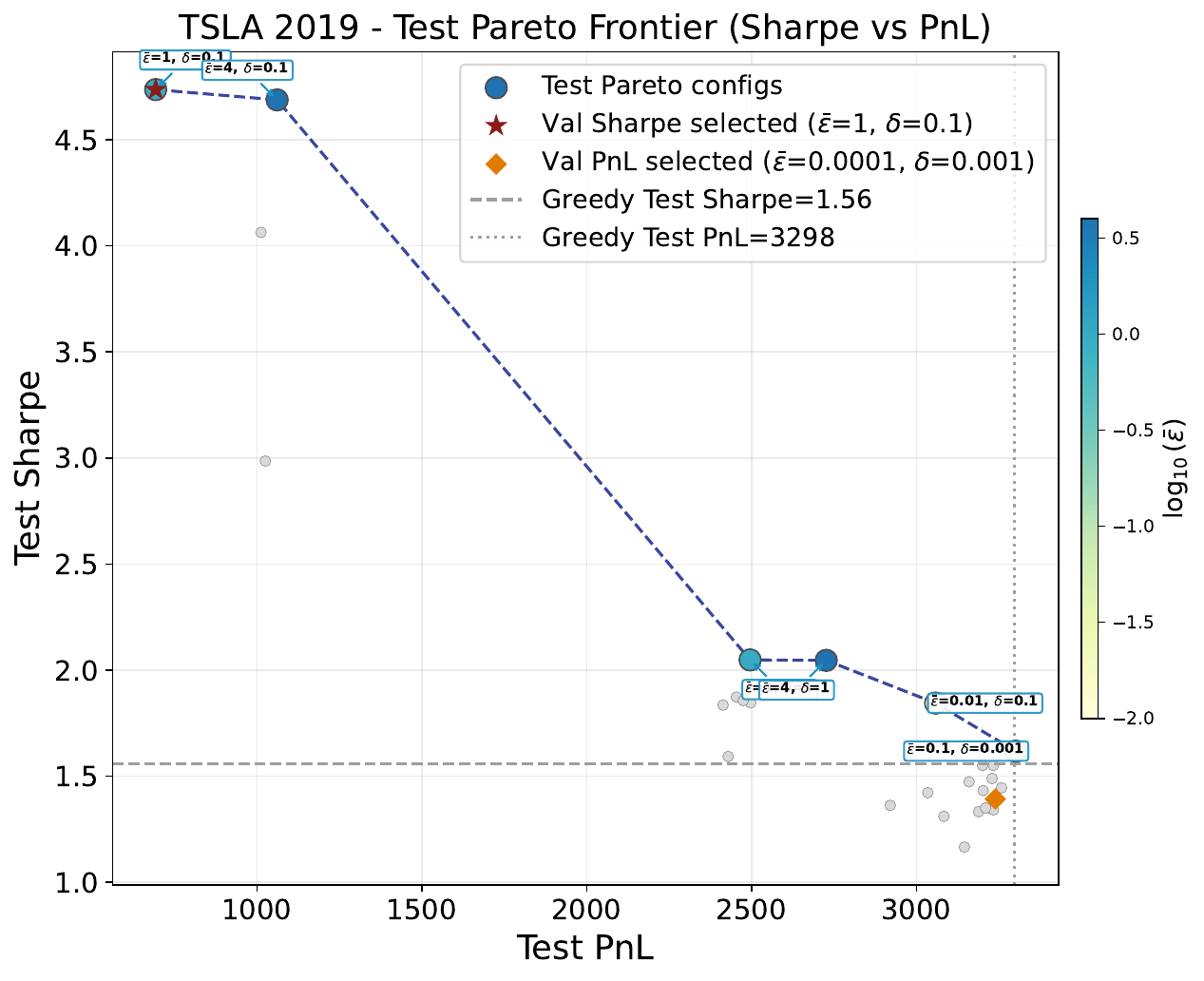}
            \caption{TSLA.}
        \end{subfigure}\hfill
        \begin{subfigure}{0.245\linewidth}
            \includegraphics[width=\linewidth]{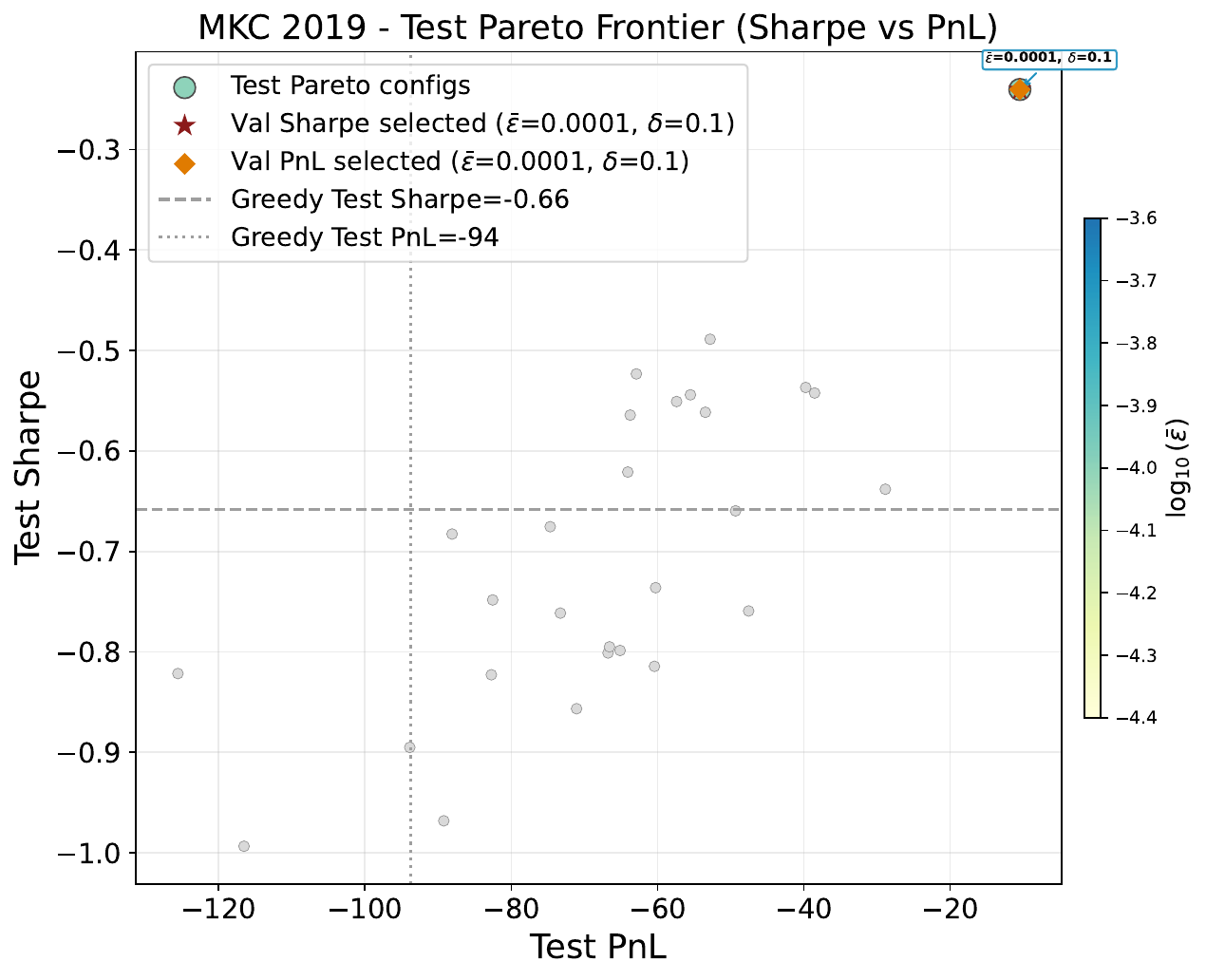}
            \caption{MKC.}
        \end{subfigure}\hfill
        \begin{subfigure}{0.245\linewidth}
            \includegraphics[width=\linewidth]{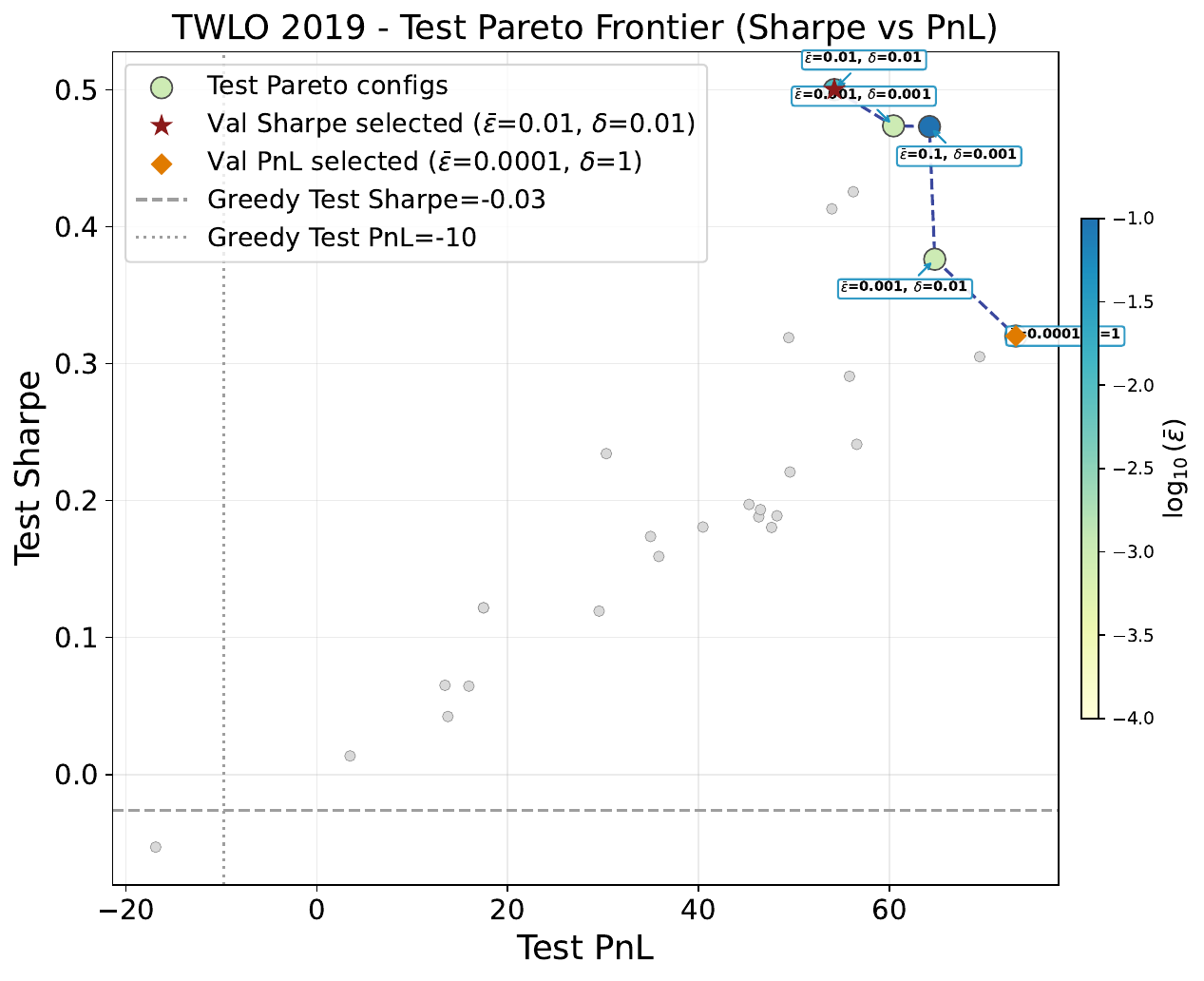}
            \caption{TWLO.}
        \end{subfigure}
    \end{minipage}}
    \caption{Out-of-sample test Pareto frontiers during the 2019 regime. Each point corresponds to a robust hyperparameter configuration, and highlighted points denote the policies selected from the validation frontier. The 2019 frontiers are generally tighter than their 2020 counterparts, indicating a less pronounced but still economically meaningful risk-return trade-off, with TSLA showing the clearest separation across robustness choices.}
    \label{fig:pareto_2019_appendix}
\end{figure}

The 2019 panels provide only limited evidence of a pronounced risk-return trade-off. The effect is most apparent for TSLA, where the frontier exhibits clearer separation across robustness specifications. For AAPL, MKC, and TWLO, by contrast, the frontiers are comparatively tight, but this also means that in the calmer 2019 regime there remain many hyperparameter choices delivering both relatively high mean P\&L and relatively high Sharpe ratio. Figure~\ref{fig:pareto_2020}, however, displays a substantially more visible trade-off in 2020: the frontier is more clearly stretched, but the set of policies that jointly attain high P\&L and high Sharpe ratio is correspondingly more selective. This pattern is consistent with the interpretation that robustness becomes more economically consequential in the more dislocated 2020 environment, where liquidity frictions and return uncertainty force sharper trade-offs across candidate policies.

\subsection{Intraday Quoting Behaviour}

Figures~\ref{fig:intraday_behavior_real_aapl_tsla} and~\ref{fig:intraday_behavior_real_mkc_twlo} report the mean intraday paths of bid quantity, ask quantity, bid spread, and ask spread for the robust strategies and the benchmark rules. Two regularities stand out. First, for the learned agents, intraday variation is driven primarily by quantities rather than spreads. This reflects the fact that the empirical maximum spread is calibrated from the 95th to 99th percentile of the historical one-minute spread distribution in order to obtain a reasonable fill-probability specification. At a one-minute decision frequency, this admissible spread range is comparatively tight, so the learned policies have limited scope to differentiate themselves through spreads and instead optimize mainly through quoted size. Second, robustness mainly affects the level of quoted size rather than the shape of the intraday spread. Relative to the greedy benchmark, the robust agents, especially the Sharpe-selected specification, post smaller quantities over most of the day and reduce them more sharply near the close.

The effect is clearest in the more active panels, particularly AAPL and TSLA, where the robust agents preserve the same broad intraday pattern as the greedy benchmark but with systematically more conservative sizing. This is consistent with the earlier cross-state analysis and, more broadly, with the simulation study, where the robust agents are also significantly more risk-averse than their non-robust countvvgbherparts. The same qualitative pattern is visible in TWLO, although the separation across policies is smaller. By contrast, MKC exhibits relatively limited dispersion across the learned policies, which accords with the weaker gains from robustness in that stock. Overall, the intraday evidence supports the interpretation that the benefit of robustness is strongest in the more liquid and volatile panels, where the agent has greater scope to trade off trading opportunities against end-of-day inventory risk.

\begin{figure}[H]
    \centering
    \makebox[\textwidth][c]{\begin{minipage}{1.16\textwidth}\centering
        \begin{subfigure}{0.49\linewidth}
            \includegraphics[width=\linewidth]{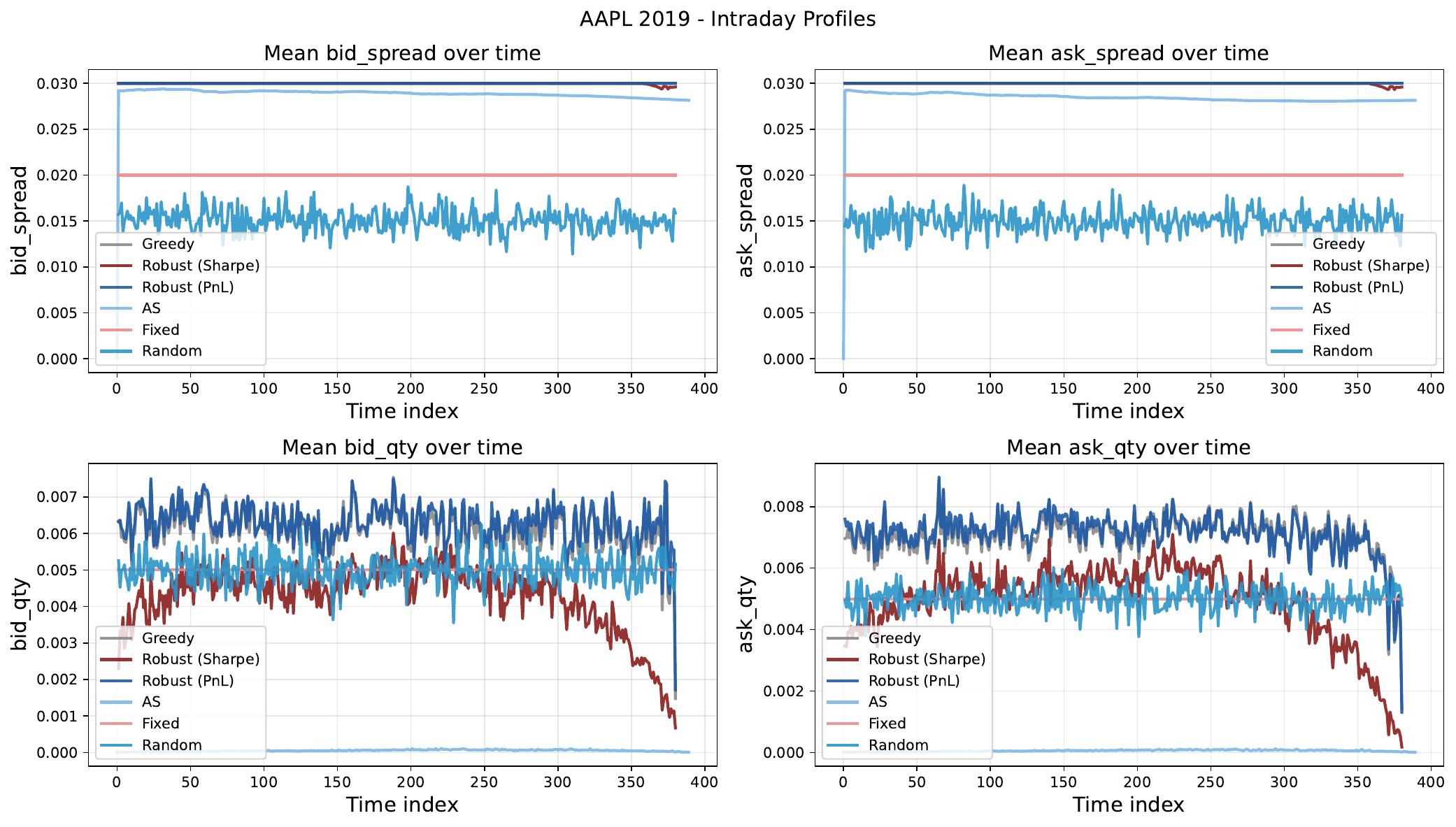}
            \caption{AAPL, 2019.}
        \end{subfigure}%
        \begin{subfigure}{0.49\linewidth}
            \includegraphics[width=\linewidth]{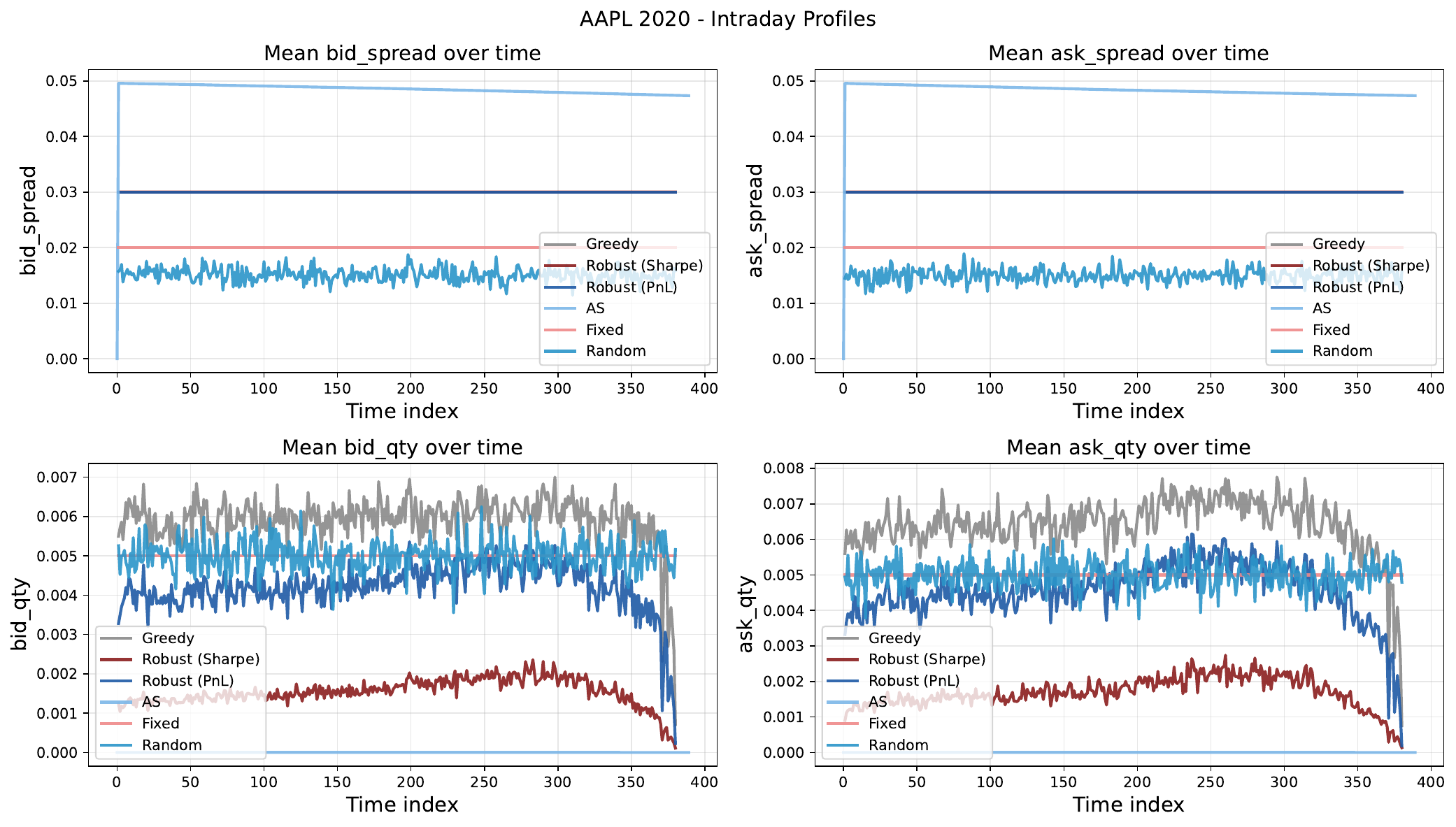}
            \caption{AAPL, 2020.}
        \end{subfigure}

        \par\vspace{0.4em}

        \begin{subfigure}{0.49\linewidth}
            \includegraphics[width=\linewidth]{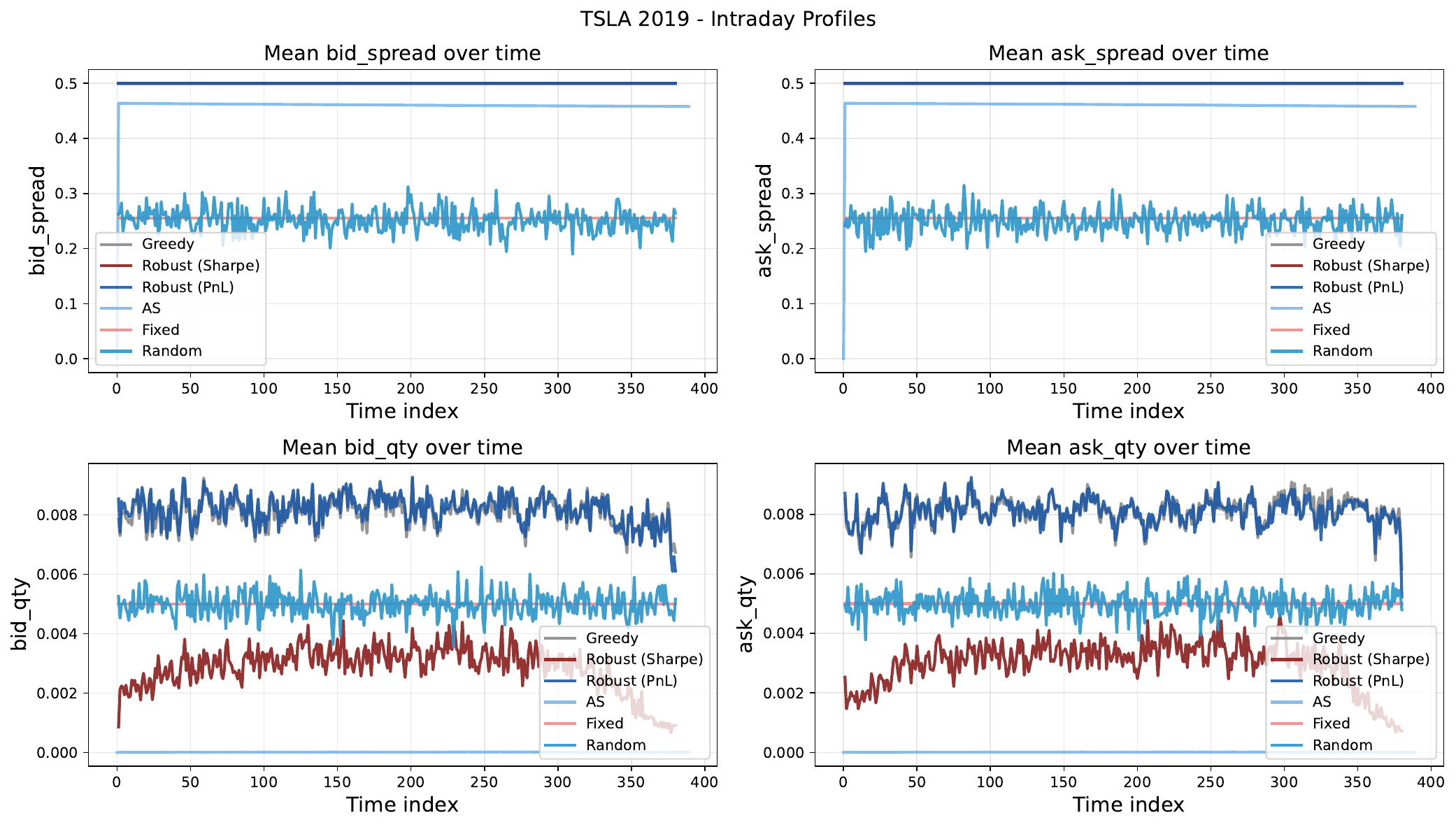}
            \caption{TSLA, 2019.}
        \end{subfigure}%
        \begin{subfigure}{0.49\linewidth}
            \includegraphics[width=\linewidth]{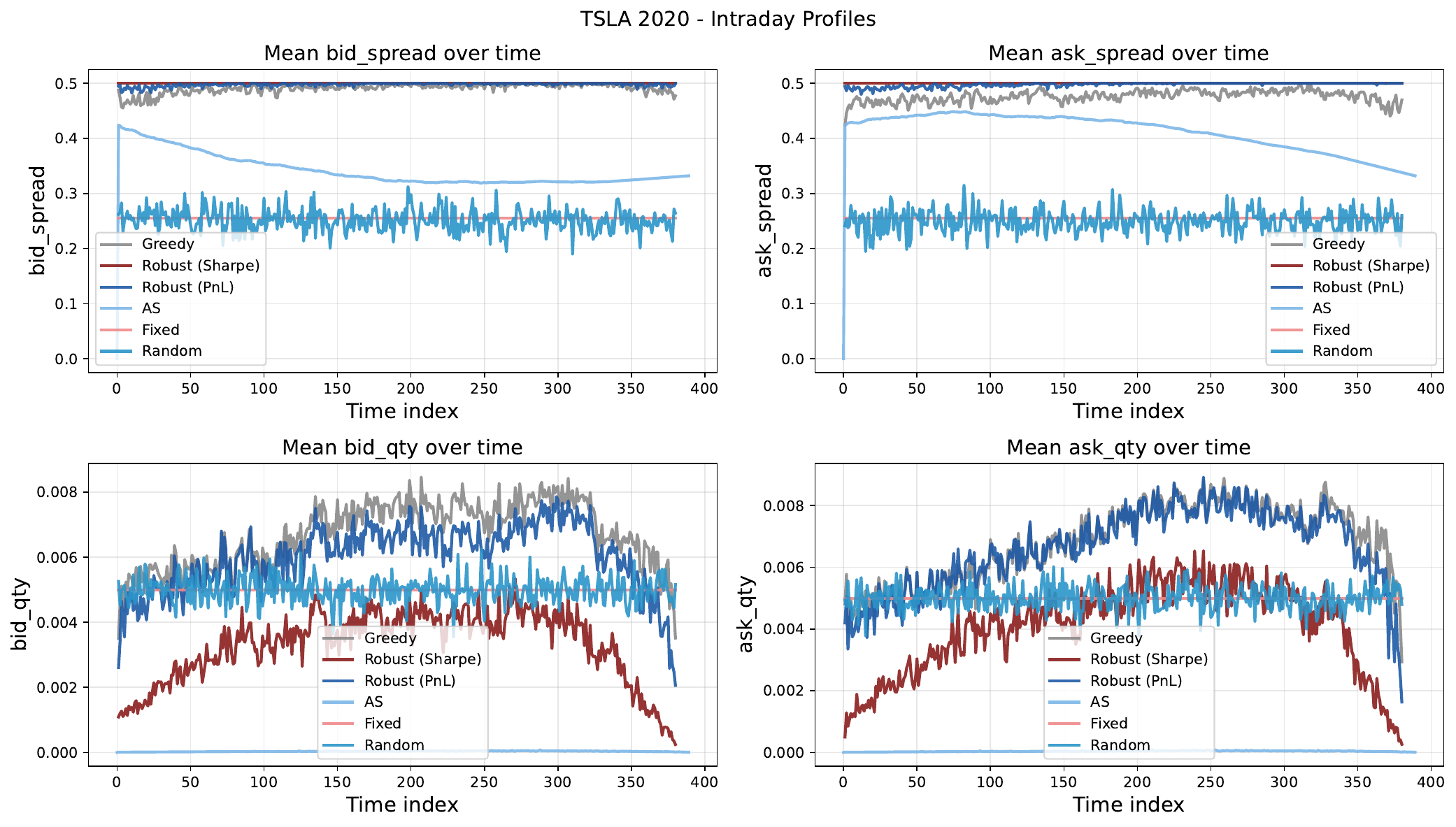}
            \caption{TSLA, 2020.}
        \end{subfigure}
    \end{minipage}}
    \caption{Mean quote (bid quantity, ask quantity, bid spread, ask spread) of robust strategies against baselines for AAPL and TSLA. Robust hyperparameters are those selected from the best validation Sharpe and best validation PnL. For the AS strategy, the maximum spread is 0.6. In both stocks, the robust agents mainly differ from the benchmark through more conservative quantity provision, while the intraday spread profile remains comparatively similar.}
    \label{fig:intraday_behavior_real_aapl_tsla}
\end{figure}

\begin{figure}[H]
    \centering
    \makebox[\textwidth][c]{\begin{minipage}{1.16\textwidth}\centering
        \begin{subfigure}{0.49\linewidth}
            \includegraphics[width=\linewidth]{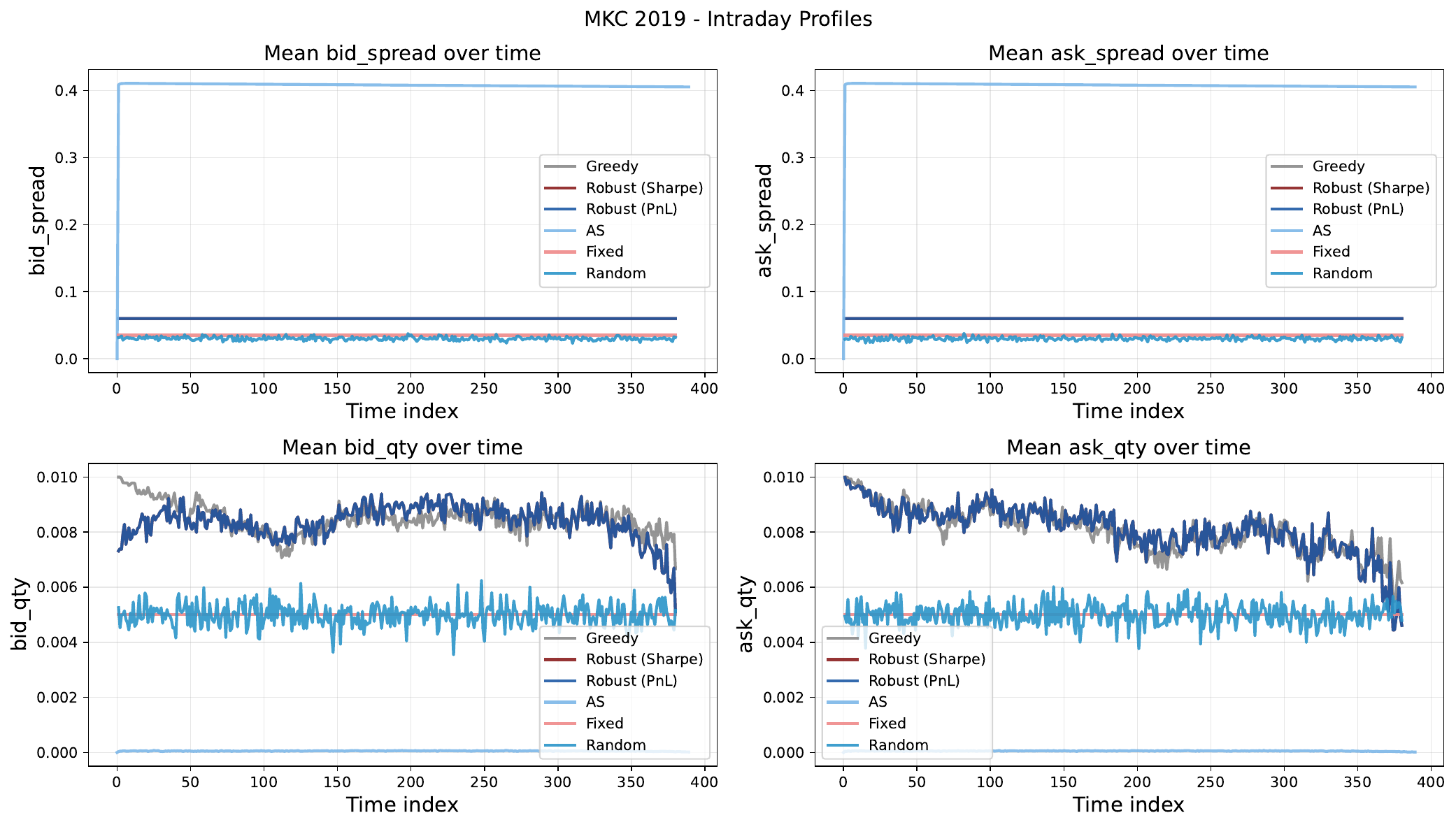}
            \caption{MKC, 2019.}
        \end{subfigure}%
        \begin{subfigure}{0.49\linewidth}
            \includegraphics[width=\linewidth]{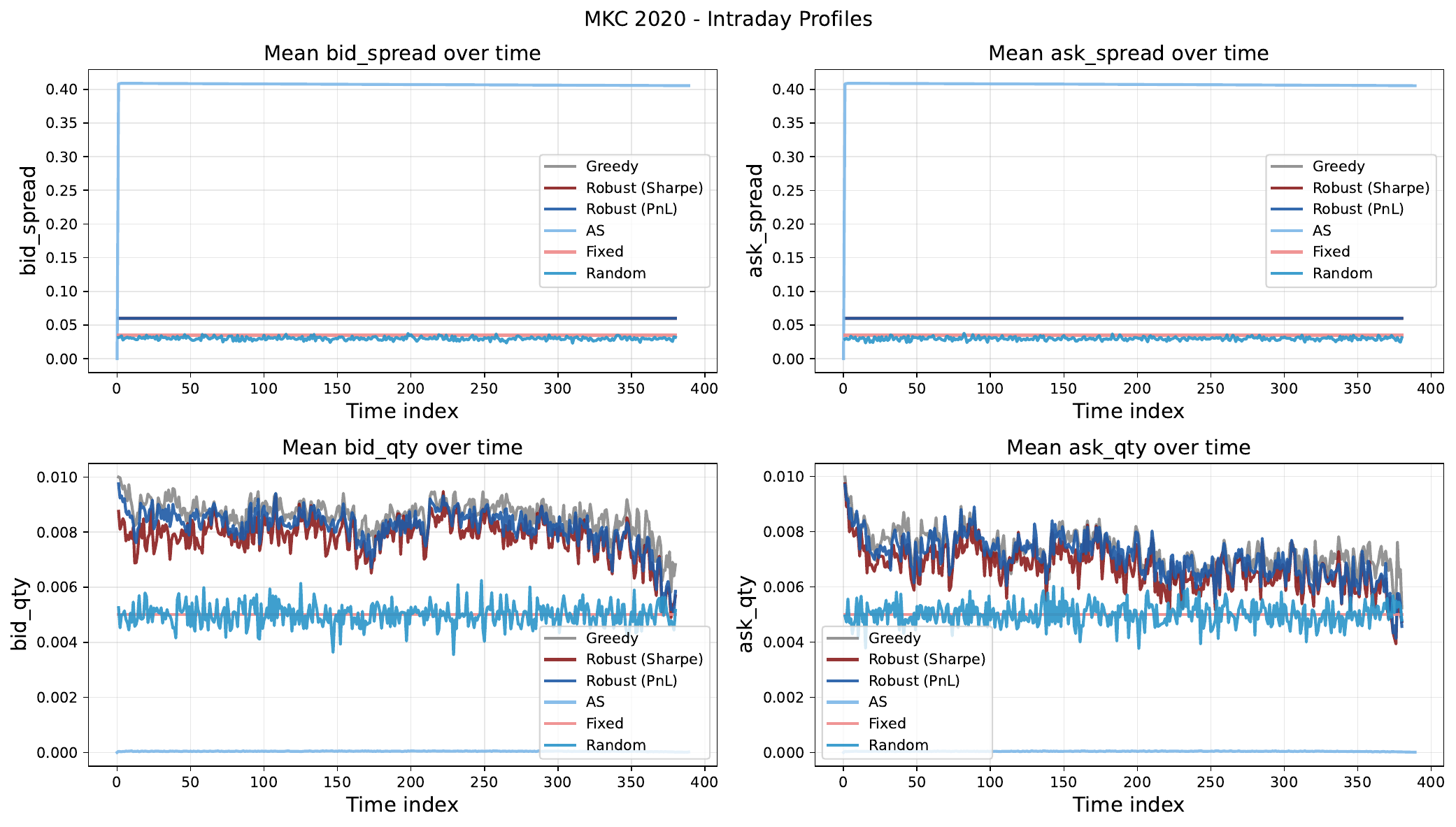}
            \caption{MKC, 2020.}
        \end{subfigure}

        \par\vspace{0.4em}

        \begin{subfigure}{0.49\linewidth}
            \includegraphics[width=\linewidth]{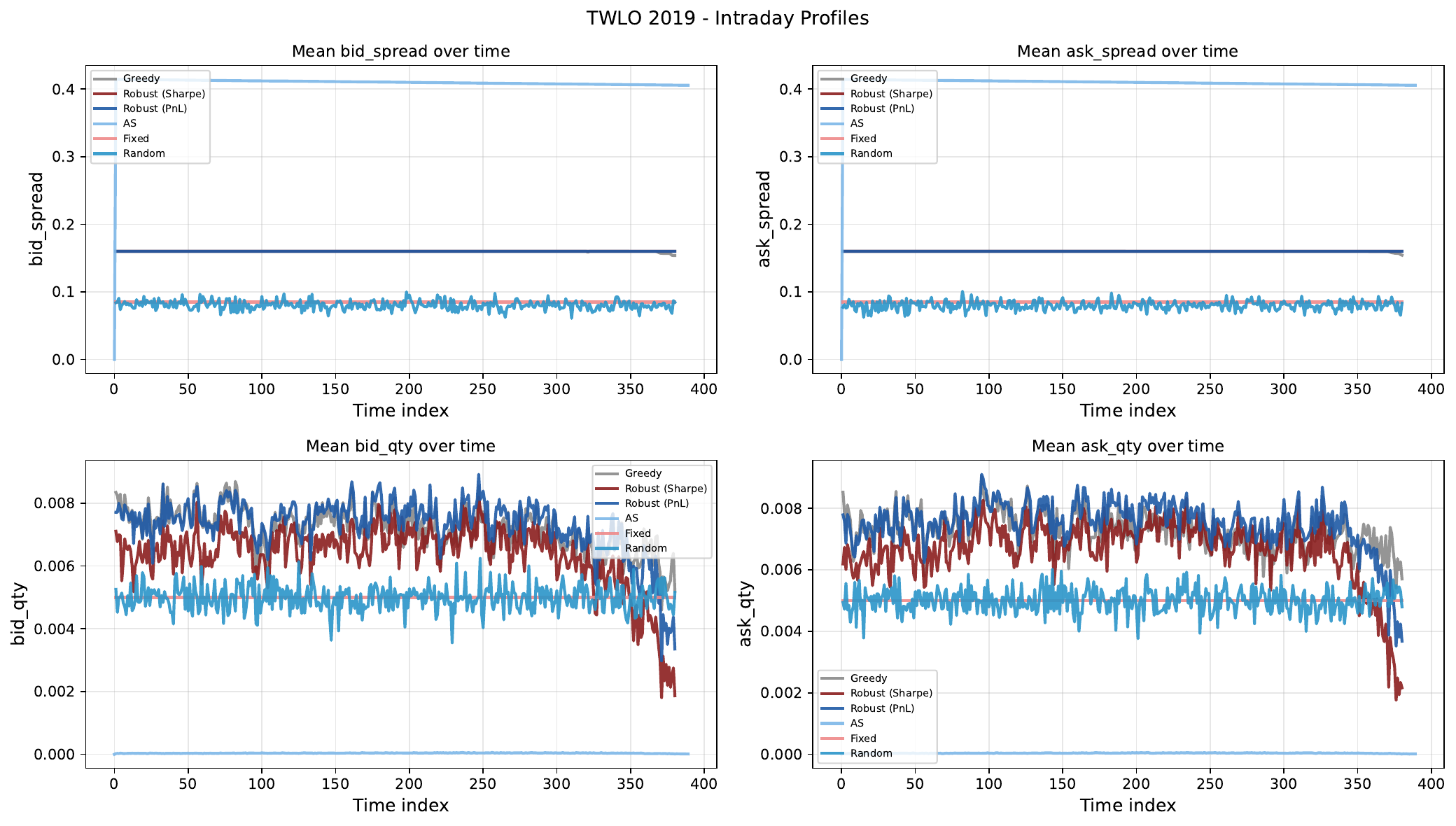}
            \caption{TWLO, 2019.}
        \end{subfigure}%
        \begin{subfigure}{0.49\linewidth}
            \includegraphics[width=\linewidth]{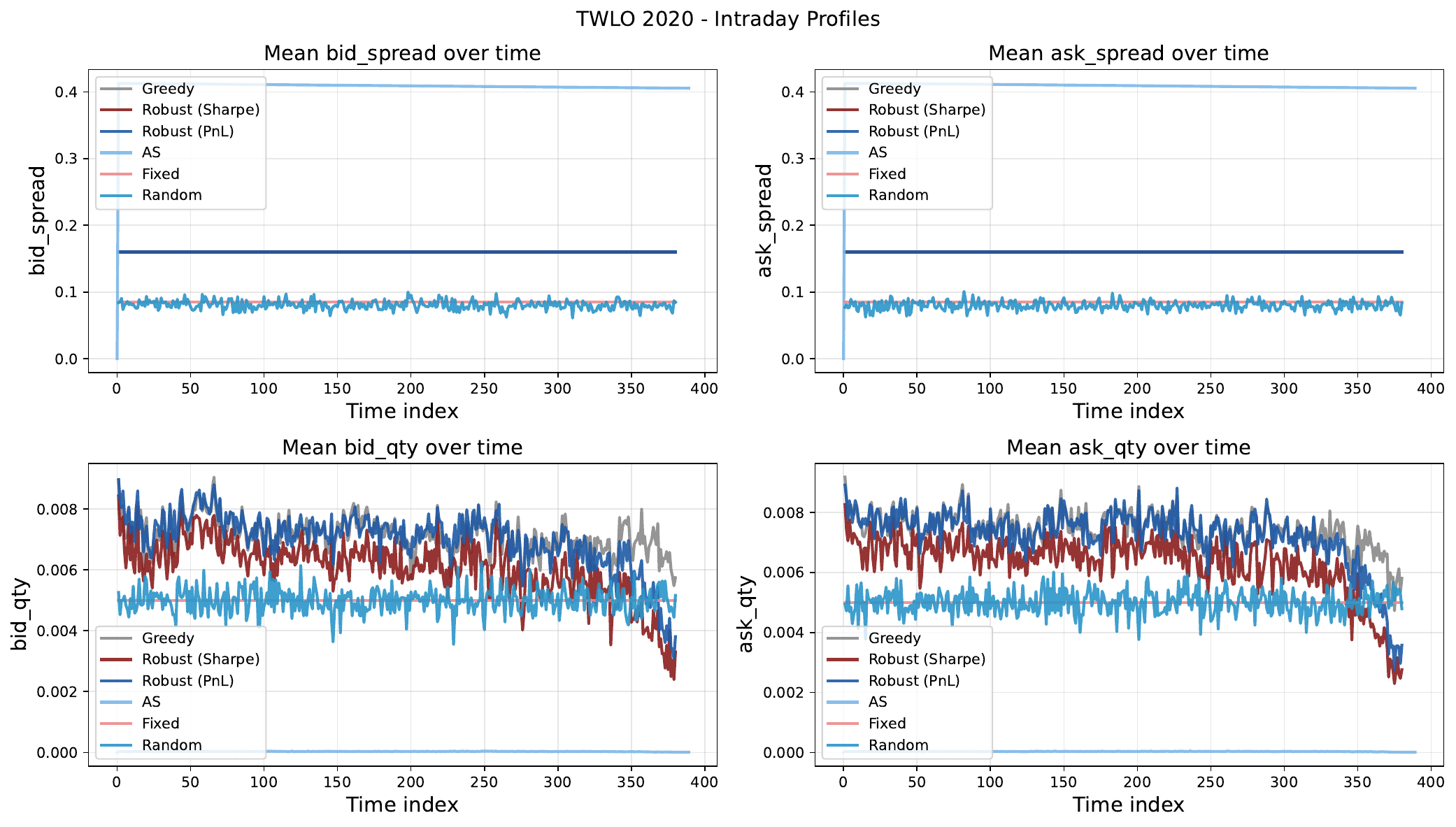}
            \caption{TWLO, 2020.}
        \end{subfigure}
    \end{minipage}}
    \caption{Mean quote (bid quantity, ask quantity, bid spread, ask spread) of robust strategies against baselines for MKC and TWLO. Robust hyperparameters are those selected from the best validation Sharpe and best validation PnL. For the AS strategy, the maximum spread is 0.6. The separation across policies is more modest in these two stocks, especially for MKC, consistent with the weaker gains from robustness outside the more active panels.}
    \label{fig:intraday_behavior_real_mkc_twlo}
\end{figure}

\subsection[Average Sharpe Ratio Conditional on \texorpdfstring{$\varepsilon$}{epsilon}]{Average Sharpe Ratio Conditional on $\varepsilon$}\label{sec:appendix_sharpe_eps}

Figure~\ref{fig:sharpe_vs_eps_real} reports the average test Sharpe ratio of the robust agent conditional on the \emph{original} Sinkhorn radius $\varepsilon$ (recovered from the tuned shifted radius), against the greedy benchmark (dashed), for all four stocks in 2019 and 2020. This view is meant to expose how performance depends on the size of the ambiguity ball itself, rather than on the tuned uncertainty-tolerance parameter $\bar\varepsilon$ used during optimisation. Across stocks and periods, the robust agent typically matches or improves upon the greedy benchmark for higher values of $\varepsilon$ in terms of Sharpe ratio. The gain is generally stronger on the higher-liquidity names. As $\varepsilon$ grows, the agent quotes more conservatively and tends to deliver a higher Sharpe ratio, although the relationship is not strictly monotone: the action-robustness parameter $\delta$, which controls the shape of the Sinkhorn sampling kernel $Q_{x,\delta}$, also influences the policy and can shift the conditional Sharpe curve, so $\varepsilon$ alone does not fully determine performance.

\begin{figure}[H]
    \centering
    \makebox[\textwidth][c]{\begin{minipage}{1.16\textwidth}\centering
        \begin{subfigure}{0.245\linewidth}
            \includegraphics[width=\linewidth]{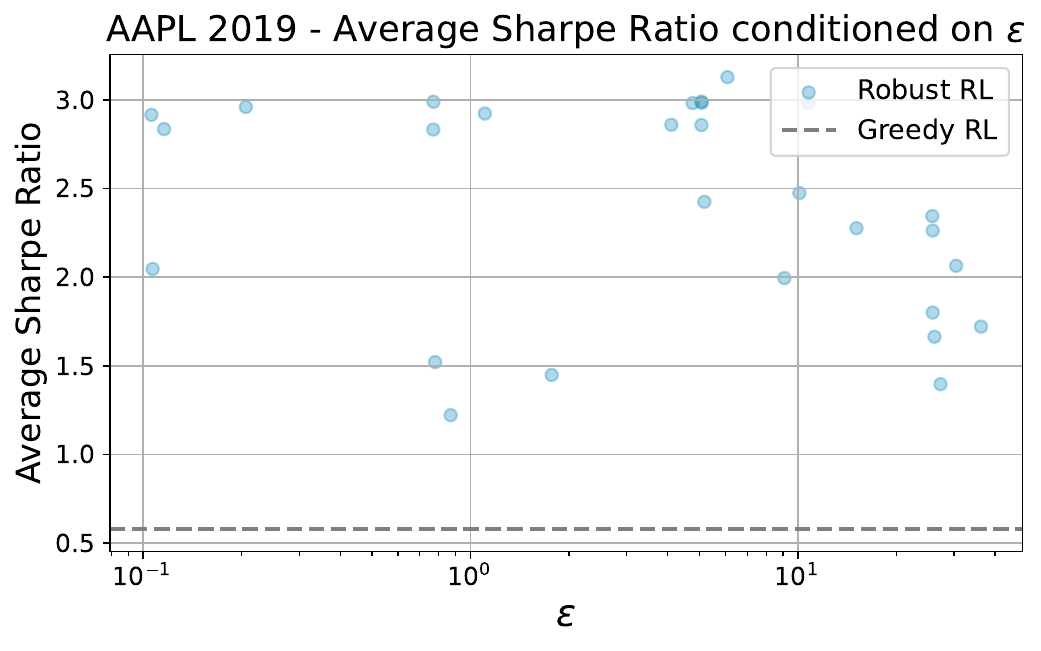}
            \caption{AAPL, 2019.}
        \end{subfigure}%
        \begin{subfigure}{0.245\linewidth}
            \includegraphics[width=\linewidth]{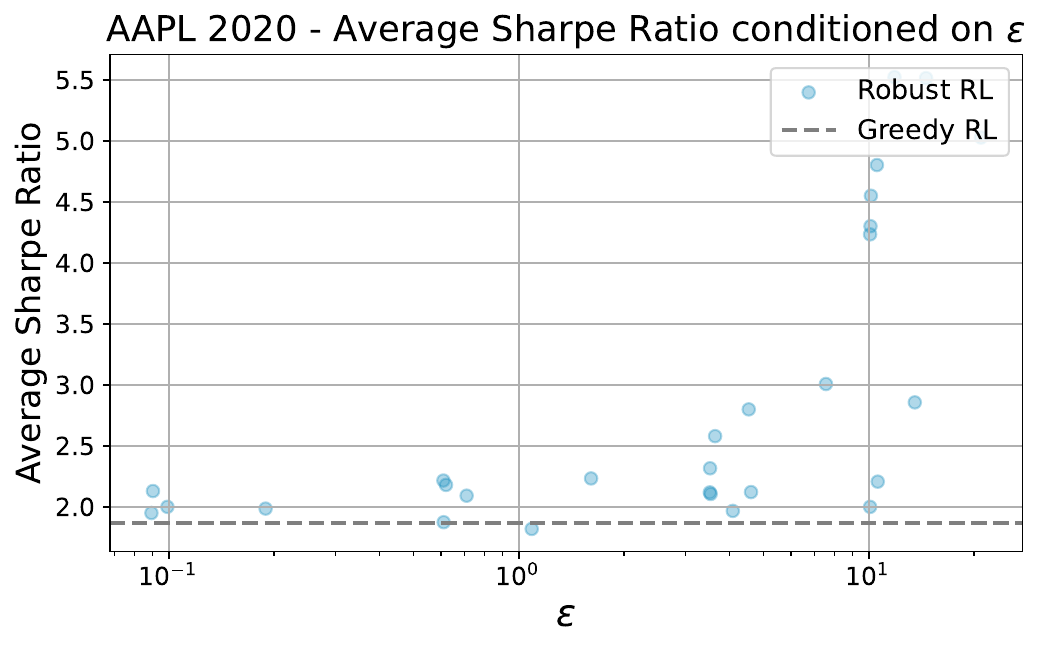}
            \caption{AAPL, 2020.}
        \end{subfigure}%
        \begin{subfigure}{0.245\linewidth}
            \includegraphics[width=\linewidth]{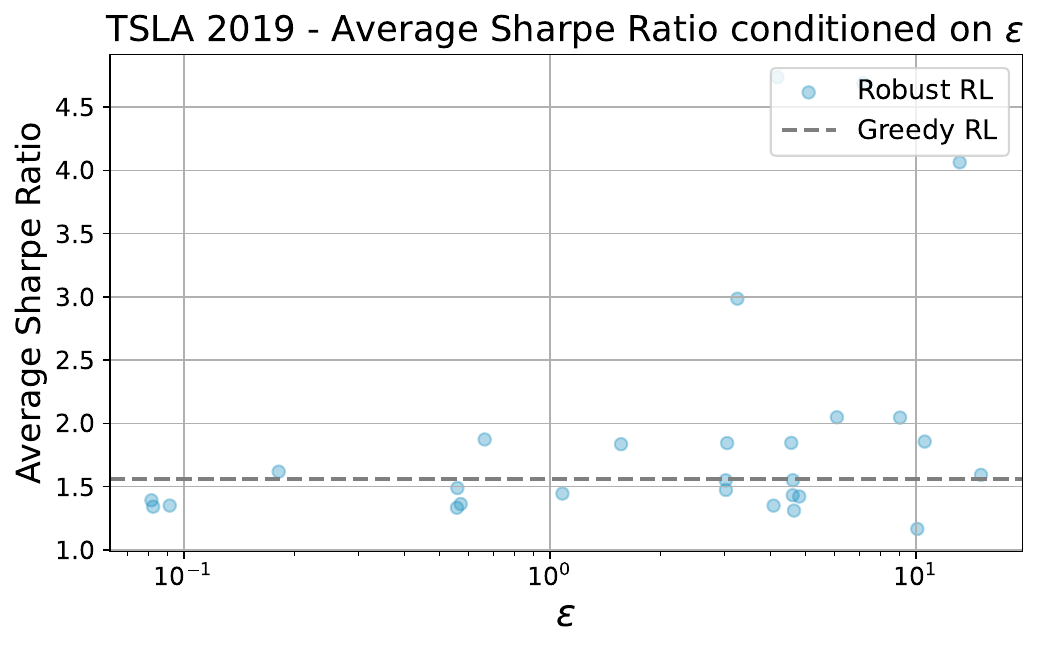}
            \caption{TSLA, 2019.}
        \end{subfigure}%
        \begin{subfigure}{0.245\linewidth}
            \includegraphics[width=\linewidth]{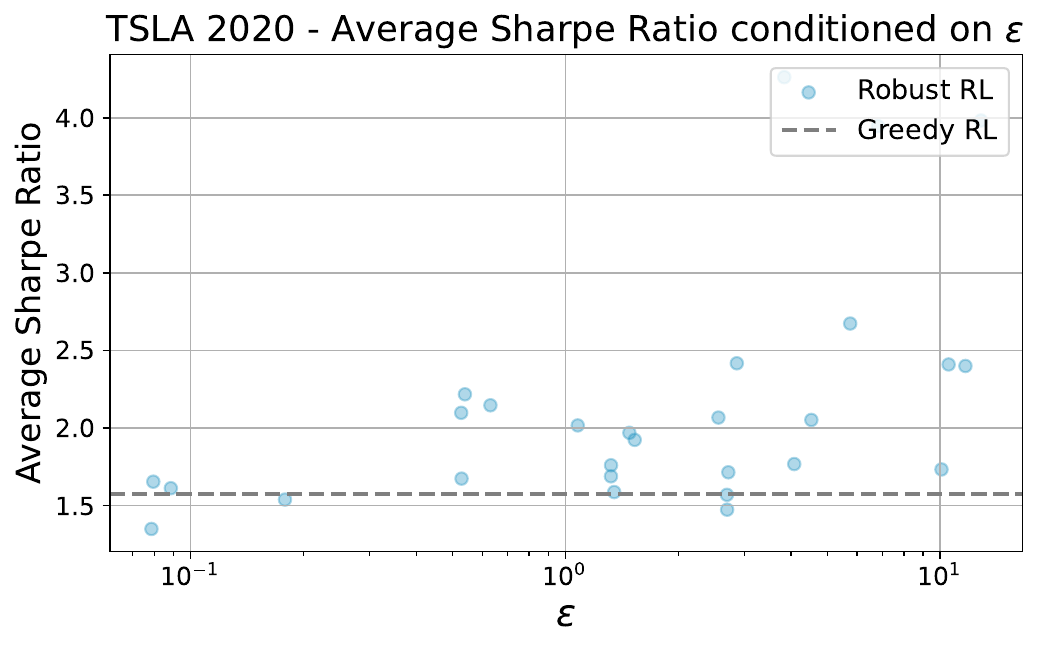}
            \caption{TSLA, 2020.}
        \end{subfigure}

        \par\vspace{0.1em}

        \begin{subfigure}{0.245\linewidth}
            \includegraphics[width=\linewidth]{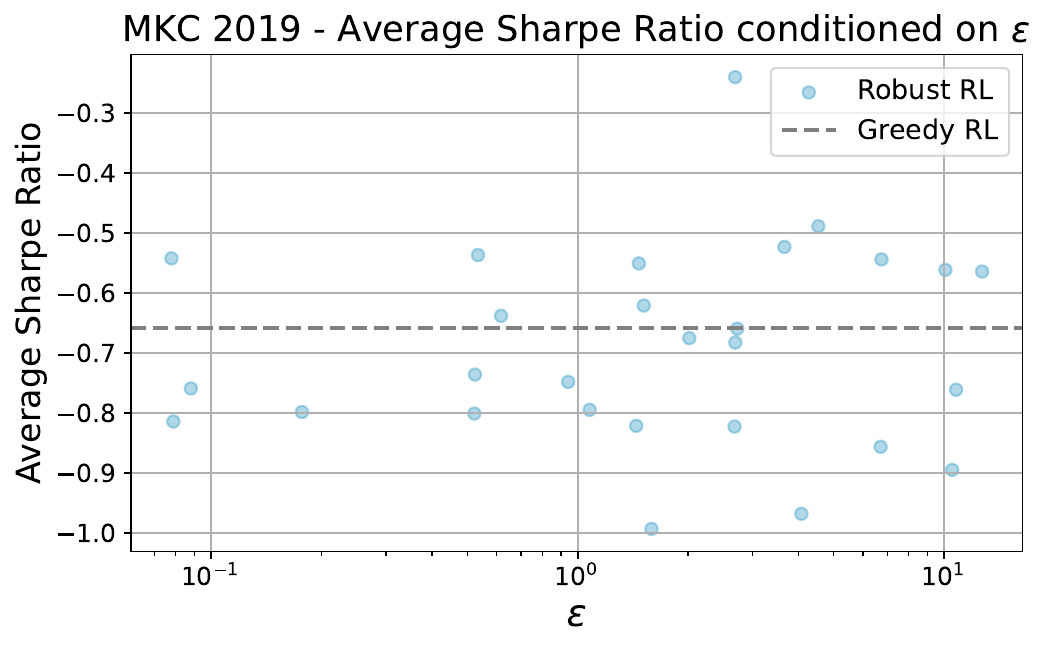}
            \caption{MKC, 2019.}
        \end{subfigure}%
        \begin{subfigure}{0.245\linewidth}
            \includegraphics[width=\linewidth]{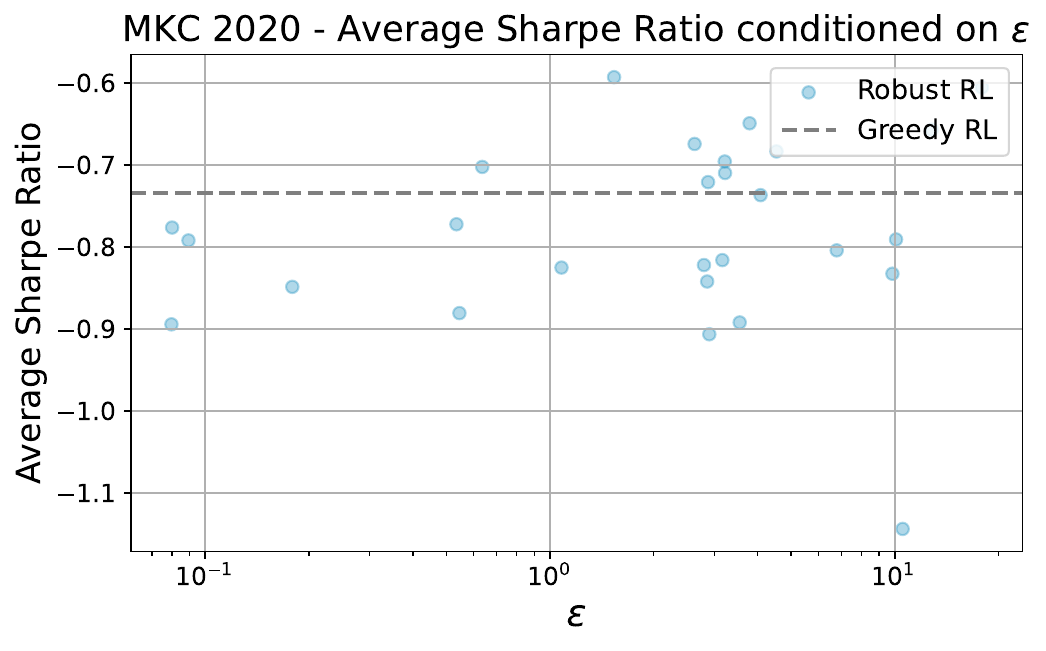}
            \caption{MKC, 2020.}
        \end{subfigure}%
        \begin{subfigure}{0.245\linewidth}
            \includegraphics[width=\linewidth]{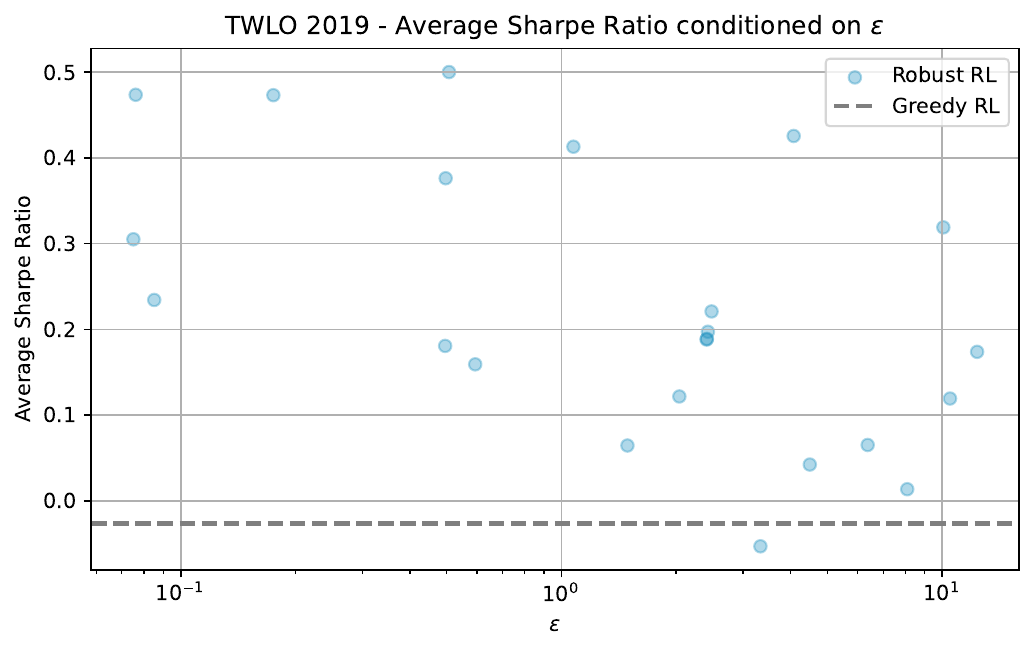}
            \caption{TWLO, 2019.}
        \end{subfigure}%
        \begin{subfigure}{0.245\linewidth}
            \includegraphics[width=\linewidth]{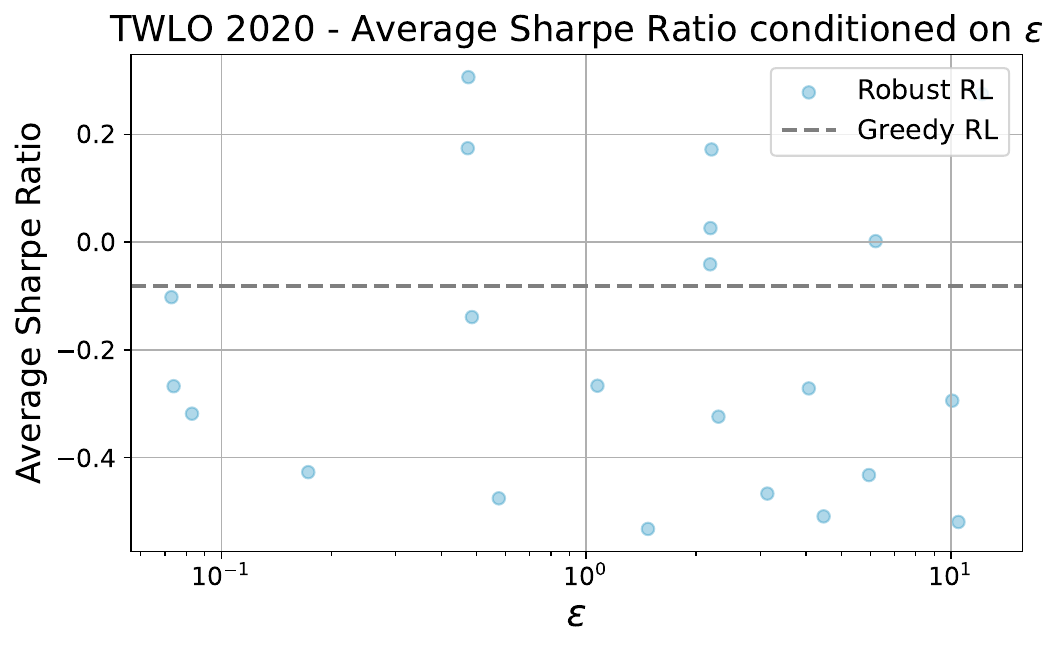}
            \caption{TWLO, 2020.}
        \end{subfigure}
    \end{minipage}}
    \caption{Average test Sharpe ratio of the robust agent conditional on the original Sinkhorn radius $\varepsilon$ (markers), with the greedy benchmark shown as a dashed line. Across panels, larger values of $\varepsilon$ often deliver comparable or higher Sharpe ratios, especially in 2020, although the relationship is not fully monotone across stocks.}
    \label{fig:sharpe_vs_eps_real}
\end{figure}

\subsection[Feature Importance via Shapley Values]{Feature Importance via Shapley Values}\label{sec:appendix_shapley}

To diagnose which state features drive the reinforcement learning agent's policy, we report the distribution of absolute Shapley values across the test period for each of the four quote components: bid spread, ask spread, bid quantity, and ask quantity in figures ~\ref{fig:shapley_aapl_2019}-\ref{fig:shapley_twlo_2020}. For illustration, we report these diagnostics for the best-Sharpe configuration on the test Pareto frontier in each stock-year panel, and we include the corresponding decomposition for the greedy agent.
Across stocks and periods, the quoted spreads and quantities are driven primarily by trade-flow features\textemdash the number and average size of buy and sell trades, the order-flow imbalance, and the time-to-maturity. Consistent with classical inventory-management intuition, the decomposition is asymmetric across sides: sell-trade features carry more weight on the bid quote, whereas buy-trade features dominate on the ask quote. The greedy and robust decompositions are also broadly similar, indicating that robustness mainly shifts the level of the quoting policy rather than the underlying set of state variables to which the agent is most responsive.

\begin{figure}[H]
    \centering
    \makebox[\textwidth][c]{\begin{minipage}{1.16\textwidth}\centering
        \begin{subfigure}{0.24\linewidth}
            \includegraphics[width=\linewidth]{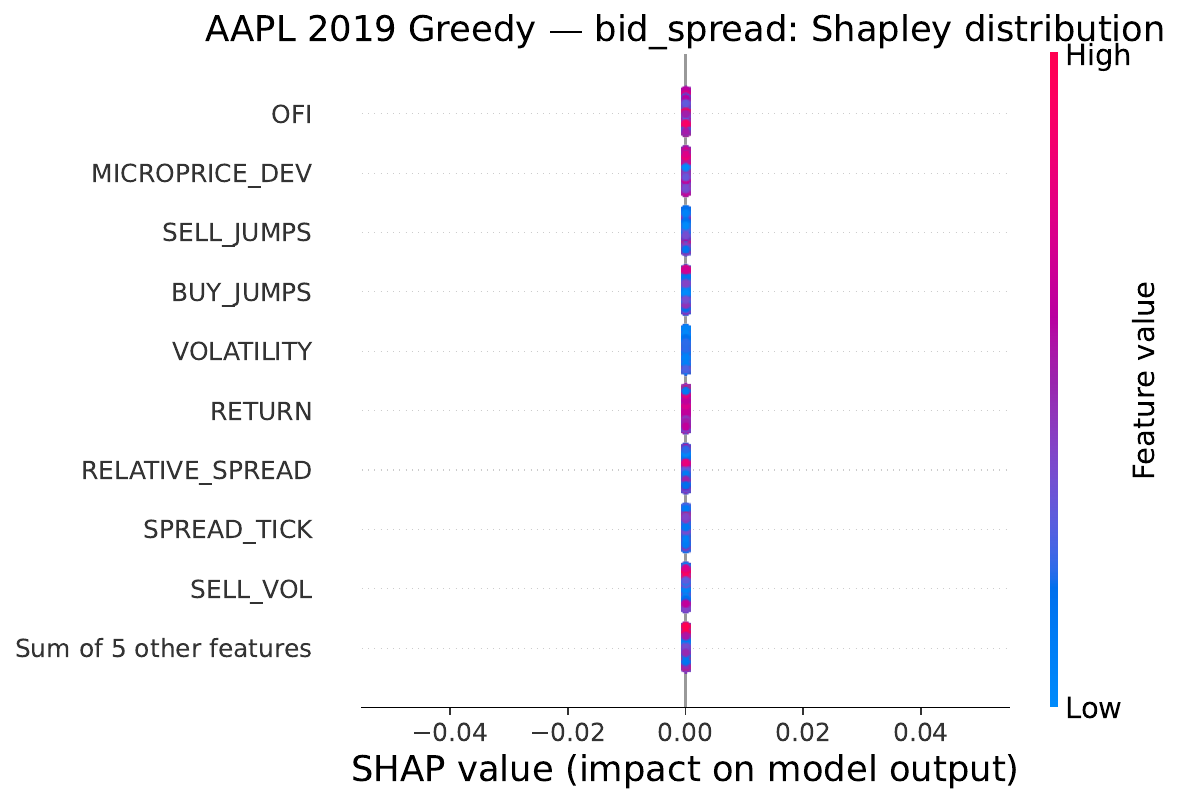}
            \caption{Greedy - Bid spread.}
        \end{subfigure}\hfill
        \begin{subfigure}{0.24\linewidth}
            \includegraphics[width=\linewidth]{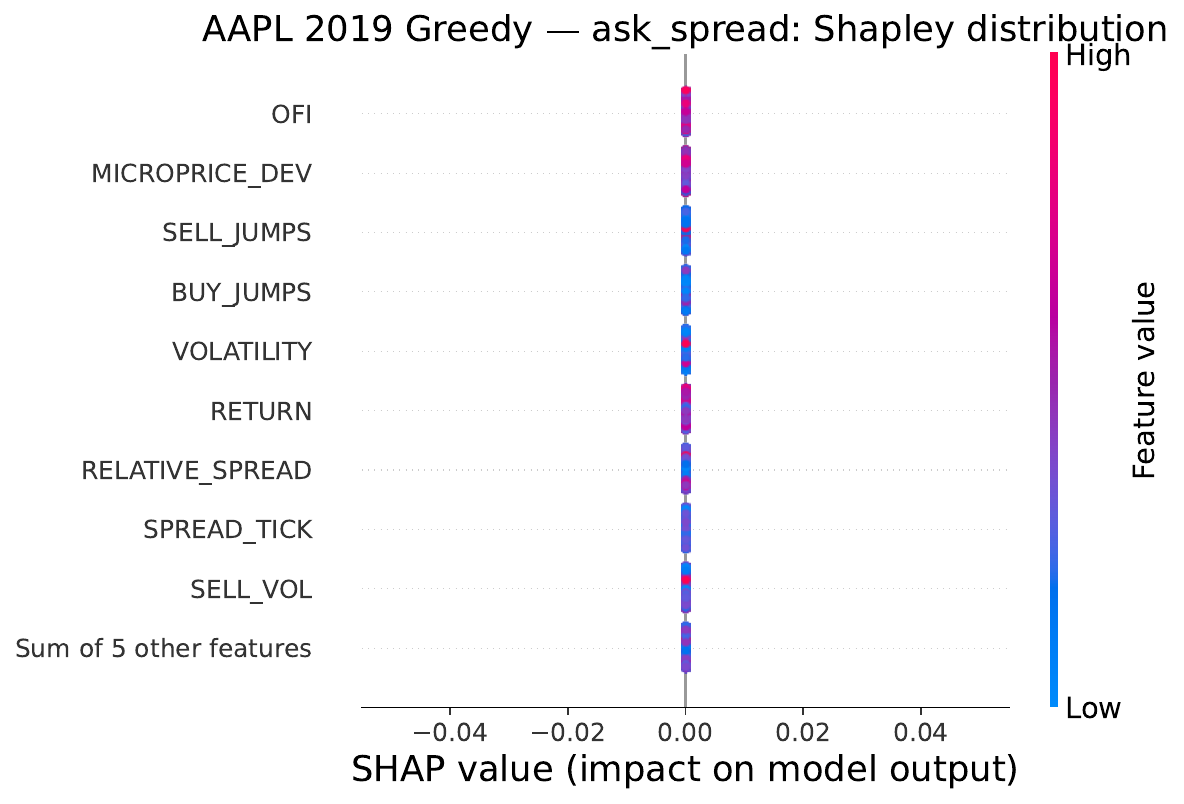}
            \caption{Greedy - Ask spread.}
        \end{subfigure}\hfill
        \begin{subfigure}{0.24\linewidth}
            \includegraphics[width=\linewidth]{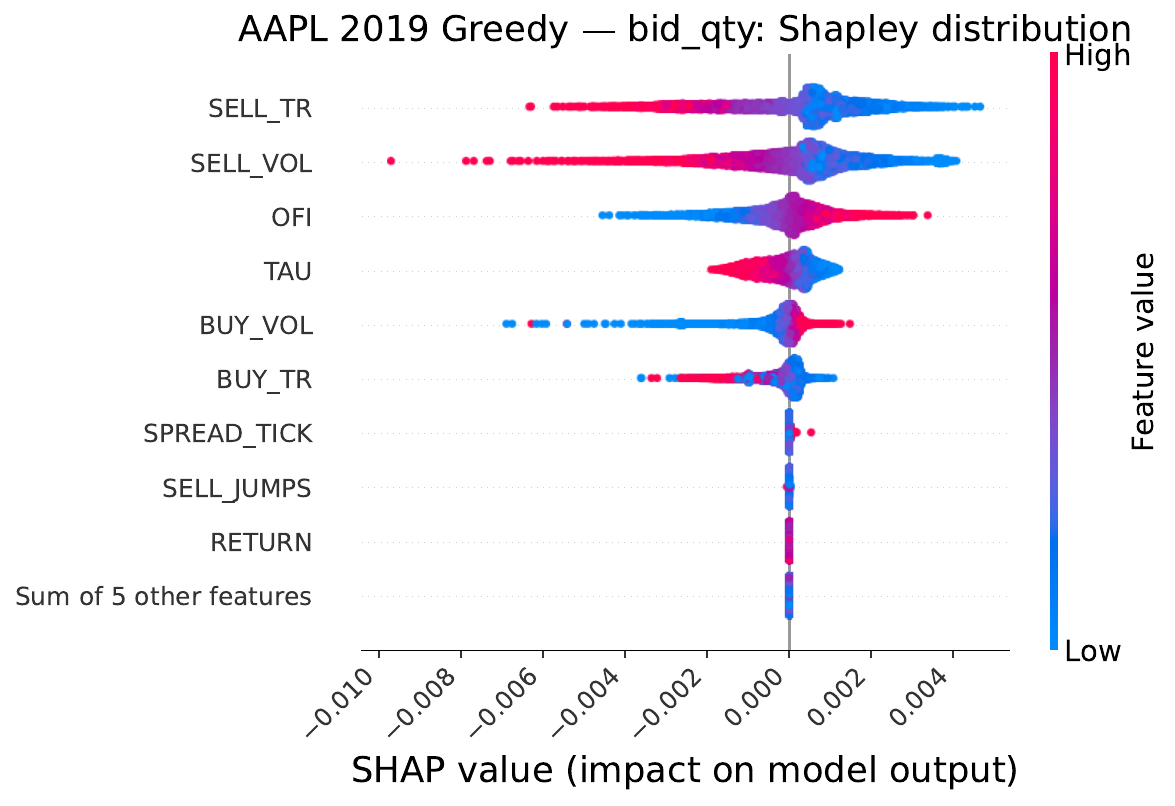}
            \caption{Greedy - Bid quantity.}
        \end{subfigure}\hfill
        \begin{subfigure}{0.24\linewidth}
            \includegraphics[width=\linewidth]{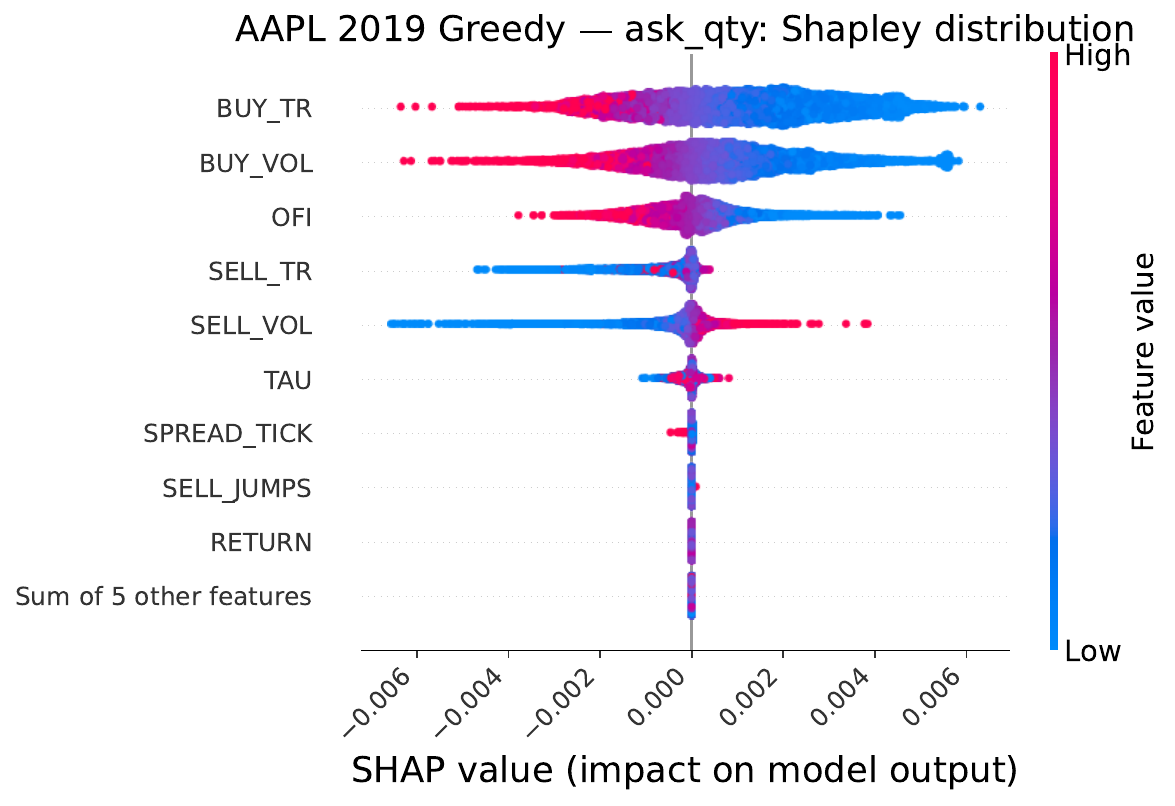}
            \caption{Greedy - Ask quantity.}
        \end{subfigure}
\vspace{0.6em}
        \begin{subfigure}{0.24\linewidth}
            \includegraphics[width=\linewidth]{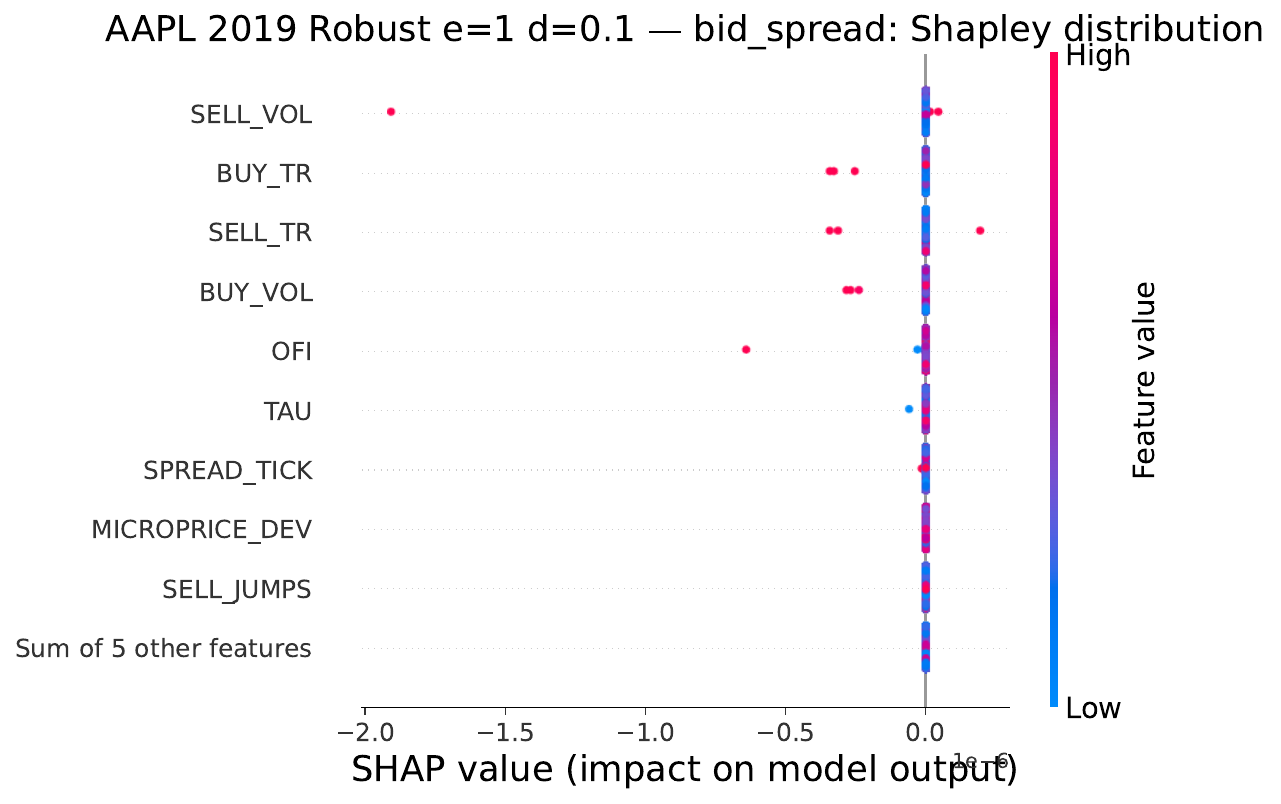}
            \caption{Robust - Bid spread.}
        \end{subfigure}\hfill
        \begin{subfigure}{0.24\linewidth}
            \includegraphics[width=\linewidth]{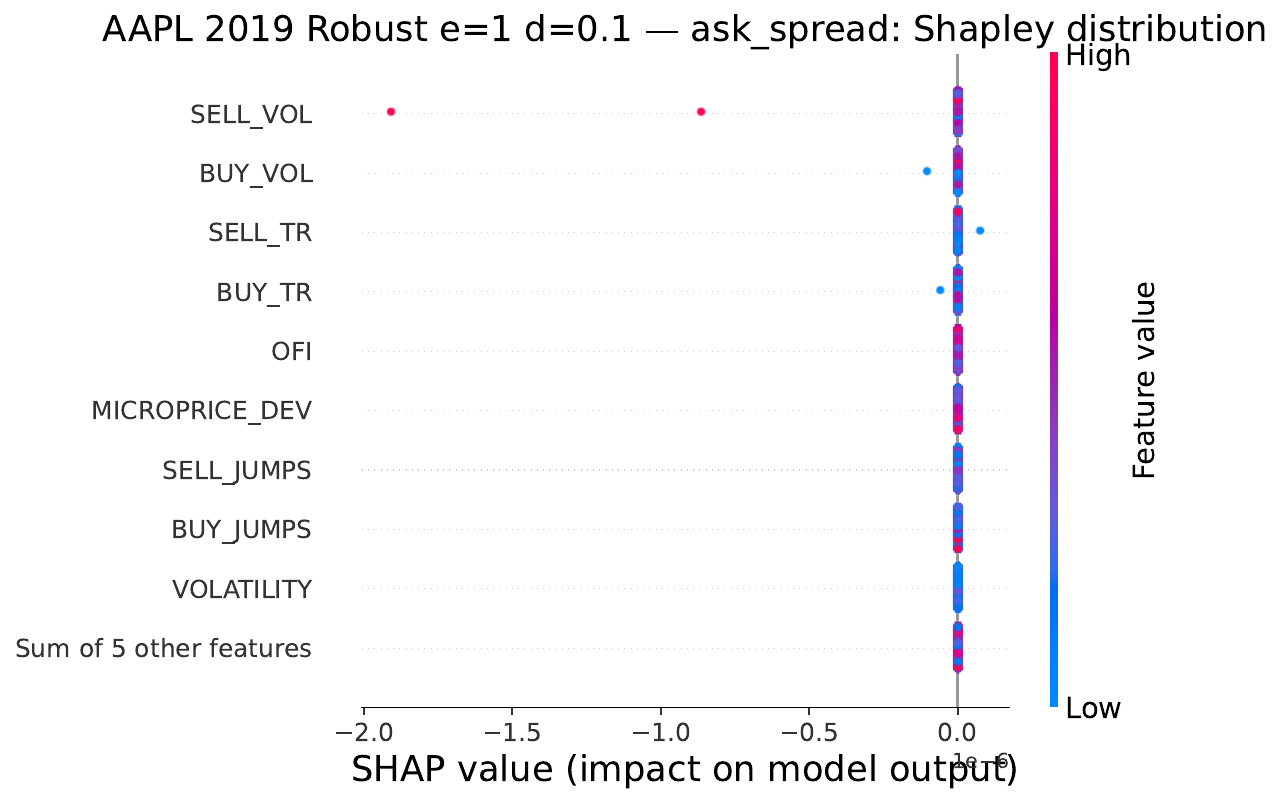}
            \caption{Robust - Ask spread.}
        \end{subfigure}\hfill
        \begin{subfigure}{0.24\linewidth}
            \includegraphics[width=\linewidth]{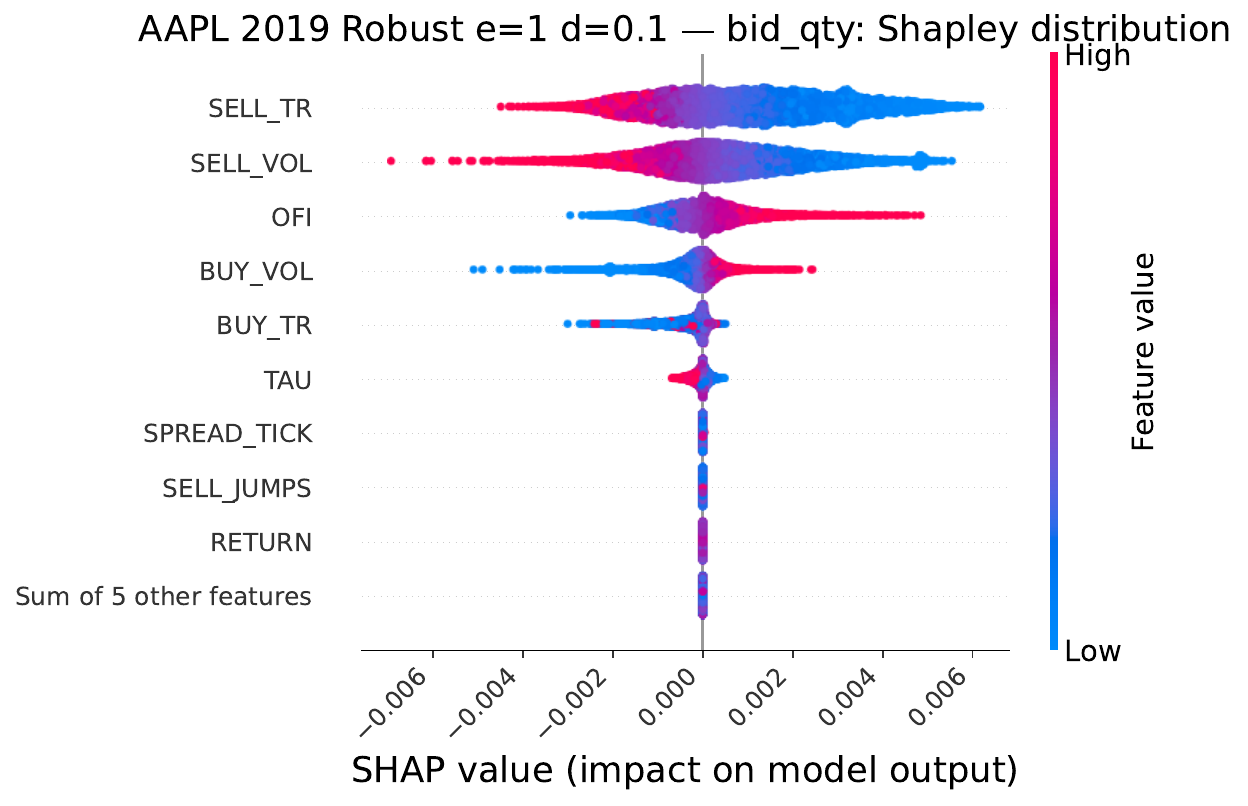}
            \caption{Robust - Bid quantity.}
        \end{subfigure}\hfill
        \begin{subfigure}{0.24\linewidth}
            \includegraphics[width=\linewidth]{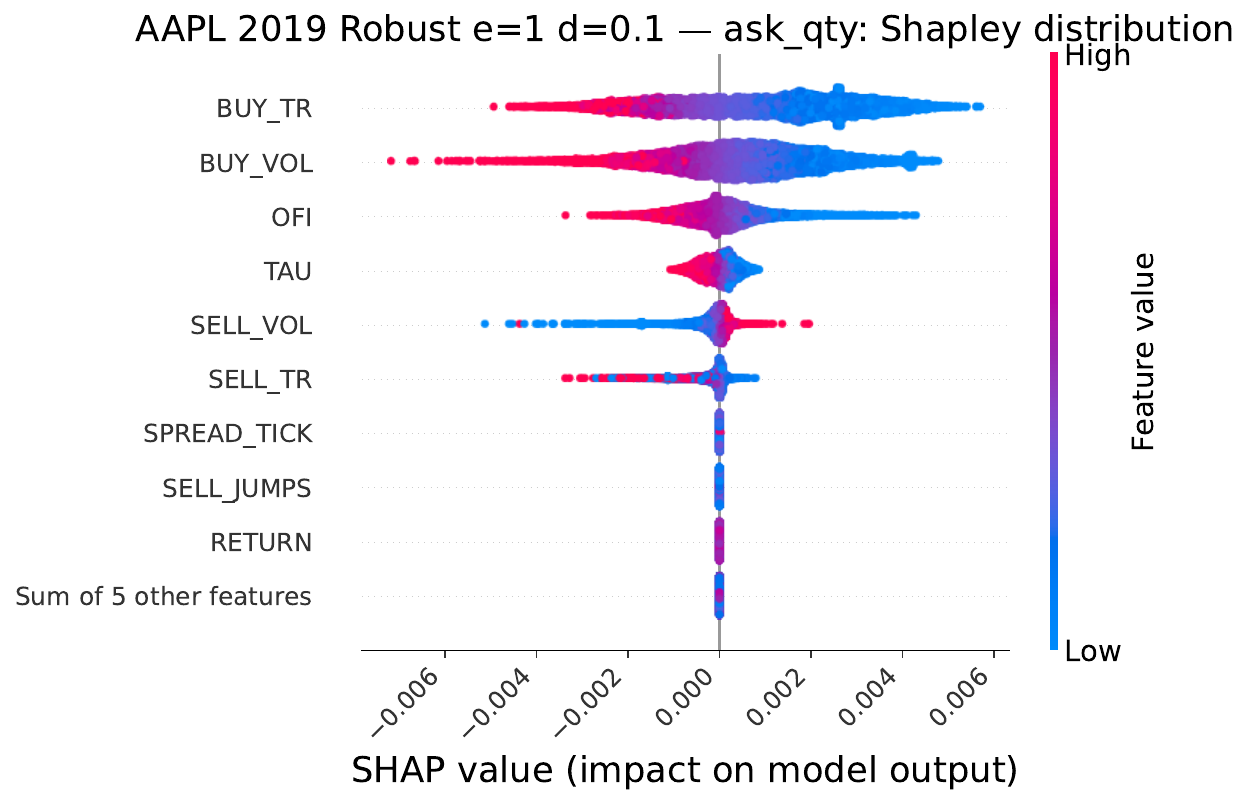}
            \caption{Robust - Ask quantity.}
        \end{subfigure}
    \end{minipage}}
    \caption{Distribution of absolute Shapley values for greedy and robust policies with $\bar{\varepsilon} = 1$ and $\delta = 0.1$ on AAPL, 2019, where the robust policy is the best test Sharpe Pareto configuration. The figure shows that both policies are driven primarily by trade-flow and timing variables, with robustness changing the relative weight of these drivers only modestly.}
    \label{fig:shapley_aapl_2019}
\end{figure}

\begin{figure}[H]
    \centering
    \makebox[\textwidth][c]{\begin{minipage}{1.16\textwidth}\centering
        \begin{subfigure}{0.24\linewidth}
            \includegraphics[width=\linewidth]{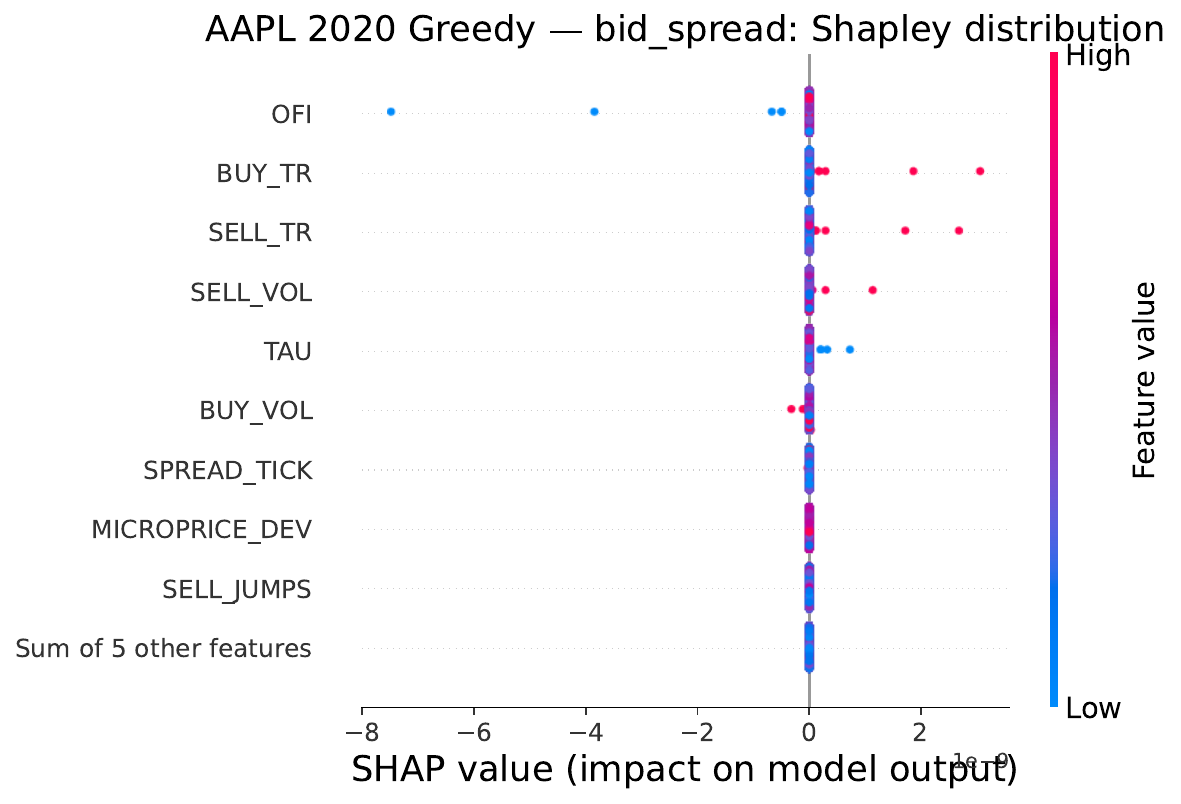}
            \caption{Greedy - Bid spread.}
        \end{subfigure}\hfill
        \begin{subfigure}{0.24\linewidth}
            \includegraphics[width=\linewidth]{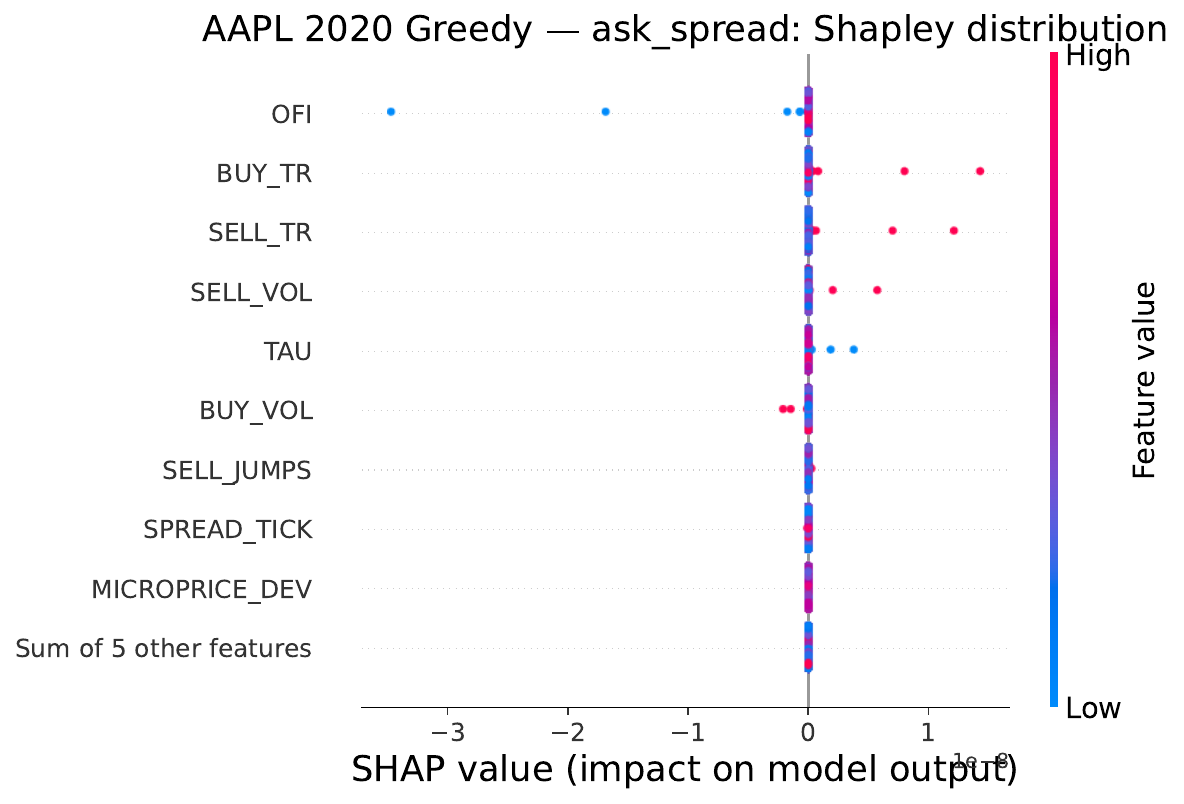}
            \caption{Greedy - Ask spread.}
        \end{subfigure}\hfill
        \begin{subfigure}{0.24\linewidth}
            \includegraphics[width=\linewidth]{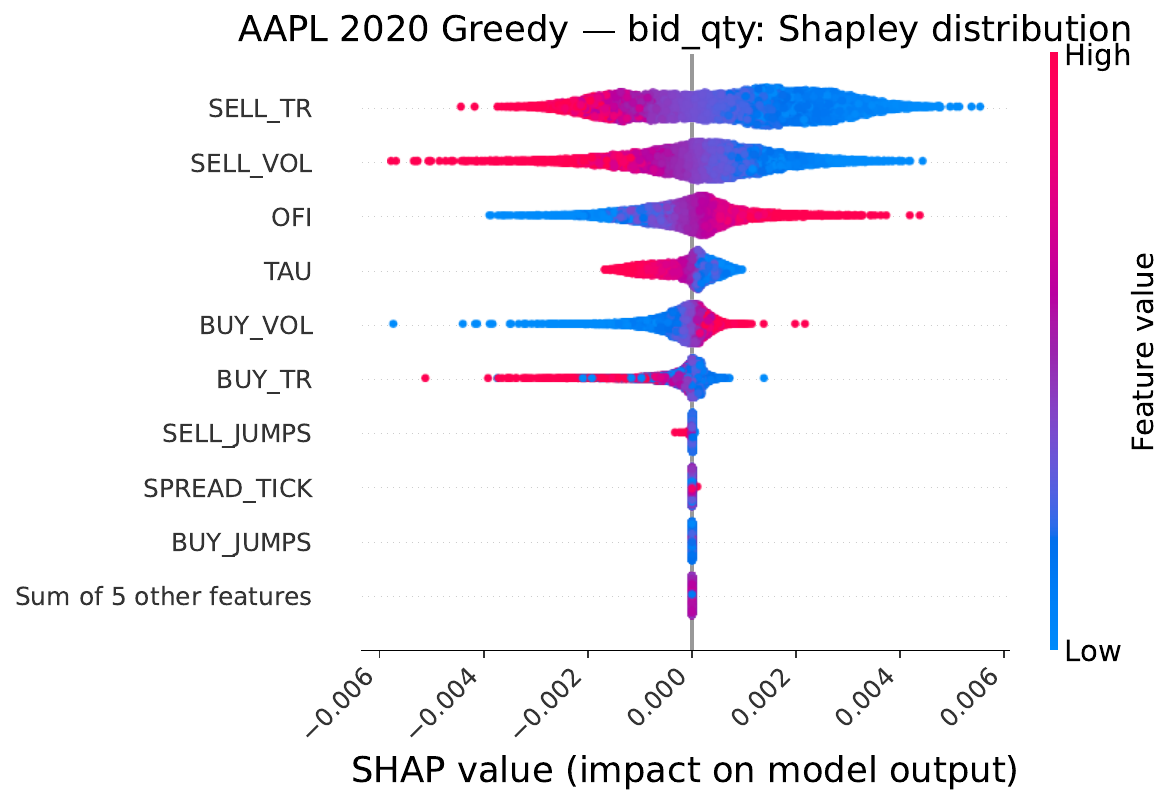}
            \caption{Greedy - Bid quantity.}
        \end{subfigure}\hfill
        \begin{subfigure}{0.24\linewidth}
            \includegraphics[width=\linewidth]{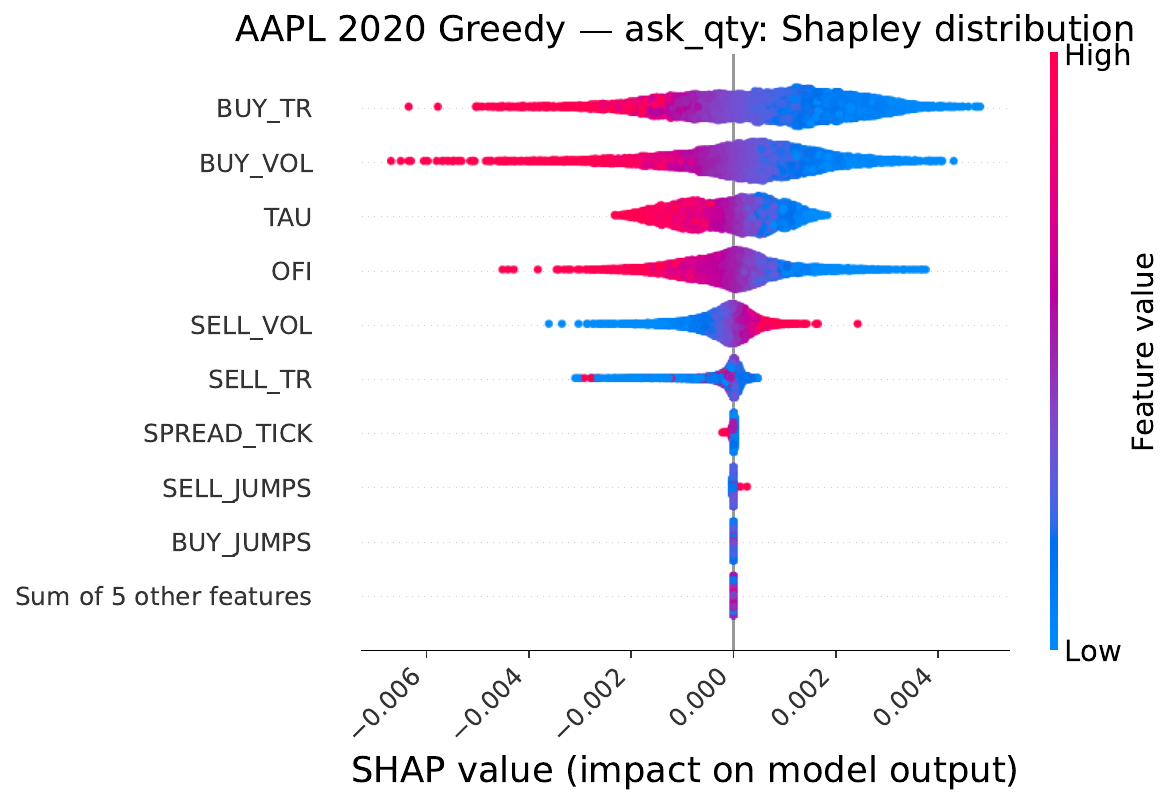}
            \caption{Greedy - Ask quantity.}
        \end{subfigure}
\vspace{0.6em}
        \begin{subfigure}{0.24\linewidth}
            \includegraphics[width=\linewidth]{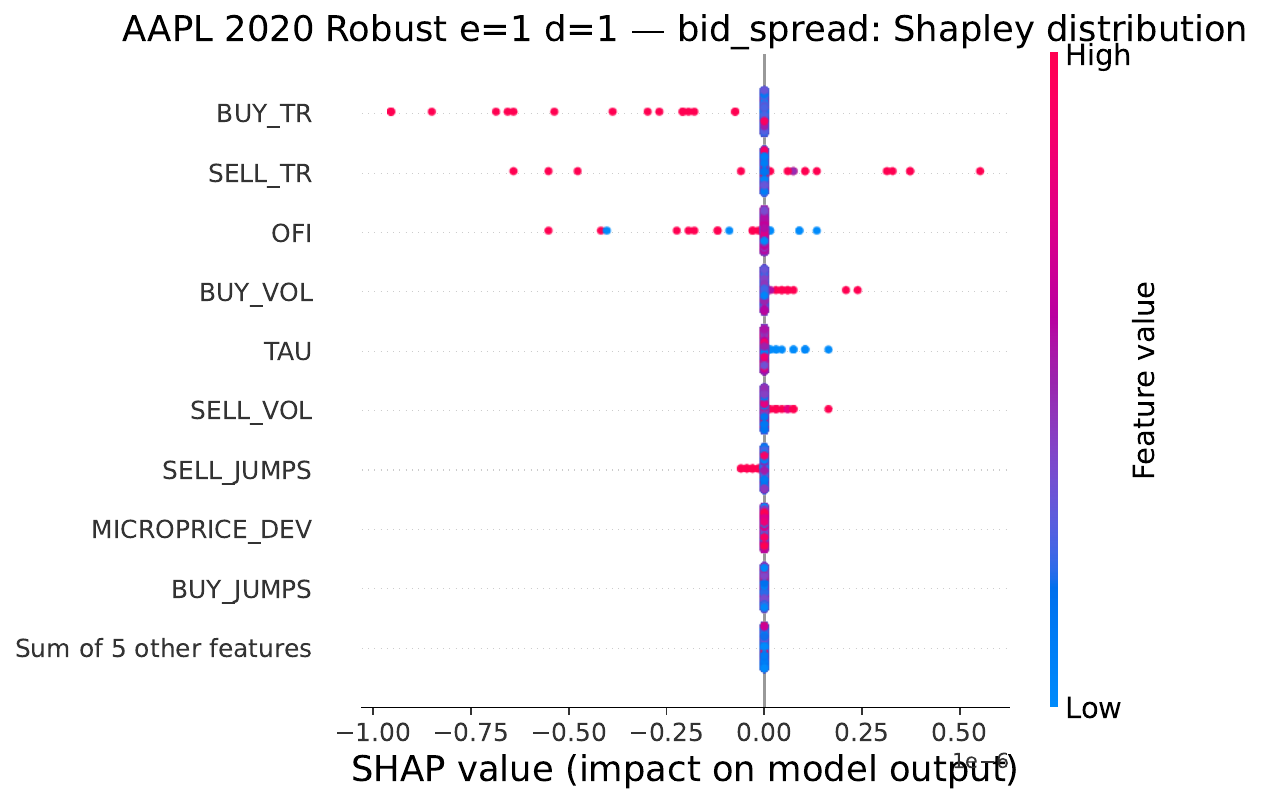}
            \caption{Robust - Bid spread.}
        \end{subfigure}\hfill
        \begin{subfigure}{0.24\linewidth}
            \includegraphics[width=\linewidth]{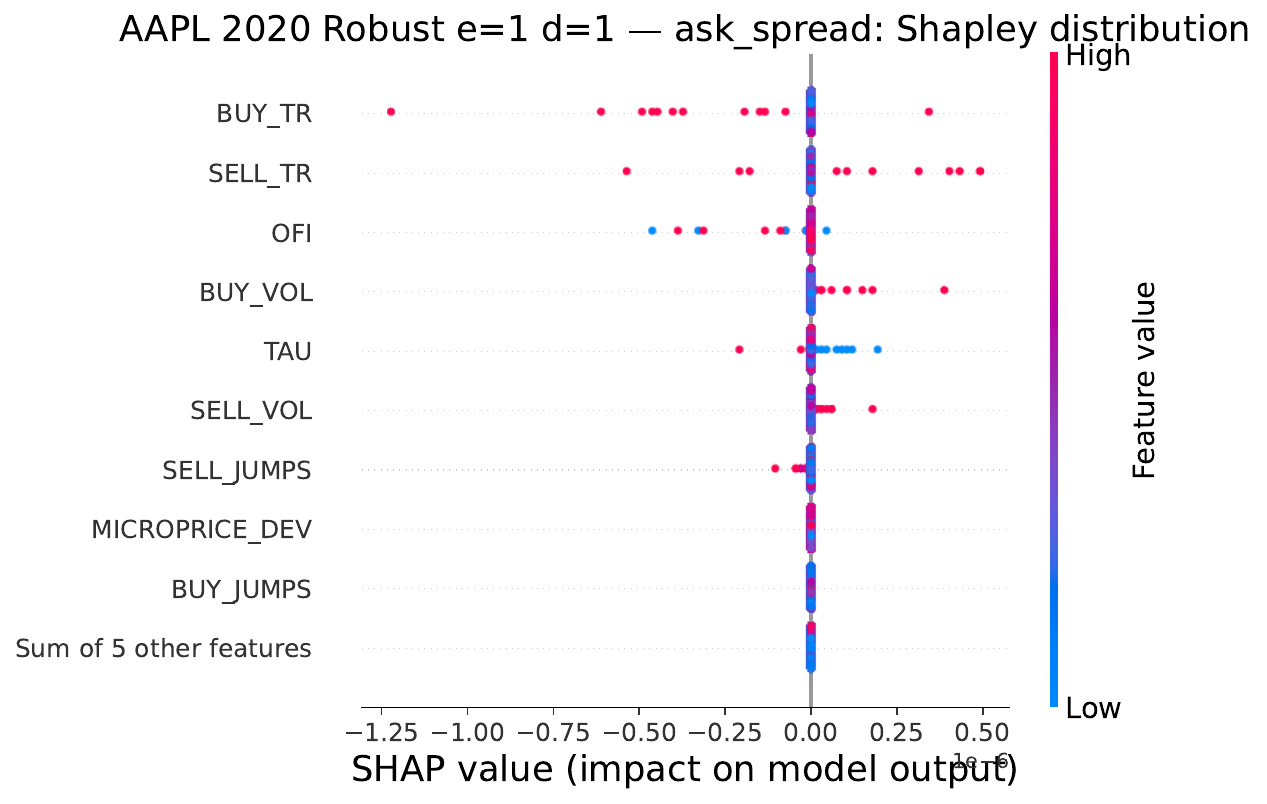}
            \caption{Robust - Ask spread.}
        \end{subfigure}\hfill
        \begin{subfigure}{0.24\linewidth}
            \includegraphics[width=\linewidth]{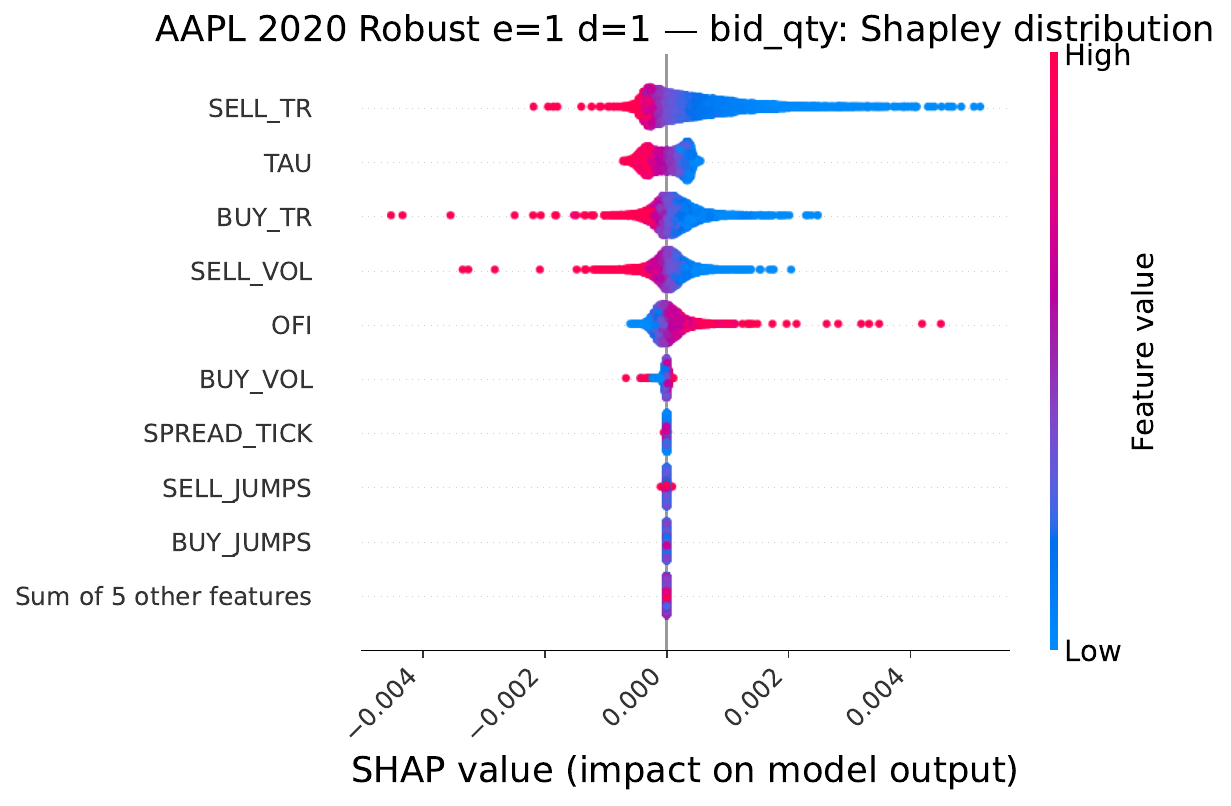}
            \caption{Robust - Bid quantity.}
        \end{subfigure}\hfill
        \begin{subfigure}{0.24\linewidth}
            \includegraphics[width=\linewidth]{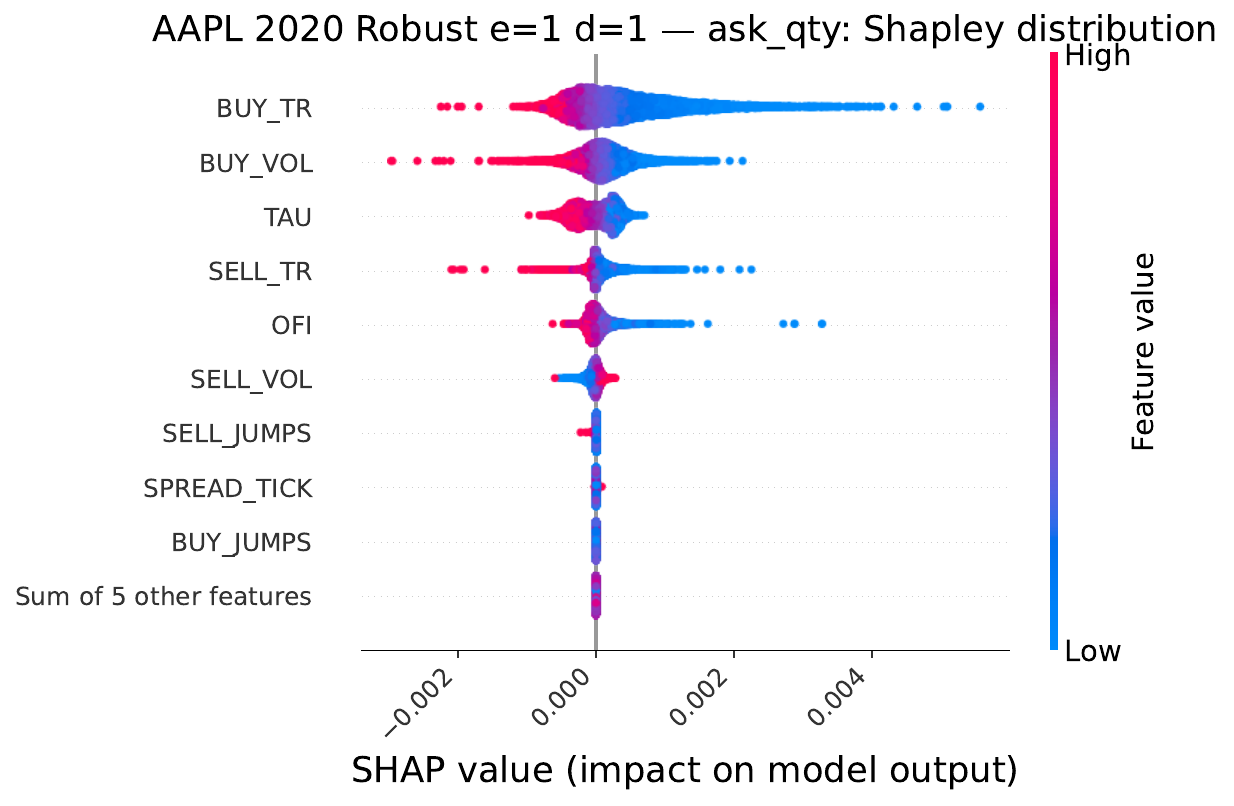}
            \caption{Robust - Ask quantity.}
        \end{subfigure}
    \end{minipage}}
    \caption{Distribution of absolute Shapley values for greedy and robust policies with $\bar{\varepsilon} = 1$ and $\delta = 1$ on AAPL, 2020, where the robust policy is the best test Sharpe Pareto configuration. The figure shows that both policies are driven primarily by trade-flow and timing variables, with robustness changing the relative weight of these drivers more than in 2019.}
    \label{fig:shapley_aapl_1920}
\end{figure}

\begin{figure}[H]
    \centering
    \makebox[\textwidth][c]{\begin{minipage}{1.16\textwidth}\centering
        \begin{subfigure}{0.24\linewidth}
            \includegraphics[width=\linewidth]{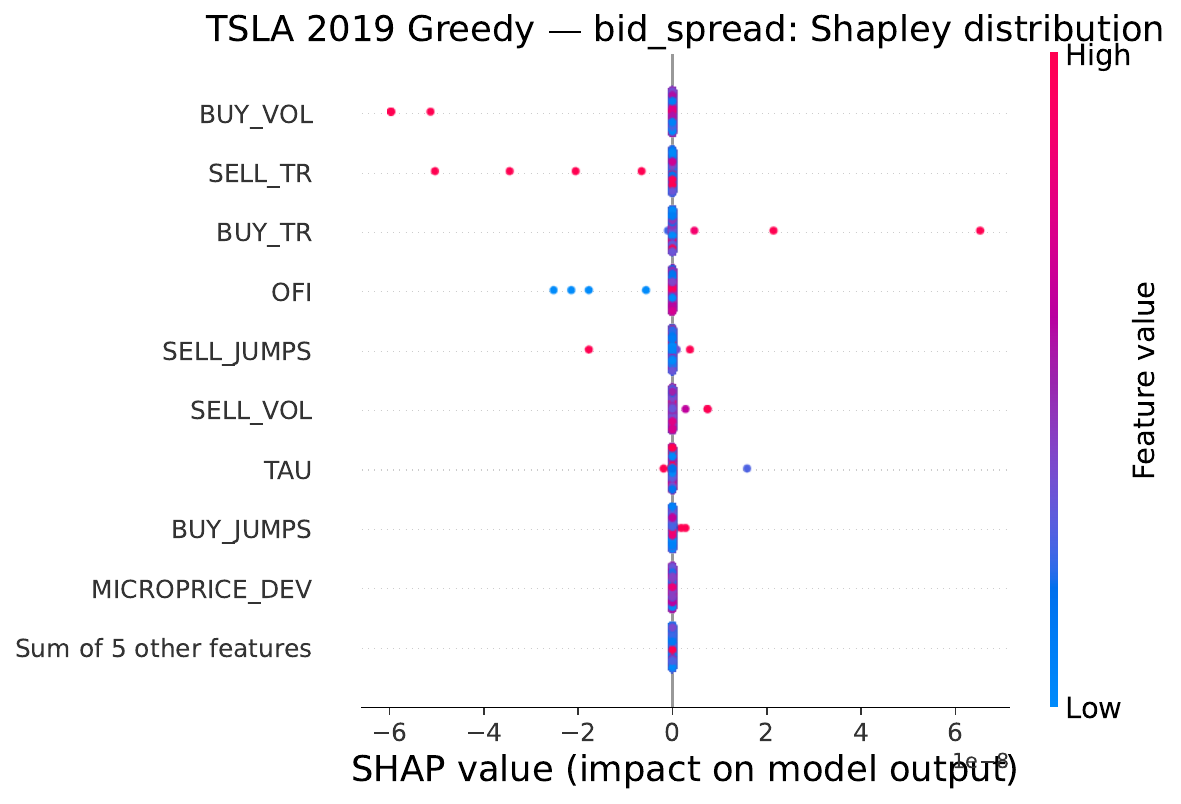}
            \caption{Greedy - Bid spread.}
        \end{subfigure}\hfill
        \begin{subfigure}{0.24\linewidth}
            \includegraphics[width=\linewidth]{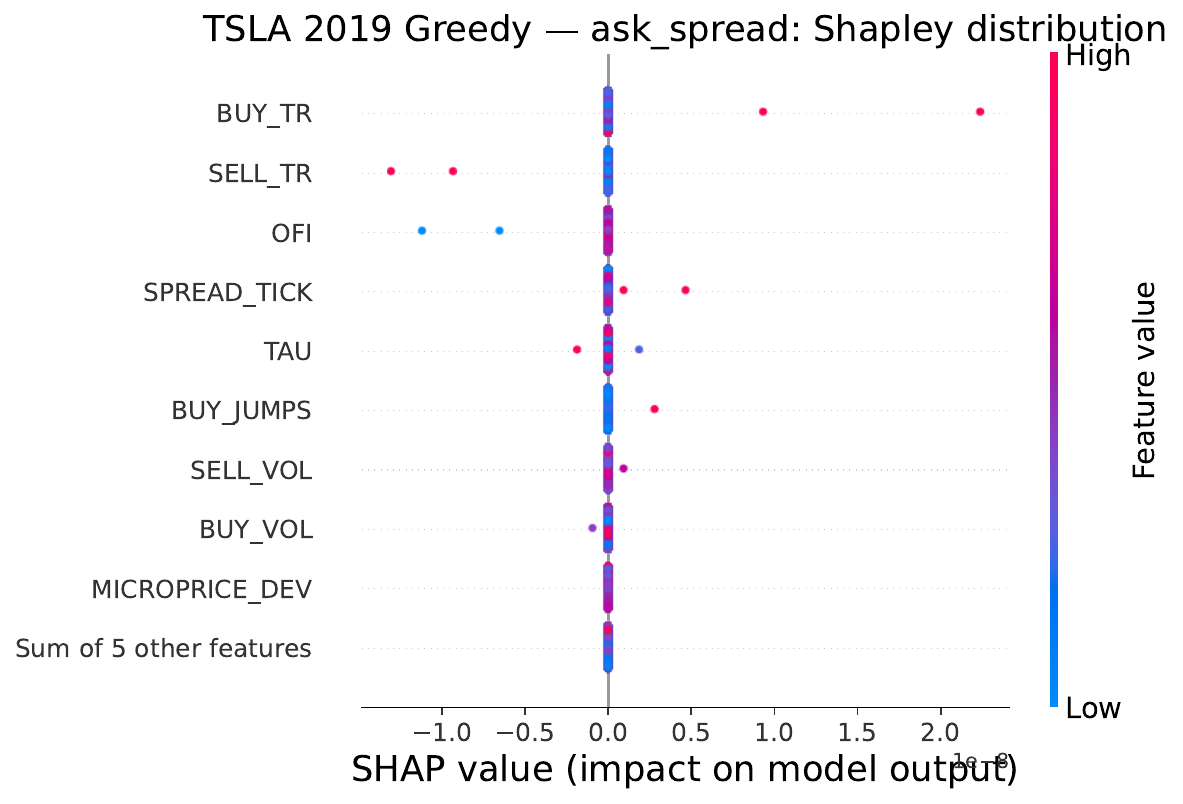}
            \caption{Greedy - Ask spread.}
        \end{subfigure}\hfill
        \begin{subfigure}{0.24\linewidth}
            \includegraphics[width=\linewidth]{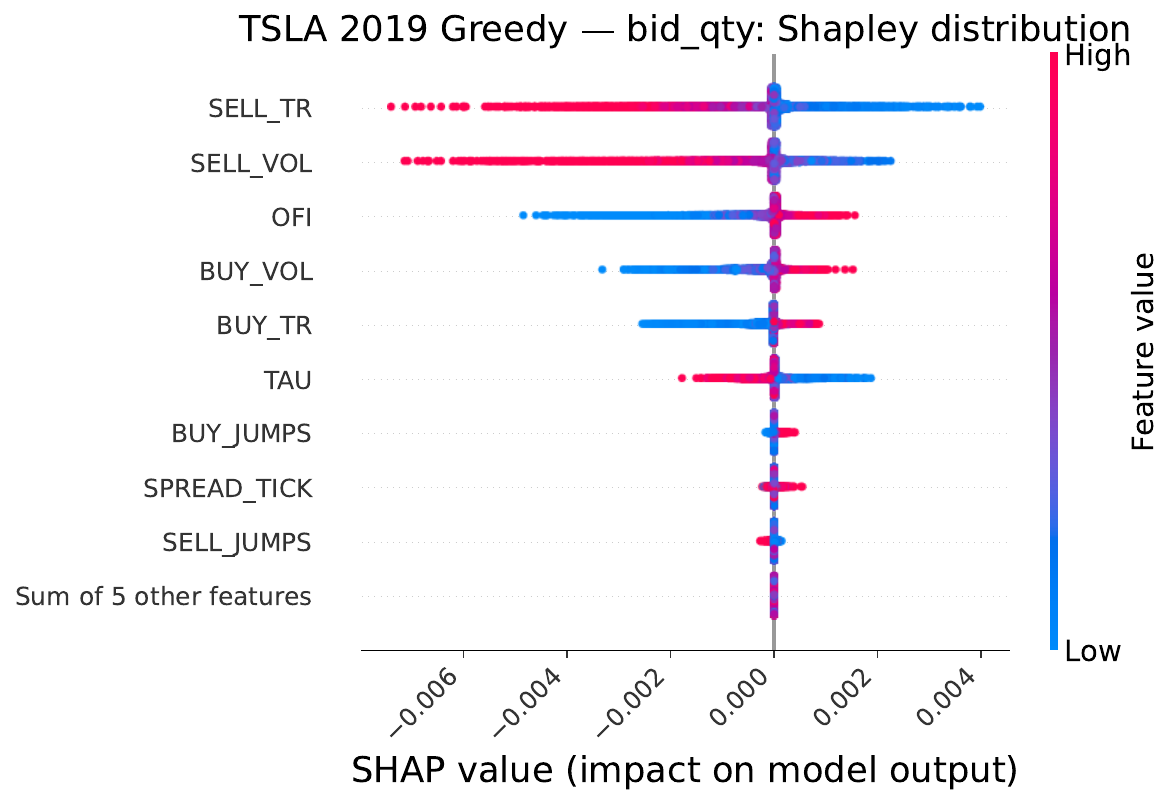}
            \caption{Greedy - Bid quantity.}
        \end{subfigure}\hfill
        \begin{subfigure}{0.24\linewidth}
            \includegraphics[width=\linewidth]{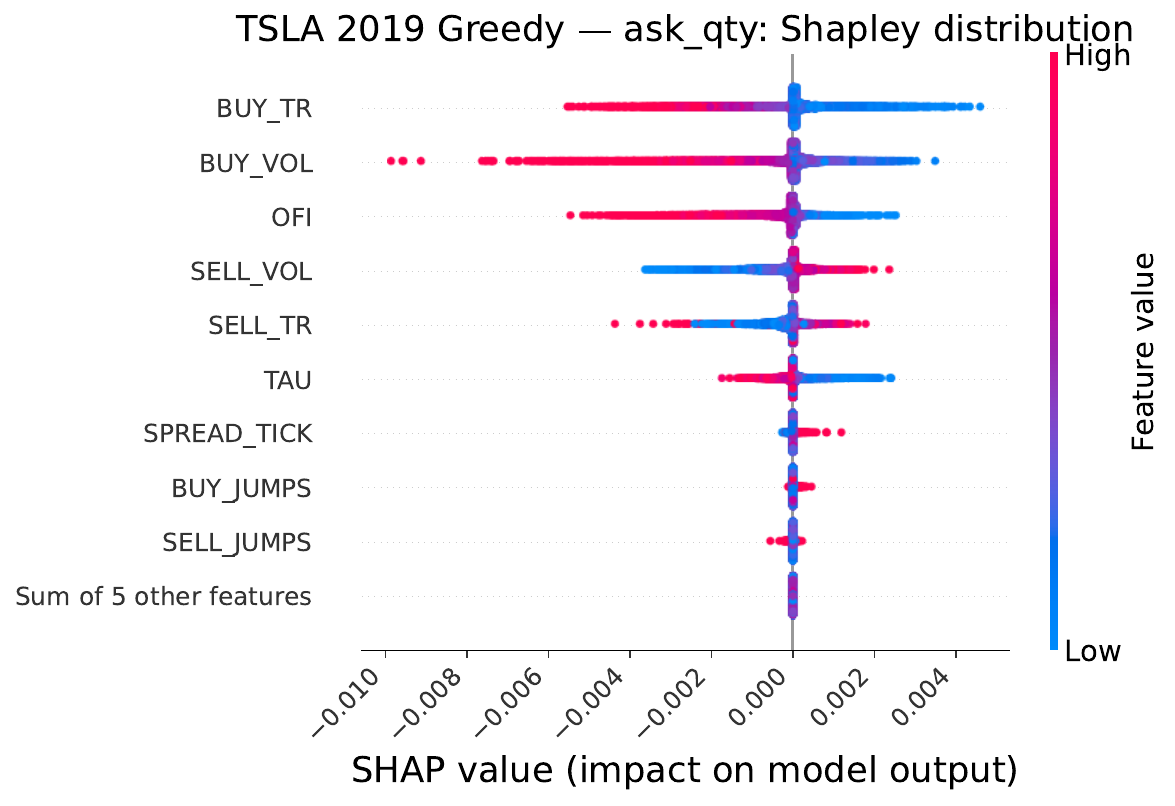}
            \caption{Greedy - Ask quantity.}
        \end{subfigure}
\vspace{0.6em}
        \begin{subfigure}{0.24\linewidth}
            \includegraphics[width=\linewidth]{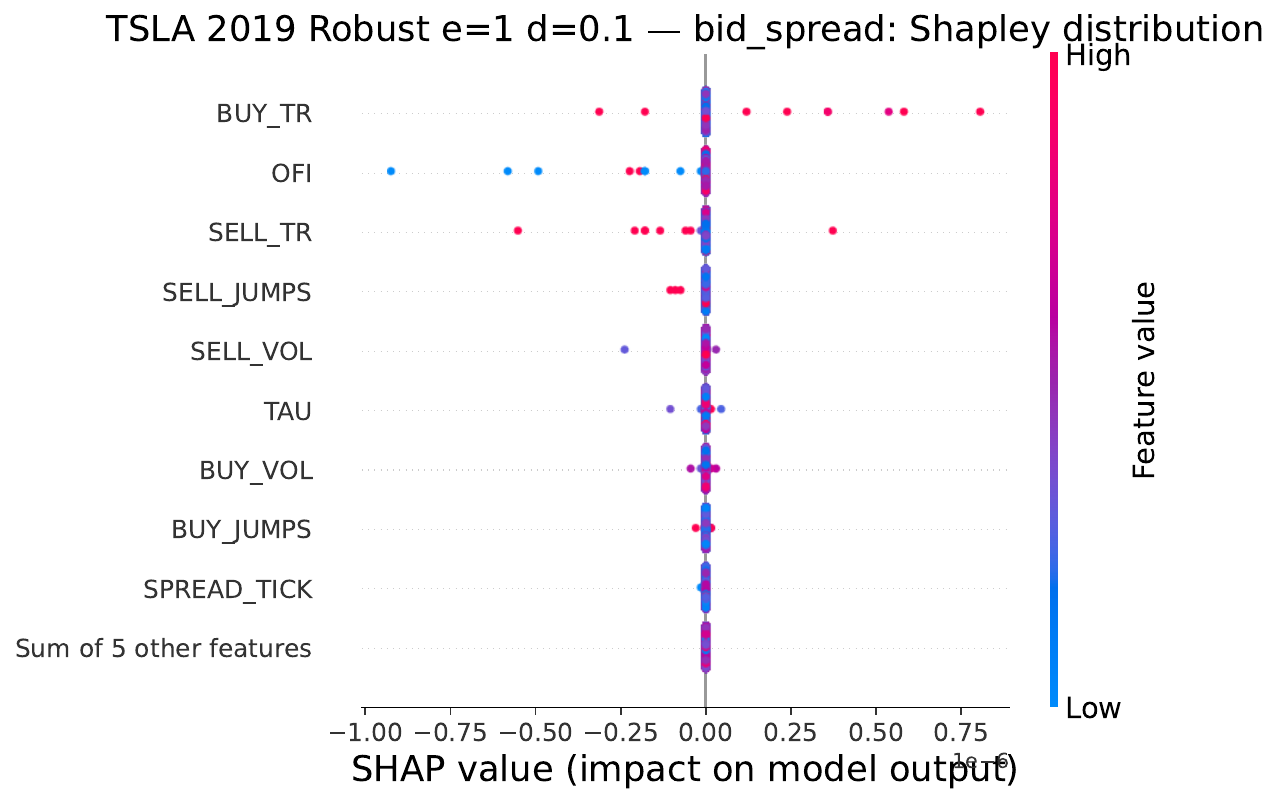}
            \caption{Robust - Bid spread.}
        \end{subfigure}\hfill
        \begin{subfigure}{0.24\linewidth}
            \includegraphics[width=\linewidth]{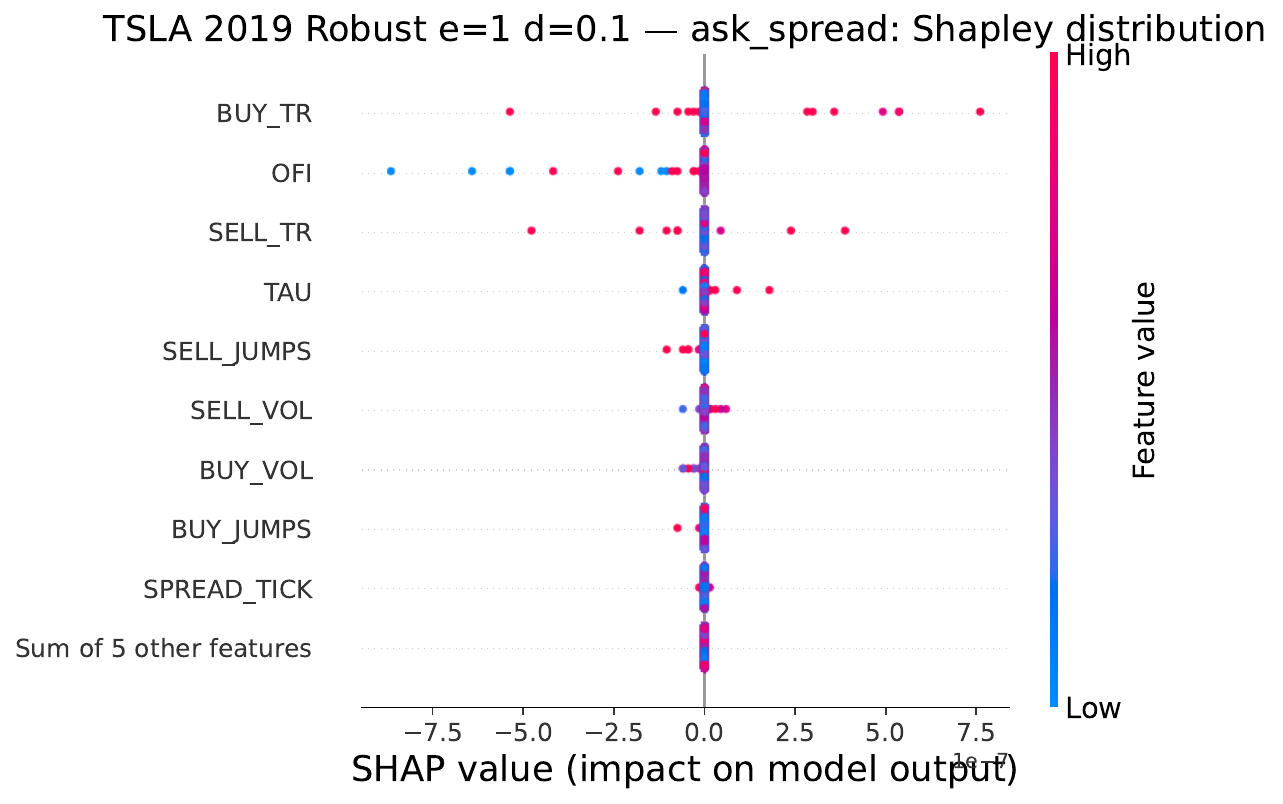}
            \caption{Robust - Ask spread.}
        \end{subfigure}\hfill
        \begin{subfigure}{0.24\linewidth}
            \includegraphics[width=\linewidth]{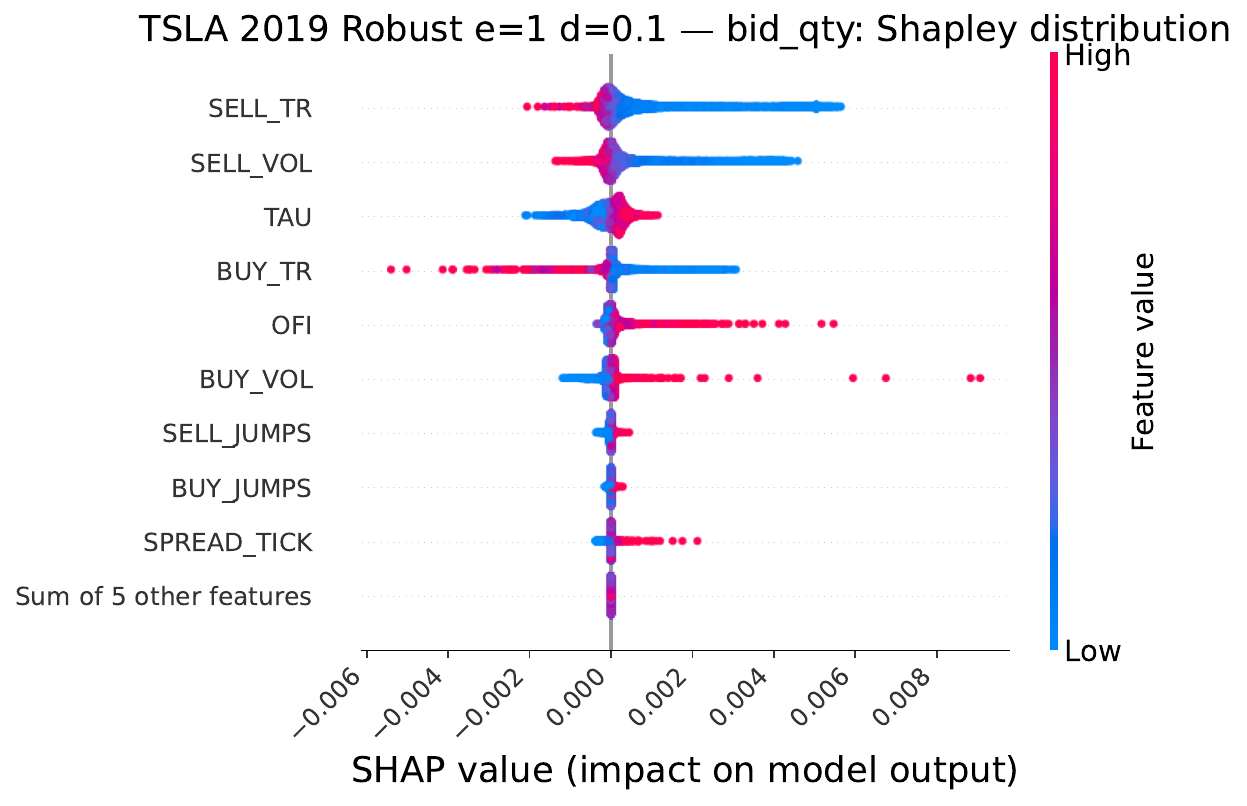}
            \caption{Robust - Bid quantity.}
        \end{subfigure}\hfill
        \begin{subfigure}{0.24\linewidth}
            \includegraphics[width=\linewidth]{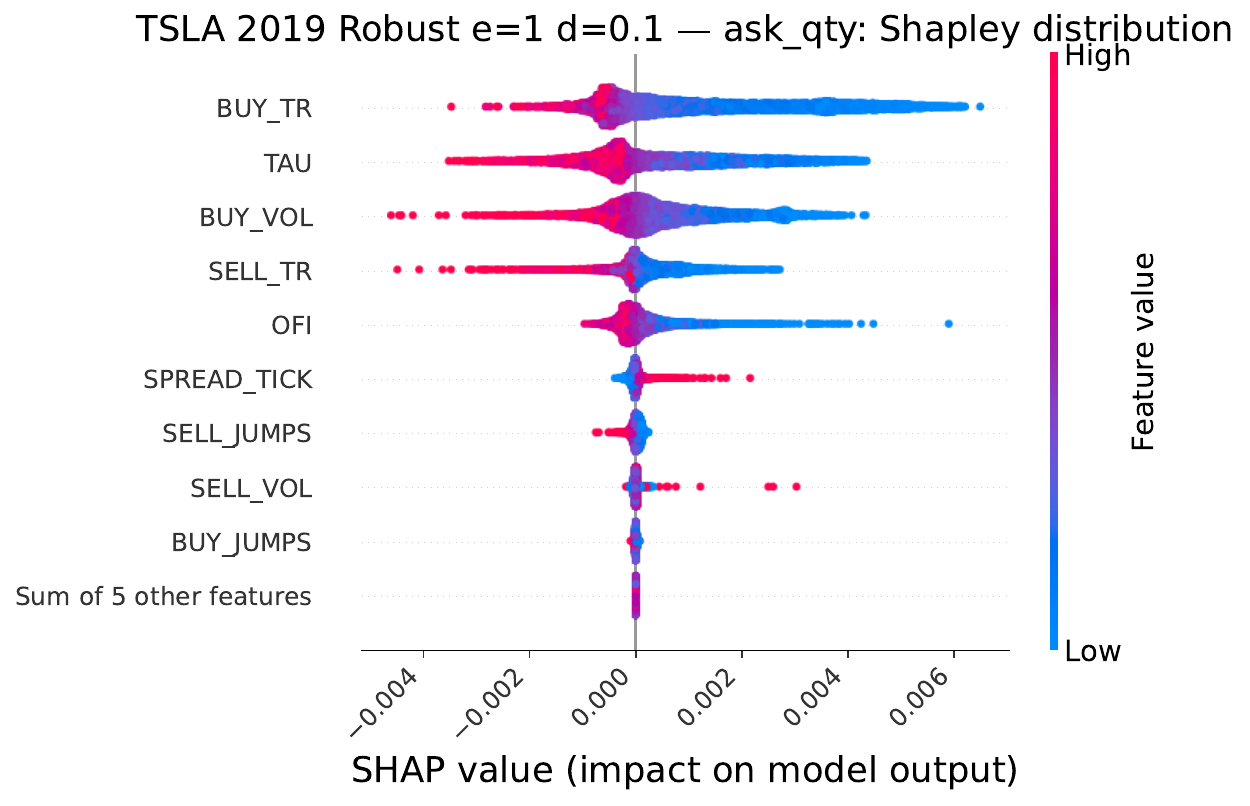}
            \caption{Robust - Ask quantity.}
        \end{subfigure}
    \end{minipage}}
    \caption{Distribution of absolute Shapley values for greedy and robust policies with $\bar{\varepsilon} = 1$ and $\delta = 0.1$ on TSLA, 2019, where the robust policy is the best test Sharpe Pareto configuration. As in AAPL, the dominant drivers are trade-flow and timing variables, but the spread components play a somewhat more visible role in TSLA's more volatile environment.}
    \label{fig:shapley_tsla_2019}
\end{figure}

\begin{figure}[H]
    \centering
    \makebox[\textwidth][c]{\begin{minipage}{1.16\textwidth}\centering
        \begin{subfigure}{0.24\linewidth}
            \includegraphics[width=\linewidth]{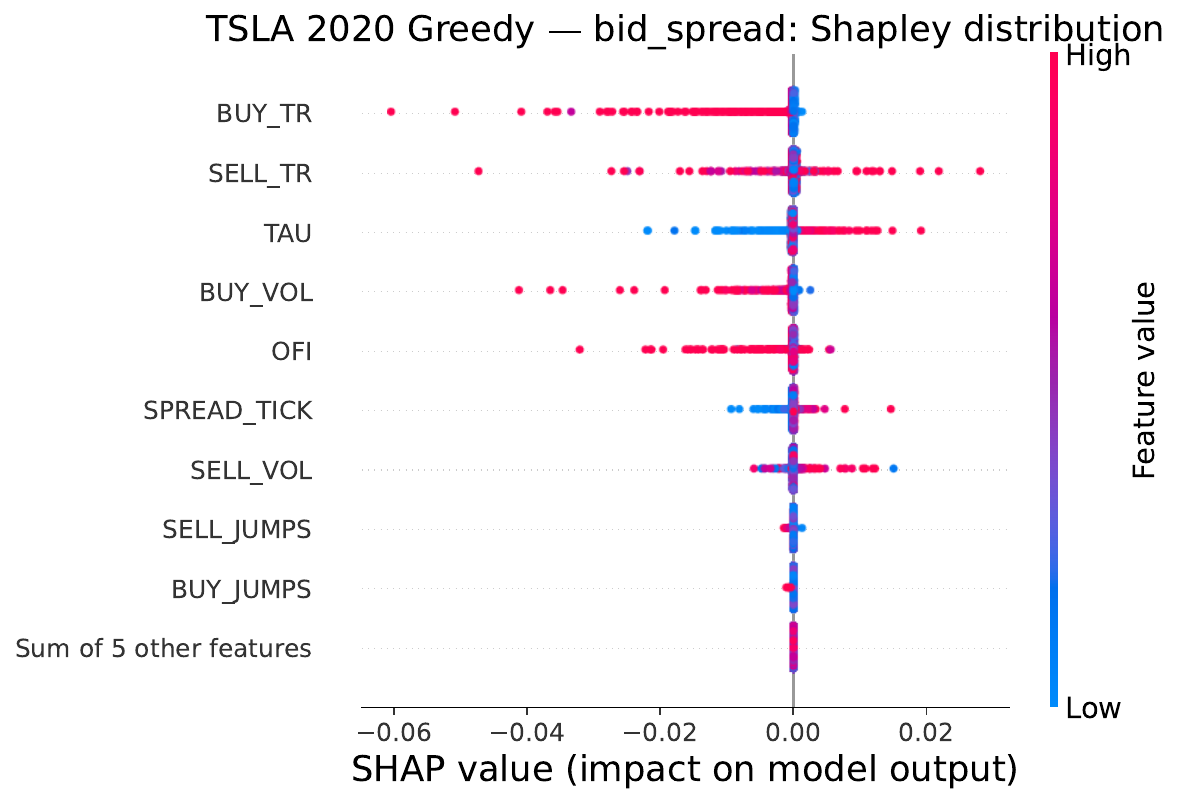}
            \caption{Greedy - Bid spread.}
        \end{subfigure}\hfill
        \begin{subfigure}{0.24\linewidth}
            \includegraphics[width=\linewidth]{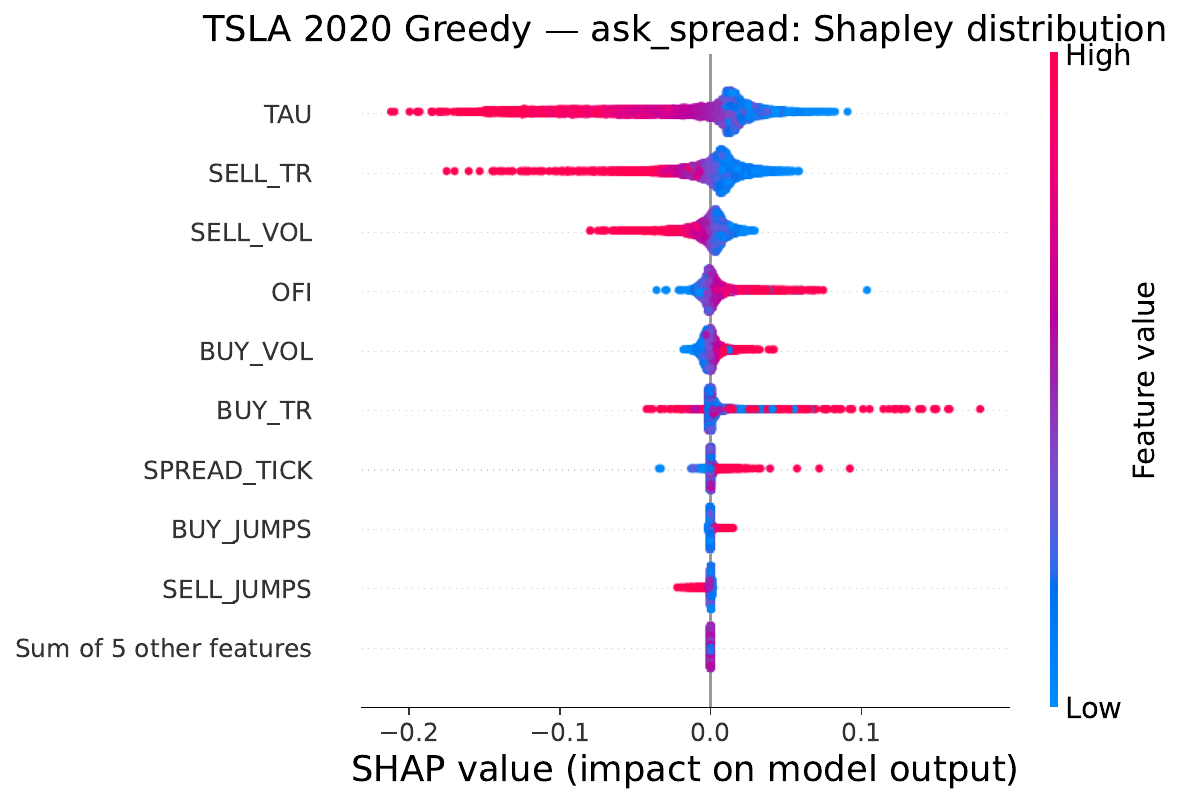}
            \caption{Greedy - Ask spread.}
        \end{subfigure}\hfill
        \begin{subfigure}{0.24\linewidth}
            \includegraphics[width=\linewidth]{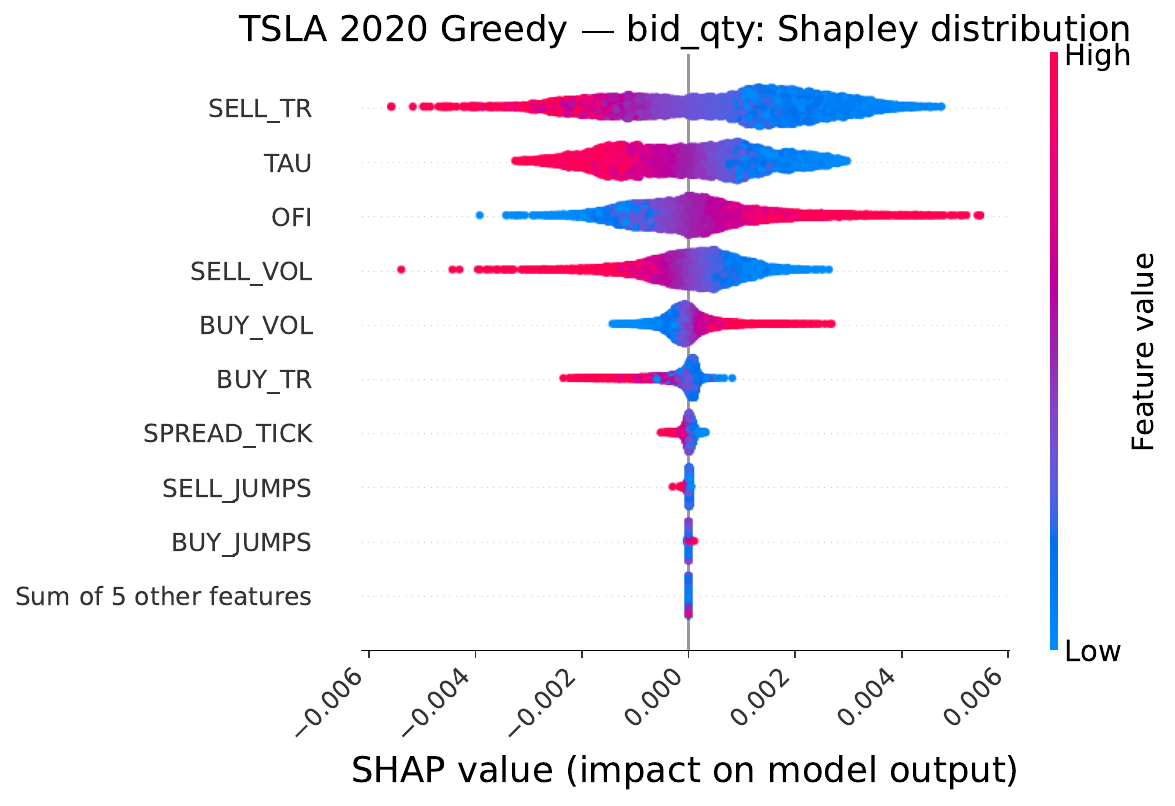}
            \caption{Greedy - Bid quantity.}
        \end{subfigure}\hfill
        \begin{subfigure}{0.24\linewidth}
            \includegraphics[width=\linewidth]{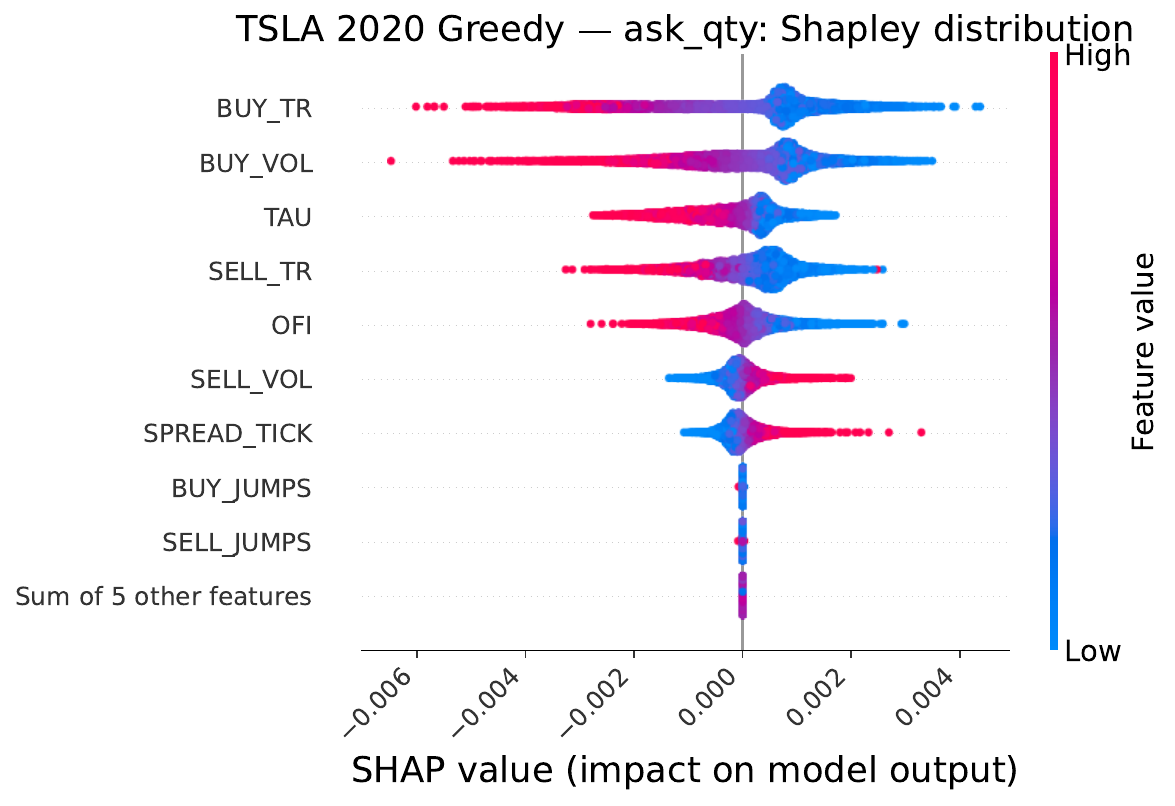}
            \caption{Greedy - Ask quantity.}
        \end{subfigure}
\vspace{0.6em}
        \begin{subfigure}{0.24\linewidth}
            \includegraphics[width=\linewidth]{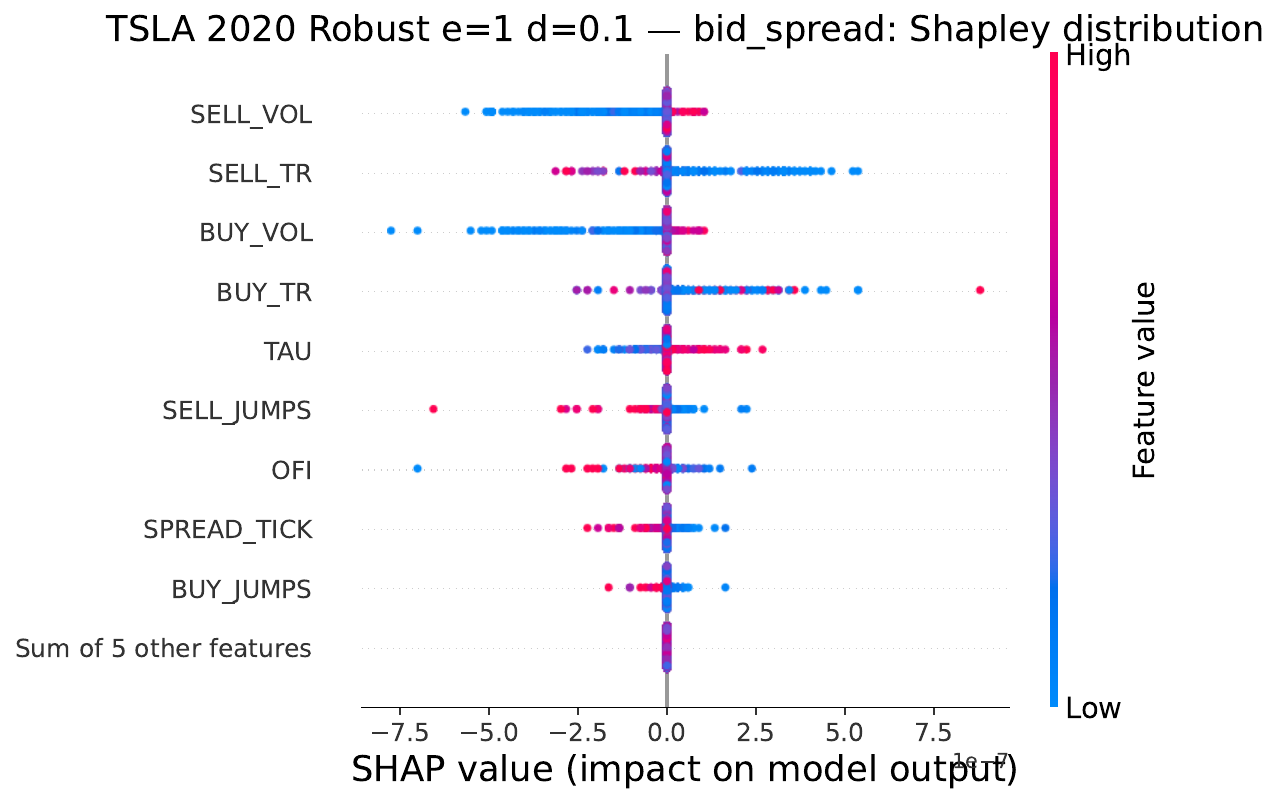}
            \caption{Robust - Bid spread.}
        \end{subfigure}\hfill
        \begin{subfigure}{0.24\linewidth}
            \includegraphics[width=\linewidth]{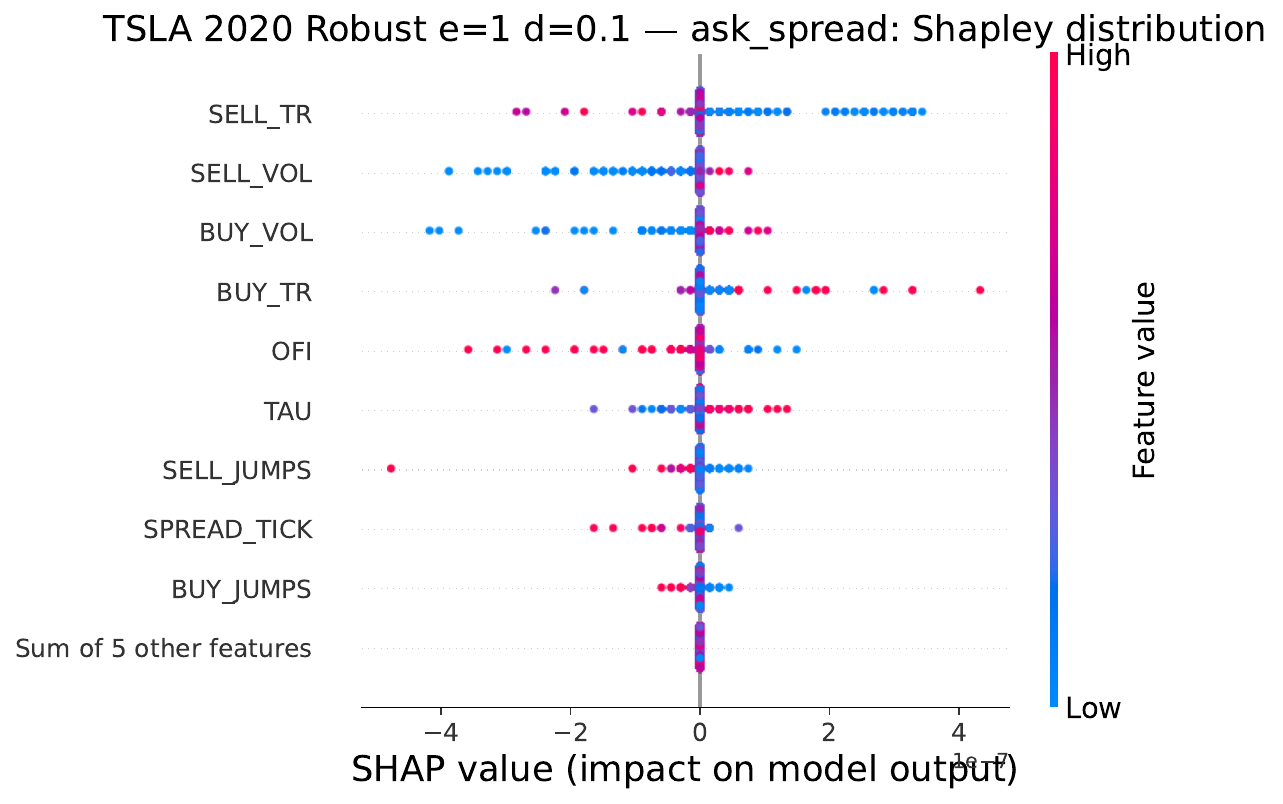}
            \caption{Robust - Ask spread.}
        \end{subfigure}\hfill
        \begin{subfigure}{0.24\linewidth}
            \includegraphics[width=\linewidth]{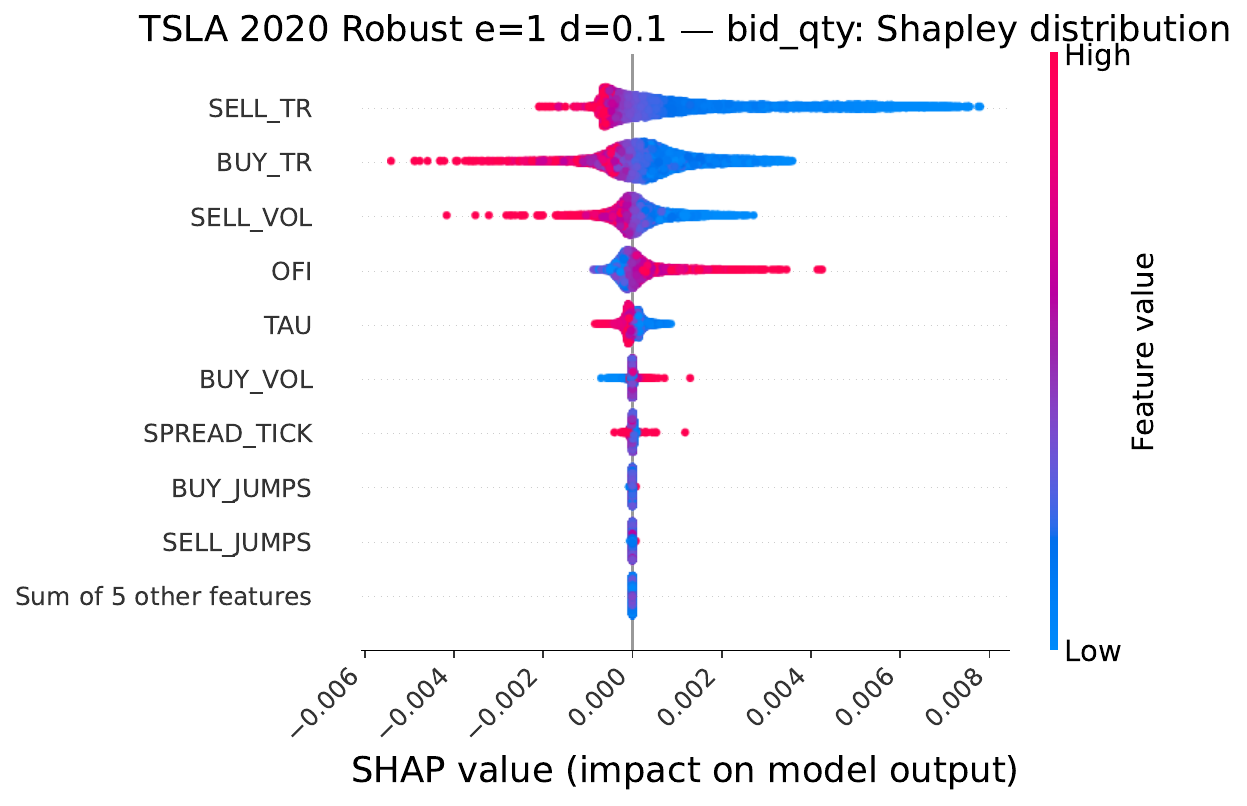}
            \caption{Robust - Bid quantity.}
        \end{subfigure}\hfill
        \begin{subfigure}{0.24\linewidth}
            \includegraphics[width=\linewidth]{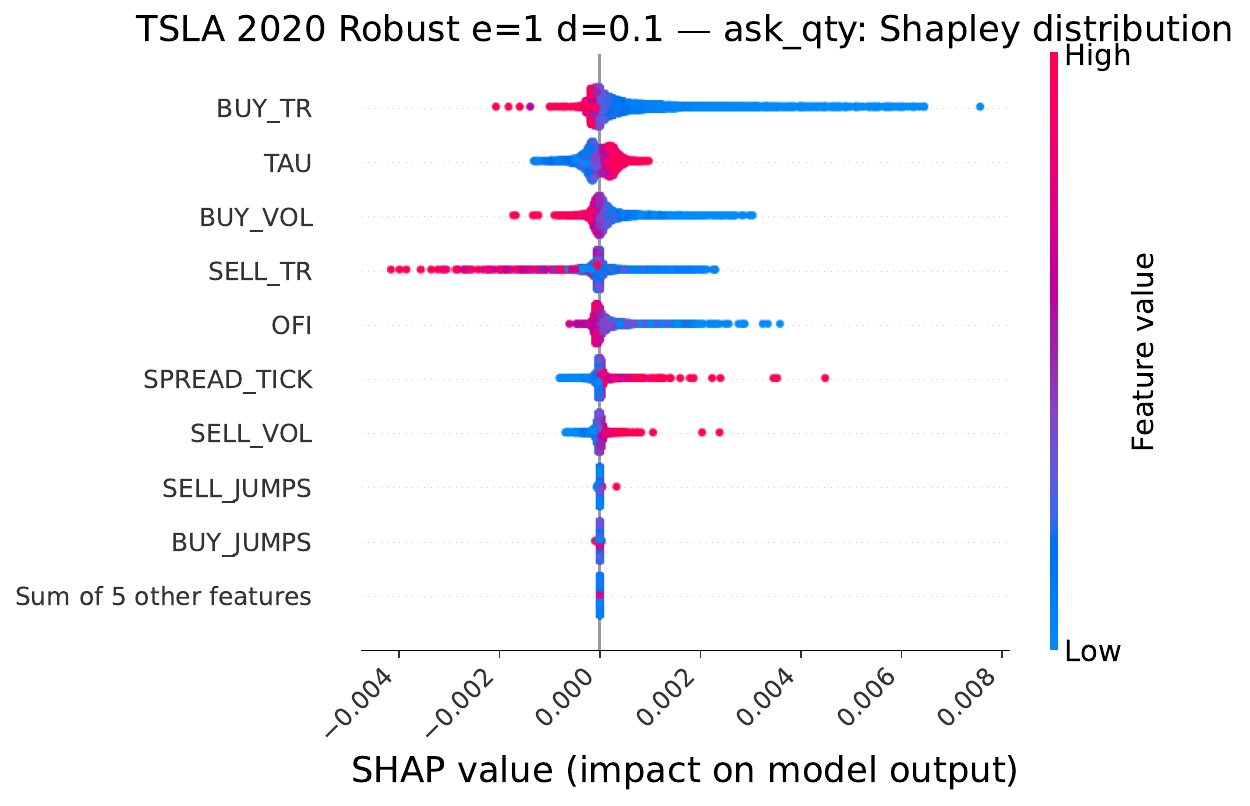}
            \caption{Robust - Ask quantity.}
        \end{subfigure}
    \end{minipage}}
    \caption{Distribution of absolute Shapley values for greedy and robust policies with $\bar{\varepsilon} = 1$ and $\delta = 0.1$ on TSLA, 2020, where the robust policy is the best test Sharpe Pareto configuration. The figure again points to trade-flow and timing variables as the main drivers, with robustness leaving the overall importance ranking broadly intact despite the more stressed market conditions.}
    \label{fig:shapley_tsla_1920}
\end{figure}

\begin{figure}[H]
    \centering
    \makebox[\textwidth][c]{\begin{minipage}{1.16\textwidth}\centering
        \begin{subfigure}{0.24\linewidth}
            \includegraphics[width=\linewidth]{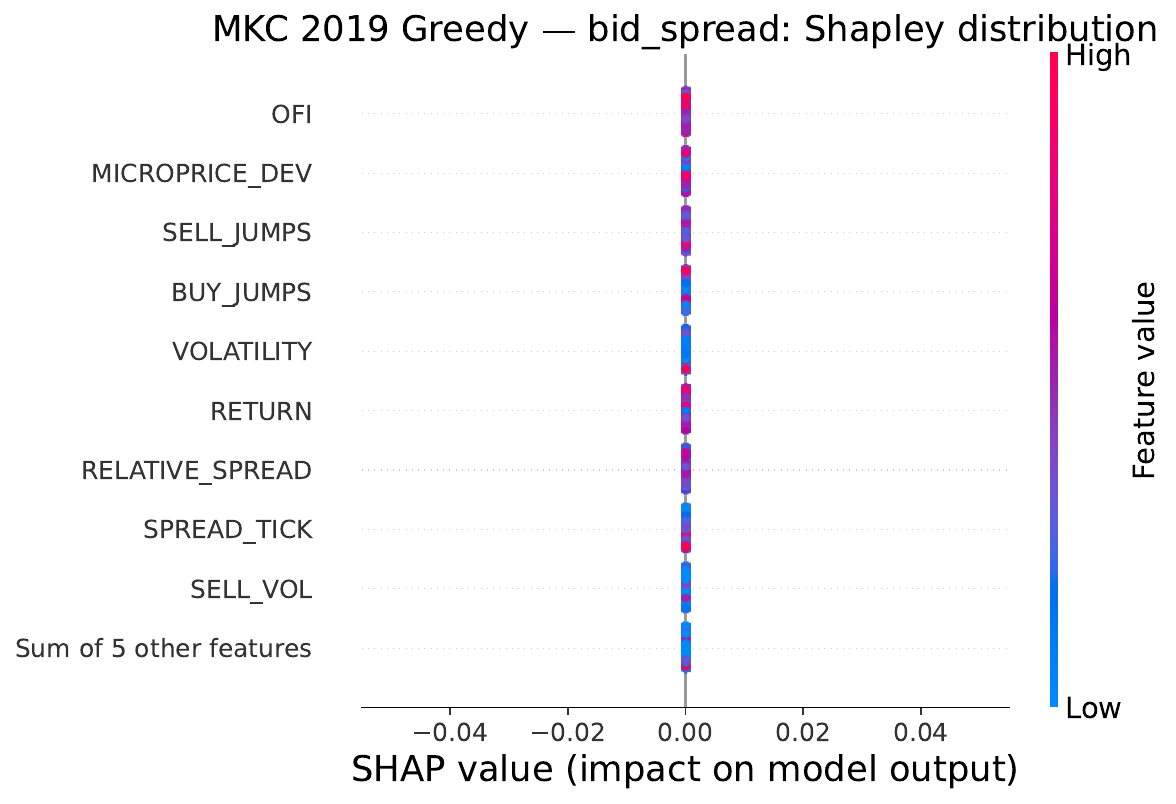}
            \caption{Greedy - Bid spread.}
        \end{subfigure}\hfill
        \begin{subfigure}{0.24\linewidth}
            \includegraphics[width=\linewidth]{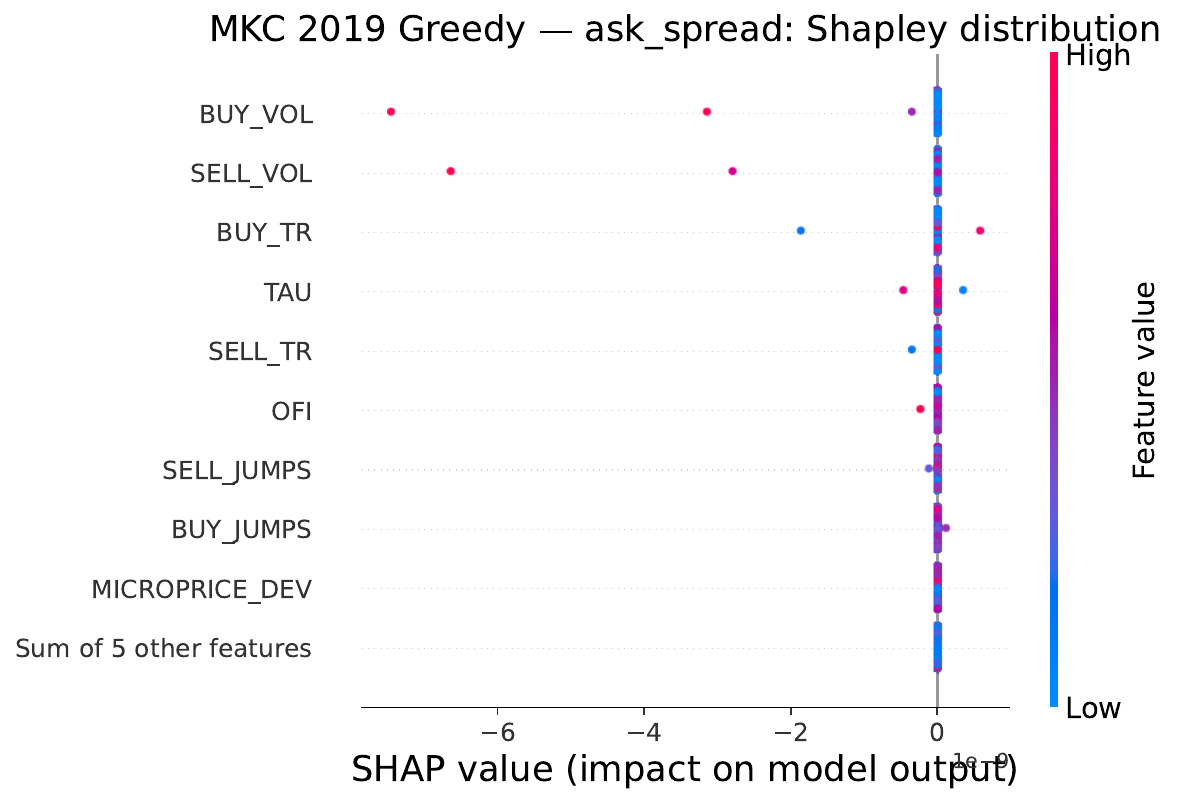}
            \caption{Greedy - Ask spread.}
        \end{subfigure}\hfill
        \begin{subfigure}{0.24\linewidth}
            \includegraphics[width=\linewidth]{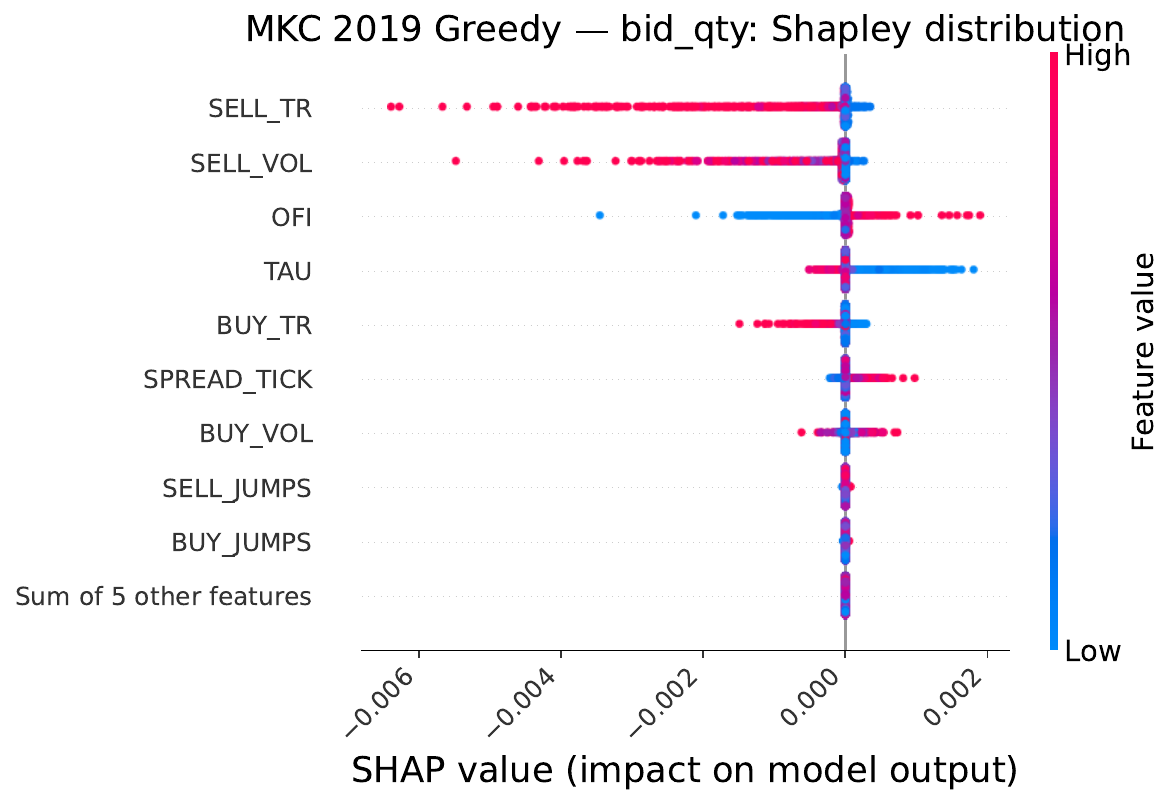}
            \caption{Greedy - Bid quantity.}
        \end{subfigure}\hfill
        \begin{subfigure}{0.24\linewidth}
            \includegraphics[width=\linewidth]{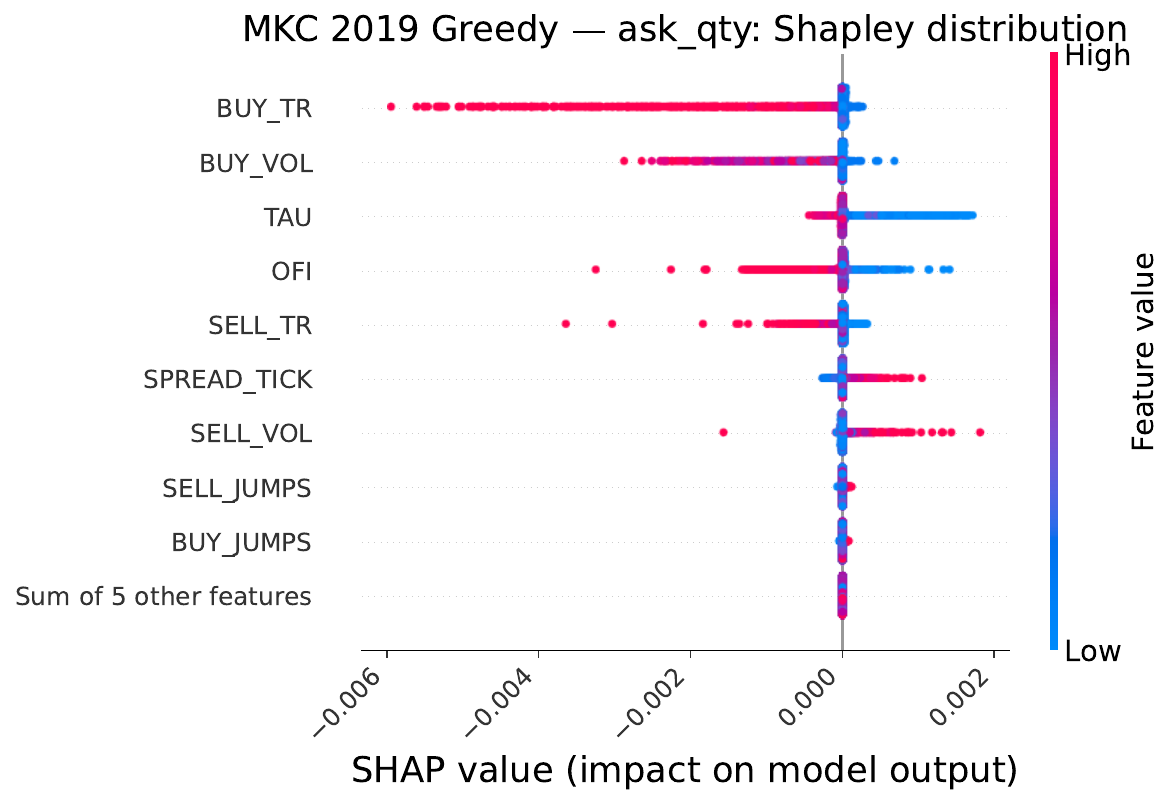}
            \caption{Greedy - Ask quantity.}
        \end{subfigure}
\vspace{0.6em}
        \begin{subfigure}{0.24\linewidth}
            \includegraphics[width=\linewidth]{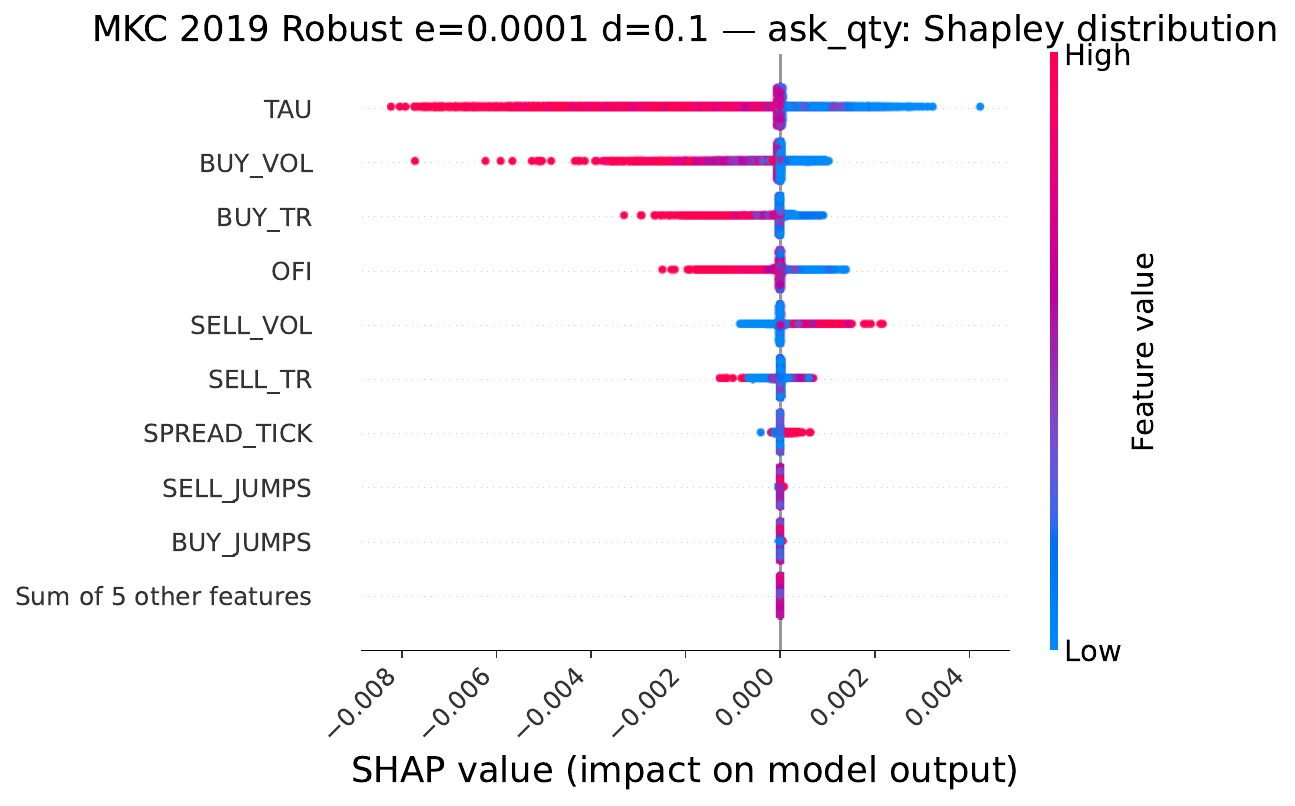}
            \caption{Robust - Bid spread.}
        \end{subfigure}\hfill
        \begin{subfigure}{0.24\linewidth}
            \includegraphics[width=\linewidth]{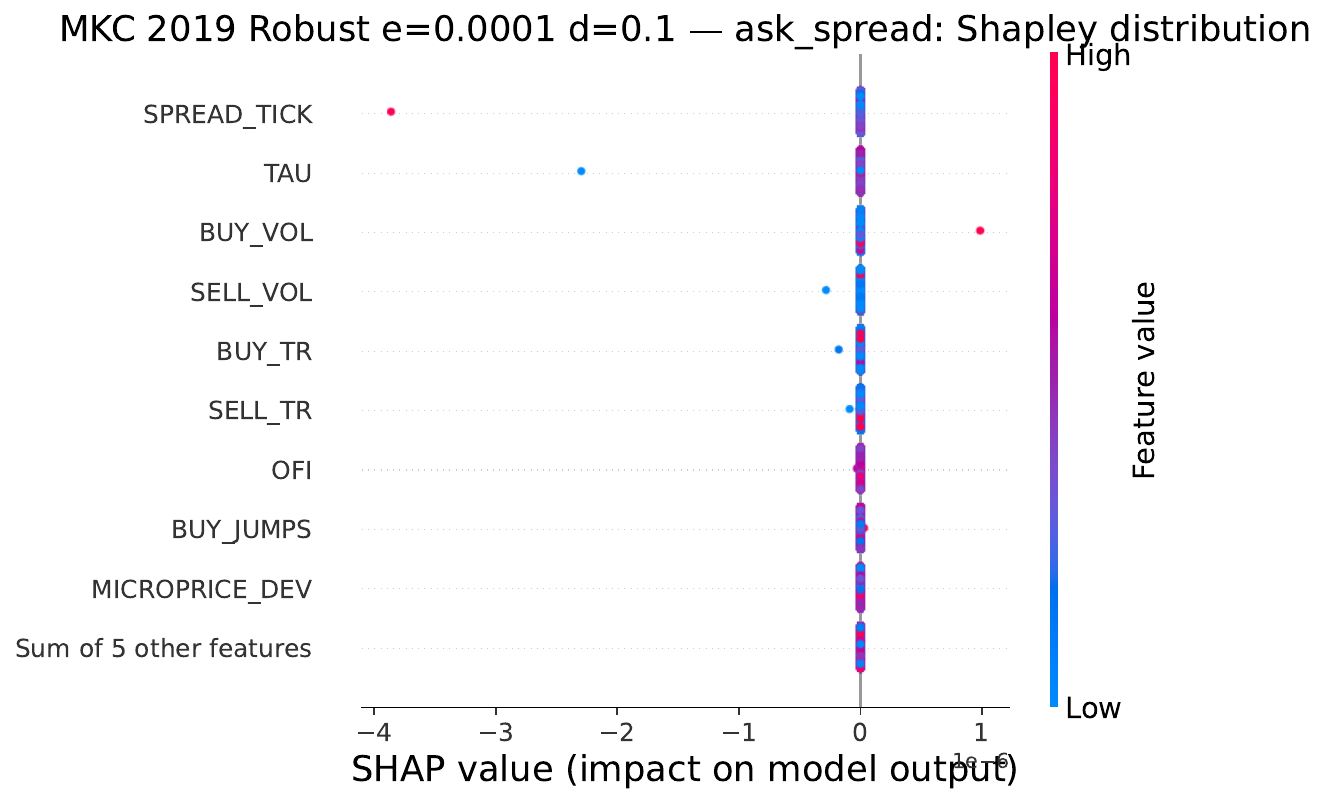}
            \caption{Robust - Ask spread.}
        \end{subfigure}\hfill
        \begin{subfigure}{0.24\linewidth}
            \includegraphics[width=\linewidth]{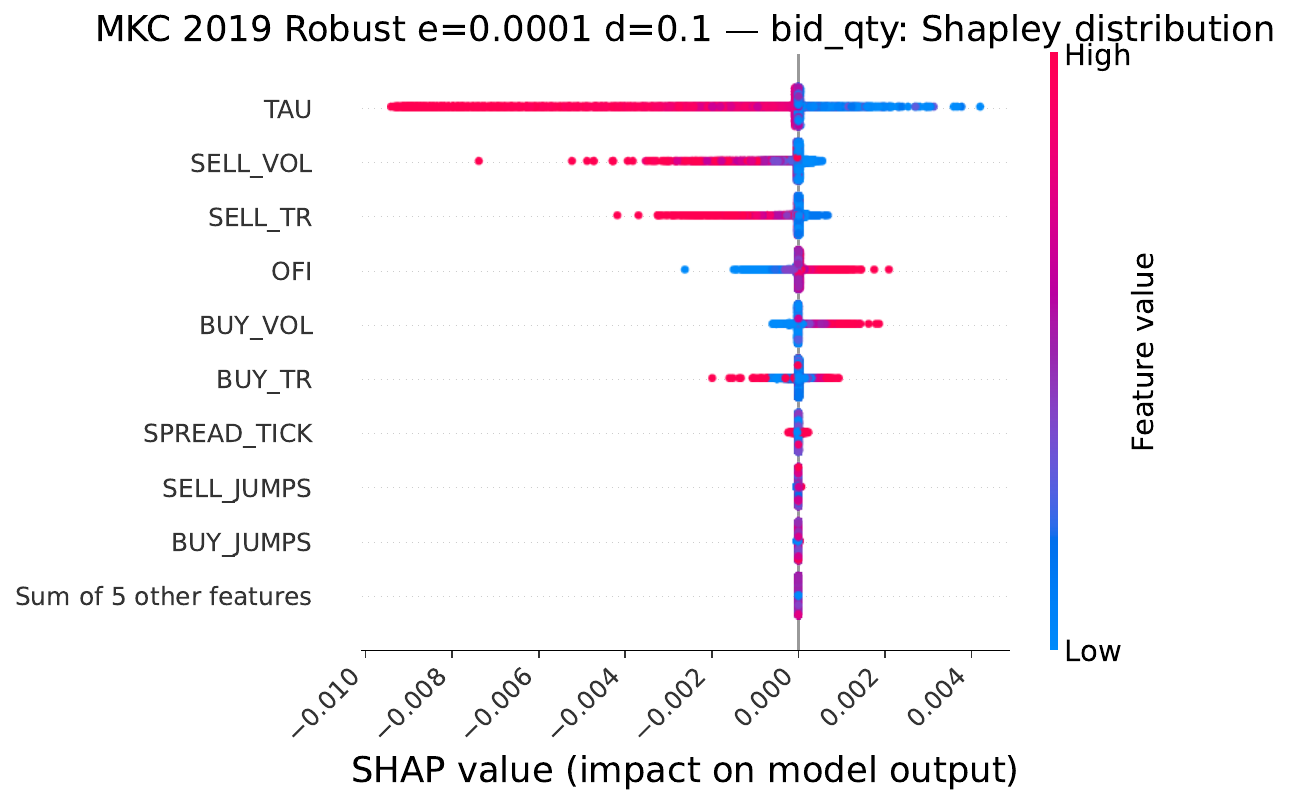}
            \caption{Robust - Bid quantity.}
        \end{subfigure}\hfill
        \begin{subfigure}{0.24\linewidth}
            \includegraphics[width=\linewidth]{image/MKC_2019/MKC_2019_Robust_e0.0001_d0.1_ask_qty_shapley_distribution.pdf}
            \caption{Robust - Ask quantity.}
        \end{subfigure}
    \end{minipage}}
    \caption{Distribution of absolute Shapley values for greedy and robust policies with $\bar{\varepsilon} = 0.0001$ and $\delta = 0.1$ on MKC, 2019, where the robust policy is the best test Sharpe Pareto configuration. The importance profiles are comparatively stable across the greedy and robust policies, consistent with the weaker behavioural impact of robustness in MKC.}
    \label{fig:shapley_mkc_2019}
\end{figure}

\begin{figure}[H]
    \centering
    \makebox[\textwidth][c]{\begin{minipage}{1.16\textwidth}\centering
        \begin{subfigure}{0.24\linewidth}
            \includegraphics[width=\linewidth]{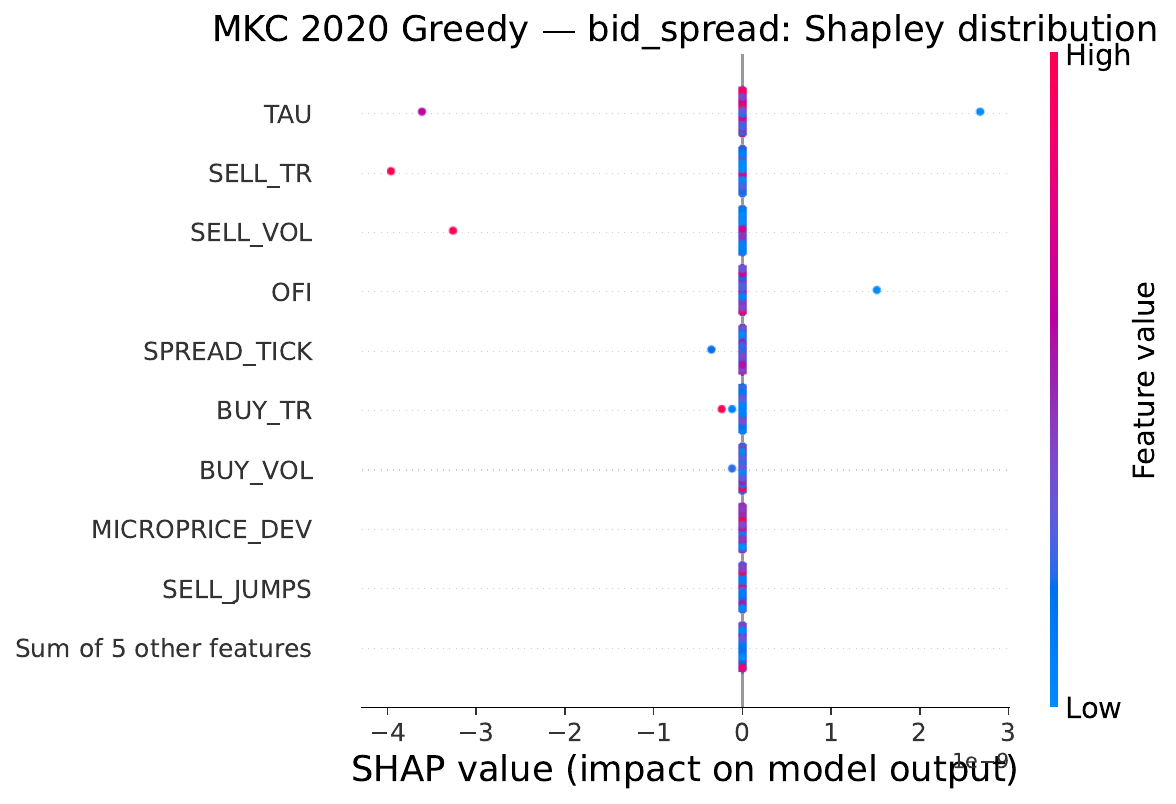}
            \caption{Greedy - Bid spread.}
        \end{subfigure}\hfill
        \begin{subfigure}{0.24\linewidth}
            \includegraphics[width=\linewidth]{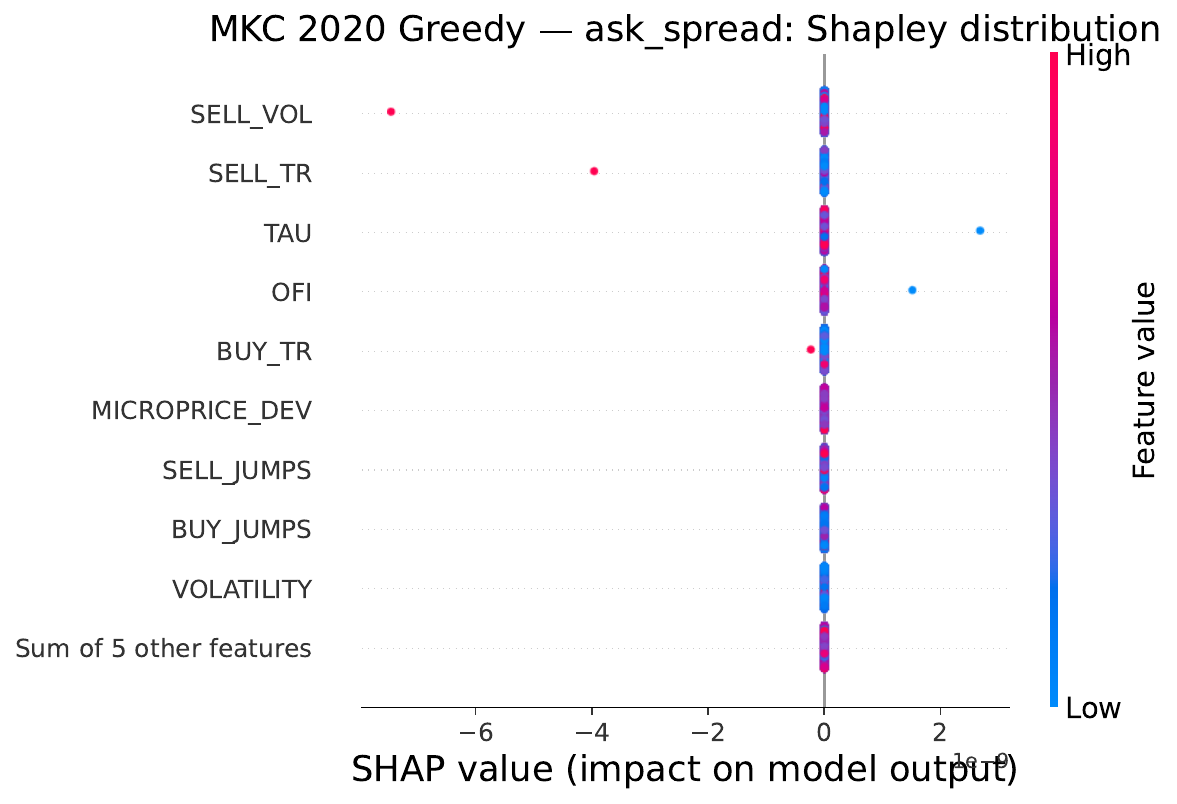}
            \caption{Greedy - Ask spread.}
        \end{subfigure}\hfill
        \begin{subfigure}{0.24\linewidth}
            \includegraphics[width=\linewidth]{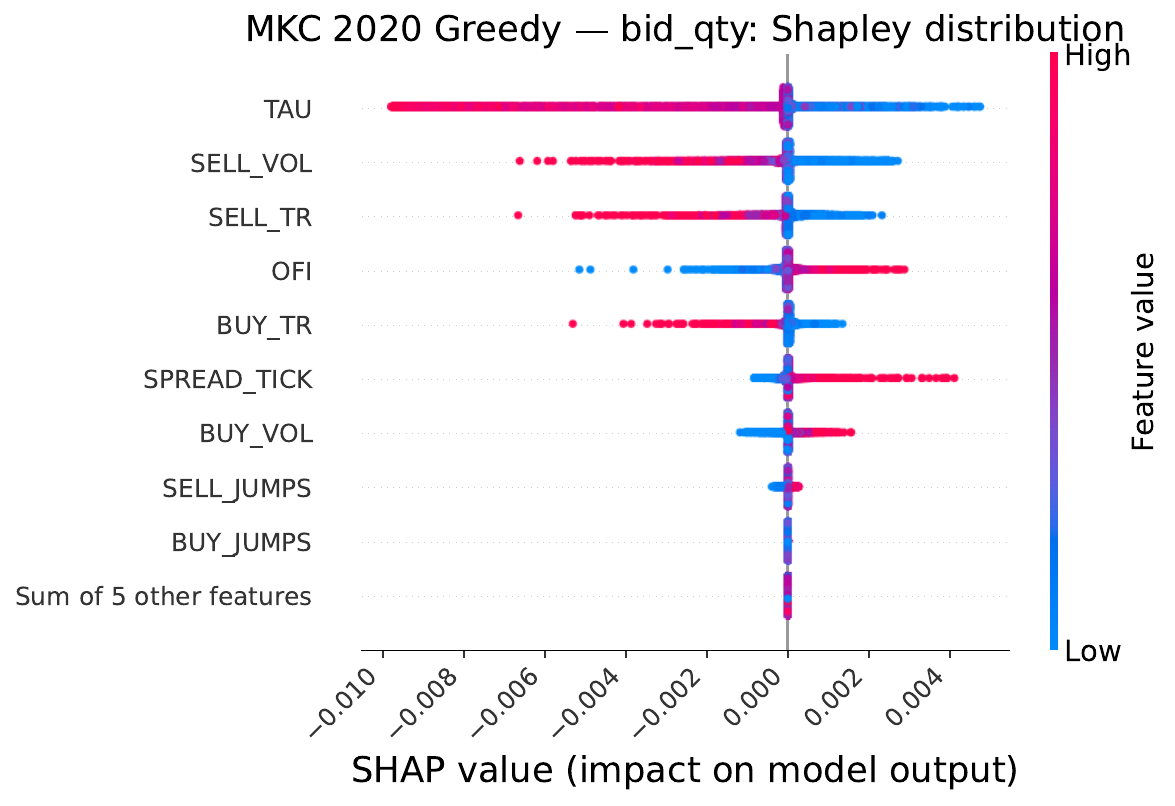}
            \caption{Greedy - Bid quantity.}
        \end{subfigure}\hfill
        \begin{subfigure}{0.24\linewidth}
            \includegraphics[width=\linewidth]{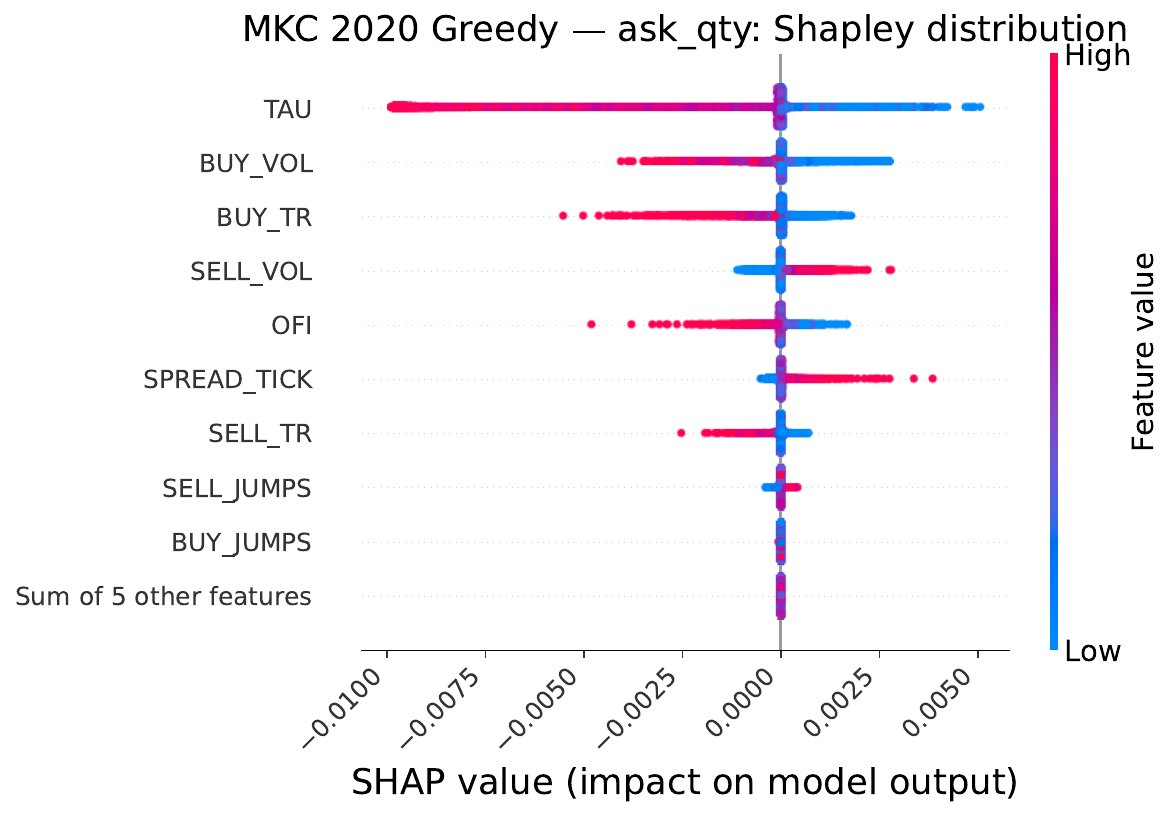}
            \caption{Greedy - Ask quantity.}
        \end{subfigure}
\vspace{0.6em}
        \begin{subfigure}{0.24\linewidth}
            \includegraphics[width=\linewidth]{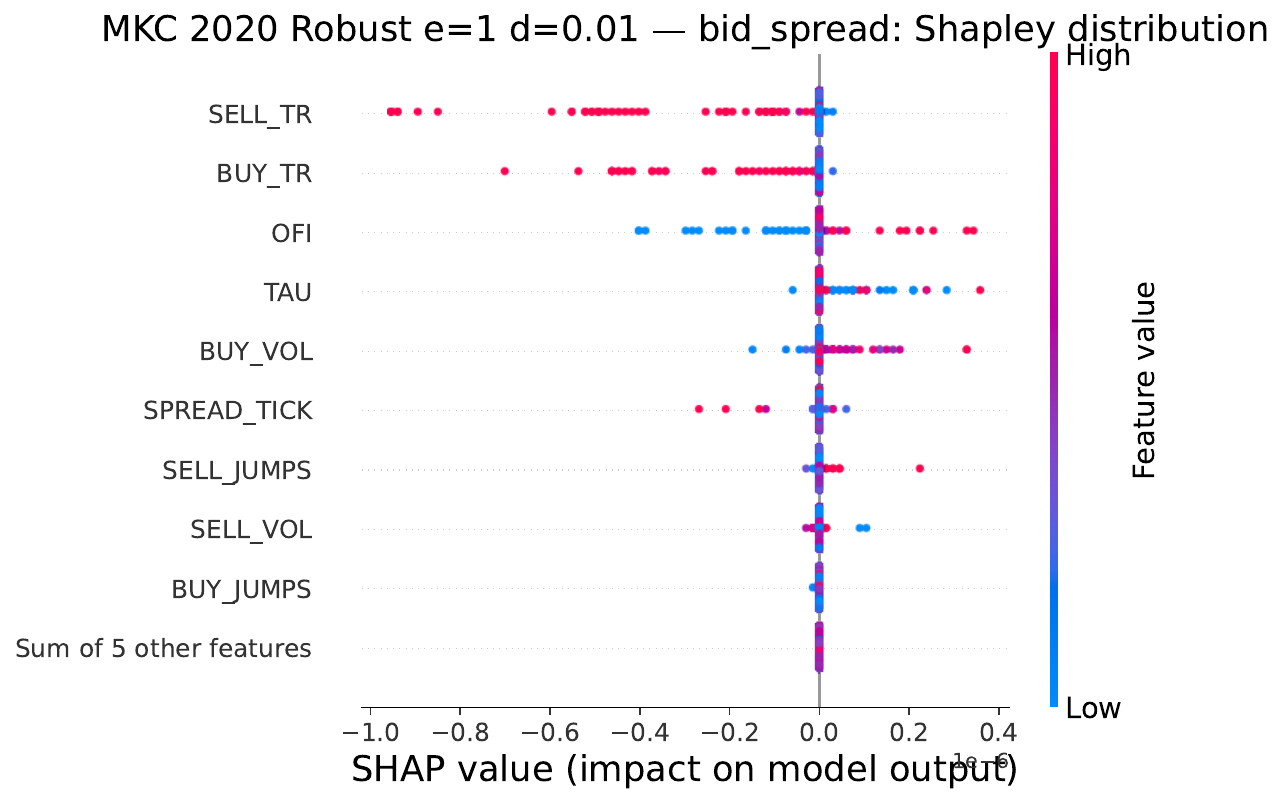}
            \caption{Robust - Bid spread.}
        \end{subfigure}\hfill
        \begin{subfigure}{0.24\linewidth}
            \includegraphics[width=\linewidth]{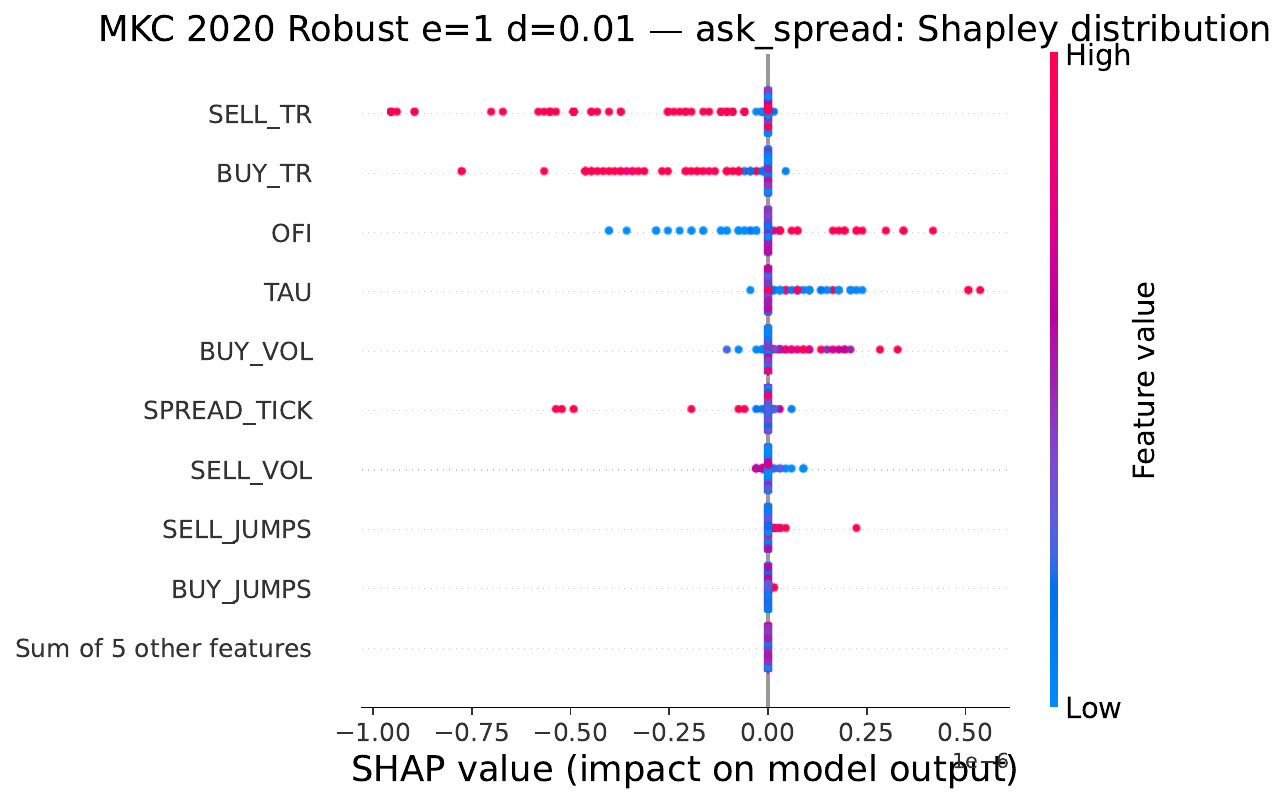}
            \caption{Robust - Ask spread.}
        \end{subfigure}\hfill
        \begin{subfigure}{0.24\linewidth}
            \includegraphics[width=\linewidth]{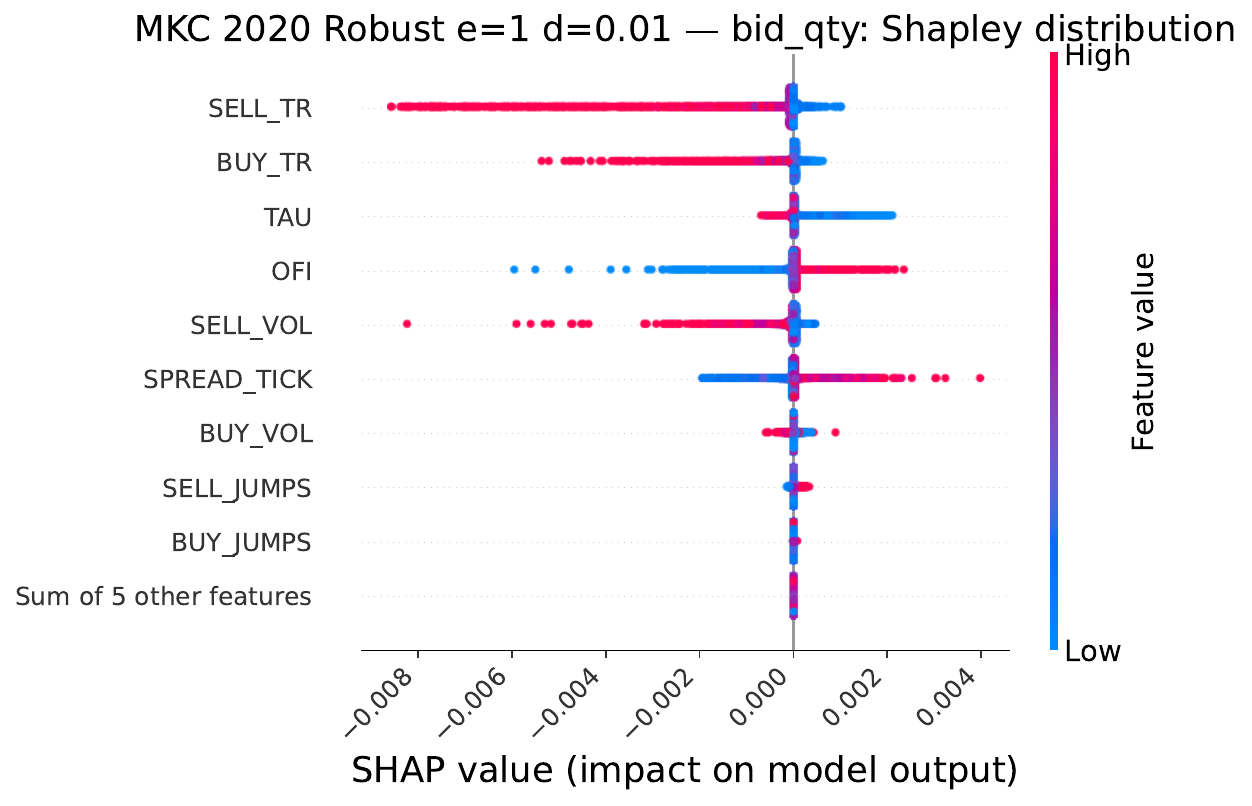}
            \caption{Robust - Bid quantity.}
        \end{subfigure}\hfill
        \begin{subfigure}{0.24\linewidth}
            \includegraphics[width=\linewidth]{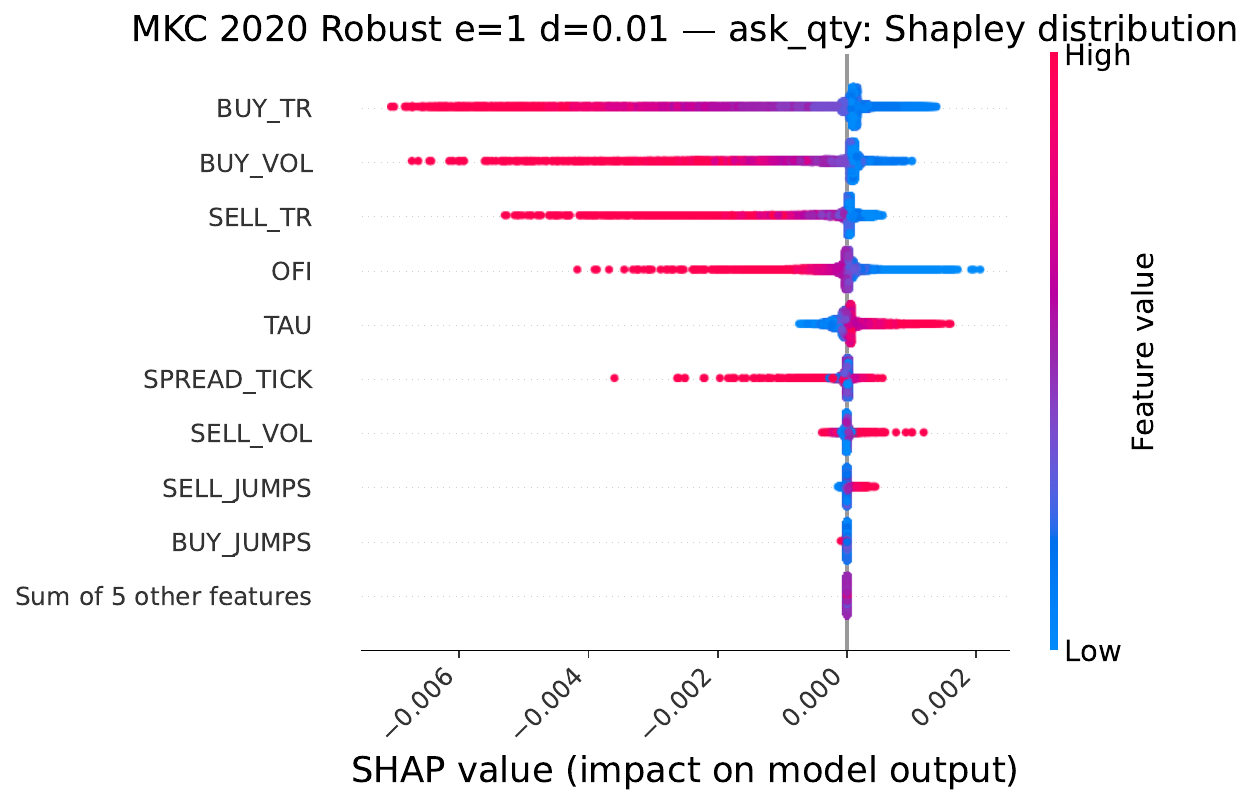}
            \caption{Robust - Ask quantity.}
        \end{subfigure}
    \end{minipage}}
    \caption{Distribution of absolute Shapley values for greedy and robust policies with $\bar{\varepsilon} = 1$ and $\delta = 0.01$ on MKC, 2020, where the robust policy is the best test Sharpe Pareto configuration. The same broad set of trade-flow and timing variables remains dominant, and the robust specification changes their relative weights quite significantly.}
    \label{fig:shapley_mkc_2020}
\end{figure}

\begin{figure}[H]
    \centering
    \makebox[\textwidth][c]{\begin{minipage}{1.16\textwidth}\centering
        \begin{subfigure}{0.24\linewidth}
            \includegraphics[width=\linewidth]{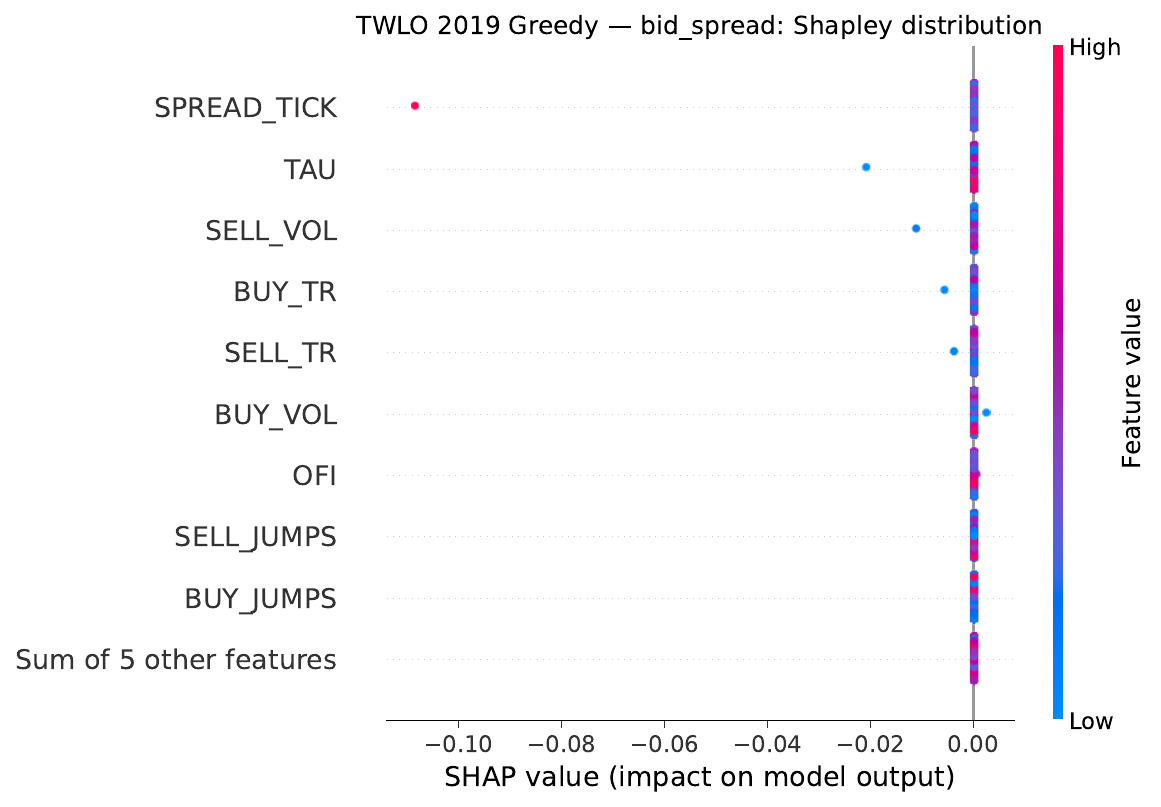}
            \caption{Greedy - Bid spread.}
        \end{subfigure}\hfill
        \begin{subfigure}{0.24\linewidth}
            \includegraphics[width=\linewidth]{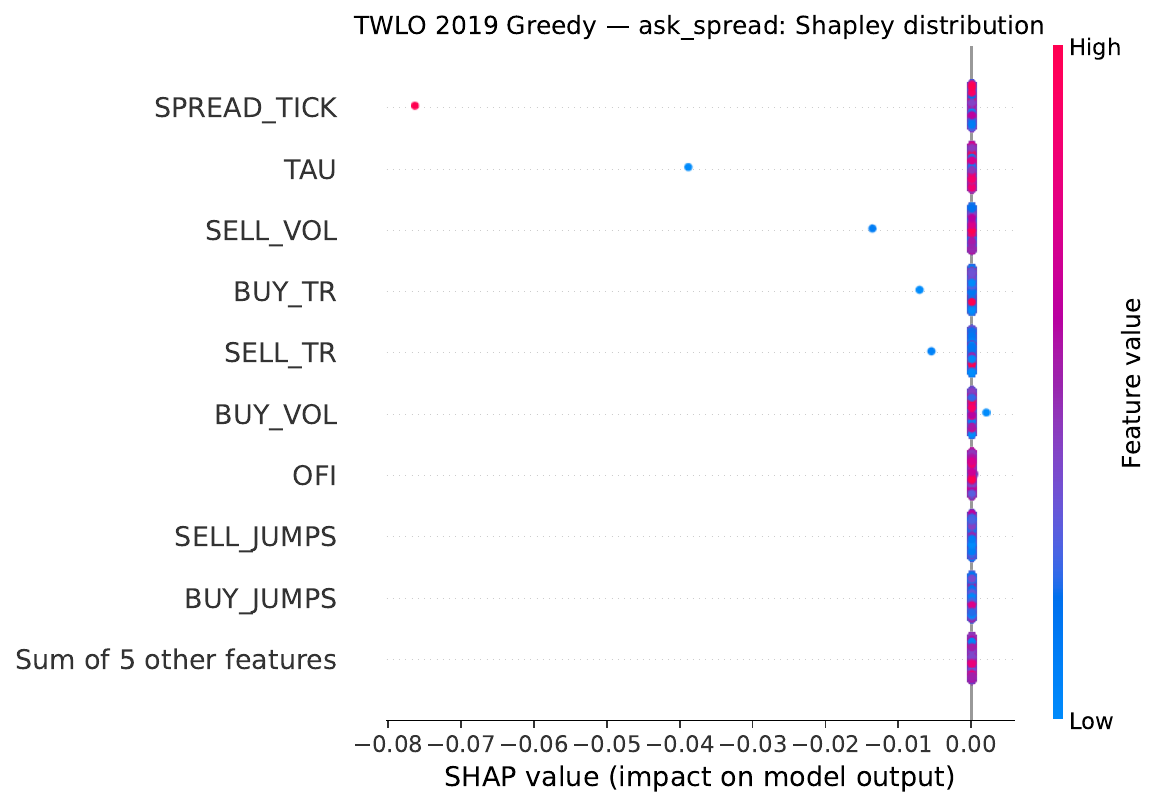}
            \caption{Greedy - Ask spread.}
        \end{subfigure}\hfill
        \begin{subfigure}{0.24\linewidth}
            \includegraphics[width=\linewidth]{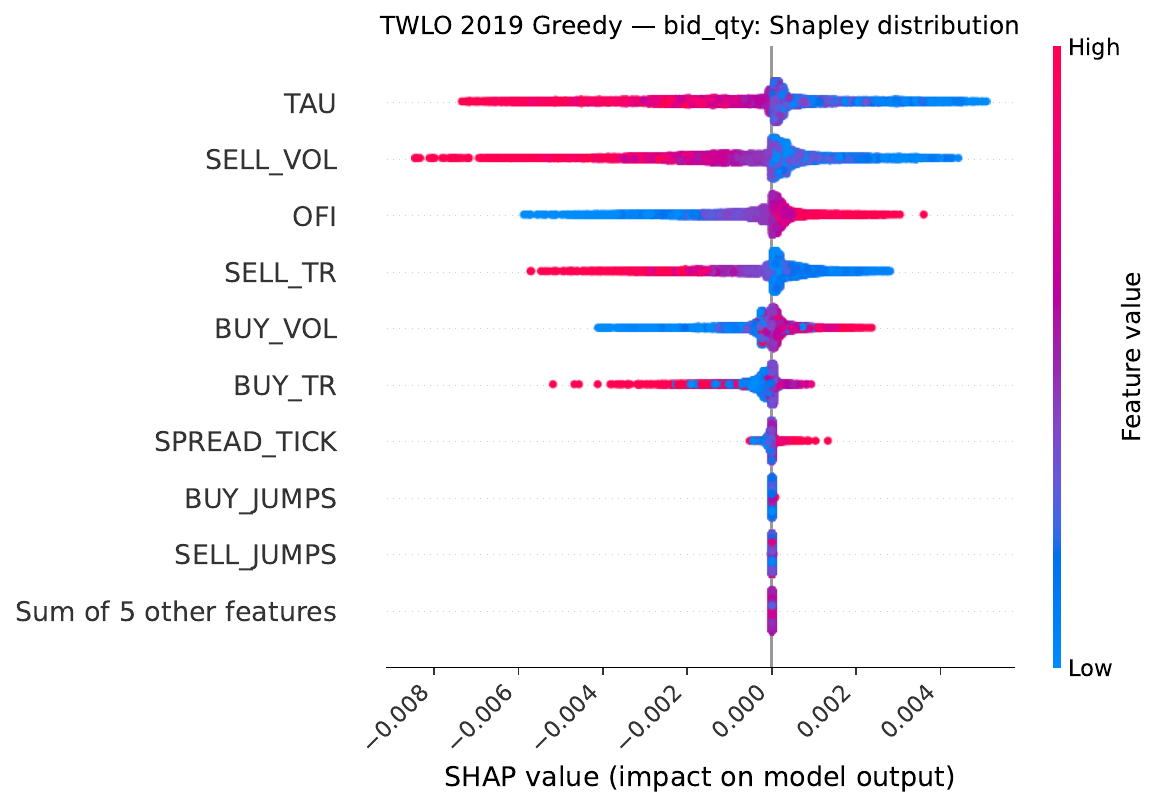}
            \caption{Greedy - Bid quantity.}
        \end{subfigure}\hfill
        \begin{subfigure}{0.24\linewidth}
            \includegraphics[width=\linewidth]{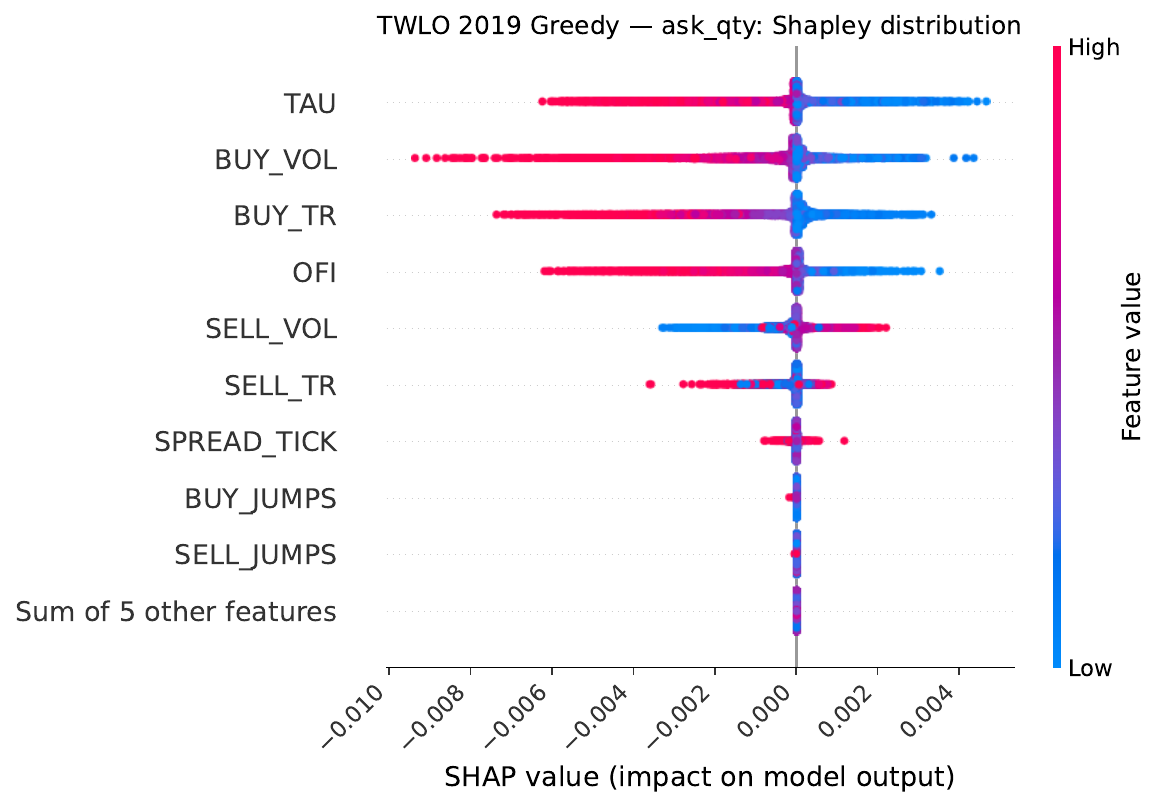}
            \caption{Greedy - Ask quantity.}
        \end{subfigure}
\vspace{0.6em}
        \begin{subfigure}{0.24\linewidth}
            \includegraphics[width=\linewidth]{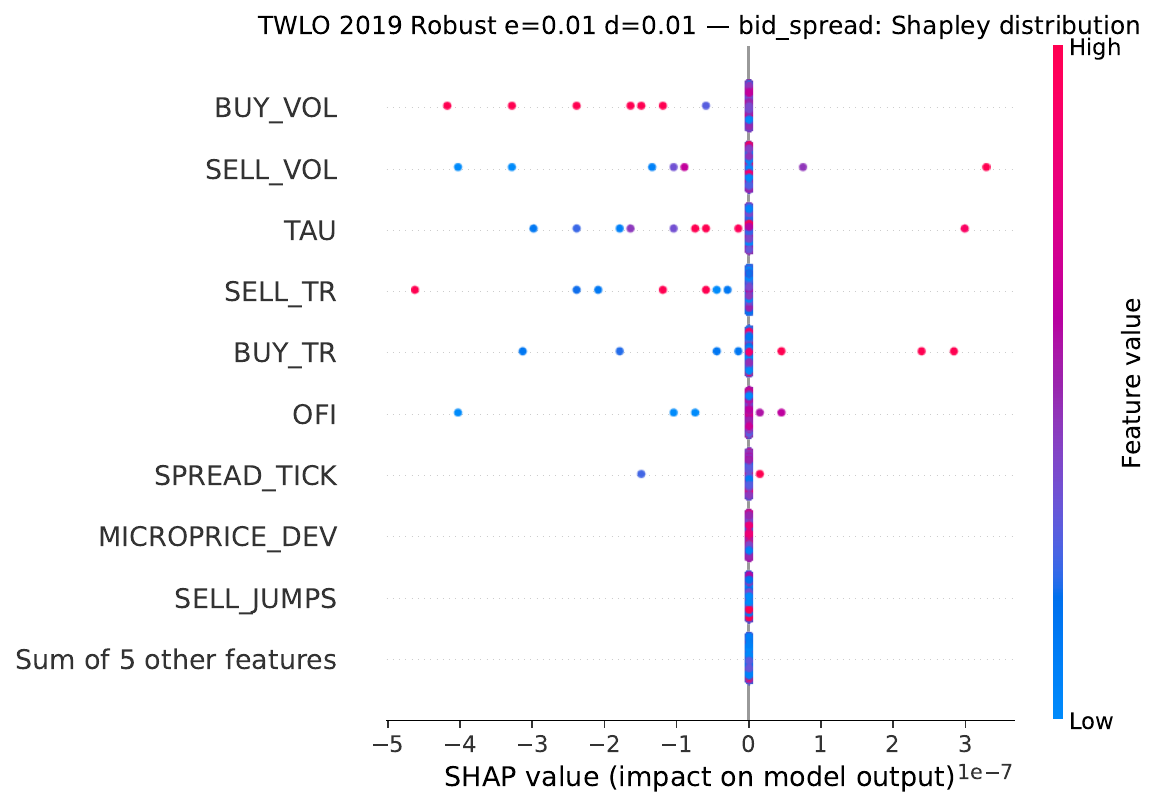}
            \caption{Robust - Bid spread.}
        \end{subfigure}\hfill
        \begin{subfigure}{0.24\linewidth}
            \includegraphics[width=\linewidth]{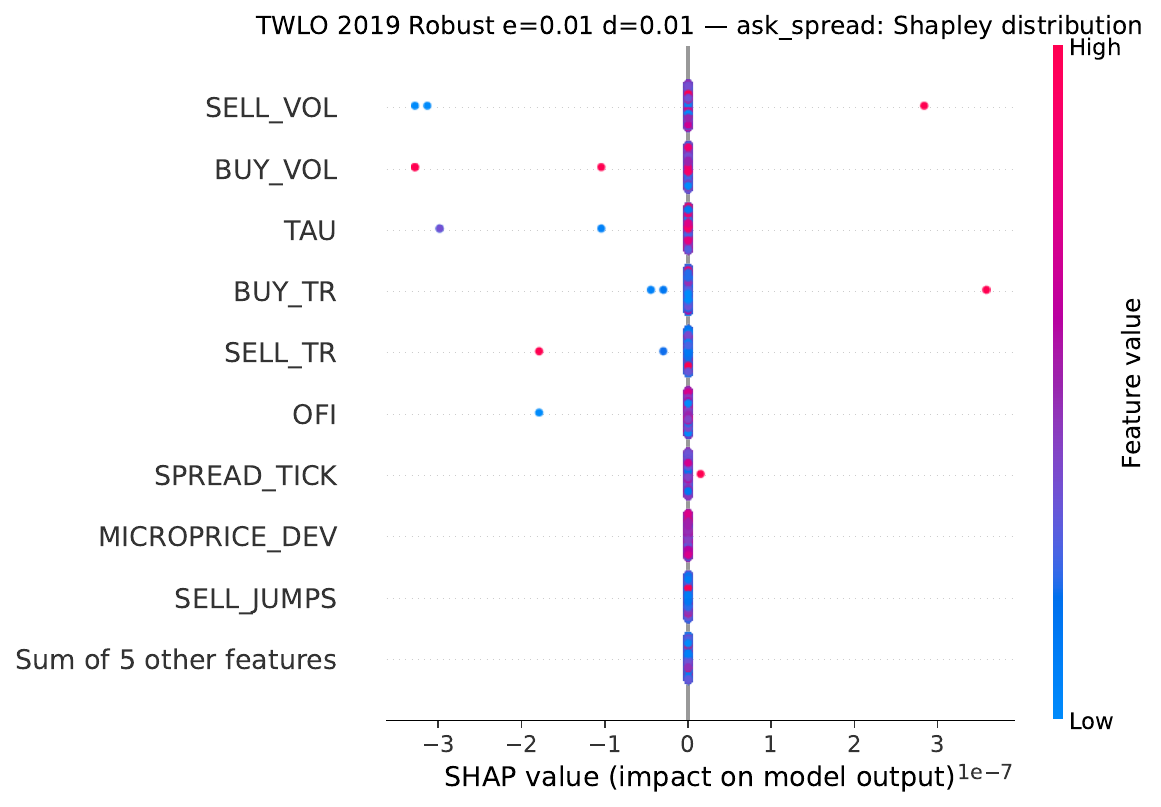}
            \caption{Robust - Ask spread.}
        \end{subfigure}\hfill
        \begin{subfigure}{0.24\linewidth}
            \includegraphics[width=\linewidth]{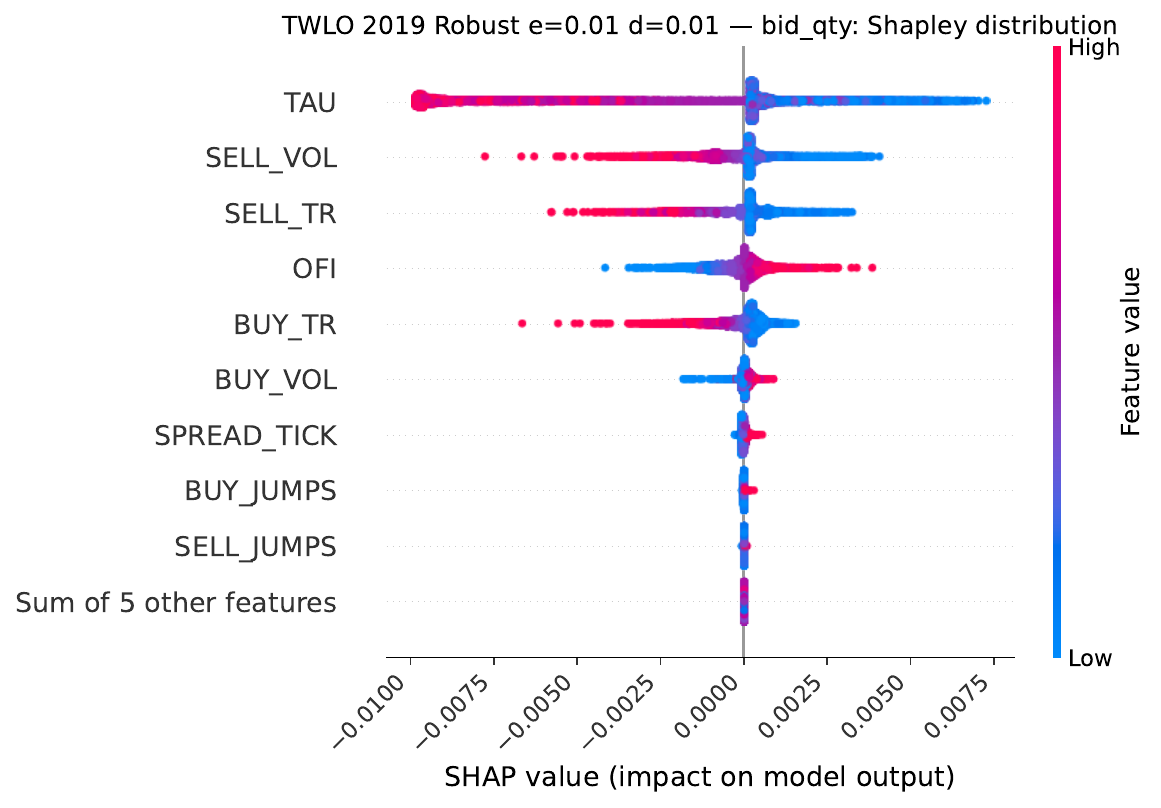}
            \caption{Robust - Bid quantity.}
        \end{subfigure}\hfill
        \begin{subfigure}{0.24\linewidth}
            \includegraphics[width=\linewidth]{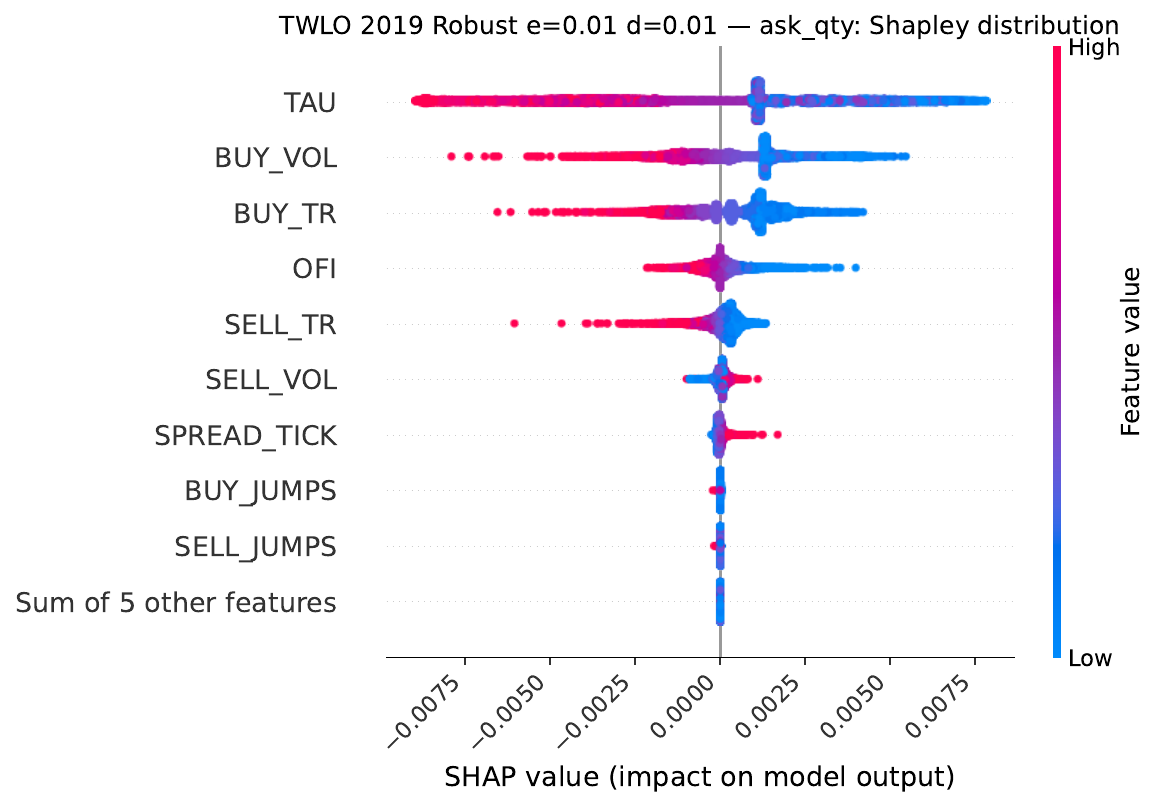}
            \caption{Robust - Ask quantity.}
        \end{subfigure}
    \end{minipage}}
    \caption{Distribution of absolute Shapley values for greedy and robust policies with $\bar{\varepsilon} = 0.01$ and $\delta = 0.01$ on TWLO, 2019, where the robust policy is the best test Sharpe Pareto configuration. The main drivers are again trade-flow and timing variables, although the relative weights appear somewhat more uneven than in the larger-liquidity names.}
    \label{fig:shapley_twlo_2019}
\end{figure}

\begin{figure}[H]
    \centering
    \makebox[\textwidth][c]{\begin{minipage}{1.16\textwidth}\centering
        \begin{subfigure}{0.24\linewidth}
            \includegraphics[width=\linewidth]{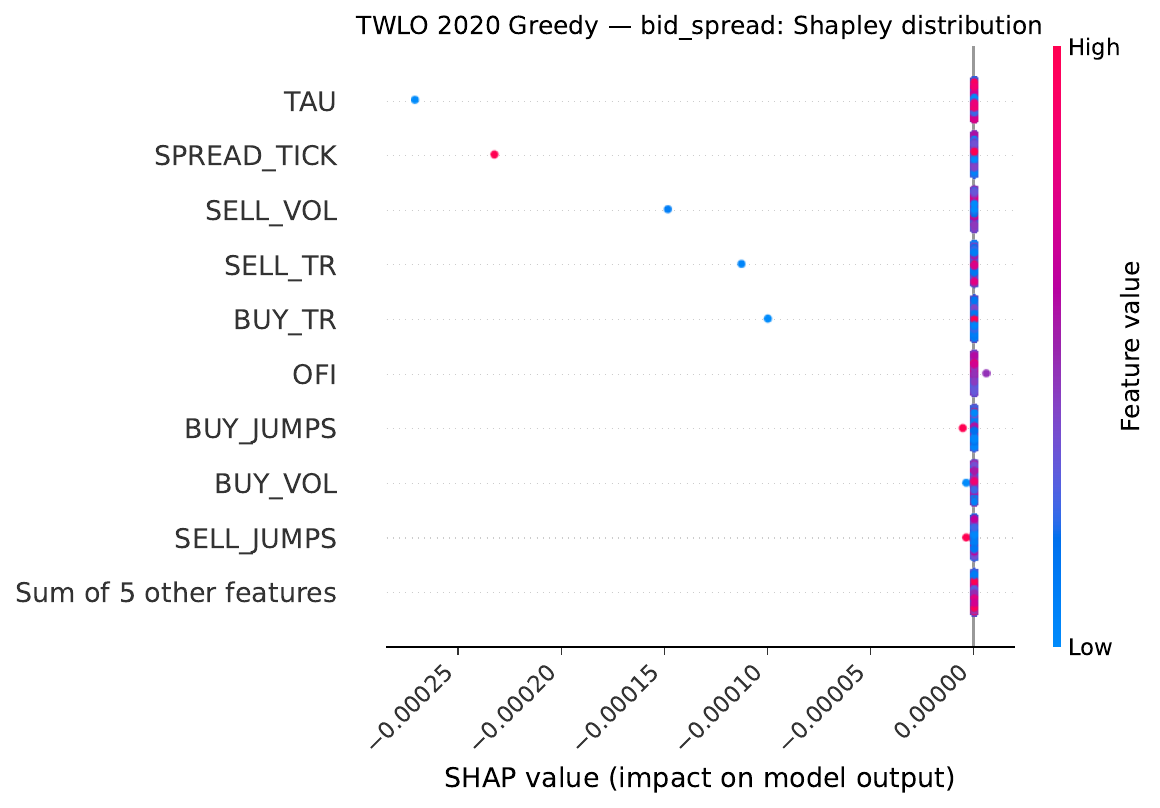}
            \caption{Greedy - Bid spread.}
        \end{subfigure}\hfill
        \begin{subfigure}{0.24\linewidth}
            \includegraphics[width=\linewidth]{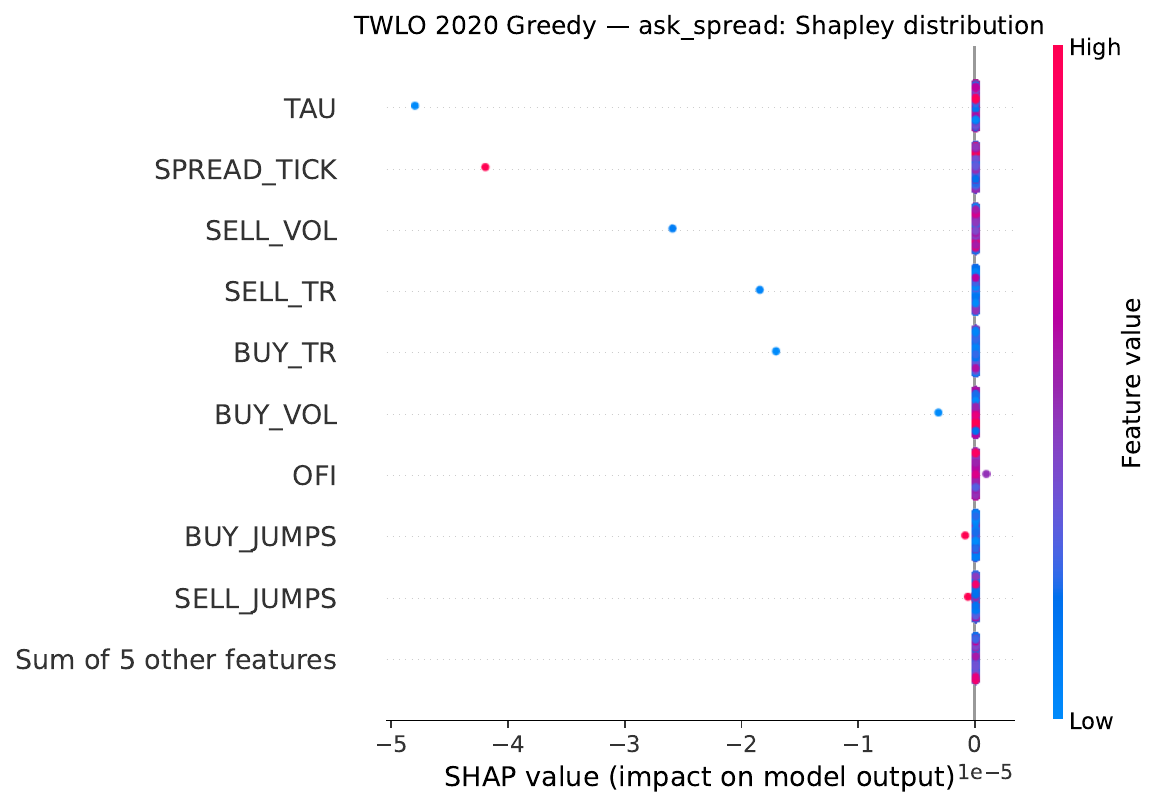}
            \caption{Greedy - Ask spread.}
        \end{subfigure}\hfill
        \begin{subfigure}{0.24\linewidth}
            \includegraphics[width=\linewidth]{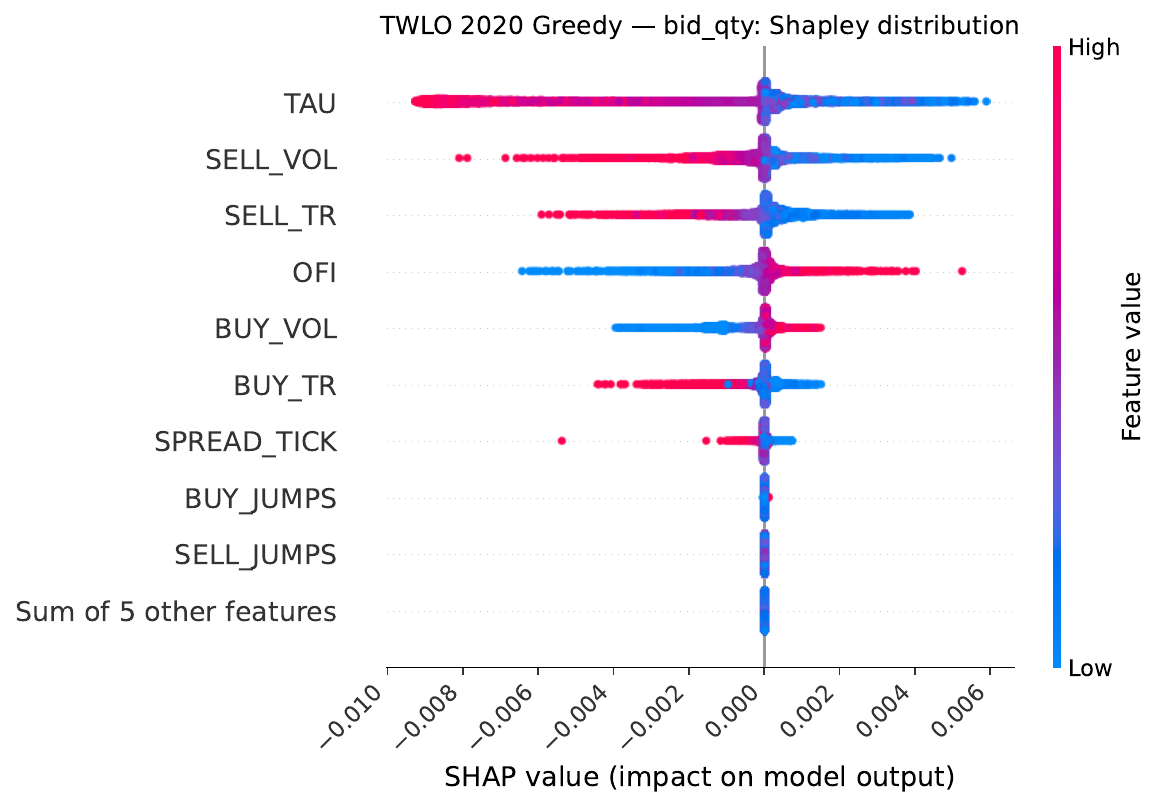}
            \caption{Greedy - Bid quantity.}
        \end{subfigure}\hfill
        \begin{subfigure}{0.24\linewidth}
            \includegraphics[width=\linewidth]{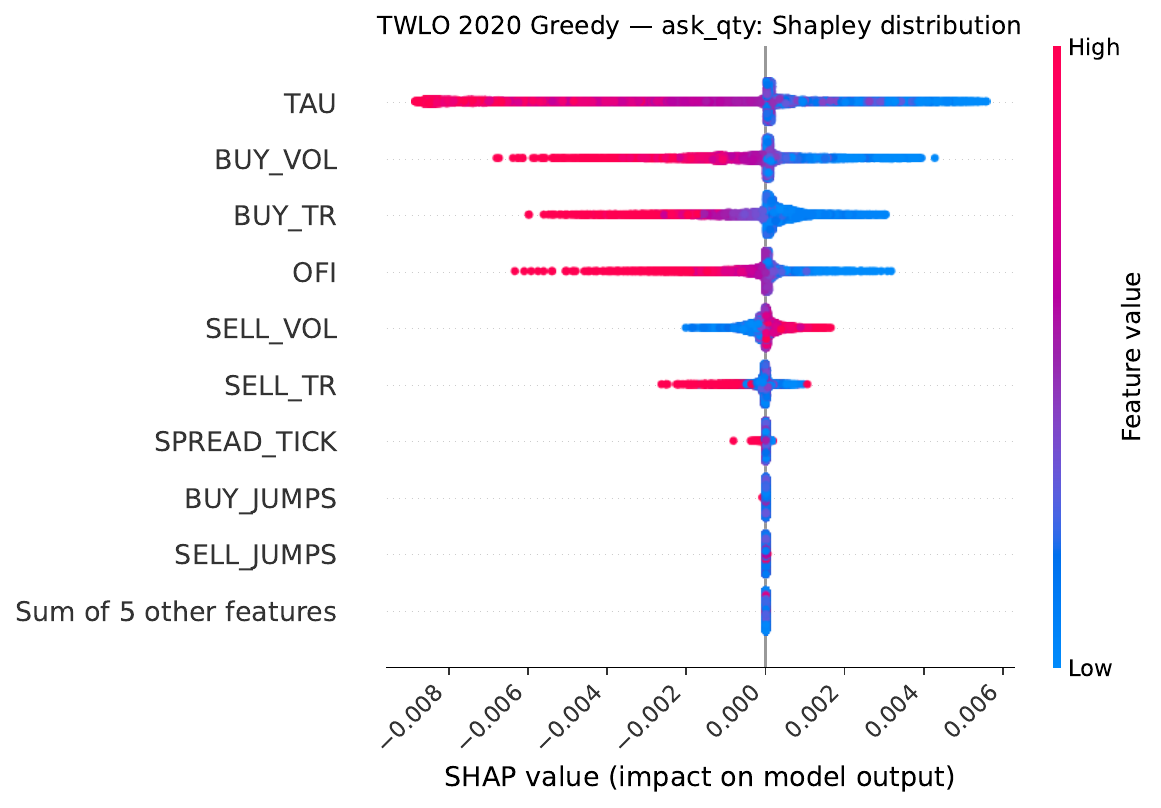}
            \caption{Greedy - Ask quantity.}
        \end{subfigure}
\vspace{0.6em}
        \begin{subfigure}{0.24\linewidth}
            \includegraphics[width=\linewidth]{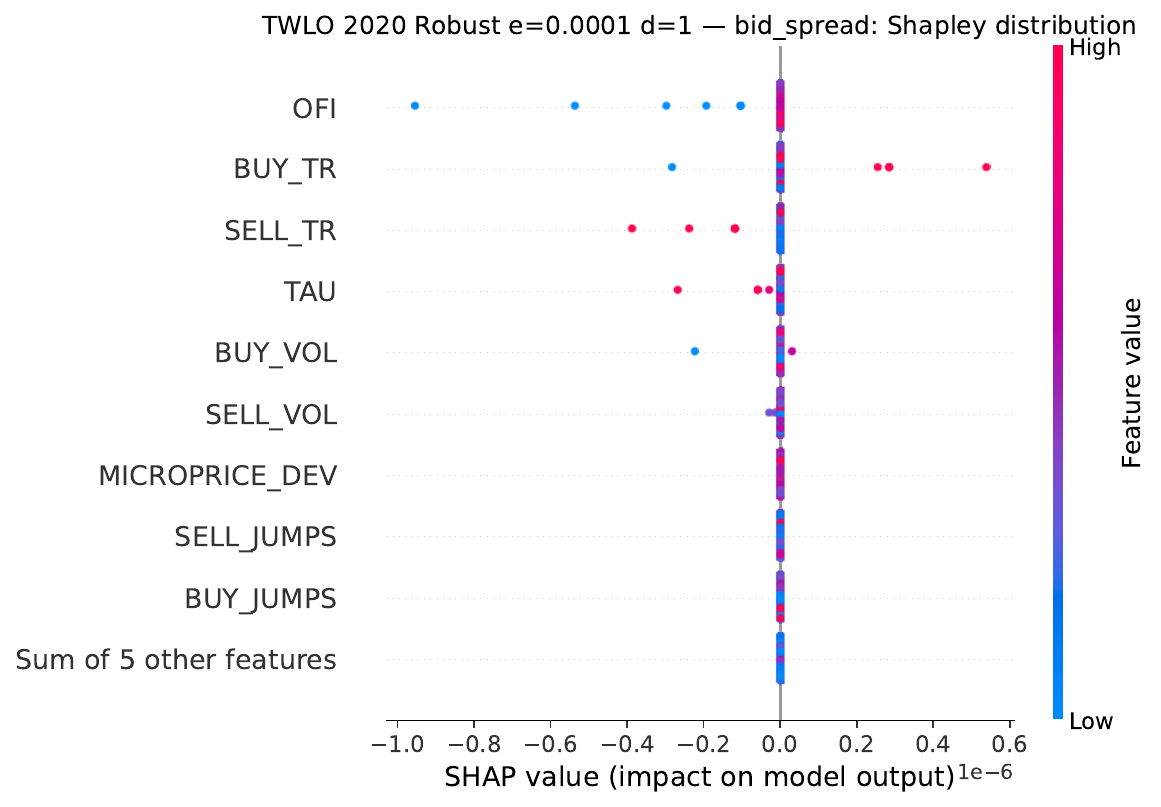}
            \caption{Robust - Bid spread.}
        \end{subfigure}\hfill
        \begin{subfigure}{0.24\linewidth}
            \includegraphics[width=\linewidth]{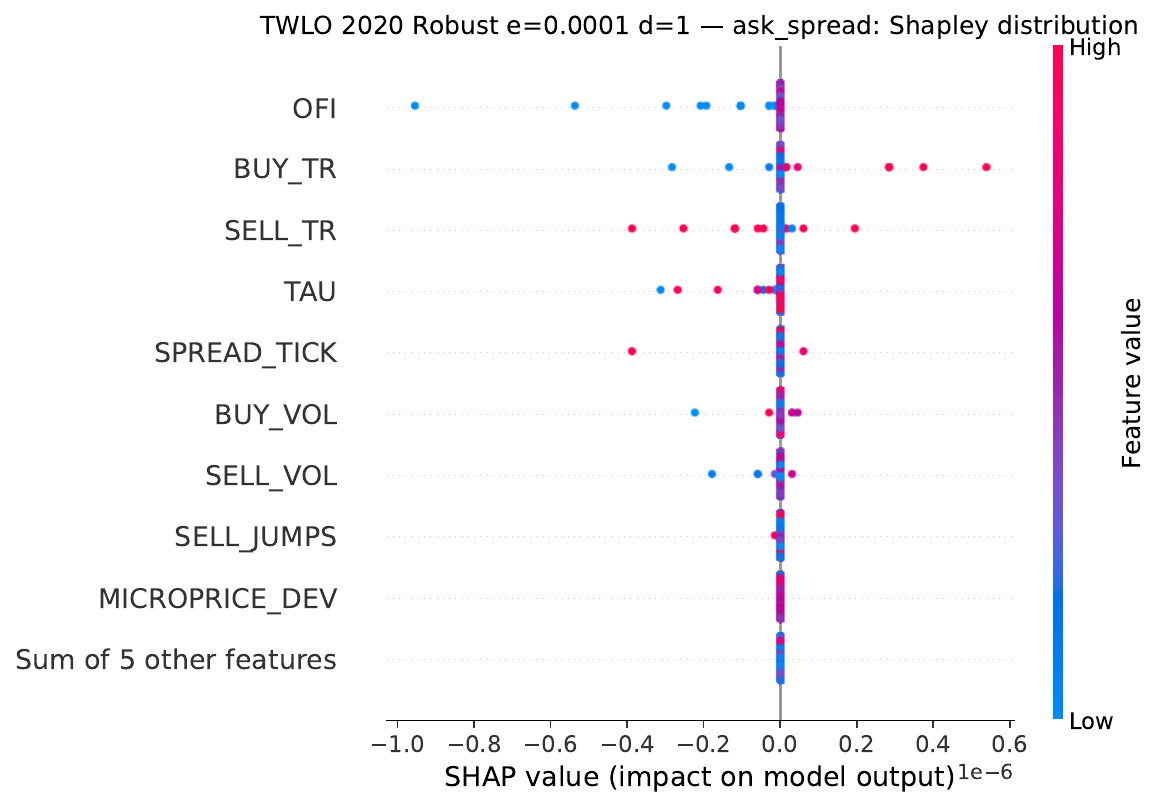}
            \caption{Robust - Ask spread.}
        \end{subfigure}\hfill
        \begin{subfigure}{0.24\linewidth}
            \includegraphics[width=\linewidth]{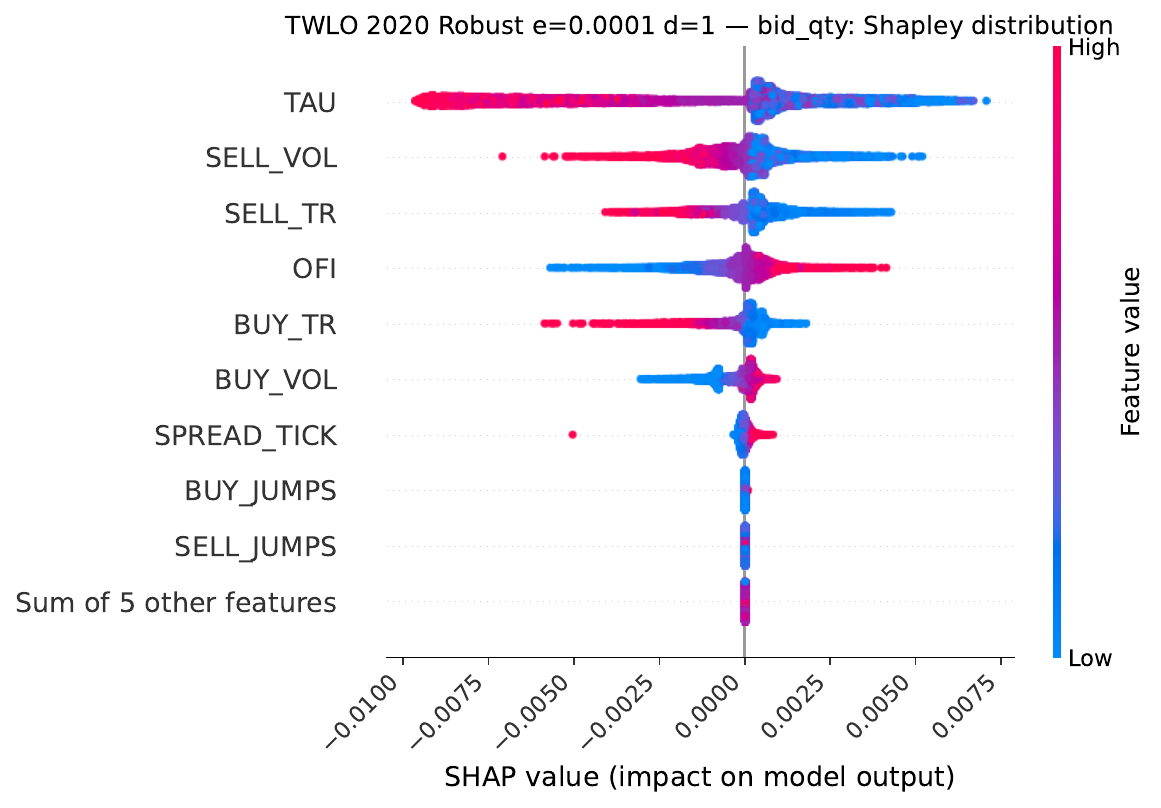}
            \caption{Robust - Bid quantity.}
        \end{subfigure}\hfill
        \begin{subfigure}{0.24\linewidth}
            \includegraphics[width=\linewidth]{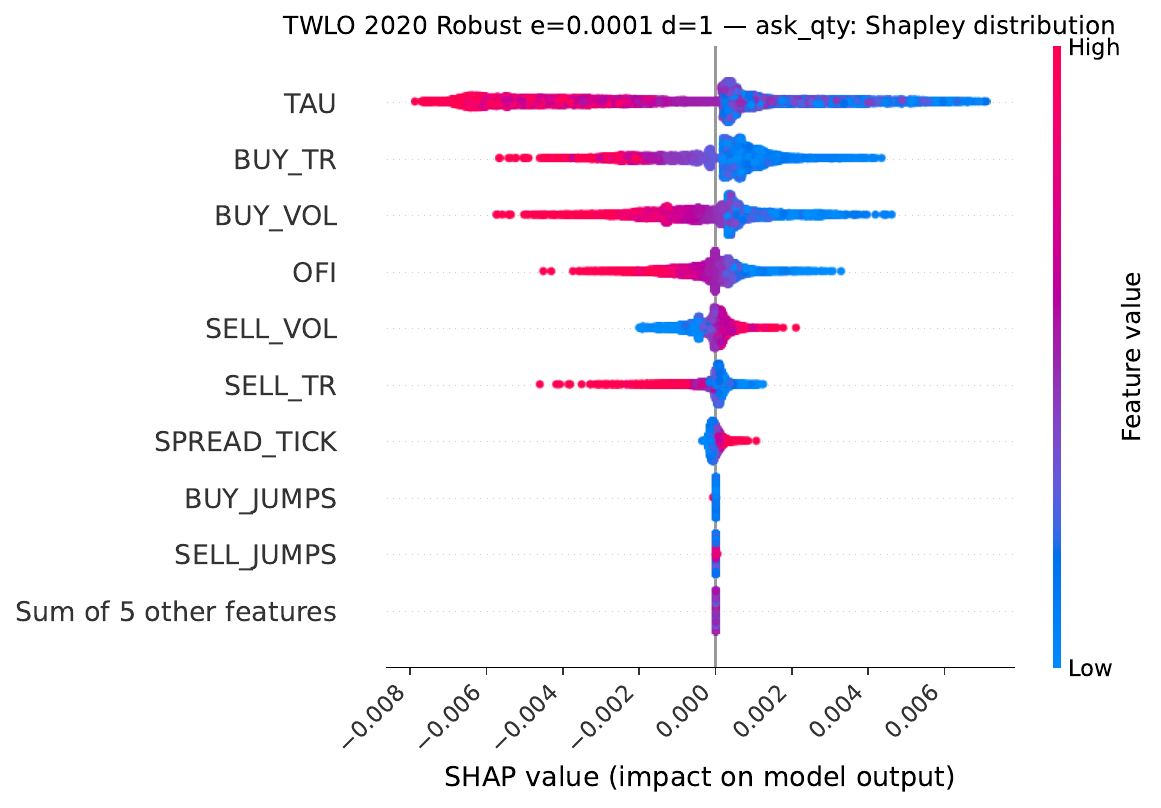}
            \caption{Robust - Ask quantity.}
        \end{subfigure}
    \end{minipage}}
    \caption{Distribution of absolute Shapley values for greedy and robust policies with $\bar{\varepsilon} = 0.0001$ and $\delta = 1$ on TWLO, 2020, where the robust policy is the best test Sharpe Pareto configuration. The decomposition remains concentrated on a small set of trade-flow and timing variables, with robustness mainly shifting their magnitude rather than changing which state features matter most.}
    \label{fig:shapley_twlo_2020}
\end{figure}

\subsection[Robust vs. Greedy Quoting Behaviour]{Robust vs.\ Greedy Quoting Behaviour}\label{sec:appendix_action_state}

Figures~\ref{fig:action_vs_state_real_aapl}--\ref{fig:action_vs_state_real_twlo} contrast the quoting policies of the robust and greedy agents as a function of each individual state feature, holding the remaining features at their mean and reporting the conditional mean (with $\pm 1$ standard deviation band) of each of the four quote components. For illustration, we use the best Sharpe robust configurations from the test Pareto frontier for each stock-year panel.

Across stocks and periods, the two agents share broadly the same conditional shape: their action curves move in the same direction with each state feature, reflecting a common market-making logic. The differences between robust and greedy are level shifts rather than shape changes\textemdash the robust agent posts marginally wider bid and ask half-spreads and smaller bid and ask quantities, with the gap most visible in volatile states. The level differences are small by construction: we cap the half-spread action at roughly the $95$th\textendash$99$th percentile of the empirical half-spread distribution ($0.03$ for AAPL and $0.5$ for TSLA) and cap the quantity at $0.01\%$ of total market size for both stocks. Because the empirical half-spread is small relative to these caps and the agent acts on a one-minute interval, both policies push the half-spread toward its upper bound and concentrate their optimisation on the quoted quantity. The robust agent's improvement over the greedy benchmark therefore comes mainly from a more cautious choice of quantity in adverse states, not from materially different spread shapes.

Reading the conditional curves feature by feature, several patterns emerge. 
As the number of sell market orders or the average sell trade size rises, both agents lower the bid quantity, and a symmetric pattern appears on the ask side as the agent's ask quantity decreases with the number of market buy orders and average market buy order size. 
Because quoted quantities are defined as percentages of opposite-side market flow. When opposite-side order arrival or trade size increases, the agent can quote a smaller fraction while maintaining a broadly similar expected absolute fill level. The order-flow-imbalance feature shows the canonical predictive-signal response of \citet{ContKukanovStoikov2014}: a positive imbalance, which forecasts upward price moves, induces both agents to enlarge the bid quantity, while a negative imbalance enlarges the ask quantity. As a function of time-to-close, both agents quote smaller quantities at the open and especially at the close and exhibit the well-known U-shape in intraday activity (\citealp{AndersenBollerslev1998}); the steeper drop near the close mirrors the end-of-day inventory-liquidation behaviour central to \citet{Cartea2015} and \citet{CarteaJaimungalPenalva2015}. 
In terms of inventory, the RL agents tend to lower bid quantity and increase ask quantity when inventory increases. This behavior is expected as the agents are supposed to keep their inventory low around 0.

\begin{figure}[H]
    \centering
    \makebox[\textwidth][c]{\begin{minipage}{1.14\textwidth}\centering
        \begin{subfigure}{\linewidth}
            \includegraphics[width=\linewidth]{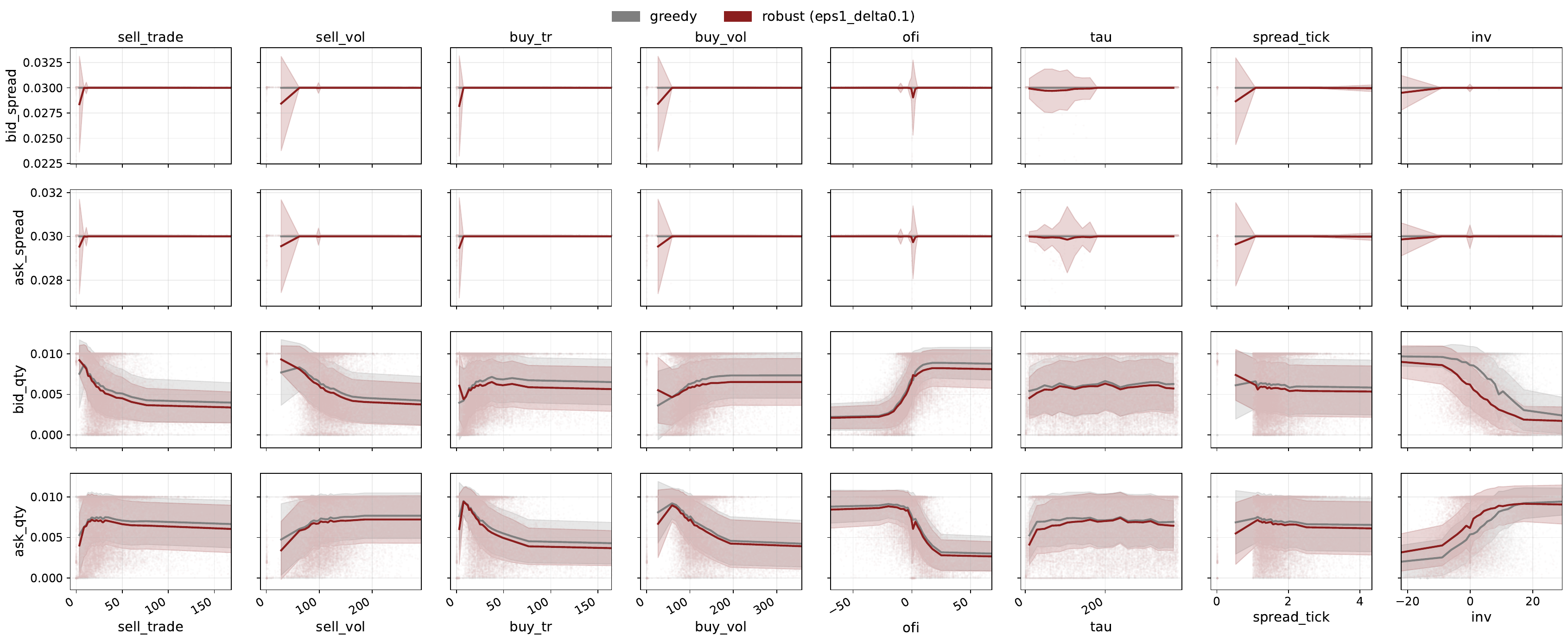}
            \caption{AAPL, 2019 ($\bar\varepsilon=1,\ \delta=0.1$).}
        \end{subfigure}

        \vspace{0.4em}

        \begin{subfigure}{\linewidth}
            \includegraphics[width=\linewidth]{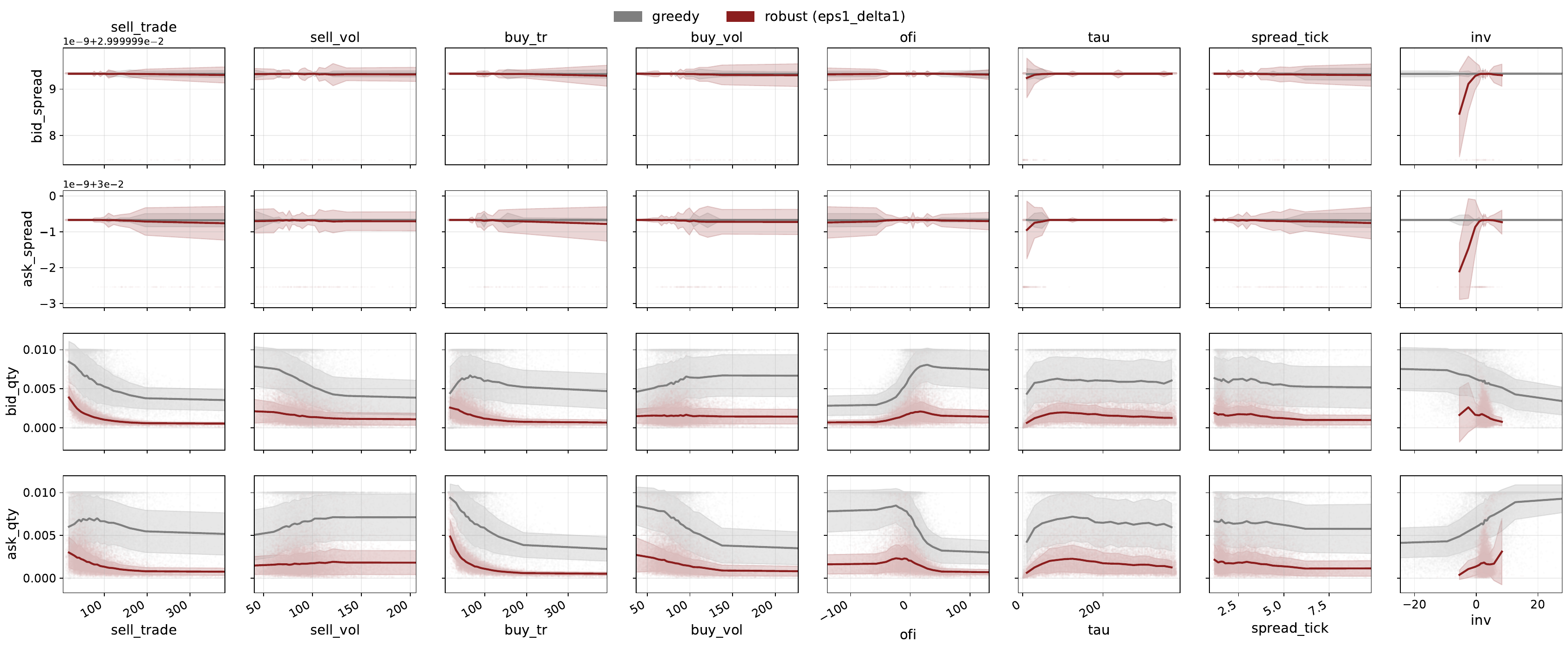}
            \caption{AAPL, 2020 ($\bar\varepsilon=1,\ \delta=1$).}
        \end{subfigure}
    \end{minipage}}
    \caption{Mean quote (bid quantity, ask quantity, bid spread, ask spread) of the robust and greedy agents conditional on each state feature for AAPL, with $\pm1$ s.d.\ bands. The robust policy corresponds to the best test Sharpe Pareto configuration and responds structurally to state features, mainly differing from the greedy benchmark through more conservative quantity adjustments; this difference is more pronounced in 2020.}
    \label{fig:action_vs_state_real_aapl}
\end{figure}

\begin{figure}[H]
    \centering
    \makebox[\textwidth][c]{\begin{minipage}{1.14\textwidth}\centering
        \begin{subfigure}{\linewidth}
            \includegraphics[width=\linewidth]{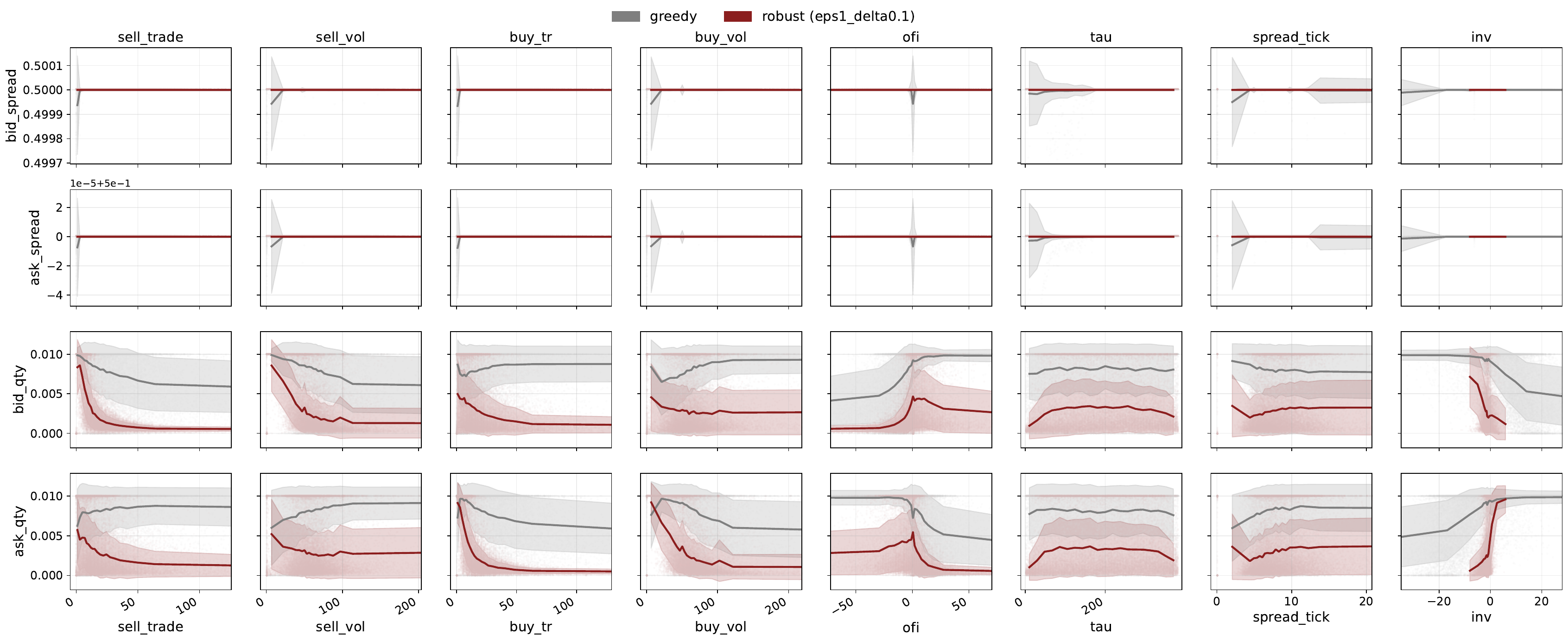}
            \caption{TSLA, 2019 ($\bar\varepsilon=1,\ \delta=0.1$).}
        \end{subfigure}

        \vspace{0.4em}

        \begin{subfigure}{\linewidth}
            \includegraphics[width=\linewidth]{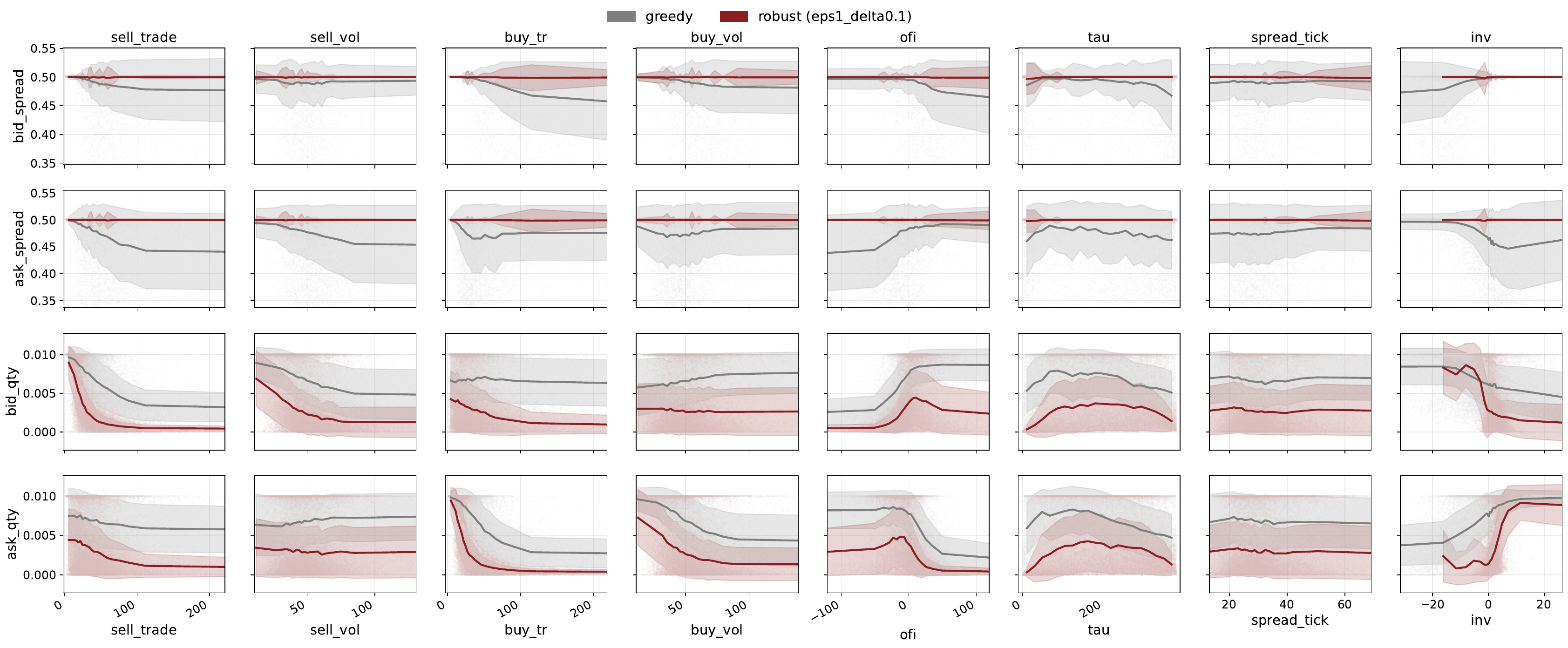}
            \caption{TSLA, 2020 ($\bar\varepsilon=1,\ \delta=0.1$).}
        \end{subfigure}
    \end{minipage}}
    \caption{Mean quote (bid quantity, ask quantity, bid spread, ask spread) of the robust and greedy agents conditional on each state feature for TSLA, with $\pm1$ s.d.\ bands. The robust policy corresponds to the best test Sharpe Pareto configuration and responds structurally to state features, mainly differing from the greedy benchmark through more defensive quantity and spread responses in riskier states.}
    \label{fig:action_vs_state_real_tsla}
\end{figure}

\begin{figure}[H]
    \centering
    \makebox[\textwidth][c]{\begin{minipage}{1.14\textwidth}\centering
        \begin{subfigure}{\linewidth}
            \includegraphics[width=\linewidth]{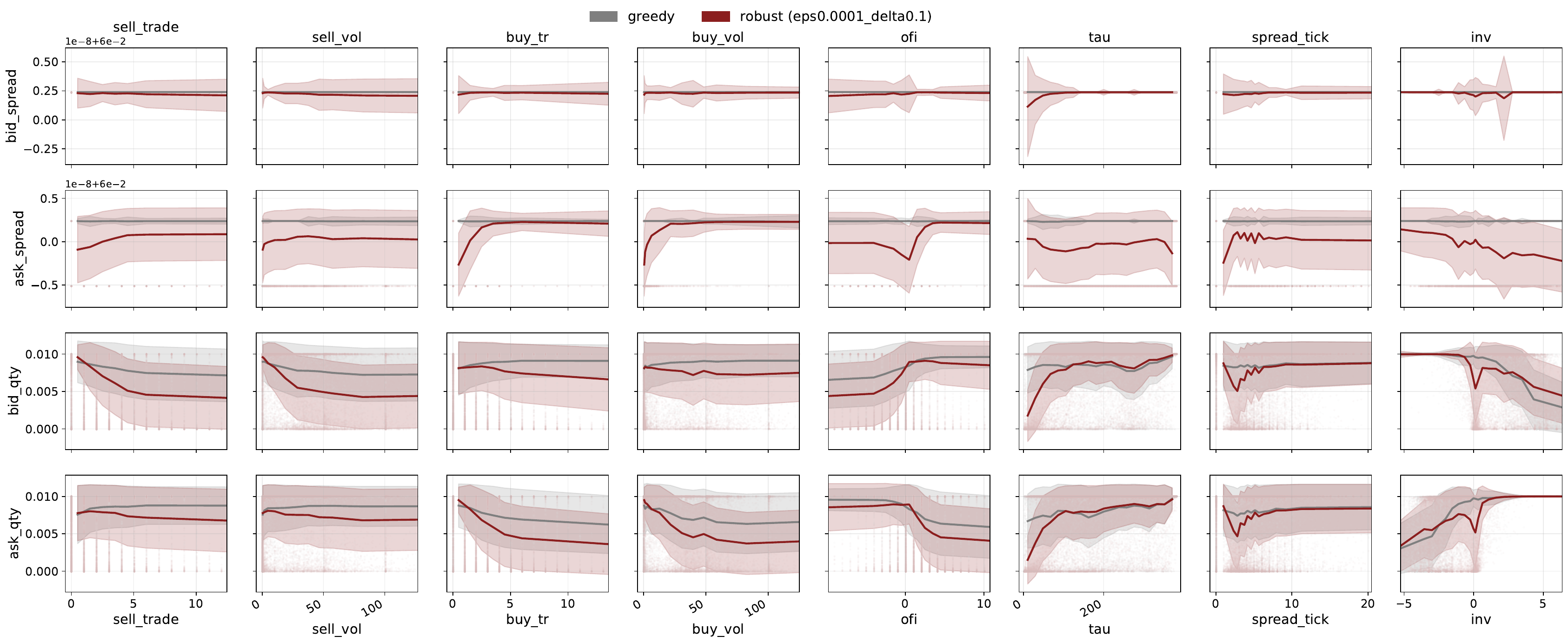}
            \caption{MKC, 2019 ($\bar\varepsilon=0.0001,\ \delta=0.1$).}
        \end{subfigure}

        \vspace{0.4em}

        \begin{subfigure}{\linewidth}
            \includegraphics[width=\linewidth]{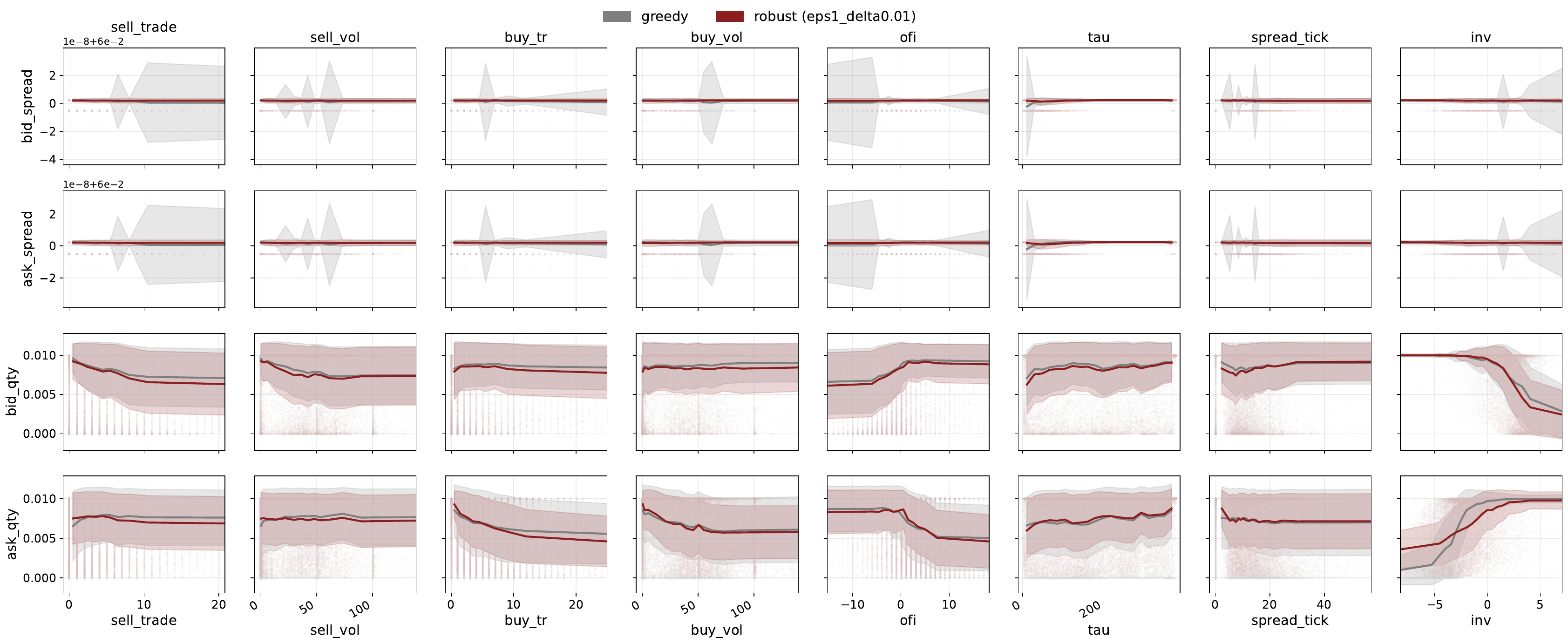}
            \caption{MKC, 2020 ($\bar\varepsilon=1,\ \delta=0.01$).}
        \end{subfigure}
    \end{minipage}}
    \caption{Mean quote (bid quantity, ask quantity, bid spread, ask spread) of the robust and greedy agents conditional on each state feature for MKC, with $\pm1$ s.d.\ bands. The robust policy corresponds to the best test Sharpe Pareto configuration and responds structurally to state features, though the two policies remain close overall, consistent with the comparatively weaker role of robustness in this stock.}
    \label{fig:action_vs_state_real_mkc}
\end{figure}

\begin{figure}[H]
    \centering
    \makebox[\textwidth][c]{\begin{minipage}{1.14\textwidth}\centering
        \begin{subfigure}{\linewidth}
            \includegraphics[width=\linewidth]{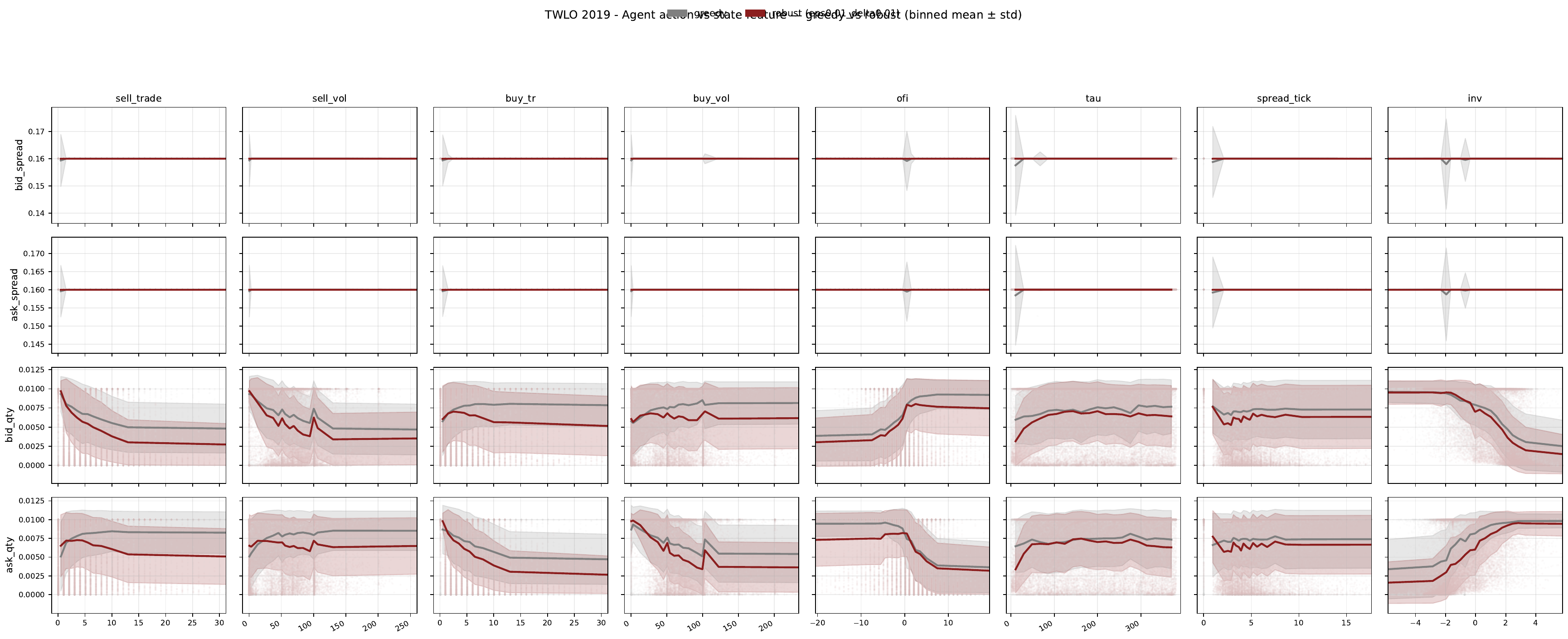}
            \caption{TWLO, 2019 ($\bar\varepsilon=0.01,\ \delta=0.01$).}
        \end{subfigure}

        \vspace{0.4em}

        \begin{subfigure}{\linewidth}
            \includegraphics[width=\linewidth]{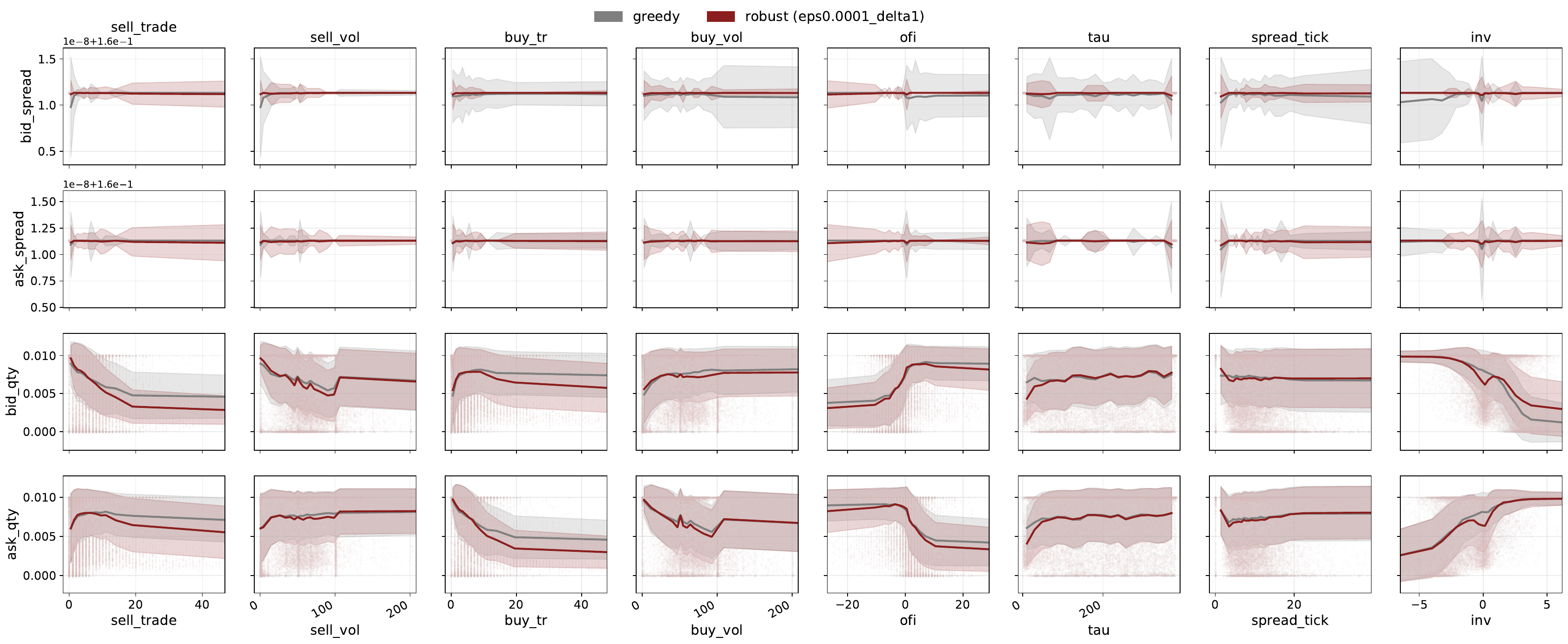}
            \caption{TWLO, 2020 ($\bar\varepsilon=0.0001,\ \delta=1$).}
        \end{subfigure}
    \end{minipage}}
    \caption{Mean quote (bid quantity, ask quantity, bid spread, ask spread) of the robust and greedy agents conditional on each state feature for TWLO, with $\pm1$ s.d.\ bands. The robust policy corresponds to the best test Sharpe Pareto configuration and responds structurally to state features, mainly differing from the greedy benchmark through more selective quantity provision in less favorable states.}
    \label{fig:action_vs_state_real_twlo}
\end{figure}

\subsection[Validation--Test Pareto Frontier]{Validation--Test Pareto Frontier}\label{sec:appendix_val_test_frontier}

To select $(\bar\varepsilon,\delta)$ in a way that does not overfit to the test sample, we choose configurations on the \emph{validation} Pareto frontier in (Sharpe, mean P\&L) space and evaluate them out-of-sample. Figure~\ref{fig:pareto_real} shows, for each stock\textendash period, the validation frontier (left subpanel) alongside the corresponding test performance (right subpanel); points are coloured by $\log_{10}(\delta)$. Configurations on the validation frontier dominate the greedy benchmark out-of-sample, indicating that the $(\bar\varepsilon,\delta)$ ranking transfers reliably from validation to test, including in the COVID-affected 2020 window.

The gap between validation and test performance, where it appears, is mostly driven by distributional shifts between the validation and test windows. Appendix~\ref{sec:full-summary} reports the train/validation/test distributions of the four flow features that the Shapley analysis identifies as the dominant inputs to the policy: the number of buy and sell trades, the average buy and sell trade sizes, the order-flow imbalance, and the spread (in ticks). The 2019 panels already show that good validation--test transfer does not require the two samples to match perfectly. For AAPL 2019 and TSLA 2019, the validation and test distributions differ visibly in trade counts and spread, yet the broad shape of the state distributions remains comparable enough for the validation ranking to remain informative out of sample. The 2020 panels display more pronounced shifts associated with the COVID-19 shock, but the nature of the shift differs across names. For AAPL 2020, the test window exhibits substantially higher trade counts, a wider spread distribution, and larger volatility-related measures than the validation window, while average trade sizes are not higher on test. For TSLA 2020, trade counts remain broadly similar between validation and test, whereas the clearest changes occur in spread, returns, and volatility-related variables. The lower-liquidity names exhibit a related but distinct pattern. For MKC 2019, the validation and test distributions remain broadly aligned in trade counts, trade sizes, and order-flow imbalance, which is consistent with the relatively small but stable transfer from the validation frontier to the test window. In MKC 2020, by contrast, the spread-tick distribution becomes more dispersed and the flow variables shift more noticeably between validation and test, while the stock's low-volatility, low-liquidity environment leaves only limited room for robust hyperparameters to separate themselves out of sample. For TWLO 2019, the validation and test distributions are again reasonably well aligned, and this is reflected in the comparatively clean transfer of the validation ranking to the test frontier. For TWLO 2020, the validation and test distributions remain broadly similar for trade counts, trade sizes, and order-flow imbalance, while the clearest difference appears in the spread distribution. Even so, the robust configurations continue to transfer comparatively well out of sample. Importantly, the \emph{ranking} of $(\bar\varepsilon,\delta)$ configurations on the validation frontier still produces points that dominate the greedy benchmark out-of-sample, which is the property robustness is meant to deliver: stable hyperparameter selection in the presence of distributional drift between the two evaluation windows. In short, hyperparameter selection performs best when the validation and test state distributions are broadly aligned, but it remains reasonably reliable even when some features shift materially, provided the validation frontier still captures the relevant risk-return trade-offs.

\begin{figure}[H]
    \centering
    \makebox[\textwidth][c]{\begin{minipage}{1.14\textwidth}\centering
        \begin{subfigure}{0.49\linewidth}
            \includegraphics[width=\linewidth]{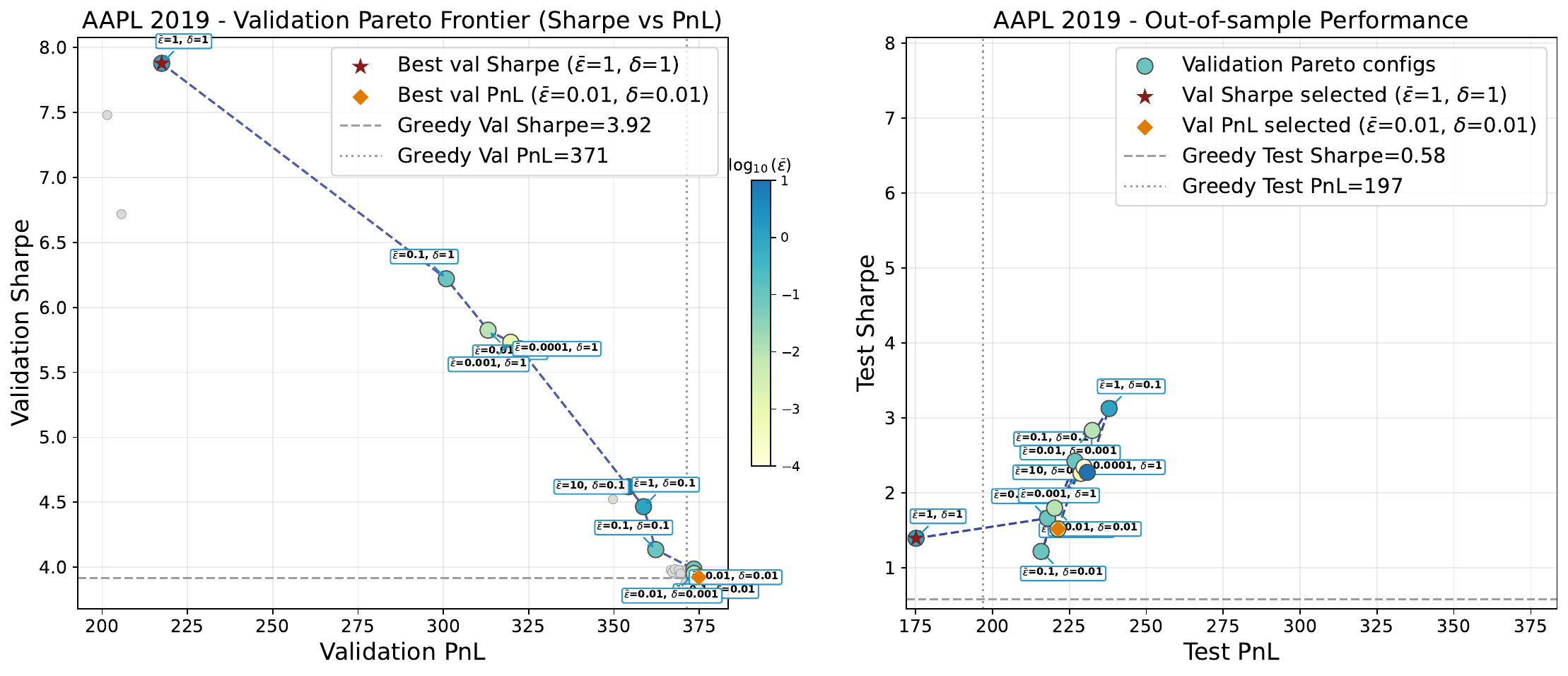}
            \caption{AAPL, 2019.}
        \end{subfigure}
        \begin{subfigure}{0.49\linewidth}
            \includegraphics[width=\linewidth]{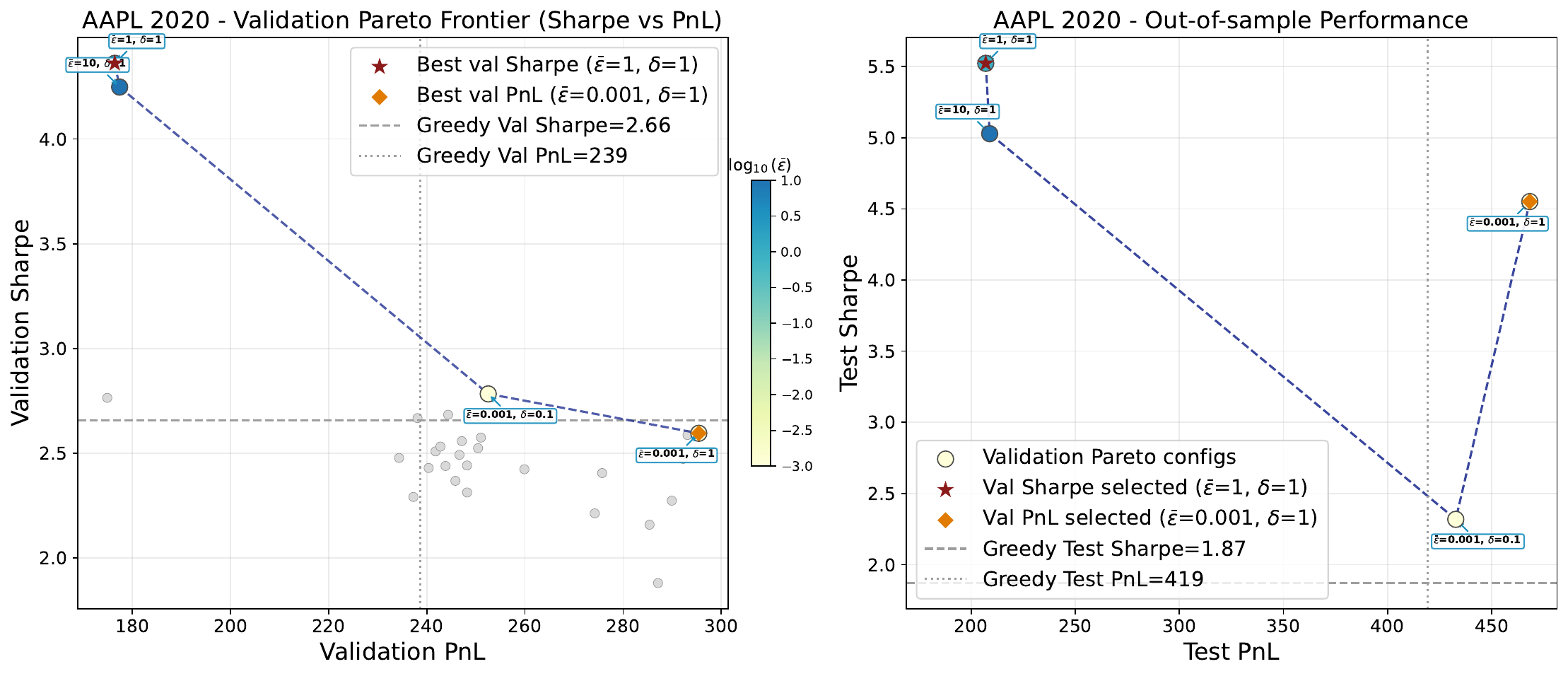}
            \caption{AAPL, 2020.}
        \end{subfigure}
        \begin{subfigure}{0.49\linewidth}
            \includegraphics[width=\linewidth]{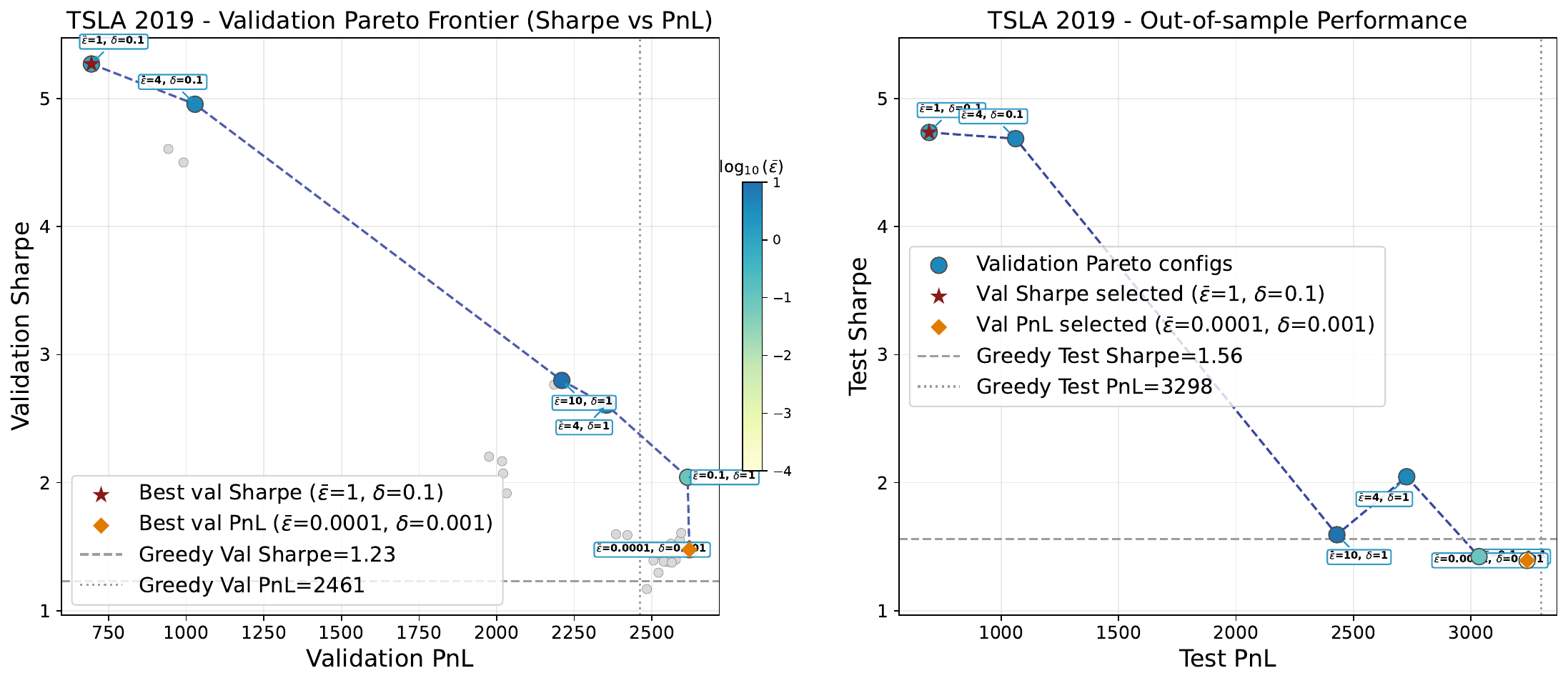}
            \caption{TSLA, 2019.}
        \end{subfigure}
        \begin{subfigure}{0.49\linewidth}
            \includegraphics[width=\linewidth]{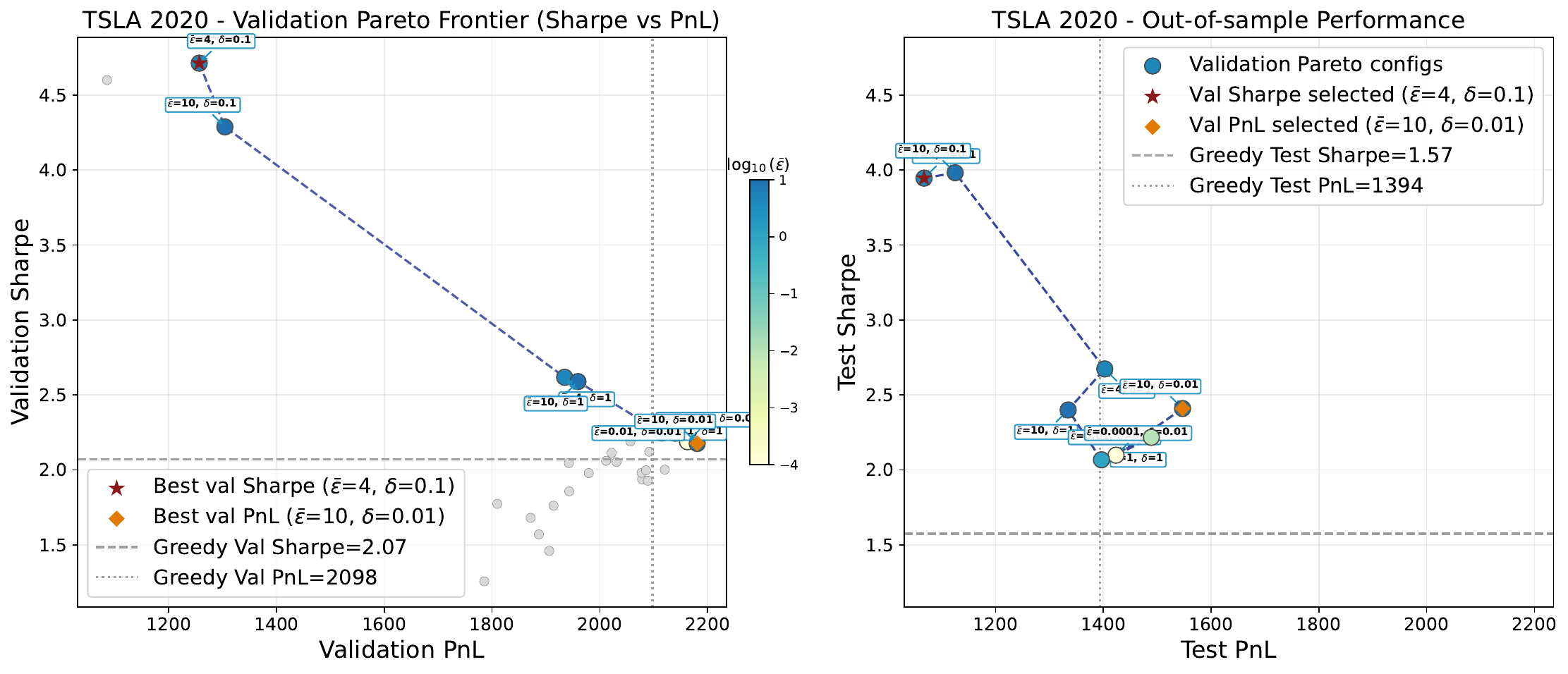}
            \caption{TSLA, 2020.}
        \end{subfigure}
        \begin{subfigure}{0.49\linewidth}
            \includegraphics[width=\linewidth]{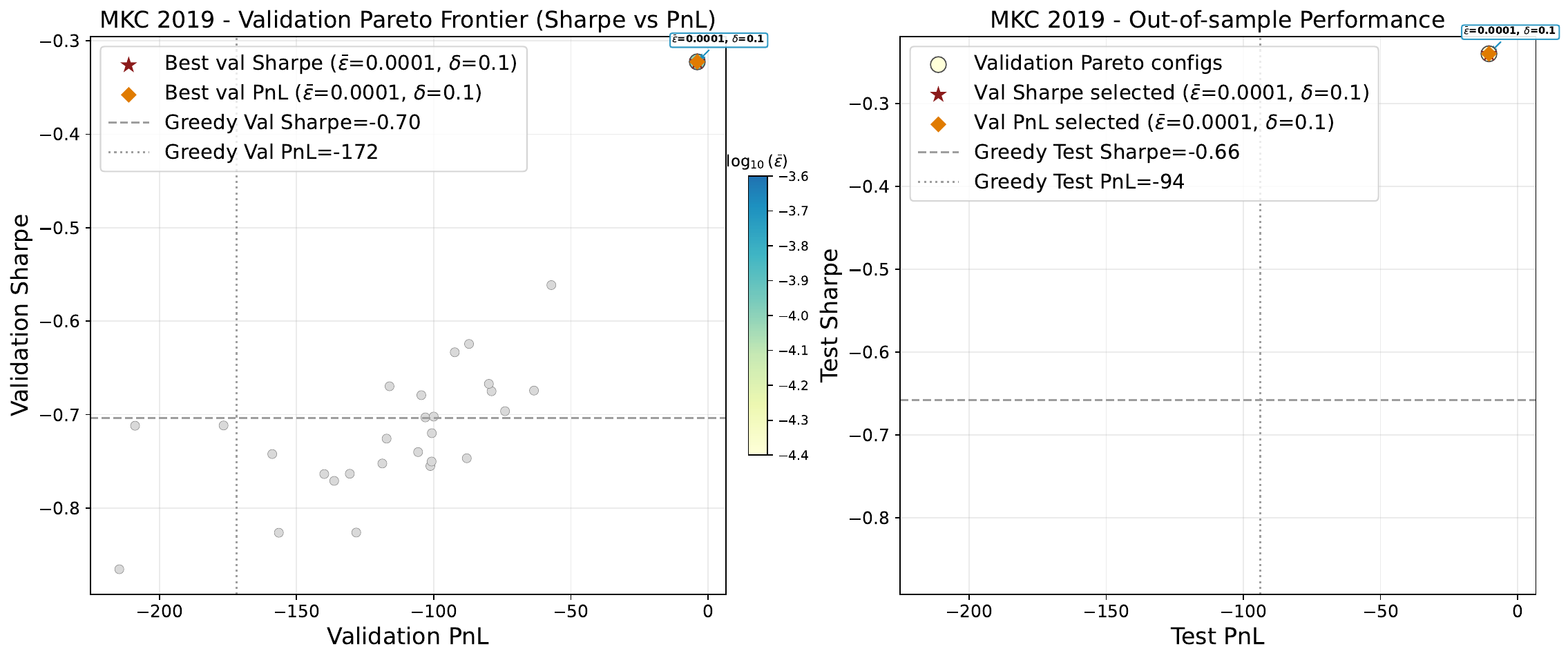}
            \caption{MKC, 2019.}
        \end{subfigure}
        \begin{subfigure}{0.49\linewidth}
            \includegraphics[width=\linewidth]{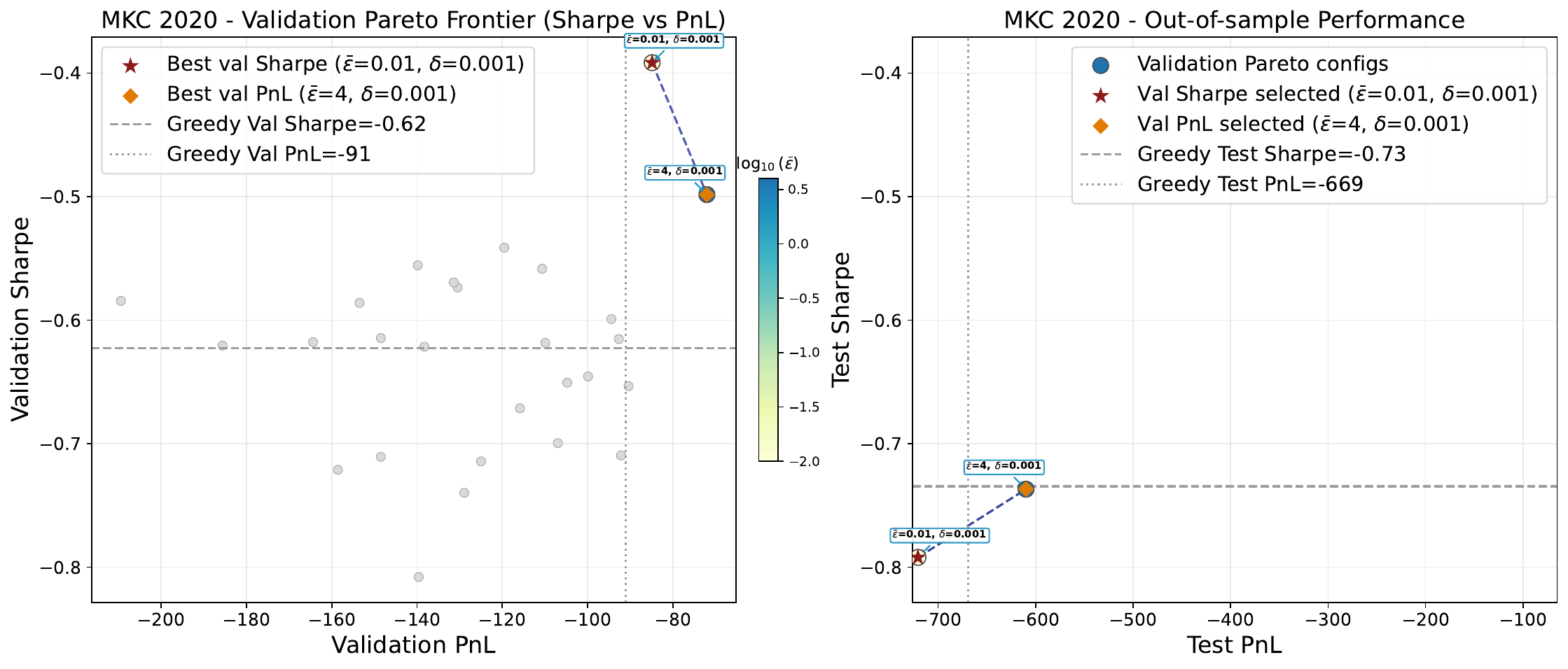}
            \caption{MKC, 2020.}
        \end{subfigure}
        \begin{subfigure}{0.49\linewidth}
            \includegraphics[width=\linewidth]{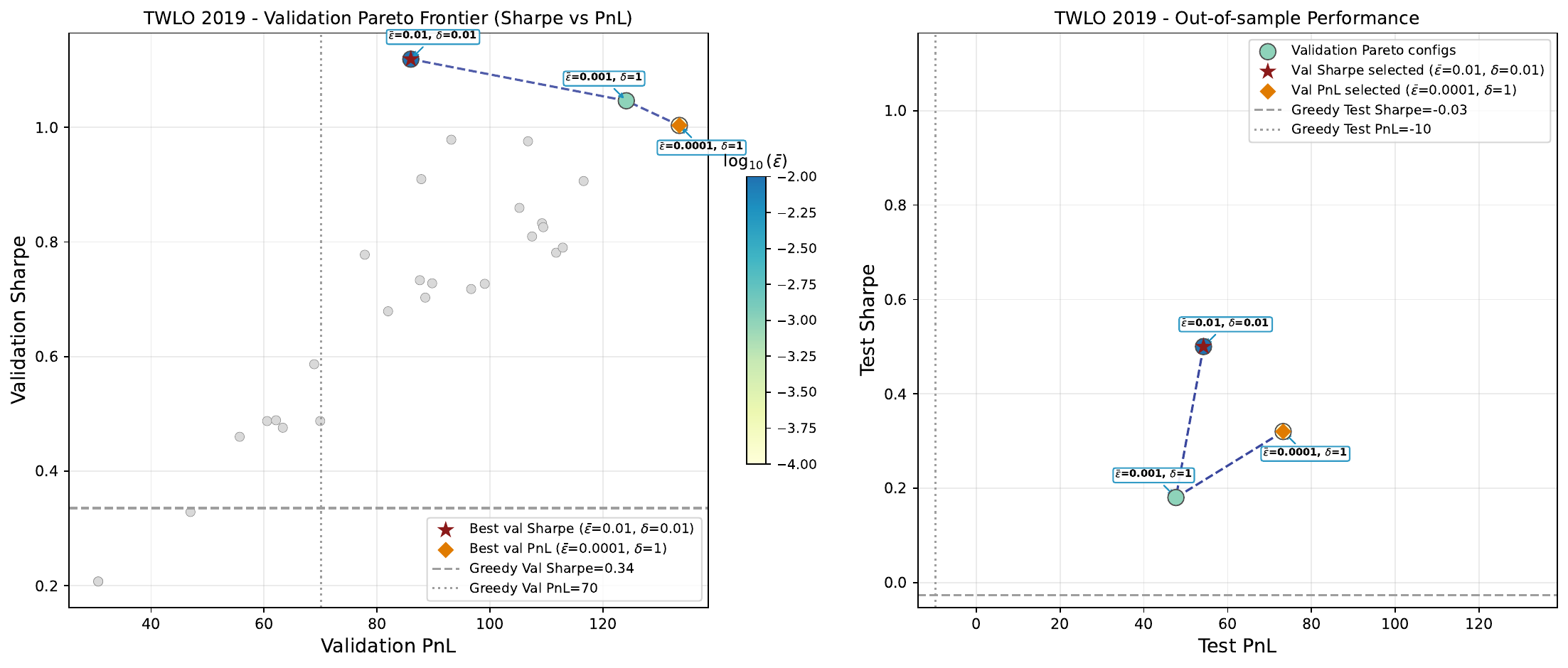}
            \caption{TWLO, 2019.}
        \end{subfigure}
        \begin{subfigure}{0.49\linewidth}
            \includegraphics[width=\linewidth]{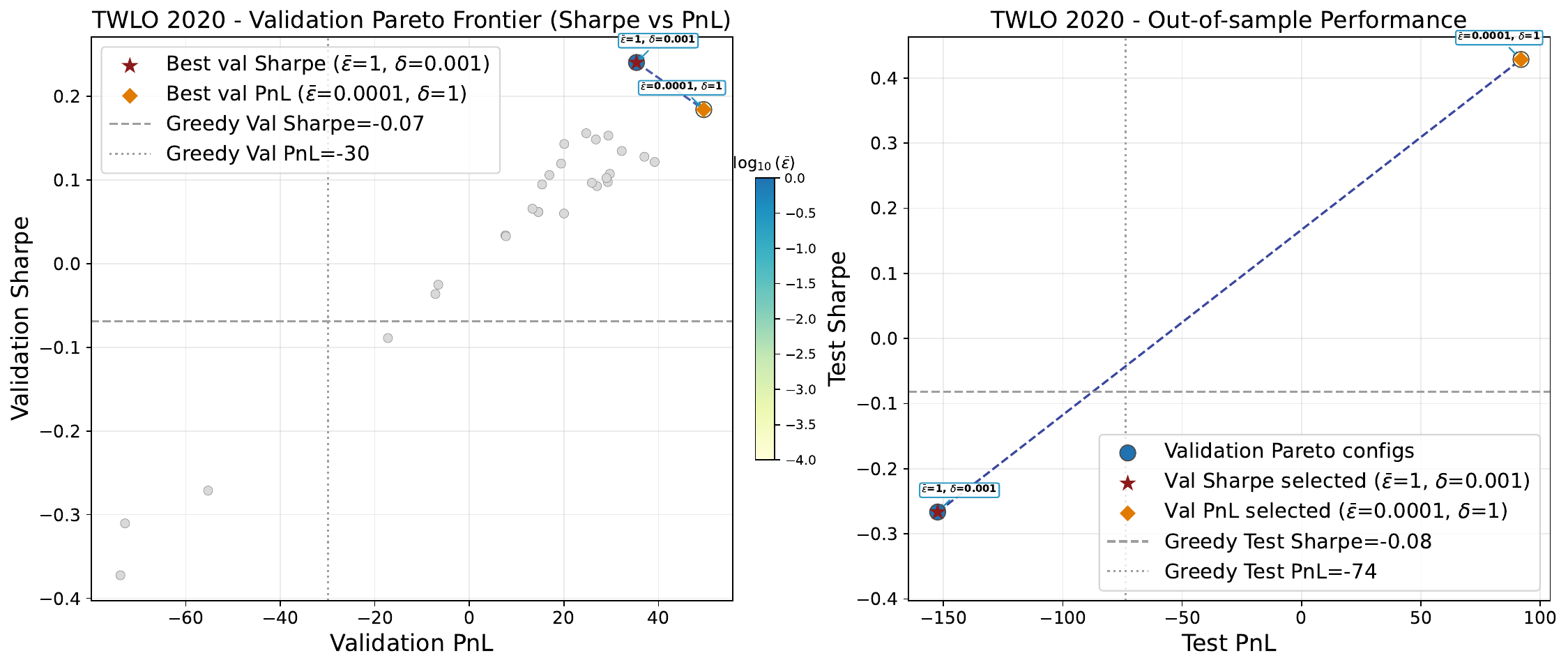}
            \caption{TWLO, 2020.}
        \end{subfigure}
    \end{minipage}}
    \caption{Validation Pareto frontier (left) and out-of-sample test performance (right) in Sharpe vs.\ mean-P\&L space. Colour encodes $\log_{10}(\delta)$; dashed line is the greedy benchmark. Robust configurations dominate the greedy benchmark out-of-sample across all stocks and periods, with the separation smaller for MKC and more pronounced in 2020.}
    \label{fig:pareto_real}
\end{figure}

\section[Proofs and Regularity Arguments]{Proofs and Regularity Arguments}\label{sec:proofs}
This appendix collects the auxiliary regularity results and the proofs of the market-making and robust dynamic programming statements from Sections~\ref{sec:setting} and~\ref{sec:method}.

\subsection[Regularity of the Sinkhorn Ambiguity Correspondence]{Regularity of the Sinkhorn Ambiguity Correspondence}\label{app:regularity-ambiguity}
Before reporting the main proofs, we establish the regularity results needed for the Sinkhorn ambiguity correspondence and the robust-MDP theorem.

We start by recalling the definition of weak convergence and its induced topology to which we refer as \emph{weak topology}. To this end for every $k \in \N$, $X \subseteq \R^k$, we define the set of continuous bounded functions $g: X\rightarrow \R$ via
\[
C_{b}(X, \R):=\left\{g \in C(X,\R) ~\middle|~ \sup_{x \in X }{|g(x)|}< \infty\right\},
\]
where $C(X,\R)$ denotes the set of continuous functions mapping from $X$ to  $\R$. Then, the weak topology on $\mathcal{M}_1(X)$ (denoted by $\tau_0$) is induced by the convergence
\begin{equation}\label{eq_convergence_topology_1}
\mu_n \xrightarrow{\tau_0} \mu \text{ for } n \rightarrow \infty ~\Leftrightarrow~ \lim_{n \rightarrow \infty} \int g \D \mu_n = \int g \D \mu \text{ for all } g \in C_{\operatorname{b}}(X, \R),
\end{equation}
compare also \cite[Theorem 6.9]{villani2009optimal}.

\begin{lem}[\cite{lu2025distributionally}~Lemma 6.4,~Lemma 6.6 and~Lemma 6.7]\label{prop_wasserstein}
Let $\Omloc \times A \ni (x,a) \mapsto \widehat{\PP}(x,a) \in  (\mathcal{M}_1(\Omega_{\operatorname{rnd}}),\tau_1)$ be continuous\footnote{We have $\PP_n \rightarrow \PP$ as $n \rightarrow \infty$ in the Wasserstein-1 topology denoted by $\tau_1$ if and only if $\PP_n \rightarrow \PP$ weakly as $n \rightarrow \infty$ and the first moments converge.} with finite first moments. Fix $\delta>0$ and assume that the primal radius is chosen so that the Sinkhorn ball \eqref{eq_definition_sinkhorn_ball_primal} is feasible, equivalently the shifted radius satisfies $\bar\varepsilon\ge0$ in the dual formulation. Then the set-valued map
\[
\Omloc \times A \ni (x,a) \twoheadrightarrow \mathcal{B}_{\varepsilon,\delta}(\widehat{\PP}(x,a))\subset\mathcal M_1(\Omega_{\operatorname{rnd}})
\]
is nonempty, compact-valued, and continuous w.r.t.\ the weak topology $\tau_0$.
\end{lem}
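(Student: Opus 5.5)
The plan is to prove the three properties of the correspondence $(x,a)\twoheadrightarrow\mathcal B_{\varepsilon,\delta}(\widehat\PP(x,a))$ separately. We work with the primal definition of $W_\delta$ and use its joint lower semicontinuity together with an explicit recovery construction.

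\emph{Nonemptiness.} By the feasibility characterization stated before \eqref{eq:sinkhorn_primal_dual_equality}, $\bar\varepsilon_{x,a}\ge0$ is equivalent to the existence of a coupling $\gamma$ with first marginal $\widehat\PP(x,a)$ whose cost plus $\delta H(\gamma\mid\widehat\PP(x,a)\otimes\nu)$ is at most $\varepsilon$. The canonical candidate is the Sinkhorn-kernel plan $\gamma^*(d\tilde x,d\tilde z)=\widehat\PP(x,a)(d\tilde x)\,Q_{\tilde x,\delta}(d\tilde z)$. Its objective value is exactly $-\delta\,\E_{\widehat\PP(x,a)}[\log Z(\tilde x,\delta)]$, so its second marginal lies in the ball precisely when $\bar\varepsilon_{x,a}\ge0$.

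\emph{Compactness.} Take $\PP\in\mathcal B_{\varepsilon,\delta}(\widehat\PP(x,a))$ with near-optimal coupling $\gamma$. We will show tightness via the entropy bound. Since $c\ge0$, we have $H(\gamma\mid\widehat\PP\otimes\nu)\le\varepsilon/\delta$. The Donsker--Varadhan inequality then gives, for measurable $B$,
\[
\gamma(\Omega_{\rm rnd}\times B)\le \frac{\log 2+\varepsilon/\delta}{\log(1+1/\nu(B))}.
\]
This is uniformly small when $\nu(B)$ is small, so tightness of $\nu$ transfers to the ball. In the continuity step the same bound will be applied over a compact set of anchors $\widehat\PP(x_n,a_n)$, which are tight by $\tau_1$-continuity. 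Closedness follows from lower semicontinuity of $\PP\mapsto W_\delta(\widehat\PP,\PP)$ under weak convergence: take a weak limit of optimal couplings (the marginals are tight), then use lower semicontinuity of the linear cost term (since $c\ge0$ is lsc) and of relative entropy. Prokhorov then gives compactness.

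\emph{Continuity}, meaning upper and lower hemicontinuity; this is the main step. For upper hemicontinuity, take $(x_n,a_n)\to(x,a)$ and $\PP_n$ in the corresponding balls. The uniform tightness above yields a convergent subsequence $\PP_n\to\PP$. Joint lower semicontinuity of $(\mu,\PP)\mapsto W_\delta(\mu,\PP)$ then gives $W_\delta(\widehat\PP(x,a),\PP)\le\liminf_n W_\delta(\widehat\PP(x_n,a_n),\PP_n)\le\varepsilon$; here the reference $\nu$ is fixed, so the entropy term is jointly lsc.

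Lower hemicontinuity is the main obstacle. Given $\PP$ in the limit ball, we must produce $\PP_n$ in the $n$-th balls converging to $\PP$; the difficulty is that the constraint may be active, so $W_\delta=\varepsilon$ exactly. We will first reduce to the strict interior, using convexity of $W_\delta(\mu,\cdot)$ (the objective is jointly convex in $\gamma$ and marginals mix). Set $\PP^\theta:=(1-\theta)\PP+\theta\PP^*$, where $\PP^*$ is the second marginal of $\gamma^*$. When $\bar\varepsilon_{x,a}>0$, this gives $W_\delta(\widehat\PP,\PP^\theta)<\varepsilon$ for $\theta>0$. The boundary case $\bar\varepsilon=0$ is degenerate: the ball is the singleton $\{\PP^*\}$, and one checks directly that $\PP^*_n\to\PP^*$.

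For interior points we transport the optimal coupling $\gamma(d\tilde x,d\tilde z)=\widehat\PP(d\tilde x)K(\tilde x,d\tilde z)$ to the new anchor. We glue a $W_1$-optimal coupling of $\widehat\PP(x_n,a_n)$ and $\widehat\PP(x,a)$ with the kernel $K$, and take $\PP_n$ to be the resulting second marginal. The entropy term is unchanged up to data processing. The transport cost changes by at most $W_1(\widehat\PP(x_n,a_n),\widehat\PP(x,a))\to0$, assuming $c$ is Lipschitz in its first argument, as for the weighted $\ell_1$ cost; this estimate is the step I expect to need most care. Hence $\PP_n$ lies in the $n$-th ball for large $n$, and $\PP_n\to\PP^\theta$. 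A diagonal argument in $\theta\downarrow0$ then concludes.
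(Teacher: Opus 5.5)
Your proposal is correct, and it takes a different route from the paper because the paper gives no argument of its own for this lemma. Nonemptiness, compactness and continuity (upper and lower hemicontinuity) are imported wholesale from Lemmas 6.4, 6.6 and 6.7 of \cite{lu2025distributionally}.

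Your self-contained route exposes hypotheses that the statement leaves implicit, since it never specifies $c$ or $\nu$:
\begin{enumerate}
\item $\nu$ must be a probability measure for the entropy--tightness bound.
\item $c\ge 0$ must be lower semicontinuous for closedness and upper hemicontinuity.
\item $c$ must be Lipschitz in its first argument for the gluing estimate in lower hemicontinuity. This is not in the statement as written, but it holds for the weighted $\ell_1$ cost the paper uses.
\end{enumerate}
Your argument also shows where $\tau_1$-continuity of $\widehat{\PP}$, rather than weak continuity, is actually used. Upper hemicontinuity needs only weak convergence of the anchors. The convergence $W_1(\widehat{\PP}(x_n,a_n),\widehat{\PP}(x,a))\to 0$ is exactly what pays for moving the coupling to the new anchor. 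The citation buys brevity, at the price of hiding these conditions from the reader.

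A few places to tighten:
\begin{enumerate}
\item Your tightness bound does not depend on the anchor, because $(\widehat{\PP}\otimes\nu)(\Omega_{\operatorname{rnd}}\times B)=\nu(B)$. Tightness of the anchors is needed only to extract convergent couplings.
\item In the upper-hemicontinuity step, the reference measure $\widehat{\PP}(x_n,a_n)\otimes\nu$ moves with $n$. The correct justification is joint lower semicontinuity of $(\gamma,\rho)\mapsto H(\gamma\mid\rho)$, together with $\widehat{\PP}(x_n,a_n)\otimes\nu\to\widehat{\PP}(x,a)\otimes\nu$ weakly. It is not that $\nu$ is fixed.
\item In the gluing step, let $\pi_n$ be a $W_1$-optimal coupling of $\widehat{\PP}(x_n,a_n)$ and $\widehat{\PP}(x,a)$. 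The plan $\int \pi_n(d\tilde x',d\tilde x)K(\tilde x,d\tilde z)$ has second marginal exactly $\PP^\theta$. So you get $\PP^\theta\in\mathcal{B}_{\varepsilon,\delta}(\widehat{\PP}(x_n,a_n))$ directly for large $n$, with nothing to check about convergence.
\item The entropy term of the glued plan is bounded by $H(\gamma\mid\widehat{\PP}(x,a)\otimes\nu)$. This follows from data processing applied to $\pi_n(d\tilde x',d\tilde x)K(\tilde x,d\tilde z)$ against $\pi_n(d\tilde x',d\tilde x)\nu(d\tilde z)$, marginalizing out $\tilde x$. It is an inequality, not ``unchanged'', but it runs in the direction you need.
\item The singleton claim at $\bar\varepsilon_{x,a}=0$ follows from the identity
\[
\int c\,d\gamma+\delta H(\gamma\mid\widehat{\PP}\otimes\nu)=\delta H(\gamma\mid\gamma^*)-\delta\,\E_{\tilde x\sim\widehat{\PP}}\bigl[\log Z(\tilde x,\delta)\bigr],
\]
valid for every $\gamma$ with first marginal $\widehat{\PP}$. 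It shows the minimal value is attained only at $\gamma=\gamma^*$.
\item In that boundary case you also need the standing assumption $\bar\varepsilon_{x_n,a_n}\ge 0$ to place $\PP^*_n$ in the $n$-th ball. You then need weak continuity of $\tilde x\mapsto Q_{\tilde x,\delta}$, which follows from continuity of $c$ and $e^{-c/\delta}\le 1$, to conclude $\PP^*_n\to\PP^*$.
\end{enumerate}
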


\begin{lem}\label{lem_assumptions_fulfilled_wasserstein}
Fix $\delta>0$ and a feasible Sinkhorn radius, i.e.\ a radius satisfying the shifted-radius condition $\bar\varepsilon\ge0$ in Section~\ref{subsec:sinkhorn-dual}. Then the lifted set-valued map
\begin{align*}
    \Omloc \times A &\twoheadrightarrow \mathcal{M}_1(\Omloc), \\
    (x,a) &\longmapsto \mathcal{P}^{\pi}_{\varepsilon,\delta}(x,a)
\end{align*}
as defined in \eqref{eq_definition_projected_local_ambiguity} is nonempty, compact-valued, and continuous w.r.t.\ the weak topology.
\end{lem}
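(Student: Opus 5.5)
The plan is to realise $\mathcal{P}^{\pi}_{\varepsilon,\delta}$ as the image of the Sinkhorn ball correspondence from Lemma~\ref{prop_wasserstein} under a jointly continuous push-forward. Define
\[
F:\Omloc\times A\times\mathcal M_1(\Omega_{\operatorname{rnd}})\to\mathcal M_1(\Omloc),\qquad F(x,a,\widetilde{\PP}):=(\Phi_\pi^{x,a})_{\#}\bigl(\delta_{\pi(x)}\otimes\widetilde{\PP}\bigr),
\]
so that $\mathcal{P}^{\pi}_{\varepsilon,\delta}(x,a)=F\bigl(x,a,\mathcal{B}_{\varepsilon,\delta}(\widehat{\PP}(x,a))\bigr)$.

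\textbf{Nonemptiness.} Since $\mathcal{B}_{\varepsilon,\delta}(\widehat{\PP}(x,a))\neq\emptyset$ by Lemma~\ref{prop_wasserstein} under the feasibility condition $\bar\varepsilon\ge0$, its image under $F$ is nonempty.

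\textbf{Continuity of $F$.} Take $(x_n,a_n,\widetilde{\PP}_n)\to(x,a,\widetilde{\PP})$ with weak convergence in the last coordinate, and fix $g\in C_b(\Omloc)$. We have
\[
\int g\,dF(x_n,a_n,\widetilde{\PP}_n)=\int g\bigl(\Phi_\pi(x_n,a_n,z)\bigr)\,\widetilde{\PP}_n(dz).
\]
The maps $\pi$ (a coordinate shift) and $\Phi_\pi$ are continuous: the inventory update uses the continuous fill functions $\varphi^{\operatorname{bid}},\varphi^{\operatorname{ask}}$ and products of innovation coordinates, and the clipping and the update $\max\{\tau-1,0\}$ are continuous, with the $\mathbb N_0$-coordinates discrete. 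Hence $h_n(z):=g(\Phi_\pi(x_n,a_n,z))$ is uniformly bounded and converges to $h(z):=g(\Phi_\pi(x,a,z))$ uniformly on compact sets, by uniform continuity of $\Phi_\pi$ on compacts of $\{x_n\}\cup\{x\}\times A\times K$. Prokhorov tightness of $(\widetilde{\PP}_n)$ then lets us split
\[
\int h_n\,d\widetilde{\PP}_n-\int h\,d\widetilde{\PP}=\int(h_n-h)\,d\widetilde{\PP}_n+\Bigl(\int h\,d\widetilde{\PP}_n-\int h\,d\widetilde{\PP}\Bigr).
\]
The first term is small because the integrand is small on a large compact set and bounded off it; the second vanishes by weak convergence, since $h$ is bounded and continuous. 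So $F$ is jointly continuous. This equicontinuity/tightness step is the main technical point. If $\Phi_\pi$ fails to be continuous at the clipping or projection boundaries, I would replace it by a continuous version, namely two-sided projections, which are continuous.

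\textbf{Compactness.} For fixed $(x,a)$, $\mathcal{P}^{\pi}_{\varepsilon,\delta}(x,a)$ is the continuous image of the compact set $\mathcal{B}_{\varepsilon,\delta}(\widehat{\PP}(x,a))$, hence compact.

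\textbf{Upper hemicontinuity.} Use the sequential characterisation (closed graph with local compactness). Let $(x_n,a_n)\to(x,a)$ and $\mu_n=F(x_n,a_n,\widetilde{\PP}_n)$ with $\widetilde{\PP}_n\in\mathcal{B}_{\varepsilon,\delta}(\widehat{\PP}(x_n,a_n))$. By upper hemicontinuity and compact-valuedness of the ball correspondence, a subsequence satisfies $\widetilde{\PP}_{n_k}\to\widetilde{\PP}\in\mathcal{B}_{\varepsilon,\delta}(\widehat{\PP}(x,a))$. Then $\mu_{n_k}\to F(x,a,\widetilde{\PP})\in\mathcal{P}^{\pi}_{\varepsilon,\delta}(x,a)$ by continuity of $F$.

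\textbf{Lower hemicontinuity.} Given $\mu=F(x,a,\widetilde{\PP})$ and $(x_n,a_n)\to(x,a)$, lower hemicontinuity of the ball yields $\widetilde{\PP}_n\in\mathcal{B}_{\varepsilon,\delta}(\widehat{\PP}(x_n,a_n))$ with $\widetilde{\PP}_n\to\widetilde{\PP}$. Then $F(x_n,a_n,\widetilde{\PP}_n)\to\mu$, again by continuity of $F$.

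Together, upper and lower hemicontinuity give continuity of $(x,a)\mapsto\mathcal{P}^{\pi}_{\varepsilon,\delta}(x,a)$ with respect to the weak topology, which completes the statement of Lemma~\ref{lem_assumptions_fulfilled_wasserstein}.
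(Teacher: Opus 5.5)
Your lift is correct, and in one respect more careful than the paper's. The paper only notes that $z\mapsto g(\Phi_\pi(x,a,z))$ is bounded and continuous for fixed $(x,a)$. Continuity of the lifted correspondence, however, needs $F$ to be jointly continuous in $(x,a,\widetilde{\PP})$, and you prove this via uniform convergence on compacts plus tightness. Your separate upper and lower hemicontinuity arguments are also fine.

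The gap is upstream. You invoke Lemma~\ref{prop_wasserstein} as if its conclusion were unconditional, but that lemma assumes $(x,a)\mapsto\widehat{\PP}(x,a)$ is continuous into $(\mathcal M_1(\Omega_{\operatorname{rnd}}),\tau_1)$ with finite first moments. The statement you are proving does not assume this. It has to be checked for the concrete deep-ensemble reference model of Section~\ref{transition_prob}, and that check is the main body of the paper's proof. Without it you have none of the properties of the Sinkhorn balls that your argument uses: nonemptiness, compactness, and upper and lower hemicontinuity.

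To close the gap, realise the reference law on a common probability space. Take $\kappa\sim\mathrm{Unif}\{1,\dots,K\}$ and independent standard normals $\xi_c$, and set $Z^{(c)}(y,k,\xi_c)=\exp(\mu^{(c)}_{\theta_k}(y)+\sigma^{(c)}_{\theta_k}(y)\xi_c)$ for $c\in\mathcal C_{\ge0}$ and $Z^{(c)}(y,k,\xi_c)=\mu^{(c)}_{\theta_k}(y)+\sigma^{(c)}_{\theta_k}(y)\xi_c$ for $c\in\mathcal C_{\mathbb R}$. Then $\widehat{\PP}(y)=\mathcal L(Z(y,\kappa,\xi))$. If $y_n\to y$, continuity of the network outputs gives $Z(y_n,\kappa,\xi)\to Z(y,\kappa,\xi)$ almost surely. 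On the compact set $\{y_n\}\cup\{y\}$ the outputs are bounded by some $M$, which gives the integrable bound $|Z^{(c)}|\le e^{M+M|\xi_c|}$ (respectively $M+M|\xi_c|$). Dominated convergence then yields $\mathbb E\|Z(y_n,\kappa,\xi)-Z(y,\kappa,\xi)\|\to0$. This pair is a coupling of $\widehat{\PP}(y_n)$ and $\widehat{\PP}(y)$, so $W_1(\widehat{\PP}(y_n),\widehat{\PP}(y))\to0$, and the same bound gives finite first moments. Only after this step may you apply Lemma~\ref{prop_wasserstein} and run your push-forward argument.
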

\begin{proof}[Proof of Lemma~\ref{lem_assumptions_fulfilled_wasserstein}]
{
By Lemma~\ref{prop_wasserstein}, it is enough first to prove that the reference innovation kernel
\[
        \Omega_{\mathrm{loc}}\times A \ni (x,a)
        \longmapsto \widehat{\PP}(x,a)\in \mathcal M_1(\Omega_{\operatorname{rnd}})
\]
from Section~\ref{transition_prob} is continuous with respect to the Wasserstein--$1$ topology and has finite first moments. Let $y:=(x,a)\in\Omega_{\mathrm{loc}}\times A$. The reference innovation law is generated by drawing an ensemble index $\kappa\sim\mathrm{Unif}\{1,\ldots,K\}$ and then drawing the innovation vector
\[
        Z(y,\kappa)=\bigl(Z^{(c)}(y,\kappa)\bigr)_{c\in\mathcal C_{\ge0}\cup\mathcal C_{\mathbb R}}.
\]
Equivalently, on a common probability space, let $\xi=(\xi_c)_{c\in\mathcal C_{\ge0}\cup\mathcal C_{\mathbb R}}$ be independent standard normal coordinates and define
\[
        Z^{(c)}(y,k,\xi_c)
        :=
        \begin{cases}
        \exp\!\left(\mu_{\theta_k}^{(c)}(y)+\sigma_{\theta_k}^{(c)}(y)\xi_c\right), & c\in\mathcal C_{\ge0},\\[0.4em]
        \mu_{\theta_k}^{(c)}(y)+\sigma_{\theta_k}^{(c)}(y)\xi_c, & c\in\mathcal C_{\mathbb R}.
        \end{cases}
\]
Then $\widehat\PP(y)=\mathcal L(Z(y,\kappa,\xi))$ on $\Omega_{\operatorname{rnd}}$. If $y_n\to y$, the continuity of the neural-network maps $\mu_{\theta_k}^{(c)}$ and $\sigma_{\theta_k}^{(c)}$ implies with the continuous mapping theorem (\cite{billingsley2013convergence}) that $Z^{(c)}(y_n,k,\xi_c)\to Z^{(c)}(y,k,\xi_c)$ almost surely for every $k$ and every component $c$. Local boundedness of the network outputs on the convergent sequence, together with the Gaussian and lognormal moment bounds, gives an integrable dominating random variable. Hence dominated convergence yields
\[
        \mathbb E\bigl[\|Z(y_n,\kappa,\xi)-Z(y,\kappa,\xi)\|\bigr]\longrightarrow0.
\]
The coupled pair $(Z(y_n,\kappa,\xi),Z(y,\kappa,\xi))$ is a coupling of $\widehat\PP(y_n)$ and $\widehat\PP(y)$, so $$
W_1(\widehat\PP(y_n),\widehat\PP(y)) \leq \mathbb E\bigl[\|Z(y_n,\kappa,\xi)-Z(y,\kappa,\xi)\|\bigr] \to 0.$$
The same domination argument gives finite first moments $\int_{\Omega_{\rm rnd}} \|Z(y,x,z)\| \widehat{\PP}(dz) = \E[\|Z(y,x,\xi)\|<\infty$.

Therefore the unlifted Sinkhorn-ball correspondence $\Omloc \times A \ni (x,a)\twoheadrightarrow\mathcal B_{\varepsilon,\delta}(\widehat\PP(x,a))$ is nonempty, compact-valued, and weakly continuous on $\Omega_{\operatorname{rnd}}$ by Lemma~\ref{prop_wasserstein}. 
It remains only to lift this correspondence to full next-state laws. The maps $x\mapsto\pi(x)$ and $(x,a,z)\mapsto\Phi_\pi(x,a,z)$ are continuous under the product topology used for $\Omloc$, including the discrete topology on the time-to-maturity coordinate. For any bounded continuous test function $g\in C_b(\Omloc)$, the function $z\mapsto g(\Phi_\pi(x,a,z))$ is bounded and continuous. Hence push-forward by the reconstruction map preserves weak convergence. Non-emptiness and compactness are inherited from the unlifted Sinkhorn ball, and continuity of the lifted correspondence follows from the preceding push-forward argument. This proves the claim.
}
\end{proof}

\begin{cor}\label{cor_assumptions_fulfilled}
Under the feasible-radius condition in Lemma~\ref{lem_assumptions_fulfilled_wasserstein}, the projected ambiguity correspondence $\mathcal{P}^{\pi}_{\varepsilon,\delta}(x,a)$ fulfils the requirements of \cite[Assumption 2.2]{neufeld2022markov} with $p=0$, i.e.\ for the case where $\mathcal{M}_1(\Omloc)$ is equipped with the weak topology.
\end{cor}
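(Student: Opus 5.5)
The plan is to reduce the corollary to Lemma~\ref{lem_assumptions_fulfilled_wasserstein}. I will check one by one the items of \cite[Assumption 2.2]{neufeld2022markov} in the case $p=0$. As I recall it, that assumption asks for four things: the action space $A$ is compact; the state space is a closed subset of a Euclidean space (or, more generally, a Polish space); the ambiguity correspondence $(x,a)\twoheadrightarrow\mathcal P(x,a)\subset\mathcal M_1(\Omloc)$ is nonempty and compact-valued; and the correspondence is continuous, i.e.\ both upper and lower hemicontinuous, with respect to the weak topology $\tau_0$ when $p=0$. For $p=0$, no moment or growth condition on $\mathcal P(x,a)$ is needed beyond what weak compactness already gives.

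\textbf{Structural items.} Compactness of $A$ is immediate from \eqref{eq:setA_0}, since $A$ is a finite product of compact intervals. For the state space, $\Omloc=\R^{13m}\times\N_0^m$ is a closed subset of $\R^{14m}$ and hence Polish. The discrete topology on the time-to-close coordinate agrees with the subspace topology it inherits from $\R^m$. I would state this explicitly, because the proof of Lemma~\ref{lem_assumptions_fulfilled_wasserstein} uses the product topology with the discrete factor.

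\textbf{Properties of the correspondence.} Under the feasible-radius condition $\bar\varepsilon\ge 0$, Lemma~\ref{lem_assumptions_fulfilled_wasserstein} gives exactly that $\mathcal P^{\pi}_{\varepsilon,\delta}$ is nonempty, compact-valued and continuous for $\tau_0$. This covers the remaining requirements. Strictly speaking, for a compact-valued correspondence into the metrizable space $(\mathcal M_1(\Omloc),\tau_0)$, "continuous" should be read as both upper and lower hemicontinuous. That is the sense inherited from Lemma~\ref{prop_wasserstein}, which is \cite{lu2025distributionally}.

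\textbf{Main obstacle.} The only genuine point is making sure the notion of continuity in Lemma~\ref{lem_assumptions_fulfilled_wasserstein} matches the one in \cite{neufeld2022markov}. The push-forward argument in that lemma must be read as transferring both hemicontinuities, not only sequential convergence of images. I would handle this directly.

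1. \emph{Upper hemicontinuity.} Take $(x_n,a_n)\to(x,a)$ and $\PP_n=(\Phi_\pi^{x_n,a_n})_\#(\delta_{\pi(x_n)}\otimes\widetilde\PP_n)$ with $\widetilde\PP_n\in\mathcal B_{\varepsilon,\delta}(\widehat\PP(x_n,a_n))$.
   - Upper hemicontinuity of the unlifted ball, together with compact values, gives a subsequence $\widetilde\PP_{n_k}\to\widetilde\PP\in\mathcal B_{\varepsilon,\delta}(\widehat\PP(x,a))$.
   - Joint continuity of $\Phi_\pi$ and the extended continuous mapping theorem then give $\PP_{n_k}\to(\Phi_\pi^{x,a})_\#\widetilde\PP$.
   - This is the closed-graph or sequential characterisation of upper hemicontinuity, which is valid in metrizable spaces.
2. \emph{Lower hemicontinuity.} Given $\widetilde\PP$ in the limit ball, lower hemicontinuity of the unlifted ball produces approximants $\widetilde\PP_n$. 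Pushing them forward gives lifted approximants of $(\Phi_\pi^{x,a})_\#\widetilde\PP$.

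Combining these items yields Assumption 2.2 with $p=0$.
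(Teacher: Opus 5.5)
Your proposal is correct and follows the paper's route: the paper's proof simply invokes Lemma~\ref{lem_assumptions_fulfilled_wasserstein} for Assumption 2.2~(i) (nonempty, compact-valued, weakly continuous correspondence) and notes that the moment condition in Assumption 2.2~(ii) holds trivially for $p=0$. Your structural checks and the sequential upper/lower hemicontinuity argument for the push-forward are sound, but they only re-prove the lifting step already contained in Lemma~\ref{lem_assumptions_fulfilled_wasserstein}, so the corollary itself does not need them.
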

\begin{proof}
The fulfilment of \cite[Assumption 2.2~(i)]{neufeld2022markov} follows from  Lemma~\ref{lem_assumptions_fulfilled_wasserstein}. Moreover, the requirement from \cite[Assumption 2.2~(ii)]{neufeld2022markov} is automatically fulfilled for $p=0$.
\end{proof}
\begin{lem}\label{lem_assumptions_reward_fulfilled}
{Let $\Pi_C$ denote the componentwise two-sided projection
\[
\Pi_C(u):=(-C_{\operatorname{global}})\vee (u\wedge C_{\operatorname{global}})
\]
on the real-valued state coordinates, and let it act as the identity on the time-to-maturity coordinate. Define the modified reward function
\begin{equation}\label{eq_reward_tilde}
\begin{aligned}
\widetilde{r}:\Omloc \times A \times \Omloc &\rightarrow \R,\\
(X_t,a_t,X_{t+1}) &\mapsto r(\Pi_C(X_t),a_t,\Pi_C(X_{t+1}))\mathbf 1_{\{\tau_t >0\}}.
\end{aligned}
\end{equation}
Then $\widetilde r$ fulfils the requirements of \cite[Assumption 2.4]{neufeld2022markov} with $p=0$.}
\end{lem}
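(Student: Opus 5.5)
We need to verify Assumption 2.4 of neufeld2022markov with $p=0$. In that paper the assumption for $p=0$ asks that the reward $\widetilde r:\Omloc\times A\times\Omloc\to\R$ be continuous and bounded (for general $p$ a growth bound of order $1+\|x\|^p+\|y\|^p$ is required, which for $p=0$ reduces to boundedness). The plan is therefore to establish these two properties in turn: first boundedness, then continuity with respect to the product topology on $\Omloc\times A\times\Omloc$, where the time-to-maturity coordinate carries the discrete topology.

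\textbf{Boundedness.} After projection, every real-valued coordinate of $\Pi_C(X_t)$ and $\Pi_C(X_{t+1})$ lies in $[-C_{\operatorname{global}},C_{\operatorname{global}}]$. The action $a_t$ lies in the compact box $A$ from \eqref{eq:setA_0}. By \eqref{eq_reward}, the reward is built from the executed quantities $Q^{\operatorname{bid}}_{t+1},Q^{\operatorname{ask}}_{t+1}$, the spreads, the inventory $I_{t+1}$, the transaction cost $c_{t+1}$, and the factors $\gamma_t,\sigma^2,\Delta_t$. The quantities are products of the (clipped) innovation coordinates and the fill ratios $\varphi^{\operatorname{bid}},\varphi^{\operatorname{ask}}\in[0,1]$. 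The inventory is a clipped coordinate. The factor $\gamma_t$ from \eqref{eq:gamma_t} is bounded by $\gamma(1+1/\xi)$. Since $r$ is a polynomial expression in these bounded quantities, multiplied by continuous bounded functions of the action, we get $|\widetilde r|\le K$ for a constant $K$ depending only on $C_{\operatorname{global}}$, $\overline{C_S}$, $\overline{C_q}$, $\gamma$, $\xi$, $\sigma$ and the bound on $c$.

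\textbf{Continuity.} We write $\widetilde r$ as the composition of $(x,a,y)\mapsto(\Pi_C(x),a,\Pi_C(y))$, followed by $r$, multiplied by $\mathbf 1_{\{\tau(x)>0\}}$.
\begin{itemize}
\item The projection $\Pi_C$ is continuous, since it is componentwise $\max/\min$ and the identity on $\tau$.
\item The map $r$ is continuous in its arguments, being built from sums and products of coordinates and of the continuous fill-ratio functions $\varphi$.
\item The indicator $\mathbf 1_{\{\tau(x)>0\}}$ depends only on the $\mathbb{N}_0$-valued time-to-close coordinate. Under the discrete topology, $\{\tau>0\}$ is clopen, so the indicator is locally constant and hence continuous.
\end{itemize}
A product and composition of continuous maps is continuous, so $\widetilde r$ is continuous.

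\textbf{Main obstacle.} The delicate point is confirming that every ingredient entering $r$ is indeed a continuous function of $(x,a,y)$, for three reasons:
\begin{itemize}
\item $Q_{t+1}$ and $c_{t+1}$ are encoded implicitly through next-state coordinates rather than listed in $X_{t+1}$.
\item $\gamma_t$ and $\Delta_t$ depend on time.
\item $c_{t+1}$ must be a bounded continuous (e.g.\ constant) function.
\end{itemize}
I would handle this by identifying $t$ with the current time-to-close coordinate, which is discrete, so that any function of it is continuous. I would express $Q_{t+1}$ through the clipped innovation coordinates and $\varphi$ as in the reference transition law, so that $Q_{t+1}$ is a continuous function of $(\Pi_C(y),a)$. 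Once this identification is fixed, boundedness and continuity follow as above, which verifies \cite[Assumption 2.4]{neufeld2022markov} for $p=0$.
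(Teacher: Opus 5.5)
Your proof is correct and follows the paper's argument: clipping by $\Pi_C$ confines the real-valued coordinates to a compact box, $A$ is compact, and the discrete topology on the time-to-maturity coordinate makes $\mathbf 1_{\{\tau>0\}}$ continuous, so continuity of $r$ gives that $\widetilde r$ is bounded and continuous. Your explicit treatment of $\gamma_t$ through the time-to-close coordinate and of $Q_{t+1}$ through next-state coordinates is more careful than the paper's; the paper's only extra step is a global Lipschitz estimate, obtained from the $1$-Lipschitz property of $\Pi_C$ and local Lipschitz continuity of $r$, which it calls the regularity condition of the cited assumption — if you read that assumption as requiring it, it is a one-line addition.
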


\begin{proof}[Proof of Lemma~\ref{lem_assumptions_reward_fulfilled}]
{
The action set $A$ is compact and the image of $(X_t,X_{t+1})$ under $(\Pi_C,\Pi_C)$ is contained in a compact subset of the real-valued state coordinates; the time-to-maturity window $\tau^t\in\mathbb N_0^m$ lives in a discrete space with the product discrete topology, and the factor $\mathbf 1_{\{\tau_t>0\}}$ (which depends only on the last component $\tau_t$ of $\tau^t$) is continuous for that topology. Since $r$ is continuous in its arguments on this compact set, $\widetilde r$ is continuous and bounded.

As $r$ is locally Lipschitz on compact subsets, there is a constant $L<\infty$ such that for all $(X_t,a_t,X_{t+1})$ and $(X_t',a_t',X_{t+1}')$,
\begin{align*}
&\left|\widetilde{r}(X_t,a_t,X_{t+1})-\widetilde{r}(X_t',a_t',X_{t+1}') \right|\\
&\le L\left( \left\|\Pi_C(X_t)-\Pi_C(X_t')\right\|+\left\|a_t-a_t'\right\|+\left\|\Pi_C(X_{t+1})-\Pi_C(X_{t+1}')\right\|\right)\\
&\le L\left( \left\|X_t-X_t'\right\|+\left\|a_t-a_t'\right\|+\left\|X_{t+1}-X_{t+1}'\right\|\right),
\end{align*}
because the projection $\Pi_C$ is $1$-Lipschitz. This verifies the regularity condition required in \cite[Assumption 2.4]{neufeld2022markov}.}
\end{proof}

\subsection[Proofs of Market-Making and Dynamic Programming Results]{Proofs of Market-Making and Dynamic Programming Results}\label{app:proofs-main-results}

\begin{proof}[Proof of Lemma~\ref{lem_reward_approximation}]
For any $u\in \{0,\dots,T-1\}$ let
\[
M_u:=W_u+S_uI_u
\]
be the marked-to-market wealth. By the cash and inventory dynamics,
\[
\begin{aligned}
M_{u+1}-M_u
&=
(W_{u+1}-W_u)+S_u(I_{u+1}-I_u)
+I_{u+1}(S_{u+1}-S_u) \\
&=
Q_{u+1}^{\operatorname{ask}}\delta_u^{\operatorname{ask}}
+
Q_{u+1}^{\operatorname{bid}}\delta_u^{\operatorname{bid}}
-
c_{u+1}(Q_{u+1}^{\operatorname{bid}}+Q_{u+1}^{\operatorname{ask}})
+
I_{u+1}(S_{u+1}-S_u).
\end{aligned}
\]
Summing from \(u=t\) to \(T-1\) gives
\[
\begin{aligned}
W_T+S_TI_T =M_T&=  M_t +\sum_{u=t}^{T-1} (M_{u+1}-M_u)\\
&=
W_t+S_tI_t +
\sum_{u=t}^{T-1}
\left[
Q_{u+1}^{\operatorname{ask}}\delta_u^{\operatorname{ask}}
+
Q_{u+1}^{\operatorname{bid}}\delta_u^{\operatorname{bid}}
-
c_{u+1}(Q_{u+1}^{\operatorname{bid}}+Q_{u+1}^{\operatorname{ask}})
\right] +
\sum_{u=t}^{T-1}I_{u+1}(S_{u+1}-S_u).
\end{aligned}
\]
We now show that the last summation has zero conditional expectation. Since
\[
I_{u+1}
=
I_u+Q_{u+1}^{\operatorname{bid}}-Q_{u+1}^{\operatorname{ask}},
\]
we have
\[
\begin{aligned}
I_{u+1}(S_{u+1}-S_u)
=
I_u(S_{u+1}-S_u) +
\bigl(Q_{u+1}^{\operatorname{bid}}
-
Q_{u+1}^{\operatorname{ask}}\bigr)(S_{u+1}-S_u).
\end{aligned}
\]
By \eqref{eq:def_X} we have that \(I_u\) is \(\sigma(X_u,a_u)\)-measurable and, by assumption (1), the price increment has
conditional mean zero. Thus
\[
\mathbb E_{\mathbb P}
\left[
I_u(S_{u+1}-S_u)
\,\middle|\,
X_u,a_u
\right]=I_u \cdot \mathbb E_{\mathbb P}
\left[
S_{u+1}-S_u
\,\middle|\,
X_u,a_u
\right]
=
0.
\]
Moreover, by the conditional independence of the fills and the price innovation,
\[
\begin{aligned}
&\mathbb E_{\mathbb P}
\left[
\bigl(Q_{u+1}^{\operatorname{bid}}
-
Q_{u+1}^{\operatorname{ask}}\bigr)(S_{u+1}-S_u)
\,\middle|\,
X_u,a_u
\right] =
\mathbb E_{\mathbb P}
\left[
Q_{u+1}^{\operatorname{bid}}
-
Q_{u+1}^{\operatorname{ask}}
\,\middle|\,
X_u,a_u
\right]
\mathbb E_{\mathbb P}
\left[
S_{u+1}-S_u
\,\middle|\,
X_u,a_u
\right]
=
0.
\end{aligned}
\]
Therefore
\[
\mathbb E_{\mathbb P}
\left[
I_{u+1}(S_{u+1}-S_u)
\,\middle|\,
X_u,a_u
\right]
=
0,
\]
which implies, by the tower property,
\[
\mathbb E_{\mathbb P}
\left[
\sum_{u=t}^{T-1}I_{u+1}(S_{u+1}-S_u)
\,\middle|\,
X_t=x
\right]
=
0.
\]

Hence, for fixed \(\mathbf a\) and \(\mathbb P\),
\begin{equation}\label{eq:proof_lem_eq1}
\begin{aligned}
&\mathbb E_{\mathbb P}
\left[
W_T+S_TI_T
\,\middle|\,
X_t=x
\right]  =
W_t+S_tI_t
+
\mathbb E_{\mathbb P}
\left[
\sum_{u=t}^{T-1}
\left(
Q_{u+1}^{\operatorname{ask}}\delta_u^{\operatorname{ask}}
+
Q_{u+1}^{\operatorname{bid}}\delta_u^{\operatorname{bid}}
-
c_{u+1}(Q_{u+1}^{\operatorname{bid}}+Q_{u+1}^{\operatorname{ask}})
\right)
\,\middle|\,
X_t=x
\right].
\end{aligned}
\end{equation}
This implies with \eqref{eq:proof_lem_eq1} that
\[
\begin{aligned}
&\mathbb E_{\mathbb P}
\left[
W_T+S_TI_T
-
\sum_{u=t}^{T-1}
\frac{\gamma_u}{2}I_{u+1}^2\sigma^2\Delta_u\,\middle|\,
X_t=x
\right] \\
&\quad =
W_t+S_tI_t +
\mathbb E_{\mathbb P}
\left[
\sum_{u=t}^{T-1}
\left(
Q_{u+1}^{\operatorname{ask}}\delta_u^{\operatorname{ask}}
+
Q_{u+1}^{\operatorname{bid}}\delta_u^{\operatorname{bid}}
-
c_{u+1}(Q_{u+1}^{\operatorname{bid}}+Q_{u+1}^{\operatorname{ask}})
-
\frac{\gamma_u}{2}I_{u+1}^2\sigma^2\Delta_u
\right)
\,\middle|\,
X_t=x
\right]\\
&\quad=
W_t+S_tI_t
+
\mathbb E_{\mathbb P}
\left[
\sum_{u=t}^{T-1}
r(X_u,a_u,X_{u+1})
\,\middle|\,
X_t=x
\right].
\end{aligned}
\]
\end{proof}

\subsection[Infinite-Horizon Reformulation]{Infinite-Horizon Reformulation of the Finite-Horizon Problem}
Proposition~\ref{prop_2} is formulated in stationary discounted-MDP language. Since $r^C=0$ whenever the time-to-maturity equals $\tau=0$, the original finite-horizon problem is equivalent to an infinite-horizon discounted stationary MDP on the augmented time-including state space.

Let $\tau_{\operatorname{cur}}(x)$ denote the current, i.e. last,
time-to-maturity coordinate of the rolling-window state x.
To see this, we define first the restriction of the state space to finitely remaining time steps, and initial time, respectively:
\[
\Omega_{\le T}
:=
\bigl\{x\in\Omloc:\tau_{\operatorname{cur}}(x)\in\{0,\ldots,T\}\bigr\},
\qquad
\Omega_T^{\operatorname{init}}
:=
\bigl\{x\in\Omega_{\le T}:\tau_{\operatorname{cur}}(x)=T\bigr\},
\]
where $\tau_{\operatorname{cur}}(x) \in \N$ denotes the current time-to-maturity of the state $x\in \Omloc$.
For an admissible infinite-horizon Markov policy
\(\ab=(a_t)_{t\ge 0}\), define
\[
\mathfrak{P}_{x,\ab}^{\varepsilon,\delta,\infty}
:=
\left\{
\PP\in\mathcal{M}_1\bigl((\Omega_{\le T})^{\mathbb{N}_0}\bigr):
\begin{array}{l}
X_0=x \quad \PP\text{-a.s.},\\[0.15cm]
\PP(X_{t+1}\in\cdot\,|\,X_0,\ldots,X_t)
\in
\mathcal{P}^{\pi}_{\varepsilon,\delta}
\bigl(X_t,a_t(X_t)\bigr)
\quad \PP\text{-a.s. for all }t\ge 0
\end{array}
\right\}.
\]
The value function is then defined by
\begin{equation}\label{eq_robust_problem_1}
V(x)
:=
\sup_{\ab}
\inf_{\PP\in\mathfrak{P}_{x,\ab}^{\varepsilon,\delta,\infty}}
\E_{\PP}
\left[
\sum_{t=0}^{\infty}
\alpha^t r^C(X_t,a_t(X_t),X_{t+1})
\right],
\qquad x\in\Omega_{\le T}.
\end{equation}

\begin{proof}[Proof of Proposition~\ref{prop_2}]
By Corollary~\ref{cor_assumptions_fulfilled}, the projected local ambiguity
correspondence
\[
(x,a)\twoheadrightarrow \mathcal P^{\pi}_{\varepsilon,\delta}(x,a)
\]
satisfies the required non-emptiness, compactness, measurability, and
continuity assumptions with $p=0$. Since the stopped state space
$\Omega_{\leq T}$ is invariant under all transition kernels in the local
ambiguity sets, we may regard the restricted correspondence as a
correspondence from $\Omega_{\leq T}\times A$ into
$\mathcal M_1(\Omega_{\leq T})$. By
Lemma~\ref{lem_assumptions_reward_fulfilled}, the stopped and clipped reward
$r^C$ is bounded and continuous. Therefore the discounted robust
Markov-decision-process theorem of
\cite[Theorem~2.7]{neufeld2022markov} applies on $\Omega_{\leq T}$ to the
Bellman operator
\[
\T v(x)
=
\sup_{a\in A}
\inf_{\PP\in\mathcal P^{\pi}_{\varepsilon,\delta}(x,a)}
\E_\PP\bigl[
r^C(x,a,X_1)+\alpha v(X_1)
\bigr],
\qquad x\in\Omega_{\leq T}.
\]

The theorem yields a unique bounded continuous fixed point
$V\in C_b(\Omega_{\leq T})$, Borel-measurable optimal selectors
$a_{\operatorname{loc}}^*$ and $\PP_{\operatorname{loc}}^*$, and the
fixed-point identity
\[
V(x)=\T V(x)
=
\E_{\PP_{\operatorname{loc}}^*(x)}
\bigl[
r^C(x,a_{\operatorname{loc}}^*(x),X_1)+\alpha V(X_1)
\bigr],
\qquad x\in\Omega_{\leq T}.
\]
This proves Proposition~\ref{prop_2}~(i).

Define the stationary Markov policy
\[
\ab^*
:=
\bigl(a_{\operatorname{loc}}^*(X_t)\bigr)_{t\geq 0}.
\]
Moreover, for $x\in\Omega_{\leq T}$, define the infinite-horizon product
measure
\[
\PP_x^{*,\infty}(d\omega_0,d\omega_1,\ldots)
:=
\delta_x(d\omega_0)
\prod_{t=0}^{\infty}
\PP_{\operatorname{loc}}^*(\omega_t;d\omega_{t+1}).
\]
By the Ionescu--Tulcea theorem this defines a probability measure on
$(\Omega_{\leq T})^{\mathbb N_0}$. Since
\[
\PP_{\operatorname{loc}}^*(x)
\in
\mathcal P^{\pi}_{\varepsilon,\delta}
\bigl(x,a_{\operatorname{loc}}^*(x)\bigr),
\qquad x\in\Omega_{\leq T},
\]
we have
\[
\PP_x^{*,\infty}
\in
\mathfrak P_{x,\ab^*}^{\varepsilon,\delta,\infty}.
\]

The verification part of
\cite[Theorem~2.7~(iii)]{neufeld2022markov} gives, for every
$x\in\Omega_{\leq T}$,
\[
\begin{aligned}
V(x)
&=
\E_{\PP_x^{*,\infty}}
\left[
\sum_{t=0}^{\infty}
\alpha^t
r^C\bigl(X_t,a_{\operatorname{loc}}^*(X_t),X_{t+1}\bigr)
\right] =
\inf_{\PP\in\mathfrak P_{x,\ab^*}^{\varepsilon,\delta,\infty}}
\E_{\PP}
\left[
\sum_{t=0}^{\infty}
\alpha^t
r^C\bigl(X_t,a_{\operatorname{loc}}^*(X_t),X_{t+1}\bigr)
\right].
\end{aligned}
\]

It remains only to identify this infinite-horizon identity with the stopped
finite-horizon identity. To this end, let
$
k:=\tau_{\operatorname{cur}}(x).
$
By the deterministic time-to-maturity update,
\[
\tau_{\operatorname{cur}}(X_{t+1})
=
\bigl(\tau_{\operatorname{cur}}(X_t)-1\bigr)^+,
\]
we have $\tau_{\operatorname{cur}}(X_t)=0$ for all $t\geq k$. Since
$r^C(x,a,y)=0$ whenever $\tau_{\operatorname{cur}}(x)=0$, it follows that,
under every admissible measure,
\[
\sum_{t=0}^{\infty}
\alpha^t
r^C\bigl(X_t,a_{\operatorname{loc}}^*(X_t),X_{t+1}\bigr)
=
\sum_{t=0}^{k-1}
\alpha^t
r^C\bigl(X_t,a_{\operatorname{loc}}^*(X_t),X_{t+1}\bigr),
\]
with the convention that the sum is empty if $k=0$. Hence, for every
$x\in\Omega_{\leq T}$,
\[
\begin{aligned}
V(x)
=
\E_{\PP_x^{*,\infty}}
\left[
\sum_{t=0}^{k-1}
\alpha^t
r^C\bigl(X_t,a_{\operatorname{loc}}^*(X_t),X_{t+1}\bigr)
\right] &=
\inf_{\PP\in\mathfrak P_{x,\ab^*}^{\varepsilon,\delta,\infty}}
\E_{\PP}
\left[
\sum_{t=0}^{k-1}
\alpha^t
r^C\bigl(X_t,a_{\operatorname{loc}}^*(X_t),X_{t+1}\bigr)
\right]\\
&=\inf_{\PP\in\mathfrak P_{x,\ab^*}^{\varepsilon,\delta}}
\E_{\PP}
\left[
\sum_{t=0}^{k-1}
\alpha^t
r^C\bigl(X_t,a_{\operatorname{loc}}^*(X_t),X_{t+1}\bigr)
\right].
\end{aligned}
\]
In particular, if $x\in\Omega_T^{\operatorname{init}}$, then
$k=\tau_{\operatorname{cur}}(x)=T$, and therefore
\[
\begin{aligned}
V(x)
=
\E_{\PP_x^{*,\infty}}
\left[
\sum_{t=0}^{T-1}
\alpha^t
r^C\bigl(X_t,a_{\operatorname{loc}}^*(X_t),X_{t+1}\bigr)
\right] &=
\inf_{\PP\in\mathfrak P_{x,\ab^*}^{\varepsilon,\delta,\infty}}
\E_{\PP}
\left[
\sum_{t=0}^{T-1}
\alpha^t
r^C\bigl(X_t,a_{\operatorname{loc}}^*(X_t),X_{t+1}\bigr)
\right]\\
&=
\inf_{\PP\in\mathfrak P_{x,\ab^*}^{\varepsilon,\delta}}
\E_{\PP}
\left[
\sum_{t=0}^{T-1}
\alpha^t
r^C\bigl(X_t,a_{\operatorname{loc}}^*(X_t),X_{t+1}\bigr)
\right].
\end{aligned}
\]
This proves Proposition~\ref{prop_2}~(ii).
\end{proof}

\begin{proof}[Proof of Proposition~\ref{prop_VI}]
The Bellman operator is an $\alpha$-contraction on $C_b(\Omloc)$ equipped with the sup norm: for $v,w\in C_b(\Omloc)$,
\[
\|\T v-\T w\|_\infty\le \alpha\|v-w\|_\infty.
\]
Compare \cite[Theorem 2.7~(ii)]{neufeld2022markov}. Banach's fixed-point theorem therefore yields
\[
\lim_{n\to\infty}\|\T^n v_0-V\|_\infty=0,
\qquad v_0\in C_b(\Omloc),
\]
with geometric rate at most $\alpha$. This proves Proposition~\ref{prop_VI}.
\end{proof}


\end{document}